\begin{document}

\title{Balanced Allocations with the Choice of Noise}

\author[]{Dimitrios Los\thanks{\texttt{dimitrios.los@cl.cam.ac.uk}} }
\author[]{Thomas Sauerwald\thanks{\texttt{thomas.sauerwald@cl.cam.ac.uk}}}
\affil[]{Department of Computer Science \& Technology, University of Cambridge}

\maketitle

\begin{abstract}
We consider the allocation of $m$ balls (jobs) into $n$ bins (servers). In the standard \TwoChoice process, at each step $t=1,2,\ldots,m$ we first sample {\em two} randomly chosen bins, compare their two loads and then place a ball in the least loaded bin. It is well-known that for any $m \geq n$, this results in a gap (difference between the maximum and average load) of $\log_2 \log n + \Theta(1)$ (with high probability).

In this work, we consider \TwoChoice in different settings with noisy load comparisons. One key setting involves an adaptive adversary whose power is limited by some threshold $g \in \mathbb{N}$. In each step, such adversary can determine the result of any load comparison between two bins whose loads differ by at most $g$, while if the load difference is greater than $g$, the comparison is correct. 

For this adversarial setting, we first prove that for any $m \geq n$ the gap is $\Oh(g+\log n)$ with high probability. Then through a refined analysis we prove that if $g \leq \log n$, then for any $m \geq n$ the gap is $\Oh\big(\frac{g}{\log g} \cdot \log \log n\big)$. For constant values of $g$, this generalizes the heavily loaded analysis of \cite{BCSV06,TW14} for the \TwoChoice process, and establishes that asymptotically the same gap bound holds even if load comparisons among ``similarly loaded'' bins are wrong. Finally, we complement these upper bounds with tight lower bounds, which establish an interesting phase transition on how the parameter $g$ impacts the gap.

The analysis also applies to settings with outdated and delayed information. For example, for the setting of \cite{BCEFN12}~where balls are allocated in consecutive batches of size $b = n$, we present an improved and tight gap bound of $\Theta\big(\frac{\log n}{\log \log n}\big)$. This bound also extends for a range of values of $b$ and applies to a relaxed setting where the reported load of a bin can be any load value from the last $b$ steps.

\end{abstract}

\clearpage

\clearpage
\tableofcontents
~
\clearpage

\section{Introduction}

\subsubsection*{Motivation} In this work we examine balls-and-bins processes where the goal is to allocate $m$ balls (jobs or tasks) sequentially into $n$ bins (processors or servers). The balls-and-bins framework a.k.a.~balanced allocations~\cite{ABKU99}~is a popular abstraction for various resource allocation and storage problems such as load balancing, scheduling or hashing (see surveys~\cite{MRS01,W17}). In order to allocate the balls in an efficient and decentralized way, randomized strategies are usually employed which are based on
sampling a number of bins for each ball, and then allocating the ball into one of those bins.

It is well-known that if each ball is placed in a bin chosen independently and uniformly at random (called \OneChoice), then the maximum load is $\Theta( \log n / \log \log n)$ \Whp\footnote{In general, with high probability refers to probability of at least $1 - n^{-c}$ for some constant $c > 0$.} for $m=n$, and $m/n + \Theta( \sqrt{ (m/n) \log n})$ \Whp~for $m \geq n \log n$. 
Azar, Broder, Karlin and Upfal~\cite{ABKU99} (and implicitly Karp, Luby and Meyer auf der Heide~\cite{KLM96}) proved the remarkable result that if each ball is placed in the lesser loaded of $d \geq 2$ randomly chosen bins, then the maximum load drops to $\log_d \log n + \Oh(1)$ \Whp, if $m=n$. This dramatic improvement from $d=1$ (\OneChoice) to $d=2$ (\TwoChoice) is known as ``power-of-two-choices'', and similar effects have been observed in other problems including routing, hashing and randomized rounding~\cite{MRS01}. 

Later, Berenbrink, Czumaj, Steger and V\"ocking~\cite{BCSV06} extended the analysis of \DChoice to the so-called ``heavily loaded case'', where $m \geq n$ can be arbitrarily large. In particular, for \TwoChoice an upper bound on the gap (the difference between the maximum and average load) of $ \log_2 \log n + \Oh(1)$ \Whp~was shown. A simpler but slightly weaker analysis was later given by Talwar and Wieder~\cite{TW14}.

A crucial ability of \TwoChoice (or \DChoice) is to quickly recover from a bad load vector, thanks to the information and flexibility provided by the additional bin sample(s).
Hence these processes typically reach an equilibrium with load vectors involving only a small number of different load values (around the mean). This can be seen as some self-stabilizing property, and was exploited in other balls-into-bins settings where balls can be removed, e.g.,~\cite{CFMMRSU98,BFKMNW18}. Also there are many distributed algorithms for voting and consensus that make use of a similar ``power-of-two-choices'' effect, see e.g.,~\cite{DGM11,CER14,BCN14,BCNPT16}. Other applications in distributed computing include population protocols \cite{AGR22} and distributed data structures \cite{N21}, whose analyses rely on a noisy version of~\TwoChoice.

It should be noted that the theoretical results for \TwoChoice assume that balls arrive sequentially one by one, so that each bin has perfect knowledge about its load at any point in time. This assumption may not be always met in practice, e.g., in a concurrent setting, bins may not be able to update their load immediately (even if this were possible, in some applications bins might still prefer not to reveal their true load).
These issues had been observed as early as 2001 in a survey on power-of-two-choices by Mitzenmacher, Richa and Sitaraman \cite{MRS01}, who remarked:
\begin{quote}\textit{``...the general problem of how to cope with incomplete or inaccurate load information and still achieve good load balancing performance appears to be an area with a great deal of research potential.''}
\end{quote}

In this work we are trying to improve our understanding of this problem and consider the following questions:
\begin{enumerate}
    \item What if the load information of a bin at time $t$ is outdated, e.g., it might be as small as the load at an earlier time $t-\tau$ for some parameter $\tau$?
    \item What if the load information of a bin at time $t$ is subject to some adversarial noise, e.g., the reported load of a bin might be an adversarial perturbation from the exact load within some range $g$?
    \item What if instead of having adversarial noise, there is a Gaussian perturbation with standard deviation $\sigma$ on the reported loads?
\end{enumerate}

A closely related setting to the first question called \emph{batching} was studied by Berenbrink, Czumaj, Englert, Friedetzky and Nagel~\cite{BCEFN12} (and later in \cite{BFKMNW18}). Here the allocation of the balls proceeds in consecutive batches of size $b$, and the load is only updated at the end of each batch. The authors proved that for $b = n$ the gap is $\Oh(\log n)$ w.h.p. Recently, the authors of this work proved some further results in the batched setting for a broader class of processes~\cite{LS22Batched}. However, applying these to \TwoChoice only gives a tight gap bound for $b \geq n \log n$, which is $\Theta(b/n$).
Mitzenmacher~\cite{M00} studied the two-choice-paradigm in queuing systems, where the batched setting is referred to as \emph{periodic update model}. His results include some fixed-point analysis and simulations, but no quantitative bounds on the gap are derived. Along similar lines, Dahlin~\cite{D00} investigated several \DChoice-based processes, and demonstrated through experiments that allocation strategies which interpret load information based on its age, can outperform simpler strategies.

In this work we will focus on noise settings related to the three questions above, but below we will also discuss other ``robustness'' aspects of \TwoChoice that were studied in previous works.

\subsubsection*{Further Related Work}
Peres, Talwar and Wieder~\cite{PTW15} introduced the $(1+\beta)$-process, in which two choices are available with probability $\beta \in (0, 1]$, and otherwise only one. This process mixes \OneChoice with \TwoChoice steps and is useful in applications where it is costly to always take two bin samples. Their upper bound of $\Oh( \log n/\beta + \log(1/\beta)/\beta)$ on the gap for any $m \geq n$ shows that we can achieve a gap that is independent of $m$ for $\beta := \beta(n)$.
A natural generalization of \TwoChoice studied in the same work \cite{PTW15} (and earlier in \cite{KP06} for $m=n$) is the so-called \emph{graphical allocation setting}. In this setting, bins correspond to vertices of a graph, and for each ball we sample an edge uniformly at random and place the ball in the lesser loaded bin of the two endpoints. The results of \cite{PTW15} show that for any connected and regular graph, the gap remains independent of $n$. In \cite{BF21}, Bansal and Feldheim analyzed a sophisticated algorithm that achieves a poly-logarithmic gap on sparse regular graphs in the graphical allocation setting. On a related problem where averaging is allowed, a tighter bound for the cycle was shown by Alistarh, Nadiradze and Sabour~\cite{ANS20}. Also recently, Greenhill, Mans and Pourmiri studied the graphical allocation setting on dynamic hypergraphs~\cite{GMP20}.

In the $(1+\beta)$-process the decision whether to take a second bin sample does not depend on the first bin sample. In contrast to that, the \TwoThinning process allocates in a two-stage procedure: Firstly, sample a random bin $i$. Secondly, based on the load of bin $i$ (and additional information based on the history of the process), we can either place the ball into $i$,  or place the ball into another randomly chosen bin $j$ (without comparing its load with $i$). This process has received a lot of attention lately, and several variations were studied in \cite{FL20} for $m=n$ and \cite{FGL21,LS21,LSS21} for $m \geq n$.  Finally, Czumaj and Stemann \cite{CS01} investigated so-called \emph{adaptive allocation schemes} for $m=n$. In contrast to \Thinning, after having taken a certain number of bin samples, the ball is allocated into the least loaded bin among all samples. In another related model recently studied by the authors of this work, the load of a sampled bin can only be approximated through binary queries of the form ``Is your load at least $g$?''~\cite{LS21}. It was shown that by using $1 \leq k = \Oh(\log \log n)$ queries for each of the two samples, for any $m \geq n$, the gap is $\Oh( k \cdot (\log n)^{1/k})$ w.h.p.

A setting relaxing the \emph{uniform sampling} assumption was investigated by Wieder~\cite{W07}, which shows that for any $d > 1$, the \DChoice gap bounds continue to hold as long as the probability by which the $d$ bins are sampled is close enough to uniform. A setting with heterogeneous bin capacities was studied in Berenbrink, Brinkmann, Friedetzky and Nagel~\cite{BBFN14}, who showed the gap bound of $\log_{d} \log n+\Oh(1)$ for \DChoice continues to hold.

There is also a rich line of work investigating randomized allocation schemes which use fewer random bits. For example, Alon, Gurel-Gurevich and Lubetzky~\cite{AGL10} established a trade-off between the number of bits used for the representation of the load and the number of $d$ bin choices. For $d=2$ choices, Benjamini and Makarychev~\cite{IM12} presented some tight results relating the gap of \TwoChoice to the available memory.

We remark that parallel versions of \TwoChoice exist that involve only a very small number of rounds of interactions between all balls and bins. Most of the studies, e.g., \cite{ACMR98,LW11} focus on $m=n$, and only recently the heavily loaded case was addressed in \cite{LPY19}. In comparison to our settings, the gap bounds are stronger, however, these algorithms require more coordination and do not handle any noisy or outdated load information.

From a higher perspective, investigating the complexity of problems in the presence of noisy data is a popular area in algorithms, machine learning and distributed computing. For example, of similar flavor as our noisy allocation setting are studies on sorting and ranking with unreliable information \cite{ACN05,BM08}. In the paradigm of algorithms with predictions, different scheduling algorithms under noisy job size estimates have been analyzed~\cite{MV20,M20,SGM22}. %

\subsubsection*{Our Contribution}
Our first contribution is to present a general framework of four different classes of noise settings (a formal description of these settings can be found in \cref{sec:settings}). This framework contains some of the previously studied processes and settings (like batching \cite{BCEFN12,BCN19,LS22Batched} and \GBounded \cite{N21}), but it also leads to new settings, which were not studied before.

Perhaps the most important setting involves \TwoChoice in the presence of an \textbf{adaptive adversary} with parameter $g \in \mathbb{N}$. In each step, if the two bin samples $i_1,i_2$ have a load difference of at most $g$, the adversary can manipulate the outcome of the load comparison arbitrarily and thereby decide whether the ball will be placed in $i_1$ or $i_2$. However, if the load difference is more than $g$, the comparison will be correct and \TwoChoice will place the ball in the less loaded bin. A slightly weaker setting, which we call \textbf{myopic setting}, works similarly, but now, in case the load difference is at most $g$, the ball is placed into a \emph{random} bin (among $\{i_1,i_2\}$).
Through a combination of different lower and upper bounds, we establish the following phase transition in how the parameter $g$ impacts the gap:
\begin{enumerate}
    \item If $\log n \leq g$, then for all $m \geq n$, $\Gap(m)=\Oh( g )$ in the adversarial setting. Further, there is a matching lower bound in the myopic setting.
    \item If $1 < g \leq \log n$, then for all $m \geq n$, $\Gap(m) = \Oh( \frac{g}{\log g} \cdot \log \log n)$ in the adversarial setting\footnote{For $g=1$, we prove a bound of $\Theta(\log \log n)$ and for $g = 0$, the process is equivalent to the \TwoChoice process without noise.}. Further, there is a matching lower bound in the myopic setting.
\end{enumerate}
Equivalently, we could say that $\Gap(m)=\Theta( \frac{g}{\log g} \cdot \log \log n + g)$ for any $g > 1$. Both upper bounds improve and generalize a bound of $\Oh(g \cdot \log(ng))$ which was shown by Nadiradze~\cite{N21} for the so-called \GBounded process, where an adversary ``greedily'' reverts all comparisons if the load difference is at most $g$. We believe that our new bounds could be helpful in obtaining a tighter analysis of the \emph{multi-counter data structure} studied in \cite{ABKLN18,N21}. %
Note that for constant $g$, our upper bound shows that \TwoChoice maintains a $\Oh(\log \log n)$ gap even if bins with constant load difference cannot be compared correctly. While it may seem intuitive that such a result should hold, we are not aware of any simple argument. Even in the most basic case $g=1$, there seems to be no ad-hoc method, e.g., using  couplings or majorization, which would extend the upper bound of $\Oh(\log \log n)$ for \TwoChoice from \cite{BCSV06,TW14} to this noise setting -- not to mention a more general albeit loose upper bound of $\Oh(g \cdot \log \log n)$ for any $g > 1$.

We then proceed to delay settings with outdated load information. In the \textbf{delay setting}, each load information of a bin can be outdated by at most $\tau \geq 1$ rounds. This is a generalization of the \textbf{batched setting}~\cite{BCEFN12}, where the load information of all bins is updated after rounds $0,b,2 b,\ldots$. For a range of values for $\tau$ and $b$ around $n$, we prove tight bounds on the gap in the delay and batched setting. This upper bound is matched by a trivial lower bound for the batched setting, which is based on the gap created by \OneChoice when allocating the first $b$ balls randomly into the $n$ bins. Our upper bound demonstrates that the special property of batching to reset all load values to their correct value at the beginning of a batch is not crucial, and it suffices if bin loads are updated asynchronously.
Finally, for the batched setting with $b=n$, our results improve the gap bound of $\Oh(\log n)$ from \cite{BCEFN12} to a tight bound of $\Theta\big(\frac{\log n}{\log \log n}\big)$. 

Finally, complementing the adversarial and myopic settings, we also investigate a \textbf{probabilistic noise} setting. In this setting, whenever two bins with load difference $\delta$ are compared, the comparison will be correct with probability $\rho(\delta) \in [0,1]$. One natural instance of this setting is when the reported bin loads are randomly perturbed by some Gaussian noise with variance $\sigma^2$, which essentially leads to $\rho(\delta) = 1 - \frac{1}{2} \exp(-(\delta/\sigma)^2 )$. We do not present a tight analysis of this setting, but our upper and lower bounds demonstrate that the gap is polynomial in $\sigma$ and poly-logarithmic in $n$.

In \cref{sec:settings}, we formally define the various processes and settings, and in \cref{tab:overview_bounds} we give a summary of upper and lower bounds proved in this paper.

\subsubsection*{Our Techniques}

Since most of our upper bounds use similar techniques as the upper bound for the adversarial setting, we only outline the proof of the latter. As with most of the previous works, e.g., \cite{PTW15,BF21,LS21}, we make extensive use of exponential potential functions in order to prove that the process stabilizes. This technique essentially suffices to prove a weaker gap bound of $\Oh(g \cdot \log (ng))$ for the adversarial setting. However, in order to prove stronger gap bounds, we require that a technical precondition on the load distribution is satisfied in most of the rounds. Roughly, this precondition says that a linear potential function is small (even though the exponential potential function may be very large). Extending~\cite{LSS21}, we establish this precondition by studying the interplay between three potential functions: the linear, quadratic and exponential potential functions.

For sub-logarithmic values of $g$, we prove sub-logarithmic gap bounds by employing a type of layered induction argument over a series of super-exponential potential functions. These potential functions are extensions of those used in \cite{LS21} to circumvent rounding issues and prove more fine-grained gap bounds. %

\subsubsection*{Organization}
In \cref{sec:settings} we present in more detail the different noise settings considered in this work. Then, in \cref{sec:notation} we introduce some basic notation and definitions used in the later analysis. In~\cref{sec:g_adv_warm_up}, we present our first bound of $\Oh(g \cdot \log (n g))$ on the gap for the adversarial setting. In \cref{sec:g_adv_g_plus_logn_bound}, this gap bound is then refined to $\Oh(g + \log n)$ (\cref{thm:g_adv_g_plus_logn_gap}). In \cref{sec:g_adv_upper_bound_for_small_g_outline}, we outline the use of a layered-induction inspired technique to prove a gap bound of $\Oh(\frac{g}{\log g} \cdot \log \log n)$ for $g = o(\log n)$ (\cref{thm:g_adv_strongest_bound}) and in the following three sections we give the proof of this bound. More specifically, in \cref{sec:g_adv_base_case}, we strengthen the results of \cref{sec:g_adv_g_plus_logn_bound} to obtain the base case, in \cref{sec:super_exponential_potentials} we provide a general analysis of super-exponential potentials and in \cref{sec:g_adv_layered_induction}, we use this to complete the layered induction step. Next, in \cref{sec:g_adv_delay_noise_settings} we derive upper bounds for both the probabilistic noise and the delay settings, essentially through reductions to the adversarial setting. \cref{sec:lower_bounds} contains our lower bounds for the different settings, including a tight lower bound of $\Omega(g + \frac{g}{\log g} \cdot \log \log n)$ for \GMyopicComp (\cref{cor:g_myopic_combined_lower_bound}). \cref{sec:experiments} presents some experimental results and in \cref{sec:conclusion} we conclude with a brief summary of the main results and some open questions.

\section{Our Settings}\label{sec:settings}

All of our settings are based on running the \TwoChoice process with $m$ balls into $n$ bins, labeled $[n]:=\{1,2,\ldots,n\}$. 

In the normal setting, referred to as ``without noise'', the load comparisons between the two bin samples are correct in all steps. Here, we investigate several noise settings where the outcome of the load comparisons may not always be correct. We distinguish between three classes of noise settings: (1) Adversarial Load and Comparison, (2) Adversarial Delay and (3) Probabilistic Load and Comparison. 
A schematic figure with the connections between the different settings is shown in \cref{fig:g_adv_settings_reductions}.

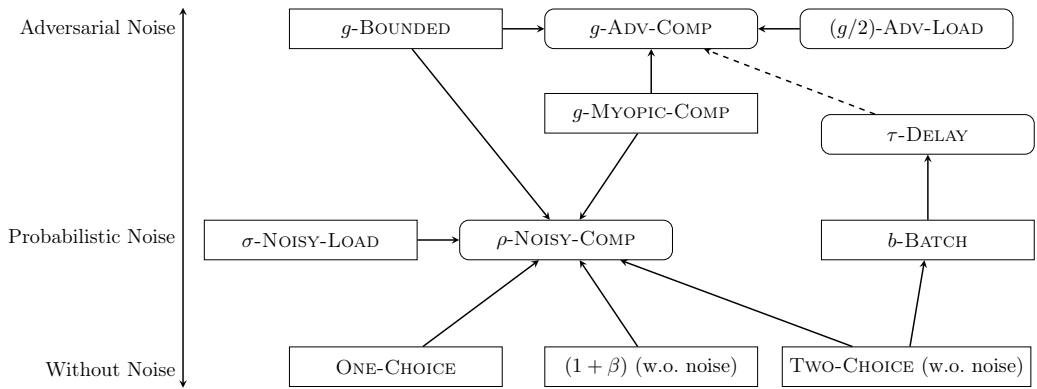
\begin{figure}[h]
    \centering
\scalebox{0.7}{
\begin{tikzpicture}[scale=0.8,process/.style={rectangle,draw=black,minimum width=4cm,minimum height=0.75cm},
setting/.style={rectangle,draw=black,minimum width=4cm,minimum height=0.75cm,rounded corners=5pt},ed/.style={black,thick,-stealth}]

\node[setting] (AC) at (6,3) {$g$-\textsc{Adv-Comp}};

\node[process] (B) at (0,3) {\GBounded};

\node[setting] (AL) at (12,3) {$(g/2)$-\textsc{Adv-Load}};

\node[process] (MC) at (6,1) {\GMyopicComp};

\node[setting] (TD) at (12.5,0.5) {\TauDelay};

\node[process] (BB) at (12.5,-2) {\BBatch};

\draw[ed] (AL) to (AC);
\draw[ed] (MC) to (AC);
\draw[ed] (B) to (AC);
\draw[ed, dashed] (TD) to (AC);
\draw[ed] (BB) to (TD);

\node[process] (OC) at (0,-5) {\OneChoice};
\node[process] (BE) at (6,-5) {$(1+\beta)$ (w.o.~noise)};
\node[process] (TC) at (12,-5) {\TwoChoice (w.o.~noise)};

\node[setting] (NC) at (4,-2) {$\rho$-\textsc{Noisy-Comp}};

\node[process] (NL) at (-2,-2) {\SigmaNoisyLoad};

\draw[ed] (OC) to (NC);
\draw[ed] (BE) to (NC);
\draw[ed] (TC) to (NC);
\draw[ed] (NL) to (NC);
\draw[ed] (B) to (NC);
\draw[ed] (MC) to (NC);
\draw[ed] (TC) to (BB);

\draw[thick,stealth-stealth] (-5,3.5) to node[pos=0.05,left] {Adversarial Noise} node[pos=0.6,left] {Probabilistic Noise} node[pos=0.95,left] {Without Noise} (-5,-5.5);

\end{tikzpicture}
}
\caption{Overview of settings (rounded rectangles) and processes (rectangles). A directed arrow from setting (process) $A$ to setting (process) $B$ means that $B$ is stronger than $A$ (that is, $B$ can simulate $A$). For \TauDelay, a dashed arrow is used for the connection to \GAdvComp, as the (strong) connection is slightly more involved.}
\label{fig:g_adv_settings_reductions}
\end{figure}

\begin{framed} \label{def:two_choice_without_noise}
	\vspace{-.45em} 
    \noindent \underline{\TwoChoice Process (without noise):}\\
    \textsf{Iteration:} For each step $t \geq 1$, 
    \begin{enumerate}[topsep=0.5pt, itemsep=0pt]
      \item Sample two bins $i_1 = i_1^t$ and $i_2 = i_2^t$  with replacement, independently and uniformly at random. 
      \item Let $i^t \in \{ i_1, i_2 \}$ be a bin such that
      $
        x_{i^t}^{t-1} = \min\big\{ x_{i_1}^{t-1}, x_{i_2}^{t-1} \big\}, 
      $ breaking ties arbitrarily.
      \item Allocate one ball to bin $i^t$.
    \end{enumerate}
\end{framed}

We now present our framework where \TwoChoice runs in a noisy setting, meaning that step~$2$ is subject to some noise. 
One possibility is to have an adversary that is able to replace the load values $x_{i_1}^{t-1}$ and $x_{i_2}^{t-1}$ by estimates $\tilde{x}_{i_1}^{t-1}$ and $\tilde{x}_{i_2}^{t-1}$, while the bin $i^t$ will always be the one with the better load estimate. A more general framework is to have a two-sample process with a constrained decision function that decides in which of the two samples $i_1$ and $i_2$ to allocate to. This function may depend on the filtration $\mathfrak{F}^{t-1}$ corresponding to the first $t-1$ allocations of the process.\footnote{In particular, filtration $\mathfrak{F}^{t}$ reveals the load vector $x^{t}$.}

\begin{framed}
	\vspace{-.45em} \noindent
	\underline{\TwoChoice Process with Noise:}\\
  \textsf{Noise:} The adversary $A^t : \mathfrak{F}^{t-1} \times [n] \times [n] \to [n]$ \\
    \textsf{Iteration:} For each step $t \geq 1$,
    \begin{enumerate}[topsep=0.5pt, itemsep=0pt]
      \item Sample two bins $i_1$ and $i_2$ with replacement, independently and uniformly at random. 
      \item Let $i^t = A^t(\mathfrak{F}^{t-1},i_1,i_2) \in \{i_1,i_2\}$ be the bin determined by the adversary.
      \item Allocate one ball to bin $i^t$.
    \end{enumerate}
\end{framed}

This is a very general setting, and in order to prove any meaningful bounds, we will have to restrict the power of the adversary $A^t$, giving rise to different settings defined below. Also note that the adversary $A^t$ is allowed to use coin-flips, meaning that the above framework encompasses settings in which, for instance, load estimates or comparisons are determined probabilistically. 

Clearly we can recover \TwoChoice without noise as a special case, if the adversary uses \[
A^t(\mathfrak{F}^{t-1}, i_1, i_2) = \mathrm{arg\,min}_{k \in \{i_1, i_2 \}} \ x_k^{t-1},
\]
breaking ties arbitrarily. On the other hand, for the \GBounded process, \[
A^t(\mathfrak{F}^{t-1}, i_1, i_2) = \begin{cases}
  \mathrm{arg\,max}_{k \in \{i_1, i_2\}} \ x_k^{t-1} & \text{if }|x_{i_1}^{t-1} - x_{i_2}^{t-1}| \leq g, \\
  \mathrm{arg\,min}_{k \in \{i_1, i_2\}} \ x_k^{t-1} & \text{otherwise},
\end{cases}
\]
\vspace{-0.2cm}breaking ties arbitrarily.

\subsection*{Adversarial Load and Comparison}

\textbf{Settings.} We first present a setting with an \emph{adaptive adversary}, who has direct control on the outcome of load comparisons (provided that the load difference is small).
Specifically, in  \GAdvLoad for $g \in \mathbb{N}$, at each step $t=1,2,\ldots$ the adversary first determines for each bin $k$ a load estimate $\tilde{x}_k^{t-1} \in [x_k^{t-1}-g,x_k^{t-1}+g]$. Then our process samples two bins $i_1$ and $i_2$, and allocates into the bin with the smaller load estimate (ties can be broken arbitrarily). 
A slightly stronger adversary appears in the setting \GAdvComp. Here, at each step $t=1,2,\ldots$ the process samples two bins $i_1$ and $i_2$, and then the adversary is able to determine the outcome of the comparison if $|x_{i_1}^{t-1} - x_{i_2}^{t-1}| \leq g$, thereby deciding where \TwoChoice places the ball. Otherwise, if $|x_{i_1}^{t-1} - x_{i_2}^{t-1}| > g$, \TwoChoice will place the ball in the less loaded bin. For $g = 0$, we recover the \TwoChoice process without noise. %

We remark that \GAdvLoad can be simulated by \textsc{$(2g)$-Adv-Comp}, which is why we will only study the slightly more powerful \textsc{Adv-Comp} setting here. 

\textbf{Processes.} A special instance of  \GAdvComp is one that forces \TwoChoice to allocate the ball to the heavier of the two bins when $|x_{i_1}^{t-1} - x_{i_2}^{t-1}| \leq g$. This process was studied in \cite{N21} under the name \GBounded and served as an analysis tool for the multi-counter distributed data structure.

Another special instance of the \emph{adversarial} setting is a \textit{myopic} process, where the outcomes of ``tight'' load comparisons is decided uniformly at random. More precisely, in \GMyopicComp, at each step $t=1,2,\ldots$, if $|x_{i_1}^{t-1}-x_{i_2}^{t-1}| \leq g$, the ball is allocated to a random bin among $\{i_1,i_2\}$, otherwise, the ball is allocated to the less loaded among $\{i_1,i_2\}$. By using random coin flips, the adversary of \GAdvComp can trivially simulate \GMyopicComp.

\subsection*{Adversarial Delay}

\textbf{Setting.} Next we turn to adversarial settings where the load information a bin reports may be \emph{outdated}. The first setting, called $\tau$-\textsc{Delay} for $\tau \in \mathbb{N}_{\geq 1}$, is similar to \GAdvComp, but here the range of load values the adversary can choose from is based on a sliding time interval. That is, after bins $i_1$ and $i_2$ are sampled in step $t$, an adaptive adversary provides us with load estimates that must satisfy $\tilde{x}_{i_1}^{t-1} \in [x_{i_1}^{t-\tau},x_{i_1}^{t-1}]$ and
$\tilde{x}_{i_2}^{t-1} \in [x_{i_2}^{t-\tau},x_{i_2}^{t-1}]$, and the ball is then allocated to the bin with a smaller load estimate, breaking ties arbitrarily. As an example, this includes an adversary that has the power of delaying load updates to each bin arbitrarily, but any load update (i.e., allocation) that is $\tau$ or more steps in the past must be processed.

\textbf{Processes.} A special case of \textsc{$\tau$-Delay} is \BBatch for $b \in \mathbb{N}_{\geq 1}$, introduced in \cite{BCEFN12}, where balls are allocated in consecutive batches of size $b$ each. When the load of a bin $i$ is queried, the bin $i$ will report the load the bin had at the beginning of the batch (i.e., all allocations within the current batch will not be considered). More formally, when two bins $i_1,i_2$ are sampled in step $t$, the decision where to allocate is based on comparing $x_{i_1}^{\lfloor (t-1)/b \rfloor \cdot b}$ and $x_{i_2}^{\lfloor (t-1)/b \rfloor \cdot b}$, breaking ties randomly. We recall that in \cite{BCEFN12}, only the special case $b=n$ was studied, and a gap bound of $\Oh(\log n)$ was shown. Clearly, for $\tau=b$, \textsc{$\tau$-Delay} can simulate \textsc{$b$-Batch} and for $\tau = b = 1$, both are equivalent to the \TwoChoice process without noise. Note that \BBatch assumes that load updates are perfectly synchronized, while the \TauDelay setting relaxes this assumption, therefore covering a wider class of asynchronous update schemes. The reduction of the \TauDelay (and \BBatch) settings to a relaxed \GAdvComp setting will be formalized in \cref{sec:upper_bounds_for_delay_settings}.

\subsection*{Probabilistic Noise}

\textbf{Setting.} We now turn to non-adversarial settings where comparisons are subject to \emph{probabilistic noise}, with a larger chance of getting the correct comparison the larger the load difference. Specifically, we consider the \textsc{Noisy-Comp} setting, where we have an arbitrary non-decreasing function $\rho: \mathbb{N} \rightarrow [0,1]$ such that for any $\delta \in \mathbb{N}$, $\rho(\delta)$ is the probability that a comparison between two bins with absolute load difference $\delta$ will be correct. 
As before we assume independence, i.e., for all $t \geq 1$, the event of a correct comparison in step $t$ only depends on the load difference of the two sampled bins. Thus, for a specific function $\rho$, $\rho$\textsc{-Noisy-Comp} defines a process.

\textbf{Processes.} This is a very expressive setting, encompassing, for example the
\GBounded and \GMyopicComp processes by choosing $\rho$ as step functions with values in $\{0,1/2,1\}$ (see \cref{fig:rho_noisy_comp_settings}). Similarly, \TwoChoice (without noise), \OneChoice and $(1+\beta)$ correspond to $\rho$ being the constant $1$, $1/2$ and $1/2+\beta/2$, respectively.

As a concrete example consider the \SigmaNoisyLoad process. Here, when a bin $i \in \{ i_1, i_2 \}$ is sampled at step $t$, it reports an unbiased load estimate $\tilde{x}_i^{t-1} = x_i^{t-1} + Z_i^{t-1}$, where $Z_i^{t-1}$ has a normal distribution $\mathcal{N}(0,\sigma^2)$ (and all $\{Z_i^{t}\}_{i \in [n], t \geq 0}$ are mutually independent). Then \TwoChoice allocates a ball to the bin that reports the smallest load estimate.  Thus if $x_{i_2}^{t-1} - x_{i_1}^{t-1} = \delta > 0$, the probability to allocate to the lighter bin can be computed as follows:
\begin{align*}
 \Pro{ \tilde{x}_{i_1}^{t-1} \leq \tilde{x}_{i_2}^{t-1} } & = \Pro{ Z_{i_1}^{t-1} - Z_{i_2}^{t-1} \leq \delta } \\
 &= 1 - \Pro{ \mathcal{N}(0,2 \sigma^2) > \delta} \\
 &= 1 - \Pro{ \mathcal{N}(0,1) > \delta/(\sqrt{2} \sigma)} \\
 &= 1 - \Phi( \delta / ( \sqrt{2} \sigma ) ).
 \end{align*}
 Note that $\Phi(z) = 1/2$ for $z=0$ and $\Phi(z)$ is increasing in $z$.
As shown in \cite[p.\,17]{IM74}, for $z \geq 0$,
\[
 \frac{1}{\sqrt{2\pi}} \cdot \frac{2}{\sqrt{z^2+4}+z} \cdot e^{-z^2/2} \leq 1 - \Phi(z) \leq \frac{1}{\sqrt{2\pi}} \cdot \frac{2}{\sqrt{z^2+2}+z} \cdot e^{-z^2/2}.
 \]
By setting $z:=\delta/(\sqrt{2} \sigma)$, ignoring the linear term in $1/z$, and re-scaling $\sigma$, we can define \SigmaNoisyLoad as the process satisfying for all steps $t$ and samples $i_1, i_2$ with $x_{i_2}^{t-1} -x_{i_1}^{t-1}=\delta>0$,
\begin{align}
\Pro{ \tilde{x}_{i_1}^{t-1} \leq \tilde{x}_{i_2}^{t-1} } = \rho(\delta) := 1 - \frac{1}{2} \cdot \exp \left( - \left( \frac{\delta}{\sigma} \right)^2 \right), \label{eq:gaussian}
\end{align}
meaning that the correct comparison probability exhibits a Gaussian tail behavior.

\begin{figure}[t]
\begin{center}
\begin{tikzpicture}[scale=1.5]
\def\normaltwo{\x,1}
\def\z{1.6}
\def\y{4.4}
\draw (0.1,1) -- (-0.1,1) node[left] {$1$};
\draw (0.1,0.5) -- (-0.1,0.5) node[left] {$0.5$};
\draw (0.1,0) -- (-0.1,0) node[left] {$0$};
\def\fy{1-exp(\y)}
\draw[color=blue,thick,domain=0:1] plot ({\x,0}) node[right] {};
\draw[color=blue,thick,domain=1:2.2] plot ({\x,1}) node[right] {};
\draw[color=blue,thick, fill=white] (1,1) circle (0.025cm);
\draw[color=blue,thick, fill=blue] (1,0) circle (0.025cm);

\draw (1,0.1) -- (1,-0.1) node[pos=2.2] {$g$};

\draw[->] (0,0) -- (2.2,0) node[right] {$\delta$};
\draw[->] (0,0) -- (0,1.2) node[above] {$\rho(\delta)$};
\node () at (1,-0.75) {$(a)$ \GBounded};
\end{tikzpicture}
\begin{tikzpicture}[scale=1.5]
\def\normaltwo{\x,1}
\def\z{1.6}
\def\y{4.4}
\draw (0.1,1) -- (-0.1,1) node[left] {$1$};
\draw (0.1,0.5) -- (-0.1,0.5) node[left] {$0.5$};
\draw (0.1,0) -- (-0.1,0) node[left] {$0$};
\def\fy{1-exp(\y)}
\draw[color=blue,thick,domain=0:1] plot ({\x,0.5}) node[right] {};
\draw[color=blue,thick,domain=1:2.2] plot ({\x,1}) node[right] {};
\draw[color=blue,thick, fill=white] (1,1) circle (0.025cm);
\draw[color=blue,thick, fill=blue] (1,0.5) circle (0.025cm);

\draw (1,0.1) -- (1,-0.1) node[pos=2.2] {$g$};

\draw[->] (0,0) -- (2.2,0) node[right] {$\delta$};
\draw[->] (0,0) -- (0,1.2) node[above] {$\rho(\delta)$};
\node () at (1,-0.75) {$(b)$ \GMyopicComp};
\end{tikzpicture}
\begin{tikzpicture}[scale=1.5]
\def\normaltwo{\x,1}
\def\z{1.6}
\def\y{4.4}
\draw (0.1,1) -- (-0.1,1) node[left] {$1$};
\draw (0.1,0.5) -- (-0.1,0.5) node[left] {$0.5$};
\draw (0.1,0) -- (-0.1,0) node[left] {$0$};
\def\fy{1-exp(\y)}
\draw[color=blue,thick,domain=0:2.2] plot ({\x},{1-0.5*exp(-\x)}) node[right] {};

\draw (1,0.1) -- (1,-0.1) node[pos=2.2] {$\sigma$};
\draw (2,0.1) -- (2,-0.1) node[pos=2.2] {$2 \sigma$};

\draw[->] (0,0) -- (2.2,0) node[right] {$\delta$};
\draw[->] (0,0) -- (0,1.2) node[above] {$\rho(\delta)$};
\node () at (1,-0.75) {$(c)$ \SigmaNoisyLoad};
\end{tikzpicture}
\end{center}
\caption{In the graphs above, $\delta=|x_{i_1}^{t-1}-x_{i_2}^{t-1}|$ is the load difference among the two sampled bins and $\rho(\delta)$ is the probability that the load comparison is correct for the \GBounded, \GMyopicComp and \SigmaNoisyLoad processes, respectively.
}
\label{fig:rho_noisy_comp_settings}
\end{figure}
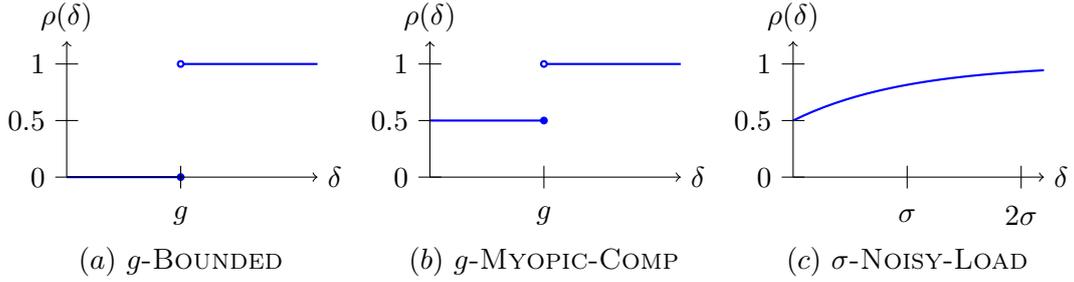

\renewcommand{\arraystretch}{1.8}
 \begin{table}[H]
\resizebox{\textwidth}{!}{
 \begin{tabular}{cccccc}
 \textbf{Setting/Process} & \textbf{Range} & \textbf{Lower Bound} & \textbf{Reference} & \textbf{Upper Bound} & \textbf{Reference} \\
 \hline 
 \rowcolor{Gray} & & &  &  & \\[-0.2cm]
 \rowcolor{Gray} \multirow{-2}{*}{\GBounded} & \multirow{-2}{*}{$1 \leq g $} & \multirow{-2}{*}{--} & \multirow{-2}{*}{--} & \multirow{-2}{*}{$\Oh(g \cdot \log (ng))$} & \multirow{-2}{*}{\makecell{\cite[Thm~2.5.12]{N21}\\ Thm~\ref{thm:g_adv_warm_up_gap}}} \\ \hline
 \rowcolor{Greenish} \GAdvComp & $1 \leq g $ & -- & -- & $\Oh(g \cdot \log (ng))$ & Thm~\ref{thm:g_adv_warm_up_gap} \\ \hline
 \rowcolor{Greenish} \GAdvComp & $1 \leq g$ & -- &-- & $\Oh(g + \log n)$ & Thm~\ref{thm:g_adv_g_plus_logn_gap} \\ \hline
 \rowcolor{Greenish} \GAdvComp & $1 < g \leq \log n$ & -- & -- & $\Oh\big(\frac{g}{\log g} \cdot \log \log n \big)$ &  Thm~\ref{thm:g_adv_strongest_bound} \\ \hline
 \rowcolor{Greenish} \GMyopicComp & $\frac{\log n}{\log \log n} \leq g$ & $\Omega(g)$ & Pro~\ref{pro:g_myopic_g_lower_bound} & -- & -- \\ \hline
 \rowcolor{Greenish} & & &  &  & \\[-0.2cm]
 \rowcolor{Greenish}\multirow{-2}{*}{\GMyopicComp} & \multirow{-2}{*}{$1 < g \leq \frac{\log n}{\log \log n}$} & \multirow{-2}{*}{$\Omega\big( \frac{g}{\log g} \cdot \log \log n \big)$} &  \multirow{-2}{*}{\makecell{Obs~\ref{obs:g_adv_loglogn_lower_bound}\\Thm~\ref{thm:g_myopic_layered_induction_lower_bound}}} & \multirow{-2}{*}{--} & \multirow{-2}{*}{--} \\[-0.2cm] \hline
 \rowcolor{Gray} \BBatch & $b = \Omega(n \log n)$ & $\Omega\big(\frac{b}{n}\big)$ & \cite[Prop 7.4]{LS22Batched} & $\Oh\big(\frac{b}{n}\big)$ & \cite[Thm 5.1]{LS22Batched} \\ \hline
 \rowcolor{Gray} \BBatch & $b=n$ & $\Omega\big( \frac{\log n}{\log \log n}\big)$ & Obs~\ref{obs:two_choice_batching_lower_bound_small_b} & $\Oh( \log n)$ & \cite[Thm 1]{BCEFN12} \\ \hline
 \rowcolor{Greenish} \TauDelay & $\tau = n$ & -- & -- & $\Oh\big(\frac{\log n}{\log \log n}\big)$ & Thm~\ref{thm:batching} \\ \hline
 \rowcolor{Greenish} \TauDelay & $\tau \in [n \cdot e^{-\log^c n}, n \log n]$ & -- & -- & $\Oh\left(\frac{\log n}{\log((4n/\tau) \log n)} \right)$ & Cor~\ref{cor:delay_polylog_n_upper_bound} \\ \hline
 \rowcolor{Greenish} \TauDelay & $\tau = n^{1-\eps}$ & -- & -- & $\Oh(\log \log n)$ & Rem~\ref{rem:delay_very_very_lightly_loaded} \\ \hline
 \rowcolor{Gray} \BBatch & $b \in [n \cdot e^{-\log^c n}, n \log n]$ & $\Omega\left(\frac{\log n}{\log((4n/b) \log n)}\right)$ & Obs~\ref{obs:two_choice_batching_lower_bound_small_b} & -- & -- \\ \hline
 \rowcolor{Gray} \BBatch & $b = n^{1-\eps}$ & $\Omega( \log \log n )$ & Obs~\ref{obs:g_adv_loglogn_lower_bound} & -- & -- \\ \hline
 \rowcolor{Greenish} \SigmaNoisyLoad & $1 \leq \sigma$ & -- & -- & $\Oh\big( \sigma \sqrt{\log n} \cdot \log (n \sigma) \big)$ & Pro~\ref{pro:prob_upper_bound} \\ \hline
 \rowcolor{Greenish} \SigmaNoisyLoad & $2 \cdot (\log n)^{-1/3} \leq \sigma$ & $\Omega( \min\{ 1, \sigma \} \cdot (\log n)^{1/3})$ & Pro~\ref{pro:sigma_lower_bound} & -- & --\\ \hline
 \rowcolor{Greenish} \SigmaNoisyLoad & $32 \leq \sigma$ & $\Omega(\min\{\sigma^{4/5}, \sigma^{2/5} \cdot \sqrt{\log n}\})$ & Pro~\ref{pro:sigma_lower_bound} & -- & -- \\ \hline
 \end{tabular}
 }
 \caption{Overview of the lower and upper bounds for different noise settings derived in previous works (rows in \hlgray{\,Gray\,}) and in this work (rows in \hlgreenish{\,Green\,}). All upper bounds hold for all values of $m \geq n$ \Whp, while lower bounds may only hold for a suitable value of $m$ \Whp. Recall that the \GBounded and \GMyopicComp processes are instances of the \GAdvComp setting, and \BBatch an instance of \TauDelay, for $\tau = b$. The parameters $c, \eps$ are any constants in $(0,1)$.
 }
 \label{tab:overview_bounds}
 \end{table}

\section{Notation and Preliminaries} \label{sec:notation}

We sequentially allocate $m$ balls (jobs) to $n$ bins (servers) with labels in $[n]:=\{1,2,\ldots,n\}$.
The \textbf{load vector} after $t \geq 0$ steps, i.e., $t$ allocations, is $x^{t}=(x_1^{t},x_2^{t},\ldots,x_n^{t})$ and at the beginning, $x_i^{0} = 0$ for any $i \in [n]$. For any $t \geq 0$, we also define the normalized load by sorting the entries of $x^t$ non-increasingly and setting $y_i^t:=x_i^t - \frac{t}{n}$; so $y_1^t \geq y_2^t \geq \cdots \geq y_n^t$.
We will analyze the performance of allocation processes via the \textbf{gap},  defined for any $t \geq 0$ as
\[
\Gap(t):= \max_{1 \leq i \leq n} x_i^{t} - \frac{t}{n} = y_1^t,
\]
i.e., the difference between maximum and average load after step $t \geq 0$. It is well-known that even for \TwoChoice without noise, the difference between the maximum and minimum load is $\Omega(\log n)$ for $m = \Omega(n \log n)$.
Further, for any step $t \geq 0$, we define $B_+^t := \{ i \in [n] : y_i^t \geq 0 \}$ as the set of \textbf{overloaded} bins and $B_-^t := \{ i \in [n]: y_i^t < 0 \}$ as the set of \textbf{underloaded} bins.

Following~\cite{PTW15}, many allocation processes can be described by a \textbf{probability allocation vector} $r^{t} =(r_1^{t},r_2^{t},\ldots,r_n^{t})$ for step $t$, where $r_i^{t}$ is the probability for incrementing the load of the $i$-th most loaded bin.

Recall the definition of the \TwoChoice process from \cref{def:two_choice_without_noise}. The \TwoChoice process without noise has a time-independent probability allocation vector $p$, where the probability to allocate to the $i$-th most loaded bin is given by $
p_i = \frac{2i - 1}{n^2},
$
for $i \in [n]$. Recall that in \OneChoice, each ball is allocated into a bin sampled independently and uniformly at random. This corresponds to $r$ being the uniform distribution. We say that a probability vector $q$ majorizes another probability vector $r$ if for all $k \in [n]$, it satisfies $\sum_{j=1}^k q_j \geq \sum_{j=1}^k r_j$. The same definition applies unchanged to sorted load vectors.

For a sequence of random variables $(X^t)_{t \geq 0}$, we define $\Delta X^{t+1} := X^{t+1} - X^t$. We also use the shorthands $u^+ := \max\{u, 0\}$ and $u_1 \wedge u_2 := \min\{ u_1, u_2 \}$.

Many statements in this work hold only for sufficiently large $n$, and several constants are chosen generously with the intention of making it easier to verify some technical inequalities.

\section{Warm-Up: Upper Bound of \texorpdfstring{$\Oh(g \log (ng))$}{O(g log(ng))} for \texorpdfstring{\GAdvComp}{g-Adv-Comp}} \label{sec:g_adv_warm_up}

In this section we will prove the $\Oh(g \log(ng))$ gap bound for the \GAdvComp setting (with $g\geq 1$ being arbitrary), also recovering the $\Oh(g \log(ng))$ gap bound for the \GBounded process proven in \cite{N21}. The proof relies on a version of the exponential potential function~\cite{PTW15}, called \textit{hyperbolic cosine potential} $\Gamma := \Gamma(\gamma)$, which is defined as\begin{align} \label{eq:gamma_def}
\Gamma^t := \Gamma^t(\gamma) := \sum_{i = 1}^n \Gamma_i^t = \sum_{i = 1}^n \left[ e^{\gamma \cdot y_i^t} + e^{-\gamma \cdot y_i^t} \right],
\end{align}
for a smoothing parameter $\gamma \in (0,1)$, to be specified in \cref{thm:g_adv_warm_up_gap}. Recall that $\Delta\Gamma^{t+1} := \Gamma^{t+1} - \Gamma^t$. We make use of the following general lemma from~\cite{PTW15} regarding the expected change of $\Gamma$.
\begin{lem}[\textbf{\cite[Lemmas 2.1 \& 2.3]{PTW15}}] \label{lem:ptw_gamma_change}
Consider any allocation process with probability allocation vector $r^t$ and the potential $\Gamma := \Gamma(\gamma)$ with any $\gamma \in (0,1)$. Then, for any step $t \geq 0$,
\begin{align*}
\lefteqn{\Ex{\left. \Delta\Gamma^{t+1} \,\right|\, y^t}} \\ &\leq \sum_{i = 1}^n \left[ \Big( r_i^t \cdot (\gamma + \gamma^2) - \Big(\frac{\gamma}{n} - \frac{\gamma^2}{n^2}\Big)\Big) \cdot e^{\gamma y_i^t} + \Big( r_i^t \cdot (-\gamma + \gamma^2) + \Big(\frac{\gamma}{n} + \frac{\gamma^2}{n^2}\Big)\Big) \cdot e^{-\gamma y_i^t} \right].
\end{align*}
\end{lem}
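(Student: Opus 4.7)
The plan is to compute $\Ex{\Delta\Gamma^{t+1} \mid y^t}$ by conditioning on which bin receives the ball in step $t+1$, then averaging over the probability allocation vector $r^t$, and finally invoking second-order Taylor-type inequalities. Because $\Gamma^t = \sum_i \bigl(e^{\gamma y_i^t} + e^{-\gamma y_i^t}\bigr)$ is symmetric across the bins, any re-sorting of $y^{t+1}$ leaves $\Gamma^{t+1}$ unchanged; hence I may treat $y^{t+1}$ as $y^t$ with the chosen coordinate incremented by $1 - 1/n$ and all other coordinates decreased by $1/n$.

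I would handle the two halves $\Gamma_{+}^t := \sum_i e^{\gamma y_i^t}$ and $\Gamma_{-}^t := \sum_i e^{-\gamma y_i^t}$ separately. Conditional on the $i$-th most loaded bin being chosen (probability $r_i^t$), every summand of $\Gamma_{+}^t$ is uniformly multiplied by $e^{-\gamma/n}$ and the chosen coordinate additionally by $e^{\gamma}$, which gives
\begin{align*}
\sum_j e^{\gamma y_j^{t+1}} = e^{-\gamma/n}\sum_j e^{\gamma y_j^t} + e^{\gamma y_i^t}\cdot e^{-\gamma/n}\bigl(e^{\gamma} - 1\bigr).
\end{align*}
Averaging over $i$ with weights $r_i^t$ and subtracting $\Gamma_{+}^t$ yields
\begin{align*}
\Ex{\Delta\Gamma_{+}^{t+1}\,\Big|\,y^t} = \bigl(e^{-\gamma/n}-1\bigr)\,\Gamma_{+}^t + e^{-\gamma/n}\bigl(e^{\gamma}-1\bigr)\sum_{i}r_i^t\,e^{\gamma y_i^t}.
\end{align*}
Applying $e^x \leq 1 + x + x^2$, valid for $x \leq 1$ and hence for all four arguments $\pm\gamma,\pm\gamma/n$ with $\gamma \in (0,1)$, I get $e^{-\gamma/n}-1 \leq -\gamma/n + \gamma^2/n^2$; and since $e^{-\gamma/n}\leq 1$ with $e^{\gamma}-1>0$, also $e^{-\gamma/n}(e^{\gamma}-1)\leq e^{\gamma}-1 \leq \gamma+\gamma^2$. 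Substituting these produces exactly the $e^{\gamma y_i^t}$ bracket of the target inequality.

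A mirror-image computation for $\Gamma_{-}^t$ yields $\bigl(e^{\gamma/n}-1\bigr)\Gamma_{-}^t + e^{\gamma/n}\bigl(e^{-\gamma}-1\bigr)\sum_i r_i^t e^{-\gamma y_i^t}$. Here $e^{\gamma/n}-1 \leq \gamma/n+\gamma^2/n^2$, and because $e^{-\gamma}-1$ is negative, multiplying by $e^{\gamma/n}\geq 1$ only makes it more negative, so $e^{\gamma/n}(e^{-\gamma}-1)\leq e^{-\gamma}-1 \leq -\gamma+\gamma^2$. Summing the two halves gives the claim. The only step requiring genuine care is the sign-management in these last reductions: for each of the four exponential factors one must pick the right direction of the trivial bound $e^{\pm\gamma/n}\lessgtr 1$ so that the combined Taylor estimate remains an \emph{upper} bound. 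Beyond that bookkeeping, I do not anticipate a real obstacle; the argument is a standard drift calculation and the coefficients fall out matching the stated inequality term by term.
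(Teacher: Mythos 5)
Your derivation is correct and follows the standard drift computation underlying this lemma in \cite{PTW15}: you condition on the rank of the chosen bin, observe that $\Gamma$ is symmetric so re-sorting is harmless, compute the exact conditional change of the two halves $\Gamma_+,\Gamma_-$, and then apply $e^x \leq 1+x+x^2$ (valid for all $x\leq 1$, hence for $\pm\gamma$ and $\pm\gamma/n$ with $\gamma\in(0,1)$) together with the sign-aware trivial bounds $e^{-\gamma/n}\leq 1$ and $e^{\gamma/n}\geq 1$. The paper itself only cites this result rather than reproving it, so there is no in-paper proof to compare against, but your argument matches the statement term by term and the bookkeeping at the final step is handled correctly.
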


For convenience, we rewrite \cref{lem:ptw_gamma_change} by decomposing the upper bound into the components that are independent of the probability allocation vector $r^t$ and those that are not, i.e.,
\begin{align}
\Ex{\left. \Delta\Gamma^{t+1} \,\right|\, y^t} 
  \leq h(y^t) + \sum_{i = 1}^n r_i^t \cdot f(y_i^t), \label{eq:lemma_four_one}
\end{align}
where $h(y^t) := \sum_{i = 1}^n - \big(\frac{\gamma}{n} - \frac{\gamma^2}{n^2}\big) \cdot e^{\gamma y_i^t} + \big(\frac{\gamma}{n} + \frac{\gamma^2}{n^2}\big) \cdot e^{-\gamma y_i^t}$ and $f(y_i^t) := (\gamma + \gamma^2) \cdot e^{\gamma y_i^t} + (-\gamma + \gamma^2) \cdot e^{-\gamma y_i^t}$.

We will also make use of the following drop inequality for \TwoChoice without noise (see also \cite[Theorem 3.1]{LS22Batched}). %
\begin{lem}[\textbf{implied by \cite[Theorem 2.9]{PTW15}}] \label{lem:ptw_gamma_drop_two_choice}
Consider the \TwoChoice process without noise with probability allocation vector $p$ and the potential $\Gamma := \Gamma(\gamma)$ with any $\gamma \in (0, \frac{1}{6 \cdot 12})$. Then, there exists a constant $c > 0$, such that for any step $t \geq 0$,%
\begin{align*}
\Ex{\left. \Delta\Gamma^{t+1} \,\right|\, y^t} 
 & \leq h(y^t) + \sum_{i = 1}^n p_i \cdot f(y_i^t)
  \leq - \frac{\gamma}{48n} \cdot \Gamma^t + c.
\end{align*}
\end{lem}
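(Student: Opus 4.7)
The first inequality is immediate: specializing~\cref{lem:ptw_gamma_change} (equivalently, \eqref{eq:lemma_four_one}) to the \TwoChoice probability allocation vector $r_i^t = p_i = (2i-1)/n^2$ gives exactly the middle expression as an upper bound on $\Ex{\left. \Delta\Gamma^{t+1} \,\right|\, y^t}$. It remains to prove the drop inequality, which is the conclusion of~\cite[Theorem~2.9]{PTW15} specialised to the $(1{+}\beta)$-process with $\beta = 1$, and my plan is to reproduce its proof.

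Writing $c_i^\pm := \pm \gamma(p_i - 1/n) + \gamma^2(p_i + 1/n^2)$, the quantity to be bounded is $\sum_{i=1}^n \bigl(c_i^+ e^{\gamma y_i^t} + c_i^- e^{-\gamma y_i^t}\bigr)$. Two structural facts drive the analysis: (i) $p_i - 1/n = (2i-1-n)/n^2$ is strictly increasing in $i$, negative for $i < (n+1)/2$ and positive for $i > (n+1)/2$; and (ii) because $y^t$ is sorted non-increasingly, $e^{\gamma y_i^t}$ is non-increasing in $i$ while $e^{-\gamma y_i^t}$ is non-decreasing. Chebyshev's sum inequality applied to these oppositely-monotone sequences already gives that the $O(\gamma)$-component of the sum is non-positive; this is the ``softest'' form of the power-of-two-choices advantage.

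To sharpen this non-positivity to an honest $-\Theta(\gamma/n)\cdot \Gamma^t$ drop, I would partition $[n]$ into the top quarter $A = [1, n/4]$, the bottom quarter $B = [3n/4, n]$, and the middle half $M$. For $i \in A$ one has $p_i - 1/n \leq -1/(2n)$, so $c_i^+ \leq -\gamma/(2n) + 3\gamma^2/n \leq -\gamma/(4n)$ whenever $\gamma \leq 1/72$; since $e^{\gamma y_i^t}$ is non-increasing in $i$, $\sum_{i \in A} e^{\gamma y_i^t} \geq \tfrac14 \sum_i e^{\gamma y_i^t}$. A symmetric argument on $B$ for $c_i^-$ and $e^{-\gamma y_i^t}$ yields a combined negative contribution of at most $-\gamma/(16n) \cdot \Gamma^t$. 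The residual positive pieces come from (a) the adverse $c_i^+ e^{\gamma y_i^t}$ for $i$ past the median (and symmetrically for $c_i^-$), which I would absorb using $\sum_i (p_i - 1/n)^+ = 1/4$ combined with the identity $\sum_i (p_i - 1/n) e^{\gamma y_i^t} = \sum_i (p_i - 1/n)(e^{\gamma y_i^t} - 1)$ and a pairing of index $i$ with its mirror $n{+}1{-}i$; and (b) the $\gamma^2$-coefficient, which contributes at most $3\gamma^2 \Gamma^t/n$ and is dominated by the negative drop provided $\gamma \leq 1/144$. Tuning constants, what survives is exactly the advertised drop of $-\gamma/(48n) \cdot \Gamma^t + c$, with $c$ absorbing the $O(1)$ surplus when $\Gamma^t = O(n)$. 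The main technical obstacle is the ``mirror pairing'' step in~(a): converting the adverse contribution from bins with $p_i > 1/n$ but $y_i^t > 0$ (which can occur between the median and the $3n/4$-quantile) into a bounded additive correction rather than one that scales with $\Gamma^t$. This is precisely what calibrates the two numeric constants ($1/48$ in the drop and $\gamma < 1/72$ in the hypothesis); the remaining algebra is routine bookkeeping.
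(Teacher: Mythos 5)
The paper does not actually prove this lemma: the second inequality is imported wholesale from~\cite[Theorem~2.9]{PTW15}, and only the first inequality (specialising~\cref{lem:ptw_gamma_change} to $r^t = p$) is the paper's own observation. So you are attempting to reprove a cited result from scratch, and your plan has a genuine gap exactly at the step you yourself flag as ``the main technical obstacle.''

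The mirror-pairing argument, carried out carefully, gives
$T := \sum_i (p_i - 1/n)\bigl(e^{\gamma y_i^t} - e^{-\gamma y_i^t}\bigr) \leq -\frac{1}{2n}\Bigl(\sum_{i \in A} w_i - \sum_{i \in B} w_i\Bigr)$
where $w_i := e^{\gamma y_i^t} - e^{-\gamma y_i^t}$ and $A = [1,n/4]$, $B = [3n/4+1,n]$. This establishes $T \leq 0$, which is the ``soft'' power-of-two-choices advantage. But to conclude the drop inequality you need the \emph{quantitative} statement that the bracketed difference is bounded below by a constant fraction of $\Gamma^t$, say $\geq \tfrac{1}{10}\Gamma^t - O(n)$; otherwise the $-\gamma/(16n)\cdot\Gamma^t$ contribution from the two quarters is not only offset by the $\gamma^2$-term (which is of size $\Theta(\gamma^2\Gamma^t/n)$ and thus \emph{does} scale with $\Gamma^t$), it is also offset by the ``wrong-sign'' terms you have not controlled. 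Indeed, a direct accounting of the naive quarter bounds gives a positive total: $\sum_{i\in A}c_i^- e^{-\gamma y_i^t}$ alone can be as large as $\Theta(\gamma\Gamma^t/n)$, which exceeds $\gamma/(16n)\cdot\Gamma^t$, so the argument must exploit cancellations beyond what you've written. Your appeal to $\sum_i (p_i-1/n)^+ = 1/4$ and to mirror pairing only delivers non-positivity of $T$; it does not deliver the lower bound on the bracket, and it does not convert the residual into an $O(1)$ additive correction. Establishing the missing inequality requires a case analysis on the sign of a fixed quantile (e.g.\ whether $y^t_{n/4}$ or $y^t_{3n/4}$ crosses zero), which is precisely the content of~\cite[Lemmas~2.4, 2.5 and the proof of Theorem~2.9]{PTW15}. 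Without that case analysis, your proof does not close. (A minor additional point: your argument requires $\gamma \leq 1/144$, which is more restrictive than the $\gamma < 1/72$ permitted by the statement, though this is easy to fix by re-tuning constants.)
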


We will analyze $\Delta\Gamma^{t+1}$ for the \GAdvComp setting by relating it to the change $\Delta\Gamma^{t+1}$ for the \TwoChoice process without noise. To this end, it will be helpful to define all pairs of bins (of unequal load), whose comparison is under the control of the adversary:
\begin{align}
 R^t := \left\{ (i, j) \in [n] \times [n] \colon y_j^t < y_i^t \leq y_j^t + g \right\}. \label{def:rt_definition}
\end{align}
So for each pair $(i,j) \in R^t$, the adversary determines the outcome of the load comparison assuming $\{i,j\}$ are the two bin samples in step $t+1$, which happens with probability $2 \cdot \frac{1}{n} \cdot \frac{1}{n} = \frac{2}{n^2}$. This can be seen as moving a probability of up to $\frac{2}{n^2}$ from bin $j$ to bin $i$, if we relate the probability allocation vector $p$ (of \TwoChoice without noise) to the probability allocation vector $q^t=q^t(\mathfrak{F}^t)$ of \TwoChoice with noise.

\begin{figure}

\begin{tikzpicture}[scale=.9]

\begin{scope}[scale=0.9,xshift=8.5cm,yshift=-2cm]

\node (0) at (0,3) {$x_i^t$:};
\node (1) at (1,3) {$21$};
\node (2) at (2,3) {$19$};
\node (3) at (3,3) {$13$};
\node (4) at (4,3) {$12$};
\node (5) at (5,3) {$12$};
\node (6) at (6,3) {$11$};
\node (7) at (7,3) {$8$};
\node (8) at (8,3) {$6$};

\node (0a) at (0,2.3) {\scriptsize{$i$:}};
\node (1a) at (1,2.3) {\scriptsize{$1$}};
\node (2a) at (2,2.3) {\scriptsize{$2$}};
\node (3a) at (3,2.3) {\scriptsize{$3$}};
\node (4a) at (4,2.3) {\scriptsize{$4$}};
\node (5a) at (5,2.3) {\scriptsize{$5$}};
\node (6a) at (6,2.3) {\scriptsize{$6$}};
\node (7a) at (7,2.3) {\scriptsize{$7$}};
\node (8a) at (8,2.3) {\scriptsize{$8$}};

\draw[-stealth] (2.120) to [bend right=50] (1.60);
\draw[-stealth] (4) to [bend right=50] (3.70);
\draw[-stealth] (5) to [bend right=50] (3.85);
\draw[-stealth] (6.105) to [bend right=50] (3.100);

\draw[-stealth] (6.115) to [bend right=50] (5.60);

\node() at (4.7,5) {\small{$R^t = \{ (1,2), (3,4), (3,5), (3,6), (4,6), (5,6), (6,7), (7,8) \}$}};

\end{scope}

\begin{axis}[domain=0:8.5,
  samples=40,
  grid=both,xtick distance=1,xmin=0.5,xlabel=$i$,ylabel={$\Pro{\cdot}$},xmax=8.5,ymin=0,ymax=0.3, legend pos=north west]
\addplot[domain=1:8,opacity=0.8,color=green,draw=green, mark=*,mark size=2pt] plot coordinates
    {
    (1,0.015625)
    (2,0.046875)
    (3,0.078125)
    (4,0.109375)
    (5,0.140625)
    (6,0.171875)
    (7,0.203125)
    (8,0.234375)
    };
     \addlegendentry{\TwoChoice without noise $p$};
     \addplot[domain=1:10,opacity=0.8,color=red,draw=red, mark=*,mark size=2pt] plot coordinates
    {
    (1,0.015625+2/64)
    (2,0.046875-2/64)
    (3,0.078125+6/64)
    (4,0.109375-2/64)
    (5,0.140625)
    (6,0.171875-2/64-2/64)
    (7,0.203125)
    (8,0.234375)
    };
     \addlegendentry{\GAdvComp $q^t$};

     \draw[draw=black,thick,fill=blue!50,opacity=0.4] (axis cs:0.85,0.015625) rectangle (axis cs:1.15,0.015625+2/64);
     \draw[draw=black,thick,fill=blue!50,opacity=0.4] (axis cs:1.85,0.046875) rectangle (axis cs:2.15,0.046875-2/64);
     
      \draw[draw=black,thick,fill=blue!50,opacity=0.4] (axis cs:2.85,0.078125) rectangle (axis cs:3.15,0.078125+2/64);
      \draw[draw=black,thick,fill=blue!50,opacity=0.4] (axis cs:2.85,0.078125+2/64) rectangle (axis cs:3.15,0.078125+4/64);
      \draw[draw=black,thick,fill=blue!50,opacity=0.4] (axis cs:2.85,0.078125+4/64) rectangle (axis cs:3.15,0.078125+6/64); 
      \draw[draw=black,thick,fill=blue!50,opacity=0.4] (axis cs:3.85,0.109375) rectangle (axis cs:4.15,0.109375-2/64); 
      \draw[draw=black,thick,fill=blue!50,opacity=0.4] (axis cs:5.85,0.171875) rectangle (axis cs:6.15,0.171875-2/64); 
      \draw[draw=black,thick,fill=blue!50,opacity=0.4] (axis cs:5.85,0.171875-2/64) rectangle (axis cs:6.15,0.171875-4/64); 
\end{axis}

\end{tikzpicture}

\caption{Illustration of the set $R^t$ and the change in the probability allocation vector from $p$ to $q^t$, where $n=8$ and $g=3$. In the example, each directed arrow moves a probability of $\frac{2}{n^2}$ (indicated by the blue rectangles) from a bin $j$ to a heavier bin $i < j$. Note that in this example, the adversary decides not to reverse some of the comparisons, e.g., between bins $7$ and $8$.}
\end{figure}
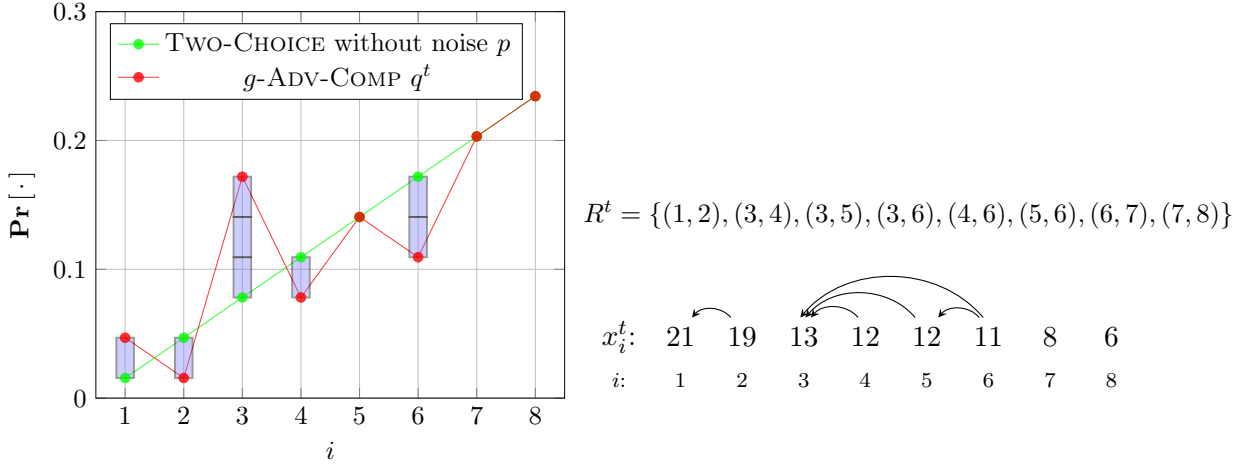

\begin{thm} \label{thm:g_adv_warm_up_gap}
Consider the \GAdvComp setting for any $g\geq 1$, and the potential $\Gamma := \Gamma(\gamma)$ with $\gamma := - \log(1 - \frac{1}{8 \cdot 48})/g < \frac{1}{6 \cdot 12}$. Then, there exist constants $c_1 \geq 1, c_2 > 0, c_3 \geq 2$, such that the following three statements hold for all steps $t \geq 0$:
\begin{align*}
(i) & \qquad \Ex{\left. \Delta\Gamma^{t+1} \,\right|\, y^t} \leq - \frac{\gamma}{96n} \cdot \Gamma^t + c_1, \\
(ii) & \qquad \Ex{\Gamma^t} \leq c_2 ng, \\
(iii) & \qquad \Pro{\max_{i \in [n]} \left|y_i^t\right| \leq c_3 g\log(ng)} \geq 1 - (ng)^{-14}.
\end{align*}
\end{thm}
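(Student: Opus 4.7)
The plan is to derive (i) first, then obtain (ii) by unrolling a linear recurrence, and finally (iii) by Markov's inequality combined with the exponential shape of $\Gamma$.

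For (i), I would apply \cref{lem:ptw_gamma_change} with the (filtration-dependent) probability allocation vector $q^t$ of the \GAdvComp setting, and then compare with the \TwoChoice probability vector $p$ of \cref{lem:ptw_gamma_drop_two_choice} by splitting
\[
\Ex{\Delta\Gamma^{t+1} \mid y^t} \leq \Big( h(y^t) + \sum_{i} p_i f(y_i^t)\Big) + \sum_{i} (q_i^t - p_i) f(y_i^t).
\]
The first parenthesis is bounded by $-\frac{\gamma}{48n}\Gamma^t + c$ through \cref{lem:ptw_gamma_drop_two_choice}, whose precondition $\gamma < \frac{1}{6\cdot 12}$ is satisfied by our choice. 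For the second \emph{adversarial perturbation} term, the structural observation illustrated in the figure for $R^t$ is that $q^t$ arises from $p$ only through transfers of probability mass of at most $\frac{2}{n^2}$ from a lighter bin $j$ to a strictly heavier bin $i$ with $(i,j) \in R^t$; hence the perturbation equals $\sum_{(i,j) \in R^t} \alpha_{ij}(f(y_i^t) - f(y_j^t))$ with $0 \leq \alpha_{ij} \leq \frac{2}{n^2}$.

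The key calculation is to bound a single difference $f(y_i) - f(y_j)$ when $0 < y_i - y_j \leq g$. Using the elementary estimates $e^{\gamma y_i} - e^{\gamma y_j} \leq \delta \cdot e^{\gamma y_j}$ and $e^{-\gamma y_j} - e^{-\gamma y_i} \leq \delta \cdot e^{-\gamma y_j}$, where $\delta := e^{\gamma g}-1 = \tfrac{1}{383}$ by the specific choice of $\gamma$, I would obtain $f(y_i^t) - f(y_j^t) \leq \delta(\gamma+\gamma^2)\,\Gamma_j^t$. Since each bin $j$ appears in at most $n-1$ pairs $(i,j)\in R^t$, summing yields a perturbation of at most $\frac{4\gamma(1+\gamma)\delta}{n}\,\Gamma^t$. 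The constant $8\cdot 48$ in the definition of $\gamma$ is calibrated precisely so that this is strictly less than half of the \TwoChoice drop, leaving a net drop of at most $-\frac{\gamma}{96n}\Gamma^t + c_1$, which is (i).

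Part (ii) follows by taking unconditional expectations in (i) and unrolling the resulting linear recurrence, giving $\Ex{\Gamma^t} \leq \Gamma^0 + \frac{96nc_1}{\gamma} = 2n + \Oh(ng)$, which is $\Oh(ng)$ since $\gamma = \Theta(1/g)$. For (iii), Markov's inequality applied with threshold $c_2 n g \cdot (ng)^{14}$ gives $\Pro{\Gamma^t \geq c_2 n g (ng)^{14}} \leq (ng)^{-14}$; on the complementary event, the elementary bound $\max_i e^{\gamma |y_i^t|} \leq \max_i \Gamma_i^t \leq \Gamma^t$ yields $\max_i |y_i^t| \leq c_3\, g \log(ng)$ for a suitable constant $c_3 \geq 2$. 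The main obstacle is the constant-chasing in step (i): $\delta$ must be small enough that the adversarial perturbation is strictly dominated by the \TwoChoice drop, which is exactly why $\gamma$ is picked so that $e^{\gamma g} = 384/383$. The rest of the argument is routine manipulation.
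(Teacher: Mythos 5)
Your overall structure matches the paper's: decompose the expected change using \cref{lem:ptw_gamma_change}, compare against the \TwoChoice vector $p$ via the probability transfers supported on $R^t$, and bound the resulting perturbation term, after which (ii) and (iii) follow by the same routine recurrence-and-Markov argument. Where you differ is in the key sub-step (the paper's \cref{lem:gamma_reallocation}): you derive the uniform estimate $f(y_i^t) - f(y_j^t) \leq \delta(\gamma+\gamma^2)\Gamma_j^t$ directly from $e^{\gamma y_i}-e^{\gamma y_j}\leq \delta e^{\gamma y_j}$ and $e^{-\gamma y_j}-e^{-\gamma y_i}\leq \delta e^{-\gamma y_j}$, which is correct and avoids both the paper's three-case split and its additive $8/n^2$ term. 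That is a genuinely cleaner route to the same drop inequality.

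There is, however, an arithmetic slip that matters here because the margin is razor-thin. You correctly observe that each bin $j$ appears as the second coordinate in at most $n-1$ pairs of $R^t$, which gives
\[
\sum_{(i,j)\in R^t}\alpha_{ij}\bigl(f(y_i^t)-f(y_j^t)\bigr) \;\leq\; \frac{2(n-1)}{n^2}\,\delta(\gamma+\gamma^2)\,\Gamma^t \;\leq\; \frac{2\gamma(1+\gamma)\delta}{n}\,\Gamma^t,
\]
with constant $2$, not the $4$ you wrote. The distinction is not cosmetic: with $\delta = \tfrac{1}{383}$ and $\gamma < \tfrac{1}{72}$, one has $4(1+\gamma)\delta \geq \tfrac{4}{383} \approx 0.01045 > \tfrac{1}{96} \approx 0.01042$, so your stated perturbation bound would \emph{not} leave a net drop of $-\tfrac{\gamma}{96n}\Gamma^t$; with the correct factor $2$, $2(1+\gamma)\delta \approx 0.0053 < \tfrac{1}{96}$ and the argument closes with comfortable slack. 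Fix the constant and your (i) is fine; (ii) and (iii) are then immediate.
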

\begin{proof}
\textit{First statement.} Consider the probability allocation vector $q^t$ at step $t$ in the \GAdvComp setting. By \cref{eq:lemma_four_one} we have
\begin{align*}
\Ex{\left.\Delta\Gamma^{t+1} \,\right|\, y^t} 
 & \leq h(y^t) + \sum_{i = 1}^n q_i^t \cdot f(y_i^t).
\end{align*} 
Recall that $p$ is the probability allocation vector of \TwoChoice without noise. Then, %
\[
 q^t := p + \sum_{ (i,j) \in R^t } \left( \mathbf{e}_i - \mathbf{e}_j \right) \cdot \gamma_{i,j}^t
 + \sum_{ (i,j) \in [n] \times [n] \colon y_i^t = y_j^t} \left( \mathbf{e}_i - \mathbf{e}_j \right) \cdot \gamma_{i,j}^t,
\]
where $\mathbf{e}_i \in \R^n$ is the $i$-th unit vector, and $\gamma_{i,j}^t$ is a number in $\big[0,\frac{2}{n^2} \big]$. Hence,
\begin{align}
\Ex{\left.\Delta\Gamma^{t+1} \,\right|\, y^t} 
 &\leq h(y^t) + \sum_{i = 1}^n p_i \cdot f(y_i^t) + \sum_{(i,j) \in R^t} \gamma_{i,j}^t \cdot \left( f(y_i^t) - f(y_j^t) \right) \notag \\
 & \qquad \qquad \qquad + \sum_{(i,j) \in [n] \times [n] \colon y_i^t = y_j^t} \gamma_{i,j}^t \cdot \left( f(y_i^t) - f(y_j^t) \right) \notag \\
 &= h(y^t) + \sum_{i = 1}^n p_i \cdot f(y_i^t) + \sum_{(i,j) \in R^t} \gamma_{i,j}^t \cdot \left( f(y_i^t) - f(y_j^t) \right) \notag \\
& \leq - \frac{\gamma}{48n} \cdot \Gamma^t + c + \sum_{(i,j) \in R^t} \gamma_{i,j}^t \cdot \left( f(y_i^t) - f(y_j^t) \right), \label{eq:base_case_change_1}
\end{align}
using in the last inequality that by \cref{lem:ptw_gamma_drop_two_choice} there exists such a constant $c > 0$ for the \TwoChoice process without noise and for the same $\gamma$ (since $\gamma < \frac{1}{6 \cdot 12}$).

For any pair of indices $(i, j) \in R^t$, we define
\begin{align*}
\xi_{i,j}^t :=\  & \gamma_{i,j}^t \cdot \left( f(y_i^t) - f(y_j^t) \right) \\
 \leq \  & \frac{2}{n^2} \cdot \left( (\gamma + \gamma^2) \cdot e^{\gamma y_i^t} + (-\gamma + \gamma^2) \cdot e^{-\gamma y_i^t} - (\gamma + \gamma^2) \cdot e^{\gamma y_j^t} - (-\gamma + \gamma^2) \cdot e^{-\gamma y_j^t} \right),
\end{align*}
and proceed to upper bound $\xi_{i,j}^t$, using the following lemma, which is based on a case distinction and Taylor estimates for $\exp(\cdot)$.
\begin{lem}\label{lem:gamma_reallocation}
For any pair of indices $(i, j) \in R^t$, we have $
\xi_{i,j}^t \leq \frac{\gamma}{96n^2} \cdot (\Gamma_i^t + \Gamma_j^t) + \frac{8}{n^2}.
$
\end{lem}
\begin{proof}[Proof of \cref{lem:gamma_reallocation}]
Recall that for any $(i, j) \in R^t$ we have that $y_j^t < y_i^t \leq y_j^t + g$. So, now we consider the following three disjoint cases:

\textbf{Case 1 [$y_i^t > g$]:} In this case, we also have that $y_j^t > 0$, so
\begin{align*}
\xi_{i,j}^t 
 & \stackrel{(a)}{\leq} \frac{2}{n^2} \cdot \left( (\gamma + \gamma^2) \cdot e^{\gamma y_i^t} - (\gamma + \gamma^2) \cdot e^{\gamma y_j^t} + 1 \right) \\
 & \stackrel{(b)}{\leq} \frac{2}{n^2} \cdot \left( (\gamma + \gamma^2) \cdot e^{\gamma y_i^t} - (\gamma + \gamma^2) \cdot e^{\gamma (y_i^t - g)} + 1 \right) \\
 & = \frac{2}{n^2} \cdot \left( (\gamma + \gamma^2) \cdot e^{\gamma y_i^t} \cdot (1 - e^{-\gamma g}) + 1 \right) \\
 & \leq \frac{2}{n^2} \cdot \Big( (\gamma + \gamma^2) \cdot \Gamma_i^t \cdot (1 - e^{-\gamma g}) + 1 \Big),
\end{align*}
using in $(a)$ that $0 \leq \gamma-\gamma^2 \leq 1$ and $e^{-\gamma y_k^t} \leq 1$ for any bin $k \in [n]$ with $y_k^t > 0$, and in $(b)$ that $y_i^t \leq y_j^t + g$.

\textbf{Case 2 [$y_j^t < -g$]:} In this case, we also have that $y_i^t < 0$, 
\begin{align*}
\xi_{i,j}^t 
 & \stackrel{(a)}{\leq} \frac{2}{n^2} \cdot \left( (-\gamma + \gamma^2) \cdot e^{-\gamma y_i^t} - (-\gamma + \gamma^2) \cdot e^{-\gamma y_j^t} + 1 \right) \\
 & \stackrel{(b)}{\leq} \frac{2}{n^2} \cdot \left( (-\gamma + \gamma^2) \cdot e^{-\gamma (y_j^t + g)} - ( -\gamma + \gamma^2) \cdot e^{-\gamma y_j^t} + 1 \right) \\
 & = \frac{2}{n^2} \cdot \left( (-\gamma + \gamma^2) \cdot e^{-\gamma y_j^t} \cdot (e^{-\gamma g} - 1) + 1 \right) \\
 & \leq \frac{2}{n^2} \cdot \Big( (\gamma - \gamma^2) \cdot \Gamma_j^t \cdot (1 - e^{-\gamma g}) + 1 \Big),
\end{align*}
using in $(a)$ that $\gamma+\gamma^2 \leq 1$ and that $e^{\gamma y_k^t} \leq 1$ for any bin $k \in [n]$ with $y_k^t < 0$ and in $(b)$ that $y_i^t \leq y_j^t + g$ and $\gamma^2 \leq \gamma$.

\textbf{Case 3 [$\max\{|y_i^t|, |y_j^t|\} \leq g$]:} In this case, we have that\
\[
\xi_{i,j}^t \leq \frac{2}{n^2} \cdot (\gamma + \gamma^2) \cdot (\Gamma_i^t + \Gamma_j^t) \leq \frac{2}{n^2} \cdot \frac{1}{2} \cdot \left( 2 \cdot 2 \cdot e^{\gamma g} \right) \leq \frac{2}{n^2} \cdot 4,
\]
using that $\gamma + \gamma^2 \leq \frac{1}{2}$ (since $\gamma < \frac{1}{6 \cdot 12}$) and by the definition of $\gamma$, we have $e^{\gamma g} = e^{ - \log(1-\frac{1}{8 \cdot 48} )} < 2$.
\bigskip

Combining the upper bounds for the three cases, we have that for any $(i,j) \in R^t$,
\begin{align*}
 \xi_{i,j}^t & \leq \frac{2}{n^2} \cdot \Big( (\gamma + \gamma^2) \cdot \Gamma_i^t \cdot (1 - e^{-\gamma g}) + (\gamma - \gamma^2) \cdot \Gamma_j^t \cdot (1 - e^{-\gamma g}) + 4 \Big)  \\
 & \stackrel{(a)}{\leq} \frac{2}{n^2} \cdot \Big( 2\gamma \cdot \Gamma_i^t \cdot (1 - e^{-\gamma g}) + 2\gamma \cdot \Gamma_j^t \cdot (1 - e^{-\gamma g}) + 4 \Big)  \\
 & = \frac{4\gamma}{n^2} \cdot (\Gamma_i^t + \Gamma_j^t) \cdot (1 - e^{-\gamma g}) + \frac{8}{n^2} \\
 & \stackrel{(b)}{=} \frac{4\gamma}{n^2} \cdot (\Gamma_i^t + \Gamma_j^t) \cdot \frac{1}{8 \cdot 48} + \frac{8}{n^2} \\
 & = \frac{\gamma}{96n^2} \cdot (\Gamma_i^t + \Gamma_j^t) + \frac{8}{n^2},
\end{align*}
using in $(a)$ that $\gamma^2 \leq \gamma$ since $\gamma \leq 1$ and in $(b)$ that $\gamma = - \log(1 - \frac{1}{8 \cdot 48})/g$.
\end{proof}

We continue with the proof of \cref{thm:g_adv_warm_up_gap}.
By \cref{lem:gamma_reallocation} and \cref{eq:base_case_change_1}, we have
\begin{align}
\Ex{\left.\Delta\Gamma^{t+1} \,\right|\, y^t} & \leq - \frac{\gamma}{48n} \cdot \Gamma^t + c + \sum_{(i,j) \in R^t} \Big(\frac{\gamma}{96n^2} \cdot (\Gamma_i^t + \Gamma_j^t) + \frac{8}{n^2} \Big) \notag \\
 &\stackrel{(a)}{\leq} - \frac{\gamma}{48n} \cdot \Gamma^t + c +  \frac{\gamma}{96n} \cdot \Gamma^t + 4 \notag\\
 & = - \frac{\gamma}{96n} \cdot \Gamma^t + c_1, \label{eq:base_case_change_2}
\end{align}
for $c_1 := c + 4 \geq 1$, where $(a)$ holds since if $(i,j) \in R^t$ then $(j,i) \not\in R^t$, so every bin $k \in [n]$ appears at most $n$ times in $R^t$. This concludes the proof of the first statement.

\textit{Second statement.} By \cref{lem:geometric_arithmetic}~$(ii)$ (for $a = 1 - \frac{\gamma}{96n}$ and $b = c_1$), since $\Gamma^0 = 2n \leq \frac{96c_1}{\gamma} \cdot n$ (as $c_1 \geq 1$ and $\gamma \leq 1$) and \cref{eq:base_case_change_2} hold, it follows that 
\begin{align*}
\Ex{\Gamma^{t}} & \leq \frac{96c_1}{\gamma}\cdot n =: c_2 ng.
\end{align*}

\textit{Third statement.} Using Markov's inequality, for any step $t \geq 0$,
\[
\Pro{\Gamma^t \leq c_2 \cdot (ng)^{15}} \geq 1 - (ng)^{-14}.
\]
When the event $\big\{ \Gamma^t \leq c_2 \cdot (ng)^{15} \big\}$ holds, we have that\[
\max_{i \in [n]} \left| y_i^t \right| \leq \frac{1}{\gamma} \cdot \left( \log c_2 + 15 \log( ng ) \right) \leq \frac{16 \log( ng )}{\gamma} =: c_3 g \log(ng),
\]
for sufficiently large $n$ and for the constant 
\begin{align} \label{eq:g_adv_c3_def}
c_3 := \frac{16}{\gamma g} = \frac{16}{- \log(1 - \frac{1}{8 \cdot 48})} \geq 2.
\end{align}
Therefore we conclude that,
\[ %
\Pro{\max_{i \in [n]} \left| y_i^t \right| \leq c_3 g \log (ng)} \geq 1 - (ng)^{-14} . \qedhere
\]
\end{proof}

Next we will also state a simple corollary that starting with a ``small'' gap, in any future step, \Whp~the gap will be small. This corollary will be used in obtaining the tighter $\Oh(g + \log n)$ gap bound in \cref{sec:g_adv_g_plus_logn_bound}. We defer its proof to \cref{sec:g_adv_gap_remains_small_proof}.

\newcommand{\CorGAdvGapRemainsSmall}{
Consider the \GAdvComp setting for any $g\geq 1$. Then, for any steps $t_0 \geq 0$ and $t_1 \geq t_0$, we have that
\[
\Pro{\left. \max_{i \in [n]} \left| y_i^{t_1} \right| \leq 2 g (\log(ng))^2 \,\,\right|\,\, \mathfrak{F}^{t_0}, \max_{i \in [n]} \left| y_i^{t_0} \right| \leq g (\log(ng))^2 } \geq 1 - (ng)^{14}.
\]
}
\begin{cor} \label{cor:g_adv_gap_remains_small}
\CorGAdvGapRemainsSmall
\end{cor}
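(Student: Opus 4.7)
The plan is to reduce the statement to an application of Markov's inequality on the hyperbolic cosine potential $\Gamma^{t_1} := \Gamma^{t_1}(\gamma)$, for the same $\gamma = -\log(1 - \tfrac{1}{8\cdot 48})/g$ as in \cref{thm:g_adv_warm_up_gap}. The key observation driving the proof is that $\gamma g = -\log(1 - \tfrac{1}{384}) =: \beta$ is a \emph{universal constant} (independent of $g$ and $n$), so exponential factors of the form $e^{\gamma g \cdot (\log(ng))^2} = e^{\beta \cdot (\log(ng))^2}$ dominate any polynomial factor in $ng$ for sufficiently large $n$.

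First, I would use the hypothesis $\max_{i \in [n]} |y_i^{t_0}| \leq g (\log(ng))^2$ to give a deterministic bound on the potential at time $t_0$:
\[
 \Gamma^{t_0} \;\leq\; 2n \cdot e^{\gamma \cdot g (\log(ng))^2} \;=\; 2n \cdot e^{\beta (\log(ng))^2}.
\]
Next, I would iterate the drop inequality from \cref{thm:g_adv_warm_up_gap}(i), namely $\Ex{\Gamma^{t+1} \mid \mathfrak{F}^t} \leq (1 - \tfrac{\gamma}{96n}) \Gamma^t + c_1$, from step $t_0$ to step $t_1$. Exactly as in the proof of the second statement of \cref{thm:g_adv_warm_up_gap} (invoking \cref{lem:geometric_arithmetic}~(ii) but with initial value $\Gamma^{t_0}$ instead of $\Gamma^0$), this yields
\[
 \Ex{\Gamma^{t_1} \mid \mathfrak{F}^{t_0}} \;\leq\; \Gamma^{t_0} + \tfrac{96 c_1}{\gamma} \cdot n \;\leq\; 2n \cdot e^{\beta (\log(ng))^2} + c_2 \cdot n g.
\]

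Finally, I would apply Markov's inequality with the threshold $M := e^{2 \beta (\log(ng))^2}$. This gives
\[
 \Pro{\Gamma^{t_1} \geq M \,\Big|\, \mathfrak{F}^{t_0}} \;\leq\; \frac{2n \cdot e^{\beta (\log(ng))^2} + c_2 ng}{e^{2\beta (\log(ng))^2}} \;\leq\; \frac{3 ng}{e^{\beta (\log(ng))^2}},
\]
and on the complementary event, since $\Gamma^{t_1} < M$ forces $e^{\gamma |y_i^{t_1}|} < M$, we obtain $\max_i |y_i^{t_1}| \leq \log(M)/\gamma = 2 \beta (\log(ng))^2 / \gamma = 2 g (\log(ng))^2$, which is exactly the desired bound. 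Because $\beta$ is a fixed positive constant, for $n$ large enough we have $\beta (\log(ng))^2 \geq 15 \log(ng) + \log(3)$, which makes the failure probability at most $(ng)^{-14}$, matching the claim (the displayed $(ng)^{14}$ in the statement being an evident typo for $(ng)^{-14}$).

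I do not expect a serious obstacle: the only subtlety is verifying that the constant $\gamma g = \beta$ really cancels the two factors of $g$ so that the bound $\log(M)/\gamma$ produced by Markov matches precisely $2g(\log(ng))^2$, and that the exponent $(\log(ng))^2$ beats the polynomial $ng$ in the numerator with room to spare for an arbitrary polynomial concentration bound.
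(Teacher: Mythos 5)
Your approach is essentially the same as the paper's: bound $\Gamma^{t_0}$ deterministically from the gap hypothesis, propagate the drop inequality of \cref{thm:g_adv_warm_up_gap}~$(i)$ to step $t_1$, apply Markov's inequality, and translate the resulting bound on $\Gamma^{t_1}$ back to a gap bound at $t_1$. The one small slip is the citation: you invoke \cref{lem:geometric_arithmetic}~$(ii)$, but that part requires the initial value to already satisfy $X^0 \leq \frac{b}{1-a}$, which $\Gamma^{t_0} \approx 2n e^{\beta(\log(ng))^2}$ certainly does not. The bound you actually write down, $\Ex{\Gamma^{t_1}\mid\mathfrak{F}^{t_0}} \leq \Gamma^{t_0} + \frac{96c_1}{\gamma}n$, is exactly what \cref{lem:geometric_arithmetic}~$(i)$ gives (after dropping the $a^{t_1-t_0}\leq 1$ factor), which is also the part the paper uses, so the argument itself is correct. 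Your choice of Markov threshold $M=e^{2\beta(\log(ng))^2}$ is different from the paper's $4n\cdot e^{\beta(\log(ng))^2}\cdot (ng)^{14}$ but equivalent in effect: both exploit that $\beta=\gamma g$ is a universal constant so that $e^{\beta(\log(ng))^2}$ eventually dominates any fixed polynomial in $ng$; you make the gap conclusion immediate and then check the probability, the paper fixes the probability at $(ng)^{-14}$ and then checks the gap. You are also right that the displayed $(ng)^{14}$ in the corollary statement is a typo for $(ng)^{-14}$.
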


\section{Upper Bound of \texorpdfstring{$\Oh(g + \log n)$}{O(g + log n)} for \texorpdfstring{$g$-\textsc{Adv-Comp}}{g-Adv-Comp}} \label{sec:g_adv_g_plus_logn_bound}

In this section we give the proof of the $\Oh(g + \log n)$ gap bound, as stated in the theorem below. For $g =\Omega(\log n)$, this matches the lower bound for the \GMyopicComp process in \cref{pro:g_myopic_g_lower_bound} up to multiplicative constants.

\newcommand{\GBoundedLognGap}{
Consider the \GAdvComp setting for any $g\geq1$, the constant $\kappa \geq \frac{1}{\alpha}$ defined in \cref{eq:g_adv_kappa_def} in \cref{lem:g_adv_good_gap_after_good_lambda} and $\alpha = \frac{1}{18}$. Then, for any step $m \geq 0$,
\[
\Pro{ \max_{i \in [n]} \left| y_i^m \right| \leq \kappa \cdot (g + \log n)} \geq 1 - 2 \cdot (ng)^{-9}.
\]
}

\newcommand{\GBoundedLognGapSimplified}{
Consider the \GAdvComp setting for any $g\geq1$. Then, there exists a constant $\kappa > 0$, such that for any step $m \geq 0$,
\[
\Pro{ \max_{i \in [n]} \left| y_i^m \right| \leq \kappa \cdot (g + \log n)} \geq 1 - 2 \cdot (ng)^{-9}.
\]
}

{\renewcommand{\thethm}{\ref{thm:g_adv_g_plus_logn_gap}}
	\begin{thm}[\textbf{Simplified version, page~\pageref{thm:g_adv_g_plus_logn_gap}}]
\GBoundedLognGapSimplified
	\end{thm}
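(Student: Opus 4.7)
The strategy is to re-run the hyperbolic cosine analysis from \cref{sec:g_adv_warm_up} but now with a \emph{constant} smoothing parameter $\alpha = 1/18$ independent of $g$, because only this choice can yield the additive bound $\kappa(g+\log n)$ from a polynomial-in-$(n,g)$ bound on the potential via Markov: indeed, a bound of the form $\Ex{\Gamma^m(\alpha)} \leq e^{\Oh(\alpha g)} \cdot \mathrm{poly}(n,g)$ would give $\max_i |y_i^m| \leq \kappa(g + \log n)$ with the required probability. The obstruction compared to the warm-up is that in \cref{lem:gamma_reallocation} the factor $(1-e^{-\gamma g})$ was tuned to exactly cancel against the negative drift of noise-free \TwoChoice (\cref{lem:ptw_gamma_drop_two_choice}), whereas with a constant $\alpha$ this factor becomes $\Theta(1)$ and a naive accounting gives an adversarial error of order $(e^{\alpha g}/n)\cdot \Gamma^t$ per step, which the drift cannot absorb.

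The idea to overcome this is to prove a \emph{conditional} drift inequality: there exist constants $c_0, c_1 > 0$ such that for every step $t$,
\[
\Ex{\left. \Delta\Gamma^{t+1}(\alpha) \,\right|\, y^t} \;\leq\; -\frac{\alpha}{c_0\, n}\,\Gamma^t \;+\; c_1\cdot e^{\alpha g}\cdot \frac{\Lambda^t}{n} \;+\; c_1 \cdot e^{\alpha g},
\]
where $\Lambda^t := \sum_{i=1}^n |y_i^t|$ is a linear potential. The point is that in the case analysis of \cref{lem:gamma_reallocation}, pairs $(i,j) \in R^t$ whose endpoints are both within $\Oh(g)$ of the mean contribute only $\Oh(e^{\alpha g}/n^2)$ each, i.e., $\Oh(e^{\alpha g})$ in total, while pairs with at least one endpoint far from the mean contribute terms that can be charged via $e^{\alpha|y_i^t|}(1-e^{-\alpha g}) = \Oh(\alpha g \cdot e^{\alpha|y_i^t|})$, split into a piece absorbed by the negative drift and a residual controlled by $\Lambda^t$.

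To bound $\Lambda^t$ I would exploit that the warm-up \cref{thm:g_adv_warm_up_gap} already gives $\Ex{\Gamma^t(\gamma_0)} = \Oh(ng)$ for $\gamma_0 = \Theta(1/g)$, so by Jensen applied bin-wise, $\Ex{\Lambda^t} = \Oh(ng\log(ng))$ for every $t \geq 0$. Combined with Markov and a union bound over a polynomial-length window, this furnishes a ``good'' event $\mathcal{K} := \bigcap_{s \leq m}\{\Lambda^s \leq C n g\log(ng)\}$ occurring with probability at least $1-(ng)^{-10}$, which is presumably what \cref{lem:g_adv_good_gap_after_good_lambda} formalizes (with \cref{cor:g_adv_gap_remains_small} used to dominate any excursions at intermediate steps where the $\Lambda$-bound fails).

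Conditioning on $\mathcal{K}$, the drift becomes $\Ex{\left.\Delta\Gamma^{t+1} \,\right|\, y^t, \mathcal{K}} \leq -\tfrac{\alpha}{c_0 n}\Gamma^t + \Oh(g \log(ng)\cdot e^{\alpha g})$, so \cref{lem:geometric_arithmetic} yields $\Ex{\Gamma^m \mid \mathcal{K}} = \Oh(n g \log(ng)\cdot e^{\alpha g})$ and a final Markov step gives $\max_i|y_i^m| \leq \kappa(g + \log n)$ with probability $1 - (ng)^{-9}$; combining with $\Pro{\overline{\mathcal{K}}} \leq (ng)^{-10}$ yields the claimed $1-2(ng)^{-9}$ bound. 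The main obstacle I anticipate is the conditional drift inequality itself: the case analysis must be refined enough that the adversarial error genuinely averages against $\Lambda^t$ rather than against $\Gamma^t$, which is precisely the three-potential interplay the authors highlight and which they attribute to extending \cite{LSS21}.
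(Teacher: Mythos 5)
Your proposed conditional drift inequality
\[
\Ex{\left. \Delta\Gamma^{t+1}(\alpha) \,\right|\, y^t} \;\leq\; -\frac{\alpha}{c_0\, n}\,\Gamma^t \;+\; c_1\cdot e^{\alpha g}\cdot \frac{\Lambda^t}{n} \;+\; c_1 \cdot e^{\alpha g}
\]
with $\Lambda^t := \sum_i |y_i^t|$ the linear potential, does not hold, and this is the crux of the gap. Consider a pair $(i,j) \in R^t$ with $y_i^t$ far from the mean: the adversarial error from reversing this comparison is of order $\tfrac{1}{n^2} e^{\alpha y_i^t}(1-e^{-\alpha g})$, and since $\alpha$ is a constant and $g>1$ this is $\Theta(\tfrac{1}{n^2}e^{\alpha y_i^t})$. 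A single bin $i$ can have up to $n-1$ such partners $j$, so the aggregate error is of order $\tfrac{1}{n}\sum_i e^{\alpha y_i^t} = \Theta(\Gamma^t/n)$, not $\Theta(\Lambda^t/n)$. Your proposed split of $\alpha g\, e^{\alpha |y_i^t|}$ into ``a piece absorbed by the drift plus a residual controlled by $\Lambda^t$'' would require $\alpha g\, e^{\alpha u} \leq \beta\, e^{\alpha u} + Cu$ uniformly in $u$ for some $\beta < $ (fraction of the negative drift per bin), which fails as $u \to \infty$ whenever $\alpha g > \beta$, i.e.\ for all but constant $g$. This is precisely why the paper does \emph{not} work with $\Gamma(\alpha)$ but with the offset potential $\Lambda(\alpha, c_4 g)$: in a good step ($\Delta^t \leq Dng$), at most $n/3$ bins have $|y_j^t| \geq \tfrac{3}{2}Dg$, and a bin $i$ with $y_i^t \geq c_4 g + 2 = 2Dg+2$ can only receive a ball if the co-sample $j$ satisfies $y_j^t \geq \tfrac{3}{2}Dg$ (otherwise the load difference exceeds $g$ and the comparison is correct), so $q_i^t \leq \tfrac{2}{3n}$. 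The negative drift comes from this pigeonhole bias, not from cancellation of the adversarial error against a linear term.

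There is a second, independent problem with the recovery strategy. Even granting a drop inequality conditional on $\Delta^t \leq Dng$, one cannot establish that this condition holds at \emph{all} steps in a polynomial window by Markov plus a union bound: Markov gives $\Pro{\Delta^t > C n g} \leq \Ex{\Delta^t}/(Cng) = \Oh(\log(ng)/C)$, so achieving failure probability $(ng)^{-10}$ forces $C = (ng)^{10}\log(ng)$, which is nowhere near the constant $D$ needed. The paper instead shows (via the quadratic potential $\Upsilon$ and an Azuma argument, \cref{lem:g_adv_quadratic} and \cref{lem:g_adv_many_good_steps}) that a \emph{constant fraction} of steps in any sufficiently long interval are good, and then feeds this into the adjusted potential $\tilde\Lambda$ (\cref{eq:tilde_lambda}), which pays a multiplicative $e^{3\alpha/n}$ in bad steps and gains $e^{-\alpha\eps/n}$ in good steps; since $r = 6/(6+\eps) > 1/2$ by design, the net drift is negative. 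Your sketch correctly identifies the three-potential interplay as the key, but what is actually needed is a constant-fraction-of-good-steps statement combined with an adjusted exponential super-martingale, not a uniform linear-potential bound fed into a (false) drift inequality for the unoffset $\Gamma(\alpha)$.
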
 }
	\addtocounter{thm}{-1}

\subsection{Proof Outline of Theorem~\ref{thm:g_adv_g_plus_logn_gap}}

The proof of this theorem is considerably more involved than that of \cref{thm:g_adv_warm_up_gap}, making use of an interplay between a variant of the hyperbolic cosine, the absolute value and the quadratic potential (to be defined below). 

First, we define the \emph{hyperbolic cosine potential with an offset}, denoted by $\Lambda$, as\begin{align} \label{eq:lambda_def}
\Lambda^t := \Lambda^t(\alpha, c_4g) := \sum_{i = 1}^n \Lambda_i^t := \sum_{i = 1}^n \left[ e^{\alpha \cdot (y_i^t - c_4g)^+} + e^{\alpha \cdot (-y_i^t -c_4g)^+} \right],
\end{align}
for smoothing parameter $\alpha:=\frac{1}{18}$ and offset $c_4g$ with $c_4 := 730$, and recalling that $u^+ := \max\{u, 0\}$. Compared to $\Gamma$ used in \cref{sec:g_adv_warm_up}, the potential $\Lambda$ has a larger, i.e., constant smoothing parameter $\alpha$ at the cost of an offset of $\Theta(g)$. This means that if at some step $t \geq 0$, $\Lambda^t = \Oh(n)$ holds, then we can deduce a stronger upper bound of $\Gap(t) = \Oh(g + \log n)$.

However, we are
not able to show that $\Lambda$ drops in expectation in every step. More specifically, we only show that $\Lambda^t$ drops in expectation at step $t$ when the \textit{absolute value potential} $\Delta^t := \sum_{i = 1}^n \left|y_i^t\right|$ satisfies the condition $\Delta^t \leq Dng$ for $D := 365$.  This condition implies that there is at most a constant fraction of bins $i \in [n]$ with $\big|y_i^t\big| \geq \frac{3}{2}Dg$ and hence there is a bias to place away from bins $j \in [n]$ with $\big|y_j^t \big| \geq 2Dg = c_4g$, i.e., the ones with load above the offset of $\Lambda$. 
More specifically, in such \textit{good} steps $t$ we show that $\Lambda^t$ satisfies a drop inequality, meaning that when large it drops by a multiplicative factor over the next step in expectation.

{\renewcommand{\thelem}{\ref{lem:g_adv_good_step_drop}}
	\begin{lem}[\textbf{Simplified version, page~\pageref{lem:g_adv_good_step_drop}}]
Consider the \GAdvComp setting for any $g\geq1$ and let $\eps := \frac{1}{12}$. Then, for any step $t\geq 0$,  
\[
\Ex{\left. \Lambda^{t+1} \,\right|\, \mathfrak{F}^t, \Delta^t \leq Dng} \leq \Lambda^t \cdot \Big( 1 -\frac{2\alpha\eps}{n}\Big) + 18\alpha.
\]
	\end{lem}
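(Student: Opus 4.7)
The plan is to follow the same general strategy as \cref{thm:g_adv_warm_up_gap}: start from the PTW-style drift formula, decompose the \GAdvComp probability vector as the \TwoChoice vector $p$ plus an adversarial perturbation coming from $R^t$, bound each piece, and combine. The two new ingredients are that we work with the \emph{offset} potential $\Lambda$ (with a constant smoothing parameter $\alpha = 1/18$ rather than $\gamma = \Theta(1/g)$), and that we rely on the good-step assumption $\Delta^t \leq Dng$ to obtain the required drop.

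First, I would derive a drift formula analogous to \cref{lem:ptw_gamma_change} for $\Lambda$. Because $\Lambda_i^t$ is constant whenever $|y_i^t| \leq c_4 g$, a Taylor expansion yields $\Ex{\Delta \Lambda^{t+1} \mid y^t} \leq h_\Lambda(y^t) + \sum_i q_i^t \cdot f_\Lambda(y_i^t)$, where $f_\Lambda$ and $h_\Lambda$ are supported (up to an $O(1/n)$ error per bin) on the tails $H^t := \{i : y_i^t > c_4 g\}$ and $L^t := \{i : y_i^t < -c_4 g\}$; the non-smoothness at $\pm c_4 g$ is handled by a short case split. Next, for the \TwoChoice vector $p$, I would show a drop bound of the form $h_\Lambda(y^t) + \sum_i p_i f_\Lambda(y_i^t) \leq -\frac{C_1 \alpha}{n}\Lambda^t + O(\alpha)$. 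The role of the good-step condition is crucial here: Markov applied to $\Delta^t \leq Dng$ forces $|\{i : |y_i^t| \geq \tfrac{3}{2}Dg\}| \leq \tfrac{2n}{3}$, and since $c_4 g = 2Dg$, this yields $|H^t| + |L^t| \leq \tfrac{2n}{3}$. Hence every $i \in H^t$ satisfies $p_i \leq \tfrac{4}{3n}$, with the heaviest bins having much smaller probability, and a Chebyshev-style sum inequality (using that $e^{\alpha(y_i^t - c_4 g)}$ is decreasing in $i$ while $p_i$ is increasing) together with the second-order Taylor terms from PTW produces the required negative drift.

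Second, I would bound the adversarial perturbation exactly as in \cref{lem:gamma_reallocation}. Writing $q^t = p + \sum_{(i,j) \in R^t} \gamma_{i,j}^t (\mathbf{e}_i - \mathbf{e}_j) + \text{(equal-load swaps)}$ with $\gamma_{i,j}^t \in [0, 2/n^2]$, the perturbation adds $\sum_{(i,j) \in R^t} \gamma_{i,j}^t \cdot (f_\Lambda(y_i^t) - f_\Lambda(y_j^t))$ to the drift. A case analysis on whether $y_i^t, y_j^t$ lie above or below the offset $c_4 g$ bounds this by $\frac{C_2 \alpha}{n} \Lambda^t + O(\alpha)$: pairs with both bins in the middle band contribute zero since $f_\Lambda$ vanishes there, for pairs straddling the offset only $f_\Lambda(y_i^t)$ contributes, and for pairs both in $H^t$ the bound $f_\Lambda(y_i^t) - f_\Lambda(y_j^t) \leq f_\Lambda(y_i^t)$ together with $\gamma_{i,j}^t \leq 2/n^2$ summed over at most $n$ partners $j$ per $i$ gives the stated bound.

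Combining these two pieces yields $\Ex{\Delta \Lambda^{t+1} \mid y^t, \Delta^t \leq Dng} \leq -\frac{(C_1 - C_2)\alpha}{n}\Lambda^t + O(\alpha)$, and identifying $\eps = \tfrac{1}{12} = (C_1 - C_2)/2$ with the additive constant bounded by $18\alpha$ finishes the proof after rearrangement. I expect the main obstacle to lie in step~1, the \TwoChoice drop for the offset potential: the Chebyshev comparison by itself only produces a drop proportional to $(1 - |H^t|/n)/n$, which is not obviously larger than the perturbation from step~2, so the constants $D = 365$ and $c_4 = 2D = 730$ must be chosen large enough that the offset absorbs the second-order Taylor terms and the adversarial contribution simultaneously; getting $C_1 - C_2 \geq 2\eps$ with $\eps = 1/12$ is what drives the specific numerical choices of $D$ and $c_4$.
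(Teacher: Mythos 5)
Your plan diverges from the paper's proof in a way that creates a real gap, and you acknowledge it yourself in your last paragraph but do not resolve it: you cannot cleanly separate the \TwoChoice drop from the adversarial perturbation for the offset potential $\Lambda$ with a \emph{constant} smoothing parameter. The reallocation bound you want to adapt from \cref{lem:gamma_reallocation} hinges on $\gamma \cdot g$ being a small constant (that is what makes $1 - e^{-\gamma g}$ tiny); here $\alpha = 1/18$ is constant and $g$ can be as large as $\log n$, so $\alpha g \gg 1$ and $e^{\alpha g} - 1$ is huge. Concretely, for a bin $i$ with $y_i^t \geq c_4 g + 2$ your perturbation bound is of order $\frac{2}{n^2}\cdot|\{j : (i,j) \in R^t\}|\cdot f_\Lambda(y_i^t) \approx \frac{2\alpha}{3n}\Lambda_i^t$, while the \TwoChoice drop you get from $p_i \leq \frac{2}{3n}$ is only $\approx \frac{\alpha}{3n}\Lambda_i^t$. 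The perturbation term can exceed the drop term, and there is no choice of the constants $D$ and $c_4$ that fixes this, because both terms scale identically in $\alpha/n$ and the overhead of the decomposition is a genuine factor-of-two loss.

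The paper avoids the decomposition entirely in this lemma (it is only used for the warm-up in \cref{thm:g_adv_warm_up_gap}). The proof of \cref{lem:g_adv_good_step_drop} is a direct per-bin case analysis on the \emph{actual} allocation probability $q_i^t$ under the adversary, split by whether $y_i^t$ is in the middle band, far above, or far below the offset. The key observation for the far-above case is a joint one that you cannot recover from $p_i$ and the perturbation separately: a ball can land on a bin $i$ with $y_i^t \geq c_4 g + 2 = 2Dg + 2$ only if the \emph{other} sampled bin $j$ has $y_j^t \geq \frac{3}{2}Dg$ — either $j$ is heavier than $i$ (so \TwoChoice picks $i$) or $|y_i^t - y_j^t| \leq g$ (so the adversary can pick $i$), and in both cases $y_j^t \geq y_i^t - g \geq \frac{3}{2}Dg$. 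Since the good-step condition gives $|\{j : y_j^t \geq \frac{3}{2}Dg\}| \leq n/3$, this yields $q_i^t \leq \frac{2}{3n}$ directly. Your decomposition splits this into two subsets of the same $\leq n/3$-sized set and bounds each by $\frac{2}{3n}$, double-counting and losing the drop. Once $q_i^t \leq \frac{2}{3n}$ is in hand, the remaining work is a second-order Taylor expansion of $e^{\pm\alpha(1-1/n)}$ per bin, with no sum-exchange or Chebyshev argument needed; the middle band is handled by the coarse bound from \cref{lem:g_adv_bad_step_increase} because there $\Lambda_i^t = O(1)$.
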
 }
	\addtocounter{lem}{-1}

In order to show that there are many steps with $\Delta^t \leq Dng$, we relate the expected change of the \textit{quadratic potential} $\Upsilon^t := \sum_{i =1}^n (y_i^t)^2$ to $\Delta^t$. More specifically, by upper bounding the contribution of each pair $(i, j) \in R^t$ (defined in \cref{def:rt_definition}), similarly to the proof of \cref{thm:g_adv_warm_up_gap}, we show that:

\newcommand{\GAdvQuadratic}{
Consider the \GAdvComp setting for any $g \geq 1$. Then, for any step $t \geq 0$,
\[
\Ex{\left. \Delta\Upsilon^{t+1} \,\right\vert\, y^t} \leq -\frac{\Delta^t}{n} + 2g + 1.
\]
}

{\renewcommand{\thelem}{\ref{lem:g_adv_quadratic}}
	\begin{lem}[\textbf{Restated, page~\pageref{lem:g_adv_quadratic}}]
\GAdvQuadratic
	\end{lem}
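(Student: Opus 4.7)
The plan is to compute the one-step expected change of $\Upsilon$ directly, decompose the probability allocation vector $q^t$ of \GAdvComp as the \TwoChoice vector $p$ plus adversarial shifts (exactly as in the proof of \cref{thm:g_adv_warm_up_gap}), and bound each contribution separately. The crux will be a combinatorial inequality relating the drift of \TwoChoice without noise to the absolute-value potential $\Delta^t$.

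First, I compute the deterministic change in $\Upsilon$ conditional on a ball being placed into the rank-$k$ bin. Since $y_k^{t+1} = y_k^t + 1 - 1/n$ and $y_i^{t+1} = y_i^t - 1/n$ for $i \neq k$, a short expansion that uses $\sum_{i \neq k} y_i^t = -y_k^t$ collapses the cross terms and yields $\Delta\Upsilon^{t+1} = 2 y_k^t + 1 - 1/n$. Taking expectation over the (adversary-determined) choice of $k$, whose distribution is $q^t$, gives
\[
\Ex{\left.\Delta\Upsilon^{t+1}\,\right|\,y^t} = 2 \sum_{i=1}^n q_i^t \cdot y_i^t + 1 - \frac{1}{n}.
\]
I then write $q^t = p + \sum_{(i,j) \in R^t} (\mathbf{e}_i - \mathbf{e}_j) \cdot \gamma_{i,j}^t + (\text{equal-load shifts})$ with each $\gamma_{i,j}^t \in [0, 2/n^2]$. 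Equal-load shifts contribute $0$ to $\sum_i q_i^t y_i^t$. For $(i,j) \in R^t$ we have $0 < y_i^t - y_j^t \leq g$, and since each unordered pair contributes at most one ordered pair to $R^t$, the adversarial correction to $\sum_i q_i^t y_i^t$ satisfies
\[
\sum_{(i,j) \in R^t} \gamma_{i,j}^t (y_i^t - y_j^t) \leq g \cdot \binom{n}{2} \cdot \frac{2}{n^2} \leq g.
\]

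The heart of the argument is to establish $2\sum_{i=1}^n p_i y_i^t \leq -\Delta^t/n$ for $p_i = (2i-1)/n^2$. Using $\sum_i y_i^t = 0$, I rewrite $\sum_i p_i y_i^t = \frac{2}{n^2} \sum_i i \cdot y_i^t$, and Abel summation gives $\sum_i i \cdot y_i^t = -\sum_{k=1}^{n-1} S_k^t$, where $S_k^t := \sum_{i=1}^k y_i^t$. Because $y^t$ is sorted non-increasingly, the sequence $(S_k^t)_{k=0}^n$ is concave in $k$ with $S_0^t = S_n^t = 0$ and maximum value $S_{k^*}^t = \Delta^t/2$, where $k^*$ is the boundary between over- and underloaded bins. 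By concavity $S_k^t$ dominates the piecewise-linear ``tent'' function joining $(0, 0)$, $(k^*, \Delta^t/2)$, and $(n, 0)$, whose sum over $k \in \{1, \ldots, n-1\}$ equals $n \Delta^t / 4$ regardless of $k^*$. Hence $\sum_{k=1}^{n-1} S_k^t \geq n \Delta^t / 4$, and therefore $2 \sum_i p_i y_i^t \leq -\Delta^t/n$.

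Putting the three pieces together yields $\Ex{\left.\Delta\Upsilon^{t+1}\,\right|\, y^t} \leq -\Delta^t/n + 2g + 1 - 1/n \leq -\Delta^t/n + 2g + 1$, as claimed. The main obstacle is the prefix-sum step: although $p$ is not symmetric around the median rank, its drift against any sorted zero-mean vector is exactly proportional to the $\ell_1$-norm $\Delta^t$, and the concavity/tent-function argument is the cleanest route I see to obtaining this proportionality with the correct constant $1/n$ (which is what then feeds into the later interplay with $\Lambda^t$).
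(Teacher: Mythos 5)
Your proposal is correct and gives a complete argument. The overall structure—compute $\Ex{\Delta\Upsilon^{t+1}\mid y^t}=2\sum_i q_i^t y_i^t + 1 - 1/n$, decompose $q^t=p+\text{(shifts along }R^t)$, bound the adversarial correction by $2g$ via $|R^t|<n^2/2$—matches the paper's proof exactly. Where you diverge is in establishing the key drift inequality $2\sum_i p_i y_i^t \leq -\Delta^t/n$ for \TwoChoice. The paper (\cref{lem:quadratic_two_choice}) splits the sum over overloaded and underloaded bins, replaces $p_i$ by the block-averaged probabilities $p_+^t=|B_+^t|/n^2$ and $p_-^t=1/n+|B_+^t|/n^2$, and invokes the majorization inequality of \cref{lem:quasilem}; the two blocks then combine to exactly $-\Delta^t/n$. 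You instead use $\sum_i y_i^t=0$ to write $\sum_i p_i y_i^t = \frac{2}{n^2}\sum_i i\, y_i^t$, Abel-sum to $-\frac{2}{n^2}\sum_{k=1}^{n-1}S_k^t$, and exploit concavity of the partial sums $S_k^t$ (since $y^t$ is sorted non-increasingly) to lower bound $\sum_{k=1}^{n-1} S_k^t$ by the sum of the tent function through $(0,0)$, $(k^*,\Delta^t/2)$, $(n,0)$, which equals $n\Delta^t/4$ regardless of $k^*$. Both routes use the same two structural facts—$p$ non-decreasing, $y^t$ sorted—and land on the same constant, but your version makes the role of concavity explicit and gets the exact tent sum in closed form, whereas the paper's is a direct two-block application of a general rearrangement lemma. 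Either is a fine proof; yours is arguably more transparent about why the constant is precisely $1/n$, at the cost of the Abel-summation bookkeeping.
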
 }
	\addtocounter{lem}{-1}
This allows us to prove that in any sufficiently long interval, there is a large constant fraction of good steps, i.e., with $\Delta^t \leq Dng$ (\cref{sec:g_adv_many_good_rounds}). Thus using an \textit{adjusted exponential function} based on $\Lambda$ (see~\cref{sec:g_adv_adjusted_exp_potential} for the definition), we show that \Whp~$\Lambda^s = \Oh(n)$ for some $s \in [m - \Theta(ng \cdot (\log(ng))^2), m]$ (\textit{recovery phase}) and then \Whp~$\Lambda^t = \Oh(n)$ once every $\Oh(n \cdot (g + \log n))$ steps (\textit{stabilization phase}). This implies that \Whp~$\Gap(m) = \Oh(g + \log n)$.
In more detail:
\begin{enumerate}
  \item (\textbf{Base case}) For any $t \geq 0$, \Whp~$\Gap(t) = \Oh(g \log (ng))$. (This follows from~\cref{thm:g_adv_warm_up_gap})
  \item (\textbf{Recovery}) Starting with $\Gap(t_0) = \Oh(g \log (ng))$ for $t_0 = m - \Theta(ng \cdot (\log(ng))^2)$, \Whp~there exists a step $t_1 \in [t_0, m]$ such that $\Lambda^{t_1} = \Oh(n)$ (\cref{lem:g_adv_recovery}).
  \item (\textbf{Stabilization}) Starting with $\Lambda^{t_1} = \Oh(n)$, \Whp~there exists a step $s \in (t_1, t_1 + \Theta(n \cdot (g + \log n))]$ such that $\Lambda^s = \Oh(n)$ (\cref{lem:g_adv_stabilization}).
  \item (\textbf{Gap deduction}) Hence \Whp~there is a step $t \in [m, m + \Theta(n \cdot (g + \log n))]$ with $\Gap(t) = \Oh(g + \log n)$, which by smoothness of the gap, implies $\Gap(m) = \Oh(g + \log n)$ (\cref{sec:completing_the_proof_logn_plus_g}).
\end{enumerate}

\subsection{Absolute Value and Quadratic Potentials} %

Recall that the \textit{absolute value potential} is defined as
\begin{align} \label{eq:abs_def}
\Delta^t := \sum_{i = 1}^n \left| y_i^t \right|,
\end{align}
and the \textit{quadratic potential} is defined as \begin{align} \label{eq:quad_def}
\Upsilon^t := \sum_{i = 1}^n (y_i^t)^2.
\end{align}
We will upper bound the expected change $\Ex{\left.\Delta\Upsilon^{t+1} \,\right|\, y^t}$ in the \GAdvComp setting by relating it to the change of the quadratic potential for \TwoChoice, starting with the same load vector $y^t$ at step $t$.

We will first analyze the expected change of the quadratic potential for the \TwoChoice process without noise. We will make use of the following general lemma (proven in \cref{lem:g_adv_general_quadratic_proof}), which provides a formula for the change of the quadratic potential:

\newcommand{\GAdvGeneralQuadratic}{
Consider any allocation process with probability allocation vector $r^t$ at step $t \geq 0$. Then, for any step $t \geq 0$, $(i)$ it holds that
\[
\Ex{\left. \Delta\Upsilon^{t+1} \,\right\vert\, y^t} = \sum_{i = 1}^n 2 \cdot r_i^t \cdot y_i^t + 1 - \frac{1}{n} \leq \sum_{i = 1}^n 2 \cdot r_i^t \cdot y_i^t
+ 1,
\]
and $(ii)$ it holds that
\[
\left|\Delta\Upsilon^{t+1}\right| \leq 4 \cdot \max_{i \in [n]} \left| y_i^t \right| + 2.
\]
}
\begin{lem} \label{lem:g_adv_general_quadratic} 
\GAdvGeneralQuadratic
\end{lem}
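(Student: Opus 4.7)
The plan is to compute $\Delta\Upsilon^{t+1}$ exactly, conditional on which bin receives the ball, and then take expectation over the random choice prescribed by $r^t$. The key symmetry to exploit is that $\Upsilon^t = \sum_i (y_i^t)^2$ is invariant under permutations of the bin labels, so although $y^t$ is defined in sorted order we may freely compute the change as if the relevant bin keeps its index at step $t+1$.

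First, I would condition on the event that the ball is placed in the $k$-th most loaded bin, whose normalized load at step $t$ is $y_k^t$. In that case, since the total mass increases by $1$, the new normalized loads satisfy $\tilde y_k^{t+1} = y_k^t + 1 - 1/n$ and $\tilde y_j^{t+1} = y_j^t - 1/n$ for every $j \neq k$ (where $\tilde y^{t+1}$ is the unsorted vector, whose squared sum equals $\Upsilon^{t+1}$). Expanding the squares and using $\sum_{j=1}^n y_j^t = 0$ (since $\sum_j x_j^t = t$), I expect all the terms linear in $y_j^t$ for $j \neq k$ to telescope against the contribution from $k$, leaving
\[
\Upsilon^{t+1} - \Upsilon^t \;=\; 2 y_k^t + \Bigl(1-\tfrac{1}{n}\Bigr)^2 + \tfrac{n-1}{n^2} \;=\; 2 y_k^t + 1 - \tfrac{1}{n}.
\]
Taking expectation with respect to $r^t$ (and using $\sum_i r_i^t = 1$) then yields
\[
\Ex{\Delta\Upsilon^{t+1} \mid y^t} \;=\; \sum_{i=1}^n 2\, r_i^t\, y_i^t + 1 - \tfrac{1}{n},
\]
which is part $(i)$; the upper bound of $\sum_i 2 r_i^t y_i^t + 1$ is immediate.

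For part $(ii)$, I would use the pointwise identity $\Delta\Upsilon^{t+1} = 2 y_k^t + 1 - 1/n$ established above, valid for whichever bin $k$ actually receives the ball. This gives the pointwise bound $|\Delta\Upsilon^{t+1}| \leq 2 \max_{i \in [n]} |y_i^t| + 1$, which is in fact stronger than the claimed $4 \max_i |y_i^t| + 2$; the weaker stated form is then trivially implied.

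There is no real obstacle here beyond careful bookkeeping between the sorted and unsorted views of the load vector, and one may prefer to handle this at the outset by remarking once that $\Upsilon^t$ is a symmetric function of the coordinates. The cancellation $\sum_{j \neq k} y_j^t = -y_k^t$ is what makes the constants clean; it is worth stating this explicitly so that the reader sees why the change depends only on $y_k^t$ and not on the rest of the load profile.
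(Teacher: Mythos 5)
Your proof is correct, and for part $(i)$ it follows essentially the paper's route: condition on where the ball lands, use $\sum_j y_j^t = 0$, and the cross terms cancel. The paper organizes this per bin $i$ (summing $\Ex{\Upsilon_i^{t+1} \mid y^t}$ over $i$) whereas you condition on the event that bin $k$ receives the ball and get the pointwise identity $\Delta\Upsilon^{t+1} = 2 y_k^t + 1 - 1/n$ before averaging; the two are algebraically the same computation. Where your argument genuinely diverges from the paper is part $(ii)$. The paper bounds $|\Delta\Upsilon^{t+1}| \leq \sum_i |\Delta\Upsilon_i^{t+1}|$ by the triangle inequality and then does a case analysis (ball in bin $i$ gives $2M+1$, ball elsewhere gives $2M/n + 1/n$ per bin), arriving at $4M + 2$ with $M := \max_i |y_i^t|$. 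That route throws away exactly the cancellation $\sum_{j \neq k} y_j^t = -y_k^t$ that you already identified as the crux of part $(i)$; by reusing the pointwise identity instead, you get the sharper $|\Delta\Upsilon^{t+1}| \leq 2M + 1 - 1/n \leq 2M + 1$, which immediately implies the stated $4M + 2$. Both bounds suffice for the downstream applications (they feed a martingale bounded-difference argument where the constant only affects constants in the final gap bound), so the paper's looser estimate costs nothing, but your version is cleaner and would be the one I would write. One small stylistic point: it is worth spelling out, as you flagged, that $\Upsilon$ is a symmetric function of the coordinates so the sorted-versus-unsorted distinction is immaterial; the paper glosses over this as well.
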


We now use the general formula in \cref{lem:g_adv_general_quadratic}~$(i)$ to  obtain an expression for the expected change of the quadratic potential for \TwoChoice without noise.

\begin{lem}[\textbf{cf.~\cite[Lemma 6.2]{LSS21}}] \label{lem:quadratic_two_choice}
Consider the \TwoChoice process without noise with probability allocation vector $p$. Then, it holds that for any step $t \geq 0$,
\[
\Ex{\left. \Delta\Upsilon^{t+1} \,\right\vert\, y^t} 
  \leq \sum_{i = 1}^n 2 \cdot p_i \cdot y_i^t + 1 \leq - \frac{\Delta^t}{n} + 1.
\]
\end{lem}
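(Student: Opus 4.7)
My plan is to reduce the statement to a deterministic inequality relating $\sum_i 2 p_i y_i^t$ to $\Delta^t$. The first inequality in the display is immediate from \cref{lem:g_adv_general_quadratic}~$(i)$ applied with $r^t = p$ (discarding the $-1/n$ term), so the real work lies in establishing $\sum_{i=1}^n 2 p_i y_i^t \leq -\Delta^t/n$ for the specific \TwoChoice vector $p_i = (2i-1)/n^2$.

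The key observation I would use is a probabilistic interpretation of $p$: if $r_1, r_2$ are sampled independently and uniformly from $[n]$, then $\Pro{\max(r_1,r_2) = i} = (i^2 - (i-1)^2)/n^2 = (2i-1)/n^2 = p_i$. Since $y^t$ is sorted non-increasingly, the bin with the larger index among $\{r_1, r_2\}$ is the less loaded of the two samples, which matches the \TwoChoice allocation rule. Hence, setting $Y_k := y_{r_k}^t$,
\begin{align*}
\sum_{i=1}^n 2 p_i y_i^t
 &= 2\, \Ex{y_{\max(r_1,r_2)}^t}
  = 2\, \Ex{\min(Y_1, Y_2)} \\
 &= \Ex{Y_1 + Y_2} - \Ex{|Y_1 - Y_2|}
  = -\Ex{|Y_1 - Y_2|},
\end{align*}
where the last equality uses $\Ex{Y_k} = \tfrac{1}{n}\sum_{i=1}^n y_i^t = 0$ together with the identity $2\min(a,b) = a+b-|a-b|$.

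It then remains to show $\Ex{|Y_1 - Y_2|} \geq \Delta^t/n$, which is a short consequence of the triangle inequality combined with the zero-mean property. For every fixed $i \in [n]$,
\[
\sum_{j=1}^n |y_i^t - y_j^t| \;\geq\; \Big| \sum_{j=1}^n (y_i^t - y_j^t) \Big| \;=\; |n \cdot y_i^t - 0| \;=\; n \cdot |y_i^t|.
\]
Summing over $i \in [n]$ yields $\sum_{i,j \in [n]} |y_i^t - y_j^t| \geq n \cdot \Delta^t$, and dividing by $n^2$ gives $\Ex{|Y_1 - Y_2|} \geq \Delta^t/n$, which completes the proof.

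There is no real obstacle in this plan; the bound essentially falls out once one spots the max-interpretation of $p$. Should a more concrete derivation be preferred, an alternative route is to write $\sum_i 2 p_i y_i^t = \tfrac{4}{n^2}\sum_i i \cdot y_i^t$ (using $\sum_i y_i^t = 0$), apply Abel summation to rewrite this as $-\tfrac{4}{n^2} \sum_{j=1}^{n-1} S_j$ with $S_j := \sum_{i \leq j} y_i^t$, and then use concavity of $j \mapsto S_j$ (which holds since $y^t$ is sorted non-increasingly with $S_0 = S_n = 0$) to lower bound $\sum_{j=1}^{n-1} S_j \geq n\Delta^t/4$ via chords. The probabilistic argument above avoids this extra computation.
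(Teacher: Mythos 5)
Your proof is correct, and it takes a genuinely different route from the paper's. The paper's proof splits the sum $\sum_i 2p_i y_i^t$ into overloaded bins $B_+^t$ and underloaded bins $B_-^t$, computes the average allocation probability $p_+^t = |B_+^t|/n^2$ and $p_-^t = 1/n + |B_+^t|/n^2$ on each side, and invokes a majorization/rearrangement inequality (\cref{lem:quasilem}) to replace $p_i$ by its per-side average; the two contributions then combine to $-\Delta^t/n$ since $\sum_{i\in B_+^t} y_i^t = \Delta^t/2$. You instead exploit the probabilistic identity $p_i = \Pro{\max(r_1,r_2)=i}$, apply $2\min(a,b)=a+b-|a-b|$, use the zero-mean property of the normalized load to kill the $\Ex{Y_1+Y_2}$ term, and then lower-bound $\Ex{|Y_1-Y_2|}$ by $\Delta^t/n$ via the triangle inequality. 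Both proofs are short and correct. Your route has the advantage of being fully self-contained -- it does not require the quasi-majorization lemma nor the $B_+^t/B_-^t$ partition bookkeeping, and the identity $\sum_i 2p_iy_i^t=-\Ex{|Y_1-Y_2|}$ is conceptually transparent (the drift is literally the negative expected load gap between two random samples). The paper's route, while slightly heavier in notation, fits the paper's uniform style of overloaded/underloaded decomposition and the same \cref{lem:quasilem} tool is reused elsewhere, so it economizes on the lemma inventory within the paper.
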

\begin{proof}
Applying \cref{lem:g_adv_general_quadratic}~$(i)$ to the probability allocation vector $p$ yields\[
\Ex{\left.\Delta\Upsilon^{t+1} \,\right\vert\, y^t} \leq \sum_{i = 1}^n 2 \cdot p_i \cdot y_i^t + 1.
\]
Recall that $B_+^t := \{ i \in [n] \colon y_i^t \geq 0 \}$ is the set of overloaded bins at step $t$ and $B_-^t := \{ i \in [n] \colon y_i^t< 0  \}$, the set of underloaded bins. The \TwoChoice process allocates a ball into the set of overloaded bins with probability $|B_+^t|^2/n^2$, and thus the average allocation probability across overloaded bins is $p_{+}^t = |B_+^t|/n^2$. Consequently, \TwoChoice allocates to the set of underloaded bins with probability $1 - |B_+^t|^2/n^2$, and thus the average allocation probability across underloaded bins is 
\[
 p_-^t = \frac{1}{|B_-^t|} \cdot \left( 1 - \frac{|B_+^t|^2}{n^2} \right) =  \frac{1}{n - |B_+^t|} \cdot \frac{(n+|B_+^t|) \cdot (n-|B_+^t|)}{n^2}
 = \frac{1}{n} + \frac{|B_+^t|}{n^2}.
\]
By splitting the sum $\sum_{i = 1}^n 2 \cdot p_i \cdot y_i^t$ into underloaded and overloaded bins, we get
\[
\sum_{i = 1}^n 2 \cdot p_i \cdot y_i^t = \sum_{i \in B_+^t} 2 \cdot p_i \cdot y_i^t + \sum_{i \in B_-^t} 2 \cdot p_i \cdot y_i^t.
\]
Since $p_i$ is non-decreasing, we have $\sum_{i=1}^j p_i \leq \sum_{i=1}^j p_{+}^t$ for all $1 \leq j \leq |B_{+}^t|$. Further, since $y_i^t$ is non-increasing over the overloaded bins, by \cref{lem:quasilem} we have
\[
\sum_{i \in B_+^t} 2 \cdot p_i \cdot y_i^t \leq \sum_{i \in B_+^t} 2 \cdot p_+^t \cdot y_i^t = 2 \cdot p_+^t \cdot \sum_{i \in B_+^t} y_i^t = \frac{|B_+^t|}{n^2} \cdot \Delta^t,
\]
since $\sum_{i \in B_-^t} y_i^t = -\sum_{i \in B_+^t} y_i^t$ and thus $\sum_{i \in B_+^t} y_i^t = \frac{1}{2} \Delta^t$. Analogously, since $y_i^t$ is non-increasing over the underloaded bins,
\[
\sum_{i \in B_-^t} 2 \cdot p_i \cdot y_i^t \leq \sum_{i \in B_-^t} 2 \cdot p_-^t \cdot y_i^t = 2 \cdot p_-^t \cdot \sum_{i \in B_-^t} y_i^t =  \left( \frac{1}{n} + \frac{|B_+^t|}{n^2} \right) \cdot \Delta^t.
\]
Combining these we get
\[
\Ex{\left.\Delta\Upsilon^{t+1} \,\right|\, y^t} \leq \sum_{i = 1}^n 2 \cdot p_i \cdot y_i^t + 1
  \leq  \frac{|B_+^t|}{n^2} \cdot \Delta^t -  \left( \frac{1}{n} + \frac{|B_+^t|}{n^2} \right) \cdot \Delta^t + 1 = - \frac{\Delta^t}{n} + 1. \qedhere
\]
\end{proof}

Now we relate the change of the quadratic potential for the \GAdvComp setting to the change of the quadratic potential for \TwoChoice without noise, using that the adversary can determine (and possibly revert) a load comparison between $y_i^t$ and $y_j^t$ only if $|y_i^t - y_j^t| \leq g$. %

\begin{lem}\label{lem:g_adv_quadratic}
\GAdvQuadratic
\end{lem}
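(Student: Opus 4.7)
The plan is to reduce the computation to the analogous one for \TwoChoice without noise and then absorb the adversarial perturbation of the probability allocation vector as an additive error. First, I would apply \cref{lem:g_adv_general_quadratic}$(i)$ to the probability allocation vector $q^t$ of \GAdvComp, obtaining
\[
\Ex{\left.\Delta\Upsilon^{t+1}\,\right|\,y^t} \leq \sum_{i=1}^n 2 \cdot q_i^t \cdot y_i^t + 1.
\]

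Next, I would invoke the decomposition introduced in the proof of \cref{thm:g_adv_warm_up_gap}, namely
\[
q^t = p + \sum_{(i,j) \in R^t} (\mathbf{e}_i - \mathbf{e}_j) \cdot \gamma_{i,j}^t + \sum_{(i,j)\colon y_i^t = y_j^t} (\mathbf{e}_i - \mathbf{e}_j) \cdot \gamma_{i,j}^t,
\]
with each $\gamma_{i,j}^t \in [0, 2/n^2]$. The equal-load pairs contribute nothing to $\sum_i q_i^t y_i^t$ since $y_i^t - y_j^t = 0$ there, so after substitution,
\[
\sum_{i=1}^n 2 \cdot q_i^t \cdot y_i^t = \sum_{i=1}^n 2 \cdot p_i \cdot y_i^t + 2 \sum_{(i,j) \in R^t} \gamma_{i,j}^t \cdot (y_i^t - y_j^t).
\]

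For the \TwoChoice term, \cref{lem:quadratic_two_choice} immediately gives $\sum_{i=1}^n 2 \cdot p_i \cdot y_i^t \leq -\Delta^t/n$. For the adversarial term, the definition of $R^t$ forces $0 < y_i^t - y_j^t \leq g$, and the strict inequality $y_j^t < y_i^t$ implies that for any unordered pair $\{i,j\}$ at most one of $(i,j),(j,i)$ lies in $R^t$, hence $|R^t| \leq \binom{n}{2} \leq n^2/2$. Combined with $\gamma_{i,j}^t \leq 2/n^2$, this gives
\[
2 \sum_{(i,j) \in R^t} \gamma_{i,j}^t \cdot (y_i^t - y_j^t) \leq 2 \cdot \frac{n^2}{2} \cdot \frac{2}{n^2} \cdot g = 2g.
\]
Adding the residual $+1$ from \cref{lem:g_adv_general_quadratic}$(i)$ yields $-\Delta^t/n + 2g + 1$, as claimed.

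The only delicate point is the bookkeeping of the adversarial perturbation: one must notice (i) that the equal-load swaps contribute nothing to $\sum_i q_i^t y_i^t$, (ii) that the antisymmetry of the strict-inequality constraint in $R^t$ halves the naive pair count of $n^2$, and (iii) that the per-pair slack $y_i^t - y_j^t$ is capped by $g$. Once these observations are in place, the result is a direct substitution into the bound from \cref{lem:quadratic_two_choice}, and no further potential-function machinery is required.
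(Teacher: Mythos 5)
Your proof is correct and takes essentially the same approach as the paper: apply \cref{lem:g_adv_general_quadratic}~$(i)$ with the noisy allocation vector $q^t$, decompose $q^t$ as $p$ plus shifts of mass at most $\frac{2}{n^2}$ along pairs in $R^t$, bound the resulting extra term by $2g$ using $|R^t| \leq n^2/2$ and $y_i^t - y_j^t \leq g$, and finish with \cref{lem:quadratic_two_choice}. Your explicit handling of the equal-load pairs (which the paper elides since their contribution is trivially zero) is a harmless elaboration, not a deviation.
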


\begin{proof}
By \cref{lem:g_adv_general_quadratic}~$(i)$, for the \GAdvComp probability allocation vector $q^t$  we have,
\[
\Ex{\left.\Delta\Upsilon^{t+1} \,\right\vert\, y^t} \leq \sum_{i = 1}^n 2 \cdot q_i^t \cdot y_i^t+ 1.
\]
This probability vector $q^t$ is obtained from the probability allocation vector $p$ of \TwoChoice without noise by moving a probability of up to $\frac{2}{n^2}$ from any bin $j$ to a bin $i$ with $y_j^t < y_i^t \leq y_j^t + g$. Recalling that $R^t:=\{ (i, j) \in [n] \times [n] \colon y_j^t < y_i^t \leq y_j^t + g  \}$,
\begin{align*}
\Ex{\left. \Delta\Upsilon^{t+1} \,\right\vert\, y^t} 
 & \leq \sum_{i = 1}^n 2 \cdot p_i \cdot y_i^t + 1 + 2 \cdot \sum_{(i, j) \in R^t} \frac{2}{n^2} \cdot (y_i^t - y_j^t) \\
 & \leq \sum_{i = 1}^n 2 \cdot p_i \cdot y_i^t + 1 + 2 \cdot \sum_{(i, j) \in R^t} \frac{2}{n^2} \cdot g \\
 & \leq \sum_{i = 1}^n 2 \cdot p_i \cdot y_i^t + 1 + 2g,
\end{align*}
using that $|R^t| < \frac{1}{2} n^2$. Hence, using \cref{lem:quadratic_two_choice}, we conclude that\[
\Ex{\left. \Delta\Upsilon^{t+1} \,\right\vert\, y^t} \leq -\frac{\Delta^t}{n} + 2g + 1. \qedhere
\]
\end{proof}

\subsection{Constant Fraction of Good Steps} \label{sec:g_adv_many_good_rounds}

We define a step $s \geq 0$ to be a \textit{good step} if $\mathcal{G}^s := \{ \Delta^s \leq Dng \}$ holds, for $D := 365$. Further, $G_{t_0}^{t_1} := G_{t_0}^{t_1}(D)$ denotes the number of good steps in $[t_0, t_1]$. Later, in \cref{sec:g_adv_exp_potential} we will show that in a good step, the exponential potential $\Lambda$ with any sufficiently small constant $\alpha$ drops in expectation.

In the following lemma we show that at least a constant fraction $r$ of the steps are good in a sufficiently long interval. We will apply this lemma with two different values for $T$: $(i)$ in the recovery phase, to prove that there exists a step $s \in [m-\Theta(ng \cdot (\log(ng))^2), m]$ with $\Lambda^s = \Oh(n)$ and $(ii)$ in the stabilization phase, to prove that every $\Oh(n \cdot (g + \log n))$ steps there exists a step $s$ with $\Lambda^s = \Oh(n)$. In the analysis below, we pick $r := \frac{6}{6+\eps}$, where $\eps:=\frac{1}{12}$. %

\begin{lem} \label{lem:g_adv_many_good_steps}
Consider the \GAdvComp setting for any $g \geq 1$ and let $r := \frac{6}{6+\eps}$ and $D := 365$. Then, for any constant $\hat{c} \geq 1$ and any $T \in [ng^2, n^2 g^3/\hat{c}]$, we have for any steps $t_0 \geq 0$ and $t_1 := t_0 + \hat{c} \cdot T \cdot g^{-1} - 1$,
\[
\Pro{ G_{t_0}^{t_1}(D) \geq r\cdot (t_1 - t_0 + 1) \,\left|\, \mathfrak{F}^{t_0}, \Upsilon^{t_0} \leq T, \, \max_{i \in [n]} \left| y_i^{t_0} \right| \leq g (\log(ng))^2 \right.} \geq 1 - 2\cdot (ng)^{-12}.
\]
\end{lem}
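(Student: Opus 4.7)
\textbf{Proof Plan for Lemma~\ref{lem:g_adv_many_good_steps}.}

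The high-level idea is to use the quadratic potential $\Upsilon^t = \sum_i (y_i^t)^2$ as a progress measure. By Lemma~\ref{lem:g_adv_quadratic} we have $\Ex{\Delta\Upsilon^{t+1}\mid y^t}\leq -\Delta^t/n + 2g+1$, so on a \emph{bad} step (one with $\Delta^t > Dng$) the potential drifts down in expectation by at least $Dg - 2g - 1$. Since $\Upsilon^t\geq 0$, telescoping the drift gives $\Ex{B\mid\mathfrak{F}^{t_0}} \leq (\Upsilon^{t_0} + (2g+1)(t_1-t_0+1))/(Dg)$, where $B := (t_1-t_0+1) - G_{t_0}^{t_1}$. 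Substituting $\Upsilon^{t_0}\leq T$ and $t_1 - t_0 + 1 = \hat{c}T/g$, and using $g\geq 1$ and $\hat{c}\geq 1$, this yields $\Ex{B\mid\mathfrak{F}^{t_0}} \leq 4\hat{c}T/(Dg)$. With $D=365$ and target $(1-r)(t_1-t_0+1) = \hat{c}T/(73g)$, the expectation is already a factor $\approx 0.8$ below the required threshold, so it remains to establish concentration of $B$ around its expectation.

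First I would control the per-step fluctuation of $\Upsilon^t$ over the entire interval. By Corollary~\ref{cor:g_adv_gap_remains_small} applied at each $t\in[t_0,t_1]$ and a union bound, the event $\mathcal{E} := \bigcap_{t\in[t_0,t_1]}\{\max_i |y_i^t|\leq 2g(\log(ng))^2\}$ holds with probability at least $1 - (t_1-t_0+1)(ng)^{-14}$. Using $t_1-t_0+1 = \hat{c}T/g \leq n^2 g^2$ (from $T\leq n^2g^3/\hat{c}$), this is at least $1 - (ng)^{-12}$. On $\mathcal{E}$, Lemma~\ref{lem:g_adv_general_quadratic}$(ii)$ gives $|\Delta\Upsilon^{t+1}|\leq 8g(\log(ng))^2 + 2 = O(g(\log(ng))^2)$ for all $t$ in the interval.

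Next I would set up a standard martingale-concentration argument. Define the martingale
\[
M^t := \Upsilon^t - \sum_{s=t_0}^{t-1}\Ex{\Delta\Upsilon^{s+1}\mid \mathfrak{F}^s},
\]
so that $M^{t_1+1}-M^{t_0} = \sum_{s=t_0}^{t_1}\bigl(\Delta\Upsilon^{s+1}-\Ex{\Delta\Upsilon^{s+1}\mid\mathfrak{F}^s}\bigr)$. Rearranging and using Lemma~\ref{lem:g_adv_quadratic} together with $\Upsilon^{t_1+1}\geq 0$ and $\sum_s \Delta^s/n \geq Dg\cdot B$, one obtains deterministically
\[
Dg\cdot B \;\leq\; \Upsilon^{t_0} + (2g+1)(t_1-t_0+1) + (M^{t_1+1}-M^{t_0}).
\]
Plugging in the hypotheses, the desired bound $B\leq \hat{c}T/(73g)$ reduces to $M^{t_1+1}-M^{t_0}\leq \hat{c}T$. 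To apply Azuma-Hoeffding with a uniform step bound, I would first pass to a stopped martingale $\tilde M^t := M^{t\wedge \tau}$ where $\tau$ is the first step in $[t_0,t_1]$ at which the gap condition in $\mathcal{E}$ fails; on $\mathcal{E}$, $\tilde M = M$, and $|\tilde M^{s+1} - \tilde M^s| \leq 2|\Delta\Upsilon^{s+1}| \leq O(g(\log(ng))^2)$ for every $s$. Azuma then gives
\[
\Pro{\tilde M^{t_1+1} - \tilde M^{t_0}\geq \hat{c}T} \;\leq\; \exp\!\left(-\frac{(\hat{c}T)^2}{2\cdot (\hat{c}T/g)\cdot O(g^2(\log(ng))^4)}\right) = \exp\!\left(-\Omega\!\left(\frac{\hat{c}T}{g(\log(ng))^4}\right)\right),
\]
which, using $T\geq ng^2$, is at most $(ng)^{-12}$ for sufficiently large $n$.

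The main obstacle is this concentration step: the natural per-step increment of $\Upsilon^t$ is not a priori bounded, so the stopping-time trick is what lets us invoke Azuma cleanly while deferring the gap-control to Corollary~\ref{cor:g_adv_gap_remains_small}. Combining the two bad events by a final union bound (failure of $\mathcal{E}$ and the Azuma deviation) produces the claimed $1 - 2(ng)^{-12}$ probability. All other ingredients are routine: the deterministic inequality relating $B$ and $M$, and verifying that the constants $D=365$, $\eps=1/12$ and $r=6/(6+\eps)$ make the arithmetic go through with slack at least a factor of two.
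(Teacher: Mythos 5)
Your proposal is correct and takes essentially the same route as the paper's proof. The paper works with the super-martingale $Z^t := \Upsilon^t + \sum_{s=t_0}^{t-1}(\Delta^s/n - 2g - 1)$ and a contradiction argument, while you extract the true Doob martingale $M^t$ and derive the inequality $Dg\cdot B \leq \Upsilon^{t_0} + (2g+1)(t_1-t_0+1) + (M^{t_1+1}-M^{t_0})$ directly; these are cosmetically different packagings of the same telescoping of the quadratic-potential drift from Lemma~\ref{lem:g_adv_quadratic}. Both stop the process at the first step where $\max_i |y_i^t|$ exceeds $2g(\log(ng))^2$, control the probability of stopping early via Corollary~\ref{cor:g_adv_gap_remains_small} and a union bound over the $\hat{c}T/g \leq (ng)^2$ steps, bound the increments of the stopped process by $\Theta(g(\log(ng))^2)$ via Lemma~\ref{lem:g_adv_general_quadratic}$(ii)$, and apply Azuma with $T \geq ng^2$ to get a superpolynomially small tail. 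Your constants check out: with $1-r = 1/73$ and $D = 365$ the target is $Dg\cdot B \leq 5\hat{c}T$, the deterministic terms contribute at most $4\hat{c}T$, and the Azuma threshold $\hat{c}T$ supplies the remaining slack.
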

\begin{proof}
We define the sequence $(Z^t)_{t \geq t_0}$ with $Z^{t_0} := \Upsilon^{t_0}$ and for any $t > t_0$,  
\[
 Z^{t} := \Upsilon^{t} + \sum_{s=t_0}^{t-1} \left(\frac{\Delta^{s}}{n} - 2g - 1\right).
\]
This sequence forms a super-martingale since by~\cref{lem:g_adv_quadratic},\begin{align*}
\Ex{\left. Z^{t+1} \,\right|\, \mathfrak{F}^{t}} 
  & = \Ex{\left. \Upsilon^{t+1} + \sum_{s=t_0}^{t} \left(\frac{\Delta^{s}}{n} - 2g - 1\right) ~\,\right|\,~ \mathfrak{F}^{t}} \\
  & \leq \Upsilon^{t} - \frac{\Delta^{t}}{n} + 2g + 1 + \sum_{s=t_0}^{t} \left(\frac{\Delta^{s}}{n} - 2g - 1\right) \\
  & = \Upsilon^t + \sum_{s=t_0}^{t-1} \left(\frac{\Delta^{s}}{n} - 2g - 1\right) 
  = Z^{t}.
\end{align*}

Further, let $\tau:=\inf\{ t \geq t_0 \colon \max_{i \in [n]} |y_i^{t}| > 2 g (\log(ng))^2 \}$ and consider the stopped random variable
\[
 \tilde{Z}^{t} := Z^{t \wedge \tau},
\]
which is then also a super-martingale. Applying 
\cref{cor:g_adv_gap_remains_small} and the union bound over steps $[t_0, t_1]$, we get%
\begin{align}\label{eq:g_adv_tau_small_whp}
 \Pro{ \tau \leq t_1 \,\left|\, \, \mathfrak{F}^{t_0}, \Upsilon^{t_0} \leq T, \, \max_{i \in [n]} \left| y_i^{t_0} \right| \leq g (\log(ng))^2 \right.} 
  & \leq (\hat{c} \cdot T \cdot g^{-1}) \cdot (ng)^{-14}
  \leq (ng)^{-12},
\end{align}
using that $T \leq n^2g^3/\hat{c}$. This means that the maximum absolute normalized load does not increase above $2 g (\log(ng))^2$ in any of the steps in $[t_0,t_1]$ \Whp

 To prove concentration of $\tilde{Z}^{t_1+1}$, we will now derive an upper bound on the difference $| \tilde{Z}^{t+1} - \tilde{Z}^{t}|$: 
\medskip 

\noindent\textbf{Case 1 [$t \geq \tau$]:}
In this case, $\tilde{Z}^{t+1} = Z^{(t+1) \wedge \tau} = Z^{\tau}$, and similarly, $\tilde{Z}^{t} = Z^{t \wedge \tau} = Z^{\tau}$, so $| \tilde{Z}^{t+1} - \tilde{Z}^{t}|=0$.
\medskip

\noindent\textbf{Case 2 [$t < \tau$]:} In this case, we have $\max_{i \in [n]} |y_i^t| \leq 2 g (\log(ng))^2$ and by \cref{lem:g_adv_general_quadratic}~$(ii)$, we have that $|\Delta\Upsilon^{t+1}| \leq 8c_3 g (\log(ng))^2 + 2$. This implies that
\[
| \tilde{Z}^{t+1} - \tilde{Z}^{t}| \leq |\Delta\Upsilon^{t+1}| + \bigg|\frac{\Delta^t}{n} - 2g - 1\bigg| \leq 8 g (\log(ng))^2 + 2 + (2g (\log(ng))^2 - 2g - 1) \leq 10 g (\log(ng))^2.
\]

\medskip
Combining the two cases above, we conclude that for all $t \geq t_0$,
\[
 | \tilde{Z}^{t+1} - \tilde{Z}^{t}| \leq 10 g (\log(ng))^2.
\]
Using Azuma's inequality for super-martingales (\cref{lem:azuma}) for $\lambda = T$ and $a_i = 10 g (\log(ng))^2$,
\begin{align*}
 & \Pro{ \tilde{Z}^{t_1+1} - \tilde{Z}^{t_0} \geq T \,\left|\, \, \mathfrak{F}^{t_0}, \Upsilon^{t_0} \leq T, \, \max_{i \in [n]} \left| y_i^{t_0} \right| \leq g (\log(ng))^2 \right. } \\
 & \qquad \leq \exp\left( - \frac{ T^2 }{ 2 \cdot \sum_{t=t_0}^{t_1} (10 g (\log(ng))^2)^2  } \right) \\ 
 & \qquad = \exp\left( - \frac{ T^2 }{ \hat{c} \cdot T \cdot g^{-1} \cdot 200 \cdot g^2 \cdot (\log(ng))^4  } \right) \\
 & \qquad = \exp\left( - \frac{ T }{ 200 \cdot \hat{c} \cdot g \cdot (\log(ng))^4  } \right) \\
 & \qquad \stackrel{(a)}{\leq} \exp\left( - \frac{ng^2}{ 200 \cdot \hat{c} \cdot g \cdot (\log(ng))^4  } \right)
  = (ng)^{-\omega(1)},
\end{align*}
where in $(a)$ we used that $T \geq ng^2$.
Hence, we conclude that \[\Pro{ \tilde{Z}^{t_1+1}  < \tilde{Z}^{t_0} + T \,\left|\, \, \mathfrak{F}^{t_0}, \Upsilon^{t_0} \leq T, \, \max_{i \in [n]} \left| y_i^{t_0} \right| \leq g (\log(ng))^2 \right.} \geq 1 - (ng)^{-\omega(1)}.\]
Thus by taking the union bound with \cref{eq:g_adv_tau_small_whp} we have  \[\Pro{ Z^{t_1+1} < Z^{t_0} + T \,\left|\, \, \mathfrak{F}^{t_0}, \Upsilon^{t_0} \leq T, \, \max_{i \in [n]} \left| y_i^{t_0} \right| \leq g (\log(ng))^2 \right.} \geq 1 - 2 \cdot (ng)^{-12}.
\]
For the sake of a contradiction, assume now that more than an $(1-r)$ fraction of the steps $t \in [t_0,t_1]$ satisfy $\Delta^t > Dng$. This implies that
\begin{align} \label{eq:g_adv_sum_of_abs_value}
\sum_{t=t_0}^{t_1} \frac{\Delta^{t}}{n}
 > Dg \cdot (1 - r) \cdot (t_1 - t_0 + 1) = D \cdot (1-r) \cdot \hat{c} \cdot T,
\end{align}
using that $t_1 - t_0 + 1 = \hat{c} \cdot T \cdot g^{-1}$.
When $\{ Z^{t_1+1} < Z^{t_0} + T \}$ and $\{ \Upsilon^{t_0} \leq T \}$ hold, then we have
\[
\Upsilon^{t_1+1} + \sum_{t=t_0}^{t_1} \frac{\Delta^{t}}{n} - (2g + 1) \cdot (t_1 - t_0 + 1) < \Upsilon^{t_0} + T \leq 2T.
\]
By rearranging this leads to a contradiction as
\begin{align*}
0 \leq \Upsilon^{t_1+1} 
 & < 2T - \sum_{t=t_0}^{t_1} \frac{\Delta^{t}}{n} + (2g + 1) \cdot (t_1 - t_0 + 1) \\
 & \stackrel{(a)}{\leq} 2T - \sum_{t=t_0}^{t_1} \frac{\Delta^{t}}{n} + \hat{c} (2g + 1) \cdot T \cdot g^{-1} \\
 & \stackrel{(b)}{\leq} - \sum_{t=t_0}^{t_1} \frac{\Delta^{t}}{n} + 5\hat{c} \cdot T \\
 & \!\!\stackrel{(\text{\ref{eq:g_adv_sum_of_abs_value}})}{<} -D \cdot (1-r) \cdot \hat{c} \cdot T + 5\hat{c} \cdot T \\
 & \stackrel{(c)}{=} 0
\end{align*}
using in $(a)$ that $t_1 - t_0 + 1 = \hat{c} \cdot T \cdot g^{-1}$, in $(b)$ that $\hat{c} \geq 1$ and $g \geq 1$, and in $(c)$ that $D = \frac{5}{1 - r} = 365$ (as $r = \frac{6}{6 + 1/12}$).

We conclude that when $\{ Z^{t_1+1} < Z^{t_0} + T \}$ holds, then at least an $r$ fraction of the steps $t \in [t_0,t_1]$ satisfy $\Delta^{t} \leq Dng$, and thus,
\[
\Pro{G_{t_0}^{t_1} \geq r \cdot (t_1 - t_0 + 1) \,\left|\, \, \mathfrak{F}^{t_0}, \Upsilon^{t_0} \leq T, \, \max_{i \in [n]} \left| y_i^{t_0} \right| \leq g (\log(ng))^2 \right.} \geq 1 - 2\cdot (ng)^{-12}. \qedhere
\]
\end{proof}

The following lemma provides two ways of upper bounding the quadratic potential using the exponential potential $\Lambda$. These will be used in the recovery and stabilization lemmas, to obtain the starting point condition ($\Upsilon^{t_0} \leq T$) for \cref{lem:g_adv_many_good_steps}. We defer its proof to \cref{sec:g_adv_bounds_on_quadratic_proof}.

\newcommand{\GAdvBoundsOnQuadratic}{
Consider the potential $\Lambda := \Lambda(\alpha, c_4g)$ for any constant $\alpha \in (0, 1)$, any $g \geq 1$ and any constant $c_4 > 0$. Then~$(i)$ for any constant $\hat{c} > 0$, there exists a constant $c_s := c_s(\alpha, c_4, \hat{c}) \geq 1$, such that for any step $t \geq 0$ with $\Lambda^t \leq \hat{c} \cdot n$, 
\[
\Upsilon^t \leq c_s \cdot n g^2.
\]
Furthermore, $(ii)$ there exists a constant $c_r := c_r(\alpha, c_4) \geq 1$, such that for any step $t \geq 0$,
\[
\Upsilon^t \leq c_r \cdot n \cdot \Big(g^2 + (\log \Lambda^t)^2 \Big).
\]
}
\begin{lem}\label{lem:g_adv_bounds_on_quadratic}
\GAdvBoundsOnQuadratic
\end{lem}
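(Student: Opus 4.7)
The plan is to reduce both statements to a pointwise comparison between the squared normalized load $(y_i^t)^2$ and the local potential value $\Lambda_i^t$, and then sum over bins. First, for every bin $i$, I would let $z_i := (|y_i^t| - c_4 g)^+$ so that $|y_i^t| \leq c_4 g + z_i$ and hence $(y_i^t)^2 \leq 2 c_4^2 g^2 + 2 z_i^2$. Summing over $i$, this already gives the clean decomposition
\[
 \Upsilon^t \;\leq\; 2 c_4^2 \cdot n g^2 \;+\; 2 \sum_{i = 1}^n z_i^2,
\]
so everything reduces to bounding $\sum_i z_i^2$ in the two regimes required by the two parts.

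Next, I would observe that whenever $z_i > 0$, one of the two exponentials defining $\Lambda_i^t$ equals $e^{\alpha z_i}$, so $\Lambda_i^t \geq e^{\alpha z_i}$; and when $z_i = 0$ we have $\Lambda_i^t = 2 \geq 1$ so the inequality is trivial. This will be the single fact connecting $z_i$ to the potential, used in two different ways for the two parts.

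For part $(ii)$, the plan is to take logarithms and use $\Lambda_i^t \leq \Lambda^t$: this gives $\alpha z_i \leq \log \Lambda_i^t \leq \log \Lambda^t$ (noting $\Lambda^t \geq 2n \geq 2$, so the logarithm is positive), hence $z_i^2 \leq \alpha^{-2} (\log \Lambda^t)^2$ uniformly in $i$, and $\sum_i z_i^2 \leq \alpha^{-2} \cdot n \cdot (\log \Lambda^t)^2$. Plugging back in yields the desired bound with $c_r := \max\{2 c_4^2, 2/\alpha^2\}$. For part $(i)$, a per-bin log bound would introduce a spurious $(\log n)^2$ factor; instead I would use the inequality $e^x \geq x^2/2$ for $x \geq 0$ applied to $x = \alpha z_i$, giving $z_i^2 \leq (2/\alpha^2) \cdot e^{\alpha z_i} \leq (2/\alpha^2) \cdot \Lambda_i^t$. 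Summing and invoking the hypothesis $\Lambda^t \leq \hat{c} n$ gives $\sum_i z_i^2 \leq (2\hat{c}/\alpha^2) \cdot n$, so part $(i)$ follows with $c_s := 2 c_4^2 + 4 \hat{c}/\alpha^2$, where the $g^2$-factor on the additive constant is free because $g \geq 1$.

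There is no serious obstacle here; the only delicate point is using two different elementary inequalities relating $z$ and $e^{\alpha z}$ — the linear-in-log bound for $(ii)$, where $\Lambda^t$ may be as large as polynomial in $n$, versus the Taylor-type bound $e^x \geq x^2/2$ for $(i)$, which is essential to avoid picking up an extra $(\log n)^2$ term and to obtain a bound that is purely of the form $c_s n g^2$ when the exponential potential is only linear in $n$.
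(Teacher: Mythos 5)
Your proof is correct. The decomposition $(y_i^t)^2 \leq 2c_4^2 g^2 + 2z_i^2$ with $z_i := (|y_i^t| - c_4 g)^+$ and the per-bin comparison $\Lambda_i^t \geq e^{\alpha z_i}$ are the same ingredients the paper uses, and your part~$(ii)$ — take logarithms and use $\alpha z_i \leq \log \Lambda_i^t \leq \log \Lambda^t$ — matches the paper's argument exactly, yielding the same constant $c_r = \max\{2c_4^2, 2/\alpha^2\}$. For part~$(i)$ you take a slightly different and cleaner route: where the paper establishes the threshold inequality $u^2 \leq \max\{\hat{u}^2, e^{\alpha u}\}$ with $\hat{u} := (4/\alpha)\log(4/\alpha)$ and then combines the two positive parts appearing in $\Lambda_i^t$ (relying, implicitly, on the fact that at most one of $(y_i^t - c_4g)^+$ and $(-y_i^t - c_4g)^+$ can be nonzero), you instead invoke the global bound $e^x \geq x^2/2$ for $x \geq 0$, which gives $z_i^2 \leq (2/\alpha^2)\,\Lambda_i^t$ for every bin in one line, with no threshold and no case split. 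Your definition $z_i = \max\{(y_i^t - c_4g)^+, (-y_i^t - c_4g)^+\}$ also makes explicit what the paper leaves to the reader, and the resulting constant $c_s = 2c_4^2 + 4\hat{c}/\alpha^2$ is even somewhat smaller than the paper's $4\hat{c}\hat{u}^2 + 4c_4^2$. Both arguments are elementary and of comparable length; yours buys a tidier constant and dispenses with the intermediate quantity $\hat{u}$ entirely.
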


\subsection{Exponential Potential} \label{sec:g_adv_exp_potential}

We now prove bounds on the expected change of the $\Lambda$ potential function over one step. Note that these hold for any sufficiently small constant $\alpha > 0$. 

We start with a relatively weak bound which holds at any step.

\begin{lem} \label{lem:g_adv_bad_step_increase}
Consider any allocation process with probability allocation vector $r^t$ such that $\max_{i \in [n]} r_i^t \leq \frac{2}{n}$. Further, consider the potential $\Lambda := \Lambda(\alpha, c_4g)$ for any $\alpha \in \big(0, \frac{1}{2}\big]$, any $g \geq 1$ and any $c_4 > 0$. Then, for any step $t \geq 0$, $(i)$~for every bin $i \in [n]$ it holds that \[
\Ex{\left. \Lambda_i^{t+1} \,\right|\, \mathfrak{F}^t} \leq \Lambda_i^t \cdot \Big( 1 + \frac{3\alpha}{n}\Big).
\]
Furthermore, by aggregating over all bins, $(ii)$~it holds that
\[
\Ex{\left. \Lambda^{t+1} \,\right|\, \mathfrak{F}^t} \leq \Lambda^t \cdot \Big( 1 + \frac{3\alpha}{n}\Big).
\]
\end{lem}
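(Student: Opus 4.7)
I would split the per-bin potential into its two pieces, $\Lambda_i^t = \Phi_i^t + \Psi_i^t$, where $\Phi_i^t := e^{\alpha(y_i^t - c_4 g)^+}$ is the ``overload'' contribution and $\Psi_i^t := e^{\alpha(-y_i^t - c_4 g)^+}$ is the ``underload'' contribution, and bound each in expectation separately. Over a single step, bin $i$ receives a ball with probability $r_i^t \leq 2/n$, in which case $y_i^{t+1} = y_i^t + 1 - 1/n$; otherwise $y_i^{t+1} = y_i^t - 1/n$. The key structural observation is that the map $u \mapsto u^+$ is monotone and $1$-Lipschitz, so a shift of $y_i^t$ by $a$ shifts the argument of $\Phi_i$ (resp.\ $\Psi_i$) by at most $a^+$ (resp.\ $(-a)^+$). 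Consequently, independent of the sign of the initial argument and whether the $(\cdot)^+$ activates or deactivates, one obtains the clean bounds
\[
 \Phi_i^{t+1} \leq \Phi_i^t \cdot e^{\alpha(1-1/n)} \text{ if $i$ receives a ball,} \qquad \Phi_i^{t+1} \leq \Phi_i^t \text{ otherwise,}
\]
and symmetrically
\[
 \Psi_i^{t+1} \leq \Psi_i^t \text{ if $i$ receives a ball,} \qquad \Psi_i^{t+1} \leq \Psi_i^t \cdot e^{\alpha/n} \text{ otherwise.}
\]
These four inequalities are the heart of the argument; each should be verified by a short case analysis on whether $y_i^t - c_4 g$ (or $-y_i^t - c_4 g$) is positive, zero, or negative before and after the step.

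Taking conditional expectations and using $r_i^t \leq 2/n$ gives
\[
 \Ex{\Phi_i^{t+1} \mid \mathfrak{F}^t} \leq \Phi_i^t \cdot \bigl(1 + r_i^t \cdot (e^{\alpha(1-1/n)} - 1)\bigr), \qquad \Ex{\Psi_i^{t+1} \mid \mathfrak{F}^t} \leq \Psi_i^t \cdot \bigl(1 + (1 - r_i^t)(e^{\alpha/n} - 1)\bigr).
\]
Since $\alpha \leq 1/2$, the Taylor estimate $e^x - 1 \leq x + x^2 \leq \tfrac{3}{2}x$ for $x \in [0, 1/2]$ applied with $x = \alpha(1 - 1/n)$ yields $r_i^t \cdot (e^{\alpha(1-1/n)} - 1) \leq (2/n) \cdot \tfrac{3\alpha}{2} = 3\alpha/n$, and the underload term is bounded by $e^{\alpha/n} - 1 \leq 2\alpha/n \leq 3\alpha/n$ (for $n$ large enough relative to $\alpha$). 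Adding the two contributions proves part $(i)$:
\[
 \Ex{\Lambda_i^{t+1} \mid \mathfrak{F}^t} = \Ex{\Phi_i^{t+1} \mid \mathfrak{F}^t} + \Ex{\Psi_i^{t+1} \mid \mathfrak{F}^t} \leq \Lambda_i^t \cdot \Bigl(1 + \frac{3\alpha}{n}\Bigr).
\]

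Part $(ii)$ is immediate by linearity of expectation: summing the per-bin bound over $i \in [n]$ yields $\Ex{\Lambda^{t+1} \mid \mathfrak{F}^t} \leq \Lambda^t \cdot (1 + 3\alpha/n)$. The main (mild) obstacle is carrying out the case analysis for the four deterministic inequalities cleanly when the $(\cdot)^+$ threshold is crossed during the step; this is where the monotonicity and $1$-Lipschitz property of $u \mapsto u^+$ are essential, since they let one avoid a separate ``boundary'' case. Note that no properties of $r^t$ beyond $r_i^t \leq 2/n$ are used, which is why the bound applies to every process in our framework (and why this weak bound will later need to be complemented by a drop inequality that holds only in good steps).
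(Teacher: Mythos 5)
Your proof is correct and follows essentially the same route as the paper: split $\Lambda_i^t$ into its overload and underload components, bound each by a deterministic per-step multiplier that holds regardless of the sign of the argument, take expectations using only $r_i^t \leq 2/n$, and finish with a Taylor estimate. The paper uses slightly coarser per-step multipliers ($e^{\alpha}$ rather than your $e^{\alpha(1-1/n)}$ for the overload, and an unconditional $e^{\alpha/n}$ for the underload instead of your $(1-r_i^t)$-weighted version), but these differences wash out against the same target $1 + 3\alpha/n$, and your ``for $n$ large enough'' caveat for the underload term is unnecessary since $e^{\alpha/n} - 1 \leq 2\alpha/n$ already holds for all $n \geq 1$ when $\alpha \leq 1/2$.
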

\begin{proof}
Consider an arbitrary bin $i \in [n]$. We upper bound the change of the overloaded component, i.e., $e^{\alpha \cdot (y_i^t - c_4g)^{+}}$, by placing one ball in bin $i \in [n]$ with probability $r_i^t \leq \frac{2}{n}$ and ignoring the change of the average load. Also, we upper bound the change for the underloaded component, i.e.,
$e^{\alpha \cdot (- y_i^t - c_4g)^{+}}$,
 by considering only the change of the average load by $1/n$. Hence,
\begin{align*}
\Ex{\left. \Lambda_i^{t+1} \,\right|\, \mathfrak{F}^t} 
 & \leq e^{\alpha \cdot (y_i^t - c_4g)^+} \cdot ((1-r_i^t) + r_i^t \cdot e^{\alpha}) + e^{\alpha \cdot (-y_i^t - c_4g)^+} \cdot ((1-r_i^t) \cdot e^{\alpha/n} + r_i^t \cdot e^{\alpha/n}) \\
 & = e^{\alpha \cdot (y_i^t - c_4g)^+} \cdot (1 + r_i^t \cdot (e^{\alpha} - 1)) + e^{\alpha \cdot (-y_i^t - c_4g)^+} \cdot e^{\alpha/n} \\
 & \stackrel{(a)}{\leq} e^{\alpha \cdot (y_i^t - c_4g)^+} \cdot \Big(1 + \frac{2}{n} \cdot 1.5\alpha\Big) + e^{\alpha \cdot (-y_i^t - c_4g)^+} \cdot \Big(1 + 1.5\frac{\alpha}{n}\Big) \\
 & \leq \Lambda_i^t \cdot \Big( 1 + \frac{3\alpha}{n}\Big),
\end{align*}
using in $(a)$ that $e^u \leq 1 + 1.5u$ (for any $0 \leq u \leq 0.7$), $\alpha \leq 1/2$ and $r_i^t \leq \frac{2}{n}$. %
\end{proof}

Now we improve this bound for any good step $t$, i.e., when $\Delta^t \leq Dng$ holds.

\begin{lem}\label{lem:g_adv_good_step_drop}
Consider the \GAdvComp setting for any $g \geq 1$ and the potential $\Lambda := \Lambda(\alpha, c_4g)$ for any $\alpha \in \left(0, \frac{1}{18}\right]$, $c_4 := 2D$ and $D := 365$. Then, for any step $t\geq 0$, $(i)$~for $\eps := \frac{1}{12}$, it holds that 
\[
\Ex{\left. \Lambda^{t+1} \,\right|\, \mathfrak{F}^t, \Delta^t \leq Dng} \leq \Lambda^t \cdot \Big( 1 -\frac{2\alpha\eps}{n}\Big) + 18\alpha.
\]
Furthermore, this also implies that $(ii)$~for $c := \frac{18}{\eps}$, it holds that
\[
\Ex{\left. \Lambda^{t+1} \,\right|\, \mathfrak{F}^t, \Delta^t \leq Dng, \Lambda^t > cn} \leq \Lambda^t \cdot \Big( 1 -\frac{\alpha\eps}{n}\Big).
\]
\end{lem}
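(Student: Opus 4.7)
The plan is to first observe that Part~(ii) follows immediately from Part~(i): whenever $\Lambda^t > cn = 18n/\eps$, the additive term $18\alpha = \alpha\eps\cdot(18n/\eps)/n \leq \alpha\eps\Lambda^t/n$ is absorbed into the multiplicative drop of~(i), yielding the stronger factor $1 - \alpha\eps/n$. So the main work is to prove Part~(i).

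For Part~(i), my plan is a bin-by-bin Taylor-style analysis combined with a decomposition of the probability allocation vector. Writing $\phi(u) := e^{\alpha(u - c_4g)^+}$ so that $\Lambda_i^t = \phi(y_i^t) + \phi(-y_i^t)$, I would decompose $q^t = p + (\text{adversary shifts})$, where $p$ is the \TwoChoice vector and each shift moves at most $\tfrac{2}{n^2}$ probability from bin $j$ to bin $i$ with $(i,j) \in R^t$. A case split on whether $y_i^t$ is above $c_4g$, below $-c_4g$, or inside the band, combined with second-order estimates for $\exp$, yields for each bin
\[
\Ex{\Lambda_i^{t+1} - \Lambda_i^t \,\big|\, \mathfrak{F}^t} \leq \alpha\bigl(q_i^t - \tfrac{1}{n}\bigr)\bigl[\phi(y_i^t) - \phi(-y_i^t)\bigr](1 + O(\alpha)) + R_i^t,
\]
where the remainder $R_i^t$ captures the non-smoothness of $\phi$ at $\pm c_4g$ and is non-zero only when $|y_i^t|$ lies within distance $O(1)$ of $c_4g$. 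Summing these remainders across all bins yields an $O(\alpha)$ contribution, which will eventually account for the $18\alpha$ term in the lemma.

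For the \TwoChoice part (with $q^t$ replaced by $p$), the good-step condition $\Delta^t \leq Dng$ is key: by Markov's inequality, at most $n/2$ bins have $|y_i^t| \geq c_4g = 2Dg$, so the bins contributing non-trivially to $\Lambda$ (those above the offset) lie entirely in the top- or bottom-ranked half. Since $p_i$ is monotone in rank and $\phi$ is monotone in $|y_i^t|$, an argument in the spirit of \cref{lem:quasilem} (as used in \cref{lem:quadratic_two_choice}) shows that the \TwoChoice drift alone gives a drop of at least $\tfrac{3\alpha\eps}{n}\Lambda^t$ when $D = 365$. For the adversarial shifts, each pair $(i,j) \in R^t$ contributes at most $\tfrac{2\gamma_{i,j}^t\alpha}{n^2}\bigl[(\phi(y_i^t) - \phi(y_j^t)) + (\phi(-y_j^t) - \phi(-y_i^t))\bigr]$ to $\Ex{\Delta\Lambda^{t+1}}$. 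Using $y_i^t - y_j^t \leq g$ together with the offset structure of $\phi$, and summing over $R^t$ (where each bin appears in at most $n$ pairs), the adversarial total is bounded by $\tfrac{\alpha\eps}{n}\Lambda^t + O(\alpha)$. Subtracting this from the \TwoChoice drop yields the desired bound. The main obstacle will be calibrating the constants so that the \TwoChoice drop strictly dominates both the adversarial perturbation and the boundary remainders for all $g \geq 1$ and $\alpha \leq 1/18$; this is precisely why $D = 365$ (and hence $c_4 = 730$) has to be taken so large.
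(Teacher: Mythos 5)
Part (ii) from part (i) is correct, and your overall scaffolding --- case split on whether $y_i^t$ is inside, above, or below the offset band, and using Markov on $\Delta^t \leq Dng$ to limit the number of extreme bins --- is in the right spirit. But the central device you propose, decomposing $q^t = p + (\text{adversary shifts})$ and then bounding the two pieces \emph{separately}, has a genuine gap.

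The problem is in your claim that ``the adversarial total is bounded by $\frac{\alpha\eps}{n}\Lambda^t + O(\alpha)$.'' That bound is of the warm-up $\Gamma$-type from \cref{lem:gamma_reallocation}, and the reason it works there is that the smoothing parameter $\gamma$ is chosen as $\Theta(1/g)$, so the per-pair factor $1 - e^{-\gamma g}$ is a small constant ($\frac{1}{8\cdot 48}$). Here $\Lambda$ has a \emph{constant} smoothing parameter $\alpha$. For a pair $(i,j)\in R^t$ with both bins above the offset, the per-pair change is $\gamma_{i,j}^t \cdot \alpha\bigl(\phi(y_i^t)-\phi(y_j^t)\bigr) \leq \frac{2\alpha}{n^2}\,\phi(y_i^t)\bigl(1-e^{-\alpha g}\bigr)$, and $1 - e^{-\alpha g}$ is \emph{not} small once $\alpha g$ is not small. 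Even using the favorable count of at most $n/3$ partners per bin, the adversarial total is $\lesssim \frac{2\alpha}{3n}(1-e^{-\alpha g})\Lambda^t$, and asking this to be $\leq \frac{\alpha\eps}{n}\Lambda^t$ with $\eps=\frac{1}{12}$ forces $g \lesssim 2.5$. The lemma must hold for all $g \geq 1$, so your decomposition-and-subtract plan does not close. Increasing $D$ only sharpens the Markov count of extreme bins; it does nothing to the $1-e^{-\alpha g}$ factor, which depends only on $\alpha g$.

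The paper's proof avoids this entirely by never decomposing $q^t$ relative to $p$ for this lemma. Instead, it bounds $q_i^t$ \emph{directly} for bins with $y_i^t \geq c_4 g + 2$: to allocate the ball to such a bin $i$, the other sample $j$ must satisfy $y_j^t \geq y_i^t - g \geq \frac{3}{2}Dg$ (either because $j$ is heavier, or because the adversary can only revert comparisons within distance $g$), and by Markov on $\Delta^t \leq Dng$ there are at most $n/3$ such bins. Hence $q_i^t \leq \frac{2}{3n}$, full stop --- a bound that already \emph{absorbs} the TwoChoice probability and the adversarial shift together, because both correspond to disjoint choices of $j$ inside the same restricted set of at most $n/3$ bins. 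Symmetrically, for $y_i^t \leq -c_4g - 2$ one gets $q_i^t \geq \frac{4}{3n}$, and for the band $y_i^t \in (-c_4g-2, c_4g+2)$ the trivial bound of \cref{lem:g_adv_bad_step_increase} suffices and produces the $18\alpha$ additive term. Your decomposition would need to recover exactly this disjointness to survive, at which point you are essentially re-deriving the direct $q_i^t$ bound; as stated, the two bounds you propose do not combine to the right constant.
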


\begin{proof}
\textit{First statement.} Consider an arbitrary step $t \geq 0$ with $\Delta^t \leq Dng$. We bound the expected change of $\Lambda$ over one step, by considering the following cases for each bin $i \in [n]$:

\textbf{Case 1 [$y_i^t \in (-c_4g - 2, c_4g + 2)$]:} Using \cref{lem:g_adv_bad_step_increase}~$(i)$,\begin{align*}
\Ex{\left. \Lambda_i^{t+1} \,\right|\, \mathfrak{F}^t} 
 & \leq \Lambda_i^t \cdot \Big( 1 + \frac{3\alpha}{n}\Big) \\
 & = \Lambda_i^t \cdot \Big( 1 - \frac{\alpha}{6n}\Big) + \Lambda_i^t \cdot \Big( \frac{\alpha}{6n} + \frac{3\alpha}{n}  \Big)\\
 & \stackrel{(a)}{\leq} \Lambda_i^t \cdot \Big( 1 -\frac{\alpha}{6n}\Big) + (2 \cdot e^{2\alpha}) \cdot \Big(\frac{1}{6} + 3\Big) \cdot \frac{\alpha}{n} \\
& \stackrel{(b)}{\leq} \Lambda_i^t \cdot \Big( 1 -\frac{\alpha}{6n}\Big) + \frac{18\alpha}{n},
\end{align*}
using in $(a)$ that $\Lambda_i^t \leq e^{2\alpha} + 1 \leq 2 \cdot e^{2\alpha}$ by the assumption that $y_i^t \in (-c_4g - 2, c_4g + 2)$ and in $(b)$ that $(2 \cdot e^{2\alpha}) \cdot (\frac{1}{6} + 3) \leq 18$, since $\alpha \leq 1/4$.

\textbf{Case 2 [$y_i^t \geq c_4g + 2$]:} By the condition $\Delta^t \leq Dng$,  the number of bins $j$ with $y_j^t \geq \frac{3}{2} Dg$ is at most $\frac{1}{2} \Delta^t \cdot \frac{2}{3Dg} \leq \frac{Dng}{3Dg} = \frac{n}{3}$ (see~\cref{fig:g_adv_overload_bias}). We can allocate to bin $i \in [n]$ with $y_i^t \geq c_4g + 2 = 2Dg + 2 \geq \frac{3}{2} Dg + g$ only if we sample bin $i$ and a bin $j$ with $y_j^t \geq \frac{3}{2}Dg$. Hence, \[
q_i^t \leq 2 \cdot \frac{1}{n} \cdot \frac{1}{3} = \frac{2}{3n}.
\]

\begin{figure}[H]
    \centering

\footnotesize{
 \begin{tikzpicture}[level/.style={},decoration={brace,mirror,amplitude=7}]
      \fill[red!50!white] (0, 0) rectangle (2, 0.2);
      \fill[black!40!white] (2, 0) rectangle (4, 0.2);
      \fill[green!80!black] (4, 0) rectangle (12, 0.2);
      
      \draw[very thick] (0,0) -- (12, 0);
      \draw[very thick] (0,0.5) -- (0,-0.5);
      \draw[very thick] (12,0.5) -- (12,-0.5);
      \node at (1,0.5) {$\geq 2Dg$};
      \node at (3,0.5) {$[\frac{3}{2}Dg, 2Dg)$};
      \node at (8,0.5) {$< \frac{3}{2}Dg$};
      
      \draw[dashed, very thick] (2,0.9) -- (2,-0.5);
      \draw[dashed, very thick] (4,0.9) -- (4,-0.5);
      
      \draw [decorate] (0.1,-0.4) --node[below=2.5mm]{$\leq n/3$} (3.9,-0.4);
      \draw [decorate] (4.1,-0.4) --node[below=2.5mm]{$\geq 2n/3$} (11.9,-0.4);

      \draw[->, thick] (0.5, -0.7) -- (0.5, -0.02);
      \node[anchor=north] at (0.5, -0.7) {$i$};

      \draw[->, thick] (3.2, -0.7) -- (3.2, -0.02);
      \node[anchor=north] at (3.2, -0.7) {$j$};
      
    \end{tikzpicture}}
    \caption{In any step $t$ with $\Delta^t \leq Dng$ there are at most $\frac{n}{3}$ bins with $y_i^t \geq \frac{3}{2} Dg$. In the \GAdvComp, we can distinguish between the red and green bins, so the probability to allocate to a red bin is at most $\frac{2}{3n}$.}
    \label{fig:g_adv_overload_bias}
\end{figure}

By assumption, bin $i$ deterministically satisfies $y_i^{t+1} \geq c_4g + 2 - 1/n > c_4g$, so
\begin{align*}
& \Ex{\left. \Lambda_i^{t+1} \,\right|\, \mathfrak{F}^t, \Delta^t \leq Dng} \\
 & \qquad = e^{\alpha \cdot (y_i^t - c_4g)^+} \cdot \Big((1-q_i^t) \cdot e^{-\alpha/n}+ q_i^t \cdot e^{\alpha (1 - 1/n)} \Big) + 1 \\
 & \qquad \stackrel{(a)}{\leq} e^{\alpha \cdot (y_i^t - c_4g)^+} \cdot \left( (1- q_i^t) \cdot \Big( 1 - \frac{\alpha}{n} + \frac{\alpha^2}{n^2} \Big) + q_i^t \cdot \Big( 1 + \alpha \cdot \Big(1 - \frac{1}{n}\Big) + \alpha^2 \Big)\right) + 1 \\
 & \qquad = e^{\alpha \cdot (y_i^t - c_4g)^+} \cdot \Big( 1 + \alpha \cdot \Big(q_i^t - \frac{1}{n} \Big) + (1-q_i^t) \cdot \frac{\alpha^2}{n^2} + q_i^t \cdot \alpha^2 \Big) + 1 \\
 & \qquad \stackrel{(b)}{\leq} e^{\alpha \cdot (y_i^t - c_4g)^+} \cdot \Big( 1 - \frac{\alpha}{3n} + \frac{4    \alpha^2}{3n} \Big) + 1 \\
 & \qquad \stackrel{(c)}{\leq} e^{\alpha \cdot (y_i^t - c_4g)^+} \cdot \Big( 1 - \frac{\alpha}{6n}\Big) + 1 \\
 & \qquad = e^{\alpha \cdot (y_i^t - c_4g)^+} \cdot \Big( 1 - \frac{\alpha}{6n}\Big) + 1 \cdot \Big(1 - \frac{\alpha}{6n}\Big) + \frac{\alpha}{6n} \\
 & \qquad = \Lambda_i^t \cdot \Big( 1 - \frac{\alpha}{6n} \Big) + \frac{\alpha}{6n},
\end{align*}
using in $(a)$ that $e^u \leq 1 + u + u^2$ for $u < 1.75$, $\alpha \leq 1$ and $(1 - 1/n)^2 \leq 1$, in $(b)$ that $q_i^t \leq \frac{2}{3n}$ and $(1 - q_i^t) \cdot \frac{\alpha^2}{n^2} \leq \frac{2\alpha^2}{3n}$ for $n \geq 2$, and in $(c)$ that $\alpha \leq \frac{1}{8}$, so $\frac{4\alpha^2}{3n} \leq \frac{\alpha}{6n}$.

\textbf{Case 3 [$y_i^t \leq -c_4g - 2$]:} 
The number of bins $j$ with $y_j^t \leq -\frac{3}{2} Dg$ is at most $\frac{1}{2}\Delta^t \cdot \frac{2}{3Dg} \leq \frac{Dng}{3Dg} = \frac{n}{3}$ and the number of bins $j$ with $y_j^t > -\frac{3}{2} Dg$ is at least $\frac{2n}{3}$ (see~\cref{fig:g_adv_underload_bias}). Similarly to Case 2, we can allocate to a bin $i \in [n]$ with load $y_i^t \leq -c_4g - 2$ only if we sample $i$ and a bin $j$ with $y_j^t > -\frac{3}{2} Dg$. Hence, \[
q_i^t \geq 2 \cdot \frac{1}{n} \cdot \frac{2}{3} = \frac{4}{3n}.
\]

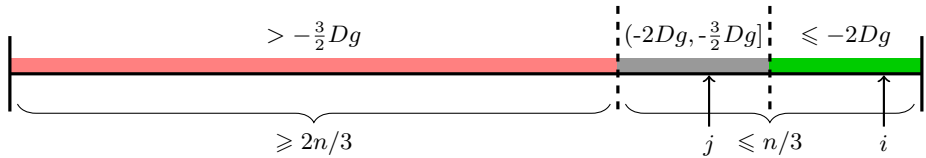
\begin{figure}[H]
    \centering
\footnotesize{    
 \begin{tikzpicture}[level/.style={},decoration={brace,mirror,amplitude=7}]
      \fill[red!50!white] (0, 0) rectangle (8, 0.2);
      \fill[black!40!white] (8, 0) rectangle (10, 0.2);
      \fill[green!80!black] (10, 0) rectangle (12, 0.2);
      
      \draw[very thick] (0,0) -- (12, 0);
      \draw[very thick] (0,0.5) -- (0,-0.5);
      \draw[very thick] (12,0.5) -- (12,-0.5);
      \node at (4,0.5) {$> -\frac{3}{2} Dg$};
      \node at (9,0.5) {$(\text{-}2Dg, \text{-}\frac{3}{2} Dg]$};
      \node at (11,0.5) {$\leq -2Dg$};
      
      \draw[dashed, very thick] (8,0.9) -- (8,-0.5);
      \draw[dashed, very thick] (10,0.9) -- (10,-0.5);
      
      \draw [decorate] (0.1,-0.4) --node[below=2.5mm]{$\geq 2n/3$} (7.9,-0.4);
      \draw [decorate] (8.1,-0.4) --node[below=2.5mm]{$\leq n/3$} (11.9,-0.4);
      
      \draw[->, thick] (11.5, -0.7) -- (11.5, -0.02);
      \node[anchor=north] at (11.5, -0.7) {$i$};

      \draw[->, thick] (9.2, -0.7) -- (9.2, -0.02);
      \node[anchor=north] at (9.2, -0.7) {$j$};
      
    \end{tikzpicture}}
    \vspace{-0.2cm}
    \caption{In any step $t$ with $\Delta^t \leq Dng$ there are at most $n/3$ bins with $y_i^t \leq -\frac{3}{2} Dg$ and at least $2n/3$ bins with $y_i^t > -\frac{3}{2} Dg$. In the \GAdvComp, we can distinguish between the red and green bins, so the probability to allocate to a green bin is at least $\frac{4}{3n}$.}
    \label{fig:g_adv_underload_bias}
\end{figure}

By assumption, bin $i$ deterministically satisfies $y_i^{t+1} \leq -c_4g - 2 + 1 < -c_4g$, so
\begin{align*}
& \Ex{\left. \Lambda_i^{t+1} \,\right|\, \mathfrak{F}^t, \Delta^t \leq Dng} \\
 & \qquad = 1 + e^{\alpha (- y_i^t - c_4g)^+} \cdot \Big((1-q_i^t) \cdot e^{\alpha/n} + q_i^t \cdot e^{-\alpha(1 - 1/n)}\Big) \\
 & \qquad \stackrel{(a)}{\leq} 1 + e^{\alpha (- y_i^t - c_4g)^+} \cdot \left( (1 - q_i^t) \cdot \Big( 1 + \frac{\alpha}{n} + \frac{\alpha^2}{n^2}\Big) + q_i^t \cdot \Big( 1 - \alpha \cdot \Big( 1 - \frac{1}{n} \Big) + \alpha^2 \Big) \right) \\
 & \qquad = 1 + e^{\alpha (- y_i^t - c_4g)^+} \cdot \Big( 1 + \alpha \cdot \Big( \frac{1}{n} - q_i^t\Big) + (1-q_i^t) \cdot \frac{\alpha^2}{n^2} + q_i^t \cdot \alpha^2 \Big) \\
 & \qquad \stackrel{(b)}{\leq} 1 + e^{\alpha (- y_i^t - c_4g)^+} \cdot \Big( 1 - \frac{\alpha}{3n} + \frac{5\alpha^2}{2n} \Big) \\
 & \qquad \stackrel{(c)}{\leq} 1 + e^{\alpha (- y_i^t - c_4g)^+} \cdot \Big( 1 - \frac{\alpha}{6n} \Big) \\
 & \qquad = \Big(1 - \frac{\alpha}{6n}\Big) + e^{\alpha (- y_i^t - c_4g)^+} \cdot \Big( 1 - \frac{\alpha}{6n} \Big) + \frac{\alpha}{6n} \\
 & \qquad = \Lambda_i^t \cdot \Big( 1 - \frac{\alpha}{6n} \Big) + \frac{\alpha}{6n},
\end{align*}
using in $(a)$ that $e^u \leq 1 + u + u^2$ for $u < 1.75$, $\alpha \leq 1$ and $(1-1/n)^2 \leq 1$, in $(b)$ that $q_i^t \in [\frac{4}{3n}, \frac{2}{n}]$ and $(1-q_i^t) \cdot \frac{\alpha^2}{n^2} \leq \frac{\alpha^2}{2n}$ for $n \geq 2$ and in $(c)$ that $\alpha \leq \frac{1}{15}$, so $\frac{5\alpha^2}{2n} \leq \frac{\alpha}{6n}$.

Combining these three cases and letting $\eps := \frac{1}{12}$, we conclude that \[
\Ex{\left. \Lambda^{t+1} \,\right|\, \mathfrak{F}^t, \Delta^t \leq Dng} \leq \sum_{i = 1}^n \left(\Lambda_i^t \cdot \Big( 1 - \frac{\alpha}{6n} \Big) + \frac{18\alpha}{n}\right) = \Lambda^t \cdot \Big( 1 - \frac{2\alpha\eps}{n} \Big) + 18\alpha.
\]
\textit{Second statement.} Letting $c := \frac{18}{\eps} = 18 \cdot 12$, it follows that
\begin{align*}
\Ex{\left. \Lambda^{t+1} \,\right|\, \mathfrak{F}^t, \Delta^t \leq Dng, \Lambda^t > cn} 
 & \leq \Lambda^t \cdot \Big( 1 -\frac{2\alpha\eps}{n}\Big) + 18\alpha \\
 & = \Lambda^t \cdot \Big( 1 -\frac{\alpha\eps}{n}\Big) - \Lambda^t \cdot \frac{\alpha\eps}{n} + 18\alpha \\
 & \leq \Lambda^t \cdot \Big( 1 -\frac{\alpha\eps}{n}\Big). \qedhere
\end{align*}
\end{proof}

\subsection{Adjusted Exponential Potential} \label{sec:g_adv_adjusted_exp_potential}

In \cref{lem:g_adv_good_step_drop}, we proved that in a good step $t$ with $\Lambda^t > cn$ (for $c := 18 \cdot 12$), the potential drops in  expectation by a multiplicative factor. Our goal will be to show that \Whp~$\Lambda^t \leq cn$ at a single step (\textbf{recovery}) and then show that it becomes small at least once every $\Oh(n \cdot (g + \log n))$ steps (\textbf{stabilization}). Since we do not have an expected drop in every step, but only at a constant fraction $r$ of the steps,  we will define an \textit{adjusted exponential potential function}. 
First, for any step $t_0$, 
and any step $s \geq t_0$, we define the following event:\begin{equation*}%
\mathcal{E}_{t_0}^{s} := \bigcap_{t \in [t_0, s]} \left\{\Lambda^t > c n \right\}.  \end{equation*}

Next, we define the sequence $(\tilde{\Lambda}_{t_0}^{s})_{s \geq t_0} := (\tilde{\Lambda}_{t_0}^{s})_{s \geq t_0}(\alpha, c_4g, \eps)$ as $\tilde{\Lambda}_{t_0}^{t_0} := \Lambda^{t_0}(\alpha, c_4g)$ and, for any $s > t_0$, %
\begin{equation}\label{eq:tilde_lambda}
\tilde{\Lambda}_{t_0}^s := \Lambda^s(\alpha, c_4g) \cdot \mathbf{1}_{\mathcal{E}_{t_0}^{s-1}} \cdot \exp\bigg( - \frac{3\alpha }{n} \cdot B_{t_0}^{s-1} \bigg) \cdot \exp\bigg( + \frac{\alpha \eps}{n} \cdot G_{t_0}^{s-1} \bigg),
\end{equation}
recalling that $G_{t_0}^{t_1}$ is the number of good steps in $[t_0, t_1]$, i.e., steps where the event $\mathcal{G}^s := \{ \Delta^s \leq Dng \}$ holds, and $B_{t_0}^{t_1} := [t_0, t_1] \setminus G_{t_0}^{t_1}$ is the number of bad steps in $[t_0, t_1]$.

By its definition and \cref{lem:g_adv_bad_step_increase,lem:g_adv_good_step_drop}, it follows that the sequence $(\tilde{\Lambda}_{t_0}^{s})_{s \geq t_0}$ is a super-martingale. A proof is given in \cref{sec:g_adv_lambda_tilde_is_supermartingale}.

\newcommand{\LambdaTildeSuperMartingale}{
Consider the \GAdvComp setting for any $g \geq 1$ and the sequence $(\tilde{\Lambda}_{t_0}^{s})_{s \geq t_0} := (\tilde{\Lambda}_{t_0}^{s})_{s \geq t_0}(\alpha, c_4g, \eps)$ for any starting step $t_0 \geq 0$, any $\alpha \in (0, \frac{1}{18}]$ and $\eps, c_4 > 0$ as defined in \cref{lem:g_adv_good_step_drop}. Then, for any step $s \geq t_0$,
\[
 \Ex{\left. \tilde{\Lambda}_{t_0}^{s+1} \,\right|\, \mathfrak{F}^s} \leq \tilde{\Lambda}_{t_0}^{s}.
\]
}

\begin{lem}[\textbf{cf.~\cite[Lemma 9.1]{LSS21}}]\label{lem:g_adv_lambda_tilde_is_supermartingale}
\LambdaTildeSuperMartingale
\end{lem}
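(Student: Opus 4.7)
The sequence $(\tilde{\Lambda}_{t_0}^s)_{s\geq t_0}$ is designed precisely so that the ``good step'' multiplicative drop and the ``bad step'' multiplicative growth of $\Lambda$ are compensated by the deterministic factors $\exp(\alpha\eps G_{t_0}^{s-1}/n)$ and $\exp(-3\alpha B_{t_0}^{s-1}/n)$. My plan is therefore to fix a step $s \geq t_0$, condition on $\mathfrak{F}^s$, and split into cases depending on the status of $\mathcal{E}_{t_0}^{s-1}$ and of the new step, noting that $B_{t_0}^{s-1}$, $G_{t_0}^{s-1}$, $\mathbf{1}_{\mathcal{E}_{t_0}^{s-1}}$, $\mathbf{1}_{\mathcal{G}^s}$ and $\mathbf{1}_{\{\Lambda^s > cn\}}$ are all $\mathfrak{F}^s$-measurable, so they can be pulled out of the conditional expectation.

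First I would handle the trivial cases. If $\mathcal{E}_{t_0}^{s-1}$ fails, then $\mathcal{E}_{t_0}^s$ fails as well, so $\tilde{\Lambda}_{t_0}^{s+1}=0=\tilde{\Lambda}_{t_0}^s$. If $\mathcal{E}_{t_0}^{s-1}$ holds but $\Lambda^s \leq cn$, then $\mathcal{E}_{t_0}^s$ fails, giving $\tilde{\Lambda}_{t_0}^{s+1}=0 \leq \tilde{\Lambda}_{t_0}^s$. So the nontrivial case is $\mathcal{E}_{t_0}^{s-1}\cap\{\Lambda^s > cn\}$, which forces $\mathcal{E}_{t_0}^s$ and hence $\mathbf{1}_{\mathcal{E}_{t_0}^s}=1$.

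Within this remaining case I split on whether step $s$ is good or bad. On a good step ($\Delta^s \leq Dng$), $G_{t_0}^{s}=G_{t_0}^{s-1}+1$ and $B_{t_0}^{s}=B_{t_0}^{s-1}$. Here I invoke \cref{lem:g_adv_good_step_drop}$(ii)$ (whose hypotheses $\Delta^s \leq Dng$ and $\Lambda^s > cn$ are exactly what we have) to get $\Ex{\Lambda^{s+1}\mid\mathfrak{F}^s} \leq \Lambda^s(1-\alpha\eps/n) \leq \Lambda^s \cdot e^{-\alpha\eps/n}$; multiplying by the $\mathfrak{F}^s$-measurable factors and using that the extra $e^{\alpha\eps/n}$ from the $G$-count exactly cancels the $e^{-\alpha\eps/n}$ from the drop yields $\Ex{\tilde{\Lambda}_{t_0}^{s+1}\mid\mathfrak{F}^s}\leq \tilde{\Lambda}_{t_0}^s$. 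On a bad step ($\Delta^s > Dng$), $B_{t_0}^{s}=B_{t_0}^{s-1}+1$ and $G_{t_0}^{s}=G_{t_0}^{s-1}$. Here I apply the universal bound of \cref{lem:g_adv_bad_step_increase}$(ii)$ (after verifying that \GAdvComp satisfies $\max_i q_i^t \leq 2/n$, since the ball is always placed in one of the two uniformly sampled bins) to obtain $\Ex{\Lambda^{s+1}\mid \mathfrak{F}^s}\leq \Lambda^s(1+3\alpha/n)\leq \Lambda^s \cdot e^{3\alpha/n}$, and the extra $e^{-3\alpha/n}$ from the $B$-count kills the growth, again giving $\Ex{\tilde{\Lambda}_{t_0}^{s+1}\mid\mathfrak{F}^s}\leq \tilde{\Lambda}_{t_0}^s$.

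There is no real obstacle — the only small care points are $(a)$ verifying that \GAdvComp's probability vector satisfies the $2/n$-cap required by \cref{lem:g_adv_bad_step_increase}, and $(b)$ noticing that in the good-step case the additive $18\alpha$ appearing in \cref{lem:g_adv_good_step_drop}$(i)$ can be absorbed into the multiplicative factor precisely because $\Lambda^s > cn = (18/\eps)n$, which is why the threshold in the indicator $\mathcal{E}_{t_0}^{\cdot}$ was chosen to be $cn$ with $c=18/\eps$. Aggregating the four sub-cases (trivial, good, bad) via $\Ex{\cdot\mid\mathfrak{F}^s}=\sum \mathbf{1}_{\text{case}}\Ex{\cdot\mid\mathfrak{F}^s}$ concludes the proof.
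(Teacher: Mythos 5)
Your proof is correct and follows essentially the same route as the paper's: factor out the $\mathfrak{F}^s$-measurable terms, split on good versus bad steps, and cancel the multiplicative drift from Lemmas~\ref{lem:g_adv_good_step_drop}~$(ii)$ and \ref{lem:g_adv_bad_step_increase}~$(ii)$ against the compensating exponential factors baked into $\tilde{\Lambda}$. The only difference is presentational: you separate out the trivial cases ($\neg\mathcal{E}_{t_0}^{s-1}$ or $\Lambda^s\le cn$) explicitly up front, whereas the paper handles them implicitly by keeping the indicator $\mathbf{1}_{\mathcal{E}_{t_0}^{s}}$ inside the conditional expectation when invoking the drop lemma.
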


\subsection{Recovery and Stabilization}

We are now ready to prove the recovery lemma, that is the $\Lambda$ potential becomes small every $\Theta(ng (\log(ng))^2)$ steps.

\begin{figure}
    \centering
\scalebox{0.75}{
    \begin{tikzpicture}[
  IntersectionPoint/.style={circle, draw=black, very thick, fill=black!35!white, inner sep=0.05cm}
]

\definecolor{MyBlue}{HTML}{9DC3E6}
\definecolor{MyYellow}{HTML}{FFE699}
\definecolor{MyGreen}{HTML}{E2F0D9}
\definecolor{MyRed}{HTML}{FF9F9F}
\definecolor{MyDarkRed}{HTML}{C00000}

\node[anchor=south west] (plt) at (-0.1, 0) {\includegraphics{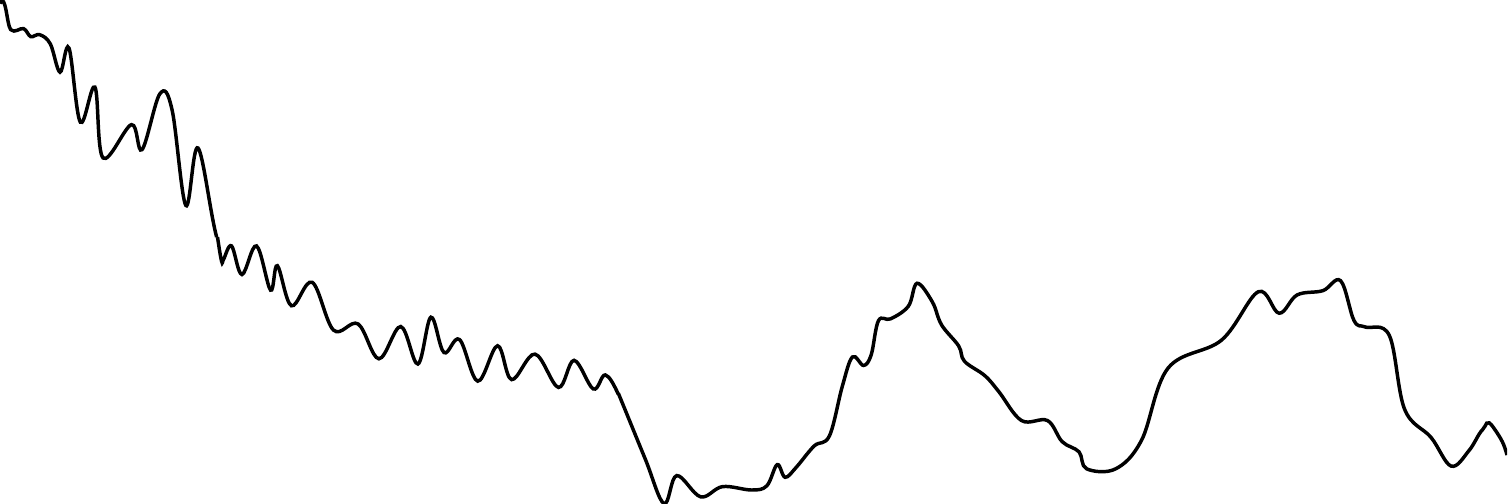}};

\def\xEnd{16}
\def\xLast{15.40}
\def\yLast{6}
\def\cn{1}
\def\yBottom{-0.8}

\node[anchor=south west, inner sep=0.15cm, fill=MyBlue, rectangle,minimum width=6.42cm] at (0, \yLast) {Recovery phase};
\node[anchor=south west, inner sep=0.15cm, fill=MyYellow, rectangle,minimum width=\xLast cm - 6.42cm] at (6.42, \yLast) {Stabilization phase};

\draw[dashed, thick] (0,1) -- (\xLast, 1);
\draw[dashed, thick] (0,1.8) -- (\xLast, 1.8);
\draw[dashed, thick, green!60!black] (0,5.25) -- (\xLast, 5.25);

\node[anchor=south west, green!60!black] at (0,5.25) {\cref{thm:g_adv_warm_up_gap}};

\node[anchor=east] at (0, \cn) {$cn$};
\node[anchor=east] at (0, 1.8) {$2cn$};
\node[anchor=east] at (0, 5.35) {$e^{c_3 g \log(ng)}$};
\node[anchor=east] at (0, 6.3) {$\Lambda^t$};

\node[anchor=west] at (\xEnd, \yBottom) {$t$};

\def\tA{6.42}
\def\tB{8.52}
\def\tC{10.41}
\def\tD{11.71}
\def\tM{13.51}
\def\tE{14.36}

\newcommand{\drawLine}[3]{
\draw[dashed, very thick, #3] (#1, \yBottom) -- (#1, \yLast);
\draw[very thick] (#1, \yBottom) -- (#1, \yBottom -0.2);
\node[anchor=north] at (#1, \yBottom -0.3) {#2};}

\newcommand{\drawPoint}[3]{
\drawLine{#1}{#2}{#3}
\node[IntersectionPoint] at (#1, \cn) {};}

\draw[very thick] (0, \yBottom) -- (0, \yBottom -0.2);
\node[anchor=north] at (0, \yBottom -0.3) {$m - \Delta_r$};

\drawPoint{\tA}{$s_0$}{black!30!white};
\drawPoint{\tB}{$\tau_1$}{black!30!white};
\drawPoint{\tC}{$s_1$}{black!30!white};
\drawPoint{\tD}{$\tau_2$}{black!30!white};
\drawPoint{\tE}{$s_2$}{black!30!white};
\drawLine{\tM}{\textcolor{MyDarkRed}{$m$}}{MyDarkRed};
\drawLine{\xLast}{$m + \Delta_s$}{black!30!white};

\newcommand{\drawRegionRect}[3]{
\node[anchor=south west, rectangle, fill=#3, minimum width=#2 cm- #1 cm, minimum height=0.3cm] at (#1, \yBottom) {};}

\drawRegionRect{\tA}{\tB}{MyGreen}
\drawRegionRect{\tB}{\tC}{MyRed}
\drawRegionRect{\tC}{\tD}{MyGreen}
\drawRegionRect{\tD}{\tE}{MyRed}
\drawRegionRect{\tE}{\xLast}{MyGreen}

\newcommand{\drawBrace}[4]{
\draw [
    thick,
    decoration={
        brace,
        raise=0.5cm,
        amplitude=0.2cm
    },
    decorate
] (#2, \yBottom - 0.3) -- (#1, \yBottom - 0.3) 
node [anchor=north,yshift=-0.7cm,#4] {#3}; }

\drawBrace{0}{\tA}{Recovery by \cref{lem:g_adv_recovery}}{pos=0.5};
\drawBrace{\tB}{\tC}{}{};
\drawBrace{\tD}{\tE}{Each $s_i - \tau_i \leq \Delta_s = \Theta(n \cdot \max\{ \log n, g \})$ by \cref{lem:g_adv_stabilization}}{};

\draw[->, ultra thick] (0,\yBottom) -- (0, 7);
\draw[->, ultra thick] (0,\yBottom) -- (\xEnd, \yBottom);

\end{tikzpicture}}
    \caption{Visualization of the recovery (\cref{lem:g_adv_recovery}) and stabilization (\cref{lem:g_adv_stabilization}) phases. Note that the red intervals \Whp~will have length $\leq \Delta_s$, implying $\Gap(m) = \Oh(g + \log n)$.}
    \label{fig:g_adv_recovery_and_stabilization}
\end{figure}

\begin{lem}[\textbf{Recovery}] \label{lem:g_adv_recovery}
Consider the \GAdvComp setting for any $g \geq 1$ and the potential $\Lambda := \Lambda(\alpha, c_4g)$ with $\alpha = \frac{1}{18}$, and $c_4 > 0$ as defined in \cref{lem:g_adv_good_step_drop}. Further, let the constants $c, \eps > 0$ be as defined in \cref{lem:g_adv_good_step_drop}, $c_r := c_r(\alpha, c_4) \geq 1$ as in \cref{lem:g_adv_bounds_on_quadratic}~$(ii)$, $r \in (0, 1)$ as in \cref{lem:g_adv_many_good_steps} and $c_3 \geq 2$ as in \cref{thm:g_adv_warm_up_gap}. Then, for any step $t_0 \geq 0$, $(i)$~for $\Delta_r := \Delta_r(g) := \frac{60 c_3^2 c_r}{\alpha\eps r} \cdot n g \cdot (\log (ng))^2$, it holds that
\[
\Pro{ \left. \bigcup_{t \in [t_0, t_0 + \Delta_r]} \{ \Lambda^t \leq cn \} \; \right| \; \mathfrak{F}^{t_0}, \max_{i \in [n]} \left| y_i^{t_0} \right| \leq c_3g \log(ng) }\geq 1 - 3 \cdot (ng)^{-12}.
\]
Further, $(ii)$~it holds that,
\[
 \Pro{ \bigcup_{t \in [t_0,t_0 + \Delta_r]} \{ \Lambda^t \leq cn \}} \geq 1 - (ng)^{-11}.
\]
\end{lem}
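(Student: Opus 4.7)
The plan is to prove (i) by contradiction, leveraging the super-martingale property of the adjusted potential $\tilde{\Lambda}_{t_0}^s$ (\cref{lem:g_adv_lambda_tilde_is_supermartingale}) together with the lower bound on the fraction of good steps (\cref{lem:g_adv_many_good_steps}). Specifically, I would suppose that recovery fails, i.e., $\Lambda^t > cn$ for every $t \in [t_0, t_0+\Delta_r]$; then the indicator $\mathbf{1}_{\mathcal{E}_{t_0}^{s-1}}$ in \cref{eq:tilde_lambda} equals $1$ up to $s := t_0 + \Delta_r$, so $\tilde{\Lambda}_{t_0}^s > cn \cdot \exp\big(\tfrac{\alpha}{n}(\eps\cdot G_{t_0}^{s-1} - 3\cdot B_{t_0}^{s-1})\big)$. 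A constant-fraction lower bound on $G_{t_0}^{s-1}$ will force $\tilde{\Lambda}_{t_0}^s$ to be exponentially large, contradicting Markov's inequality applied to the super-martingale.

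First I would control the initial potential. The conditioning $\max_i|y_i^{t_0}| \leq c_3 g\log(ng)$ yields $\Lambda^{t_0} \leq 2n \cdot e^{\alpha c_3 g\log(ng)} = 2n \cdot (ng)^{\alpha c_3 g}$, and via \cref{lem:g_adv_bounds_on_quadratic}~$(ii)$, $\Upsilon^{t_0} \leq c_r n(g^2 + (\log\Lambda^{t_0})^2) = \Oh(c_r c_3^2 \cdot ng^2(\log(ng))^2) =: T$. I would then invoke \cref{lem:g_adv_many_good_steps} with this $T$ and a constant $\hat{c}$ chosen so that $\hat{c}\cdot T\cdot g^{-1} = \Delta_r$; the explicit value $\Delta_r = \tfrac{60c_3^2 c_r}{\alpha\eps r}\cdot ng(\log(ng))^2$ admits such a $\hat{c}$, and satisfies $T\in[ng^2, n^2g^3/\hat{c}]$ for $n$ large. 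This yields $G_{t_0}^{t_0+\Delta_r-1} \geq r\cdot \Delta_r$ with probability at least $1 - 2(ng)^{-12}$. Independently, Markov on $\tilde{\Lambda}_{t_0}^{s}$ at threshold $K := \Lambda^{t_0}\cdot(ng)^{12}$ gives $\tilde{\Lambda}_{t_0}^{s} < K$ with probability at least $1-(ng)^{-12}$.

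On the intersection of these two events, assuming non-recovery, I obtain
\[
\Lambda^{t_0}\cdot(ng)^{12} \;>\; \tilde{\Lambda}_{t_0}^{s} \;>\; cn \cdot \exp\!\Big(\tfrac{\alpha\Delta_r}{n}\cdot(\eps r - 3(1-r))\Big).
\]
With $r = \tfrac{6}{6+\eps}$ and $\eps = \tfrac{1}{12}$, a direct calculation gives $\eps r - 3(1-r) = \tfrac{3\eps}{6+\eps} > 0$, so the right-hand exponent is $\Theta(g(\log(ng))^2)$, whereas $\log(\Lambda^{t_0}\cdot(ng)^{12}/(cn)) = \Oh(g\log(ng))$. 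The explicit constant in $\Delta_r$ is large enough that the exponential strictly beats the polynomial factor, yielding a contradiction; a union bound over the two failure events gives (i) with failure probability at most $3(ng)^{-12}$. Statement (ii) then follows by applying \cref{thm:g_adv_warm_up_gap} at step $t_0$ to secure the conditioning event of (i) with probability at least $1-(ng)^{-14}$ and invoking (i); the total failure probability $3(ng)^{-12}+(ng)^{-14}$ is absorbed into $(ng)^{-11}$ for $n$ large.

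The main technical obstacle is calibrating the constants so that the net drift $\alpha(\eps r - 3(1-r))\cdot\Delta_r/n$ genuinely dominates $\log(\Lambda^{t_0}\cdot(ng)^{12}/(cn)) = \Oh(g\log(ng))$; this forces $\Delta_r = \Omega(ng\cdot(\log(ng))^2)$ with an explicit constant built from $c_r$, $c_3$, $\alpha$, $\eps$ and $r$. A secondary subtlety is ensuring that $T$ lies inside the range $[ng^2, n^2g^3/\hat{c}]$ required by \cref{lem:g_adv_many_good_steps}, which constrains the admissible values of the free parameter $\hat{c}$ but imposes no real restriction for $n$ large relative to $g$.
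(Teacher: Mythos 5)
Your proposal is correct and mirrors the paper's own proof: the paper likewise applies \cref{lem:g_adv_many_good_steps} (with $T := 2 c_r n (c_3 g \log(ng))^2$, $\hat c = \Delta_r g / T$) to secure $G_{t_0}^{t_1-1} \geq r\Delta_r$, applies Markov's inequality to the super-martingale $\tilde\Lambda_{t_0}^{t_1}$ at threshold $\Lambda^{t_0}(ng)^{12}$, and on the intersection shows $\Lambda^{t_1}\cdot\mathbf{1}_{\mathcal{E}_{t_0}^{t_1-1}} \leq 1 < n$ so $\mathbf{1}_{\mathcal{E}_{t_0}^{t_1-1}}=0$; your ``proof by contradiction'' phrasing is just a cosmetic reframing of that same deduction, and your arithmetic $\eps r - 3(1-r) = \tfrac{3\eps}{6+\eps} = \tfrac{\eps r}{2}$ matches the identity the paper uses in step $(a)$. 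Statement $(ii)$ is likewise derived exactly as you describe, via \cref{thm:g_adv_warm_up_gap}~$(iii)$ and a union bound.
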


\begin{proof}
\textit{First statement.} Consider an arbitrary step $t_0$ with  $\max_{i \in [n]} |y_i^{t_0}| \leq c_3 g \log (ng)$. Our aim is to show that \Whp~$\Lambda^t \leq cn$ for some step $t \in [t_0, t_0 + \Delta_r]$. We will do this by first showing that \Whp~there is a significant number of good steps, i.e., $G_{t_0}^{t_1} \geq r \cdot \Delta_r$, and when this happens \Whp~$\tilde{\Lambda}_{t_0}^{t_1} = 0$, which implies the conclusion.

We start by upper bounding $\Lambda^{t_0}$ as follows, \[
\Lambda^{t_0} \leq 2n \cdot e^{\alpha c_3 g \log(ng)} %
\leq e^{c_3 g \log(ng)} =: \lambda,
\]
since $\alpha \leq \frac{1}{18}$ and $c_3 \geq 2$. Hence, by \cref{lem:g_adv_bounds_on_quadratic}~$(ii)$, there exists a constant $c_r := c_r(\alpha, c_4)$, such that
\[
\Upsilon^{t_0} \leq c_r \cdot n \cdot (g^2 + (\log \Lambda^{t_0})^2) \leq  2c_r \cdot n \cdot (c_3 g \log (ng))^2 =: T.
\]
Let $t_1 :=t_0+\Delta_r $. Applying \cref{lem:g_adv_many_good_steps} with $T = 2c_r \cdot n \cdot (c_3 g \log (ng))^2 = o(n^2g^3)$ (and $\geq ng^2$) and $\hat{c} = \frac{\Delta_r \cdot g}{T} \geq \frac{30}{\alpha\eps r} \geq 1$ as $\alpha, \eps, r \leq 1$, we get
\begin{align}\label{eq:rec_many_good_quantiles_whp_new}
 & \Pro{ G_{t_0}^{t_1-1} \geq r \cdot \Delta_r \;\left|\; \mathfrak{F}^{t_0}, \Lambda^{t_0} \leq \lambda, \max_{i \in [n]} \left| y_i^{t_0} \right| \leq c_3 g \log(ng) \right.} \notag \\
 & \quad \geq \Pro{ G_{t_0}^{t_1-1} \geq r \cdot \Delta_r \;\left|\; \mathfrak{F}^{t_0}, \Upsilon^{t_0} \leq T, \max_{i \in [n]} \left| y_i^{t_0} \right| \leq g (\log(ng))^2 \right.} \notag \\
 & \quad \geq 1 - 2 \cdot (ng)^{-12}.
\end{align}
By \cref{lem:g_adv_lambda_tilde_is_supermartingale}, $(\tilde{\Lambda}_{t_{0}}^{t})_{t \geq t_{0}}$ is a super-martingale, so $\ex{\tilde{\Lambda}_{t_0}^{t_1}\mid \mathfrak{F}^{t_0}} \leq \tilde{\Lambda}_{t_0}^{t_0} =  \Lambda^{t_0}$. Hence, using Markov's inequality we get $\Pro{\tilde{\Lambda}_{t_0}^{t_1} >\Lambda^{t_0} \cdot (ng)^{12}\,\big|\, \mathfrak{F}^{t_0}, \Lambda^{t_0} \leq \lambda } \leq  (ng)^{-12}$. Thus, by the definition of $\tilde{\Lambda}_{t_0}^{t_1}$ in \cref{eq:tilde_lambda}, we have 
\begin{equation} \label{eq:rec_supermartingale_markov_new}
\Pro{\Lambda^{t_1} \cdot \mathbf{1}_{\mathcal{E}_{t_0}^{t_1-1}} \leq \Lambda^{t_0} \cdot (ng)^{12} \cdot \exp\left(  \frac{3 \alpha}{n} \cdot B_{t_0}^{t_1-1}  -\frac{\alpha \eps}{n} \cdot G_{t_0}^{t_1-1} \right) \, \Bigg| \, \mathfrak{F}^{t_0}, \Lambda^{t_0} \leq \lambda } \geq  1 - (ng)^{-12}.    
\end{equation}
Further, if in addition to the two events $\{\tilde{\Lambda}_{t_0}^{t_1} \leq \Lambda^{t_0} \cdot (ng)^{12}\}$ and $\{\Lambda^{t_0} \leq \lambda \}$, also the event $\{G_{t_0}^{t_1-1} \geq r \cdot \Delta_r \}$ holds, then 
\begin{align*}
\Lambda^{t_1} \cdot \mathbf{1}_{\mathcal{E}_{t_0}^{t_1-1}} & \leq \Lambda^{t_0} \cdot (ng)^{12} \cdot \exp\bigg(  \frac{3 \alpha}{n} \cdot B_{t_0}^{t_1-1}  -\frac{\alpha \eps}{n} \cdot G_{t_0}^{t_1-1} \bigg) \\
 & \leq e^{c_3 g \log(ng)} \cdot (ng)^{12} \cdot \exp\left( \frac{3 \alpha}{n} \cdot (1 - r) \cdot \Delta_r - \frac{\alpha \eps}{n} \cdot r \cdot \Delta_r \right) \\
 & \stackrel{(a)}{=} e^{c_3 g \log(ng)} \cdot (ng)^{12} \cdot \exp\bigg( \frac{\alpha \eps}{n} \cdot \frac{r}{2} \cdot \Delta_r - \frac{\alpha \eps}{n} \cdot r \cdot \Delta_r \bigg) \\
 & = e^{c_3 g \log(ng)} \cdot (ng)^{12} \cdot \exp\bigg( - \frac{ \alpha \eps}{n} \cdot \frac{r}{2} \cdot \Delta_r \bigg) \\
 & = e^{c_3 g \log(ng)} \cdot (ng)^{12} \cdot \exp\bigg( - \frac{\alpha \eps}{n} \cdot \frac{r}{2} \cdot  \frac{60 c_3^2 c_r}{\alpha \eps r} \cdot ng \cdot (\log (ng))^2 \bigg) \\ 
 & \stackrel{(b)}{\leq} e^{c_3 g \log(ng)} \cdot (ng)^{12} \cdot \exp\left( - 30 c_3 g\log (ng) \right) \\
 & \leq 1,
\end{align*}
where we used in $(a)$ that $r = \frac{6}{6 + \eps}$ implies $\frac{3\alpha}{n} \cdot (1-r) = \frac{3\alpha}{n} \cdot \frac{\eps}{6 + \eps} =  \frac{\alpha\eps}{n} \cdot \frac{r}{2}$, in $(b)$ that $g \geq 1$, $c_r \geq 1$ and $c_3 \geq 1$.
By the definition of $\Lambda$, we have that $\Lambda^{t_1} \geq n$  holds deterministically, and so we can deduce from the above inequality that $\mathbf{1}_{\mathcal{E}_{t_0}^{t_1-1}}=0$, that is,
\[\Pro{ \neg \mathcal{E}_{t_0}^{t_1 -1} \; \Bigg|\; \mathfrak{F}^{t_0},  \;\;\; \tilde{\Lambda}_{t_0}^{t_1} \leq \Lambda^{t_0} \cdot (ng)^{12}, \;\;\; \Lambda^{t_0} \leq \lambda,  \;\; \; G_{t_0}^{t_1-1} \geq r \cdot \Delta_r } =1.\] 
Recalling the definition of $\mathcal{E}_{t_0}^{t_1-1} := \bigcap_{t \in [t_0, t_1-1]} \{ \Lambda^t > cn \}$ and taking the union bound over \cref{eq:rec_many_good_quantiles_whp_new} and \cref{eq:rec_supermartingale_markov_new} yields 
\begin{align} \label{eq:g_adv_conditional_recovery}
\Pro{ \left. \bigcup_{t \in [t_0, t_0 + \Delta_r]} \{ \Lambda^t \leq cn \} \; \right| \; \mathfrak{F}^{t_0}, \Lambda^{t_0} \leq \lambda  }\geq 1 - 2 \cdot (ng)^{-12} -(ng)^{-12} = 1 - 3 \cdot (ng)^{-12}.
\end{align}
\textit{Second statement.} Using \cref{thm:g_adv_warm_up_gap}~$(iii)$, for the constant $c_3 \geq 2$, it holds that,\begin{align*}
\Pro{\max_{i \in [n]} \left| y_i^{t_0} \right| \leq c_3 g \log (ng)} \geq 1 - (ng)^{-14}.
\end{align*}
Hence, combining with \cref{eq:g_adv_conditional_recovery}, we conclude that
\[
\Pro{ \bigcup_{t \in [t_0, t_0 + \Delta_r]} \{ \Lambda^t \leq cn \}} 
\geq \left(1 - 3 \cdot (ng)^{-12} \right) \cdot \left(1 - (ng)^{-14} \right) \geq 1 - (ng)^{-11}. \qedhere
\]
\end{proof}

The derivation of the lemma below is similar to that of \cref{lem:g_adv_recovery}, with the main difference being the tighter condition that $\Lambda^{t_0} \leq 2 cn$, which allows us to choose a slightly shorter time window of $\Theta(n \cdot \max\{\log n, g\})$ steps (see \cref{fig:g_adv_recovery_and_stabilization}). We defer the proof for \cref{sec:g_adv_stabilization_proof}.

\newcommand{\GAdvStabilization}{
Consider the \GAdvComp setting for any $g \geq 1$ and the potential $\Lambda := \Lambda(\alpha, c_4g)$ with $\alpha = \frac{1}{18}$, and $c_4 > 0$ as defined in \cref{lem:g_adv_good_step_drop}. Further, let the constants $c, \eps > 0$ be as defined in \cref{lem:g_adv_good_step_drop}, $c_s := c_s(\alpha, c_4, 2c) \geq 1$ as in \cref{lem:g_adv_bounds_on_quadratic}~$(i)$ and $r \in (0, 1)$ as in \cref{lem:g_adv_many_good_steps}. Then, for $\Delta_s := \Delta_s(g) := \frac{60 c_s}{\alpha \eps r} \cdot n \cdot \max\{\log n, g\}$, we have that for any step $t_0 \geq 0$,
\[
 \Pro{ \left. \bigcup_{t \in [t_0,t_0 + \Delta_s]} \left\{\Lambda^{t} \leq cn \right\} ~\right\vert~ \mathfrak{F}^{t_0}, \Lambda^{t_0} \leq 2cn} \geq 1 -  (ng)^{-11}.
\] 
}
\begin{lem}[\textbf{Stabilization}] \label{lem:g_adv_stabilization}
\GAdvStabilization
\end{lem}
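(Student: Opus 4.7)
The plan is to mirror the proof of the Recovery lemma (\cref{lem:g_adv_recovery}) but to exploit the much tighter starting hypothesis $\Lambda^{t_0} \leq 2cn$, which should allow the time window to shrink from $\Theta(n g (\log (ng))^2)$ to $\Delta_s = \Theta(n \cdot \max\{\log n, g\})$. I would first translate $\Lambda^{t_0} \leq 2cn$ into the two preconditions demanded by \cref{lem:g_adv_many_good_steps}. By \cref{lem:g_adv_bounds_on_quadratic}~$(i)$ with constant $\hat{c} = 2c$, there is a constant $c_s := c_s(\alpha, c_4, 2c) \geq 1$ such that $\Upsilon^{t_0} \leq c_s \cdot n g^2 =: T$. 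Also, each individual summand satisfies $e^{\alpha (|y_i^{t_0}| - c_4 g)^+} \leq \Lambda^{t_0} \leq 2cn$, hence $\max_i |y_i^{t_0}| \leq c_4 g + \log(2cn)/\alpha$, which for large enough $n$ is comfortably below $g(\log(ng))^2$.

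Second, I would apply \cref{lem:g_adv_many_good_steps} with this $T = c_s n g^2$ and $\hat{c} := \Delta_s \cdot g / T = \frac{60 \max\{\log n, g\}}{\alpha \eps r \cdot g} \geq \frac{60}{\alpha \eps r} \geq 1$, which also lies in the admissible range $T \in [ng^2, n^2 g^3/\hat{c}]$ for large $n$. Setting $t_1 := t_0 + \Delta_s$, this yields, conditional on the starting hypothesis,
\[
 \Pro{G_{t_0}^{t_1-1}(D) \geq r \cdot \Delta_s \,\left|\, \mathfrak{F}^{t_0}, \Lambda^{t_0} \leq 2cn\right.} \geq 1 - 2 (ng)^{-12}.
\]

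Third, the super-martingale property of $(\tilde{\Lambda}_{t_0}^{s})_{s \geq t_0}$ from \cref{lem:g_adv_lambda_tilde_is_supermartingale} combined with Markov's inequality gives, with probability at least $1 - (ng)^{-12}$, that $\tilde{\Lambda}_{t_0}^{t_1} \leq \Lambda^{t_0} \cdot (ng)^{12} \leq 2cn \cdot (ng)^{12}$. Fourth, assuming $\mathcal{E}_{t_0}^{t_1-1}$ fails to be broken (i.e.,\ $\Lambda^t > cn$ throughout) together with the above two events, I would plug into the definition of $\tilde{\Lambda}_{t_0}^{t_1}$ in \cref{eq:tilde_lambda} and use the identity $\frac{3\alpha}{n}(1-r) = \frac{\alpha \eps r}{2 n}$ (coming from $r = \frac{6}{6+\eps}$) to obtain
\[
 \Lambda^{t_1} \leq 2 cn \cdot (ng)^{12} \cdot \exp\!\Bigl(-\tfrac{\alpha \eps r}{2 n} \cdot \Delta_s\Bigr) = 2 cn \cdot (ng)^{12} \cdot \exp\!\bigl(-30 c_s \max\{\log n, g\}\bigr),
\]
which, since $c_s \geq 1$, is $\leq 2cn \cdot (ng)^{12} \cdot n^{-30} \cdot e^{-30 (g - \log n)^+} < 2n$ for sufficiently large $n$ in both regimes $g \leq \log n$ and $g > \log n$. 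This contradicts the deterministic bound $\Lambda^{t_1} \geq 2n$, so $\mathcal{E}_{t_0}^{t_1-1}$ must be violated, i.e.,\ $\Lambda^t \leq cn$ for some $t \in [t_0, t_1 -1] \subseteq [t_0, t_0 + \Delta_s]$. A union bound over the two failure probabilities yields the claimed bound $1 - (ng)^{-11}$.

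There is no real conceptual obstacle, since this is a reparameterization of the Recovery argument; the only care needed is in the constant-level bookkeeping. The mildly delicate point is calibrating $\Delta_s$: one needs $\Delta_s = \Omega(n \log n)$ so that the exponential drop overcomes the $(ng)^{12}$ slack from Markov \emph{and} also $\Delta_s = \Omega(ng)$ to handle large $g$, which is precisely the purpose of the $\max\{\log n, g\}$ factor; the choice $\Delta_s = \frac{60 c_s}{\alpha \eps r}\cdot n \cdot \max\{\log n, g\}$ handles both cases uniformly while keeping the Lemma~\ref{lem:g_adv_many_good_steps} parameter $\hat{c}$ bounded below by $1$ and within its valid range.
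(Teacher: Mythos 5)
Your proof is correct and follows essentially the same argument as the paper's: translate $\Lambda^{t_0}\leq 2cn$ into the quadratic and max-load preconditions of \cref{lem:g_adv_many_good_steps} via \cref{lem:g_adv_bounds_on_quadratic}~$(i)$, use the super-martingale bound from \cref{lem:g_adv_lambda_tilde_is_supermartingale} plus Markov, and derive a contradiction with the deterministic lower bound $\Lambda^{t_1}\geq 2n$. The only cosmetic difference from the paper is the parameterization in the application of \cref{lem:g_adv_many_good_steps}: you take $T=c_s ng^2$ with a $g$-dependent $\hat c = \frac{60\max\{\log n, g\}}{\alpha\eps r\, g}$, while the paper takes $T=c_s ng\cdot\max\{\log n,g\}$ to make $\hat c=\frac{60}{\alpha\eps r}$ a pure constant; both choices give $\hat c T g^{-1}=\Delta_s$ and both respect the admissible range for $T$, so this is a reparameterization of the same argument.
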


\subsection{Completing the Proof of Theorem~\ref{thm:g_adv_g_plus_logn_gap}} \label{sec:completing_the_proof_logn_plus_g}

We will now prove that starting with $\Lambda^{t_0} \leq 2cn$ implies that for any step $t_1 \in [t_0, t_0 + (ng)^2]$, we have \Whp~$\Gap(t_1) = \Oh(g + \log n)$.
\begin{lem} \label{lem:g_adv_good_gap_after_good_lambda}
Consider the \GAdvComp setting for any $g \geq 1$, the potential $\Lambda := \Lambda(\alpha, c_4g)$ with $\alpha = \frac{1}{18}$, $c_4 > 0$ as defined in \cref{lem:g_adv_good_step_drop} and $\Delta_s > 0$ as defined in \cref{lem:g_adv_stabilization}. Then, there exists a constant $\kappa \geq  \frac{1}{\alpha}$ such that for any steps $t_0 \geq 0$ and $t_1 \in (t_0, t_0+ (ng)^2]$,
\[
\Pro{\max_{i \in [n] } \left| y_i^{t_1} \right| \leq \kappa \cdot (g + \log n) \,\, \Big| \,\, \mathfrak{F}^{t_0}, \Lambda^{t_0} \leq cn} \geq 1 - (ng)^{-9}.
\]
\end{lem}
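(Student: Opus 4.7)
The proof will proceed in three stages: (a) iterate \cref{lem:g_adv_stabilization} to produce a stopping time $s \leq t_1$ with $t_1 - s \leq \Delta_s$ and $\Lambda^s \leq cn$; (b) apply Markov's inequality to a simple multiplicative supermartingale in order to control $\Lambda^{t_1}$; (c) translate the resulting bound on $\Lambda^{t_1}$ into a pointwise bound on $|y_i^{t_1}|$.

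For (a), inductively define $s_0 := t_0$, and, given $s_i < t_1$, let $s_{i+1}$ be the first step in $(s_i, s_i + \Delta_s]$ with $\Lambda^{s_{i+1}} \leq cn$. Since $\Lambda^{s_i} \leq cn \leq 2cn$, \cref{lem:g_adv_stabilization} guarantees the existence of such an $s_{i+1}$ with conditional probability at least $1 - (ng)^{-11}$. Iterate until the first index $i^{\star}$ with $s_{i^{\star}} \geq t_1$. Each iteration advances the stopping time by at least one step, so $i^{\star} \leq t_1 - t_0 \leq (ng)^2$, and a union bound over at most $(ng)^2$ iterations shows that with probability at least $1 - (ng)^{-9}/2$ all iterations succeed. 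On this event, $s := s_{i^{\star}-1}$ is a stopping time with $s \leq t_1$, $t_1 - s \leq \Delta_s$ and $\Lambda^s \leq cn$.

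For (b), since in \GAdvComp the probability allocation vector satisfies $\max_i q_i^t \leq 2/n$, \cref{lem:g_adv_bad_step_increase}~$(ii)$ implies that $M^t := \Lambda^t \cdot (1 + 3\alpha/n)^{-(t-s)}$ is a non-negative supermartingale for $t \geq s$. Markov's inequality conditional on $\mathfrak{F}^s$ yields
\[
 \Pro{\Lambda^{t_1} \geq L \,\big|\, \mathfrak{F}^s} \leq \frac{\Lambda^s \cdot (1 + 3\alpha/n)^{t_1 - s}}{L} \leq \frac{cn \cdot e^{3\alpha \Delta_s /n}}{L},
\]
so choosing $L := (ng)^{10} \cdot cn \cdot e^{3\alpha \Delta_s / n}$ makes this probability at most $(ng)^{-10}$. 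For (c), if $\Lambda^{t_1} \leq L$, then $e^{\alpha(|y_i^{t_1}|-c_4 g)^+} \leq L$ for every bin $i$, hence
\[
 |y_i^{t_1}| \leq c_4 g + \frac{\log L}{\alpha} = c_4 g + \frac{1}{\alpha}\Big(\log(cn) + 10\log(ng) + \tfrac{3\alpha \Delta_s}{n}\Big).
\]
Recalling that $\Delta_s/n = (60 c_s/(\alpha \eps r)) \cdot \max\{g,\log n\}$, the last term equals $(180 c_s/(\eps r))\max\{g,\log n\} = O(g + \log n)$, and, using $\log(ng) \leq \log n + g$, the middle term is also $O(g + \log n)$. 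Therefore $|y_i^{t_1}| \leq \kappa (g + \log n)$ for a sufficiently large constant $\kappa$ depending only on $\alpha,c_4,c_s,\eps,r$, and $\kappa \geq 1/\alpha$ is automatic. A final union bound over the failure events of (a) and (b) gives a total failure probability of at most $(ng)^{-9}$.

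The one delicate point---and the only real obstacle---is the growth factor $e^{3\alpha \Delta_s/n}$ coming from the supermartingale estimate, which is exponential in $\max\{g,\log n\}$ and therefore \emph{not} a polynomial in $ng$. The reason this does not spoil the bound is that $\alpha$ is a fixed constant and $\Delta_s/n = \Theta(\max\{g,\log n\})$, so $3\alpha\Delta_s/n$ is a \emph{linear} function of $g + \log n$ with constant coefficient, and can be absorbed into $\kappa(g+\log n)$ by taking $\kappa$ larger than $c_4 + (11 + 180 c_s/(\eps r))/\alpha$. This is precisely what forces $\kappa$ to be chosen as a somewhat large numerical constant in \cref{eq:g_adv_kappa_def}.
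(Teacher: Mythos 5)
Your proposal departs from the paper in step (b)--(c) in an interesting way: the paper never applies Markov's inequality at $t_1$. Instead, after establishing that $\Lambda \le cn$ occurs at least once every $\Delta_s$ steps (the event $\mathcal{M}_{t_0}^{t_1}$), the paper picks a step $s \in [t_1, t_1+\Delta_s]$ with $\Lambda^s \le cn$ and uses the deterministic fact that each $y_i$ can decrease by at most $1/n$ per step to bound $\max_i y_i^{t_1}$ from above; and it picks a separate step $s'\in[t_1-\Delta_s,t_1]$ with $\Lambda^{s'} \le cn$ to bound $\min_i y_i^{t_1}$ from below. Your Markov-on-a-multiplicative-supermartingale route is a genuine alternative: it only needs a small-$\Lambda$ step \emph{before} $t_1$, and the growth factor $e^{3\alpha\Delta_s/n}$ is, as you observe, linear in $g+\log n$ in the exponent and can be absorbed into $\kappa$. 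The price is an extra $(ng)^{-10}$ failure probability from the Markov step, whereas the paper's smoothness argument is deterministic given $\mathcal{M}_{t_0}^{t_1}$. Both routes for (b)--(c) are fine.

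However, part (a) has a genuine gap. You apply \cref{lem:g_adv_stabilization} at a step $s_i$ where $\Lambda^{s_i} \leq cn$, claiming it produces, with probability $1-(ng)^{-11}$, a step $s_{i+1}\in(s_i, s_i+\Delta_s]$ with $\Lambda^{s_{i+1}} \le cn$. But the lemma only says $\Pro{\bigcup_{t\in[s_i,s_i+\Delta_s]}\{\Lambda^t \le cn\} \mid \mathfrak{F}^{s_i}, \Lambda^{s_i}\le 2cn}\ge 1-(ng)^{-11}$, and when $\Lambda^{s_i}\le cn$ this union is already witnessed by $t=s_i$ itself, so the bound carries no information about times strictly after $s_i$. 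The iteration therefore does not advance. The correct device -- which is exactly the paper's -- is the observation that $\Lambda^{t+1}\le 2\Lambda^t$, so once $\Lambda \le cn$ it can only reach $>2cn$ by passing through a step where $\Lambda\in(cn,2cn]$; one defines the interlaced sequences $\tau_i := \inf\{\tau>s_{i-1}: \Lambda^\tau\in(cn,2cn]\}$ and $s_i := \inf\{s>\tau_i: \Lambda^s \le cn\}$, and applies the stabilization lemma at the $\tau_i$'s, where the starting condition $\Lambda^{\tau_i}\in(cn,2cn]$ forces the resulting small step to be strictly later. Relatedly, your $s := s_{i^\star-1}$ is not a stopping time with respect to $\mathfrak{F}$ (one must know $s_{i^\star}$, which is in the future of $s_{i^\star-1}$, to identify $i^\star$), so the conditioning on $\mathfrak{F}^s$ in part (b) is not legitimate as written; you should instead condition at a forward-looking stopping time such as $\sigma := \inf\{\tau \ge t_1 - \Delta_s: \Lambda^\tau \le cn\}$ and argue that $\sigma \le t_1$ with high probability.
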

\begin{proof} 
Consider any step $t_0$ with $\Lambda^{t_0} \leq cn$. We define the event
\[ \mathcal{M}_{t_0}^{t_1} := \left\{\text{for all }t\in [t_0, t_1 ]\text{ there exists } s\in [t, t+\Delta_s  ] \text{ such that }\Lambda^s \leq cn\right\},\]	
that is, if $\mathcal{M}_{t_0}^{t_1}$ holds then we have $\Lambda^s \leq cn $ at least once every $\Delta_s := \frac{60 c_s}{\alpha \eps r} \cdot n \cdot \max\{\log n, g\}$ steps. 

Assume now that $\mathcal{M}_{t_0}^{t_1}$ holds. We will show that \[\max_{i \in [n]} \big| y_i^{t_1} \big| \leq \kappa \cdot (g + \log n).
\]
Choosing $t=t_1$, implies that there exists $s \in [t_1,t_1+\Delta_s]$ such that $\Lambda^s \leq cn$, which in turn implies by definition of $\Lambda$ that $\max_{i \in [n]} |y_i^s| \leq \frac{1}{\alpha} \cdot \log (cn) + c_4g < \frac{2}{\alpha} \cdot \log n + c_4g$. Clearly, any $y_i^t$ can decrease by at most $1/n$ in each step, and from this it follows that if $\mathcal{M}_{t_0}^{t_1}$ holds, then 
\[
\max_{i \in [n]} y_i^{t_1} \leq \max_{i \in [n]} \left|y_i^s \right| + \frac{\Delta_s}{n} \leq \kappa \cdot (g + \log n), 
\]
for the constant 
\begin{align} \label{eq:g_adv_kappa_def}
\kappa := \frac{2}{\alpha} + c_4 + \frac{\Delta_s}{n \cdot \max\{\log n, g\}} = \frac{2}{\alpha} + c_4 + \frac{60c_s}{\alpha\eps r} > 0.
\end{align}

If $t_1 \geq t_0 + \Delta_s$ and $\mathcal{M}_{t_0}^{t_1}$ holds, then choosing $t = t_1 - \Delta_s$, there exists $s \in [t_1 - \Delta_s, t_1]$ such that $\Lambda^s \leq cn$. (In case $t_1 < t_0 + \Delta_s$, then we arrive at the same conclusion by choosing $s = t_0$ and using the precondition $\Lambda^{t_0} \leq cn$). %
This in turn implies
$
 \max_{i \in [n]} | y_i^s  | \leq \frac{2}{\alpha} \cdot \log n + c_4g.
$
Hence 
\[
\min_{i \in [n]} y_i^{t_1} \geq -\max_{i \in [n]} \left|y_i^{s}\right| -  \frac{\Delta_s}{n} 
\geq -\kappa \cdot (g + \log n).
\]
 Hence, $\mathcal{M}_{t_0}^{t_1}$ together with the precondition on $\Lambda^{t_0} \leq cn$ implies that $\max_{i \in [n]} \big| y_i^{t_1} \big| \leq \kappa \cdot (g + \log n)$. It remains to bound $\Pro{\left. \neg \mathcal{M}_{t_0}^{t_1} \,\right|\, \mathfrak{F}^{t_0}, \Lambda^{t_0} \leq cn}$.

Note that if for some step $j_1$ we have that $\Lambda^{j_1} \leq cn$ and for some $j_2 \geq j_1$ that $\Lambda^{j_2} > 2cn$, then there must exist $j \in (j_1, j_2)$ such that $\Lambda^j \in (cn, 2cn]$, since for every $t \geq 0$ it holds that $\Lambda^{t+1} \leq \Lambda^t \cdot e^{\alpha } \leq 2 \Lambda^t$, as $\alpha \leq 1/2$. 
Let $t_0 < \tau_1 < \tau_2<\cdots $ and $t_0 =: s_0<s_1<\cdots $ be two interlaced sequences defined recursively for $i\geq 1$ by \[\tau_i := \inf\left\{\tau >  s_{i-1}: \Lambda^{\tau} \in (cn, 2cn] \right\}\qquad\text{and} \qquad s_i := \inf\left\{s> \tau_i : \Lambda^s \leq cn\right\}. \] 
  Thus we have
  \[
  t_0 = s_0 < \tau_1 < s_1 < \tau_2 <s_2 < \cdots, 
  \]
  and since $\tau_i > \tau_{i-1}$ we have $ \tau_{t_1 - t_0}\geq t_1$. Therefore, if the event $\cap_{i=1}^{t_1 - t_0}\{s_i-\tau_i\leq \Delta_s \} $ holds, then also $ \mathcal{M}_{t_0}^{t_1}$ holds.

   Recall that by \cref{lem:g_adv_stabilization} we have for any $i=1,2,\ldots, t_1 - t_0$ and any $\tau = t_0 + 1, \ldots , t_1$ \[
 \Pro{ \left.\bigcup_{t \in [\tau_i,\tau_i + \Delta_s]} \left\{\Lambda^{t} \leq c n \right\} ~\right|~ \mathfrak{F}^{\tau} , \; \Lambda^{\tau} \in (cn, 2cn], \tau_i = \tau } \geq  1 - (ng)^{-11},
  \] and by negating and the definition of $s_i$,
  \[
  \Pro{s_i-\tau_i> \Delta_s \, \Big| \, \mathfrak{F}^{\tau}, \Lambda^{\tau} \in (cn, 2cn], \tau_i = \tau } \leq (ng)^{-11}.
  \]
  Since the above bound holds for any $i \geq 1$ and $\mathfrak{F}^{\tau}$, with $\tau_i= \tau$, it follows by the union bound over all $i=1,2,\ldots,t_1 - t_0$, as $t_1 - t_0 \leq (ng)^2$,
  \[
  \Pro{\left. \neg \mathcal{M}_{t_0}^{t_1} \,\right|\, \mathfrak{F}^{t_0}, \Lambda^{t_0} \leq cn }\leq (t_1 - t_0) \cdot (ng)^{-11} \leq (ng)^{-9}. \qedhere\]
\end{proof}

Finally, we deduce that for the \GAdvComp setting, for an arbitrary step $m$ \Whp~$\Gap(m) = \Oh(g + \log n)$.

\begin{thm} \label{thm:g_adv_g_plus_logn_gap}
\GBoundedLognGap
\end{thm}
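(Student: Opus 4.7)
\textbf{Proof plan for Theorem \ref{thm:g_adv_g_plus_logn_gap}.} My plan is to assemble the theorem from the two building blocks already in place: the recovery lemma \cref{lem:g_adv_recovery}~$(ii)$, which asserts that from any starting step we hit $\Lambda \leq cn$ within $\Delta_r = \Theta(ng\,(\log(ng))^2)$ steps with probability at least $1 - (ng)^{-11}$, and the deduction lemma \cref{lem:g_adv_good_gap_after_good_lambda}, which says that once $\Lambda^{t_0} \leq cn$ holds, the gap stays bounded by $\kappa(g+\log n)$ at any later step in the window $(t_0,t_0+(ng)^2]$ with conditional probability $\geq 1 - (ng)^{-9}$. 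Chaining these, I want to place a recovery point $t^\ast$ inside $[m-\Delta_r,m]$ and then deduce a gap bound at the exact step $m$.

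First I would dispatch the trivial case $m \leq \Delta_r$: set $t_0 := 0$, where deterministically $\Lambda^0 = 2n \leq cn$ (since the constant $c = 18\cdot 12$ from \cref{lem:g_adv_good_step_drop} is at least $2$). Then \cref{lem:g_adv_good_gap_after_good_lambda} applied with $t_1 := m$ (which lies in the admissible range $(0,(ng)^2]$) directly yields the desired bound with probability $\geq 1 - (ng)^{-9}$.

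For the main case $m > \Delta_r$, I would set $t_0 := m - \Delta_r$ and apply \cref{lem:g_adv_recovery}~$(ii)$ to obtain a step $t^\ast \in [t_0, t_0+\Delta_r] = [m-\Delta_r, m]$ with $\Lambda^{t^\ast} \leq cn$, which holds with probability at least $1 - (ng)^{-11}$. Conditioning on $\mathfrak{F}^{t^\ast}$ on this event and invoking \cref{lem:g_adv_good_gap_after_good_lambda} with starting step $t^\ast$ and target $t_1 := m$ gives $\max_{i \in [n]} |y_i^m| \leq \kappa(g+\log n)$ with further conditional probability $\geq 1 - (ng)^{-9}$. The hypothesis $t_1 - t^\ast \leq (ng)^2$ of that lemma is satisfied since $m - t^\ast \leq \Delta_r = \Theta(ng\,(\log(ng))^2) \ll (ng)^2$ for sufficiently large $n$. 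The only technicality is the degenerate case $t^\ast = m$, which is not covered by the open inequality $t_1 > t_0$ in \cref{lem:g_adv_good_gap_after_good_lambda}, but is immediate from the definition of $\Lambda$: $\Lambda^m \leq cn$ forces $\max_i |y_i^m| \leq \tfrac{1}{\alpha}\log(cn) + c_4 g \leq \kappa(g+\log n)$.

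A final union bound over the two failure events gives total failure probability at most $(ng)^{-11} + (ng)^{-9} \leq 2(ng)^{-9}$, which is exactly the claimed bound. I do not expect a genuine obstacle here: all the serious work (the interplay between $\Lambda$, $\Delta$ and $\Upsilon$, the constant fraction of good steps in \cref{lem:g_adv_many_good_steps}, the super-martingale property of $\tilde{\Lambda}$, and the recovery/stabilization phases of \cref{fig:g_adv_recovery_and_stabilization}) has already been done, so this last theorem should reduce to careful bookkeeping of the conditioning, taking the law of total probability over the recovery event, and verifying that $\Delta_r$ fits inside the time window permitted by \cref{lem:g_adv_good_gap_after_good_lambda}.
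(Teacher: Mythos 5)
Your proposal takes essentially the same route as the paper: dispatch $m < \Delta_r$ (the paper uses $m < \Delta_r$; you use $m \leq \Delta_r$, which is fine) via $\Lambda^0 = 2n \leq cn$, then for larger $m$ apply \cref{lem:g_adv_recovery}~$(ii)$ at $t_0 = m - \Delta_r$ to get a recovery step $\tau = t^\ast \leq m$ with $\Lambda^{t^\ast} \leq cn$, and finally feed this into \cref{lem:g_adv_good_gap_after_good_lambda} via the law of total probability over $\tau$, landing on the same $2(ng)^{-9}$ bound. Your explicit handling of the edge case $t^\ast = m$ (excluded by the open interval $(t_0, t_0 + (ng)^2]$ in the statement of \cref{lem:g_adv_good_gap_after_good_lambda}) is actually slightly more careful than the paper, which applies the lemma at $s = m$ without comment; your direct deduction $\Lambda^m \leq cn \Rightarrow \max_i |y_i^m| \leq \frac{1}{\alpha}\log(cn) + c_4g \leq \kappa(g+\log n)$ is correct and closes that small gap cleanly.
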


\begin{proof}
Consider an arbitrary step $m \geq 0$ and recall that $\Delta_r := \frac{60 c_3^2 c_r}{\alpha\eps r} \cdot n g \cdot (\log (ng))^2$. If $m < \Delta_r$, then the claim follows by \cref{lem:g_adv_good_gap_after_good_lambda} as $\Lambda^{0} = 2n \leq cn$ and $\Delta_r < (ng)^2$.

Otherwise, let $t_0 := m - \Delta_r$. %
Firstly, by the recovery lemma (\cref{lem:g_adv_recovery}~$(ii)$), we get
\begin{align} \label{eq:g_adv_recovery_lambda}
 \Pro{ \bigcup_{t \in [t_0,t_0 + \Delta_r]} \left\{\Lambda^{t} \leq cn \right\}} \geq 1 - (ng)^{-11}.
\end{align}
Hence for $\tau:=\inf\{ s \geq t_0 \colon \Lambda^s \leq cn \}$ we have $\Pro{ \tau \leq m } \geq 1-(ng)^{-11}$, as $t_0 + \Delta_r = m$.

Secondly, using \cref{lem:g_adv_good_gap_after_good_lambda}, there exists a constant $\kappa := \kappa(\alpha, \eps) > 0$ such that for any step $s \in [t_0, m]$,
\begin{align} \label{eq:g_adv_stabilisation_minmax_gap}
\Pro{\max_{i \in [n]} \left| y_i^m \right| \leq \kappa \cdot (g + \log n) ~\Big|~ \mathfrak{F}^s , \Lambda^s \leq cn} \geq 1 - (ng)^{-9}.
\end{align}
Combining the two inequalities from above, we conclude the proof
\begin{align*}
    \Pro { \max_{i \in [n]} \left| y_i^m \right| \leq \kappa \cdot (g + \log n)} &\geq \sum_{s=t_0}^{m} \Pro{ \max_{i \in [n]} \left| y_i^m \right| \leq \kappa \cdot (g + \log n)  ~\Big|~ \tau = s } \cdot \Pro{ \tau = s} \\
    &\geq \sum_{s=t_0}^{m} \Pro{ \max_{i \in [n]} \left| y_i^m \right| \leq \kappa \cdot (g + \log n)  ~\Big|~ \mathfrak{F}^{s}, \Lambda^{s} \leq cn } \cdot \Pro{ \tau = s} \\
    & \!\!\!\stackrel{(\text{\ref{eq:g_adv_stabilisation_minmax_gap}})}{\geq} \left( 1-(ng)^{-9} \right) \cdot \Pro{ \tau \leq m} \\
    & \!\!\!\stackrel{(\text{\ref{eq:g_adv_recovery_lambda}})}{\geq} \left(1 - (ng)^{-9} \right) \cdot \left(1 - (ng)^{-11} \right) \geq 1 - 2 \cdot (ng)^{-9}.\qedhere
\end{align*}
\end{proof}

\section{Upper Bound of \texorpdfstring{$\Oh(\frac{g}{\log g} \cdot \log \log n)$}{O(g/log g * log log n)} for  \texorpdfstring{$g$-\textsc{Adv-Comp}}{g-Adv-Comp} with \texorpdfstring{$g \leq \log n$}{g <= log n}: Outline}\label{sec:g_adv_upper_bound_for_small_g_outline}

In this section, we will outline the proof for the upper bound on the gap of $\Oh( \frac{g}{\log g} \cdot \log \log n)$ for the \GAdvComp setting with $1 < g \leq \log n$ (which also implies for $g \in \{0,1\}$ the $\Oh(\log \log n)$ gap bound by monotonicity). This matches the lower bound for the \GMyopicComp process proven later in \cref{thm:g_myopic_layered_induction_lower_bound} up to multiplicative constants. The upper bound proof will be completed in \cref{sec:g_adv_base_case,sec:super_exponential_potentials,sec:g_adv_layered_induction}.

\newcommand{\GBoundedStrongestBound}{
Consider the \GAdvComp setting for any $g \in (1, \log n]$. Then, there exists a constant $\tilde{\kappa} > 0$ such that for any step $m \geq 0$,
\[
\Pro{\Gap(m) \leq \tilde{\kappa} \cdot \frac{g}{\log g} \cdot \log \log n} \geq 1 - n^{-3}.
\]
}

{\renewcommand{\thethm}{\ref{thm:g_adv_strongest_bound}}
	\begin{thm}[\textbf{Restated, page~\pageref{thm:g_adv_strongest_bound}}]
\GBoundedStrongestBound
	\end{thm}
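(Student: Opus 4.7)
The plan is to carry out a layered induction over $K := \Theta(\log \log n)$ super-exponential potential functions $\Phi_0, \Phi_1, \ldots, \Phi_K$, each tracking the bins whose normalized load exceeds a decreasing sequence of thresholds $z_0 > z_1 > \cdots > z_K$, extending the construction of \cite{LS21} in order to handle the rounding issues introduced by the noise parameter $g$. The base case $\Phi_0 = \Oh(n)$ is furnished directly by \cref{thm:g_adv_g_plus_logn_gap}: with high probability the gap is at most $\kappa(g + \log n) = \Oh(\log n)$ (since $g \leq \log n$), so anchoring the first potential at $z_0 = \Theta(\log n)$ makes it $\Oh(n)$. The thresholds are then chosen so that $z_j - z_{j+1} = \Theta(g / \log g)$, so that after $K$ layers we reach $z_K = \Oh\bigl(\tfrac{g}{\log g} \cdot \log \log n\bigr)$; the final conclusion $\Phi_K = \Oh(n)$ will be translated into the claimed gap bound via a Markov-plus-smoothness argument of the same flavour as in \cref{sec:completing_the_proof_logn_plus_g}.

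For each inductive step, the goal is to transfer $\Phi_j^{t_0} = \Oh(n)$ at some step $t_0$ into $\Phi_{j+1}^{t_1} = \Oh(n)$ at some later step $t_1 = t_0 + \Oh((ng)^2)$. Mirroring the architecture of the $\Oh(g+\log n)$ proof, this decomposes into a \emph{recovery} sub-step (a drop inequality for $\Phi_{j+1}$ conditional on $\Phi_j = \Oh(n)$ forces $\Phi_{j+1}$ down within polynomially many steps) and a \emph{stabilization} sub-step (showing that once small, $\Phi_{j+1}$ returns to small values often enough that the maximum load cannot drift up between visits). The drop inequality for $\Phi_{j+1}$ rests on an overload-bias argument in the spirit of \cref{lem:g_adv_good_step_drop}: the premise $\Phi_j = \Oh(n)$ bounds the number of bins above the intermediate threshold, which in turn caps the probability that a sampled pair contains a bin above $z_{j+1}$ whose partner lies within the $g$-window the adversary can manipulate; all remaining comparisons are enforced in favour of the less loaded bin by \GAdvComp, producing the required bias.

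The main obstacle, and the reason for using \emph{super}-exponential rather than plain exponential potentials, is the following tension. To extract a factor of $\log g$ per layer, the effective smoothing parameter of $\Phi_{j+1}$ near $z_{j+1}$ must be of order $\log g / g$; but then a single adversarial swap across a load gap of up to $g$ would distort a plain exponential potential multiplicatively by $e^{(\log g / g) \cdot g} = \Theta(g)$, overwhelming the drop obtained from the overload bias. The super-exponential construction circumvents this by keeping $\phi_{j+1}$ essentially flat on loads below $z_{j+1}$ and sharply increasing only above it, so that an adversarial swap involving loads straddling $z_{j+1}$ within a $g$-window costs only a bounded factor that is absorbed into the recovery slack. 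Once these potential-specific drop, recovery and stabilization lemmas are established (the technical heart, carried out in \cref{sec:g_adv_base_case,sec:super_exponential_potentials,sec:g_adv_layered_induction}), the remaining work is to verify the concrete constants layer by layer and to union-bound the $K = \Theta(\log \log n)$ failure events against the $n^{-3}$ probability budget.
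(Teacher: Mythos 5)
Your overall plan---a layered induction over super-exponential potentials, with an overload-bias drop inequality and a recovery/stabilization decomposition---does match the paper's architecture, but the concrete skeleton you propose does not give the claimed bound. Your threshold arithmetic fails: with $z_0 = \Theta(\log n)$, steps $z_j - z_{j+1} = \Theta(g/\log g)$, and $K = \Theta(\log\log n)$ layers, the final threshold is $z_K = \Theta(\log n) - \Theta\!\big(\tfrac{g}{\log g}\log\log n\big)$, which is still $\Theta(\log n)$ for $g = \Theta(1)$, not $\Theta(\log\log n)$. The paper goes the other way: it uses an \emph{increasing} sequence of offsets $z_j = c_5 g + \lceil 4/\alpha_2\rceil\,j\,g$ starting at $z_0 = \Theta(g)$, paired with smoothing parameters $\phi_j = \alpha_2(\log n)g^{j-k}$ that \emph{multiply by $g$} per layer, over $k = \Theta(\log\log n/\log g)$ layers. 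The gap bound implied by $\Phi_j = \Oh(n)$ is $z_j + (\log n)/\phi_j$: the second term falls from $\Theta(\log n)$ at $j=0$ to $\Theta(g)$ at $j=k-1$, while $z_j$ grows only linearly, yielding $\Theta(kg)$. Your stated reason for needing super-exponential potentials (a factor-$g$ distortion from a single swap, smoothing $\approx \log g/g$) is also not the mechanism; the real point is that $\Phi_j = \Oh(n)$ caps the number of bins above $z_{j+1}-1-g$ at $\Oh\!\big(n\,e^{-\phi_j\cdot\Theta(g)}\big)$, which decays doubly exponentially in $j$ and lets the recursion terminate after $\Theta(\log\log n/\log g)$ rather than $\Theta(\log n)$ layers.

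You also claim the base case is "furnished directly" by \cref{thm:g_adv_g_plus_logn_gap}, but the induction step needs $\Phi_0^s = \Oh(n)$ simultaneously for \emph{all} steps $s$ in an interval of length $\Theta(n\log^5 n)$ --- not merely a single-step \Whp~bound. Establishing that interval guarantee is the entire content of \cref{sec:g_adv_base_case}: one tightens the stabilization from every $\Oh(n(g+\log n))$ steps (\cref{lem:g_adv_stabilization}) down to every $\Oh(ng)$ steps, switches to the $V$ potential with the smaller constant $\alpha_1$, works with the modified process $\mathcal{Q}_{g,r_0}$ to enforce a deterministic $\Oh(g+\log n)$ load bound (so that a bounded-difference condition holds), and applies Azuma. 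Having the offset of $\Phi_0$ at $c_5 g$ rather than $\Theta(\log n)$ is exactly what makes this workable, since the potential can then only swell by an $e^{\Theta(g)}$ factor between visits $\Theta(ng)$ steps apart. Finally, the error accounting you wave past is not a simple union bound over a constant per layer: each application of the concentration theorem inflates the failure probability by a $(\log n)^8$ factor, and one has to check that $(\log n)^{8k} = n^{o(1)}$ so that the cumulative loss still fits the $n^{-3}$ budget.
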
 }
	\addtocounter{thm}{-1}

\subsection{Definitions of Super-Exponential Potential Functions} \label{sec:super_exp_potentials_def}

The proof of this theorem employs some kind of layered induction over $k$ different, super-exponential potential functions. We will be using the following definition of super-exponential potentials.

The \emph{super-exponential potential function} with smoothing parameter $\phi \geq 1$ and integer offset $z := z(n) > 0$ is defined at any step $t \geq 0$ as
\begin{align} \label{defi:super_exponential_potential}
\Phi^t := \Phi^t(\phi, z) := \sum_{i = 1}^n \Phi_i^t := \sum_{i = 1}^n e^{\phi \cdot (y_i^t - z)^+},
\end{align}
where $u^+ := \max\{u, 0\}$.

These are similar to the potential functions used in \cite[Section 6]{LS21}, but with two extensions: $(i)$~the potential functions are ``smoother'', e.g., allowing us to deduce gap bounds between $\sqrt{\log n}$ and $\log n$, and $(ii)$~the concentration of the base potential in the layered induction follows from strengthening the stabilization theorem in \cref{sec:g_adv_g_plus_logn_bound}.

Let \begin{align} \label{eq:g_adv_alpha_1_def}
\alpha_1 := \frac{1}{6 \kappa} \leq \frac{1}{6 \cdot 18},
\end{align}
for $\kappa \geq \frac{1}{\alpha} = 18 > 0$ the constant in \cref{eq:g_adv_kappa_def} in  \cref{lem:g_adv_good_gap_after_good_lambda} and 
\begin{align} \label{eq:g_adv_alpha_2_def}
\alpha_2 := \frac{\alpha_1}{84} \leq \frac{1}{84 \cdot 6 \cdot 18}.
\end{align}
We define the function \label{sec:g_adv_k_def}\[
f(k) := (\alpha_1 \log n)^{1/k} = e^{\frac{1}{k} \cdot \log (\alpha_1 \log n)},
\]
which is monotone decreasing in $k > 0$, and for $k=1$, $f(1)= \alpha_1 \log n$. This implies that for every $1 < g < \alpha_1 \log n$, there exists a unique integer $k := k(g) \geq 2$ satisfying,
\[
  (\alpha_1 \log n)^{1/k} \leq g < (\alpha_1 \log n)^{1/(k-1)}.
\]
This definition implies that $k = \Theta\big(\frac{\log \log n}{\log g}\big)$ and that $k = \Oh(\log \log n)$, since $g > 1$.

Keeping in mind the previous inequality, we will be making the slightly stronger assumption for $g = \Omega(1)$ (see \cref{clm:tilde_g_justification}) that
\begin{align} \label{eq:tilde_g_stronger_assumption}
(\alpha_1 \cdot (\log n))^{1/k} \leq g < \left(\frac{\alpha_2}{4} \cdot (\log n)\right)^{1/(k-1)}.
\end{align}
For any $g$ satisfying $\big(\frac{\alpha_2}{4} \cdot (\log n)\big)^{1/k} \leq g < (\alpha_1 \cdot (\log n))^{1/k}$, we will obtain the stated $\Oh\big( \frac{g}{\log g} \cdot \log \log n\big)$ bound by analyzing the $\tilde{g}$-\AdvComp setting for $\tilde{g} = (\alpha_1 \cdot (\log n))^{1/k} > g$, since
\[
\frac{\tilde{g}}{\log \tilde{g}} 
 \leq \frac{\tilde{g}}{\log g} 
  = \frac{\tilde{g}}{g} \cdot \frac{g}{\log g} 
 \leq \left( \frac{4\alpha_1}{\alpha_2} \right)^{1/k} \cdot \frac{g}{\log g} = \Oh\left(\frac{g}{\log g}\right).
\]

We will now define the super-exponential potential functions $\Phi_0, \ldots, \Phi_{k-1}$. The base potential function $\Phi_0$ is just an exponential potential (i.e., has a constant smoothing parameter) defined as 
\begin{align} \label{eq:g_adv_phi_0_def}
\Phi_0^s := \Phi_0^s(\alpha_2, z_0) := \sum_{i = 1}^n \Phi_{0, i}^s := \sum_{i = 1}^n \exp\Big(\alpha_2 \cdot (y_i^s - z_0)^+ \Big),
\end{align}
where $\alpha_2 := \frac{\alpha_1}{84}$ and $z_0:=c_5 \cdot g$ for some sufficiently large constant integer $c_5 > 0$ (to be defined in \cref{eq:g_adv_c5_def} in \cref{lem:g_adv_base_case_for_modified_process}). 
Further, we define for any integer $1 \leq j \leq k-1$,
\begin{align} \label{eq:g_adv_phi_j_def}
\Phi_j^s := \Phi_j^s(\alpha_2 \cdot (\log n) \cdot g^{j - k}, z_j) := \sum_{i = 1}^n \Phi_{j, i}^s := \sum_{i = 1}^n \exp\Big(\alpha_2 \cdot (\log n) \cdot g^{j - k} \cdot (y_i^s - z_j)^+ \Big),
\end{align}
where the offsets are given by 
\begin{align}
z_j := c_5 \cdot g + \left\lceil \frac{4}{\alpha_2} \right\rceil \cdot j \cdot g. \label{eq:g_adv_offsets}
\end{align}

Note that for most choices of $g$ and $j$, the smoothing parameter is $\omega(1)$, motivating the term ``super-exponential''. For $g = (\log n)^{1/k}$, the potential functions match in form those defined in \cite{LS21}.

As concrete example, consider $g = (\log n)^{5/12}$, for which we have $k = 3$ and the potential functions are:
\begin{align*}
\Phi_0^s & = \sum_{i = 1}^n \exp\Big(\alpha_2 \cdot (y_i^s - z_0)^+ \Big), \\
\Phi_1^s & = \sum_{i = 1}^n \exp\Big(\alpha_2 \cdot (\log n)^{2/12} \cdot (y_i^s - z_1)^+ \Big), \\
\Phi_2^s & = \sum_{i = 1}^n \exp\Big(\alpha_2 \cdot (\log n)^{7/12} \cdot (y_i^s - z_2)^+ \Big).
\end{align*}
When each of these potentials is $\Oh(n)$, this implies increasingly stronger bounds on the gap: $\Oh(g + \log n)$, $\Oh(2g + (\log n)^{10/12})$ and finally $\Oh( 3g + (\log n)^{5/12} ) = \Oh(g)$.

In general, we will employ this series of potential functions $(\Phi_j)_{j = 0}^{k-1}$ to analyze the process over the time-interval $[m - n \log^5 n, m]$, with the goal of eventually establishing an $\Oh(g \cdot k)$ gap at time $m$. By definition of $k$, this will imply that $\Gap(m) = \Oh\big(\frac{g}{\log g} \cdot \log \log n\big)$. %

\subsection{The Layered Induction Argument}

The next lemma (\cref{lem:new_inductive_step}) formalizes the inductive argument outlined in \cref{sec:super_exp_potentials_def}. It shows that if for all steps $s$ within some suitable time-interval, the number of bins with load at least $z_{j}$ is small, then the number of bins with load at least $z_{j+1}$ is even smaller. This ``even smaller'' is captured by the (non-constant) base of $\Phi_j$, which increases in $j$; however, this comes at the cost of reducing the time-interval slightly by an additive $2n \log^4 n$ term.
Finally, for $j=k-1$, we conclude that at step $m$, there are no bins with load at least $z_k := c_5 \cdot g + \big\lceil\frac{4}{\alpha_2}\big\rceil \cdot k \cdot g$, implying that  $\Gap(m) = \Oh(\frac{g}{\log g} \cdot \log \log n)$ and establishing \cref{thm:g_adv_strongest_bound}.  %

{\renewcommand{\thelem}{\ref{lem:new_inductive_step}}
	\begin{lem}[\textbf{Induction step -- Simplified version, page~\pageref{lem:new_inductive_step}}]
Consider the \GAdvComp setting for any $g= \Omega(1)$ satisfying $(\alpha_1 \log n)^{1/k} < g \leq (\frac{\alpha_2}{4} \log n)^{1/(k-1)}$ for some integer $k \geq 2$, for $\alpha_1, \alpha_2 > 0$ defined in \cref{eq:g_adv_alpha_1_def} and \cref{eq:g_adv_alpha_2_def}. Then, for any integer $1 \leq j \leq k-1$ and any step $m \geq 0$, if it holds that
\[
  \Pro{\bigcap_{s \in [m - 2n(k - j + 1) \cdot \log^4 n,m]} \{ \Phi_{j-1}^{s} = \Oh(n) \} } \geq 1 - \frac{(\log n)^{8(j-1)}}{n^4},
\]
then it also follows that
\[
  \Pro{ \bigcap_{s \in [m - 2n(k - j) \cdot \log^4 n, m]} \{ \Phi_{j}^{s} = \Oh(n) \} } \geq 1 - \frac{(\log n)^{8j}}{n^4}.
\]
	\end{lem}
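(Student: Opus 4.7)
I will mimic the recovery--stabilization blueprint of \cref{sec:g_adv_g_plus_logn_bound}, with the ``good step'' condition $\{\Delta^s \leq Dng\}$ replaced by the event $\mathcal{G}^s := \{\Phi_{j-1}^s = \Oh(n)\}$ supplied by the induction hypothesis. On the high-probability event $\mathcal{E} := \bigcap_{s \in [m - 2n(k-j+1)\log^4 n,\, m]} \mathcal{G}^s$ every step is ``good'', while on its complement, which has probability at most $(\log n)^{8(j-1)}/n^4$, the analysis is given up.

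\textbf{Conditional drop inequality.} Write $\phi_j := \alpha_2(\log n)g^{j-k}$, observing that $\phi_{j-1}\,g = \phi_j$ and $z_j - g - z_{j-1} = (\lceil 4/\alpha_2\rceil - 1)\,g$. Then on $\mathcal{G}^s$ the number of bins with $y_i^s \geq z_j - g$ is at most $\Oh(n)\cdot e^{-(\lceil 4/\alpha_2\rceil - 1)\phi_j} \leq \Oh(n)\cdot e^{-3\phi_j/\alpha_2}$. In \GAdvComp, any bin $i$ with $y_i^s \geq z_j$ can be allocated to only when it is paired with a bin of load strictly greater than $y_i^s - g$, so its allocation probability is $q_i^s \leq \Oh(e^{-3\phi_j/\alpha_2}/n)$. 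Writing the one-step multiplicative change of $\Phi_{j,i}^s$ (for $y_i^s \geq z_j + 1$) as $e^{-\phi_j/n}\bigl(1 + q_i^s(e^{\phi_j}-1)\bigr)$, the first factor dominates because $\alpha_2$ (see \cref{eq:g_adv_alpha_2_def}) is chosen small enough that $3/\alpha_2 \gg 1$, and the assumption $g < (\alpha_2 \log n/4)^{1/(k-1)}$ guarantees $\phi_j \geq 4$ for every $j \geq 1$. Combining this with a generic bound for bins near the offset, analogous to \cref{lem:g_adv_bad_step_increase}, yields after aggregation
\[
\mathbb{E}\bigl[\Phi_j^{s+1}\,\bigm|\,\mathfrak{F}^s,\,\mathcal{G}^s\bigr] \leq \Phi_j^s \cdot \Bigl(1 - \frac{\phi_j \varepsilon}{n}\Bigr) + \Oh(\phi_j),
\]
for some absolute constant $\varepsilon > 0$, whenever $\Phi_j^s$ exceeds some fixed multiple $C' n$.

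\textbf{Recovery and stabilization.} I then define a stopped, drop-adjusted super-martingale $\tilde{\Phi}_j$ in the spirit of $\tilde{\Lambda}$ from \cref{sec:g_adv_adjusted_exp_potential}, frozen the first time $\Phi_j^s$ falls below $C' n$ or the first time $\mathcal{G}^s$ fails. Setting $t_0 := m - 2n(k-j+1)\log^4 n$, \cref{thm:g_adv_g_plus_logn_gap} gives $\log \Phi_j^{t_0} = \Oh(\phi_j \log n)$ with probability $1 - \Oh(n^{-9})$, and Markov's inequality applied to $\tilde{\Phi}_j$ shows that on $\mathcal{E}$ some step $t_1 \in [t_0,\, m - 2n(k-j)\log^4 n]$ satisfies $\Phi_j^{t_1} \leq C' n$, since the required log-scale drop of $\Oh(\phi_j \log n)$ fits comfortably inside the buffer of $2n \log^4 n$ steps at a per-step rate of $\phi_j \varepsilon/n$. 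For stabilization on $[t_1, m]$, running the same super-martingale argument between successive up-crossings of the level $2 C' n$ bounds each excursion by $\Oh(n \log n)$ steps with probability $1 - \Oh(n^{-11})$; a union bound over at most $(ng)^2$ excursions then gives $\Phi_j^s = \Oh(n)$ throughout $[t_1, m]$. Summing the failure probabilities --- $(\log n)^{8(j-1)}/n^4$ for the precondition, $\Oh(n^{-9})$ for the initial gap, and $\Oh(n^{-9})$ for the excursion union bound --- yields at most $(\log n)^{8j}/n^4$, as required.

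\textbf{Main obstacle.} The most delicate part is calibrating the drop inequality: one must verify that the margin between the blow-up factor $e^{\phi_j}$ of a heavy bin's potential contribution and the suppression factor $e^{3\phi_j/\alpha_2}$ on the count of such bins is comfortably positive at every level $j \geq 1$, which is precisely why $\alpha_2$ is chosen so small in \cref{eq:g_adv_alpha_2_def} and why the regime \cref{eq:tilde_g_stronger_assumption} is imposed. A secondary subtlety lies in freezing $\tilde{\Phi}_j$ cleanly so that no probability mass leaks outside $\mathcal{E}$; once that is handled, the remainder is essentially bookkeeping, but it must be coordinated across all $k-1$ levels of the induction without letting the $(\log n)^{8}$ factor in the error probability accumulate faster than one level at a time.
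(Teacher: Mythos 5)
Your proposal correctly identifies the high-level structure (condition on $\{\Phi_{j-1}^s = \Oh(n)\}$ to get a drop inequality for $\Phi_j$, then recovery plus stabilization), and the derivation of the drop inequality is essentially the paper's \cref{lem:gadv_precondition_satisfied} plus \cref{lem:general_drop_superexponential}. However, there is a genuine gap in the stabilization step that makes the argument, as stated, fail to deliver the conclusion.

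The conclusion requires $\Phi_j^s \leq Cn$ \emph{for every} $s$ in the target interval. Your stabilization argument — bound each excursion of $\Phi_j$ above $2C'n$ by $\Oh(n\log n)$ steps and union-bound over excursions — only shows that $\Phi_j$ \emph{returns} to $\Oh(n)$ periodically; it says nothing about the value of $\Phi_j$ \emph{inside} an excursion. This is precisely the style of argument used in \cref{lem:g_adv_good_gap_after_good_lambda} for $\Lambda$, and it worked there only because the final target was a bound on $\Gap(m)$, which could be extracted from a single good step plus smoothness of the \emph{normalized load} (it changes by at most $1$ per step). No analogous smoothness holds for the super-exponential potential itself: a single ball placed on a heavy bin can multiply $\Phi_j$ by up to $e^{\phi_j} = n^{\Omega(1)}$, so over an excursion of $\Oh(n\log n)$ steps $\Phi_j$ could in principle reach any polynomial. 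Since the induction hands $\{\Phi_j^s = \Oh(n)\text{ for all }s\}$ to the next level as its precondition, periodic returns are not enough; you actually need a uniform bound.

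The paper closes this gap via \cref{thm:super_exp_potential_concentration}, and your proposal omits its two key mechanisms. First, it introduces a \emph{companion potential} $\Psi_j$ with the same offset but a smoothing parameter $84$ times larger; \cref{lem:psi_potential_poly_implies} shows that whenever $\Psi_j^s \leq 6n^{12}$, the one-step change satisfies $|\Phi_j^{s+1} - \Phi_j^s| \leq n^{1/3}$, giving the bounded-difference hypothesis needed for Azuma. Second, rather than a Markov/excursion argument, the stabilization phase partitions $[t, \tilde t]$ into sub-intervals of length $\lceil n/\log^2 n\rceil$, applies Azuma to a carefully truncated version $X_i^s$ of $\Phi_j$ on each sub-interval to bound it at the $q \leq 2\log^7 n$ endpoints, and then invokes the forward-smoothness claim $\Phi_j^r \geq 0.99\,\Phi_j^s$ for $r \in [s, s + \lceil n/\log^2 n\rceil]$ (\cref{clm:phi_j_does_not_drop_quickly}) to propagate the endpoint bounds to all intermediate steps. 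The number of Azuma applications is only $\polylog(n)$, which is what keeps the per-level error blow-up at a benign $(\log n)^8$ factor. Your Markov-then-excursion strategy, beyond not giving a uniform bound, also does not obviously control the error accumulation across the $k = \Theta(\log\log n / \log g)$ levels without such care. So the missing ingredients are precisely the companion potential $\Psi_j$ and the Azuma-on-subintervals machinery of \cref{thm:super_exp_potential_concentration}.
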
 }
	\addtocounter{thm}{-1}

\paragraph{Base case.} The base case follows by strengthening the stabilization lemma for a variant of the $\Lambda$ potential used in \cref{sec:g_adv_g_plus_logn_bound}, so that the potential becomes $\Oh(n)$ every $\Oh(ng)$ steps (instead of $\Lambda$ becoming $\Oh(n)$ every $\Oh(n \cdot (g + \log n))$ steps which was proven in \cref{lem:g_adv_stabilization}). By making the constant $c_5 > 0$ sufficiently large in the offset $c_5g$ of the $\Phi_0$ potential, we obtain concentration for $\Phi_0$ and establish the base case for the layered induction.

{\renewcommand{\thethm}{\ref{thm:g_adv_strong_base_case}}
	\begin{thm}[\textbf{Base case -- Simplified version, page~\pageref{thm:g_adv_strong_base_case}}]
Consider the \GAdvComp setting for any $g \in [1, \Oh(\log n)]$. Then, for any step $m \geq 0$, \[
\Pro{\bigcap_{s\in [m - n\log^5 n, m]} \left\{ \Phi_0^s = \Oh(n) \right\} } \geq 1 - n^{-4}.
\]
	\end{thm}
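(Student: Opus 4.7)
The plan is to adapt the recovery-and-stabilization framework underlying Theorem~\ref{thm:g_adv_g_plus_logn_gap} to the potential $\Phi_0 := \Phi_0(\alpha_2, c_5 g)$, and then strengthen it so that $\Phi_0^s = \Oh(n)$ holds \emph{uniformly} over every $s \in [m - n\log^5 n, m]$ (rather than at a single endpoint only). First, I would establish a good-step drop inequality for $\Phi_0$ analogous to Lemma~\ref{lem:g_adv_good_step_drop}: for some constant $c_b > 0$,
\[
\Ex{\left. \Phi_0^{t+1} \,\right|\, \mathfrak{F}^t, \Delta^t \leq Dng} \leq \Phi_0^t \cdot \Big(1 - \tfrac{2\alpha_2 \eps}{n}\Big) + c_b \alpha_2,
\]
complemented by a bad-step bound $\Ex{\Phi_0^{t+1}\mid \mathfrak{F}^t} \leq \Phi_0^t \cdot (1 + 3\alpha_2/n)$ (the analog of Lemma~\ref{lem:g_adv_bad_step_increase}). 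The three-case analysis mirrors that of Lemma~\ref{lem:g_adv_good_step_drop}, provided $c_5$ is chosen at least $2D$, so that the bias argument ``at most $n/3$ bins exceed $\tfrac{3}{2} Dg$'' still applies to bins contributing to the overload component of $\Phi_0$. I would then define an adjusted super-martingale $\tilde{\Phi}_0$ exactly in analogy with \cref{eq:tilde_lambda}, and combine it with the bound on the number of good steps from Lemma~\ref{lem:g_adv_many_good_steps}.

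Equipped with these ingredients, the proof proceeds in two phases. In the \emph{recovery phase}, combining the $\Oh(g\log(ng))$ gap bound of Theorem~\ref{thm:g_adv_warm_up_gap} with Markov's inequality on $\tilde{\Phi}_0$ (following the proof of Lemma~\ref{lem:g_adv_recovery}) shows that within $\Delta_r = \Theta(ng(\log n)^2)$ steps of any fixed starting point there exists some $s_0$ with $\Phi_0^{s_0} \leq cn$, with failure probability at most $n^{-11}$. In the \emph{stabilization phase}, starting from any $t_0$ with $\Phi_0^{t_0} \leq cn$, I would prove that within only $\Delta_s = \Theta(ng)$ further steps $\Phi_0$ drops below $cn$ again. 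The key point is that $\Phi_0^{t_0} \leq cn$ forces $\Upsilon^{t_0} = \Oh(ng^2)$ (analog of Lemma~\ref{lem:g_adv_bounds_on_quadratic}~$(i)$); invoking Lemma~\ref{lem:g_adv_many_good_steps} with $T = \Theta(ng^2)$ and a constant $\hat c$ yields a constant fraction of good steps within an interval of length $\Theta(ng)$, and Markov's inequality on $\tilde{\Phi}_0$ then forces $\Phi_0$ back to $cn$ within this shorter window.

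Finally, one recovery window followed by at most $\Oh(\log^5 n / g)$ iterated stabilization windows of length $\Delta_s = \Theta(ng)$ tiles the entire interval $[m - n\log^5 n, m]$. To upgrade ``$\Phi_0 \leq cn$ at the endpoints of each tile'' to ``$\Phi_0^s = \Oh(n)$ for every $s$ in the tile'', I would use a stopping-time argument: letting $\sigma$ be the first step $s \geq t_i$ in a given tile with $\Phi_0^s > Cn$ for a sufficiently large constant $C$, Markov's inequality applied to the stopped super-martingale $\tilde{\Phi}_0^{s \wedge \sigma}$ bounds $\Pro{\sigma \leq t_{i+1}}$ by $n^{-10}$ once $C$ is chosen large enough relative to $c$. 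Union bounding over the $\Oh(\log^5 n)$ tiles yields the final failure probability of $\Oh(\log^5 n \cdot n^{-10}) \leq n^{-4}$. The \textbf{main obstacle} is this uniform within-tile concentration: the super-martingale argument of Section~\ref{sec:g_adv_g_plus_logn_bound} was designed to give only ``there exists a step with small potential,'' and extending it to every intermediate step requires the stopping-time excursion argument above together with a careful calibration of the constants $c_5$, $c$, and $C$, so that the one-step multiplicative growth factor $e^{\alpha_2}$ of $\Phi_0$ does not erode the tail bounds.
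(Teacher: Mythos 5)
Your high-level plan (recovery then stabilization over a shorter window, then upgrade to a uniform bound) is aimed at the right place, but the two probabilistic steps where you invoke Markov's inequality do not close, and the technical machinery the paper actually needs (Azuma's inequality plus a bounded-difference condition obtained by conditioning on the $\Oh(g+\log n)$ gap bound) is missing from your proposal.

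First, the stabilization step. You want to show that starting from $\Phi_0^{t_0}\leq cn$, Markov's inequality on the adjusted super-martingale forces $\Phi_0$ back to $cn$ within $\Theta(ng)$ steps. Unrolling the adjusted potential (as in the proof of Lemma~\ref{lem:g_adv_stabilization}) and using a constant fraction of good steps, the exponential ``drop credit'' accumulated over a window of length $\Theta(ng)$ is $\exp(-\Theta(g))$, while Markov's inequality costs you a multiplicative slack of $(ng)^{\Theta(1)}$. For $g$ bounded or $g=o(\log n)$ the inequality $\exp(-\Theta(g))\cdot (ng)^{\Theta(1)}\leq 1$ simply does not hold for any choice of constants, so you cannot conclude that the killing indicator is zero. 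This is precisely why Lemma~\ref{lem:g_adv_stabilization} needs a window of length $\Theta(n\max\{\log n,g\})$ when only Markov is available, and why the paper proves the ``strong stabilization'' over $\Theta(ng)$ steps by applying Azuma's inequality (Lemma~\ref{lem:azuma}) to the stopped adjusted super-martingale (Lemma~\ref{lem:g_adv_strong_stabilization}). Azuma in turn requires the bounded-difference estimate of Lemma~\ref{lem:g_adv_v_tilde_lipschitz}, which only holds after one replaces the original process with the modified process $\mathcal{Q}_{g,r_0}$ (enforcing $\max_i|y_i^s|\leq \kappa(g+\log n)$) and shrinks the smoothing parameter to $\alpha_1=1/(6\kappa)$ so that the per-step increments are $\Oh(n^{1/3})$. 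None of these ingredients appear in your plan.

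Second, the uniform within-tile bound. You propose Markov on a stopped super-martingale to bound $\Pro{\sigma\leq t_{i+1}}$ by $n^{-10}$. A Doob/Markov maximal inequality on a nonnegative super-martingale starting from $\Phi_0^{t_i}\leq cn$ gives only $\Pro{\max_s \Phi_0^s > Cn}\leq c/C$, a constant; there is no mechanism in Markov alone to produce a polynomially small tail here. The paper does not use a stopping-time excursion argument at all: it deduces that $\V$ is small at steps at most $\tilde{\Delta}_s$ apart (\cref{lem:g_adv_good_v_every_ng_steps}), then applies the purely deterministic smoothness bound of Lemma~\ref{lem:g_adv_v_smoothness}~$(ii)$ to control $\V^t$ at every intermediate $t$ up to a factor $e^{\alpha_1\tilde{\Delta}_s/n}=e^{\Oh(g)}$, and finally absorbs that $e^{\Oh(g)}$ factor by the larger offset $c_5 g$ built into $\Psi_0$ (and hence $\Phi_0\leq\Psi_0$) via Lemma~\ref{lem:g_adv_v_psi0_relation}. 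Your proposal has no analogue of this smoothness-plus-offset mechanism, and its replacement does not produce the required tail. Working directly with $\Phi_0$ instead of the intermediate $\V$ and $\Psi_0$ is not by itself fatal, but without the offset gap between the potential used in the stabilization analysis and the target potential, there is nothing to soak up the $e^{\Oh(g)}$ slack between the sparse ``anchor'' steps, so the uniform claim still fails.
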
 }
	\addtocounter{thm}{-1}

\subsection{Analysis of Super-Exponential Potential Functions}

In order to prove the induction step (\cref{lem:new_inductive_step}), we will need to prove concentration for super-exponential potentials. We will now outline the proof of this and the challenges involved.

For super-exponential potentials, unlike the hyperbolic cosine potential, there exist load configurations, where a super-exponential potential may increase in expectation. We will show that in each step where the probability to allocate to a bin with load at least $z-1$ is sufficiently small, the potential function $\Phi := \Phi(\phi, z)$ drops in expectation over one step. More specifically, we show that the following event is sufficient:
\[
\mathcal{K}^s := \mathcal{K}_{\phi, z}^s(q^s) := \left\{ \forall i \in [n] \colon\  y_i^s \geq z-1 \ \ \Rightarrow \ \ q_i^s \leq \frac{1}{n} \cdot e^{-\phi}\right\},
\]
where $q^s$ is the probability allocation vector used by the process at step $s$.

We prove the following drop inequality in \cref{sec:super_exponential_potentials}.

\newcommand{\GeneralDropInequality}{
Consider any allocation process and any super-exponential potential $\Phi := \Phi(\phi, z)$ with $\phi \in [4, n]$. For any step $s \geq 0$ where $\mathcal{K}^s$ holds, we have that
\[
\Ex{\left. \Phi^{s+1} \,\right|\, \mathfrak{F}^s, \mathcal{K}^{s} } \leq \Phi^{s} \cdot \left(1 - \frac{1}{n} \right) + 2.
\]
}

{\renewcommand{\thelem}{\ref{lem:general_drop_superexponential}}
	\begin{lem}[\textbf{Restated, page~\pageref{lem:general_drop_superexponential}}]
\GeneralDropInequality
	\end{lem}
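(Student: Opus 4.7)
The plan is to bound $\Ex{\Phi_i^{s+1} \,|\, \mathfrak{F}^s}$ for each bin $i$ separately and then sum, exploiting the fact that $\Phi^t$ is a symmetric function of the load vector so we may index the summands by bin identities rather than ranks. For a fixed bin $i$ the update is elementary: with probability $q_i^s$ bin $i$ receives the ball and $y_i^{s+1} = y_i^s + 1 - 1/n$, and otherwise $y_i^{s+1} = y_i^s - 1/n$. I will partition the bins into three regimes based on where $y_i^s$ lies relative to the offset $z$, and prove in each a per-bin bound $\Ex{\Phi_i^{s+1} \,|\, \mathfrak{F}^s, \mathcal{K}^s} \leq \Phi_i^s \cdot (1 - 1/n) + \delta_i$ with $\sum_i \delta_i \leq 2$.

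The three cases are \textbf{(A)} $y_i^s \leq z - 1$, \textbf{(B1)} $y_i^s \geq z + 1/n$, and \textbf{(B2)} $y_i^s \in (z-1,\, z+1/n)$. In (A) both outcomes of the step give $y_i^{s+1} < z$, hence $\Phi_i^{s+1} = \Phi_i^s = 1$ deterministically and $\delta_i = 1/n$ suffices. In (B1) both outcomes give $y_i^{s+1} \geq z$, so the truncation $(\cdot)^+$ is inactive and
\[
\Ex{\Phi_i^{s+1} \,|\, \mathfrak{F}^s} = \Phi_i^s \cdot e^{-\phi/n} \bigl(1 + q_i^s(e^\phi - 1)\bigr).
\]
Using $\mathcal{K}^s$ to bound $q_i^s(e^\phi - 1) \leq 1/n$, combined with the elementary estimate $e^{-\phi/n}(1+1/n) \leq 1 - 1/n$ (equivalent to $\phi \geq -n\log\frac{1-1/n}{1+1/n} = 2 + \tfrac{2}{3n^2} + \cdots$, hence automatic for $\phi \geq 4$), we obtain $\delta_i = 0$. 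In (B2), the borderline band, $\Phi_i^s \in [1, e^{\phi/n})$: not receiving a ball forces $y_i^{s+1} < z$ and $\Phi_i^{s+1} = 1$, while receiving one gives $y_i^{s+1} < z + 1$ and hence $\Phi_i^{s+1} \leq e^{\phi}$; applying $\mathcal{K}^s$ once more to cap $q_i^s \cdot e^\phi \leq 1/n$ yields $\Ex{\Phi_i^{s+1}} \leq 1 + 1/n \leq \Phi_i^s(1 - 1/n) + 2/n$, using $\Phi_i^s \geq 1$. Summing the three per-bin bounds over all $i \in [n]$ gives the claim.

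The main obstacle is case (B2): the kink in $(\cdot)^+$ at $y_i = z$ wipes out the multiplicative drop available in (B1) for bins within $1/n$ of the offset, and receiving a ball in this narrow band can inflate $\Phi_i^{s+1}$ by the potentially enormous factor $e^{\phi}$. The hypothesis $\mathcal{K}^s$ is engineered precisely to neutralize this spike: the cap $q_i^s \leq \frac{1}{n} e^{-\phi}$ exactly cancels the worst-case $e^{\phi}$ growth, leaving only an additive $\Oh(1/n)$ per bin that aggregates to the constant $2$ in the conclusion. Beyond this, the proof reduces to routine Taylor-type estimates.
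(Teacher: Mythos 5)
Your proof is correct and takes essentially the same approach as the paper's: a three-way case split on $y_i^s$ relative to $z$ (far below, borderline band, strictly above), with the cap from $\mathcal{K}^s$ neutralizing the $e^{\phi}$ spike in the borderline band and yielding a genuine multiplicative drop above, followed by summation of the per-bin bounds. A minor technical aside: your closed-form inequality $e^{-\phi/n}(1+1/n)\le 1-1/n$ in case (B1) holds for all $\phi\ge 4$ and never uses the upper bound $\phi\le n$, which the paper's Case~3 invokes through the Taylor estimate $e^u\le 1+\tfrac12 u$ on $u\in[-1.5,0)$.
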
 }
	\addtocounter{lem}{-1}

Using this drop inequality we are able to prove the following concentration lemma for super-exponential potentials.

\newcommand{\PreconditionOne}{
\Ex{\left. \Phi_1^{s+1} \,\right|\, \mathfrak{F}^s, \mathcal{K}^{s} } \leq \Phi_1^{s} \cdot \Big(1 - \frac{1}{n} \Big) + 2}
\newcommand{\PreconditionTwo}{
\Ex{\left. \Phi_2^{s+1} \,\right|\, \mathfrak{F}^s, \mathcal{K}^{s} } \leq \Phi_2^{s} \cdot \Big(1 - \frac{1}{n} \Big) + 2}
\newcommand{\PreconditionThree}{
\Pro{\left\{ \Gap(t - 2n \log^4 n) \leq \log^2 n \right\} \cap \bigcap_{s \in [t - 2n \log^4 n, \tilde{t}]} \mathcal{K}^s} \geq 1 - P}

\newcommand{\ThmSuperExponentialPotentialConcentrationWithLabels}[1]{
Consider any allocation process for which there exist super-exponential potential functions $\Phi_1 := \Phi_1(\phi_1, z)$ and $\Phi_2 := \Phi_2(\phi_2, z)$ with integer offset $z := z(n) > 0$ and smoothing parameters $\phi_1, \phi_2 \in (0, (\log n)/6]$ with $\phi_2 \leq \frac{\phi_1}{84}$, such that they satisfy for any step $s \geq 0$,
\ifthenelse{\equal{#1}{1}}{
\begin{align}\label{eq:phi_1_drop_precondition}
\PreconditionOne,
\end{align}}{
\begin{align*}
\PreconditionOne,
\end{align*}}
and
\ifthenelse{\equal{#1}{1}}{
\begin{align}\label{eq:phi_2_drop_precondition}
\PreconditionTwo,
\end{align}}{
\begin{align*}
\PreconditionTwo,
\end{align*}}
where $\mathcal{K}^s := \mathcal{K}_{\phi_1, z}^s$. Further, let $P \in [n^{-4}, 1]$. Then, for any steps $t \geq 0$  and $\tilde{t} \in [t, t + n \log^5 n]$, which satisfy
\ifthenelse{\equal{#1}{1}}{
\begin{align}\label{eq:gap_and_condition_k_precondition}
\PreconditionThree,
\end{align}}{
\begin{align*}
\PreconditionThree,
\end{align*}}
they must also satisfy
\[
\Pro{\bigcap_{s \in [t, \tilde{t}]} \left\{ \Phi_2^s \leq 8n \right\}} \geq 1 - (\log^{8} n) \cdot P.
\]
}
{\renewcommand{\thethm}{\ref{thm:super_exp_potential_concentration}}
	\begin{thm}[\textbf{Restated, page~\pageref{thm:super_exp_potential_concentration}}]
\ThmSuperExponentialPotentialConcentrationWithLabels{0}
	\end{thm}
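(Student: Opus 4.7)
The plan is to split the argument into three phases: (A) condition on the precondition event; (B) use the drop on $\Phi_1$ with a Jensen-type conversion to obtain $\Phi_2^{s^*} = O(n)$ at some step $s^* \in [t_0, t]$; and (C) propagate this bound to all of $[t, \tilde t]$ via martingale concentration on $\Phi_2$. Setting $t_0 := t - 2n\log^4 n$, condition on the event $\mathcal H$ from~\eqref{eq:gap_and_condition_k_precondition}, which has probability at least $1-P$. On $\mathcal H$, both~\eqref{eq:phi_1_drop_precondition} and~\eqref{eq:phi_2_drop_precondition} apply for every $s \in [t_0, \tilde t]$, and $\Gap(t_0) \leq \log^2 n$ together with $\phi_1 \leq (\log n)/6$ gives the starting bound $\Phi_1^{t_0} \leq n \cdot e^{\phi_1 \log^2 n} \leq n \cdot e^{\log^3 n / 6}$.

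Iterating~\eqref{eq:phi_1_drop_precondition} yields $\Ex{\Phi_1^s - 2n \mid \mathfrak{F}^{t_0}, \mathcal H} \leq (\Phi_1^{t_0} - 2n)(1-1/n)^{s-t_0}$, which contracts to $o(1)$ well before $s = t$; hence $\Ex{\Phi_1^{s^*} \mid \mathcal H} \leq 2n + o(1)$ for every $s^* \in [t_0 + n\log^4 n, t]$. Since $\phi_2/\phi_1 \leq 1/84 < 1$ makes $x \mapsto x^{\phi_2/\phi_1}$ concave on $[0,\infty)$, Jensen's inequality applied to $\{e^{\phi_1(y_i^s - z)^+}\}_{i \in [n]}$ gives
\[
\frac{\Phi_2^s}{n} \;=\; \frac{1}{n} \sum_{i=1}^n \bigl(e^{\phi_1(y_i^s - z)^+}\bigr)^{\phi_2/\phi_1} \;\leq\; \Bigl(\frac{\Phi_1^s}{n}\Bigr)^{\phi_2/\phi_1}.
\]
Thus $\{\Phi_1^{s^*} \leq Kn\} \subseteq \{\Phi_2^{s^*} \leq K^{1/84} n\}$ and Markov yields $\Pro{\Phi_1^{s^*} > Kn \mid \mathcal H} \leq (2+o(1))/K$; for $K = 4^{84}$ a single attempt delivers $\Phi_2^{s^*} \leq 4n$ with constant failure probability. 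To amplify this to high probability, I would repeat the Markov check at $\Theta(\log n)$ spaced-out steps $s^{(\ell)}$ in $[t_0 + n\log^4 n, t]$, arguing via a stopped, indicator-weighted supermartingale in the spirit of $\tilde\Lambda$ from~\eqref{eq:tilde_lambda} whose single Markov bound absorbs the amplification into a $(\log^8 n)\cdot P$ slack.

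For the maintenance phase from the first successful attempt $s^*$ onward: on $\{\Phi_2^s \geq 2n\}$, the drop~\eqref{eq:phi_2_drop_precondition} gives drift $\Ex{\Phi_2^{s+1} - \Phi_2^s \mid \mathfrak{F}^s} \leq 2 - \Phi_2^s/n \leq 0$, so $(\Phi_2^s)$ is a supermartingale between consecutive down-crossings of the level $2n$. Per-step increments satisfy $|\Phi_2^{s+1} - \Phi_2^s| \leq e^{\phi_2} \leq n^{1/6}$ since $\phi_2 \leq (\log n)/6$. Azuma--Hoeffding applied to each excursion of $\Phi_2$ above $2n$ bounds the probability that it reaches $8n$ by $\exp\bigl(-\Omega(n^{2/3}/\log^5 n)\bigr)$, and a union bound over at most $\tilde t - s^* \leq n\log^5 n$ excursions leaves a super-polynomially small total contribution, which is absorbed into the $(\log^8 n) \cdot P$ error term. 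The main technical obstacle is the amplification step: a single Markov attempt on $\Phi_1$ only yields constant failure probability, so obtaining the $\log^8 n$ factor cleanly requires careful design of a stopped supermartingale that integrates the dependent amplification attempts without crude union-bounding.
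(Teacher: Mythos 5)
Your three-phase architecture mirrors the paper's (recovery via drop on $\Phi_1$, conversion to a bound on $\Phi_2$, maintenance via Azuma on $\Phi_2$), and your Jensen observation
\[
\frac{\Phi_2^s}{n} \le \Bigl(\frac{\Phi_1^s}{n}\Bigr)^{\phi_2/\phi_1}
\]
is a clean alternative to the paper's max-per-bin argument (\cref{lem:psi_potential_poly_implies}~$(i)$). However, there are two genuine gaps, one of which invalidates the maintenance phase as written.

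The critical error is the claimed bounded-difference bound $|\Phi_2^{s+1}-\Phi_2^s|\le e^{\phi_2}\le n^{1/6}$. If the ball at step $s+1$ lands in a bin $i$ with $y_i^s > z$, the contribution of that bin changes by $\Phi_{2,i}^s\bigl(e^{\phi_2(1-1/n)}-e^{-\phi_2/n}\bigr)$, which scales with $\Phi_{2,i}^s$; the one-step increment is thus bounded by $\max_i\Phi_{2,i}^s\cdot e^{\phi_2}$, \emph{not} by $e^{\phi_2}$. Without a prior bound on $\max_i\Phi_{2,i}^s$ this can be $\Theta(\Phi_2^s)$. The paper fixes this precisely by establishing that $\Phi_1^s\le 6n^{12}$ holds along the whole trajectory (via an auxiliary process that switches to the greedy allocation once this or $\mathcal K^s$ fails), which by \cref{lem:psi_potential_poly_implies}~$(ii)$ forces $\max_i\Phi_{2,i}^s\le n^{1/6}$ and thus $|\Phi_2^{s+1}-\Phi_2^s|\le n^{1/3}$. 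You need this chaining through $\Phi_1$ before you are entitled to any Azuma step on $\Phi_2$.

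The second issue is the amplification step you flag as the ``main technical obstacle,'' and it is indeed not fixed by repeated Markov on $\Phi_1$: if $P$ can be as small as $n^{-4}$, a single Markov attempt that gives $\Phi_2 = O(n)$ has only constant failure probability, and repeated attempts at dependent steps do not trivially multiply. The paper's route is structurally different and avoids this entirely: Markov on $\Phi_1$ is used only to obtain the \emph{polynomial} bound $\Phi_2^{t-n\log^4n}\le n^{7/6}$ with failure $\approx P$, and then the drop inequality~\eqref{eq:phi_2_drop_precondition} for $\Phi_2$ itself is iterated for $n\log^4 n$ steps (\cref{lem:recovery_phi_2}) to bring the expectation of a killed version of $\Phi_2$ down to $n^{-6}$, so one final Markov delivers $\Phi_2\le 6n$ with failure $n^{-6}$. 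Applying the drop of $\Phi_2$ (not another Markov on $\Phi_1$) is the piece that is missing from your plan. Finally, note that the paper partitions $(r,\tilde t]$ into $O(\log^7 n)$ blocks of length $\lceil n/\log^2 n\rceil$ and uses the smoothness of $\Phi_2$ (\cref{clm:phi_j_does_not_drop_quickly}) precisely so that it only union-bounds over $O(\log^7 n)$ applications of Azuma rather than $\poly(n)$ events; this is where the $\log^8 n$ factor in the error comes from, and it is tighter than a naive union bound over excursions.
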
 }
	\addtocounter{thm}{-1}

The statement of this theorem concerns steps in $[t - 2n \log^4 n, \tilde{t}]$ with $\tilde{t} \in [t, t + n \log^5 n]$. The interval $[t, \tilde{t}]$ is the \textit{stabilization interval}, i.e., the interval where we want to show that $\Phi_2^s \leq 8n$ for every $s \in [t, \tilde{t}]$. The interval $[t - 2n\log^4 n, t]$ is the \textit{recovery} interval where we will show that \Whp~$\Phi_2$ becomes $\Oh(n)$ at least \textit{once},
provided we start with a ``good'' $\Oh(\log^2 n)$ gap at step $t - 2n \log^4 n$. For both the recovery and stabilization intervals we will condition that the event $\mathcal{K}$ holds at every step.

\paragraph{Using \cref{thm:super_exp_potential_concentration} to prove \cref{lem:new_inductive_step}.} In the concentration lemma (\cref{thm:super_exp_potential_concentration}), the potentials $(\Phi_j)_{j = 0}^{k-1}$ will take the role of $\Phi_2$ and the role of $\Phi_1$ will be taken by the potentials $(\Psi_j)_{j = 0}^{k-1},$ defined as $(\Phi_j)_{j = 0}^{k-1}$ but with a smoothing parameter $\alpha_1$ that is a constant factor larger, i.e., $\alpha_1 := 84 \cdot \alpha_2$. More specifically, \begin{align} \label{eq:g_adv_psi_0_def}
\Psi_0 := \Psi_0(\alpha_1, z_0)
  := \sum_{i = 1}^n \Psi_{0, i}^s := \sum_{i = 1}^n \exp\Big(\alpha_1 \cdot (y_i^s - z_0)^+ \Big),
\end{align}
and for $1 \leq j \leq k - 1$, \begin{align} \label{eq:g_adv_psi_j_def}
\Psi_j^s := \Psi_j^s(\alpha_1 \cdot (\log n) \cdot g^{j-k}, z_j) := \sum_{i=1}^n \Psi_{j,i}^s := \sum_{i=1}^n \exp\left(  \alpha_1 \cdot (\log n) \cdot g^{j-k} \cdot (y_i^s - z_j)^{+} \right).
\end{align}

For the induction step (\cref{lem:new_inductive_step}), it trivially follows that when $\Phi_{j-1}^t = \Oh(n)$, then we also have that $\Gap(t) \leq \log^2 n$ and the following lemma establishes that the event $\mathcal{K}_{\psi_j, z_j}^t$ (associated with the drop of $\Phi_j$ and $\Psi_j$) also holds.

{\renewcommand{\thelem}{\ref{lem:gadv_precondition_satisfied}}
	\begin{lem}[Simplified version, page~\pageref{lem:gadv_precondition_satisfied}]
For any integer $1 \leq j \leq k - 1$ and any step $s \geq 0$ such that $\Phi_{j-1}^s = \Oh(n)$ holds, then also $\mathcal{K}_{\psi_j, z_j}^s$ holds.
	\end{lem}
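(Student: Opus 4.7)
The plan is to unpack the definition of $\mathcal{K}_{\psi_j, z_j}^s$, namely that for every bin $i$ with $y_i^s \geq z_j - 1$ the probability $q_i^s$ of allocating to $i$ at step $s$ is at most $\frac{1}{n} e^{-\psi_j}$, and then exploit the gap between the offsets $z_{j-1}$ and $z_j$ (which is $\lceil 4/\alpha_2 \rceil \cdot g$) to bound $q_i^s$ using the hypothesis $\Phi_{j-1}^s = \Oh(n)$. The first key step is to observe that in \GAdvComp, a ball can be placed in bin $i$ only if both sampled bins have load at least $y_i^s - g$; hence
\[
q_i^s \;\leq\; \frac{2}{n^2} \cdot \left| \left\{ \ell \in [n] : y_\ell^s \geq y_i^s - g \right\} \right|.
\]
For any $i$ with $y_i^s \geq z_j - 1$, the set on the right is contained in $\{\ell : y_\ell^s \geq z_j - 1 - g\}$.

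The second step is a Markov-type bound on the number of overloaded bins coming from $\Phi_{j-1}^s \leq C n$: for any threshold $x \geq 0$,
\[
\left|\{\ell : y_\ell^s \geq z_{j-1} + x\}\right| \;\leq\; \Phi_{j-1}^s \cdot e^{-\phi_{j-1} x} \;\leq\; C n \cdot e^{-\phi_{j-1} x},
\]
where $\phi_{j-1} = \alpha_2$ if $j=1$ and $\phi_{j-1} = \alpha_2 (\log n) g^{j-1-k}$ otherwise. Substituting $x = z_j - 1 - g - z_{j-1}$, and using that by the definition of the offsets \eqref{eq:g_adv_offsets} together with $g \geq 1$ and $\alpha_2$ being a sufficiently small constant,
\[
z_j - 1 - g - z_{j-1} \;=\; \lceil 4/\alpha_2 \rceil g - g - 1 \;\geq\; \frac{3g}{\alpha_2},
\]
we obtain $\left|\{\ell : y_\ell^s \geq z_j - 1 - g\}\right| \leq Cn \cdot e^{-3g \phi_{j-1}/\alpha_2}$, and therefore $q_i^s \leq \frac{2C}{n} \cdot e^{-3g\phi_{j-1}/\alpha_2}$.

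The third step is to check that this upper bound is at most $\frac{1}{n} e^{-\psi_j}$, where $\psi_j = \alpha_1 (\log n) g^{j-k}$ and $\alpha_1 = 84 \alpha_2$. For $j \geq 2$ we have $3g\phi_{j-1}/\alpha_2 = 3(\log n) g^{j-k} = (3/\alpha_1) \psi_j$, so we need $e^{(1 - 3/\alpha_1) \psi_j} \leq \frac{1}{2C}$, which holds comfortably since $3/\alpha_1 \gg 1$ and $\psi_j$ is bounded below by a positive quantity. For $j=1$ we have $3g\phi_0/\alpha_2 = 3g$, and the assumption $g \geq (\alpha_1 \log n)^{1/k}$ gives $\psi_1 = \alpha_1 (\log n) g^{1-k} \leq g$, so $3g - \psi_1 \geq 2g$ and for $g$ larger than a fixed constant (absorbed into the implicit $g = \Omega(1)$), $e^{-2g} \leq \frac{1}{2C}$.

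The main obstacle is arithmetic: making sure the slack between $\lceil 4/\alpha_2 \rceil g$ and the offset required by $\psi_j$ actually absorbs both the constant $C$ from $\Phi_{j-1}^s = \Oh(n)$ and the $-1$ coming from the $z_j - 1$ cutoff. This is handled purely by the choice $\alpha_1 = 84 \alpha_2$ (which makes $3/\alpha_1$ very large relative to $1$) and by the constant in the offset step being $\lceil 4/\alpha_2 \rceil$ rather than e.g.\ $\lceil 1/\alpha_2 \rceil$; the former provides a factor of at least $3/\alpha_2$ after subtracting the $g$ from the adversary's range and the $1$ from the cutoff, which is exactly what is needed to drive the exponent past $\psi_j$.
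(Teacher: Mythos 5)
Your proof is correct and follows essentially the same approach as the paper: bound the number of bins with load at least $z_j - 1 - g$ via a Markov-type count from $\Phi_{j-1}^s \leq Cn$ together with the offset gap $z_j - 1 - g - z_{j-1} \geq 3g/\alpha_2$, then compare the resulting bound on $q_i^s$ against $\frac{1}{n}e^{-\psi_j}$. The only place you gloss is the verification for $j \geq 2$, where ``$\psi_j$ is bounded below by a positive quantity'' actually needs the concrete estimate $\psi_j = \alpha_1(\log n)g^{j-k} \geq \alpha_1(\log n)g^{2-k} \geq g \geq \log(2C)$, which follows from $j \geq 2$ together with $g \leq (\alpha_1 \log n)^{1/(k-1)}$ from the definition of $k$; the paper states this explicitly.
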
 }
	\addtocounter{lem}{-1}

Hence, combining this lemma (\cref{lem:gadv_precondition_satisfied}) with the super-exponential potential drop (\cref{lem:general_drop_superexponential}), the preconditions of super-exponential concentration lemma (\cref{thm:super_exp_potential_concentration}) are satisfied and so layered induction step (\cref{lem:new_inductive_step}) follows.

\paragraph{Road map.} The proof of \cref{thm:g_adv_strongest_bound} is split into three parts. In \cref{sec:g_adv_base_case}, we strengthen the analysis of \cref{sec:g_adv_g_plus_logn_bound} to establish the base case (\cref{thm:g_adv_strong_base_case}). In \cref{sec:super_exponential_potentials}, we prove the general concentration for super-exponential potentials (\cref{thm:super_exp_potential_concentration}). Finally, in \cref{sec:g_adv_layered_induction}, we use this concentration theorem to prove the layered induction step (\cref{lem:new_inductive_step}) and complete the proof for the bound on the gap (\cref{thm:g_adv_strongest_bound}).

\section{Upper Bound of \texorpdfstring{$\Oh(\frac{g}{\log g} \cdot \log \log n)$}{O(g/log g * log log n)} for  \texorpdfstring{$g$-\textsc{Adv-Comp}}{g-Adv-Comp} with \texorpdfstring{$g \leq \log n$}{g <= log n}: Base Case} \label{sec:g_adv_base_case}

\newcommand{\V}{V}
\newcommand{\tV}{\tilde{V}}

In this section, we will obtain for the \GAdvComp setting a stronger guarantee for a variant of the $\Lambda$ potential used in \cref{sec:g_adv_g_plus_logn_bound}. The precise upper bound that we need on $g$ is $g \leq c_6 \log n$, where $c_6 > 0$ is a sufficiently small constant defined as
\begin{align} \label{eq:g_adv_c6_def}
c_6 := \frac{r}{9 \cdot 20 \cdot \tilde{c}_s \cdot \log(2c e^{2\alpha_1})} \leq 1,
\end{align}
where $r \in (0, 1)$ is as defined in \cref{lem:g_adv_many_good_steps}, $\tilde{c}_s := \tilde{c}_s(\alpha_1, c_4, e^{2\alpha_1}c) \geq 1$ as defined in
\cref{lem:g_adv_bounds_on_quadratic} (for $c_4 := 730$), $c > 0$ as defined in \cref{lem:g_adv_good_step_drop} and $\alpha_1 := \frac{1}{6 \kappa} \leq \frac{\alpha}{6}$, for $\kappa > 0$ the constant defined in \cref{eq:g_adv_kappa_def} and $\alpha := \frac{1}{18}$ used in \cref{sec:g_adv_g_plus_logn_bound}.

In \cref{sec:g_adv_g_plus_logn_bound}, we showed that \Whp~$\Lambda^t = \Oh(n)$ at least \textit{once} every $\Oh(n \cdot (g + \log n))$ steps. Here, we will show that \Whp~for any step $t$, we have for \textit{all} steps $s \in [t, t + n \log^5 n]$ that $\Psi_0^s = \Oh(n)$. This will serve as the base case for the layered induction in \cref{sec:g_adv_layered_induction}.

We start by defining the potential function $\V := \V(\alpha_1, c_4g)$ which is a variant of the $\Lambda := \Lambda(\alpha, c_4g)$ potential function (defined in \cref{eq:lambda_def}), with the same offset $c_4g = 2Dg = 730g$, but with a smaller smoothing parameter $\alpha_1 \leq \frac{\alpha}{6}$, \begin{align} \label{eq:g_adv_v_def}
\V^t := \V^t(\alpha_1, c_4g) := \sum_{i = 1}^n \V_i^t := \sum_{i = 1}^n \left[ e^{\alpha_1 (y_i^t - c_4g)^+} + e^{\alpha_1 (- y_i^t - c_4g)^+} \right].
\end{align}

In \cref{sec:g_adv_g_plus_logn_bound}, we proved that \Whp~every $\Oh(n \cdot (g + \log n) )$ steps the potential $\Lambda$ satisfies $\Lambda^s \leq cn$. In this section, we will strengthen this to show that every $\Oh(ng)$ steps (for $g = \Oh(\log n)$) the potential $\V$ satisfies $\V^s \leq e^{\Oh(\alpha_1 g)} \cdot n$. By \cref{lem:g_adv_v_psi0_relation}, this implies the base case of the layered induction in \cref{sec:g_adv_layered_induction}, i.e., that for \emph{all} steps $s \in [m - n\log^5 n, m]$, $\Psi_0^s \leq Cn$ for $C := 2e^{2\alpha_1} \cdot c + 1$ and recalling that $\Psi_0 := \Psi_0(\alpha_1, c_5g)$ for some sufficiently large constant $c_5 > 0$ (to be defined in \cref{eq:g_adv_c5_def} in \cref{lem:g_adv_base_case_for_modified_process}) is given by
\[
\Psi_0^t = \sum_{i = 1}^n \exp\Big(\alpha_1 \cdot (y_i^s - z_0)^+ \Big) = \sum_{i = 1}^n \exp\Big(\alpha_1 \cdot (y_i^s - c_5g)^+ \Big).
\]

The proof follows along the lines of \cref{thm:g_adv_g_plus_logn_gap} in \cref{sec:g_adv_g_plus_logn_bound}, but it further conditions on the gap being $\Oh(g + \log n)$ at every step of the analysis. In particular, by conditioning on $\max_{i \in [n]} |y_i^t| \leq \kappa \cdot (g + \log n)$, we obtain that $|\Delta\V^{t+1}| = \Oh(n^{1/3})$ (\cref{lem:g_adv_v_tilde_lipschitz}), which allows us to apply Azuma's inequality (\cref{lem:azuma}) to deduce that \Whp~$\V$ remains small. This bounded difference condition is similar to the one used in~\cite[Proof of~Theorem 5.3]{LS21}.

\subsection{A Modified Process} \label{sec:g_adv_modified_process}

Let $\mathcal{P}$ be the process in the \GAdvComp setting (with arbitrary $1 \leq g \leq c_6 \log n$) that we want to analyze. We would like to condition on the event that $\mathcal{P}$ satisfies $\max_{i \in [n]} |y_i^t| \leq \kappa \cdot (g + \log n)$, for every step $t$ in an interval of $2n \log^5 n$ steps, which holds \Whp, as implied by \cref{thm:g_adv_g_plus_logn_gap}. 

We implement this conditioning by defining a \textit{modified process} $\mathcal{Q}_{g, r_0} := \mathcal{Q}_{g, r_0}(\mathcal{P})$ for the same $g$ and some arbitrary step $r_0$. Consider the stopping time $\sigma := \inf\{ s \geq r_0 : \max_{i \in [n]} |y_i^s| > \kappa \cdot (g + \log n) \}$, then the process $\mathcal{Q}_{g, r_0}$ is defined so that
\begin{itemize}
  \item in steps $s \in [0, \sigma)$ makes the same allocations as $\mathcal{P}$, and
  \item in steps $s \in [\sigma, \infty)$ allocates to the currently least loaded bin, i.e., it uses the probability allocation vector $r^s = (0, 0, \ldots, 0, 1)$. 
\end{itemize}

Let $y_{\mathcal{Q}}^s$ be the normalized load vector of $\mathcal{Q}_{g, r_0}$ at step $s \geq 0$. By \cref{thm:g_adv_g_plus_logn_gap}, it follows that for any interval $[r_0, m]$ with $m - r_0 \leq n^2$, with high probability the two processes agree\begin{align} \label{eq:modified_process_agrees_with_gadv}
\Pro{\bigcap_{s \in [r_0, m]} \left\{ y_{\mathcal{Q}}^s = y^s \right\} } 
 & \geq \Pro{\bigcap_{s \in [r_0, m]} \left\{ \max_{i \in [n]} \left| y_i^s \right| \leq \kappa \cdot (g + \log n) \right\} } \notag \\
 & \geq 1 - 2 \cdot (ng)^{-9} \cdot n^2 \geq 1 - 2n^{-7}.
\end{align}
The process $\mathcal{Q}_{g, r_0}$ is defined in a way to satisfy the following property:
\begin{itemize}
  \item (\textbf{Property 1}) \label{g_adv_modified_process_property_1} The $\mathcal{Q}_{g, r_0}$ process satisfies the drop inequalities for the potential functions $\Lambda_{\mathcal{Q}}$, $\V_{\mathcal{Q}}$ and $\Upsilon_{\mathcal{Q}}$ (Lemmas~\ref{lem:g_adv_quadratic}, \ref{lem:g_adv_bad_step_increase}~$(ii)$ and~\ref{lem:g_adv_good_step_drop}) for any step $s \geq 0$.  This holds because for any step $s < \sigma$, the process follows $\mathcal{P}$ and so it is an instance of the \GAdvComp setting. For any step $s \geq \sigma$, the process allocates to the currently least loaded bin and therefore minimizes the potential $\Lambda_{\mathcal{Q}}^{s+1}$ given any $\mathfrak{F}^s$,
  which means that $\Lambda_{\mathcal{Q}}^{s+1} \leq \Ex{\Lambda^{s+1} \mid \mathfrak{F}^s}$ and so it trivially satisfies any drop inequality (and similarly for $\V_{\mathcal{Q}}$ and $\Upsilon_{\mathcal{Q}}$).
\end{itemize}
Further, we define the event that the maximum normalized load in absolute value is small at step $r_0$ as,\begin{align} \label{eq:g_adv_z_event_def}
\mathcal{Z}^{r_0} := \left\{  \max_{i \in [n]} \left| y_{\mathcal{Q}, i}^{r_0} \right| \leq \min\left\{ \kappa \cdot (g + \log n),  c_3 g \log(ng)\right\} \right\},
\end{align}
where $c_3 \geq 2$ is the constant defined in \cref{eq:g_adv_c3_def}. We are primarily interested in the $\kappa \cdot (g + \log n)$ bound on the gap and the second bound is only needed for very small values of $g = \Oh(1)$.
When the event $\mathcal{Z}^{r_0}$ holds, then the process $\mathcal{Q}_{g, r_0}$ also satisfies the following property (which ``implements'' the conditioning that the gap is $\Oh(g + \log n)$):
\begin{itemize}
  \item (\textbf{Property 2}) For any step $s \geq r_0$, it follows that \[
    \max_{i \in [n]} \left|y_i^s\right| \leq \kappa \cdot (g + \log n) + 1 \leq 2\kappa \log n,
  \]
  using that $g \leq c_6 \log n$ with $c_6 \leq \frac{1}{4}$ by \cref{eq:g_adv_c6_def}.
  At any step $s \in [r_0, \sigma)$, this holds by the definition of $\sigma$. For any step $s \geq \sigma$, a ball will never be allocated to a bin with $y_i^s > 0$ and in every $n$ steps the at most $n$ bins with load equal to the minimum load (at step $s$) will receive at least one ball each. Hence, over any $n$ steps the maximum absolute normalized load does not increase and in the steps in between this can be larger by at most $1$.

\end{itemize}

\subsection{Preliminaries} 

We now define the adjusted potential $\tV$, analogously to $\tilde{\Lambda}$ in \cref{eq:tilde_lambda}. Note that \cref{lem:g_adv_good_step_drop} with constants $\eps = \frac{1}{12}, c = 12 \cdot 18$ also applies to the potential $\V$, since $\V$ has the same form as $\Lambda$ but a smaller smoothing parameter $\alpha_1 \leq \alpha$. Next, we define the sequence $(\tV_{t_0}^s)_{s \geq t_0} := (\tV_{t_0}^s)_{s \geq t_0}(\alpha_1, c_4g, \eps)$ as $\tV_{t_0}^{t_0} := \V^{t_0}(\alpha_1, c_4g)$ and, for any $s > t_0$,
\begin{equation} \label{eq:g_adv_tilde_v_def}
\tV_{t_0}^s := \V^s(\alpha_1, c_4g) \cdot \mathbf{1}_{\tilde{\mathcal{E}}_{t_0}^{s-1}} \cdot \exp\bigg( - \frac{3\alpha_1 }{n} \cdot B_{t_0}^{s-1} \bigg) \cdot \exp\bigg( + \frac{\alpha_1 \eps}{n} \cdot G_{t_0}^{s-1} \bigg), 
\end{equation}
where $G_t^{s-1}$ (and $B_t^{s-1}$) is the number of good (bad) steps in $[t_0, s-1]$ (as defined in~\cref{sec:g_adv_adjusted_exp_potential}) and
\begin{equation*}%
\tilde{\mathcal{E}}_{t_0}^{s} := \tilde{\mathcal{E}}_{t_0}^{s}(V, c) := \bigcap_{t \in [t_0, s]} \left\{\V^t > cn \right\}.  \end{equation*}
Similarly, to $\tilde{\Lambda}$ in \cref{sec:g_adv_g_plus_logn_bound}, we have that $\tV$ is a super-martingale.

\begin{lem}[\textbf{cf.~\cref{lem:g_adv_lambda_tilde_is_supermartingale}}]\label{lem:g_adv_v_tilde_is_supermartingale}
Consider the $\mathcal{Q}_{g, r_0}$ process for any $g \geq 1$, any step $r_0 \geq 0$, the sequence $(\tV_{t_0}^s)_{s \geq t_0} := (\tV_{t_0}^s)_{s \geq t_0}(\alpha_1, c_4g, \eps)$ for any $t_0 \geq r_0$ with $\alpha_1 > 0$ as defined in \cref{eq:g_adv_alpha_1_def} and $\eps, c_4 > 0$ as defined in \cref{lem:g_adv_good_step_drop}. For any step $s \geq t_0$, we have that,
\[
 \ex{\tV_{t_0}^{s+1} \mid \mathfrak{F}^s} \leq \tV_{t_0}^{s}.
\]
\end{lem}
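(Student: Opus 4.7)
The plan is to mirror the proof of Lemma~\ref{lem:g_adv_lambda_tilde_is_supermartingale}, adapting it to the potential $V$ with the smaller smoothing parameter $\alpha_1$ in place of $\alpha$, and to the modified process $\mathcal{Q}_{g,r_0}$. Fix a step $s \geq t_0$. I would first dispose of the trivial case where $\tilde{\mathcal{E}}_{t_0}^{s-1}$ fails: since $\tilde{\mathcal{E}}_{t_0}^{s} \subseteq \tilde{\mathcal{E}}_{t_0}^{s-1}$, the indicator $\mathbf{1}_{\tilde{\mathcal{E}}_{t_0}^{s}}$ also vanishes, hence $\tilde{V}_{t_0}^{s+1} = 0 = \tilde{V}_{t_0}^{s}$ and the claim holds deterministically. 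From here on I assume $\tilde{\mathcal{E}}_{t_0}^{s-1}$ holds, so in particular $V^{s} > cn$, and I work conditionally on $\mathfrak{F}^s$.

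Next I would split into the good and bad cases for step $s$, according to whether $\Delta^s \leq Dng$. In the good case, Property~1 of $\mathcal{Q}_{g,r_0}$ together with Lemma~\ref{lem:g_adv_good_step_drop}(ii) (which applies to $V$ since its smoothing parameter $\alpha_1$ is $\leq \frac{1}{18}$) yields
\[
\mathbb{E}\bigl[V^{s+1} \,\bigm|\, \mathfrak{F}^s\bigr] \leq V^{s} \cdot \Bigl(1 - \tfrac{\alpha_1 \eps}{n}\Bigr),
\]
and $G_{t_0}^{s} = G_{t_0}^{s-1} + 1$, $B_{t_0}^{s} = B_{t_0}^{s-1}$. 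Plugging into the definition of $\tilde{V}_{t_0}^{s+1}$ and pulling out the extra factor $e^{\alpha_1\eps/n}$, the bound reduces to $(1 - \alpha_1\eps/n) \cdot e^{\alpha_1\eps/n} \leq 1$, which is the standard inequality $(1-x)e^{x} \leq 1$ for $x \geq 0$. In the bad case, I would invoke Lemma~\ref{lem:g_adv_bad_step_increase}(ii) (whose precondition $\max_i r_i^t \leq 2/n$ holds for \GAdvComp and hence for $\mathcal{Q}_{g,r_0}$ by Property~1) to get
\[
\mathbb{E}\bigl[V^{s+1} \,\bigm|\, \mathfrak{F}^s\bigr] \leq V^{s} \cdot \Bigl(1 + \tfrac{3\alpha_1}{n}\Bigr),
\]
with $B_{t_0}^{s} = B_{t_0}^{s-1} + 1$, $G_{t_0}^{s} = G_{t_0}^{s-1}$. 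Here the analogous reduction uses $(1 + 3\alpha_1/n) \cdot e^{-3\alpha_1/n} \leq 1$.

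Combining the two cases and using $\mathbf{1}_{\tilde{\mathcal{E}}_{t_0}^{s}} \leq \mathbf{1}_{\tilde{\mathcal{E}}_{t_0}^{s-1}}$ (to discard the extra event at step $s$ in the upper bound), the desired inequality
\[
\mathbb{E}\bigl[\tilde{V}_{t_0}^{s+1} \,\bigm|\, \mathfrak{F}^s\bigr] \leq \tilde{V}_{t_0}^{s}
\]
follows. I do not anticipate a genuine obstacle: the proof is essentially a direct transcription of Lemma~\ref{lem:g_adv_lambda_tilde_is_supermartingale} with $\alpha_1$ replacing $\alpha$, and the only thing to verify carefully is that $\mathcal{Q}_{g,r_0}$ inherits both the good-step drop and the bad-step increase bounds uniformly for all $s \geq 0$ (which is exactly what Property~1 guarantees, since once $s \geq \sigma$ the process allocates deterministically to a minimum-load bin and therefore cannot do worse than the expectation bound).
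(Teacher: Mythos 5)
Your overall approach is correct and matches the paper's: condition on $\mathfrak{F}^s$, dispose of the case where the indicator already vanished, split on whether $\mathcal{G}^s$ holds, and invoke the good-step drop (Lemma~\ref{lem:g_adv_good_step_drop}(ii)) or the bad-step increase (Lemma~\ref{lem:g_adv_bad_step_increase}(ii)), with Property~1 of $\mathcal{Q}_{g,r_0}$ supplying those inequalities uniformly in $s$. However, there is one local slip that would break the formal write-up. You write that assuming $\tilde{\mathcal{E}}_{t_0}^{s-1}$ holds gives ``in particular $V^s > cn$.'' It does not: by definition $\tilde{\mathcal{E}}_{t_0}^{s-1} = \bigcap_{t\in[t_0,s-1]}\{V^t > cn\}$ only constrains $V^{t_0},\ldots,V^{s-1}$ and says nothing about $V^s$. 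This matters because the good-step drop inequality in Lemma~\ref{lem:g_adv_good_step_drop}(ii) is conditioned on $V^s > cn$; without that hypothesis you only get the weaker additive form from part~(i), and the intended reduction to $(1-x)e^x\le 1$ does not go through.

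The fix is small and is exactly how the paper's proof of Lemma~\ref{lem:g_adv_lambda_tilde_is_supermartingale} handles it: keep the indicator $\mathbf{1}_{\tilde{\mathcal{E}}_{t_0}^{s}}$ inside the conditional expectation. If $V^s \le cn$ then $\mathbf{1}_{\tilde{\mathcal{E}}_{t_0}^{s}} = 0$ deterministically, so $\tilde{V}_{t_0}^{s+1}=0 \le \tilde{V}_{t_0}^{s}$ and there is nothing to prove; if $V^s > cn$ (and $\tilde{\mathcal{E}}_{t_0}^{s-1}$ holds, so $\mathbf{1}_{\tilde{\mathcal{E}}_{t_0}^{s}}=\mathbf{1}_{\tilde{\mathcal{E}}_{t_0}^{s-1}}=1$) then Lemma~\ref{lem:g_adv_good_step_drop}(ii) applies and your argument goes through. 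With that patch the proposal is a faithful transcription of the paper's proof.
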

\begin{proof}
The proof is similar to that of~\cref{lem:g_adv_lambda_tilde_is_supermartingale}, by substituting $\Lambda$ with $\V$ and $\tilde{\Lambda}$ with $\tV$. The drop inequalities follow from \cref{lem:g_adv_bad_step_increase} and \cref{lem:g_adv_good_step_drop}, since $\V$ has the same form as $\Lambda$ and a smaller smoothing parameter $\alpha_1 \leq \alpha$. The process $\mathcal{Q}_{g, r_0}$ also satisfies the drop inequalities by Property 1 (see \cref{g_adv_modified_process_property_1}).
\end{proof}

We defer the proofs of the next two lemmas to \cref{sec:g_adv_v_smoothness_proof}. The first one is a simple smoothness argument for the potential $\V$ defined in \cref{eq:g_adv_v_def}. 

\newcommand{\GAdvVSmoothness}{
Consider the potential $\V := \V(\alpha_1, c_4g)$ for any $\alpha_1 > 0$, any $c_4 > 0$ and any $g \geq 1$. Then, $(i)$~for any step $t \geq 0$, we have that \[
e^{-\alpha_1} \cdot \V^t \leq \V^{t+1} \leq e^{\alpha_1} \cdot \V^t.
\]
Further, $(ii)$~for any $\hat{c} > 0$, for any integer $T > 0$ and any step $t \geq 0$, for which there exist steps $s_0 \in [t - T, t]$ and $s_1 \in [t, t + T]$, such that $\V^{s_0} \leq \hat{c} n$ and $\V^{s_1} \leq \hat{c} n$, we have that \[
\V^t \leq e^{\alpha_1 \frac{T}{n}} \cdot 2\hat{c} n.
\]
}
\begin{lem} \label{lem:g_adv_v_smoothness}
\GAdvVSmoothness
\end{lem}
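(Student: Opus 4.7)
The plan for part $(i)$ is a one-step Lipschitz argument. At step $t+1$ a single bin $j$ receives the ball, so $y_j^{t+1} = y_j^t + 1 - 1/n$ and $y_i^{t+1} = y_i^t - 1/n$ for every other bin $i$; hence $|y_i^{t+1} - y_i^t| \leq 1$ for all $i$. Since the maps $u \mapsto (u - c_4 g)^+$ and $u \mapsto (-u - c_4 g)^+$ are both $1$-Lipschitz, each of the two exponentials in the per-bin summand $\V_i^{t+1}$ lies within a multiplicative factor of $e^{\pm\alpha_1}$ of its counterpart in $\V_i^t$. Summing over $i$ then yields both inequalities $e^{-\alpha_1}\V^t \leq \V^{t+1} \leq e^{\alpha_1}\V^t$.

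For part $(ii)$ the key observation is an asymmetry in the single-step increments: one bin's normalized load can jump up by $1-1/n$, but every bin's normalized load can only \emph{decrease} by $1/n$ per step. Iterating over an interval of length at most $T$, every $y_i$ can therefore decrease by at most $T/n$, even though it may increase by much more. Applied to the interval $[t, s_1]$ this gives $y_i^t \leq y_i^{s_1} + T/n$, and applied to $[s_0, t]$ it gives $y_i^t \geq y_i^{s_0} - T/n$, i.e.\ $-y_i^t \leq -y_i^{s_0} + T/n$. Combining these with the $1$-Lipschitz property of $(\cdot)^+$ yields the pointwise bounds $(y_i^t - c_4 g)^+ \leq (y_i^{s_1} - c_4 g)^+ + T/n$ and $(-y_i^t - c_4 g)^+ \leq (-y_i^{s_0} - c_4 g)^+ + T/n$.

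Exponentiating and summing over $i$ then gives
\[
\sum_{i=1}^n e^{\alpha_1 (y_i^t - c_4 g)^+} \leq e^{\alpha_1 T/n} \cdot \V^{s_1} \leq e^{\alpha_1 T/n} \cdot \hat{c} n,
\]
and symmetrically for the negative exponential using the $\V^{s_0}$ bound; adding the two halves yields $\V^t \leq 2\hat{c} n \cdot e^{\alpha_1 T/n}$, as required. There is no serious obstacle in either part — both arguments are purely deterministic smoothness calculations. The only subtle point worth emphasising is the asymmetric use of a \emph{future} witness ($s_1$) to control the overloaded exponential and a \emph{past} witness ($s_0$) to control the underloaded exponential: swapping these roles would fail, because a bin's normalized load can be controlled from above only by looking forward in time and from below only by looking backward.
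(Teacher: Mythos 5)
Your proof is correct and follows essentially the same route as the paper's: part $(i)$ is the one-step $|y_i^{t+1}-y_i^t|\leq 1$ bound combined with Lipschitzness of $(\cdot)^+$, and part $(ii)$ is exactly the paper's argument of bounding the maximum per-step decrease of $y_i$ by $1/n$ and controlling the overloaded exponential with the future witness $s_1$ and the underloaded one with the past witness $s_0$. Your closing remark about why the roles of $s_0$ and $s_1$ cannot be swapped is a helpful articulation of the point the paper uses implicitly.
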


The following lemma shows that by choosing a large enough offset $c_5 > 0$ in the potential $\Psi_0 := \Psi_0(\alpha_1, c_5g)$ (defined in \cref{eq:g_adv_psi_0_def}), when $\V^t = e^{\Oh(\alpha_1 g)} \cdot cn$, then $\Psi_0^t = \Oh(n)$.

\newcommand{\GAdvVPsiRelation}{
Consider any $c, \hat{c} > 0$ and the potential $\V := \V(\alpha_1, c_4g)$ for any $\alpha_1 > 0$, any $c_4 > 0$ and any $g \geq 1$. Further, consider the potential $\Psi_0 := \Psi_0(\alpha_1, c_5g)$ with offset $c_5 := 2 \cdot \max\{c_4, \hat{c} \}$ and $C := 2e^{2\alpha_1} \cdot c + 1$. Then, for any step $t \geq 0$ with
$
\V^t \leq e^{\alpha_1 \cdot \hat{c} \cdot g} \cdot 2e^{2\alpha_1} cn
$, it holds that
$
\Psi_0^t \leq Cn.
$
}

\begin{lem} \label{lem:g_adv_v_psi0_relation}
\GAdvVPsiRelation
\end{lem}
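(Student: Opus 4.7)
The plan is to partition the bins according to whether their normalized load exceeds the offset $c_5 g$ of $\Psi_0$, and bound the two parts separately.

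First I would observe the key offset inequality: by the choice $c_5 = 2\max\{c_4,\hat{c}\}$, we have $c_5 \geq c_4 + \hat{c}$, and hence $c_5 g \geq c_4 g + \hat{c} g$. This is the whole source of the saving: the offset of $\Psi_0$ is at least $\hat{c} g$ larger than the offset of $\V$, which will allow us to absorb the extra multiplicative factor $e^{\alpha_1 \hat{c} g}$ in the assumed upper bound on $\V^t$.

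Next, I would split $\Psi_0^t$ as $\Psi_0^t = \sum_{i : y_i^t \leq c_5 g} \Psi_{0,i}^t + \sum_{i : y_i^t > c_5 g} \Psi_{0,i}^t$. For the first sum, each bin contributes $e^{\alpha_1 \cdot 0} = 1$, so the total contribution is at most $n$. For the second sum, for each such bin I would write
\[
\Psi_{0,i}^t = e^{\alpha_1(y_i^t - c_5 g)} = e^{\alpha_1(y_i^t - c_4 g)} \cdot e^{-\alpha_1(c_5 - c_4)g} \leq e^{-\alpha_1 \hat{c} g} \cdot \V_i^t,
\]
using the offset inequality and the fact that on such a bin $\V_i^t \geq e^{\alpha_1(y_i^t - c_4 g)^+} = e^{\alpha_1(y_i^t - c_4 g)}$, since $y_i^t > c_5 g \geq c_4 g$.

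Summing over all bins with $y_i^t > c_5 g$ and using the hypothesis $\V^t \leq e^{\alpha_1 \hat{c} g} \cdot 2 e^{2\alpha_1} c n$, this second part is bounded by $e^{-\alpha_1 \hat{c} g} \cdot \V^t \leq 2 e^{2\alpha_1} c n$. Combining both contributions gives $\Psi_0^t \leq n + 2 e^{2\alpha_1} c n = C n$, which matches the definition of $C$. The entire argument is a direct calculation; there is no real obstacle beyond making sure the choice of $c_5$ gives the required offset gap of $\hat{c} g$ between $\Psi_0$ and $\V$.
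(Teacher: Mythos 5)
Your proof is correct and follows essentially the same approach as the paper: split $\Psi_0^t$ into bins above and below the offset $c_5 g$, bound the "below" part by $n$, and pull out a decay factor of $e^{-\alpha_1 \hat c g}$ from the "above" part so that $\V^t$ absorbs the rest. The only cosmetic difference is that you extract the factor via $c_5 - c_4 \geq \hat{c}$, while the paper shifts to the midpoint offset $(c_5/2)g$ and uses $c_5/2 \geq c_4$ and $c_5/2 \geq \hat{c}$ separately; both are the same calculation.
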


In comparison to \cref{sec:g_adv_g_plus_logn_bound}, where we proved stabilization over an interval of length $\Omega(n \cdot \max\{\log n, g\})$, here we will be using a shorter interval of length \begin{align} \label{eq:g_adv_tilde_delta_s_def}
\tilde{\Delta}_s := \frac{20 \cdot \tilde{c}_s \cdot \log(2c e^{2\alpha_1})}{\alpha_1 \eps r} \cdot ng,
\end{align}
where constants $\eps := \frac{1}{12}, r := \frac{6}{6 + \eps} > 0$ are as defined in \cref{sec:g_adv_g_plus_logn_bound} and  $\tilde{c}_s := \tilde{c}_s(\alpha_1, c_4, e^{2\alpha_1}c) \geq 1$ is defined in \cref{lem:g_adv_bounds_on_quadratic}.

We now prove the bounded difference condition for the $\tV$ potential. This follows from the second property of $\mathcal{Q}_{g, r_0}$ that the maximum normalized load in absolute value is $\Oh(g + \log n)$ for any step $s \geq r_0$.

\begin{lem} \label{lem:g_adv_v_tilde_lipschitz}
Consider the $\mathcal{Q}_{g, r_0}$ process for any $g \in [1, c_6 \log n]$ with $c_6 > 0$ as defined in \cref{eq:g_adv_c6_def}, any step $r_0 \geq 0$, and $\mathcal{Z}^{r_0}$ as defined in \cref{eq:g_adv_z_event_def}. Further, consider the sequence $(\tV_{t_0}^s)_{s \geq t_0} := (\tV_{t_0}^s)_{s \geq t_0}(\alpha_1, c_4g, \eps)$ for any step $t_0 \geq r_0$ with $\alpha_1 > 0$ as defined in \cref{eq:g_adv_alpha_1_def} and $\eps, c_4 > 0$ as defined in \cref{lem:g_adv_good_step_drop}.
Then, for any step $s \geq t_0 \geq r_0$ we have that $\tilde{V}_{t_0}^{s+1} = 0$ or \[
\left( \left| \Delta\tV_{t_0}^{s+1} \right| \mid \mathcal{Z}^{r_0}, \mathfrak{F}^s \right) \leq 16 \cdot e^{\alpha_1 \eps \cdot \frac{s - t_0}{n}} \cdot n^{1/3}.
\]
\end{lem}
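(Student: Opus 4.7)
The plan is to case-split on whether $\tV_{t_0}^{s+1}$ equals $0$; if it does, the first disjunct of the lemma holds and there is nothing more to prove. In the remaining case, $\mathbf{1}_{\tilde{\mathcal{E}}_{t_0}^{s}} = 1$, which forces $\mathbf{1}_{\tilde{\mathcal{E}}_{t_0}^{s-1}} = 1$ as well, so writing $M^{s} := \exp\bigl(-\tfrac{3\alpha_1}{n} B_{t_0}^{s-1} + \tfrac{\alpha_1 \eps}{n} G_{t_0}^{s-1}\bigr)$ I have $\tV_{t_0}^s = \V^s \cdot M^s$ and $\tV_{t_0}^{s+1} = \V^{s+1} \cdot M^{s+1}$, and I only need to bound $|\Delta(\V \cdot M)|$ deterministically given $\mathfrak{F}^s$ and $\mathcal{Z}^{r_0}$.

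The key input is Property 2 of the modified process $\mathcal{Q}_{g, r_0}$: under $\mathcal{Z}^{r_0}$, we have $\max_{i \in [n]} |y_i^{u}| \leq 2\kappa \log n$ for every $u \geq r_0$, in particular at $u = s$ and $u = s+1$. Since $\alpha_1 = \tfrac{1}{6\kappa}$, each coordinate satisfies $\V_i^s \leq 2 e^{\alpha_1 \cdot 2 \kappa \log n} = 2\, n^{1/3}$, and summing yields $\V^s \leq 2\, n^{4/3}$. Further, from $B_{t_0}^{s-1} \geq 0$ and $G_{t_0}^{s-1} \leq s - t_0$ I read off $M^s \leq e^{\alpha_1 \eps (s-t_0)/n}$ and $M^{s+1} \leq e^{\alpha_1 \eps (s - t_0 + 1)/n}$; this is where the exponential growth factor on the right-hand side will come from.

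Next, I decompose
\[
\Delta \tV_{t_0}^{s+1} \;=\; M^{s+1} \cdot (\V^{s+1} - \V^s) \;+\; \V^s \cdot (M^{s+1} - M^s),
\]
and bound each summand separately. For the first, exactly one bin $i^*$ changes its normalized load by $1 - 1/n$ while every other bin shifts by $-1/n$; a direct mean-value estimate $|\V_i(y + \delta) - \V_i(y)| \leq \alpha_1 |\delta|\, e^{\alpha_1 |\delta|}\, \V_i(y)$ applied termwise gives $|\V^{s+1} - \V^s| \leq \Oh(\alpha_1) \cdot \V_{i^*}^s + \tfrac{\Oh(\alpha_1)}{n} \cdot \V^s = \Oh(\alpha_1\, n^{1/3})$. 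For the second, $M^{s+1}/M^s$ is either $e^{\alpha_1 \eps/n}$ or $e^{-3\alpha_1/n}$, hence $|M^{s+1} - M^s| \leq \tfrac{3\alpha_1}{n} M^s$, and multiplying by $\V^s \leq 2n^{4/3}$ and by the bound on $M^s$ gives $\Oh(\alpha_1\, n^{1/3}) \cdot e^{\alpha_1 \eps (s-t_0)/n}$. Adding the two contributions and absorbing the $e^{\alpha_1 \eps/n} \leq 1 + o(1)$ slack from $M^{s+1}$ into the constant, together with $\alpha_1 \leq \tfrac{1}{108}$, comfortably brings the total below the target $16 \cdot n^{1/3} \cdot e^{\alpha_1 \eps (s - t_0)/n}$.

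The main obstacle is not any single deep step but the simultaneous bookkeeping: combining Property 2 (for the pointwise bound $\V_i^s \leq 2n^{1/3}$), the Lipschitz behaviour of $\V$ under one-step updates (for $\Delta \V^{s+1}$), and the multiplicative structure of $M$ (for $\Delta M^{s+1}$), while carefully tracking the $e^{\alpha_1 \eps (s - t_0)/n}$ factor so that it reappears exactly as stated. Beyond this, the remaining steps reduce to elementary inequalities of the form $|e^u - 1| \leq 2|u|$ for small $|u|$.
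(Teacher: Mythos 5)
Your proof is correct and follows essentially the same route as the paper's: both use Property~2 of $\mathcal{Q}_{g, r_0}$ together with $\alpha_1 = \frac{1}{6\kappa}$ to get the pointwise bound $\V_i^s \leq 2n^{1/3}$ (hence $\V^s \leq 2n^{4/3}$), bound the one-step change $|\Delta\V^{s+1}| = \Oh(n^{1/3})$, and bound the multiplicative factor $M^{s+1}/M^s$ by $e^{\alpha_1\eps/n}$ (or $e^{-3\alpha_1/n}$). The only difference is organizational: you apply the product-rule identity $\Delta(\V M) = M^{s+1}\Delta\V + \V^s \Delta M$ and bound each summand in absolute value, which gives the two-sided bound symmetrically, whereas the paper derives the upper and lower bounds on $\tV_{t_0}^{s+1}$ separately (and in the lower-bound branch needs to invoke $\V^s \geq 2n \geq 5n^{1/3}$ to keep $\V^s - 5n^{1/3}$ non-negative before multiplying). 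Your decomposition makes that last step unnecessary, so the bookkeeping is marginally cleaner, but the core estimates and constants are the same.
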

\begin{proof}
Consider an arbitrary step $s \geq r_0$ and assume that the event $\mathcal{Z}^{r_0}$ holds. By Property 2 of the $\mathcal{Q}_{g, r_0}$ process (see \cref{sec:g_adv_modified_process}), we have that
\[
\max_{i \in [n]} \left| y_i^s \right| \leq \kappa \cdot (g + \log n) + 1,
\]
which also implies for $c_4 := 730 > 0$, since $g \leq c_6 \log n \leq \log n$ and $\kappa > 1$ that
\begin{align} \label{eq:g_adv_v_normalised_load_bound}
\max_{i \in [n]} \left\{ (y_i^s - c_4g)^+, (-y_i^s - c_4g)^+ \right\} \leq \kappa \cdot (g + \log n) + 1 - c_4 g \leq 2\kappa \log n.
\end{align}

We will now show that $|\Delta\V^{s+1}| \leq 5 n^{1/3}$ and then use this to bound $|\Delta\tV^{s+1}|$. By \cref{eq:g_adv_v_normalised_load_bound} for any bin $i \in [n]$,  \begin{align*}
\V_i^s \leq 2 \cdot e^{2 \alpha_1 \kappa \log n } = 2 \cdot e^{\frac{1}{3} \cdot \log n } = 2 n^{1/3}, %
\end{align*}
using that $\alpha_1 := \frac{1}{6\kappa}$.
Hence, by aggregating over all bins, $\V^s \leq 2 n^{4/3}$. Furthermore, if the ball at step $s+1$ is allocated to bin $j \in [n]$, then
\[
\Delta\V^{s+1} \leq e^{\alpha_1/n} \cdot \V^s + e^{\alpha_1} \cdot \V_j^s - \V^s \leq \frac{2\alpha_1}{n} \V^s + 2 \cdot 2n^{1/3} \leq \frac{2\alpha_1}{n} \cdot (2n^{4/3}) + 4n^{1/3} \leq 5n^{1/3},
\]
using that $e^{\alpha_1/n} \leq 1 + 2 \cdot \frac{\alpha_1}{n}$ and $e^{\alpha_1} \leq 2$, which both hold as $\alpha_1 \leq 1/4$. Similarly,
\[
\Delta\V^{s+1} \geq e^{-\alpha_1/n} \cdot \V^s - e^{\alpha_1} \cdot \V_j^s - \V^s \geq -\frac{\alpha_1}{n} \V^s - 2\cdot 2n^{1/3} \geq -\frac{\alpha_1}{n} \cdot (2n^{4/3}) - 4n^{1/3} \geq -5n^{1/3},
\]
using that $e^{-\alpha_1/n} \geq 1 - \frac{\alpha_1}{n}$ and $e^{\alpha_1} \leq 2$,  as $\alpha_1 \leq 1/4$.

Now, we turn to upper bounding $|\Delta\tV_{t_0}^{s+1}|$ by proving lower and upper bounds on $\tV_{t_0}^{s+1}$. If $\tV_{t_0}^{s+1} = 0$, then the conclusion follows. Otherwise, since $\tV_{t_0}^{s+1} > 0$, we have that $\mathbf{1}_{\tilde{\mathcal{E}}_{t_0}^s} = \mathbf{1}_{\tilde{\mathcal{E}}_{t_0}^{s-1}} = 1$, so by definition of $\tV$ in \cref{eq:g_adv_tilde_v_def},
\begin{align} \label{eq:g_adv_tilde_v_s_bound}
\tV_{t_0}^{s} = \V^{s} \cdot \exp\Big( - \frac{3\alpha_1 }{n} \cdot B_{t_0}^{s-1} \Big) \cdot \exp\Big( + \frac{\alpha_1 \eps}{n} \cdot G_{t_0}^{s-1} \Big) \leq 2 \cdot e^{\alpha_1 \eps \cdot \frac{s - t_0}{n}}\cdot n^{4/3},
\end{align}
using that $G_{t_0}^{s-1} \leq s - t_0$ and $\V^s \leq 2n^{4/3}$.

Now, we upper bound $\tV_{t_0}^{s+1}$, recalling that $\mathbf{1}_{\tilde{\mathcal{E}}_{t_0}^s} = 1$,
\begin{align} 
\tV_{t_0}^{s+1}
 & = \V^{s+1} \cdot \exp\Big( - \frac{3\alpha_1 }{n} \cdot B_{t_0}^{s} \Big) \cdot \exp\Big( + \frac{\alpha_1 \eps}{n} \cdot G_{t_0}^{s} \Big) \notag \\
 & \leq (\V^{s} + 5 n^{1/3}) \cdot \exp\Big( - \frac{3\alpha_1 }{n} \cdot B_{t_0}^{s} \Big) \cdot \exp\Big( + \frac{\alpha_1 \eps}{n} \cdot G_{t_0}^{s} \Big) \notag \\
 & \leq (\V^{s} + 5 n^{1/3}) \cdot \exp\Big( - \frac{3\alpha_1 }{n} \cdot B_{t_0}^{s-1} \Big) \cdot \exp\Big( + \frac{\alpha_1 \eps}{n} \cdot G_{t_0}^{s-1} \Big) \cdot \exp\Big( \frac{\alpha_1\eps}{n}\Big) \notag \\
 & = \tV_{t_0}^s \cdot \exp\Big( \frac{\alpha_1\eps}{n}\Big) + 5 \cdot n^{1/3} \cdot \exp\Big( - \frac{3\alpha_1 }{n} \cdot B_{t_0}^{s-1} \Big) \cdot \exp\Big( + \frac{\alpha_1 \eps}{n} \cdot G_{t_0}^{s-1} \Big) \cdot \exp\Big( \frac{\alpha_1\eps}{n}\Big) \notag \\
 & \stackrel{(a)}{\leq} \tV_{t_0}^s \cdot \Big( 1 + \frac{3\alpha_1}{n}\Big) + 5 \cdot n^{1/3} \cdot e^{\alpha_1 \eps \cdot \frac{s - t_0}{n} } \cdot 2 \notag \\
 & \stackrel{(b)}{\leq} \tV_{t_0}^s + (2 \cdot e^{\alpha_1 \eps \cdot \frac{s - t_0}{n}}\cdot n^{4/3}) \cdot \frac{3\alpha_1}{n} + 10 \cdot n^{1/3} \cdot e^{\alpha_1 \eps \cdot \frac{s - t_0}{n}} \notag \\
 & \leq \tV_{t_0}^s  + 16 \cdot e^{\alpha_1 \eps \cdot \frac{s - t_0}{n}} \cdot n^{1/3}, \label{eq:tilde_v_ub_1}
\end{align}
using in $(a)$ that $e^{\alpha_1\eps/n} \leq 1 + \frac{3\alpha_1}{n}$ as $\alpha_1 \leq 1$ and $\eps = \frac{1}{12}$, 
$G_{t_0}^{s-1} \leq s - t_0$,
and $e^{\alpha_1\eps/n} \leq 2$ and in $(b)$ using \cref{eq:g_adv_tilde_v_s_bound}.

Similarly, we lower bound $\tV_{t_0}^{s+1}$,%
\begin{align} 
\tV_{t_0}^{s+1}
 & = \V^{s+1} \cdot \exp\Big( - \frac{3\alpha_1 }{n} \cdot B_{t_0}^{s} \Big) \cdot \exp\Big( + \frac{\alpha_1 \eps}{n} \cdot G_{t_0}^{s} \Big) \notag \\
 & \geq (\V^{s} - 5 n^{1/3}) \cdot \exp\Big( - \frac{3\alpha_1 }{n} \cdot B_{t_0}^{s} \Big) \cdot \exp\Big( + \frac{\alpha_1 \eps}{n} \cdot G_{t_0}^{s} \Big) \notag \\
 & \stackrel{(a)}{\geq} (\V^{s} - 5 n^{1/3}) \cdot \exp\Big( - \frac{3\alpha_1 }{n} \cdot B_{t_0}^{s-1} \Big) \cdot \exp\Big( + \frac{\alpha_1 \eps}{n} \cdot G_{t_0}^{s-1} \Big) \cdot \exp\Big(-\frac{3\alpha_1}{n}\Big) \notag \\
 & = \tV_{t_0}^s \cdot \exp\Big( -\frac{3\alpha_1}{n}\Big) - 5 \cdot n^{1/3} \cdot \exp\Big( - \frac{3\alpha_1}{n} \cdot B_{t_0}^{s-1} \Big) \cdot \exp\Big( + \frac{\alpha_1 \eps}{n} \cdot G_{t_0}^{s-1} \Big) \cdot \exp\Big( -\frac{3\alpha_1}{n}\Big) \notag \\
 & \stackrel{(b)}{\geq} \tV_{t_0}^s \cdot \Big( 1 - \frac{3\alpha_1}{n}\Big) - 5 \cdot n^{1/3} \cdot e^{\alpha_1 \eps \cdot \frac{s - t_0}{n}} \notag \\
 & \stackrel{(c)}{\geq} \tV_{t_0}^s - (2 \cdot e^{\alpha_1 \eps \cdot \frac{s - t_0}{n}}\cdot n^{4/3}) \cdot \frac{3\alpha_1}{n} - 5 \cdot n^{1/3} \cdot e^{\alpha_1 \eps \cdot \frac{s - t_0}{n}} \notag \\
 & \geq \tV_{t_0}^s - 16 \cdot e^{\alpha_1 \eps \cdot \frac{s - t_0}{n}} \cdot n^{1/3}, \label{eq:tilde_v_ub_2}
\end{align}
using in $(a)$ that $V^s \geq 2n \geq 5n^{1/3}$ holds deterministically, in $(b)$ that $e^{-3\alpha_1/n} \geq 1 - \frac{3\alpha_1}{n}$ and $G_{t_0}^{s-1}  \leq s - t_0$ and in $(c)$ using \cref{eq:g_adv_tilde_v_s_bound}.

Hence, combining the two upper bounds in \cref{eq:tilde_v_ub_1} and \cref{eq:tilde_v_ub_2}, we conclude that $|\Delta\tV_{t_0}^{s+1}| \leq 16 \cdot e^{\alpha_1 \eps \cdot \frac{s - t_0}{n}} \cdot n^{1/3}$.
\end{proof}

\subsection{Strong Stabilization}

We will now prove the following slightly stronger version of \cref{lem:g_adv_stabilization}, meaning that stabilization is over intervals of length $\Theta(ng)$ instead of $\Theta(n \cdot (g + \log n))$.

\begin{lem}[\textbf{Strong Stabilization}] \label{lem:g_adv_strong_stabilization}
Consider the $\mathcal{Q}_{g, r_0}$ process for any $g \in [1, c_6 \log n]$ for $c_6 > 0$ as defined in \cref{eq:g_adv_c6_def}, any step $r_0 \geq 0$ and $\mathcal{Z}^{r_0}$ as defined in \cref{eq:g_adv_z_event_def}. Then, for the potential $\V := \V(\alpha_1, c_4g)$ with $\alpha_1 > 0$ as defined in \cref{eq:g_adv_alpha_1_def}, $c_4, c > 0$ as defined in \cref{lem:g_adv_good_step_drop} and $\tilde{\Delta}_s> 0$ as defined in \cref{eq:g_adv_tilde_delta_s_def}, it holds that for any step $t_0 \geq r_0$,
\[
\Pro{ \left. \bigcup_{s \in [t_0, t_0 + \tilde{\Delta}_s]} \left\{ \V^{s} \leq e^{\alpha_1} cn \right\} ~\,  \right\vert~ \mathcal{Z}^{r_0}, \mathfrak{F}^{t_0}, e^{\alpha_1} cn < \V^{t_0} \leq e^{2\alpha_1} cn } \geq 1 - n^{-11}.
\]
\end{lem}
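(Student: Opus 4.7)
The plan is to follow the outline of \cref{lem:g_adv_stabilization}, but to exchange Markov's inequality for Azuma's inequality applied to the super-martingale $\tilde{V}_{t_0}^{s}$; this is what allows the stabilization window to shrink from $\Theta(n(g+\log n))$ to $\Theta(ng)$, at the cost of the stronger hypothesis $V^{t_0} \leq e^{2\alpha_1} cn$ and the minor relaxation of the conclusion to $V^{s} \leq e^{\alpha_1} cn$. The hypothesis $V^{t_0} \leq e^{2\alpha_1} cn$ together with \cref{lem:g_adv_bounds_on_quadratic}~$(i)$ (with $\hat{c} := e^{2\alpha_1} c$) gives $\Upsilon^{t_0} \leq T$ for $T := \tilde{c}_s \cdot n g^2$. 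Setting $t_1 := t_0 + \tilde{\Delta}_s - 1$, the parameter $\hat{c}' := \tilde{\Delta}_s \cdot g / T = \frac{20\log(2ce^{2\alpha_1})}{\alpha_1 \eps r}$ satisfies $\hat{c}' \geq 1$ and $T \in [ng^2, n^2 g^3/\hat{c}']$, while Property~2 of $\mathcal{Q}_{g,r_0}$ combined with $\mathcal{Z}^{r_0}$ gives $\max_{i}|y_i^{t_0}| \leq 2\kappa \log n \leq g(\log(ng))^2$. Hence \cref{lem:g_adv_many_good_steps} yields
\[
\Pro{G_{t_0}^{t_1} \geq r \tilde{\Delta}_s \,\left|\, \mathcal{Z}^{r_0}, \mathfrak{F}^{t_0}, V^{t_0} \leq e^{2\alpha_1} cn\right.} \geq 1 - 2(ng)^{-12}.
\]

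Next I would apply Azuma's inequality (\cref{lem:azuma}) to $(\tilde{V}_{t_0}^{s})_{s \geq t_0}$, which is a super-martingale by \cref{lem:g_adv_v_tilde_is_supermartingale}. \cref{lem:g_adv_v_tilde_lipschitz} gives the bounded-difference estimate $|\Delta \tilde{V}_{t_0}^{s+1}| \leq 16 \cdot e^{\alpha_1 \eps (s-t_0)/n} \cdot n^{1/3}$ under $\mathcal{Z}^{r_0}$. The choice of $c_6$ in \cref{eq:g_adv_c6_def} is engineered precisely so that $\alpha_1 \eps \tilde{\Delta}_s / n \leq \tfrac{1}{9} \log n$, so each increment is at most $16 n^{4/9}$ and $\sum_{s=t_0}^{t_1} |\Delta \tilde{V}_{t_0}^{s+1}|^2 = O(n^{17/9})$. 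Taking $\lambda := \epsilon_0 \cdot cn$ for a small constant $\epsilon_0 > 0$ to be fixed below, Azuma gives
\[
\Pro{\tilde{V}_{t_0}^{t_1+1} \geq \tilde{V}_{t_0}^{t_0} + \epsilon_0 cn \,\left|\, \mathcal{Z}^{r_0}, \mathfrak{F}^{t_0}\right.} \leq \exp\!\left(-\Omega(n^{1/9})\right) = n^{-\omega(1)}.
\]

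I would then close the argument by contradiction. Assume $V^{s} > e^{\alpha_1} cn$ for every $s \in [t_0, t_1]$. Then $\mathbf{1}_{\tilde{\mathcal{E}}_{t_0}^{t_1}} = 1$, and \cref{lem:g_adv_v_smoothness}~$(i)$ yields $V^{t_1+1} \geq e^{-\alpha_1} V^{t_1} > cn$. On the good-steps event, the identity $3(1-r) = \eps r/2$ (which follows from $r = 6/(6+\eps)$) gives
\[
-\tfrac{3\alpha_1}{n} \cdot B_{t_0}^{t_1} + \tfrac{\alpha_1 \eps}{n} \cdot G_{t_0}^{t_1} \geq \tfrac{\alpha_1 \eps r}{2n} \cdot \tilde{\Delta}_s = 10 \tilde{c}_s \log(2c e^{2\alpha_1}) \cdot g,
\]
and thus $\tilde{V}_{t_0}^{t_1+1} \geq cn \cdot (2c e^{2\alpha_1})^{10 \tilde{c}_s g}$. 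Since $2c e^{2\alpha_1} \geq 2$ and $g \geq 1$, this is at least $2^{10 \tilde{c}_s} cn$, which exceeds $\tilde{V}_{t_0}^{t_0} + \epsilon_0 cn \leq (e^{2\alpha_1} + \epsilon_0) cn$ for any sufficiently small $\epsilon_0 > 0$, contradicting the Azuma bound. A union bound over the failure of the good-steps event and the Azuma event yields the claimed probability $1 - n^{-11}$.

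The main obstacle is keeping the bounded-difference budget in Azuma small enough: each $\tilde{V}$-increment can grow like $n^{1/3}$ from the allocation and is further amplified by the exponential factor $e^{\alpha_1 \eps(s-t_0)/n}$, so without the cap $g \leq c_6 \log n$ the sum of squared increments would swamp $\lambda^2 = \Theta(n^2)$ and Azuma would be vacuous. The constant $c_6$ in \cref{eq:g_adv_c6_def} is chosen exactly to keep that sum at $n^{2-\Omega(1)}$, and the contradiction-gap $(2c e^{2\alpha_1})^{10 \tilde{c}_s g} - e^{2\alpha_1}$ scales as $\Omega(1)$ even at $g = 1$, so no additional logarithmic slack is needed.
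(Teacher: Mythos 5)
Your proposal follows the paper's blueprint -- Azuma applied to the adjusted super-martingale $\tilde{V}_{t_0}^{s}$, the good-steps count from \cref{lem:g_adv_many_good_steps}, and the contradiction computation using $3(1-r)=\eps r/2$ -- so the overall approach is the same as the paper's. However, there is a genuine gap in the application of Azuma.

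You cite \cref{lem:g_adv_v_tilde_lipschitz} as providing the bounded-difference estimate $|\Delta\tilde{V}_{t_0}^{s+1}| \leq 16 e^{\alpha_1\eps(s-t_0)/n} n^{1/3}$, but that lemma only gives this bound \emph{or} $\tilde{V}_{t_0}^{s+1} = 0$. At the step $s$ where $V^s$ first drops to $\leq cn$ (so the indicator $\mathbf{1}_{\tilde{\mathcal{E}}_{t_0}^{s}}$ becomes $0$ but $\mathbf{1}_{\tilde{\mathcal{E}}_{t_0}^{s-1}}=1$), the one-step change is $-\tilde{V}_{t_0}^{s}$, which can be as large as $cn\cdot n^{1/9}$ under the event $\mathcal{Z}^{r_0}$ -- far exceeding your budget of $16n^{4/9}$. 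Since \cref{lem:azuma} requires a two-sided bound $|X^{i}-X^{i-1}|\leq a_i$ (it uses Hoeffding's lemma on the MGF of each increment), Azuma cannot be applied directly to $(\tilde{V}_{t_0}^{s})$. Observing informally that "the large jump is downward and to $0$, and in that case the conclusion holds anyway" does not fix this, because conditioning on the absence of such a jump destroys the super-martingale structure.

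The paper repairs this by working with the stopped process $X_{t_0}^{s} := \tilde{V}_{t_0}^{s\wedge\tau}$, where $\tau := \inf\{s\geq t_0 : V^s \leq e^{\alpha_1}cn\}$. Crucially, $\tau$ fires \emph{strictly before} the indicator can drop to zero: for $s<\tau$ one has $V^s > e^{\alpha_1}cn$, so by the one-step smoothness \cref{lem:g_adv_v_smoothness}~$(i)$, $V^{s+1}\geq e^{-\alpha_1}V^s > cn$, hence $\tilde{V}_{t_0}^{s+1}>0$ and \cref{lem:g_adv_v_tilde_lipschitz} applies; for $s\geq\tau$ the stopped process is constant and the increment is $0$. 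Thus $X$ satisfies the two-sided bounded-difference condition needed by \cref{lem:azuma}. Your argument should be re-run on $X$ rather than on $\tilde{V}$ directly, followed by the case split on $\tau\leq t_1$ versus $\tau>t_1$ (the former gives the conclusion outright; the latter means $X_{t_0}^{t_1}=\tilde{V}_{t_0}^{t_1}$, at which point your contradiction computation applies). The rest of your steps -- the reduction to $\Upsilon^{t_0}\leq T$ via \cref{lem:g_adv_bounds_on_quadratic}, the invocation of \cref{lem:g_adv_many_good_steps}, and the choice of $c_6$ to keep the increments at $n^{4/9}$ -- are correct and match the paper.
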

\begin{proof}
The proof of this lemma proceeds similarly to that of \cref{lem:g_adv_stabilization}, but we will apply Azuma's inequality for $\tilde{V}_{t_0}^s$ instead of Markov's inequality. However, we cannot directly apply concentration to $\tilde{V}_{t_0}^s$ because the bounded difference condition (\cref{lem:g_adv_v_tilde_lipschitz}) holds only when $\tilde{V}_{t_0}^s$ is positive. So instead we apply it to a stopped random variable $X_{t_0}^s$ to be defined in a way that ensures it is always positive.

Let $t_1 :=t_0 + \tilde{\Delta}_s$. We define the stopping time $\tau := \inf \{ s \geq t_0 \colon \V^s \leq e^{\alpha_1} cn \}$ and for any $s \in [t_0, t_1]$,
\[
  X_{t_0}^{s} := \tV_{t_0}^{s \wedge \tau}.
\]
We will now verify that $X_{t_0}^s > 0$ for all $s \in [t_0, t_1]$. Firstly, consider any $s < \tau$. Since $\V^s > e^{\alpha_1} cn$, by \cref{lem:g_adv_v_smoothness}~$(i)$, we have that $\V^{s+1} \geq V^s \cdot e^{-\alpha_1} >  cn$ and hence $X_{t_0}^{s+1} = \tV_{t_0}^{s+1} > 0$. Secondly, for any $s \geq \tau$, it trivially holds that $X_{t_0}^{s+1} = X_{t_0}^s > 0$.

We proceed to verify the preconditions of Azuma's inequality for super-martingales (\cref{lem:azuma}) for the sequence $(X_{t_0}^{s})_{s \in [t_0, t_1]}$. Firstly, using \cref{lem:g_adv_v_tilde_is_supermartingale} it forms a super-martingale, i.e., that $\ex{X_{t_0}^{s} \mid \mathfrak{F}^{s-1}} \leq X_{t_0}^{s - 1}$. %
Secondly, by~\cref{lem:g_adv_v_tilde_lipschitz}, since $X_{t_0}^{s} > 0$, for any filtration $\mathfrak{F}^{s-1}$ where $\mathcal{Z}^{r_0}$ holds, we have that
\begin{align*}
\left( \bigl| X_{t_0}^{s} - X_{t_0}^{s-1} \bigr| \mid \mathcal{Z}^{r_0}, \mathfrak{F}^{s-1} \right)
 & \leq \left(\bigl| \tV_{t_0}^{s} - \tV_{t_0}^{s-1} \bigr| \mid \mathcal{Z}^{r_0}, \mathfrak{F}^{s-1} \right) \\
 & \leq 16 \cdot e^{\alpha_1 \eps \cdot \frac{\tilde{\Delta}_s}{n}} \cdot n^{1/3} \\ 
 & \stackrel{(a)}{=} (16 \cdot e^{20 \cdot \tilde{c}_s \cdot r^{-1} \cdot \log(2c e^{2\alpha_1}) g}) \cdot n^{1/3} \\
 & \stackrel{(b)}{\leq} 16 \cdot e^{20 \cdot \tilde{c}_s \cdot r^{-1} \cdot \log(2c e^{2\alpha_1}) \cdot c_6 \log n} \cdot n^{1/3} \\ 
 & = 16 \cdot n^{4/9},
\end{align*}
using in $(a)$ that $\tilde{\Delta}_s := \frac{20 \cdot \tilde{c}_s \cdot \log(2c e^{2\alpha_1})}{\alpha_1 \eps r} \cdot ng$ and in $(b)$ that $g \leq c_6 \log n$ and $c_6 := \frac{r}{9 \cdot 20 \cdot \tilde{c}_s \cdot \log(2c e^{2\alpha_1})}$. %

Hence, applying \cref{lem:azuma} for $\lambda := n$, $N := \tilde{\Delta}_s$ and $a_i := 16 \cdot n^{4/9}$, 
\begin{align} \label{eq:x_concentration_ineq}
\Pro{\left. X_{t_0}^{t_1} \geq X_{t_0}^{t_0} + n ~\,  \right\vert~ \mathcal{Z}^{r_0}, \mathfrak{F}^{t_0}, e^{\alpha_1} cn < \V^{t_0} \leq e^{2\alpha_1} cn} 
 & \leq \exp\left( - \frac{n^2}{2 \cdot \tilde{\Delta}_s \cdot (16 \cdot n^{4/9})^2 } \right) \notag \\
 & \leq n^{-\omega(1)}.
\end{align}

As we condition on $\{\V^{t_0} \leq e^{2\alpha_1} cn \}$, we have that \[
\max_{i \in [n]} \left| y_i^{t_0} \right| 
  \leq c_4g + \frac{\log (e^{2\alpha_1} cn)}{\alpha_1} 
  \leq g (\log(ng))^2, 
\]
for sufficiently large $n$, using that $\alpha_1, c_5, c > 0$ are constants. Further, by \cref{lem:g_adv_bounds_on_quadratic}~$(i)$ (since $V$ has the same form as $\Lambda$), we have that $\Upsilon^{t_0} \leq \tilde{c}_s n g^2$, for some constant $\tilde{c}_s := \tilde{c}_s(\alpha_1, c_4, e^{2\alpha_1}c) \geq 1$.

Applying \cref{lem:g_adv_many_good_steps} for $T := \tilde{c}_s n g^2$ (since $T \in [ng^2, o(n^2g^3)]$) and $\hat{c} := \frac{\tilde{\Delta}_s \cdot g}{T} \geq \frac{20}{\alpha_1\eps r} \geq 1$ (since $\alpha_1, \eps, r \leq 1$),
\begin{align}
& \Pro{ G_{t_0}^{t_1-1} \geq r \cdot \tilde{\Delta}_s \;\Big|\; \mathcal{Z}^{r_0}, \mathfrak{F}^{t_0}, e^{\alpha_1} cn < \V^{t_0} \leq e^{2\alpha_1} cn} \notag \\
 &\qquad \geq \Pro{ G_{t_0}^{t_1-1} \geq r \cdot \tilde{\Delta}_s \;\Big|\; \mathfrak{F}^{t_0}, \Upsilon^{t_0} \leq T, \, \max_{i \in [n]} \left| y_i^{t_0} \right| \leq g (\log(ng))^2 } \notag \\ 
 &\qquad \geq 1 - 2 \cdot n^{-12} . \label{eq:g_adv_many_good_steps_strong_stabilisation}
\end{align}
Taking the union bound over \cref{eq:x_concentration_ineq} and \cref{eq:g_adv_many_good_steps_strong_stabilisation}, we have
\[
\Pro{ \left\{ X_{t_0}^{t_1} < X_{t_0}^{t_0} + n \right\} \cap \left\{ G_{t_0}^{t_1-1} \geq r \cdot \tilde{\Delta}_s \right\} \;\Big|\; \mathcal{Z}^{r_0}, \mathfrak{F}^{t_0}, e^{\alpha_1} cn < \V^{t_0} \leq e^{2\alpha_1} cn} \geq 1 - n^{-11}.
\]

Assume that $\{ X_{t_0}^{t_1} < X_{t_0}^{t_0} + n \}$ and $\{ G_{t_0}^{t_1-1} \geq r \cdot \tilde{\Delta}_s \}$ hold. Then, we consider two cases based on whether the stopping was reached. 

\textbf{Case 1 [$\tau \leq t_1$]:} Here, clearly there is an $s \in [t_0, t_1]$, namely $s = \tau$, such that $\V^s \leq  e^{\alpha_1} c n$ and the conclusion follows.

\textbf{Case 2 [$\tau > t_1$]:} Here, using that $\{\tau > t_1\}$ and $\{X_{t_0}^{t_1} < X_{t_0}^{t_0} + n \}$ both hold, it follows that 
\[
\tV_{t_0}^{t_1} < \tV_{t_0}^{t_0} + n,
\]
and so, by definition of $\tV_{t_0}^{t_1}$ (\cref{eq:g_adv_tilde_v_def}),
\[
\V^{t_1} \cdot \mathbf{1}_{\tilde{\mathcal{E}}_{t_0}^{t_1-1}} \cdot \exp\bigg( - \frac{3\alpha_1 }{n} \cdot B_{t_0}^{t_1-1} \bigg) \cdot \exp\bigg( + \frac{\alpha_1 \eps}{n} \cdot G_{t_0}^{t_1-1} \bigg) < \V^{t_0} + n.
\]
By re-arranging and using that $\{ G_{t_0}^{t_1-1} \geq r \cdot \tilde{\Delta}_s \}$ holds, we have that
\begin{align*}
\V^{t_1} \cdot \mathbf{1}_{\tilde{\mathcal{E}}_{t_0}^{t_1-1}} 
& < (\V^{t_0} + n) \cdot \exp\bigg(  \frac{3 \alpha_1}{n} \cdot B_{t_0}^{t_1-1}  -\frac{\alpha_1 \eps}{n} \cdot G_{t_0}^{t_1-1} \bigg) \\
 & \leq (e^{2\alpha_1} cn + n) \cdot \exp\bigg( \frac{3 \alpha_1}{n} \cdot (1 - r) \cdot \tilde{\Delta}_s - \frac{\alpha_1 \eps}{n} \cdot r \cdot \tilde{\Delta}_s \bigg) \\
 &  \stackrel{(a)}{=} 2e^{2\alpha_1} \cdot cn \cdot \exp\bigg( \frac{\alpha_1 \eps}{n} \cdot \frac{r}{2} \cdot \tilde{\Delta}_s - \frac{\alpha_1 \eps}{n} \cdot r \cdot \tilde{\Delta}_s \bigg) \\
 &= 2e^{2\alpha_1} \cdot cn \cdot \exp\bigg( - \frac{ \alpha_1 \eps}{n} \cdot \frac{r}{2} \cdot \tilde{\Delta}_s \bigg) \\
 & = 2e^{2\alpha_1} \cdot cn \cdot \exp\bigg( - \frac{\alpha_1 \eps}{n} \cdot \frac{r}{2} \cdot  \frac{20 \cdot \tilde{c}_s \cdot \log (2ce^{2\alpha_1})}{ \alpha_1 \eps r} \cdot ng \bigg) \\ 
 & \stackrel{(b)}{\leq} 2e^{2\alpha_1} \cdot cn \cdot \exp\left( - 10 \log(2ce^{2\alpha_1}) \right) \\
 & \stackrel{(c)}{\leq} n/2,
\end{align*}
where we used in $(a)$ that $r = \frac{6}{6 + \eps}$ implies $\frac{3\alpha_1}{n} \cdot (1-r) = \frac{3\alpha_1}{n} \cdot \frac{\eps}{6 + \eps} = \frac{\alpha_1\eps}{n} \cdot \frac{r}{2}$ and $c \geq 1$, in $(b)$ that $g \geq 1$, $c \geq 1$ and $\tilde{c}_s \geq 1$ and in $(c)$ that $c = 18 \cdot 12$. Since deterministically we have that $\V^{t_1} \geq n$, it must be that $\mathbf{1}_{\tilde{\mathcal{E}}_{t_0}^{t_1-1}} = 0$, implying that there exists $s \in [t_0, t_1)$ such that $\V^s \leq cn \leq e^{\alpha_1} cn$.
\end{proof}

We will now show that the potential $\V$ becomes small every $\Theta(ng)$ steps. The proof, given in \cref{sec:g_adv_good_v_every_ng_steps_proof}, proceeds similarly to the proof of \cref{lem:g_adv_good_gap_after_good_lambda}.

\newcommand{\GAdvGoodVEveryNgRounds}{
Consider the $\mathcal{Q}_{g, r_0}$ process for any $g \in [1, c_6 \log n]$ for $c_6 > 0$ as defined in \cref{eq:g_adv_c6_def} and any step $r_0 \geq 0$. Then, for the potential $V := V(\alpha_1, c_4g)$ with $\alpha_1$ as defined in \cref{eq:g_adv_alpha_1_def}, $c_4, c > 0$ as defined in \cref{lem:g_adv_good_step_drop} and $\tilde{\Delta}_s> 0$ as defined in \cref{eq:g_adv_tilde_delta_s_def}, it holds that for any step $t_0 \geq r_0$ and $t_1$ such that $t_0 < t_1 \leq t_0 + 2n\log^5 n$,
\[
\Pro{\left.\bigcap_{t \in [t_0, t_1]}\bigcup_{s \in [t, t + \tilde{\Delta}_s]} \left\{ \V^{s} \leq e^{2\alpha_1} cn \right\} ~\right|~ \mathcal{Z}^{r_0}, \mathfrak{F}^{t_0}, \V^{t_0} \leq c n } \geq 1 - n^{-9},
\]
}

\begin{lem} \label{lem:g_adv_good_v_every_ng_steps}
\GAdvGoodVEveryNgRounds
\end{lem}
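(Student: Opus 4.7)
The plan is to mirror the interlaced-stopping-time argument used in the proof of \cref{lem:g_adv_good_gap_after_good_lambda}, but now applied to the potential $\V$ with the two thresholds $e^{\alpha_1} cn$ and $e^{2\alpha_1} cn$, and driven by the strong stabilization guarantee of \cref{lem:g_adv_strong_stabilization} in place of the ordinary stabilization lemma. The key observation enabling this threshold pair is the one-step smoothness of $\V$ from \cref{lem:g_adv_v_smoothness}~$(i)$: since $\V^{s+1} \leq e^{\alpha_1} \V^s$, any time the potential crosses from $\V \leq e^{\alpha_1} cn$ to $\V > e^{2\alpha_1} cn$ it must pass through the ``middle strip'' $(e^{\alpha_1} cn, e^{2\alpha_1} cn]$, which is exactly the starting condition required by \cref{lem:g_adv_strong_stabilization}.

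Given the precondition $\V^{t_0} \leq cn \leq e^{\alpha_1} cn$, I would define interlaced sequences $t_0 =: s_0 < \tau_1 < s_1 < \tau_2 < s_2 < \cdots$ by
\[
\tau_i := \inf\{\tau > s_{i-1} : \V^{\tau} \in (e^{\alpha_1} cn, e^{2\alpha_1} cn]\}, \qquad s_i := \inf\{s > \tau_i : \V^s \leq e^{\alpha_1} cn\}.
\]
The goal reduces to showing that with high probability $s_i - \tau_i \leq \tilde{\Delta}_s$ for every $i$ with $\tau_i \leq t_1$. Indeed, assuming that event holds, take any $t \in [t_0, t_1]$. If $\V^t \leq e^{2\alpha_1} cn$ then $s := t$ already witnesses the inner union. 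Otherwise $\V^t > e^{2\alpha_1} cn$, and by the smoothness argument above, $t$ must lie strictly between some $\tau_i$ and $s_i$; then $s_i \leq \tau_i + \tilde{\Delta}_s < t + \tilde{\Delta}_s$, and $\V^{s_i} \leq e^{\alpha_1} cn \leq e^{2\alpha_1} cn$, so $s := s_i \in [t, t + \tilde{\Delta}_s]$ witnesses the inner union.

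To bound the probability that some $s_i - \tau_i$ exceeds $\tilde{\Delta}_s$, I would apply \cref{lem:g_adv_strong_stabilization} at each $\tau_i$: conditional on $\mathfrak{F}^{\tau_i}$ and on $\V^{\tau_i} \in (e^{\alpha_1} cn, e^{2\alpha_1} cn]$ (together with $\mathcal{Z}^{r_0}$, which is a tail event fixed at step $r_0$ and hence unchanged along the way), the conditional failure probability is at most $n^{-11}$. Since the $\tau_i$'s are strictly increasing in $[t_0, t_1]$, there are at most $t_1 - t_0 \leq 2n\log^5 n$ of them, so a straightforward union bound yields a total failure probability of at most $2n\log^5 n \cdot n^{-11} \leq n^{-9}$ for sufficiently large $n$, as required.

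The only slightly delicate point is making sure that conditioning on the history up to $\tau_i$ is compatible with invoking \cref{lem:g_adv_strong_stabilization}, which requires the event $\mathcal{Z}^{r_0}$. Since $\mathcal{Z}^{r_0}$ is $\mathfrak{F}^{r_0}$-measurable and $r_0 \leq t_0 \leq \tau_i$, this conditioning is fine; the stabilization estimate applied at $\tau_i$ gives the needed uniform bound, and the whole argument then composes cleanly via the union bound. I expect the main bookkeeping step to be verifying that the smoothness of $\V$ truly forces a middle-strip crossing, which is immediate from \cref{lem:g_adv_v_smoothness}~$(i)$ given that $e^{2\alpha_1} cn / (e^{\alpha_1} cn) = e^{\alpha_1}$ matches the per-step growth rate exactly.
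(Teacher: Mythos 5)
Your proposal is correct and takes essentially the same approach as the paper: the same interlaced stopping times $\tau_i, s_i$ with the same two thresholds $e^{\alpha_1}cn$ and $e^{2\alpha_1}cn$, the same middle-strip crossing argument from the one-step smoothness bound $\V^{s+1} \leq e^{\alpha_1}\V^s$, the same invocation of \cref{lem:g_adv_strong_stabilization} at each $\tau_i$, and the same union bound over at most $t_1 - t_0 \leq 2n\log^5 n$ indices to reach the $n^{-9}$ error bound. The only difference is one of presentation: you spell out the deterministic implication ``all $s_i - \tau_i \leq \tilde{\Delta}_s$ implies the target intersection-of-unions event'' via the explicit case split on $\V^t \leq e^{2\alpha_1}cn$ versus $\V^t > e^{2\alpha_1}cn$, whereas the paper defers this to analogy with the proof of \cref{lem:g_adv_good_gap_after_good_lambda}; your explicit version and the observation that $e^{2\alpha_1}cn/(e^{\alpha_1}cn) = e^{\alpha_1}$ matches the per-step growth exactly are both correct and helpfully make the mechanism transparent.
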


\subsection{Completing the Proof of Theorem~\ref{thm:g_adv_strong_base_case}}

In this section we will complete the base case proof using the stronger stabilization for the $\V$ potential (see \cref{lem:g_adv_strong_stabilization}) and upper bounding $\Psi_0$ using $\V$ (see \cref{lem:g_adv_v_psi0_relation}). We will first prove the result for the modified process and then relate the results to the \GAdvComp setting.

\begin{lem} \label{lem:g_adv_base_case_for_modified_process}
Consider the $\mathcal{Q}_{g, r_0}$ process for any $g \in [1, c_6 \log n]$ and any step $r_0 \geq - \Delta_r - n \log^5 n$, where $c_6 > 0$ is as defined in \cref{eq:g_adv_c6_def} and $\Delta_r := \Delta_r(g) > 0$ is as in \cref{lem:g_adv_recovery}. Further, let $c > 0$ be as defined in \cref{lem:g_adv_good_step_drop} and $\alpha_1 > 0$ as in \cref{eq:g_adv_alpha_1_def}, then for the constant $C := 2e^{2\alpha_1} \cdot c + 1 \geq 8$ and the potential $\Psi_0 := \Psi_0(\alpha_1, c_5g)$ with the constant integer $c_5 > 0$ (to be defined in \cref{eq:g_adv_c5_def}), we have that, \[
\Pro{\left. \bigcap_{s\in [r_0 + \Delta_r, r_0 + \Delta_r + n \log^5 n]} \left\{ \Psi_0^s \leq Cn \right\} \,\, \right|\,\, \mathfrak{F}^{r_0}, \mathcal{Z}^{r_0} } \geq 1 - n^{-8}.
\]
\end{lem}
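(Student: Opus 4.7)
The strategy mirrors the recovery-then-stabilization template of \cref{sec:g_adv_g_plus_logn_bound}, but uses the finer potential $\V$ together with the strengthened stabilization of \cref{lem:g_adv_good_v_every_ng_steps}, and then translates the resulting $\V$-bound into a $\Psi_0$-bound via \cref{lem:g_adv_v_psi0_relation}. The four main steps are: (i) recovery -- obtain $\V^{s^*} \leq cn$ at some $s^* \in [r_0, r_0 + \Delta_r]$; (ii) periodic stabilization -- $\V \leq e^{2\alpha_1} cn$ at least once every $\tilde{\Delta}_s = \Theta(ng)$ steps throughout $[s^*, r_0 + \Delta_r + n \log^5 n]$; (iii) smoothness -- deduce $\V^s \leq e^{\Oh(\alpha_1 g)} \cdot n$ uniformly on the target interval; and (iv) choose the offset $c_5$ large enough to convert this $\V$-bound into $\Psi_0^s \leq Cn$.

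For Stage (i) I plan to rerun the proof of \cref{lem:g_adv_recovery} with $\V$ in place of $\Lambda$, using the super-martingale $\tV$ (\cref{lem:g_adv_v_tilde_is_supermartingale}) together with the drop inequalities from \cref{lem:g_adv_good_step_drop} and \cref{lem:g_adv_bad_step_increase}, which both apply to $\V$ since $\alpha_1 \leq \alpha$ and to the modified process by Property 1 of \cref{sec:g_adv_modified_process}. The event $\mathcal{Z}^{r_0}$ supplies the initial bound $\max_{i \in [n]} |y_{\mathcal{Q}, i}^{r_0}| \leq c_3 g \log(ng)$, taking the place of \cref{thm:g_adv_warm_up_gap} in the original argument. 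For Stage (ii) I apply \cref{lem:g_adv_good_v_every_ng_steps} with $t_0 := s^*$ and $t_1 := r_0 + \Delta_r + n \log^5 n$: since $s^* \leq r_0 + \Delta_r$, we have $t_1 - t_0 \leq 2n \log^5 n$, meeting the lemma's hypothesis and yielding, with probability at least $1 - n^{-9}$, a witness $s \in [t, t + \tilde{\Delta}_s]$ with $\V^s \leq e^{2\alpha_1} cn$ for every $t \in [s^*, t_1]$.

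For Stage (iii), any target step $t \in [r_0 + \Delta_r, t_1]$ then admits witnesses $s_0 \in [t - \tilde{\Delta}_s, t]$ (either from the stabilization applied at $t - \tilde{\Delta}_s$, or simply $s^*$ itself if $t < s^* + \tilde{\Delta}_s$) and $s_1 \in [t, t + \tilde{\Delta}_s]$ (from the stabilization applied at $t$), both satisfying $\V \leq e^{2\alpha_1} cn$. Applying \cref{lem:g_adv_v_smoothness}~$(ii)$ with $\hat{c} := e^{2\alpha_1} c$ and $T := \tilde{\Delta}_s$ gives
\[
\V^t \leq e^{\alpha_1 \tilde{\Delta}_s / n} \cdot 2 e^{2\alpha_1} c n = e^{\tilde{c} \alpha_1 g} \cdot 2 e^{2\alpha_1} c n,
\]
where $\tilde{c} := 20 \tilde{c}_s \log(2 c e^{2\alpha_1}) / (\eps r)$ is a constant independent of $g$ and $n$. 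For Stage (iv), setting
\begin{equation}\label{eq:g_adv_c5_def}
c_5 := 2 \max\{c_4, \tilde{c}\}
\end{equation}
makes \cref{lem:g_adv_v_psi0_relation} applicable with its parameter $\hat{c} = \tilde{c}$, converting the uniform $\V$-bound into $\Psi_0^t \leq Cn$ for every $t$ in the target window. A union bound over the failure events of Stages (i) and (ii) then yields the claimed $1 - n^{-8}$ probability. The main technical obstacle is Stage (i): because $\alpha_1 < \alpha$ enlarges the natural recovery-time constant by a factor of $1/\alpha_1$, one must carefully re-verify that the $\tV$ super-martingale and Markov calculation still closes within the same window $\Delta_r$ defined in \cref{lem:g_adv_recovery} -- essentially by re-running the exponent-balance argument of the original recovery proof with $\alpha_1$ in place of $\alpha$ and checking that the constants in $\Delta_r$ are chosen generously enough to absorb this loss.
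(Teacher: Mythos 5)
Your overall four-stage decomposition (recovery, periodic stabilization, smoothness, $\Psi_0$-translation) matches the paper's proof structure, and Stages (ii)--(iv) are essentially identical to what the paper does. However, Stage (i) takes a needlessly hard route that you yourself flag as the ``main technical obstacle,'' and you leave it unresolved. The paper sidesteps the obstacle entirely: rather than rerunning the recovery argument for $\V$ with its smaller smoothing parameter $\alpha_1$, it simply observes that $\V^t \leq \Lambda^t$ pointwise (since $\V$ has the same form as $\Lambda$ with a smaller exponent), and invokes the existing recovery lemma \cref{lem:g_adv_recovery}~$(i)$ for $\Lambda$ unchanged. Then $\{\Lambda^t \leq cn\}$ for some $t$ in the window immediately gives $\{\V^t \leq cn\}$ there, with no new exponent-balance verification required and no dependence on the generosity of the constants in $\Delta_r$. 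Your alternative (replaying the $\tV$ super-martingale/Markov argument and hoping the original $\Delta_r$ absorbs the factor-$\alpha_1/\alpha$ loss) is not obviously wrong --- the extra $\log(ng)$ factor in $\Delta_r$ relative to $\tilde{\Delta}_s$ probably lets it close for large $n$ --- but it requires the constant-chasing you only promise to do, whereas the comparison $\V\le\Lambda$ costs nothing. Two smaller points: in Stage (iii) your definition $\tilde{c} := 20\tilde{c}_s\log(2ce^{2\alpha_1})/(\eps r)$ drops the $\alpha_1$ from the denominator --- from $\tilde{\Delta}_s := \frac{20\tilde{c}_s\log(2ce^{2\alpha_1})}{\alpha_1\eps r}ng$ one gets $e^{\alpha_1\tilde{\Delta}_s/n}=e^{\alpha_1\tilde{c}g}$ only with $\tilde{c}=\tilde{\Delta}_s/(ng)=\frac{20\tilde{c}_s\log(2ce^{2\alpha_1})}{\alpha_1\eps r}$ --- and this $\tilde{c}$ is what should appear in $c_5=2\max\{c_4,\lceil\tilde{c}\rceil\}$; and you do not address the edge case $r_0<0$, where recovery is vacuous because $\V^0=n\le cn$ deterministically.
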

\begin{proof}
This proof proceeds similarly to that of \cref{thm:g_adv_g_plus_logn_gap} having a recovery and a stabilization phase to show that the potential $\V := \V(\alpha_1, c_4g)$ stabilizes at $\leq e^{\Oh(\alpha_1 g)} \cdot cn$. By \cref{lem:g_adv_v_psi0_relation}, this implies that $\Psi_0 := \Psi_0(\alpha_1, c_5g)$ stabilizes at $\leq Cn$ for sufficiently large constants $c_5, C > 0$.

For the recovery phase, we will use the potential function $\Lambda := \Lambda(\alpha, c_4g)$ defined in \cref{eq:lambda_def}. More specifically,  we have
\begin{align*}
 & \Pro{ \left. \bigcup_{t \in [r_0, r_0 + \Delta_r]} \left\{\V^{t} \leq cn \right\} \,\, \right|\,\, \mathfrak{F}^{r_0}, \mathcal{Z}^{r_0} } \\
& \qquad \stackrel{(a)}{\geq} \Pro{ \left. \bigcup_{t \in [r_0, r_0 + \Delta_r]} \left\{\Lambda^{t} \leq cn \right\} \,\, \right|\,\, \mathfrak{F}^{r_0}, \mathcal{Z}^{r_0}} \\
 & \qquad \stackrel{(b)}{\geq} \Pro{ \left. \bigcup_{t \in [r_0, r_0 + \Delta_r]} \left\{\Lambda^{t} \leq cn \right\} \,\, \right|\,\, \mathfrak{F}^{r_0}, \max_{i \in [n]} \left| y_i^{r_0}\right| \leq c_3 g \log(ng) } \\
 & \qquad \stackrel{(c)}{\geq} 1 - 3 \cdot (ng)^{-12} \geq 1 - n^{-11},
\end{align*}
using in $(a)$ that $\V^t \leq \Lambda^t$ for any step $t \geq 0$, as $\V$ and $\Lambda$ have the same form, but $\V$ has a smoothing parameter $\alpha_1 < \alpha$, in $(b)$ recalling that $\mathcal{Z}^{r_0} := \{ \max_{i \in [n]} \left| y_i^{r_0} \right| \leq \min\{ \kappa \cdot (g + \log n), c_3 g \log(ng) \} \}$ for $c_3 \geq 2$ the constant defined in \cref{eq:g_adv_c3_def} and in $(c)$ applying \cref{lem:g_adv_recovery}~$(i)$.

Therefore, for the stopping time  $\tau := \inf \{ t \geq r_0 \colon \V^{t} \leq cn \}$, it holds that
\begin{equation} \label{eq:g_adv_stopping_time_for_v_recovery}
 \Pro{\tau \leq r_0 + \Delta_r\,\, \left|\,\, \mathfrak{F}^{r_0}, \mathcal{Z}^{r_0} \right.} \geq 1 - n^{-11}.
\end{equation}

Consider any $t_0 \in [r_0, r_0 + \Delta_r]$ ($t_0$ will play the role of a concrete value of $\tau$). By \cref{lem:g_adv_good_v_every_ng_steps} (for $t_0 := t_0$ and $t_1 := r_0 + \Delta_r + n \log^5 n$, since $t_1 - t_0 \leq n \log^5 n + \Delta_r \leq 2n \log^5 n$), we have that
\[
\Pro{\left.\bigcap_{t \in [t_0, r_0 + \Delta_r + n \log^5 n]}\bigcup_{s \in [t, t + \tilde{\Delta}_s]} \left\{ \V^{s} \leq e^{2\alpha_1} cn \right\} ~\right|~ \mathcal{Z}^{r_0}, \mathfrak{F}^{t_0}, \V^{t_0} \leq c n } \geq 1 - n^{-9}.
\]
When the above event holds, then for every $t \in [t_0, r_0 + \Delta_r + n \log^5 n]$, there exists $s_0 \in [t - \tilde{\Delta}_s, t]$ and $s_1 \in [t, t + \tilde{\Delta}_s]$ such that $\V^{s_0} \leq e^{2\alpha_1} cn$ and $\V^{s_1} \leq e^{2\alpha_1} cn$, using that for any $t \in [t_0, t_0 + \tilde{\Delta}_s]$ we can set $s_0 = t_0$ since by the conditioning $V^{t_0} \leq cn \leq e^{2\alpha_1}cn$. So, by \cref{lem:g_adv_v_smoothness}~$(ii)$ (for $\hat{c} := e^{2\alpha_1} c$ and $\Delta := \tilde{\Delta}_s$), it follows that
\[
\Pro{\left. \bigcap_{t \in [t_0, r_0 + \Delta_r + n \log^5 n]} \left\{ \V^{t} \leq e^{\alpha_1 \frac{\tilde{\Delta}_s}{n}} \cdot 2e^{2\alpha_1}cn \right\} ~\right|~ \mathcal{Z}^{r_0}, \mathfrak{F}^{t_0}, \V^{t_0} \leq c n } \geq 1 - n^{-9}.
\]
Next, adjusting the range of the big intersection using that $t_0 \leq r_0 + \Delta_r$, it follows that
\[
\Pro{\left. \bigcap_{t \in [r_0 + \Delta_r, r_0 + \Delta_r + n \log^5 n]} \left\{ \V^{t} \leq e^{\alpha_1 \frac{\tilde{\Delta}_s}{n}} \cdot 2e^{2\alpha_1}cn \right\} ~\right|~ \mathcal{Z}^{r_0}, \mathfrak{F}^{t_0}, \V^{t_0} \leq c n } \geq 1 - n^{-9}.
\]
By \cref{lem:g_adv_v_psi0_relation}
(for $\hat{c} := \big\lceil \frac{\tilde{\Delta}_s}{ng} \big\rceil$), we conclude that for $\Psi_0 := \Psi_0(\alpha_1, c_5 g)$ with constant integer 
\begin{align} \label{eq:g_adv_c5_def}
c_5 := 2 \cdot \max \left\{ c_4, \left\lceil \frac{\tilde{\Delta}_s}{ng} \right\rceil \right\}
 = 2 \cdot \max \left\{ c_4, \left\lceil \frac{20 \cdot \tilde{c}_s \cdot \log(2c e^{2\alpha_1})}{\alpha_1 \eps r} \right\rceil \right\},
\end{align}
and for the constant $C := 2e^{2\alpha_1} \cdot c + 1$, it holds that
\begin{align} \label{eq:g_adv_psi_0_concentration}
\Pro{\left. \bigcap_{t \in [r_0 + \Delta_r, r_0 + \Delta_r + n \log^5 n]} \left\{ \Psi_0^{t} \leq Cn \right\} \, \right| \, \mathcal{Z}^{r_0}, \mathfrak{F}^{t_0}, \V^{t_0} \leq c n } \geq 1 - n^{-9}.
\end{align}
Finally, 
\begin{align*}
 & \Pro { \left. \bigcap_{t \in [r_0 + \Delta_r, r_0 + \Delta_r + n \log^5 n]} \left\{ \Psi_0^{t} \leq Cn \right\} \,\, \right|\,\, \mathfrak{F}^{r_0}, \mathcal{Z}^{r_0} } \\
  & \qquad \geq \sum_{t_0=r_0}^{r_0 + \Delta_r} \Pro{\left. \bigcap_{t \in [r_0 + \Delta_r, r_0 + \Delta_r + n \log^5 n]} \left\{ \Psi_0^{t} \leq Cn \right\}  ~\right|~ \mathfrak{F}^{r_0},\mathcal{Z}^{r_0}, \tau = t_0 } \cdot \Pro{ \tau = t_0 \,\left|\, \mathfrak{F}^{r_0},\mathcal{Z}^{r_0} \right.}\\
  & \qquad \geq \sum_{t_0=r_0}^{r_0 + \Delta_r} \Pro{ \left. \bigcap_{t \in [r_0 + \Delta_r, m]} \left\{ \Psi_0^{t} \leq Cn \right\}  ~\right|~ \mathcal{Z}^{r_0}, \mathfrak{F}^{t_0}, \V^{t_0} \leq cn } \cdot \Pro{ \tau = t_0 \,\left|\, \mathfrak{F}^{r_0},\mathcal{Z}^{r_0} \right.} \\
  & \qquad \!\!\!\! \stackrel{(\text{\ref{eq:g_adv_psi_0_concentration}})}{\geq} \left( 1 - n^{-9} \right) \cdot \Pro{ \tau \leq r_0 + \Delta_r \,\left|\, \mathfrak{F}^{r_0},\mathcal{Z}^{r_0} \right.} \\
  & \qquad \!\!\!\! \stackrel{(\text{\ref{eq:g_adv_stopping_time_for_v_recovery}})}{\geq} \left(1 - n^{-9} \right) \cdot \left(1 - n^{-11} \right) \geq 1 - n^{-8}, 
\end{align*}
This concludes the claim for $r_0 \geq 0$.

If $r_0 < 0$, then deterministically $\tau = 0$, since $\V^{\tau} = n \leq cn$. Hence, the rest of the proof follows for an interval of length at most $2n \log^5 n$.
\end{proof}

\begin{thm}[\textbf{Base case}] \label{thm:g_adv_strong_base_case}
Consider the \GAdvComp setting for any $g \in [1, c_6 \log n]$, where $c_6 > 0$ is as defined in \cref{eq:g_adv_c6_def}. For constant $C := 2e^{2\alpha_1} \cdot c + 1 \geq 8$ with $\alpha_1 > 0$ as defined in \cref{eq:g_adv_alpha_1_def} and $c > 0$ as in \cref{lem:g_adv_good_step_drop}, and the potential $\Phi_0 := \Phi_0(\alpha_2, c_5g)$ with $\alpha_2 > 0$ as in \cref{eq:g_adv_alpha_2_def}, and constant integer $c_5 > 0$ as in \cref{eq:g_adv_c5_def}, we have that for any step $m \geq 0$, \[
\Pro{\bigcap_{s\in [m - n\log^5 n, m]} \left\{ \Phi_0^s \leq Cn \right\} } \geq 1 - n^{-4}.
\]
\end{thm}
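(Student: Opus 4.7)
The theorem's bound on $\Phi_0$ is weaker than the conclusion of \cref{lem:g_adv_base_case_for_modified_process}, which already controls the larger potential $\Psi_0$ for the modified process $\mathcal{Q}_{g,r_0}$. The plan is therefore to transfer that lemma from $\mathcal{Q}_{g,r_0}$ back to the actual \GAdvComp process using \cref{eq:modified_process_agrees_with_gadv}, and then to replace $\Psi_0$ by $\Phi_0$ via a pointwise monotonicity argument.

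First, set $r_0 := m - n\log^5 n - \Delta_r$, where $\Delta_r$ is as in \cref{lem:g_adv_recovery}. This $r_0$ lies in the allowed range of \cref{lem:g_adv_base_case_for_modified_process}, and the window $[r_0+\Delta_r, r_0+\Delta_r+n\log^5 n]$ coincides exactly with $[m-n\log^5 n, m]$. When $r_0 < 0$ the condition $\mathcal{Z}^{r_0}$ is vacuous, as noted at the end of the proof of \cref{lem:g_adv_base_case_for_modified_process}; when $r_0 \geq 0$ we invoke \cref{thm:g_adv_g_plus_logn_gap} and \cref{thm:g_adv_warm_up_gap}~$(iii)$ at step $r_0$ (where $\mathcal{Q}_{g,r_0}$ and the actual process agree by construction) to obtain
\[
\Pro{\mathcal{Z}^{r_0}} \geq 1 - 2(ng)^{-9} - (ng)^{-14}.
\]

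Next, the key pointwise comparison: since both $\Phi_0 := \Phi_0(\alpha_2, c_5g)$ and $\Psi_0 := \Psi_0(\alpha_1, c_5g)$ use the same offset $c_5g$ and $\alpha_2 \leq \alpha_1$, for every bin $i$ and step $s$ we have $\Phi_{0,i}^s \leq \Psi_{0,i}^s$ (splitting into the cases $y_i^s \geq c_5 g$ and $y_i^s < c_5 g$), so $\Phi_0^s \leq \Psi_0^s$. Hence \cref{lem:g_adv_base_case_for_modified_process} already gives
\[
\Pro{\left. \bigcap_{s\in [m-n\log^5 n, m]} \{\Phi_{0,\mathcal{Q}}^s \leq Cn\}\,\right|\, \mathfrak{F}^{r_0},\mathcal{Z}^{r_0}} \geq 1 - n^{-8},
\]
where $\Phi_{0,\mathcal{Q}}$ denotes $\Phi_0$ evaluated on the load vector of $\mathcal{Q}_{g,r_0}$.

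Finally, we transfer to the actual process. Since $m - r_0 = n\log^5 n + \Delta_r \leq n^2$, \cref{eq:modified_process_agrees_with_gadv} gives $\Pro{\bigcap_{s\in[r_0,m]} \{y_{\mathcal{Q}}^s = y^s\}} \geq 1 - 2n^{-7}$; on this event, $\Phi_0^s = \Phi_{0,\mathcal{Q}}^s$ throughout the window. Combining the three ingredients (the conditional concentration bound, the event $\mathcal{Z}^{r_0}$, and the agreement event) via the tower property and a union bound yields a failure probability of at most $n^{-8} + 2n^{-7} + 3n^{-9} \leq n^{-4}$ for sufficiently large $n$.

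Since every step of this plan is a direct application of a result stated earlier in the paper, no serious technical obstacle is expected; the only care needed is in the bookkeeping of the three failure events and the edge case $r_0 < 0$.
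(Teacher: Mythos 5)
Your proposal is essentially the paper's proof: set $r_0 := m - n\log^5 n - \Delta_r$, invoke \cref{lem:g_adv_base_case_for_modified_process} for the modified process $\mathcal{Q}_{g,r_0}$, bound $\Pro{\mathcal{Z}^{r_0}}$ via \cref{thm:g_adv_g_plus_logn_gap} and \cref{thm:g_adv_warm_up_gap}~$(iii)$, transfer to the original process through the agreement event \cref{eq:modified_process_agrees_with_gadv}, and finish with the pointwise inequality $\Phi_0 \leq \Psi_0$ (from $\alpha_2 \leq \alpha_1$ with the common offset $c_5 g$). The only cosmetic difference is the order in which you apply the $\Phi_0 \leq \Psi_0$ monotonicity — you apply it immediately after invoking the lemma, while the paper carries $\Psi_0$ through to the last line; the bookkeeping of the three failure events and the treatment of $r_0 < 0$ match the paper.
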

\begin{proof}
Let $\mathcal{P}$ be the original process in the \GAdvComp setting. Consider the modified process $\mathcal{Q}_{g, r_0}$ for $r_0 := m - n \log^5 n - \Delta_r$ where $\Delta_r := \Theta(ng \cdot (\log(ng))^2)$ is as defined in \cref{lem:g_adv_recovery}. Let $y_{\mathcal{Q}}$ be its normalized load vector and $\Psi_{\mathcal{Q}, 0}$ be its $\Psi_{0}$ potential function. Recall from \cref{eq:modified_process_agrees_with_gadv} that the two processes $\mathcal{Q}_{g, r_0}$ and $\mathcal{P}$ agree at every step \Whp, since $m - r_0 \leq 2n \log^5 n$,
\begin{align} \label{eq:modified_process_agrees_with_gadv_new}
\Pro{\bigcap_{s \in [r_0, m]} \left\{ y_{\mathcal{Q}}^s = y^s \right\} } \geq 1 - 2n^{-7}.
\end{align}
Taking the union bound of the conclusions in \cref{thm:g_adv_g_plus_logn_gap} and \cref{thm:g_adv_warm_up_gap}~$(iii)$, we have that 
\begin{align} \label{eq:r0_gap_is_g_plus_logn}
\Pro{\mathcal{Z}^{r_0}} = \Pro{\max_{i \in [n]} \left| y_i^{r_0} \right| \leq 
\min\{ \kappa \cdot (g + \log n), c_3 g \log (ng) \} } \geq 1 - 3 \cdot (ng)^{-9}.
\end{align}
By \cref{lem:g_adv_base_case_for_modified_process} we have that,\[
\Pro{\left. \bigcap_{s\in [m - n\log^5 n, m]} \left\{ \Psi_{\mathcal{Q}, 0}^s \leq Cn \right\} \,\, \right|\,\, \mathfrak{F}^{r_0}, \mathcal{Z}^{r_0} } \geq 1 - n^{-8}.
\]
By combining with \cref{eq:r0_gap_is_g_plus_logn},
\[
\Pro{\bigcap_{s\in [m - n\log^5 n, m]} \left\{ \Psi_{\mathcal{Q}, 0}^s \leq Cn \right\} } \geq \left( 1 - n^{-8} \right) \cdot \left( 1 - 3 \cdot (ng)^{-9} \right) \geq 1 - n^{-7}.
\]
Taking the union bound with \cref{eq:modified_process_agrees_with_gadv_new}, we get that%
\[
\Pro{\bigcap_{s \in [r_0, m]} \left\{ y_{\mathcal{Q}}^s = y^s \right\} \cap \bigcap_{s\in [m - n\log^5 n, m]} \left\{ \Psi_{\mathcal{Q}, 0}^s \leq Cn \right\} } \geq 1 - n^{-7} - 2n^{-7} \geq 1 - n^{-6}.
\]
When this event holds we have that $\Psi_{\mathcal{Q}, 0}^s = \Psi_0^s$ for every step $s \in [r_0, m]$ and hence we can deduce for the original process $\mathcal{P}$ that 
\[
\Pro{\bigcap_{s\in [m - n\log^5 n, m]} \left\{ \Psi_0^s \leq Cn \right\} } \geq 1 - n^{-6}.
\]
Finally, since $\alpha_2 \leq \alpha_1$, we have that $\Phi_0^s \leq \Psi_0^s$ for any step $s \geq 0$ and hence, the conclusion follows.
\end{proof}

\section{Super-Exponential Potential Functions}  \label{sec:super_exponential_potentials}

In this section, we will analyze super-exponential potential functions, i.e., exponential potentials with $\Omega(1)$ smoothing parameters. In \cref{sec:super_exponential_potentials_outline}, we recall the definition of the super-exponential potentials and give an overview of the theorems we prove. In \cref{sec:general_drop_inequality}, we give a sufficient condition for a super-exponential potential to satisfy a drop inequality over one step, and in \cref{sec:super_exponential_potentials_concentration}, we prove the concentration theorem for super-exponential potentials.

\subsection{Outline} \label{sec:super_exponential_potentials_outline}

Recall from \cref{defi:super_exponential_potential} that a super-exponential potential function is defined as 
\[
\Phi^t := \Phi^t(\phi, z) := \sum_{i = 1}^n \Phi_i^t := \sum_{i = 1}^n e^{\phi \cdot (y_i^t - z)^+},
\]
for smoothing parameter $\phi > 0$ and integer offset $z := z(n) > 0$.

Note that if $\Phi^t = \Oh(\poly(n))$ at some step $t \geq 0$, then\[
\Gap(t) = \Oh\left( z + \frac{\log n}{\phi} \right).
\]
There are two differences in the \textit{form} compared to the hyperbolic cosine potential $\Gamma$ that we used in \cref{sec:g_adv_warm_up}: $(i)$ there is no underloaded component, as \Whp, its contribution would be $\omega(n)$ for any process making a constant number of samples in each step\footnote{It follows by a coupon collector's argument that the minimum load is \Whp~$\frac{t}{n}-\Omega(\log n)$, for sufficiently large $t$.} and $(ii)$ there is this $(\ldots)^+$ operation which is not essential, but simplifies some of the derivations.

However, the \textit{main} difference is that there exist load vectors where the potential \textit{could increase} in expectation, even when sufficiently large.\footnote{An example of such a configuration for the \TwoChoice process without noise is one where $n-1$ bins have normalized loads $z + 10/\phi$ and one bin is underloaded.} Here, we will prove a sufficient condition for the potential $\Phi$ to drop in expectation over one step; requiring that the probability to allocate to a bin with normalized load at least $z-1$ is sufficiently small. More specifically, the event that we require is the following:
\[
\mathcal{K}^s := \mathcal{K}_{\phi, z}^s(q^s) := \left\{ \forall i \in [n] \colon\  y_i^s \geq z-1 \ \ \Rightarrow \ \ q_i^s \leq \frac{1}{n} \cdot e^{-\phi}\right\},
\]
where $q^s$ is the probability allocation vector used by the process at step $s$.

{\renewcommand{\thelem}{\ref{lem:general_drop_superexponential}}
	\begin{lem}[Restated, page~\pageref{lem:general_drop_superexponential}]
\GeneralDropInequality
	\end{lem}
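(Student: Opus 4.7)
The plan is a bin-by-bin analysis using linearity of expectation. Conditioned on $\mathfrak{F}^s$, exactly one bin is incremented at step $s+1$, so for each $i \in [n]$,
\[
\Ex{\Phi_i^{s+1} \mid \mathfrak{F}^s} = (1 - q_i^s) \cdot e^{\phi (y_i^s - 1/n - z)^+} + q_i^s \cdot e^{\phi(y_i^s + 1 - 1/n - z)^+},
\]
where $q^s$ is the process's probability allocation vector. I would partition $[n]$ into four cases according to the position of $y_i^s$ relative to $z$: (A) $y_i^s < z - 1$; (B) $y_i^s \in [z-1,z)$; (C1) $y_i^s \in [z, z + 1/n)$; (C2) $y_i^s \geq z + 1/n$. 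This partition is chosen so that, within each case, both the ``not allocated'' and ``allocated'' terms simplify uniformly (the ReLU $(\cdot)^+$ is either active in both, at most one, or neither outcome).

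Case (A) is trivial: the ReLU vanishes under both outcomes, so $\Phi_i^{s+1} = \Phi_i^s = 1$ deterministically, which is bounded by $\Phi_i^s(1 - 1/n) + 1/n$. In cases (B), (C1), (C2) we have $y_i^s \geq z - 1$, so $\mathcal{K}^s$ forces $q_i^s \leq e^{-\phi}/n$. For (B) we have $\Phi_i^s = 1$, the ``not allocated'' term equals $1$, and the ``allocated'' term is at most $e^{\phi(1 - 1/n)}$, yielding
\[
\Ex{\Phi_i^{s+1} \mid \mathfrak{F}^s, \mathcal{K}^s} \leq 1 + \frac{e^{-\phi}}{n} \cdot e^{\phi(1-1/n)} = 1 + \frac{e^{-\phi/n}}{n} \leq \Phi_i^s \Bigl(1 - \frac{1}{n}\Bigr) + \frac{2}{n}.
\]
Case (C1) is analogous since $(y_i^s - 1/n - z)^+ = 0$ and $\Phi_i^s \leq e^{\phi/n}$. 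In (C2) the ReLU is active in both outcomes, so the identities $e^{\phi(y_i^s - 1/n - z)} = \Phi_i^s e^{-\phi/n}$ and $e^{\phi(y_i^s + 1 - 1/n - z)} = \Phi_i^s e^{\phi(1 - 1/n)}$ give
\[
\Ex{\Phi_i^{s+1} \mid \mathfrak{F}^s, \mathcal{K}^s} \leq \Phi_i^s \cdot e^{-\phi/n} \cdot \Bigl(1 + \frac{1}{n}\Bigr).
\]

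Summing over all bins completes the proof. The per-bin additive slack is $1/n$ in case (A) and $2/n$ in cases (B)$\cup$(C1), so since the four cases partition $[n]$ the total additive slack is at most $|A|/n + 2(n - |A|)/n \leq 2$. The multiplicative mass sums to $\Phi^s(1-1/n)$ provided that (C2) also yields $\Ex{\Phi_i^{s+1} \mid \mathfrak{F}^s, \mathcal{K}^s} \leq \Phi_i^s(1-1/n)$, i.e.\ $e^{-\phi/n}(1 + 1/n) \leq 1 - 1/n$, equivalently $\phi \geq n\log\frac{n+1}{n-1}$. Using $\log(1+x) \leq x$ this is implied by $\phi \geq 2n/(n-1) \leq 4$ for $n \geq 2$, so the hypothesis $\phi \geq 4$ is exactly what we need. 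This is the main obstacle: pinning the lower bound on $\phi$ to a small absolute constant while keeping the final additive term at $2$ rather than a larger $O(1)$, which forces us to charge the slack to (A), (B), (C1) jointly rather than bin by bin.
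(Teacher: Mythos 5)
Your proof is correct and is essentially the same argument as the paper's. You use the formula $\Ex{\Phi_i^{s+1} \mid \mathfrak{F}^s} = (1-q_i^s)e^{\phi(y_i^s-1/n-z)^+}+q_i^se^{\phi(y_i^s+1-1/n-z)^+}$ and case-split on the location of $y_i^s$ relative to $z$; the paper does the same with three cases ($y_i^s<z-1$, $y_i^s\in[z-1,z]$, $y_i^s>z$), merging your (B) and (C1) into one case since $\Phi_i^s=1$ in both, and using the integrality of $z$ and the $1/n$-granularity of $y_i^s$ to get $y_i^s>z\Rightarrow y_i^s\geq z+1/n$, exactly your (C2). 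Your handling of the $\phi\geq 4$ threshold via $e^{-\phi/n}(1+1/n)\leq 1-1/n$ and $\log(1+x)\leq x$ is a slightly cleaner derivation of the same fact the paper gets via $e^u\leq 1+u/2$ and $\phi\leq n$.
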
 }
	\addtocounter{lem}{-1}

We will try to establish that this event $\mathcal{K}^s$ holds for a sufficiently long interval and then show that in this interval the potential $\Phi$ becomes small. In the analysis of \GAdvComp, this event will arise from the concentration of the hyperbolic cosine potential or from that of a super-exponential potential with smaller smoothing parameter. Now we are ready to state the main theorem.

{\renewcommand{\thethm}{\ref{thm:super_exp_potential_concentration}}
	\begin{thm}[Restated, page~\pageref{thm:super_exp_potential_concentration}]
\ThmSuperExponentialPotentialConcentrationWithLabels{1}
	\end{thm}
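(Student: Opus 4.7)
The plan is to adapt the recovery-plus-stabilization template of Sections~5.4--5.5 to super-exponential potentials, using $\Phi_1$ (with its larger smoothing parameter) to drive a sharp contraction and then transferring the resulting pointwise bound to $\Phi_2$ through the identity $\Phi_{2,i}^s = (\Phi_{1,i}^s)^{\phi_2/\phi_1}$. By Young's inequality $x^\alpha \leq \alpha x + (1-\alpha)$ applied with $\alpha = \phi_2/\phi_1 \leq 1/84$, this yields $\Phi_2^s \leq \Phi_1^s/84 + n$, so it suffices to establish a uniform bound $\Phi_1^s = \Oh(n)$ over $[t,\tilde{t}]$ on the precondition event $\mathcal{H}$.

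For the recovery, the gap bound $\Gap(t - 2n\log^4 n) \leq \log^2 n$ implies $\Phi_1^{t-2n\log^4 n} \leq n \cdot e^{\phi_1 \log^2 n} \leq n \cdot e^{(\log^3 n)/6}$, since $\phi_1 \leq (\log n)/6$. Iterating \cref{lem:general_drop_superexponential} over the front half of the recovery window (all of whose steps lie under $\mathcal{K}^s$ on $\mathcal{H}$) telescopes to $\Ex{\Phi_1^s \mid \mathcal{H}} \leq (1-1/n)^{n \log^4 n} \cdot n \cdot e^{(\log^3 n)/6} + 2n \leq 3n$ for every $s \geq t - n \log^4 n$, since the contraction factor $e^{-\log^4 n}$ comfortably dominates the quasi-polynomial initial value.

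For the stabilization, I would introduce the shifted and scaled super-martingale $\tilde{M}^s := (\Phi_1^s + 2n) \cdot (1-1/n)^{-(s-s_0)} \cdot \mathbf{1}_{\mathcal{H}}$, analogous to $\tilde{\Lambda}$ in \cref{eq:tilde_lambda} but simpler because $\mathcal{K}^s$ is assumed at every step on $\mathcal{H}$ rather than only at a constant fraction of steps. The shift $+2n$ absorbs the additive $+2$ in the drop inequality and the geometric rescaling converts the expected contraction into a true super-martingale property. To prevent $(1-1/n)^{-(s-s_0)}$ from blowing up over the length-$n\log^5 n$ interval, I would partition $[t,\tilde{t}]$ into $\Oh(\log^5 n)$ sub-intervals of length $n$, re-initializing $s_0$ at the start of each using the strong Markov property together with the recovery expectation bound $\Ex{\Phi_1^{s_i} \mid \mathcal{H}} \leq 3n$. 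Within a single sub-interval the rescaling contributes at most a factor $e$, so Doob's maximal inequality applied at a polylogarithmic threshold gives a per-sub-interval failure probability of $\Oh(\log^{-8} n)$, and the union bound over the $\Oh(\log^5 n)$ sub-intervals fits within the stated $(\log^8 n)\cdot P$ slack.

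The main obstacle is that a one-shot Doob--Markov argument only controls $\Phi_1^s$ up to $n\log^8 n$, which through Young's would yield $\Phi_2^s = \Oh(n \log^8 n)$ instead of the desired $8n$. The plan to close this gap is a bootstrapping step: once the polylogarithmic pointwise bound holds uniformly, the recovery telescoping can be restarted from this much smaller value over a further $\Oh(n)$ steps, collapsing the expectation to within a constant of the stationary value $2n$; a second Doob's pass at the constant threshold $7 \cdot 84 \cdot n$ then delivers $\Phi_1^s \leq 7 \cdot 84 \cdot n$ uniformly on $\mathcal{H}$, at which point Young's inequality gives $\Phi_2^s \leq 8n$ throughout $[t,\tilde{t}]$ as required.
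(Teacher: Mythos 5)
Your reduction to a uniform linear bound on $\Phi_1$ is where the approach breaks down. Young's inequality gives $\Phi_2^s \leq \Phi_1^s/84 + n$, so the plan requires $\Phi_1^s \leq 588n$ simultaneously for all $s \in [t,\tilde{t}]$, with failure probability at most $(\log^8 n)\cdot P$ — which is as small as $\log^8 n / n^4$ when $P = n^{-4}$. That is not achievable by Doob's maximal inequality. Since $\Phi_1^s \geq n$ holds deterministically, $\Ex{\Phi_1^s} \geq n$ always, so Doob at any $\Oh(n)$ threshold yields failure probability $\Omega(1)$; your bootstrapped ``second Doob pass at threshold $7\cdot 84 \cdot n$'' therefore gives a constant, not polynomially small, error. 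The first pass is no better for this purpose: at threshold $n\log^8 n$ Doob gives $\Oh(\log^{-8}n)$ per sub-interval and $\Oh(\log^{-3}n)$ after the union bound, which is still vastly larger than $(\log^8 n)\cdot P$ for small $P$. (There is also a technical defect in the super-martingale: $(\Phi_1^s + 2n)(1-1/n)^{-(s-s_0)}$ does not actually absorb the additive $+2$ — expanding $\Ex{(\Phi_1^{s+1}+2n)\,|\,\cdot}$ leaves a positive residual. The paper instead kills the process when $\Phi_2$ dips below $6n$, turning the $-\Phi_2^s/n + 2$ drift into a genuine multiplicative contraction. But this is fixable; the probability bottleneck above is the structural obstacle.)

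The paper escapes this precisely by never attempting a linear bound on $\Phi_1$. It establishes only a polynomial bound $\Phi_1^s \leq 6n^{12}$ uniformly, which Markov's inequality at each step delivers with failure $n^{-11}$, surviving a union bound over polynomially many steps. The crucial insight — which your Young's inequality step discards — is that this polynomial bound on $\Phi_1$ already yields a small per-bin bound on $\Phi_2$ via the exact identity $\Phi_{2,i}^s = (\Phi_{1,i}^s)^{\phi_2/\phi_1}$ with $\phi_2/\phi_1 \leq 1/84$: each $\Phi_{2,i}^s \leq (6n^{12})^{1/84} \leq n^{1/6}$ (\cref{lem:psi_potential_poly_implies}). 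That per-bin smallness is the bounded-difference hypothesis for Azuma's inequality on $\Phi_2$, and Azuma — unlike Doob — produces exponential tails, hence the required polynomially small failure probability. Young's inequality $x^{1/84} \leq x/84 + 1$ is tight only near $x=1$; for a bin with $\Phi_{1,i}^s \approx n^{12}$ it overestimates $\Phi_{2,i}^s$ by a factor of roughly $n^{12 - 1/7}$, destroying exactly the structure the proof needs. The concentration step must be run on $\Phi_2$ with Azuma, not on $\Phi_1$ with Doob.
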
 }
	\addtocounter{thm}{-1}

The statement of this theorem concerns steps in $[t - 2n \log^4 n, \tilde{t}]$ with $\tilde{t} \in [t, t + n \log^5 n]$. The interval $[t, \tilde{t}]$ is the \textit{stabilization interval}, i.e., the interval where we want to show that $\Phi_2^s \leq 8n$ for every $s \in [t, \tilde{t}]$. The interval $[t - 2n\log^4 n, t]$ is the \textit{recovery} interval where we will show that \Whp~$\Phi_2$ becomes $\Oh(n)$ at least \textit{once},
provided we start with a ``weak'' $\leq \log^2 n$ gap at step $t - 2n \log^4 n$. For both the recovery and stabilization intervals we will condition on the event $\mathcal{K}$ holding at every step.

We will write $\mathcal{H}^{t - 2n \log^4 n}$ as a shorthand for the event $\big\{\!\Gap(t - 2n \log^4 n) \leq \log^2 n \big\}$.

\paragraph{Dealing with negative steps.} In the analysis, we deal with negative (integer) steps, by extending the definition of the process to allocate zero weight balls in steps $t < 0$, i.e., we have that $x^t = 0$. Hence, for any $t < 0$ it also follows deterministically that $\Phi_1^t = \Phi_2^t = n$ and so the potentials trivially satisfy the drop inequalities \cref{eq:phi_1_drop_precondition} and \cref{eq:phi_2_drop_precondition} at these steps.

\subsection{General Drop Inequality}
\label{sec:general_drop_inequality}

We now show that the drop inequality is satisfied in every step $s$ where $\mathcal{K}^s$ holds. 

\begin{lem}[General drop inequality] \label{lem:general_drop_superexponential} 
\GeneralDropInequality
\end{lem}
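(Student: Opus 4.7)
The plan is to bound $\Ex{\Phi_j^{s+1} \mid \mathfrak{F}^s}$ separately for each bin $j \in [n]$ and aggregate. A clean partition based on where $y_j^s$ sits relative to the offset $z$ does the job; I will use
\[
U := \{j : y_j^s \leq z - 1 + 1/n\}, \qquad V := \{j : z - 1 + 1/n < y_j^s \leq z + 1/n\}, \qquad W := \{j : y_j^s > z + 1/n\}.
\]
These sets are engineered so that for $j \in U$ both post-step loads satisfy $(y_j^{s+1} - z)^+ = 0$; for $j \in V$ only the allocated outcome contributes a nonzero power since $y_j^s - 1/n \leq z$; and for $j \in W$ both outcomes are above $z$, so $\Phi_j^s$ can be factored out cleanly.

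For $j \in U$, this gives immediately $\Ex{\Phi_j^{s+1} \mid \mathfrak{F}^s} = 1 = \Phi_j^s$. For $j \in V$, the condition $y_j^s > z - 1$ triggers $\mathcal{K}^s$, so $q_j^s \leq \tfrac{1}{n} e^{-\phi}$; combined with $y_j^s + 1 - 1/n - z \leq 1$ and $(y_j^s - 1/n - z)^+ = 0$, this yields
\[
\Ex{\Phi_j^{s+1} \mid \mathfrak{F}^s} \leq q_j^s \cdot (e^{\phi} - 1) + 1 \leq 1 + \tfrac{1}{n}.
\]
For $j \in W$ I can factor $\Phi_j^s = e^{\phi(y_j^s - z)}$ to obtain
\[
\Ex{\Phi_j^{s+1} \mid \mathfrak{F}^s} = \Phi_j^s \cdot \bigl[\, q_j^s \cdot e^{\phi(1 - 1/n)} + (1 - q_j^s) \cdot e^{-\phi/n} \,\bigr] \leq \Phi_j^s \cdot \left(1 + \tfrac{1}{n}\right) \cdot e^{-\phi/n},
\]
again using $\mathcal{K}^s$. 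The key numerical step is to verify that $(1 + 1/n) \cdot e^{-\phi/n} \leq 1 - 1/n$ for the entire range $\phi \in [4, n]$; by monotonicity in $\phi$ this reduces to $\phi = 4$, where the inequality is equivalent to $e^{4/n} \geq (n + 1)/(n - 1)$, which follows from $e^{4/n} \geq 1 + 4/n \geq 1 + 2/(n-1)$ for $n \geq 2$.

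Summing the three estimates, writing $\Phi^s = |U| + \sum_{j \in V} \Phi_j^s + \sum_{j \in W} \Phi_j^s$, and using $\sum_{j \in V} \Phi_j^s \geq |V|$, the cross terms collapse to
\[
\Ex{\Phi^{s+1} \mid \mathfrak{F}^s, \mathcal{K}^s} \leq \left(1 - \tfrac{1}{n}\right) \cdot \Phi^s + \tfrac{|U| + 2|V|}{n} \leq \left(1 - \tfrac{1}{n}\right) \cdot \Phi^s + 2,
\]
since $|U| + |V| \leq n$ and $|V| \leq n$ yield $|U| + 2|V| \leq 2n$. The main obstacle, beyond the casework, is choosing the precise breakpoints $z - 1 + 1/n$ and $z + 1/n$ so that the $+\tfrac{1}{n}$ slack on $V$ is absorbed into the additive constant via $\sum_{j \in V} \Phi_j^s \geq |V|$; shifting either threshold by an integer load unit spoils the alignment and inflates the constant past $2$. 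Everything else is routine Taylor estimation for $e^{-\phi/n}$, and the bound $\phi \leq n$ is what keeps this estimation in the regime where the linearization is accurate.
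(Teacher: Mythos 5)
Your proof is correct and follows essentially the same three-case decomposition as the paper (boundaries shifted by $1/n$, which is immaterial given that the $y_j^s$ live on the lattice $\{k - s/n : k \in \mathbb{Z}\}$), with the same use of $\mathcal{K}^s$ in the two upper regimes and the same critical estimate $(1+1/n)\,e^{-\phi/n} \leq 1 - 1/n$ for $\phi \geq 4$. Two small remarks on your commentary: the paper's own thresholds $z-1$ and $z$ also give the additive constant $2$, so the claim that an integer shift ``inflates the constant past $2$'' is not accurate; and your verification of the key inequality via $e^{4/n} \geq 1 + 4/n \geq (n+1)/(n-1)$ never invokes $\phi \leq n$ (which the paper uses for its Taylor linearization but your argument does not need), so the final sentence attributing a role to $\phi \leq n$ is a misdescription of your own proof.
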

\begin{proof}
We consider the following three cases for the contribution of a bin $i \in [n]$:

\textbf{Case 1 [$y_i^s < z - 1$]:} The contribution of $i$ will remain $\Phi_{i}^{s+1} = \Phi_{i}^{s} = 1$, even if a ball is allocated to bin $i$. Hence,
\[
 \Ex{\left. \Phi_{i}^{s+1} \,\right\vert\, \mathfrak{F}^s, \mathcal{K}^s} 
 = \Phi_{i}^s
 = \Phi_{i}^s \cdot \left( 1 - \frac{1}{n}\right) + \frac{1}{n}.
\]

\textbf{Case 2 [$y_i^s \in [z -1, z]$]:} By the condition $\mathcal{K}^s$, the probability of allocating a ball to bin $i$ with $y_i^s \geq z - 1$ is $q_i^s \leq \frac{1}{n} \cdot e^{-\phi}$. Hence, the expected contribution of this bin is at most
\begin{align*}
\Ex{\left. \Phi_{i}^{s+1} \,\right\vert\, \mathfrak{F}^s, \mathcal{K}^s} 
 & \leq e^{\phi} \cdot q_i^s + \Phi_{i}^s \cdot (1-q_i^s) \\
 & \leq e^{\phi} \cdot \frac{1}{n} \cdot e^{-\phi} + \Phi_{i}^s \\
 & = \frac{1}{n} + \Phi_{i}^s \\
 & = \Phi_{i}^s \cdot \left( 1 - \frac{1}{n}\right) + \frac{2}{n},
\end{align*}
using in the last equation that $\Phi_{i}^s = 1$.

\textbf{Case 3 [$y_i^s > z$]:} Again, by the condition $\mathcal{K}^s$, the probability of allocating a ball to bin $i$ with $y_i^s > z$ is $q_i^s \leq \frac{1}{n} \cdot e^{-\phi}$. Hence,
\begin{align*}
\Ex{\left. \Phi_{i}^{s+1} \,\right\vert\, \mathfrak{F}^s, \mathcal{K}^s}
 & \stackrel{(a)}{=} \Phi_{i}^s \cdot e^{\phi \cdot (1  - 1/n)} \cdot q_i^s + \Phi_{i}^s \cdot e^{-\phi/n} \cdot (1-q_i^s) \\
 & \leq \Phi_{i}^s \cdot e^{\phi \cdot (1  - 1/n)} \cdot q_i^s + \Phi_{i}^s \cdot e^{-\phi/n} \\
 & = \Phi_{i}^s \cdot e^{-\phi /n} \cdot \Big(1 + e^{\phi} \cdot q_i^s \Big) \\
 & \stackrel{(b)}{\leq} \Phi_{i}^s \cdot \left( 1 - \frac{\phi}{2n} \right) \cdot \Big( 1 + e^{\phi} \cdot \frac{1}{n} \cdot e^{-\phi} \Big) \\
 & \stackrel{(c)}{\leq} \Phi_{i}^s \cdot \left(1 - \frac{2}{n}\right) \cdot \left(1 + \frac{1}{n}\right) \\
 & \leq \Phi_{i}^s \cdot \left( 1 - \frac{1}{n}\right),
\end{align*}
using in $(a)$ that $y_i^s > z$ implies that $y_i^s \geq z + \frac{1}{n}$ since $z$ is an integer, in $(b)$ that $e^u \leq 1 + \frac{1}{2} u$ (for any $-1.5 \leq u < 0$) and that $\phi \leq n$ and in $(c)$ that $\phi \geq 4$.

Aggregating over the three cases, we get the claim:
\begin{align*}
 \Ex{\left. \Phi^{s+1} \,\right|\, \mathfrak{F}^s, \mathcal{K}^s} 
 & = \sum_{i = 1}^n \Ex{\left. \Phi_{i}^{s+1} \,\right|\, \mathfrak{F}^s, \mathcal{K}^s} \leq \sum_{i=1}^n \left(\Phi_{i}^s \cdot \left( 1 - \frac{1}{n} \right) + \frac{2}{n}\right) = \Phi^s \cdot \left( 1 - \frac{1}{n} \right) + 2. \qedhere
\end{align*}
\end{proof}

\subsection{Concentration} \label{sec:super_exponential_potentials_concentration}

\subsubsection{Proof Outline of Theorem~\ref{thm:super_exp_potential_concentration}}

We will now give a summary of the main technical steps in the proof of \cref{thm:super_exp_potential_concentration} (an illustration of the key steps is shown in \cref{fig:super_exp_concentration_proof_outline}). The proof uses an interplay between two instances $\Phi_1 := \Phi_1(\phi_1, z)$ and $\Phi_2 := \Phi_2(\phi_2, z)$ of the super-exponential potential function with $\phi_2 \leq \phi_1/84$, such that in steps $s \geq 0$ when $\Phi_1^s$ is small, then the change of $\Phi_2^s$ is very small.

\paragraph{Recovery.} By the third precondition \cref{eq:gap_and_condition_k_precondition} of \cref{thm:super_exp_potential_concentration}, we start with $\Gap(t - 2n \log^4 n) \leq \log^2 n$, which implies that $\Phi_1^{t - 2n \log^4 n} \leq e^{\frac{1}{2} \cdot \log^4 n}$ (\cref{clm:weak_gap_implies_phi_1_small}). Using the drop inequality for the potential $\Phi_1$ (first precondition \cref{eq:phi_1_drop_precondition}), it follows that $\ex{\Phi_1^{s}} \leq 6n$, for any step $s \in [t - n \log^4 n, \tilde{t}]$ (\cref{lem:newexists_s_st_ex_psi_linear}).
By using Markov's inequality and a union bound, we can deduce that \Whp~$\Phi_1^{s} \leq 6n^{12}$ for all steps $s \in [t - n \log^4 n, \tilde{t}]$. By a simple interplay between two potentials, this implies $\Phi_2^{t - n \log^4 n} \leq n^{7/6}$ (\cref{lem:psi_potential_poly_implies}~$(i)$). Now using a drop inequality for the potential $\Phi_2$ (second precondition \cref{eq:phi_2_drop_precondition}), guarantees that \Whp~$\Phi_2^{r_0} \leq 6n$ for \emph{some single} step $r_0 \in [t - n \log^4 n, t]$ (\cref{lem:recovery_phi_2}). 

\paragraph{Stabilization.} To obtain the stronger statement which holds \emph{for all} steps $s \in [t, \tilde{t}]$, we will use a concentration inequality. The key point is that for any step $r$ with $\Phi_1^{r} \leq 6n^{12}$ the absolute difference $|\Phi_2^{r+1} - \Phi_2^{r}|$ is at most $n^{1/3}$, because $\phi_2 \leq \frac{\phi_1}{84}$ (by preconditions \cref{eq:phi_1_drop_precondition} and \cref{eq:phi_2_drop_precondition}). This is crucial for applying the Azuma's inequality for super-martingales (\cref{lem:azuma}) to $\Phi_2$ which yields that $\Phi_2^s \leq 8n$ for all steps $s \in [t, \tilde{t}]$ using a smoothing argument (\cref{clm:phi_j_does_not_drop_quickly}).

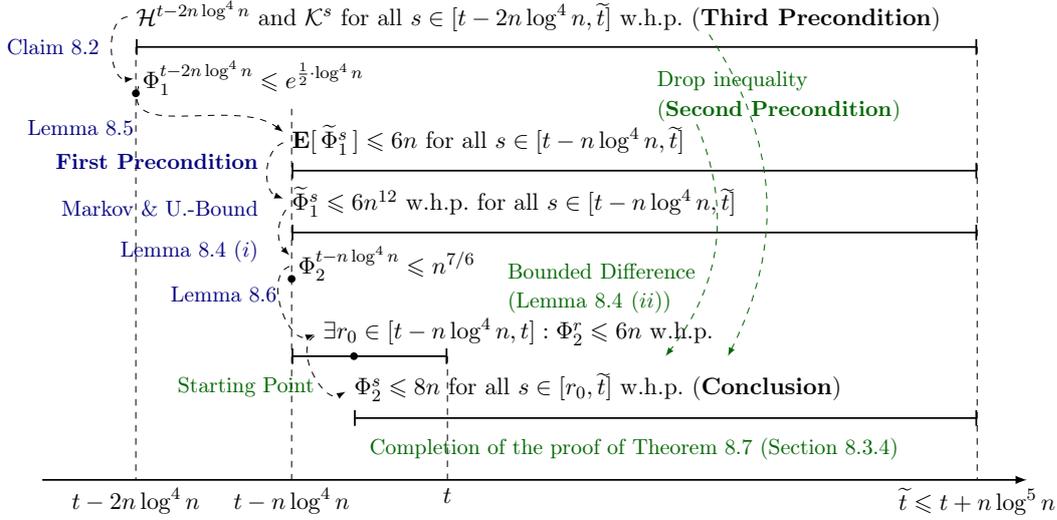
\begin{figure}[H]
\scalebox{0.82}{
\begin{tikzpicture}[
  txt/.style={anchor=west,inner sep=0pt},
  Dot/.style={circle,fill,inner sep=1.25pt},
  implies/.style={-latex,dashed}]

\def\betaO{0}
\def\betaA{1}
\def\betaB{3.5}
\def\betaC{6}
\def\End{14.5}
\def\yO{-8}

\draw[dashed] (\betaA, -1) -- (\betaA, \yO);
\node[anchor=north] at (\betaA, \yO) {$t - 2n \log^4 n$};
\draw[dashed] (\betaB, -2) -- (\betaB, \yO);
\node[anchor=north] at (\betaB, \yO) {$t - n \log^4 n$};
\draw[dashed] (\betaC, -6) -- (\betaC, \yO);
\node[anchor=north] at (\betaC, \yO) {$t$};
\draw[dashed] (\End, -1) -- (\End, \yO);
\node[anchor=north] at (\End, \yO) {$\tilde{t} \leq t + n \log^5 n$};

\node (indstep) at (\betaA,-0.5) {};
\node[txt] at (indstep) {$\mathcal{H}^{t - 2n \log^4 n}$ and $\mathcal{K}^{s}$ for all $s \in [t - 2n \log^4 n, \tilde{t}]$ \Whp~(\textbf{Third Precondition})};
\draw[|-|, thick] (\betaA, -1) -- (\End, -1) ;

\node (PsiSmall) at (\betaA+0.1,-1.5) {};
\node[txt] at (PsiSmall) {$\Phi_1^{t - 2n \log^4 n} \leq e^{\frac{1}{2} \cdot \log^4 n}$};
\node[Dot] at (\betaA, -1.75){};

\node (ExPsiLinear) at (\betaB, -2.5) {};
\node[txt] at (\betaB, -2.5) {$\ex{\tilde{\Phi}_1^{s}} \leq 6n$ for all $s \in [t - n \log^4 n, \tilde{t}]$};
\draw[|-|, thick] (\betaB, -3) -- (\End, -3);

\node (PsiPoly) at (\betaB, -3.5) {};
\node[txt] at (PsiPoly) {$\tilde{\Phi}_1^{s} \leq 6n^{12}$ \Whp~for all $s \in [t - n \log^4 n, \tilde{t}]$};
\draw[|-|, thick] (\betaB, -4) -- (\End, -4);

\node (PhiPoly) at (\betaB+0.1,-4.5) {};
\node[txt] at (\betaB+0.1,-4.5) {$\Phi_2^{t - n \log^4 n} \leq n^{7/6}$};
\node[Dot] at (\betaB, -4.75){};

\node (ExistsPhiLinear) at (\betaB+0.5,-5.6) {};
\node[txt] at (ExistsPhiLinear) {$\exists r_0 \in [t - n \log^4 n, t]: \Phi_2^{r} \leq 6n$ \Whp};
\node[Dot] at (\betaB+1, -6){};
\draw[|-|, thick] (\betaB, -6) -- (\betaC, -6);

\node (PhiLinear) at (\betaB+1,-6.5) {};
\node[txt] at (\betaB+1,-6.5) {$\Phi_2^{s} \leq 8n$ for all $s \in [r_0, \tilde{t}]$ \Whp~(\textbf{Conclusion})};
\draw[|-|, thick] (\betaB+1, -7) -- (\End, -7);

\draw[-latex, thick] (\betaO - 0.5, \yO) -- (\End + .8, \yO);

\draw[implies] (indstep) edge[bend right=90] (PsiSmall);
\node[anchor=east, black!50!blue] at (\betaA - 0.45, -1) {\small \cref{clm:weak_gap_implies_phi_1_small}};

\draw[implies] (PsiSmall) edge[bend right=90,in=160] (ExPsiLinear);
\node[anchor=east, black!50!blue] at (\betaA + 0.1, -2.3) {\small \cref{lem:newexists_s_st_ex_psi_linear}};

\node[anchor=east, black!50!blue] at (\betaB - 0.4, -2.85) {\small \textbf{First Precondition}};

\draw[implies] (ExPsiLinear) edge[bend right=70] (PsiPoly);
\node[anchor=east, black!50!blue] at (\betaB - 0.4, -3.6) {\small Markov \& U.-Bound};

\draw[implies] (PsiPoly) edge[bend right=40] (PhiPoly);
\node[anchor=east, black!50!blue] at (\betaB - 0.4, -4.3) {\small \cref{lem:psi_potential_poly_implies}~$(i)$};

\draw[implies] (PhiPoly) edge[bend right=90] (ExistsPhiLinear);
\node[anchor=east, black!50!blue] at (\betaB - 0.1, -5) {\small \cref{lem:recovery_phi_2}};

\draw[implies] (ExistsPhiLinear) edge[bend right=90] (PhiLinear);
\node[anchor=east, black!60!green] at (\betaB +0.5, -6.5) {\small Starting Point};

\draw[implies,black!60!green] (\betaC + 4, -2.25) to[bend left=30] (\betaC + 3.5, -6);
\node[text width=4cm, anchor=east,black!60!green] at (\betaC + 5.1, -4.9) {\small Bounded Difference (\cref{lem:psi_potential_poly_implies}~$(ii)$)};

\draw[implies,black!60!green] (\betaC + 4.2, -0.8) to[bend left=30] (\betaC + 4.5, -6);
\node[anchor=east,black!60!green, text width=4cm] at (\betaC + 7.5, -1.8) {\small Drop inequality \\ (\textbf{Second Precondition})};

\node[black!60!green] at (\betaC + 3, -7.5) {\small %
Completion of the proof of \cref{thm:super_exp_potential_concentration} (\cref{sec:super_exp_concentration_proof_completion})};

\end{tikzpicture}}
\caption{Outline for the proof of \cref{thm:super_exp_potential_concentration}. Results in blue are given in \cref{sec:super_exp_concentration_deterministic_claims,sec:super_exp_concentration_probabilistic_drop}, while results in green are used in the completion of the proof in \cref{sec:super_exp_concentration_proof_completion}. }
\label{fig:super_exp_concentration_proof_outline}
\end{figure}

\subsubsection{Deterministic Relations between the Potential Functions}\label{sec:super_exp_concentration_deterministic_claims}

We collect several basic facts about the super-exponential potential functions $\Phi_1 := \Phi_1(\phi_1, z)$ and $\Phi_2 := \Phi_2(\phi_2, z)$ satisfying the preconditions of \cref{thm:super_exp_potential_concentration}.

We start with a simple upper bound on $\Phi_1^s$ using a weak upper bound on the gap at step $s$.

\begin{clm} \label{clm:weak_gap_implies_phi_1_small} For any step $s \geq 0$ where $\Gap(s) \leq \log^2 n$, we have that
$
\Phi_1^{s} \leq e^{\frac{1}{2} \cdot \log^4 n}.
$
\end{clm}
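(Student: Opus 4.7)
The plan is to bound each summand of $\Phi_1^s$ using the two given ceilings: the smoothing parameter $\phi_1 \leq (\log n)/6$ (from the preconditions of \cref{thm:super_exp_potential_concentration}) and the pointwise bound $(y_i^s - z)^+ \leq y_i^s \leq \Gap(s) \leq \log^2 n$ coming from the hypothesis. These two ingredients, combined with the fact that $\Phi_1^s$ has at most $n$ terms, give everything directly.

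Concretely, I would write for every bin $i \in [n]$,
\[
\exp\bigl(\phi_1 \cdot (y_i^s - z)^+\bigr) \leq \exp\bigl(\phi_1 \cdot \log^2 n\bigr) \leq \exp\bigl(\tfrac{1}{6} \log^3 n\bigr),
\]
using $z \geq 0$ in the first inequality and $\phi_1 \leq (\log n)/6$ in the second. Summing over the $n$ bins then yields
\[
\Phi_1^s \leq n \cdot \exp\bigl(\tfrac{1}{6} \log^3 n\bigr) = \exp\bigl(\log n + \tfrac{1}{6} \log^3 n\bigr).
\]
It remains to verify $\log n + \tfrac{1}{6} \log^3 n \leq \tfrac{1}{2} \log^4 n$, which holds for all sufficiently large $n$ (e.g., as soon as $\log n \geq 3$), consistent with the standing assumption that $n$ is large.

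There is no real obstacle here; the claim is essentially just ``exponent times number of terms.'' The only thing worth being mildly careful about is that $(y_i^s - z)^+$ could in principle be negative-truncated to zero — but that only makes the bound tighter — so we just use the crude inequality $(y_i^s - z)^+ \leq y_i^s \leq \Gap(s)$ and proceed.
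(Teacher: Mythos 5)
Your proof is correct and takes the same route as the paper's: bound the exponent by $\phi_1 \cdot \log^2 n \leq \tfrac{1}{6}\log^3 n$, sum over the $n$ bins, and absorb the extra $\log n$ in the exponent into the generous slack up to $\tfrac{1}{2}\log^4 n$. One tiny slip in the write-up: the chain $(y_i^s - z)^+ \leq y_i^s$ is false for underloaded bins (where $y_i^s < 0$ but the left-hand side is $0$); the inequality you actually need, and implicitly use, is $(y_i^s - z)^+ \leq (y_i^s)^+ \leq \Gap(s)$, which holds because $z \geq 0$, so the conclusion is unaffected.
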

\begin{proof}
Since $\phi_1 \leq (\log n)/6$, we have that
$
\Phi_1^{s} = \sum_{i = 1}^n e^{\phi_1 \cdot (y_i^s - z)^+} \leq n \cdot e^{\phi_1 \cdot \log^2 n} \leq e^{\frac{1}{2} \cdot \log^4 n}. 
$
\end{proof}

The next claim is a simple ``smoothness'' argument showing that the potential cannot decrease quickly within $\lceil n/\log^2 n \rceil$ steps. The derivation is elementary and relies on the fact that the average load changes by at most $2/\log^2 n$.
\begin{clm} \label{clm:phi_j_does_not_drop_quickly}
For any step $s \geq 0$ and any step $r \in [s, s + \lceil n/\log^2 n \rceil]$, we have that $\Phi_2^{r} \geq 0.99 \cdot \Phi_2^{s}$.
\end{clm}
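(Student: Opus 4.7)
\textbf{Proof plan for Claim~\ref{clm:phi_j_does_not_drop_quickly}.}

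The plan is to show a per-bin smoothness bound: for each $i \in [n]$, the quantity $(y_i^r - z)^+$ drops by at most $T/n$ between steps $s$ and $r$, where $T := \lceil n/\log^2 n \rceil$. Since over the $r-s \le T$ steps the load $x_i^r \ge x_i^s$ (loads only increase), while the average load advances by exactly $(r-s)/n$, we have $y_i^r \ge y_i^s - (r-s)/n \ge y_i^s - T/n$. I will then distinguish the three cases $(y_i^s \leq z)$, $(y_i^s > z \text{ and } y_i^r > z)$, and $(y_i^s > z \text{ and } y_i^r \le z)$. In the first case $(y_i^s - z)^+ = 0$, so the inequality $(y_i^r - z)^+ \ge (y_i^s - z)^+ - T/n$ is immediate; in the second it follows directly from $y_i^r \ge y_i^s - T/n$; in the third, $(y_i^r - z)^+ = 0$ and $y_i^s - z \le T/n$, so again the inequality holds.

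Next I would exponentiate bin-wise: $\Phi_{2,i}^r = \exp(\phi_2 (y_i^r - z)^+) \ge \exp(\phi_2 (y_i^s - z)^+) \cdot \exp(-\phi_2 T/n) = \Phi_{2,i}^s \cdot \exp(-\phi_2 T/n)$, and then sum over $i$ to get $\Phi_2^r \ge \Phi_2^s \cdot \exp(-\phi_2 T/n)$. Finally, I would use the preconditions $\phi_2 \le (\log n)/6$ and the bound $T/n \le 1/\log^2 n + 1/n$ to obtain
\[
 \phi_2 \cdot T/n \,\le\, \frac{\log n}{6} \cdot \Big(\frac{1}{\log^2 n} + \frac{1}{n}\Big) \,=\, o(1),
\]
so for sufficiently large $n$, $\exp(-\phi_2 T/n) \ge 1 - \phi_2 T/n \ge 0.99$, completing the proof.

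There is no real obstacle here; the argument is a routine smoothness calculation relying only on the fact that loads are monotone nondecreasing (exactly one ball is added per step, so the average advances by $1/n$ per step and no $x_i$ decreases). The only minor care needed is in handling the truncation $(\cdot)^+$, which is resolved by the three-case split above.
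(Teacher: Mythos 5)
Your proposal is correct and follows essentially the same approach as the paper's proof: bound the per-step decrease of normalized load by $(r-s)/n = O(1/\log^2 n)$, push this through the truncation $(\cdot)^+$ and the exponential bin-by-bin, sum over bins, and finally use $\phi_2 \le (\log n)/6$ to conclude the multiplicative loss is $e^{-o(1)} \ge 0.99$. The only difference is cosmetic: you spell out the elementary inequality $(u-a)^+ \ge u^+ - a$ via a three-case split, which the paper leaves implicit.
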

\begin{proof}
The normalized load after $r - s$ steps can decrease by at most $\frac{r - s}{n} \leq \frac{2}{\log^2 n}$. Hence, for any bin $i \in [n]$,
\begin{align*}
\Phi_{2,i}^{r} 
 & = e^{\phi_2 \cdot (y_i^{r} - z )^{+}}
 \geq e^{\phi_2 \cdot (y_i^{s} - \frac{r-s}{n} - z)^{+}}
 \geq e^{\phi_2 \cdot (y_i^{s} - z)^+ - \phi_2 \cdot \frac{2}{\log^2 n}}
 = \Phi_{2,i}^{s} \cdot e^{-\frac{2\phi_2}{\log^2 n} }
 \geq \Phi_{2,i}^{s} \cdot e^{-o(1)} \\
 & \geq 0.99 \cdot \Phi_{2,i}^{s},
\end{align*}
for sufficiently large $n$, using that $\phi_2 \leq (\log n)/6$. By aggregating over all bins, we get the claim.
\end{proof}

The next claim is crucial for applying the concentration inequality, since the second statement bounds the maximum additive change of $\Phi_2^{s}$ in any step $s$ where $\Phi_1^{s}$ is $\poly(n)$:
\begin{lem} \label{lem:psi_potential_poly_implies}
For any step $s \geq 0$ where $\Phi_1^{s} \leq 6n^{12}$, we have that
\begin{align*}
(i) & \quad \Phi_2^{s} \leq n^{7/6}, \\ 
(ii) & \quad |\Phi_2^{s+1} - \Phi_2^{s} | \leq n^{1/3}.
\end{align*}
\end{lem}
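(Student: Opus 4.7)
}
The plan is to exploit the relation $\phi_2 \leq \phi_1/84$ to convert the polynomial bound on $\Phi_1^s$ into the much stronger bound $\Phi_{2,i}^s \leq n^{13/84}$ on each single summand of $\Phi_2^s$, and then use this together with $\phi_2 \leq (\log n)/6$ to control the one‐step change.

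For part $(i)$, I will first note that from $\Phi_1^s = \sum_{i=1}^n e^{\phi_1 (y_i^s - z)^+} \leq 6n^{12}$ and the non-negativity of each summand, each bin satisfies $\phi_1 \cdot (y_i^s - z)^+ \leq \log(6 n^{12}) \leq 13 \log n$ for $n$ large. Dividing by $\phi_1$ and multiplying by $\phi_2$, together with $\phi_2/\phi_1 \leq 1/84$, this gives $\phi_2 \cdot (y_i^s - z)^+ \leq \tfrac{13}{84} \log n$, so $\Phi_{2,i}^s \leq n^{13/84}$. Summing over the $n$ bins yields $\Phi_2^s \leq n^{1+13/84} = n^{97/84} \leq n^{7/6}$, as desired.

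For part $(ii)$, I will decompose $\Phi_2^{s+1} - \Phi_2^s$ into the contributions of the unique bin $j$ that receives a ball (positive change) and of all the other bins (negative change, since their normalized load drops by $1/n$). For any $i \neq j$, the inequality $1 - e^{-\phi_2/n} \leq \phi_2/n$ gives $|\Phi_{2,i}^{s+1} - \Phi_{2,i}^s| \leq \Phi_{2,i}^s \cdot \phi_2/n$, and summing over $i \neq j$ yields a total decrease of at most $\Phi_2^s \cdot \phi_2/n \leq n^{7/6} \cdot (\log n)/(6n) = o(n^{1/6})$, using part $(i)$. For bin $j$, the bound $(y_j^{s+1} - z)^+ \leq (y_j^s - z)^+ + 1$ implies $\Phi_{2,j}^{s+1} \leq e^{\phi_2} \cdot \Phi_{2,j}^s$, so the increase is at most $\Phi_{2,j}^s \cdot (e^{\phi_2} - 1) \leq n^{13/84} \cdot n^{1/6} = n^{27/84}$. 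Combining, $|\Phi_2^{s+1} - \Phi_2^s| \leq n^{27/84} + o(n^{1/6}) \leq n^{1/3}$ for large $n$, since $27/84 < 1/3$.

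The proof is essentially mechanical once the right bookkeeping is set up; the main ``obstacle'' is choosing the exponents so that the loss on a single bin (governed by $\phi_2/\phi_1 \leq 1/84$) beats the worst-case one-step multiplicative blow-up $e^{\phi_2} \leq n^{1/6}$. The gap $13/84 + 1/6 = 27/84 < 1/3$ is exactly what makes the bounded-difference estimate work, and this is the only place where the specific factor $84$ (rather than, say, a generic large constant) is quantitatively needed. No probabilistic tools are required: both parts are purely deterministic consequences of the hypothesis $\Phi_1^s \leq 6n^{12}$ and the relation between the two smoothing parameters.
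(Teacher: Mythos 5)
Your proof is correct and follows essentially the same approach as the paper's: use $\phi_2/\phi_1 \leq 1/84$ to turn the polynomial bound on $\Phi_1^s$ into a sublinear per-bin bound on $\Phi_{2,i}^s$, and then combine this with $\phi_2 \leq (\log n)/6$ to control the one-step change. Your bookkeeping is slightly tighter (you use $13\log n$ where the paper uses $14\log n$, giving $n^{13/84}$ in place of $n^{1/6}$, and you isolate the contribution of the single incremented bin explicitly), but the underlying argument is the same.
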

\begin{proof} 
Consider an arbitrary step $s \geq 0$ with $\Phi_1^{s} \leq 6n^{12}$. We start by upper bounding the normalized load $y_i^s$ of any bin $i \in [n]$, 
\[
y_i^s 
  \leq z + \frac{\log \Phi_{1,i}^s }{\phi_1} 
  \leq z + \frac{\log \Phi_1^s }{\phi_1}
  \leq z + \frac{\log(6n^{12})}{\phi_1}
  \leq z + \frac{14\log n}{\phi_1}.
\]

\textit{First statement}. Now, we upper bound the contribution of any bin $i \in [n]$ to $\Phi_2^s$,
\begin{align}
\Phi_{2,i}^{s} 
  = \exp \bigl( \phi_2 \cdot \bigl( y_i^{s} - z  \bigr)^{+} \bigr) 
  \leq \exp\left( \frac{14 \cdot \phi_2}{\phi_1} \cdot \log n  \right) \leq n^{1/6}, \label{eq:n6_upper_bound}
\end{align}
using that $\phi_2 \leq \frac{\phi_1}{84}$. Hence, by aggregating over all bins, 
\begin{align} \label{eq:phi_j_less_n_4_3}
\Phi_2^s \leq n \cdot n^{1/6} = n^{7/6}.
\end{align}

\textit{Second statement}. We will derive lower and upper bounds for $\Phi_2^{s+1}$. For the upper bound, let $i = i^{s+1} \in [n]$ be the bin where the $(s+1)$-th ball is allocated, then
\[
\Phi_2^{s+1} \leq \Phi_2^{s} + \Phi_{2,i}^s \cdot e^{\phi_2} \leq \Phi_2^{s} + n^{1/6} \cdot n^{1/6} = \Phi_2^{s} + n^{1/3},
\]
using that $\phi_2 \leq (\log n)/6$ and \cref{eq:n6_upper_bound}. For the lower bound, we pessimistically assume that all bin loads decrease by $1/n$ in step $s+1$, so
\[
\Phi_2^{s+1} \geq \Phi_2^{s} \cdot e^{- \phi_2/n} \stackrel{(a)}{\geq} \Phi_2^{s} \cdot \left(1 - \frac{\phi_2}{n}\right) \stackrel{(b)}{\geq} \Phi_2^{s} - \frac{n \cdot n^{1/6}}{n} \cdot \log n \geq \Phi_2^{s} - n^{1/3},
\]
using in $(a)$ that $e^u \geq 1 + u$ (for any $u$) and in $(b)$ that $\phi_2 \leq \log n$ and $\Phi_2^s \leq n \cdot n^{1/6}$ by \cref{eq:phi_j_less_n_4_3}. Combining the two bounds we get the second statement.
\end{proof}

\subsubsection{Recovery Phase} \label{sec:super_exp_concentration_probabilistic_drop}

In this section, we will show for an auxiliary process $\tilde{\mathcal{P}}$ (to be defined below) that the potential function $\Phi_2$ satisfies $\Phi_2^s \leq 6n$ in \textit{at least one step} $s \in [t - n \log^4 n, t]$ \Whp

First, we show that for the original process $\mathcal{P}$ in the statement of \cref{thm:super_exp_potential_concentration}, the potential 
\begin{align} \label{eq:tilde_phi_1_def}
\tilde{\Phi}_1^{s} := \tilde{\Phi}_1^{s}(t) := \Phi_1^{s} \cdot \mathbf{1}_{\cap_{r\in [t - 2n \log^4 n, s]} \mathcal{K}^{r} \cap \mathcal{H}^{t - 2n\log^4 n}}
\end{align}
is small in expectation for \textit{all} steps $s \geq t-n\log^4 n$. Note that there is a ``recovery time'' until the expectation becomes small, of at most $n \log^4 n$ steps after the ``weak'' bound $\Gap(t-2n\log^4 n) \leq \log^2 n$ which follows from the third precondition \cref{eq:gap_and_condition_k_precondition} in \cref{thm:super_exp_potential_concentration}.

\begin{lem} \label{lem:newexists_s_st_ex_psi_linear}
Consider the potential $\tilde{\Phi}_1 := \tilde{\Phi}_1(t)$ for any step $t \geq 0$. Then, for any step $s \geq t - n \log^4 n$,\[
\Ex{\tilde{\Phi}_1^{s}} \leq 6n.
\]
\end{lem}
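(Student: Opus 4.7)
The plan is to verify that the stopped potential $\tilde{\Phi}_1^s$ satisfies a one-step drop inequality of the form $\ex{\tilde{\Phi}_1^{s+1} \mid \mathfrak{F}^s} \leq \tilde{\Phi}_1^s \cdot (1 - 1/n) + 2$ at \emph{every} step $s$, then to iterate this from a suitable starting step and use the $n \log^4 n$ recovery window to absorb the potentially huge starting value.

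For the drop inequality, I would argue as follows. Let $I^s := \mathbf{1}_{\cap_{r\in [t - 2n \log^4 n, s]} \mathcal{K}^{r} \cap \mathcal{H}^{t - 2n\log^4 n}}$, and note that $I^s$ is $\mathfrak{F}^s$-measurable and monotonically non-increasing in $s$; hence $\tilde{\Phi}_1^{s+1} \leq \Phi_1^{s+1} \cdot I^s$. Taking conditional expectation gives $\ex{\tilde{\Phi}_1^{s+1} \mid \mathfrak{F}^s} \leq I^s \cdot \ex{\Phi_1^{s+1} \mid \mathfrak{F}^s}$. When $I^s = 0$ the right-hand side vanishes; when $I^s = 1$, the event $\mathcal{K}^s$ holds (since it lies in the intersection that defines $I^s$), so the first precondition \cref{eq:phi_1_drop_precondition} applies and yields $\ex{\Phi_1^{s+1} \mid \mathfrak{F}^s, \mathcal{K}^s} \leq \Phi_1^s (1 - 1/n) + 2$. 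Combining the two cases, $\ex{\tilde{\Phi}_1^{s+1} \mid \mathfrak{F}^s} \leq \tilde{\Phi}_1^s \cdot (1 - 1/n) + 2$.

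Next, I bound the starting value. At step $s_0 := t - 2n \log^4 n$, if $\mathcal{H}^{s_0}$ fails then $\tilde{\Phi}_1^{s_0}=0$; otherwise, $\Gap(s_0) \leq \log^2 n$ holds and \cref{clm:weak_gap_implies_phi_1_small} gives $\tilde{\Phi}_1^{s_0} \leq e^{\frac{1}{2} \log^4 n}$ deterministically. Iterating the drop inequality (using a standard geometric-arithmetic argument analogous to \cref{lem:geometric_arithmetic}) yields, for any $s \geq s_0$,
\[
\ex{\tilde{\Phi}_1^{s}} \leq \tilde{\Phi}_1^{s_0} \cdot \left(1 - \frac{1}{n}\right)^{s - s_0} + 2n \leq e^{\frac{1}{2} \log^4 n} \cdot e^{-(s - s_0)/n} + 2n.
\]
For $s \geq t - n \log^4 n$, we have $s - s_0 \geq n \log^4 n$, so the first term is at most $e^{\frac{1}{2} \log^4 n - \log^4 n} = e^{-\frac{1}{2} \log^4 n} = o(1)$ for large $n$, giving $\ex{\tilde{\Phi}_1^s} \leq 2n + 1 \leq 6n$, as claimed.

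The only delicate point is the interplay between the $\mathcal{K}^s$-indicator and the conditional drop inequality; once it is observed that the indicator is $\mathfrak{F}^s$-measurable and that its value $1$ already encodes $\mathcal{K}^s$, the rest is a routine iteration. The rest of the proof is essentially bookkeeping to ensure that the recovery window of $n \log^4 n$ steps is long enough to shrink the doubly exponential initial value $e^{\Theta(\log^4 n)}$ down to the linear regime.
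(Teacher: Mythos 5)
Your proof is correct and follows essentially the same route as the paper's: establish the one-step drop inequality for $\tilde{\Phi}_1$ by observing that the event $\mathcal{K}^s$ is already built into the killing indicator (so the precondition \cref{eq:phi_1_drop_precondition} applies whenever the indicator is nonzero, and the inequality is trivial otherwise), bound the starting value via \cref{clm:weak_gap_implies_phi_1_small}, and iterate the drop via a geometric-arithmetic argument over the $n\log^4 n$ recovery window. The only cosmetic difference is that you note the starting bound $\tilde{\Phi}_1^{t-2n\log^4 n}\leq e^{\frac12\log^4 n}$ holds deterministically (trivially $0$ when $\mathcal{H}$ fails), which lets you skip the final conditioning-and-integrating step of the paper's proof; your more explicit bookkeeping with the $\mathfrak{F}^s$-measurable indicator $I^s$ is also sound.
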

\begin{proof}
By the precondition \cref{eq:phi_1_drop_precondition} of \cref{thm:super_exp_potential_concentration},  for any step $s \geq t - n \log^4 n$, \begin{equation*}
\Ex{\left. \Phi_1^{s+1} \,\right|\, \mathfrak{F}^{s}, \mathcal{K}^s  } \leq \Phi_1^{s} \cdot \Big(1 - \frac{1}{n} \Big) + 2.
\end{equation*}
Next note that whenever $\neg \mathcal{K}^s$ holds, it follows deterministically that $\{ \tilde{\Phi}_1^{s} = \tilde{\Phi}_1^{s+1} = 0 \}$, and hence 
\begin{align} \label{eq:tilde_psi_drop}
\Ex{\left. \tilde{\Phi}_1^{s+1} \,\right|\, \mathfrak{F}^{s}  } \leq \tilde{\Phi}_1^{s} \cdot \Big(1 - \frac{1}{n} \Big) + 2.
\end{align}
We will now upper bound $\Ex{\left. \tilde{\Phi}_1^{s} \,\right|\, \mathfrak{F}^{t - 2n \log^4 n}, \mathcal{H}^{t - 2n \log^4 n}}$ for any step $s \geq t - n \log^4 n$. When $\mathcal{H}^{t - 2n \log^4 n}$ holds, by \cref{clm:weak_gap_implies_phi_1_small}, it also follows that $\tilde{\Phi}_1^{t - 2n \log^4 n} \leq \Phi_1^{t - 2n \log^4 n} \leq e^{\frac{1}{2} \cdot \log^4 n}$. Hence applying \cref{lem:geometric_arithmetic}~$(i)$ (for $a = 1 - \frac{1}{n}$ and $b = 2$) using \cref{eq:tilde_psi_drop},
\begin{align*}
\Ex{ \tilde{\Phi}_1^{s} \,\,\left|\,\, \mathfrak{F}^{t - 2n \log^4 n}, \mathcal{H}^{t - 2n \log^4 n} \right.} 
 & \leq \Ex{\tilde{\Phi}_1^s \,\, \left| \,\, \mathfrak{F}^{t - 2n \log^4 n},  \tilde{\Phi}_1^{t - 2n \log^4 n} \leq e^{\frac{1}{2} \cdot \log^4 n} \right.} \\
 & \leq e^{\frac{1}{2} \cdot \log^4 n} \cdot \left(1 - \frac{1}{n} \right)^{s - (t - 2n \log^4 n)} + 2n \\
 & \stackrel{(a)}{\leq} e^{\frac{1}{2} \cdot \log^4 n} \cdot \left(1 - \frac{1}{n} \right)^{n \log^4 n} + 2n \\
 & \stackrel{(b)}{\leq} e^{\frac{1}{2} \cdot \log^4 n} \cdot e^{-\log^4 n} + 2n \leq 1 + 2n \leq 6n,
\end{align*}
using in $(a)$ that $s \geq t - n \log^4 n$ and in $(b)$ that $e^u \geq 1 + u$ (for any $u$). Hence, the claim follows,
\[
\Ex{\tilde{\Phi}_1^{s}} = \Ex{\left. \tilde{\Phi}_1^{s} \,\,\right|\, \mathcal{H}^{t - 2n\log^4 n}} \cdot \Pro{\mathcal{H}^{t - 2n\log^4 n}} + 0 \cdot \Pro{\neg \mathcal{H}^{t - 2n\log^4 n}} \leq 6n. \qedhere
\]
\end{proof}

\paragraph{The auxiliary process $\tilde{\mathcal{P}}$.} Let $\mathcal{P}$ be the process in the statement of \cref{thm:super_exp_potential_concentration}, define $t_1 := t - n \log^4 n$ and the stopping time $\sigma := \inf\{ s \geq t_1 : \{ \Phi_1^s > 6n^{11} \} \cup \neg \mathcal{K}^s \}$. Now we define the auxiliary process $\tilde{\mathcal{P}}_{t_1}$ such that 
\begin{itemize}
 \item in steps $s \in [0, \sigma)$ it follows the allocations of $\mathcal{P}$, and
 \item in steps $s \in [\sigma, \infty)$ it allocates to the (currently) least loaded bin, which is a bin with normalized load $\leq 1 \leq z$.
\end{itemize}
This way the process trivially satisfies $\mathcal{K}^s$ and therefore also the drop inequalities (\cref{eq:phi_1_drop_precondition} and \cref{eq:phi_2_drop_precondition}) for any step $s \geq t_1$. Furthermore, starting with $\Phi_1^{t_1} \leq 6n^{11}$, for any $s \geq t_1$ it deterministically satisfies $\Phi_1^s \leq 6n^{12}$, since $\Phi^{\sigma} \leq 6n^{11} \cdot e^{\phi_1} \leq 6n^{12}$ (as $\phi_1 \leq (\log n)/6$) and for $s \geq \sigma$ the potential does not increase.

For this starting condition, we will also show that the other potential function $\Phi_2$ \Whp~becomes linear in \textit{at least one} step in $[t - n \log^4 n, t]$, by using \cref{lem:newexists_s_st_ex_psi_linear}.

\begin{lem}[Recovery] \label{lem:recovery_phi_2}
For any step $t \geq 0$ and the auxiliary process $\tilde{\mathcal{P}}_{t - n \log^4 n}$, it holds that
\[
\Pro{\left. \bigcup_{s \in [t - n \log^4 n, t]} \left\{ \Phi_2^{s} \leq 6n \right\} \,\right|\, \mathfrak{F}^{t - n \log^4 n}, \Phi_1^{t - n \log^4 n} \leq 6n^{11}} \geq 1 - n^{-6}.
\]
\end{lem}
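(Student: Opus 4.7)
The plan is to run a rescaled exponential supermartingale argument for $\Phi_2$ under the auxiliary process $\tilde{\mathcal{P}}_{t_1}$, where $t_1 := t - n\log^4 n$. I would first convert the starting hypothesis $\Phi_1^{t_1} \leq 6n^{11} \leq 6n^{12}$ into a strong pointwise bound on $\Phi_2$ via \cref{lem:psi_potential_poly_implies}~$(i)$, which yields $\Phi_2^{t_1} \leq n^{7/6}$. By construction of $\tilde{\mathcal{P}}_{t_1}$, the condition $\mathcal{K}^s$ holds for every $s \geq t_1$ (before $\sigma$ by the definition of $\sigma$; after $\sigma$ because the process allocates deterministically to a least-loaded bin, which has normalized load $\leq 0 \leq z-1$, so no bin with $y_i^s \geq z-1$ receives positive probability). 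Consequently the precondition \cref{eq:phi_2_drop_precondition} applies unconditionally from $t_1$ onward, giving $\Ex{\Phi_2^{s+1} \mid \mathfrak{F}^s} \leq \Phi_2^s(1 - 1/n) + 2$ for all $s \geq t_1$.

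The key observation is that whenever $\Phi_2^s > 6n$, the additive $2$ is swallowed by the multiplicative term: $2 \leq \Phi_2^s/(3n)$, so the drop inequality tightens to $\Ex{\Phi_2^{s+1} \mid \mathfrak{F}^s} \leq \Phi_2^s\bigl(1 - \tfrac{2}{3n}\bigr)$. This motivates defining the stopping time $\tau := \inf\{s \geq t_1 : \Phi_2^s \leq 6n\}$ and the rescaled, stopped sequence
\[
 M^s := \Phi_2^{s \wedge \tau} \cdot \left(1 - \tfrac{2}{3n}\right)^{-(s \wedge \tau - t_1)}, \qquad s \geq t_1.
\]
A short case split ($\{s < \tau\}$, where $(s+1) \wedge \tau = s+1$ and the tightened multiplicative drop applies; versus $\{s \geq \tau\}$, where $M^{s+1} = M^s$ is frozen) verifies that $(M^s)_{s \geq t_1}$ is a supermartingale with $M^{t_1} = \Phi_2^{t_1} \leq n^{7/6}$.

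The conclusion then follows by Markov's inequality evaluated at $s = t$. We have $\Ex{M^t \mid \mathfrak{F}^{t_1}} \leq M^{t_1} \leq n^{7/6}$, whereas on the complementary event $\{\tau > t\}$ the stopped process is unreduced and
\[
 M^t \;=\; \Phi_2^t \cdot \left(1 - \tfrac{2}{3n}\right)^{-n\log^4 n} \;\geq\; 6n \cdot \exp\!\bigl(\tfrac{2}{3}\log^4 n\bigr),
\]
which is far larger than $n^{7/6} \cdot n^{6}$. Markov therefore yields $\Pro{\tau > t \mid \mathfrak{F}^{t_1}, \Phi_1^{t_1} \leq 6n^{11}} \leq n^{-6}$, which is exactly the statement of the lemma.

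The only subtle step is the verification that the greedy phase of $\tilde{\mathcal{P}}_{t_1}$ (after step $\sigma$) really does satisfy both $\mathcal{K}^s$ and the drop inequality; this reduces to the observation above that the least-loaded bin, even after receiving a ball, has normalized load at most $1 \leq z$, so $\Phi_2$ is actually non-increasing there and the inequality becomes deterministic. Beyond that, the argument is a standard stopping-time/Markov computation, with the recovery window length $n\log^4 n$ chosen precisely so that the multiplicative gain $(1 - 2/(3n))^{-n\log^4 n}$ dwarfs any polynomial correction.
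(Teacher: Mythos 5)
Your argument is correct and takes essentially the same route as the paper. The paper works with the killed potential $\widehat{\Phi}_2^s := \Phi_2^s \cdot \mathbf{1}_{\cap_{r \in [t_1,s]}\{\Phi_2^r > 6n\}}$, which satisfies $\Ex{\widehat{\Phi}_2^{s+1} \mid \mathfrak{F}^s} \leq (1 - \frac{1}{2n})\widehat{\Phi}_2^s$ once the additive $2$ is absorbed using $\Phi_2^s > 6n$, iterates this over $n\log^4 n$ steps from the starting bound $n^{7/6}$ supplied by \cref{lem:psi_potential_poly_implies}~$(i)$, and concludes via Markov together with the deterministic bound $\Phi_2^t \geq n$; your stopped, rescaled supermartingale $M^s$ packages the identical multiplicative drop, starting bound, and window length into a single Markov step, so the two formulations are interchangeable and there is no gap.
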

\begin{proof}
Fix any step $t \geq 0$ and let $t_1 := t - n \log^4 n$ be the starting point of the analysis. Further, assume that $\{ \Phi_1^{t_1} \leq 6n^{11} \}$ holds, which also implies that the auxiliary process $\tilde{\mathcal{P}}_{t_1}$ satisfies $\mathcal{K}^s$ for any step $s \geq t_1$. Further, for any $s \geq t_1$, we define the potential function
\[
\widehat{\Phi}_2^s := \Phi_2^{s} \cdot \mathbf{1}_{\cap_{r \in [t_1, s]} \{\Phi_2^{r} > 6n\}}.
\]
We will show that $\widehat{\Phi}_2$ drops in expectation by a multiplicative factor in every step. We start by showing for $\Phi_2$ that when $\Phi_2^s > 6n$, then it drops in expectation by a multiplicative factor. By the second precondition \cref{eq:phi_2_drop_precondition} of \cref{thm:super_exp_potential_concentration},
\begin{align*}
& \Ex{\left. \Phi_2^{s+1} \,\right|\, \Phi_1^{t_1} \leq 6n^{11}, \mathfrak{F}^{s}, \Phi_2^{s} > 6n }
  = \Ex{\left. \Phi_2^{s+1} \,\right|\, \Phi_1^{t_1} \leq 6n^{11}, \mathfrak{F}^{s}, \mathcal{K}^s, \Phi_2^{s} > 6n } \\
 & \qquad \leq \Phi_2^{s} \cdot \Big(1 - \frac{1}{n} \Big) + 2
 \leq \Phi_2^{s} \cdot \Big(1 - \frac{1}{2n} \Big) -6n \cdot \frac{1}{2n} + 2 
 \leq \Phi_2^{s} \cdot \Big(1 - \frac{1}{2n} \Big).
\end{align*}
Whenever the event $\{ \Phi_2^s \leq 6n \}$ holds, it follows deterministically that $\widehat{\Phi}_2^s = \widehat{\Phi}_2^{s+1} = 0$. So, for the potential $\widehat{\Phi}_2$ we obtain the drop inequality (with one fewer condition),
\begin{align} \label{eq:phi_i_plus_1_large_phi}
\Ex{\left. \widehat{\Phi}_2^{s+1} \,\right|\, \Phi_1^{t_1} \leq 6n^{11}, \mathfrak{F}^{s}}
 \leq \widehat{\Phi}_2^{s} \cdot \Big(1 - \frac{1}{2n} \Big).
\end{align}
By inductively applying~\cref{eq:phi_i_plus_1_large_phi} for $n \log^4 n$ steps starting at $t_1$, we have that
\begin{align*}
\Ex{\left. \widehat{\Phi}_2^{t} \,\,\right|\,\, \mathfrak{F}^{t_1}, \Phi_1^{t_1} \leq 6n^{11} }
 & \leq \Ex{\left. \widehat{\Phi}_2^{t} \,\,\right|\,\, \mathfrak{F}^{t_1}, \widehat{\Phi}_2^{t_1} \leq n^{7/6} } \\
 & \leq \widehat{\Phi}_2^{t_1} \cdot \Big(1 - \frac{1}{2n} \Big)^{n \log^4 n} 
 \leq n^{7/6} \cdot e^{- \frac{1}{2} \log^4 n} \leq n^{-6},
\end{align*}
for sufficiently large $n$, using that $e^u \geq 1 + u$ (for any $u$). Hence, by Markov's inequality,
\[
\Pro{\left. \widehat{\Phi}_2^{t} \leq 1 \,\right|\, \mathfrak{F}^{t_1}, \Phi_1^{t_1} \leq 6n^{11}} \geq 1 - n^{-6}.
\]
Since it deterministically holds that $\{ \Phi_2^{t} \geq n \}$ for any step $t \geq 0$, it follows that if $\{ \widehat{\Phi}_2^{t} \leq 1 \}$ holds, then also $\{ \widehat{\Phi}_2^{t} = 0 \}$. So, we conclude that $\mathbf{1}_{\cap_{r \in [t_1, t]} \{ \Phi_2^{r} > 6n \} } = 0$ holds, i.e.,
\[
\Pro{\left. \bigcup_{r \in [t_1, t]} \left\{ \Phi_2^{r} \leq 6n \right\} ~\right|~ \mathfrak{F}^{t_1}, \Phi_1^{t_1} \leq 6n^{11} } \geq 1 - n^{-6}. \qedhere
\]
\end{proof}

\subsubsection{Completing the Proof of Theorem~\ref{thm:super_exp_potential_concentration}}\label{sec:super_exp_concentration_proof_completion}

For some values of $g$, the number of steps $k$ in the layered induction may be $\omega(1)$ and so we may need to apply \cref{thm:super_exp_potential_concentration} up to $\omega(1)$ times, which means that we cannot afford to lose a polynomial factor in the error probability. To overcome this, we will partition the time-interval into consecutive intervals of length $\lceil n/\log^2 n \rceil$. Then, we will prove that at the end of each such interval the potential is small \Whp, and finally use a simple smoothness argument (\cref{clm:phi_j_does_not_drop_quickly}) to show that the potential is small \Whp~in \textit{all} steps.

\begin{thm} \label{thm:super_exp_potential_concentration}
\ThmSuperExponentialPotentialConcentrationWithLabels{0}
\end{thm}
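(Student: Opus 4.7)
The plan is to execute the roadmap visualized in \cref{fig:super_exp_concentration_proof_outline}: use $\Phi_1$ to drive a recovery of $\Phi_2$ to $\leq 6n$ at some step $r_0 \in [t - n\log^4 n, t]$, and then maintain $\Phi_2 \leq 8n$ throughout $[r_0, \tilde{t}]$ by applying Azuma's maximal inequality on short sub-intervals, with a smoothness argument converting sub-interval endpoint control into step-by-step control.

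First, I would fix $t_1 := t - n\log^4 n$ and work on the third-precondition event
\[
\mathcal{E}^\star := \mathcal{H}^{t - 2n\log^4 n} \cap \bigcap_{s \in [t - 2n\log^4 n, \tilde{t}]} \mathcal{K}^s,
\]
which has probability $\geq 1 - P$. On $\mathcal{E}^\star$, \cref{lem:newexists_s_st_ex_psi_linear} gives $\Ex{\tilde{\Phi}_1^{t_1}} \leq 6n$, so Markov yields $\Pro{\tilde{\Phi}_1^{t_1} \leq 6n^{11}} \geq 1 - n^{-10}$. I then pass to the auxiliary process $\tilde{\mathcal{P}}_{t_1}$ of \cref{sec:super_exp_concentration_probabilistic_drop}, which agrees with the original process on $\mathcal{E}^\star$ and which deterministically enforces $\Phi_1^s \leq 6n^{12}$ and $\mathcal{K}^s$ for every $s \geq t_1$. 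Applying \cref{lem:recovery_phi_2} to $\tilde{\mathcal{P}}_{t_1}$ then produces a (random) step $r_0 \in [t_1, t]$ with $\Phi_2^{r_0} \leq 6n$, except on an event of probability $\leq n^{-6}$.

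Second, for the stabilization phase I would partition $[r_0, \tilde{t}]$ into at most $2\log^7 n$ consecutive sub-intervals of length $L := \lceil n/\log^2 n \rceil$. On each sub-interval $[a, b]$ entered with $\Phi_2^a \leq 6n$, I apply Azuma's maximal inequality (\cref{lem:azuma}) to the supermartingale obtained by stopping $\Phi_2$ the first time it dips below $2n$: the drop inequality \cref{eq:phi_2_drop_precondition} makes this a genuine supermartingale (since $\Phi_2^s > 2n$ forces the expected drift to be $\leq 0$), and the bounded-difference $|\Phi_2^{s+1} - \Phi_2^s| \leq n^{1/3}$ follows from \cref{lem:psi_potential_poly_implies}~$(ii)$ because the auxiliary process enforces $\Phi_1^s \leq 6n^{12}$. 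With deviation threshold $n$, Azuma gives failure probability $\exp(-\Omega(n^{1/3}\log^2 n)) = n^{-\omega(1)}$ per sub-interval, yielding $\Phi_2^s \leq 7n$ for every $s \in [a,b]$ prior to the stopping; after stopping, $\Phi_2$ is controlled by restarting the same argument (the number of restarts within a sub-interval is at most $L$, and each again fails with super-polynomially small probability). Finally, the smoothness claim \cref{clm:phi_j_does_not_drop_quickly} converts endpoint control into an $8n$ bound at every interior step, and a union bound over all sub-intervals yields $\Phi_2^s \leq 8n$ for all $s \in [r_0, \tilde{t}]$ with failure probability $\leq 2\log^7 n \cdot n^{-\omega(1)} + n^{-6}$ conditional on $\mathcal{E}^\star \cap \{\tilde{\Phi}_1^{t_1} \leq 6n^{11}\}$.

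The principal obstacle is the tight probability accounting: because the downstream layered induction invokes this theorem $\omega(1)$ times, the loss must be at most $(\log^8 n)\cdot P$ and not polynomially larger. Collecting the contributions --- the event $\mathcal{E}^\star$ costs $P$, Markov costs $n^{-10}$, recovery costs $n^{-6}$, and the $\leq 2\log^7 n$ sub-intervals each contribute $n^{-\omega(1)}$ via Azuma --- the total failure is at most $P + n^{-10} + n^{-6} + 2\log^7 n \cdot n^{-\omega(1)} \leq (\log^8 n)\cdot P$ once we use $P \geq n^{-4}$. A secondary subtlety is propagating the invariant $\Phi_2^{b} \leq 6n$ from one sub-interval to the next: Azuma alone does not deliver a self-sustaining bound in the absence of drift, but the implicit $-\Phi_2/n + 2$ contraction of \cref{eq:phi_2_drop_precondition} pulls $\Phi_2$ back toward $2n$ over the $L = n/\log^2 n$ steps of a sub-interval, and together with the $n$-scale Azuma concentration this keeps the invariant alive. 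Transferring the conclusion from the auxiliary process back to the original is immediate since the two agree on $\mathcal{E}^\star \cap \{\tilde{\Phi}_1^{t_1} \leq 6n^{11}\}$.
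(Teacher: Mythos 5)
Your recovery phase and overall architecture (Markov on $\tilde{\Phi}_1$, auxiliary process, sub-intervals of length $\lceil n/\log^2 n\rceil$, smoothness) match the paper. Your stabilization step, however, departs from the paper's at its technical core and has two genuine gaps. First, you cite \cref{lem:azuma} as a ``maximal inequality'', but as stated it bounds only the endpoint, $\Pro{X^N \geq X^0 + \lambda}$. With the endpoint form, the claim ``$\Phi_2^s \leq 7n$ for every $s \in [a,b]$ prior to the stopping'' does not follow directly; you would need either a separately proved maximal version, or the paper's sub-interval-plus-\cref{clm:phi_j_does_not_drop_quickly} route --- and the latter bounds the deterministic endpoint $\Phi_2^b$, not $\Phi_2^{b\wedge\tau}$ at a random stopping time. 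Second, stopping at $2n$ and restarting forces you to union-bound over a random number of excursions with random start times; you assert this costs only ``$\leq L$ restarts'', but a correct argument must enumerate the $L$ possible deterministic excursion-start positions $\sigma\in(a,b]$, apply Azuma conditioned on $\{\Phi_2^{\sigma-1}<2n,\ \Phi_2^\sigma\geq 2n,\ \mathfrak{F}^{\sigma-1}\}$ for each, and union-bound. The paper's construction $X_i^s$ in \cref{lem:x_i_preconditions} is designed precisely to avoid both issues: it freezes $X_i^s$ at the constant $5n+n^{1/3}$ until $\Phi_2$ first exceeds $5n$ within the sub-interval, and the threshold $5n$ is chosen high enough that \cref{clm:phi_j_does_not_drop_quickly} then guarantees $\Phi_2 \geq 4n$ for the remainder of the short sub-interval, so $X_i^s$ is a single supermartingale on deterministic endpoints with no stopping or restarting at all. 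Your threshold $2n$ is the bare minimum for nonpositive drift under \cref{eq:phi_2_drop_precondition} and leaves no slack for the smoothness argument to preserve the supermartingale property once the process starts to drop; that lack of margin is exactly what creates the restart problem.

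A secondary gap: you write that $\tilde{\mathcal{P}}_{t_1}$ ``agrees with the original process on $\mathcal{E}^\star$''. It does not --- the stopping time $\sigma$ in the definition of $\tilde{\mathcal{P}}_{t_1}$ also fires if $\Phi_1^s > 6n^{11}$ at \emph{any} $s\geq t_1$, and $\mathcal{E}^\star$ alone does not exclude this. Applying Markov only at the single step $t_1$ is not enough; as in \cref{eq:concentration_p_and_tilde_p_agree}, you need \cref{lem:newexists_s_st_ex_psi_linear} combined with Markov and a union bound over all $s\in[t_1,\tilde{t}]$ (using the indicator structure of $\tilde{\Phi}_1$) to get $\Phi_1^s\leq 6n^{11}$ throughout the interval with the claimed probability.
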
 

We will start by proving the following lemma for the auxiliary process $\tilde{\mathcal{P}}_{t - n \log^4 n}$.
\begin{lem} \label{lem:concentration_for_auxiliary_process}
For any step $t \geq 0$ and the auxiliary process $\tilde{\mathcal{P}}_{t - n \log^4 n}$, it holds that
\[
\Pro{\left. \bigcap_{s \in [t, \tilde{t}]} \left\{ \Phi_2^{s} \leq 8n \right\} \,\,\right|\,\, \Phi_1^{t - n\log^4 n} \leq 6n^{11} } 
  \geq 1 - \frac{1}{4} \cdot (\log^8 n) \cdot P.
 \]
\end{lem}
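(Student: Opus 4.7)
The plan is to execute the two-phase strategy of \cref{fig:super_exp_concentration_proof_outline}: recover to a single good starting step via \cref{lem:recovery_phi_2}, propagate the linear bound to every step of $[t, \tilde{t}]$ via Azuma's inequality on a polylogarithmic grid, and finally extend from the grid to all non-grid steps via the smoothness estimate of \cref{clm:phi_j_does_not_drop_quickly}. Applying \cref{lem:recovery_phi_2} gives, on the conditioning event $\{\Phi_1^{t - n\log^4 n} \leq 6n^{11}\}$ and with probability at least $1 - n^{-6}$, a step $r_0 \in [t - n\log^4 n, t]$ with $\Phi_2^{r_0} \leq 6n$. Crucially, on the auxiliary process $\Phi_1^s \leq 6n^{12}$ and $\mathcal{K}^s$ hold deterministically for every $s \geq t - n\log^4 n$, so both the drop inequality \cref{eq:phi_2_drop_precondition} and the bounded-difference bound $|\Delta\Phi_2^{s+1}| \leq n^{1/3}$ from \cref{lem:psi_potential_poly_implies}~(ii) are available throughout the analysis.

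We then set up a grid $g_j := r_0 + j \lceil n/\log^2 n \rceil$ with $g_N \geq \tilde{t}$ and $N = O(\log^7 n)$ (using $\tilde{t} - r_0 \leq 2n \log^5 n$), and aim to show $\Phi_2^{g_j} \leq 7n$ at every $g_j$ with failure probability $n^{-\omega(1)}$. To bound $\Pro{\Phi_2^{g_j} > 7n}$, we decompose the event according to the last-return time $\tau := \sup\{s \in [r_0, g_j] \colon \Phi_2^s \leq 6n\}$, which is well-defined since $\Phi_2^{r_0} \leq 6n$. Writing $\Delta\Phi_2^{u+1} = \mu^u + \xi^u$ with $\mu^u := \Ex{\Delta\Phi_2^{u+1} \mid \mathfrak{F}^u}$ and martingale differences $\xi^u$, the drop inequality yields $\mu^u \leq -\Phi_2^u/n + 2 \leq -4$ whenever $\Phi_2^u > 6n$. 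Combining $\Phi_2^{\tau} \leq 6n$ with $\Phi_2^{g_j} > 7n$ and $\mu^u \leq -4$ for $u \in (\tau, g_j)$ forces the deterministic containment
\[
\{\tau = r\} \cap \{\Phi_2^{g_j} > 7n\} \subseteq \Big\{\sum\nolimits_{u=r}^{g_j - 1}\xi^u > n/2\Big\}.
\]
Applying Azuma's inequality (\cref{lem:azuma}) to the martingale $\sum \xi^u$, whose increments are bounded by $2n^{1/3}$, yields an $\exp(-\Omega(n^{1/3}/\log^5 n)) = n^{-\omega(1)}$ bound for each pair $(r,g_j)$; union-bounding over the $O(n\log^5 n)$ choices of $r$ and the $O(\log^7 n)$ grid points preserves the $n^{-\omega(1)}$ bound.

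Finally, the smoothness claim \cref{clm:phi_j_does_not_drop_quickly} extends the grid bound to all of $[t,\tilde{t}]$: if $\Phi_2^{g_j} \leq 7n$ at every grid point and $s \in [g_j, g_{j+1}]$, then $\Phi_2^{g_{j+1}} \geq 0.99\, \Phi_2^s$ gives $\Phi_2^s \leq 7n/0.99 < 8n$. Combining the recovery failure ($\leq n^{-6}$) with the grid Azuma failure ($n^{-\omega(1)}$) yields a total failure probability of at most $2n^{-6}$, which is comfortably smaller than $\tfrac{1}{4}(\log^8 n)\cdot P$ since $P \geq n^{-4}$. The main obstacle will be the last-return argument itself: $\Phi_2$ is not a global super-martingale, as its drift is negative only when $\Phi_2 > 2n$, so Azuma cannot be applied to $\Phi_2$ directly. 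Conditioning on the last return to the level $6n$ provides an interval on which the drift is uniformly $\leq -4$, which is exactly what reduces the rare event to a pure martingale-deviation event and permits the union bound over $\tau$ to be absorbed without loss.
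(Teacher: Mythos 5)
Your proposal is correct and reaches the same conclusion, but the stabilization step is executed by a genuinely different argument than the paper's. Both proofs share the skeleton: recover to a step $r_0$ with $\Phi_2^{r_0}\leq 6n$ via \cref{lem:recovery_phi_2}, control the potential on a grid of spacing $\lceil n/\log^2 n\rceil$, and finish with the smoothness bound \cref{clm:phi_j_does_not_drop_quickly}. Where you differ is in how the grid bound is proved. The paper works interval-by-interval: on each block $(r+(i-1)\Delta, r+i\Delta]$ it introduces a clipped process $X_i^s$ equal to $\max\{\Phi_2^s, 5n+n^{1/3}\}$ until $\Phi_2$ first exceeds $5n$, after which it tracks $\Phi_2^s$. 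This $X_i^s$ is verified (\cref{lem:x_i_preconditions}) to be a genuine super-martingale with increments bounded by $2n^{1/3}$ on the short block, so Azuma with $\lambda = n/(2\log^7 n)$ applies directly, and the grid values are bounded by telescoping $X_1^r + i\lambda \leq 7n$. You instead apply Azuma \emph{once} per grid point $g_j$ to the raw martingale-difference sum $\sum_{u=r}^{g_j-1}\xi^u$ over a long interval of length up to $\Theta(n\log^5 n)$, after reducing $\{\Phi_2^{g_j}>7n\}$ to a large martingale deviation via the last-return time $\tau=\sup\{s\leq g_j : \Phi_2^s\leq 6n\}$ and the uniform drift $\mu^u\leq -4$ on $(\tau,g_j)$. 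This is valid: the auxiliary process guarantees $\Phi_1^s\leq 6n^{12}$ deterministically for $s\geq t-n\log^4 n$, so $|\xi^u|\leq 2n^{1/3}$ holds almost surely and Azuma on the long interval gives $\exp(-\Omega(n^{1/3}/\log^5 n))$, which survives the union bound over the $O(n\log^5 n)$ choices of $r$ and the $O(\log^7 n)$ grid points. The price you pay for avoiding the clipped process is a much heavier union bound over return times (polynomial rather than polylogarithmic in $n$), but the exponential Azuma tail is strong enough to absorb it. The paper's approach is tighter and more local, but yours is more direct; both yield a failure probability that is $n^{-\omega(1)}$ and hence comfortably below the target $\tfrac{1}{4}(\log^8 n)\cdot P$ since $P\geq n^{-4}$.

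One point to make explicit when fleshing this out: after conditioning on $\Phi_1^{t-n\log^4 n}\leq 6n^{11}$, the step $r_0$ is still random, so you should either condition on the stopping time $\inf\{s\geq t-n\log^4 n: \Phi_2^s\leq 6n\}$ taking a specific value $r$ (as the paper does in the proof of \cref{thm:super_exp_potential_concentration}) and union-bound over the $O(n\log^4 n)$ possible values, or otherwise justify that the grid's dependence on $r_0$ does not break the Azuma application. Also, your containment $\{\tau=r\}\cap\{\Phi_2^{g_j}>7n\}\subseteq\{\sum_{u=r}^{g_j-1}\xi^u>n/2\}$ only uses the drift $\mu^u\leq -4$ for $u\in(r,g_j)$; for $u=r$ you should note $\mu^r\leq 2$ (from the drop inequality with $\Phi_2^r\leq 6n$), which is accounted for in your $n-2\geq n/2$ slack. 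Neither of these is a gap, just bookkeeping that should appear in the final write-up.
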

\begin{proof}
Our goal is to apply Azuma's inequality to $\Phi_2$. However, there are two challenges: $(i)$ we cannot afford to take the union bound over all $\poly(n)$ steps and $(ii)$ $\Phi_2$ is a super-martingale only when it is sufficiently large. To deal with $(i)$ we will apply Azuma's inequality to sub-intervals of length at most $\lceil n/\log^2 n \rceil$ and then use a smoothness argument (\cref{clm:phi_j_does_not_drop_quickly}) to deduce that $\Phi_2$ is small in the steps in between. For $(ii)$ we will define $X^s := X^s(\Phi_2^s)$ in a way to ensure that it is super-martingale at every step and also satisfies the bounded difference inequality.

More specifically, consider an arbitrary step $r \in[t - n \log^4 n, t]$ and partition the interval $(r, \tilde{t}]$ into 
\[
\mathcal{I}_1 := (r, r + \Delta],~ \mathcal{I}_2 = (r + \Delta, r + 2 \Delta],~\ldots~, ~\mathcal{I}_q := (r + (q-1) \Delta, \tilde{t}], 
\]
where $\Delta := \lceil n/\log^2 n \rceil$ and $q := \big\lceil \frac{\tilde{t}-r}{\Delta} \big\rceil \leq \big\lceil \frac{t + n \log^5 n - r}{\Delta} \big\rceil \leq 2\log^7 n$. In order to prove that $\Phi_2$ is at most $8n$ in every step in $(r, \tilde{t}]$, we will use our auxiliary lemmas (\cref{sec:super_exp_concentration_deterministic_claims,sec:super_exp_concentration_probabilistic_drop}) and Azuma's super-martingale concentration inequality~ (\cref{lem:azuma})
to establish that $\Phi_2$ is at most $7n$ at each of the steps $r + \Delta, r + 2\Delta, \ldots , r + (q-1)\Delta, \tilde{t}$. Finally, by using a smoothness argument (\cref{clm:phi_j_does_not_drop_quickly}), we will establish that $\Phi_2$ is at most $8n$ at all steps in $(r, \tilde{t}]$, which is the conclusion of the theorem.

For each interval $i \in [q]$, we define $X_i^{r + (i-1)\Delta} := \max\{\Phi_2^{r + (i-1)\Delta}, 5n + n^{1/3}\}$ and for any $s \in (r + (i-1)\Delta, r + i \Delta]$,
\[
X_i^{s} := \begin{cases}
\Phi_2^{s} & \text{if there exists }u \in [r + (i -1) \Delta, s)\text{ such that }\Phi_2^{u} \geq 5n, \\
5n+n^{1/3} & \text{otherwise}.
\end{cases}
\]
Note that whenever the first condition in the definition of $X_i^{s}$ is satisfied, it remains satisfied until the end of the interval, i.e., until step $r+i \cdot \Delta$. Our next aim is to establish the preconditions of Azuma's inequality (\cref{lem:azuma}) for $X_i^s$. 

For convenience, we define $t_1 := t - n \log^4 n$ and the event 
\[
\mathcal{Z}^{t_1} := \{ \Phi_1^{t_1} \leq 6n^{11} \}.
\]

\begin{lem}\label{lem:x_i_preconditions}
Fix any interval $\mathcal{I}_i$ for $i \in [q]$. Then for the random variables $X_i^{s}$, for any step $s \in (r+(i-1)\Delta, r+i\Delta]$, it holds that
\begin{align*}
    \Ex{ X_{i}^{s} \, \left| \, \mathcal{Z}^{t_1}, \mathfrak{F}^{s-1} \right.} \leq X_{i}^{s-1}, 
\end{align*}
and
\[
 \left( \left. \left| X_i^{s} - X_i^{s-1} \, \right| \,~ \right| \, \mathcal{Z}^{t_1}, \mathfrak{F}^{s-1} \right) \leq 2 n^{1/3}.
\]
\end{lem}

\begin{figure}
    \centering
    \scalebox{0.7}{
    \begin{tikzpicture}[
  IntersectionPoint/.style={circle, draw=black, very thick, fill=black!35!white, inner sep=0.05cm}
]

\definecolor{MyBlue}{HTML}{9DC3E6}
\definecolor{MyYellow}{HTML}{FFE699}
\definecolor{MyGreen}{HTML}{E2F0D9}
\definecolor{CaseTwoGreen}{HTML}{A9D18E}
\definecolor{MyRed}{HTML}{FF9F9F}
\definecolor{MyDarkRed}{HTML}{C00000}
\definecolor{MyOrange}{HTML}{C55A11}
\definecolor{MyGreenLine}{HTML}{A9D18E}

\def\xEnd{16}
\def\xLast{15.40}
\def\yLast{4.5}
\def\ThresholdOne{1}
\def\ThresholdTwo{2.5}
\def\yBottom{-0.8}

\node[anchor=south west, inner sep=0.15cm, fill=MyBlue, rectangle,minimum width=7.7cm] at (0, \yLast) {Case 2(a)};
\node[anchor=south west, inner sep=0.15cm, fill=MyYellow, rectangle,minimum width=\xLast cm - 9.3cm] at (9.3, \yLast) {Case 2(c)};
\node[anchor=south west, inner sep=0.15cm, fill=CaseTwoGreen, rectangle,minimum width=1.8cm] at (7.6, \yLast) {Case 2(b)};

\draw[dashed, thick] (0,\ThresholdOne) -- (\xLast, \ThresholdOne);
\draw[dashed, thick] (0, \ThresholdTwo) -- (\xLast, \ThresholdTwo);

\node[anchor=east] at (0, \ThresholdOne) {$5n$};
\node[anchor=east] at (0, \ThresholdTwo) {$5n + n^{1/3}$};

\node[anchor=west] at (\xEnd, \yBottom) {$s$};

\def\tA{6.42}
\def\tB{8.52}
\def\tC{10.41}
\def\tD{11.71}
\def\tM{13.51}
\def\tE{14.36}

\newcommand{\drawLine}[3]{
\draw[dashed, very thick, #3] (#1, \yBottom) -- (#1, \yLast);
\draw[very thick] (#1, \yBottom) -- (#1, \yBottom -0.2);
\node[anchor=north] at (#1, \yBottom -0.3) {#2};}

\newcommand{\drawPoint}[3]{
\drawLine{#1}{#2}{#3}
\node[IntersectionPoint] at (#1, \cn) {};}

\draw[very thick] (0, \yBottom) -- (0, \yBottom -0.2);
\node[anchor=north] at (0, \yBottom -0.3) {$t_0 + (i - 1) \cdot \Delta$};
\drawLine{8}{$\phantom{1}\sigma\phantom{1}$}{black!30!white};
\drawLine{9}{$\sigma + 1$}{black!30!white};

\draw[very thick] (\xLast, \yBottom) -- (\xLast, \yBottom -0.2);
\node[anchor=north] at (\xLast, \yBottom -0.3) {$t_0 + i \cdot \Delta$};

\draw[MyOrange, very thick] (0, \ThresholdTwo) -- (8, \ThresholdTwo) -- (9, 0.7) -- (10, 1.4) -- (11, 2.7) -- (12, 3.4) -- (13, 2.7) -- (14, 3.2) -- (15, 2.9) -- (\xLast, 3.5);

\draw[MyGreenLine, very thick,yshift=-0.05cm] 
   (0, 0.4) -- (1, 0.2) -- (2, 0.7) -- (3, 0.3) -- (4, 0.5) -- (5, -0.3) -- (6, 0.4) -- (7, 0.8)
   -- (8, 1.8) -- (9, 0.7) -- (10, 1.4) -- (11, 2.7) -- (12, 3.4) -- (13, 2.7) -- (14, 3.2) -- (15, 2.9) -- (\xLast, 3.5);

\draw[->, ultra thick] (0,\yBottom) -- (0, \yLast + 1);
\draw[->, ultra thick] (0,\yBottom) -- (\xEnd, \yBottom);

\begin{scope}[xshift=-1cm, yshift=-0.2cm, local bounding box=bx]
\node[anchor=west] at (\xLast, 0.5) {$X_i^s$};
\node[anchor=west] at (\xLast, 0.0) {$\Phi_2^s$};

\draw[ultra thick, MyOrange] (\xLast - 0.6, 0.55) -- (\xLast - 0.05, 0.55);
\draw[ultra thick, MyGreenLine] (\xLast - 0.6, 0.05) -- (\xLast - 0.05, 0.05);
\end{scope}

\node [fit=(bx),inner sep=0.1cm,draw] {};

\end{tikzpicture}
    }
    \caption{Visualization of three cases in the proof of \cref{lem:x_i_preconditions} for interval $i \in [q]$.}
    \label{fig:x_i_preconditions}
\end{figure}

\begin{proof}[Proof of \cref{lem:x_i_preconditions}]
Recall that by conditioning on $\mathcal{Z}^{t_1}$, the auxiliary process $\tilde{\mathcal{P}}_{t_1}$ satisfies $\{ \Phi_1^{s-1} \leq 6n^{12} \}$ and $\mathcal{K}^{s-1}$, for any $s \in (r + (i-1) \Delta, r + i \Delta]$, since $r + (i-1) \Delta \geq t_1$.

By precondition \cref{eq:phi_2_drop_precondition}, when $\{ \Phi_2^{s-1} \geq 4n \}$ also holds, we have that,\begin{align}
\Ex{\Phi_2^{s} \,\,\left|\,\, \mathcal{Z}^{t_1}, \mathfrak{F}^{s-1}, \Phi_2^{s-1} \geq 4n \right.} 
 & = \Ex{\Phi_2^{s} \,\,\left|\,\, \mathcal{Z}^{t_1}, \mathfrak{F}^{s-1}, \mathcal{K}^{s-1}, \Phi_2^{s-1} \geq 4n \right.} \notag \\
 & \leq \Phi_2^{s-1} \cdot \Big(1 - \frac{1}{n}\Big) + 2 \leq \Phi_2^{s-1} - 4n \cdot \frac{1}{n} + 2 \leq \Phi_2^{s-1}. \label{eq:drop}
\end{align}
 Further, by \cref{lem:psi_potential_poly_implies}~$(ii)$, we also have that
\begin{align} \label{eq:concentration_bounded_diff}
\left( \left. \left| \Phi_2^s - \Phi_2^{s-1} \right| ~\right|~ \mathcal{Z}^{t_1}, \mathfrak{F}^{s-1} \right)
  \leq n^{1/3}.
\end{align}

\textbf{Case 1 [$\Phi_2^{r + (i-1)\Delta} \geq 5n + n^{1/3}$]:} In this case $X_i^{s-1} = \Phi_2^{s-1}$ for all $s \in (r + (i-1) \Delta, r + i \Delta]$. By \cref{clm:phi_j_does_not_drop_quickly}, we also have $\Phi_2^{s-1} \geq 0.99 \cdot (5n + n^{1/3}) \geq 4n$ (as $\Delta \leq \lceil n/\log^2 n \rceil$) and the two statements follow by \cref{eq:drop} and \cref{eq:concentration_bounded_diff}.

\textbf{Case 2 [$\Phi_2^{r + (i-1)\Delta} < 5n + n^{1/3}$]:} Let $\sigma := \inf \{ u \geq r + (i-1) \Delta \colon \Phi_2^{u} \geq 5n\}$. We consider the following three cases (see \cref{fig:x_i_preconditions}):
\begin{itemize} \itemsep0pt
    \item \textbf{Case 2(a) [$s-1 < \sigma$]:} Here $X_i^{s} = X_i^{s-1} = 5n + n^{1/3}$, so the two statements hold trivially.
    \item \textbf{Case 2(b) [$s-1 = \sigma$]:} 
    We will first establish that
    \begin{align} \label{eq:phi_2_bound_case_2b}
    5n \leq \Phi_2^{s-1} \leq 5n + n^{1/3}.
    \end{align}
    The lower bound $\Phi_2^{s-1} \geq 5n$ follows by definition of $\sigma$. For the upper bound, we consider the following two cases. If $s - 1 = r + (i-1)\Delta$, then this follows by the assumption for Case 2. Otherwise, we have that $\Phi_2^{s-1} \geq 5n$ and $\Phi_2^{s-2} < 5n$. By \cref{eq:concentration_bounded_diff}, we obtain that  $\Phi_2^{s-1} \leq \Phi_2^{s-2} + n^{1/3} < 5n + n^{1/3}$.
    
    Next, by definition, $X_i^{s-1} = 5n + n^{1/3}$ and $X_i^s = \Phi_2^s$, so by \cref{eq:drop},
    \[\Ex{\left. X_i^{s} \, \right| \, \mathcal{Z}^{t_1}, \mathfrak{F}^{s-1}} = \Ex{\left. \Phi_2^{s}  \, \right| \, \mathcal{Z}^{t_1}, \mathfrak{F}^{s-1} }\leq \Phi_2^{s-1} < X_i^{s-1},
    \] 
    which establishes the first statement. For the second statement, by \cref{eq:concentration_bounded_diff} and \cref{eq:phi_2_bound_case_2b} we have \[
    \left|X_i^{s} - X_i^{s-1}\right| = \left|\Phi_2^{s} - 5n - n^{1/3} \right| \leq \left|\Phi_2^{s-1} - 5n - n^{1/3} \right| + \left|\Phi_2^{s} - \Phi_2^{s-1} \right| \leq 2n^{1/3}.
    \]
    using in the last inequality that \cref{eq:concentration_bounded_diff} and \cref{eq:phi_2_bound_case_2b}.
    \item \textbf{Case 2(c) [$s-1 > \sigma$]:} Here, $X_i^{s-1} = \Phi_2^{s-1}$ and $X_i^s = \Phi_2^s$. Since $\Phi_2^{\sigma} \geq 5n$, by~\cref{clm:phi_j_does_not_drop_quickly} (as $s-\sigma \leq \lceil n/\log^2 n \rceil$), we also have that
    \[
     \Phi_2^{s-1} \geq 0.99 \cdot \Phi_2^{\sigma} \geq 0.99 \cdot 5n \geq 4n,
    \]
    and thus by \cref{eq:drop}, the first statement follows. The second statement follows by \cref{eq:concentration_bounded_diff}.\qedhere
\end{itemize}
\end{proof}

Now we return to the proof of \cref{lem:concentration_for_auxiliary_process}. By~\cref{lem:x_i_preconditions}, we have verified that $X_{i}^{s}$ satisfies the preconditions of Azuma's inequality for any filtration $\mathfrak{F}^{s-1}$ where $\mathcal{Z}^{t_1}$ holds. So, applying \cref{lem:azuma} for $\lambda = \frac{n}{2\log^7 n}$, $N \leq \Delta$ and $D = 2n^{1/3}$, we get for any $i \in [q]$,
\begin{align*}
\Pro{\left. X_{i}^{r + i \Delta} \geq X_{i}^{r + (i-1) \Delta} + \lambda \,\right|\, \mathcal{Z}^{t_1}, \mathfrak{F}^r} &\leq \exp\left(-\frac{n^2/(4\log^{14}n)}{2 \cdot \Delta \cdot (4 n^{2/3})} \right) + 2P \leq 3P,
\end{align*}
since $\Delta \leq \lceil n/\log^2 n \rceil$ and $P \geq n^{-4}$. Taking the union bound over the at most $2\log^7 n$ intervals $i \in [q]$, it follows that
\begin{align*}
\Pro{ \left. \bigcup_{i \in [q]} \left\{ X_{i}^{r + i \Delta} \geq X_{1}^{r} + i \cdot \frac{n}{2\log^7 n} \right\} \,\right|\, \mathcal{Z}^{t_1}, \mathfrak{F}^r} \leq (2\log^7 n) \cdot 3P \leq \frac{1}{8} \cdot (\log^8 n) \cdot P.
\end{align*}
Next, conditional on $(\mathcal{Z}^{t_1}, \mathfrak{F}^r, \Phi_2^r \leq 6n)$, we have the following chain of inclusions:
\begin{align*}
 \bigcap_{i \in [q]} \left\{ X_{i}^{r + i \Delta} \leq X_1^r + i \cdot \frac{n}{2 \log^7 n} \right\}
 &\stackrel{(a)}{\leq}  \bigcap_{i \in [q]} \left\{ X_{i}^{r + i \Delta} \leq 6n + n \right\} \\
 &\stackrel{(b)}{=} 
  \bigcap_{i \in [q]} \left\{ \Phi_2^{r + i \Delta} \leq 7n \right\}
 \\
 &\stackrel{(c)}{\subseteq} \bigcap_{s \in [r,\tilde{t}]} \left\{ \Phi_2^{s} \leq \frac{7}{0.99} \cdot n \right\} \\
 &\stackrel{(d)}{\subseteq} \bigcap_{s \in [t,\tilde{t}]} \left\{ \Phi_2^{s} \leq 8n \right\},
\end{align*}
where $(a)$ holds since $i \leq q \leq 2 \log^7 n$ and $X_1^r \leq \max\{ \Phi_2^r, 5n + n^{1/3} \}$, $(b)$ holds since $\Phi_2^{r + i \Delta} \leq X_{i}^{r + i \Delta}$, $(c)$ holds by applying the smoothness argument of \cref{clm:phi_j_does_not_drop_quickly} to each interval $i \in [q]$, as the length of each interval is at most $\lceil n/\log^2 n \rceil$ and $(d)$ holds since $8 \geq \frac{7}{0.99}$ and $r \leq t$. This implies that
\begin{align} \label{eq:phi_t0_good_whp_conclusion}
\Pro{\left. \bigcap_{s \in [t, \tilde{t}]} \left\{ \Phi_2^{s} \leq 8n \right\} ~\right|~ \mathcal{Z}^{t_1}, \mathfrak{F}^{r}, \Phi_2^{r} \leq 6n } \geq 1 -  \frac{1}{8} \cdot (\log^8 n) \cdot P.
\end{align}
Next define $\tau := \inf\{ r \geq t -  n \log^4 n \colon \Phi_2^{r} \leq 6n \}$. By \cref{lem:recovery_phi_2},
\begin{align} \label{eq:concentration_stopping_time_recovery}
\Pro{\tau \leq t \,\left|\, \mathcal{Z}^{t_1} \right.} \geq \Pro{\tau \leq t \,\left|\, \mathfrak{F}^{t_1}, \mathcal{Z}^{t_1} \right.} \geq 1 - n^{-6}.
\end{align}
We get the conclusion for the $\tilde{\mathcal{P}}_{t_1}$ process, by combining this with \cref{eq:phi_t0_good_whp_conclusion} and \cref{eq:concentration_stopping_time_recovery},
\begin{align*}
& \Pro{\left. \bigcap_{s \in [t, \tilde{t}]} \left\{ \Phi_2^{s} \leq 8n \right\} \,\,\right|\,\, \mathcal{Z}^{t_1} } \\
 &  \geq \sum_{r = t - n \log^4 n}^{t} \Pro{\left. \bigcap_{s \in [t, \tilde{t}]} \left\{ \Phi_2^{s} \leq 8n \right\} ~\right|~ \mathcal{Z}^{t_1}, \mathfrak{F}^{r}, \Phi_2^{r} \leq 6n } \cdot \Pro{\left. \tau = r \,\right|\, \mathcal{Z}^{t_1}} \\
 & \!\!\!\!\stackrel{(\text{\ref{eq:phi_t0_good_whp_conclusion}})}{\geq} \left( 1 - \frac{1}{8} \cdot (\log^8 n) \cdot P \right) \cdot \Pro{\tau \leq t \,\left|\, \mathcal{Z}^{t_1} \right.} \\
 &\!\!\!\! \stackrel{(\text{\ref{eq:concentration_stopping_time_recovery}})}{\geq} \left( 1 - \frac{1}{8} \cdot (\log^8 n) \cdot P \right) \cdot \left( 1 - n^{-6}\right) \geq 1 - \frac{1}{4} \cdot (\log^8 n) \cdot P. \qedhere
\end{align*}
\end{proof}

We now return to the proof of \cref{thm:super_exp_potential_concentration} for the original process $\mathcal{P}$.

\begin{proof}[Proof of \cref{thm:super_exp_potential_concentration}]
Let $t_1 := t - n \log^4 n$. 
We start by showing that the processes $\mathcal{P}$ and $\tilde{\mathcal{P}}_{t_1}$ agree with high probability in the interval $[t_1, \tilde{t}]$ (and so at every step $s \in [t_1, \tilde{t}]$ we have $\Phi_2^s = \Phi_{\tilde{\mathcal{P}}_{t_1}, 2}^s$),
\begin{align} 
\Pro{\bigcap_{s \in [t_1, \tilde{t}]} \left\{ y^s = y_{\tilde{\mathcal{P}}_{t_1}}^s \right\} } 
  & \geq \Pro{\bigcap_{s \in [t - 2n \log^4 n, \tilde{t}]} \left\{ \Phi_1^{s} \leq 6n^{11} \right\} \cap \bigcap_{s \in [t - 2n \log^4 n, \tilde{t}]} \mathcal{K}^s } \notag \\ 
  & \stackrel{(a)}{\geq} \Pro{\bigcap_{s \in [t - 2n \log^4 n, \tilde{t}]} \left\{ \tilde{\Phi}_1^{s} \leq 6n^{11} \right\} \cap \mathcal{H}^{t - n \log^4 n} \cap \bigcap_{s \in [t - 2n \log^4 n, \tilde{t}]} \mathcal{K}^s } \notag \\ 
  & \stackrel{(b)}{\geq} 1 - n^2 \cdot n^{-11} - P \stackrel{(c)}{\geq} 1 - 2P, \label{eq:concentration_p_and_tilde_p_agree}
\end{align}
using in $(a)$ the definition of $\tilde{\Phi}_1$ in \cref{eq:tilde_phi_1_def}, in $(b)$ \cref{lem:newexists_s_st_ex_psi_linear}, Markov's inequality and union bound over $\tilde{t} - (t - 2n \log^4 n) \leq n^2$ steps and precondition \cref{eq:gap_and_condition_k_precondition} and in $(c)$ that $P \geq n^{-4}$. Note that this also implies that
\begin{align} \label{eq:concentration_starting_point_phi_1}
\Pro{\Phi_1^{t - n \log^4 n} \leq 6n^{11}} \geq 1 - 2P.
\end{align}

By \cref{lem:concentration_for_auxiliary_process}, for the auxiliary process $\tilde{\mathcal{P}}_{t_1}$, we have that
\[
\Pro{\left. \bigcap_{s \in [t, \tilde{t}]} \left\{ \Phi_{\tilde{\mathcal{P}}_{t_1}, 2}^{s} \leq 8n \right\} \,\,\right|\,\, \Phi_1^{t - n \log^4 n} \leq 6n^{11} } 
  \geq 1 - \frac{1}{4} \cdot (\log^8 n) \cdot P.
 \]
By combining with \cref{eq:concentration_starting_point_phi_1} (since by definition $\mathcal{P}$ and $\tilde{\mathcal{P}}_{t_1}$ agree in steps $s \leq t_1$) we get, 
\[
\Pro{\bigcap_{s \in [t, \tilde{t}]} \left\{ \Phi_{\tilde{\mathcal{P}}_{t_1}, 2}^{s} \leq 8n \right\}} 
  \geq \left( 1 - \frac{1}{4} \cdot (\log^8 n) \cdot P \right) \cdot \left( 1 - 2P\right) \geq 1 - \frac{1}{2} \cdot (\log^8 n) \cdot P.
\]
Finally, using $\Pro{A \cap B} \geq \Pro{A} - \Pro{\neg B}$ with \cref{eq:concentration_p_and_tilde_p_agree}, we conclude that $\mathcal{P}$ and $\tilde{\mathcal{P}}_{t_1}$ agree in all steps of the interval $[t_1, \tilde{t}]$ and so
\[
\Pro{\bigcap_{s \in [t, \tilde{t}]} \left\{ \Phi_2^{s} \leq 8n \right\}} 
  \geq  1 - \frac{1}{2} \cdot (\log^8 n) \cdot P  - 2P \geq 1 - (\log^8 n) \cdot P. \qedhere
\]
\end{proof}

\section{Upper Bound of \texorpdfstring{$\Oh(\frac{g}{\log g} \cdot \log \log n)$}{O(g/log g * log log n)} for  \texorpdfstring{$g$-\textsc{Adv-Comp}}{g-Adv-Comp} with \texorpdfstring{$g \leq \log n$}{g <= log n}: Layered Induction} \label{sec:g_adv_layered_induction}

In this section, we will complete the proof of the $\Oh\big(\frac{g}{\log g} \cdot \log \log n\big)$ upper bound for any $g \leq \log n$. We do this by first proving the key lemma for the drop of the super-exponential potentials (\cref{lem:gadv_precondition_satisfied}) and then complete the layered induction in \cref{thm:gadv_gap_bound}.

We begin by recalling from \cref{sec:g_adv_k_def} the definition for the number of steps in the layered induction $k$. Recall that $\alpha_1 := \frac{1}{6 \kappa}$, for $\kappa > 0$ the constant in \cref{lem:g_adv_good_gap_after_good_lambda} and $k := k(g) \geq 2$ is the unique integer satisfying,
\[
  (\alpha_1 \log n)^{1/k} \leq g < (\alpha_1 \log n)^{1/(k-1)}.
\]
Note that the integer $k$ always satisfies $k=\Oh(\log \log n)$, since $g > 1$. 

Further, recall that for any integer $1 \leq j \leq k-1$ we are using the super-exponential potentials $\Phi_j := \Phi_j(\phi_j, z_j)$ with smoothing parameters $\phi_j := \alpha_2 \cdot (\log n) \cdot g^{j - k}$ with $\alpha_2 := \frac{1}{84} \cdot \alpha_1$ and offsets $z_j := c_5 \cdot g + \big\lceil\frac{4}{\alpha_2} \big\rceil \cdot j \cdot g$ (for the constant $c_5 > 0$ defined in \cref{eq:g_adv_c5_def}), as well as the potentials $\Psi_j := \Psi_j(\psi_j, z_j)$ with smoothing parameters $\psi_j := \alpha_1 \cdot (\log n) \cdot g^{j-k}$.

We are now ready to prove the key lemma for the layered induction.

\begin{lem} \label{lem:gadv_precondition_satisfied}
Consider the \GAdvComp setting for any $g \in [\log(2C), \alpha_1 \log n]$, for the constant $C > 0$ defined in \cref{thm:g_adv_strong_base_case} and $\alpha_1 > 0$ defined in \cref{eq:g_adv_alpha_1_def}. Further, let $k := k(g) \geq 2$ be the unique integer such that $(\alpha_1 \log n)^{1/k} \leq g < (\alpha_1 \log n)^{1/(k-1)}$. Then, for any integer $1 \leq j \leq k - 1$ and any step $s \geq 0$, $\Phi_{j-1}^s \leq Cn$ implies $\mathcal{K}_{\psi_j, z_j}^s$. 
\end{lem}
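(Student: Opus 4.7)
The plan is to translate the event $\mathcal{K}_{\psi_j, z_j}^s$ into a combinatorial upper bound on the number of heavily-loaded bins and then extract this bound directly from the hypothesis $\Phi_{j-1}^s \leq Cn$. I fix any bin $i$ with $y_i^s \geq z_j - 1$ and first bound $q_i^s$ using the structure of the \GAdvComp setting. The adversary can steer the ball to $i$ only when $i$ is sampled together with another bin $k$ satisfying $y_k^s \geq y_i^s - g$: if instead $y_k^s < y_i^s - g$, then by the definition of the setting the correct comparison sends the ball to the strictly lighter bin $k$, and the adversary cannot intervene. Counting the ordered sample pairs containing $i$ yields
\[
q_i^s \leq \frac{2}{n^2} \cdot |A_i|, \qquad \text{where } A_i := \{k \in [n] : y_k^s \geq y_i^s - g\},
\]
and since $y_i^s - g \geq z_j - 1 - g$, we have $A_i \subseteq B := \{k \in [n] : y_k^s \geq z_j - 1 - g\}$.

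The next step is to bound $|B|$ via the hypothesis $\Phi_{j-1}^s \leq Cn$. For any $k \in B$, $(y_k^s - z_{j-1})^+ \geq z_j - z_{j-1} - 1 - g = \lceil 4/\alpha_2 \rceil g - 1 - g \geq 3g/\alpha_2$ (the last inequality holding for $g \geq 1$ and $\alpha_2 \leq 1/2$). Writing $\phi_{j-1}$ for the smoothing parameter of $\Phi_{j-1}$, each such bin contributes at least $\exp(3g\phi_{j-1}/\alpha_2)$ to $\Phi_{j-1}^s$, so $|B| \leq Cn \cdot \exp(-3g\phi_{j-1}/\alpha_2)$. Combining with the bound on $q_i^s$ reduces $\mathcal{K}_{\psi_j, z_j}^s$ (i.e., $q_i^s \leq \frac{1}{n} e^{-\psi_j}$) to the single inequality
\[
\log(2C) \leq \frac{3 g \phi_{j-1}}{\alpha_2} - \psi_j.
\]

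The remaining work is to verify this inequality, which splits into two cases according to the smoothing parameter of $\Phi_{j-1}$. When $j = 1$, $\Phi_0$ has constant smoothing $\phi_0 = \alpha_2$, so $3g\phi_0/\alpha_2 = 3g$, while $\psi_1 = \alpha_1(\log n) g^{1-k} \leq g$ using $g^k \geq \alpha_1 \log n$; hence the right-hand side is at least $2g$, which is $\geq \log(2C)$ by the assumption $g \geq \log(2C)$. When $j \geq 2$, $\phi_{j-1} = \alpha_2(\log n) g^{j-1-k}$ gives $3g\phi_{j-1}/\alpha_2 = 3(\log n) g^{j-k}$, while $\psi_j = 84\alpha_2(\log n) g^{j-k}$ (using $\alpha_1 = 84\alpha_2$); the right-hand side is thus $(3 - 84\alpha_2)(\log n) g^{j-k} \geq 2(\log n) g^{j-k} > 2/\alpha_1$, using $g^{k-1} < \alpha_1 \log n$ in the last step, which together with $\alpha_1 \leq 1/108$ easily dominates $\log(2C) = \Oh(1)$. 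The main obstacle is really just the bookkeeping in this case analysis; the design choices $\alpha_1/\alpha_2 = 84$ and the offset step $\lceil 4/\alpha_2 \rceil g$ are precisely tuned to produce the constant-factor slack $3 - 84\alpha_2 \geq 2$ that makes the induction step go through.
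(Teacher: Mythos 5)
Your proof is correct and follows essentially the same route as the paper's: bound $q_i^s$ by the density of bins above $z_j-1-g$, control that density via the contribution each such bin makes to $\Phi_{j-1}^s \leq Cn$, and then split into the cases $j=1$ and $j\geq 2$ to match the differently-shaped smoothing parameters $\phi_0=\alpha_2$ versus $\phi_{j-1}=\alpha_2(\log n)g^{j-1-k}$. The one cosmetic improvement over the paper is packaging the endgame into the single clean inequality $\log(2C) \leq 3g\phi_{j-1}/\alpha_2 - \psi_j$, but the substantive steps and the way the constants $\alpha_1/\alpha_2 = 84$ and the offset gap $\lceil 4/\alpha_2 \rceil g$ are exploited are identical.
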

\begin{proof}
Consider an arbitrary step $s$ with $\Phi_{j-1}^s \leq Cn$. Recall the definition of $\mathcal{K}_{\psi_j, z_j}^s$,
\[
\mathcal{K}_{\psi_j, z_j}^s(q^s) := \left\{ \forall i \in [n] \colon\  y_i^s \geq z_j - 1 \ \ \Rightarrow \ \ q_i^s \leq \frac{1}{n} \cdot e^{-\psi_j}\right\}.
\]
Thus, we want to bound the probability to allocate to a bin $i \in [n]$ with load $y_i^s \geq z_j - 1$. In order to do this, we will bound the number of bins $\ell \in [n]$ for which the adversary can reverse the comparison of $i$ and $\ell$, by bounding the ones with load $y_{\ell}^s \geq z_j - 1 - g$. Recall that $z_j := c_5 \cdot g + \big\lceil \frac{4}{\alpha_2} \big\rceil \cdot j \cdot g$. In the analysis below we make use of the following simple bound,
\begin{align} \label{eq:z_j_diff_bound}
z_j - 1 - g - z_{j-1} 
 = \left\lceil \frac{4}{\alpha_2} \right\rceil \cdot g - 1 - g
 \geq \frac{3}{\alpha_2} \cdot g,
\end{align}
using that $\alpha_2 \leq 1/2$. We consider the cases $j = 1$ and $j > 1$ separately as $\Phi_{0}$ has a slightly different form than $\Phi_{j-1}$ for $j > 1$.

\textbf{Case 1 [$j = 1$]:} The contribution of any bin $\ell \in [n]$ with load $y_{\ell}^s \geq z_1 - 1 - g$ to $\Phi_0^s$ is,
\begin{align*}
\Phi_{0,\ell}^s 
 = e^{\alpha_2 \cdot (y_{\ell}^s - z_0)^+}
 \geq e^{\alpha_2 \cdot (z_1 - 1 - g - z_0)^+}
 \stackrel{(\text{\ref{eq:z_j_diff_bound}})}{\geq} e^{3g}.
\end{align*}
Hence, when $\{ \Phi_{0}^s \leq Cn \}$ holds, the number of such bins is at most\[
Cn \cdot e^{-3g} = Cn \cdot e^{-g} \cdot e^{-2g} \leq \frac{n}{2} \cdot e^{-2g},
\]
using that $g \geq \log(2C)$. Hence, the probability of allocating a ball to a bin $i \in [n]$ with $y_i^s \geq z_1 - 1$ is at most that of sampling $i$ and a bin $\ell \in [n]$ with $y_{\ell}^s \geq z_1 - 1 - g$, i.e., at most \[
q_i^s \leq 2 \cdot \frac{1}{n} \cdot \frac{1}{2} \cdot e^{-2g} \leq \frac{1}{n} \cdot e^{-\psi_1},
\]
using that $\psi_1 := \alpha_1 \cdot (\log n) \cdot g^{1-k} \leq g$, as $g \geq (\alpha_1 \log n)^{1/k}$.

\textbf{Case 2 [$j > 1$]:} The contribution of any bin $\ell \in [n]$ with load $y_{\ell}^s \geq z_j - 1 - g$ to $\Phi_{j-1}^s$ is,
\begin{align*}
\Phi_{j-1, \ell}^s 
 = e^{\alpha_2 \cdot (\log n) \cdot g^{j-1-k} \cdot (y_{\ell}^s - z_{j-1})}
 \stackrel{(\text{\ref{eq:z_j_diff_bound}})}{\geq} e^{\alpha_2 \cdot (\log n) \cdot g^{j-1-k} \cdot (\frac{3}{\alpha_2} \cdot g)}
 = e^{ 3 \cdot (\log n) \cdot g^{j-k}}.
\end{align*}
Hence, when $\{ \Phi_{j-1}^{s} \leq Cn \}$ holds, the number of such bins is at most
\[
 Cn \cdot  e^{-3 \cdot (\log n) \cdot g^{j-k}} \leq \frac{n}{2} \cdot  e^{-2 \cdot (\log n) \cdot g^{j-k}},
\]
using that $
 e^{- (\log n) \cdot g^{j-k}} 
   \leq e^{-(\log n) \cdot g^{2-k}}
   \leq e^{-(\log n) \cdot g \cdot \frac{1}{\alpha_1 \log n}}
   \leq e^{-\log(2C)} = \frac{1}{2C}
$, since $j > 1$, $g \leq (\alpha_1 \log n)^{1/(k-1)}$, $g \geq \log(2C)$ and $\alpha_1 \leq 1$.

Hence, the probability of allocating a ball to a bin $i \in [n]$ with $y_i^s \geq z_j - 1$ is at most that of sampling $i$ and a bin $\ell \in [n]$ with $y_{\ell}^s \geq z_j - 1 - g$, i.e., at most \[
q_i^s \leq 2 \cdot \frac{1}{n} \cdot \frac{1}{2} \cdot e^{-2 \cdot (\log n) \cdot g^{j-k}} = \frac{1}{n} \cdot e^{-2 \cdot (\log n) \cdot g^{j-k}} \leq \frac{1}{n} \cdot e^{-\psi_j},
\]
recalling that $\psi_j := \alpha_1 \cdot (\log n) \cdot g^{j-k}$ for $\alpha_1 \leq 1$. 

Combining the two cases, we conclude that the event $\mathcal{K}_{\psi_j, z_j}^s$ holds at step $s$.
\end{proof}

\begin{thm} \label{thm:gadv_gap_bound} \label{thm:g_adv_strongest_bound}
Consider the \GAdvComp setting for any $g \in (1, \log n]$. Then, there exists a constant $\tilde{\kappa} > 0$ such that for any step $m \geq 0$, 
\[
\Pro{\Gap(m) \leq \tilde{\kappa} \cdot \frac{g}{\log g} \cdot \log \log n} \geq 1 - n^{-3}.
\]
\end{thm}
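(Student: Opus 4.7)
The plan is to deploy the layered induction outlined in \cref{sec:g_adv_upper_bound_for_small_g_outline}: combine the base case \cref{thm:g_adv_strong_base_case} with $k-1$ applications of the inductive step \cref{lem:new_inductive_step}, where $k=\Theta(\log\log n/\log g)$ is the integer satisfying $(\alpha_1\log n)^{1/k}\leq g<(\alpha_1\log n)^{1/(k-1)}$, and to conclude by reading off the gap bound from $\Phi_{k-1}^m=\Oh(n)$.

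Before starting the induction I would dispose of a few boundary regimes by passing to a slightly stronger \AdvComp setting. If $g\geq c_6\log n$ (the regime in which the base case \cref{thm:g_adv_strong_base_case} ceases to apply), the target bound $\Oh(\tfrac{g}{\log g}\log\log n)$ simplifies to $\Oh(g)$, which follows directly from \cref{thm:g_adv_g_plus_logn_gap}. If $g$ falls in the ``gap'' $[(\tfrac{\alpha_2}{4}\log n)^{1/k},(\alpha_1\log n)^{1/k})$ where \cref{lem:new_inductive_step} does not quite apply, I would analyze the stronger $\tilde g$-\AdvComp process for $\tilde g:=(\alpha_1\log n)^{1/k}$; as discussed below \cref{eq:tilde_g_stronger_assumption}, this costs only an $\Oh(1)$ factor. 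Finally, if $g<\log(2C)$ (with $C$ from \cref{thm:g_adv_strong_base_case}) I would work with $\tilde g:=\log(2C)$, losing at most a constant factor because the target bound is $\Omega(\log\log n)$ already. After these reductions I may assume the hypotheses of both \cref{lem:gadv_precondition_satisfied} and \cref{lem:new_inductive_step}.

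The base case \cref{thm:g_adv_strong_base_case} gives $\Phi_0^s\leq Cn$ on the whole interval $[m-n\log^5 n,m]$ with probability at least $1-n^{-4}$. The inductive step \cref{lem:new_inductive_step} is already packaged so that the hypothesis ``$\Phi_{j-1}^s=\Oh(n)$ on $[m-2n(k-j+1)\log^4 n,m]$ with probability at least $1-(\log n)^{8(j-1)}/n^4$'' propagates to ``$\Phi_j^s=\Oh(n)$ on $[m-2n(k-j)\log^4 n,m]$ with probability at least $1-(\log n)^{8j}/n^4$''. Internally, \cref{lem:gadv_precondition_satisfied} verifies that the event $\mathcal{K}_{\psi_j,z_j}^s$ holds throughout this interval, \cref{lem:general_drop_superexponential} then supplies the drop preconditions for both $\Psi_j$ and $\Phi_j$ (whose smoothing parameters satisfy $\psi_j=84\phi_j$ by construction), and these are fed into the concentration theorem \cref{thm:super_exp_potential_concentration} with $P=(\log n)^{8(j-1)}/n^4$. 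After iterating $k-1$ times we have $\Phi_{k-1}^m=\Oh(n)$ at the single step $m$, and inverting the definition of $\Phi_{k-1}$ yields $\Gap(m)\leq z_{k-1}+\Oh(\log n)/\phi_{k-1}=\Oh(kg)=\Oh(\tfrac{g}{\log g}\log\log n)$, with the constant $\tilde\kappa$ assembled from $c_5$, $\lceil 4/\alpha_2\rceil$, and $\alpha_2$.

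I do not anticipate a real obstacle at this stage, as the difficult analytical work was completed in the preceding three sections; what remains are two cheap numerical checks. First, the shrinkage of $2n\log^4 n$ per inductive level must not exceed the base-case window of $n\log^5 n$, i.e.\ $2(k-1)\log^4 n\leq\log^5 n$, which is automatic since $k=\Oh(\log\log n)$. Second, the error probability at the end is $(\log n)^{8(k-1)}/n^4$, and using $k-1=\Oh(\log\log n)$ gives $(\log n)^{8(k-1)}=e^{\Oh((\log\log n)^2)}=n^{o(1)}$, so this is at most $n^{-3}$ for $n$ large enough. This is precisely why \cref{thm:g_adv_strong_base_case} was proven with the generous $n^{-4}$ slack rather than $n^{-c}$ for arbitrary $c$; the trickiest conceptual step in the whole chain has been absorbed into \cref{thm:super_exp_potential_concentration}, so the final theorem is essentially a clean assembly of those pieces.
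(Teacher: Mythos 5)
Your proposal follows the paper's proof essentially verbatim: reduce the boundary regimes, invoke \cref{thm:g_adv_strong_base_case} as the base case, iterate \cref{lem:new_inductive_step} $k-1$ times (with \cref{lem:gadv_precondition_satisfied} and \cref{lem:general_drop_superexponential} feeding into \cref{thm:super_exp_potential_concentration}), and read off $\Gap(m)=\Oh(kg)$ from $\Phi_{k-1}^m=\Oh(n)$, with exactly the two numerical sanity checks you list. One small imprecision: your first boundary case uses the cutoff $g\geq c_6\log n$, but the induction step also requires $g<(\tfrac{\alpha_2}{4}\log n)^{1/(k-1)}$, which for $k=2$ means $g<\tfrac{\alpha_2}{4}\log n$; since $\tfrac{\alpha_2}{4}$ may be smaller than $c_6$, the correct cutoff for appealing to \cref{thm:g_adv_g_plus_logn_gap} is $g\geq\min\{\tfrac{\alpha_2}{4},c_6\}\cdot\log n$ as in the paper's proof.
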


\begin{proof}
Let $g_{\min} := \max\big\{ \log(2C), \frac{\alpha_2}{4\sqrt{\alpha_1}} \big\}$. We consider three cases depending on the value of $1 \leq g \leq \log n$:

\textbf{Case 1 [$\min\big\{\frac{\alpha_2}{4}, c_6 \big\} \cdot \log n \leq g \leq  \log n$]:} (for $c_6 > 0$ as defined in \cref{eq:g_adv_c6_def}) In this case, the $\Oh(\log n)$ upper bound follows by the $\Oh(g + \log n)$ upper bound of \cref{thm:g_adv_g_plus_logn_gap}.

\textbf{Case 2 [$g < g_{\min}$]:} For $1 \leq g \leq g_{\min}$, $g$ is constant and the $\Oh(\log \log n)$ upper bound on the gap will follow by considering the $\tilde{g}$-\AdvComp setting with $\tilde{g} = \lceil g_{\min} \rceil$. This setting encompasses \GAdvComp as $\tilde{g} \geq g$ and is analyzed in Case 3.

\textbf{Case 3 [$g_{\min} \leq g < \min\big\{\frac{\alpha_2}{4}, c_6\big\} \cdot \log n$]:} Recall that for any $g < \alpha_1 \log n$, we defined the unique integer $k := k(g) \geq 2$ satisfying,
\begin{align*}
  (\alpha_1 \log n)^{1/k} \leq g < (\alpha_1 \log n)^{1/(k-1)},
\end{align*}
and as explained in \cref{eq:tilde_g_stronger_assumption}, since $g < \frac{\alpha_2}{4} \log n$, we may assume that the following stronger condition holds
\[
(\alpha_1 \log n)^{1/k} \leq g < \left(\frac{\alpha_2}{4} \log n\right)^{1/(k-1)},
\]
where the inequalities are valid using \cref{clm:tilde_g_justification} and that $g \geq \frac{\alpha_2}{4\sqrt{\alpha_1}}$.

Let $t_j := m - 2n(k-j) \log^4 n$ for any integer $0 \leq j \leq k - 1$. We will proceed by induction on the potential functions $\Phi_j$. The base case follows by applying~\cref{thm:g_adv_strong_base_case} (using $g \leq c_6 \log n$ and $t_0 \geq m - n \log^5 n$),
\begin{align} \label{eq:gadv_proof_base_case}
\Pro{\bigcap_{t \in [t_0, m]} \left\{ \Phi_0^{t} \leq Cn \right\} }  
  \geq 1 - n^{-4}.
\end{align}

We will now prove the induction step.
\begin{lem}[\textbf{Induction step}]\label{lem:new_inductive_step}
Consider the \GAdvComp setting for any $g \geq \max\big\{ \log(2C), \frac{\alpha_2}{4\sqrt{\alpha_1}} \big\}$ satisfying $(\alpha_1 \log n)^{1/k} \leq g < (\frac{\alpha_2}{4} \log n)^{1/(k-1)}$ for some integer $k \geq 2$, where $C > 0$ is the constant defined in \cref{thm:g_adv_strong_base_case} and $\alpha_1, \alpha_2 > 0$ are defined in \cref{eq:g_adv_alpha_1_def} and \cref{eq:g_adv_alpha_2_def}. Then, for any integer $1 \leq j \leq k-1$ and any step $m \geq 0$, if it holds that
\[
  \Pro{\bigcap_{t \in [t_{j-1}, m]} \{ \Phi_{j-1}^{t} \leq Cn\} } \geq 1 - \frac{(\log n)^{8(j-1)}}{n^4},
\]
then it also follows that
\[
  \Pro{ \bigcap_{t \in [t_j, m]} \{ \Phi_{j}^{t} \leq Cn \} } \geq 1 - \frac{(\log n)^{8j}}{n^4}.
\]
\end{lem}
\begin{proof}[Proof of \cref{lem:new_inductive_step}]
Consider an arbitrary integer $j$ with $1 \leq j \leq k - 1$ and assume that
\[
  \Pro{\bigcap_{t \in [t_{j-1},m]} \{ \Phi_{j-1}^{t} \leq Cn\} } \geq 1 - \frac{(\log n)^{8(j-1)}}{n^4}.
\]
By \cref{lem:gadv_precondition_satisfied}, we have that $\{ \Phi_{j-1}^t \leq Cn \}$ implies $\mathcal{K}_{\psi_j, z_j}^t$. Furthermore, $\{ \Phi_{j-1}^{t_{j-1}} \leq Cn \}$ also implies $\{ \Gap(t_{j-1}) \leq \log^2 n \}$. Hence, it also holds that
\begin{align} \label{eq:precondition_3_application}
\Pro{\left\{ \Gap(t_{j-1}) \leq \log^2 n \right\} \cap \bigcap_{t \in [t_{j-1}, m]} \mathcal{K}_{\psi_j, z_j}^t} \geq 1 - \frac{(\log n)^{8(j-1)}}{n^4}.
\end{align}
Applying \cref{lem:general_drop_superexponential} for the potentials $\Phi_j$ and $\Psi_j$, since $\psi_j \geq \phi_j = \alpha_2 \cdot (\log n) \cdot g^{j-k} \geq \alpha_2 \cdot (\log n) \cdot g^{1-k} \geq 4$ (as $g < (\frac{\alpha_2}{4} \log n)^{1/(k-1)}$), for any step $t \geq 0$ it holds that
\begin{align} \label{eq:phi_1_drop_precondition_application}
\Ex{\left. \Phi_j^{t+1} \,\right|\, \mathfrak{F}^t, \mathcal{K}_{\psi_j, z_j}^t } \leq \Phi_j^{t} \cdot \Big(1 - \frac{1}{n} \Big) + 2,
\end{align}
and 
\begin{align} \label{eq:phi_2_drop_precondition_application}
\Ex{\left. \Psi_j^{t+1} \,\right|\, \mathfrak{F}^t, \mathcal{K}_{\psi_j, z_j}^t } \leq \Psi_j^{t} \cdot \Big(1 - \frac{1}{n} \Big) + 2.
\end{align}
Hence, by \cref{eq:phi_1_drop_precondition_application}, \cref{eq:phi_2_drop_precondition_application} and \cref{eq:precondition_3_application}, the preconditions of \cref{thm:super_exp_potential_concentration} are satisfied for starting step $t_j := m - 2n(k - j) \cdot \log^4 n$, $P := (\log n)^{8(j-1)}/n^4$ and terminating step at $\tilde{t} := m$, and so we conclude (since $C \geq 8$) that
\[
\Pro{\bigcap_{t \in [t_j, m]} \left\{ \Phi_j^t \leq Cn \right\}} \geq 1 - \frac{(\log n)^{8j}}{n^4}. \qedhere
\]
\end{proof}

Returning to the proof of \cref{thm:g_adv_strongest_bound}, inductively applying \cref{lem:new_inductive_step} for $k-1$ times and using \cref{eq:gadv_proof_base_case} as a base case, we get that
\[
\Pro{\bigcap_{t \in [t_{k-1}, m]} \left\{\Phi_{k-1}^t \leq Cn \right\}} \geq 1 - \frac{(\log n)^{8(k-1)}}{n^4} \geq 1 - n^{-3},
\]
using in the last step that $k = \Oh(\log \log n)$. When $\{ \Phi_{k-1}^m \leq Cn \}$ occurs, the gap at step $m$ cannot be more than $z_k := c_5g + \big\lceil \frac{4}{\alpha_2} \big\rceil \cdot k \cdot g$, since otherwise we would get a contradiction
\begin{align*}
Cn \geq \Phi_{k-1}^{m} & \geq \exp \Big(\alpha_2 \cdot (\log n) \cdot g^{(k-1)-k} \cdot (z_k - z_{k-1} ) \Big) \\
 & = 
 \exp\Big(\alpha_2 \cdot (\log n) \cdot g^{-1} \cdot \Big( \Big\lceil \frac{4}{\alpha_2} \Big\rceil \cdot g \Big)\Big) \geq  \exp(4 \cdot \log n) =n^4.
\end{align*}
Hence, $\Gap(m) \leq z_k = c_5 g + \big\lceil \frac{4}{\alpha_2} \big\rceil \cdot k \cdot g$. 

By the assumption on $g$, we have
\[
g < \left(\frac{\alpha_2}{4} \log n\right)^{1/(k-1)} ~ \Rightarrow ~ \log g < \frac{1}{k-1} \cdot \log \left(\frac{\alpha_2}{4} \log n\right) ~\Rightarrow ~  k < 1 + \frac{\log \left(\frac{\alpha_2}{4} \log n\right)}{\log g},
\]
using that $g > 1$. Since $\alpha_2 > 0$ and $c_5 > 0$ are constants, we conclude that there exists a constant $\tilde{\kappa} > 0$ such that
\[
\Pro{\Gap(m) \leq \tilde{\kappa} \cdot \frac{g}{\log g} \cdot \log \log n } \geq 1 - n^{-3}. \qedhere
\]
\end{proof}

In the above, we actually proved the following slightly stronger corollary, which we will use in \cref{sec:upper_bounds_for_delay_settings}. 
This is based on the insight that, in order to prove the above gap bound at step $m$, the only assumption on the steps $[0,t_0)$, with $t_0:=m - n \log^5 n - \Delta_r$, is that the coarse gap bound of $\Oh(g \log (ng))$ must hold at step $t_0$ (see, e.g., \cref{lem:g_adv_recovery} and \cref{thm:g_adv_strong_base_case}). While during the interval $[t_0,m]$ the process is required to be an instance of \GAdvComp, in the interval $[0,t_0)$ the process can be arbitrary as long as the coarse gap bound holds at step $t_0$.

\begin{cor} \label{cor:upper_bound_tight}
Consider any $g \in (1, \log n]$, $m \geq 0$, $t_0 := m - n \log^5n - \Delta_r$ for $\Delta_r := \Delta_r(g) > 0$ as defined in \cref{lem:g_adv_recovery} and $c_3 > 0$ the constant in \cref{thm:g_adv_warm_up_gap}. Further, consider a process which, in steps $[t_0, m]$, is an instance of \GAdvComp setting. Then, there exists a constant $\tilde{\kappa} > 0$, such that 
\[
\Pro{\left. \Gap(m) \leq \tilde{\kappa} \cdot \frac{g}{\log g} \cdot \log \log n \, \right\vert \, \mathfrak{F}^{t_0}, \, \max_{i \in [n]} \left| y_i^{t_0} \right| \leq c_3 g \log(ng)} \geq 1 - n^{-3}.
\]
\end{cor}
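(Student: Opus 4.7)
My plan is to follow the structure of the proof of \cref{thm:g_adv_strongest_bound}, which decomposes into the base case (\cref{thm:g_adv_strong_base_case}) establishing $\Phi_0$-concentration over $[m-n\log^5 n,m]$ and a layered-induction step (\cref{lem:new_inductive_step}) amplifying this successively to $\Phi_j$-concentration for $j=1,\dots,k-1$. Each application of the induction step is stated over an interval $[t_{j-1},m]\subseteq [t_0,m]$ and only invokes the super-exponential drop inequality (\cref{lem:general_drop_superexponential}) and the concentration theorem (\cref{thm:super_exp_potential_concentration}); both require the process to be \GAdvComp only inside the relevant window, so the induction step carries over verbatim. It therefore suffices to establish $\bigcap_{s\in[m-n\log^5 n,m]}\{\Phi_0^s\leq Cn\}$ with probability at least $1-n^{-4}$, conditional on $\mathfrak{F}^{t_0}$ and $\max_{i\in [n]}|y_i^{t_0}|\leq c_3 g\log(ng)$.

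\textbf{Adapting the base case.} In the original proof of \cref{thm:g_adv_strong_base_case} the event $\mathcal{Z}^{r_0}$ with $r_0=t_0$, which requires both the coarse bound $c_3 g\log(ng)$ \emph{and} the finer bound $\kappa(g+\log n)$ on $\max_{i\in[n]}|y_i^{r_0}|$, is established from \cref{thm:g_adv_warm_up_gap}~$(iii)$ together with \cref{thm:g_adv_g_plus_logn_gap}; both of these require \GAdvComp starting from step $0$ and are thus unavailable here, as only the coarse half is granted by the conditioning. I would replace the finer bound by running the recovery phase directly within \GAdvComp: applying \cref{lem:g_adv_recovery}~$(i)$ (which only needs the coarse starting bound) in $[t_0,t_0+\Delta_r]$ produces, with probability at least $1-3(ng)^{-12}$, a stopping step $t_1\in[t_0,t_0+\Delta_r]$ with $\Lambda^{t_1}\leq cn$, and hence $\max_{i\in[n]}|y_i^{t_1}|\leq\kappa(g+\log n)$. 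I would then re-anchor the modified process as $\mathcal{Q}_{g,t_1}$ (instead of $\mathcal{Q}_{g,t_0}$), so Property~2 and the bounded-difference estimate of \cref{lem:g_adv_v_tilde_lipschitz} hold throughout $[t_1,\infty)$, and invoke the strong stabilization of \cref{lem:g_adv_good_v_every_ng_steps} to conclude that $V^s\leq e^{2\alpha_1}cn$ at least once every $\tilde{\Delta}_s$ steps on $[t_1,t_1+2n\log^5 n]$ with probability $1-n^{-9}$. Because $t_1\leq t_0+\Delta_r=m-n\log^5 n$, the target window $[m-n\log^5 n,m]$ is contained in $[t_1,t_1+2n\log^5 n]$, and \cref{lem:g_adv_v_psi0_relation} combined with $\Phi_0\leq\Psi_0$ then yields $\Phi_0^s\leq Cn$ on the target window with the claimed probability; feeding this into \cref{lem:new_inductive_step} gives $\Phi_{k-1}^m\leq Cn$ with probability at least $1-(\log n)^{8(k-1)}/n^4\geq 1-n^{-3}$, and unpacking the definition of $\Phi_{k-1}$ bounds $\Gap(m)\leq z_k=c_5 g+\lceil 4/\alpha_2\rceil k g=\Oh\bigl(\tfrac{g}{\log g}\log\log n\bigr)$ via $k=\Oh\bigl(\tfrac{\log\log n}{\log g}\bigr)$.

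\textbf{Main obstacle.} The subtlety is that $t_1$ is a random step, so the modified process $\mathcal{Q}_{g,t_1}$ is anchored at a stopping time, and the high-probability bounds must be tracked through this random anchor. The bookkeeping is handled by the standard decomposition $\Pr[\mathcal{A}]\geq\sum_{t_1}\Pr[\mathcal{A}\mid\tau=t_1]\cdot\Pr[\tau=t_1]$ used in the proof of \cref{lem:g_adv_base_case_for_modified_process}, together with a union bound against the recovery failure event. Once this is in place, every other ingredient --- the quadratic-potential drop (\cref{lem:g_adv_quadratic}), the ``many good steps'' lemma (\cref{lem:g_adv_many_good_steps}), the drop of $V$ in good steps (\cref{lem:g_adv_good_step_drop}), and the Azuma concentration of $\tilde V$ --- applies unchanged to the re-anchored modified process, yielding the stated probability.
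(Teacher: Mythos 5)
Your proposal is correct and takes essentially the same route the paper intends; the paper itself gives only a terse remark (``the only assumption on steps $[0,t_0)$ is the coarse gap bound at $t_0$, see, e.g., \cref{lem:g_adv_recovery} and \cref{thm:g_adv_strong_base_case}'') rather than a written-out proof, and your sketch is essentially the careful expansion of that remark. You correctly pinpoint the one genuinely subtle step: the published proof of \cref{thm:g_adv_strong_base_case} obtains $\mathcal{Z}^{r_0}$ from \cref{thm:g_adv_warm_up_gap}~$(iii)$ and \cref{thm:g_adv_g_plus_logn_gap}, both of which presuppose \GAdvComp from step~$0$, and you replace that by running \cref{lem:g_adv_recovery}~$(i)$ forward from $t_0$ (for which the conditioned-on coarse bound suffices) and re-anchoring $\mathcal{Q}_{g,\cdot}$ at the random recovery step $t_1$, with the stopping-time decomposition handling the bookkeeping exactly as in \cref{lem:g_adv_base_case_for_modified_process}. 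Two minor points worth naming when writing this up: since $V\le\Lambda$, $\Lambda^{t_1}\le cn$ already gives $V^{t_1}\le cn$, so you can skip the $\Lambda$-based recovery inside \cref{lem:g_adv_base_case_for_modified_process} and jump straight to the strong stabilization; and for $g=\Theta(\log n)$ (Case~1 of \cref{thm:g_adv_strongest_bound}) the layered induction is not invoked at all --- one instead applies \cref{lem:g_adv_good_gap_after_good_lambda} from $t_1$, which likewise only needs \GAdvComp on $[t_1,m]$.
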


\section{Upper Bounds for Probabilistic Noise and Delay Settings}\label{sec:g_adv_delay_noise_settings}

In this section, we present upper bounds for the probabilistic noise settings and the delay settings. These upper bounds follow from the results (and analysis) for \GAdvComp in Sections~\ref{sec:g_adv_warm_up}-\ref{sec:g_adv_layered_induction}.

\subsection{An Upper Bound for the Probabilistic Noise Setting}

\begin{pro}\label{pro:prob_upper_bound}
Consider the $\rho$\textsc{-Noisy-Comp} setting with $\rho(\delta)$ being any non-decreasing function in $\delta$ with $\lim_{\delta \rightarrow \infty} \rho(\delta) =1$. For any $n \in \mathbb{N}$, define $\delta^*:=\delta^*(n)=\min\{ \delta \geq 1 \colon \rho(\delta) \geq 1 - n^{-4}\}$. Then, there exists a constant $\kappa > 0$, such that for any step $m \geq 0$,
\begin{align*}
    \Pro{\max_{i \in [n]} \left|y_i^m\right| \leq \kappa \cdot \delta^* \log ( n \delta^* ) } \geq 1-n^{-3}. 
\end{align*}
\end{pro}
Note that for the \SigmaNoisyLoad process where $\rho(\delta)$ has
Gaussian tails (see \cref{eq:gaussian}), we have $\delta^*=\Oh(\sigma \cdot \sqrt{\log n})$.
The choice of $\delta^*$ in \cref{pro:prob_upper_bound}~ensures that in most steps, all possible comparisons among bins with load difference greater than $\delta^*$ will be correct, implying that the process satisfies the condition of \GAdvComp with $g=\delta^*$.

\begin{proof}
We will analyze the hyperbolic cosine potential $\Gamma := \Gamma(\gamma)$ as defined in \cref{eq:gamma_def}, with $\gamma := - \log(1-\frac{1}{8 \cdot 48})/\delta^*$.
We first state a trivial upper bound on
$
 \Ex{ \left. \Gamma^{t+1} \, \right| \, \mathfrak{F}^{t}}
$
in terms of $\Gamma^t$, which holds deterministically for all steps $t \geq 0$ (cf.~\cref{lem:g_adv_v_smoothness}~$(i)$),
\begin{align*}
 \Gamma^{t+1} 
= \sum_{i=1}^{n} \Gamma_i^{t+1} \leq 
\sum_{i=1}^n e^{\gamma} \cdot \Gamma_i^{t} = e^{\gamma} \cdot \Gamma^{t} \leq (1 + 2\gamma) \cdot \Gamma^t,
\end{align*}
using that $e^{\gamma} \leq 1 + 2\gamma$ for $0 < \gamma \leq 1$.

We will now provide a better upper bound, exploiting that with high probability all possible comparisons between bins that differ by at least $\delta^*$ will be correct. Again, consider any step $t\geq 0$. Let us assume that in step $t$, we  first determine the outcome of the load comparisons among all $n^2$ possible bin pairs. Only then we sample two bins, and allocate the ball following the pre-determined outcome of the load comparison. For any two bins $i_1, i_2 \in [n]$ with $|x_{i_1}^{t}-x_{i_2}^{t}| \geq \delta^*$, we have
\[
 \rho(|x_{i_1}^{t}-x_{i_2}^{t}|) \geq \rho(\delta^*) \geq 1-n^{-4}.
\]
Hence by the union bound over all $n^2$ pairs, we can conclude that with probability at least $1-n^{-2}$, all comparisons among bin pairs with load difference at least $\delta^*$ are correct. Let us denote this event by $\mathcal{G}^{t}$, so $\Pro{ \mathcal{G}^{t} } \geq 1-n^{-2}$ for every step $t \geq 0$. Conditional on $\mathcal{G}^t$, the process in step $t$ is an instance of \GAdvComp with $g=\delta^*$. Therefore, by \cref{thm:g_adv_warm_up_gap}~$(i)$, there exists a constant $c_1 \geq 1$, so that the hyperbolic cosine potential satisfies
\[
 \Ex{ \left. \Gamma^{t+1} \, \right| \, \mathfrak{F}^t, \mathcal{G}^t } \leq \Gamma^{t} \cdot \left(1 - \frac{\gamma}{96 n} \right) + c_1.
\]

Now combining our two upper bounds on $\Ex{ \left. \Gamma^{t+1} \, \right| \, \mathfrak{F}^t}$, we conclude 
\begin{align*}
\Ex{ \left. \Gamma^{t+1} \, \right| \, \mathfrak{F}^{t}} &\leq 
\Ex{ \left. \Gamma^{t+1} \, \right| \, \mathfrak{F}^t, \mathcal{G}^t } \cdot \Pro { \mathcal{G}^t }
+ \Ex{\left. \Gamma^{t+1} \, \right| \, \mathfrak{F}^t} \cdot (1-\Pro { \mathcal{G}^t }) \\
&\stackrel{(a)}{\leq} 
\Ex{ \left. \Gamma^{t+1} \, \right| \, \mathfrak{F}^t, \mathcal{G}^t } \cdot (1-n^{-2})
+ \Ex{\left. \Gamma^{t+1} \, \right| \, \mathfrak{F}^t} \cdot n^{-2} \\
&\leq \left( \Gamma^{t} \cdot \left(1 - \frac{\gamma}{96n} \right)  + c_1 \right) \cdot (1 - n^{-2}) + (1 + 2\gamma) \cdot \Gamma^{t} \cdot n^{-2} \\
&\leq \Gamma^{t} \cdot \left(1 - \frac{\gamma}{96n}\right) + c_1 
 - \Gamma^{t} \cdot n^{-2} + \Gamma^t \cdot \frac{\gamma}{96n} \cdot n^{-2} + \Gamma^{t} \cdot n^{-2} + \Gamma^t \cdot 2\gamma \cdot n^{-2} \\
&\leq \Gamma^{t} \cdot \left(1 - \frac{\gamma}{100 n} \right)  + c_1,
\end{align*}
using in $(a)$ that $\Pro { \mathcal{G}^t } \geq 1 - n^{-2}$. So, using \cref{lem:geometric_arithmetic}~$(ii)$ (with $a = 1 - \frac{\gamma}{100n}$ and $b = c_1$) since $\Gamma^{0}=2n \leq \frac{100 c_1}{\gamma} \cdot n$, we get for any $t \geq 0$,
\[
\Ex{\Gamma^t} \leq \frac{100c_1}{\gamma} \cdot n.
\]
Using Markov's inequality yields,
$
 \Pro{ \Gamma^t > \frac{100 c_1}{\gamma} \cdot n^4} \leq n^{-3}.
$
Now the claim follows, since the event $\left\{ \Gamma^t \leq \frac{100 c_1}{\gamma} \cdot n^4 \right\}$ for $\gamma = \Theta\big(\frac{1}{\delta^*} \big)$, implies that %
\[
\max_{i \in [n]} \left| y_i^t \right| \leq \frac{1}{\gamma} \log \left(\frac{100 c_1}{\gamma} \cdot n^4 \right) = \Oh( \delta^* \cdot \log(n \delta^*) ). \qedhere
\]
\end{proof}

\subsection{Upper Bounds for Delay Settings} \label{sec:upper_bounds_for_delay_settings}

In this section, we will prove tight upper bounds for the \TwoChoice process in the \TauDelay setting and the \BBatch process for a range of values for the delay parameter $\tau$ and batch size $b$. In particular for $\tau = n$, we show:

\begin{thm} \label{thm:batching}
Consider the \TauDelay setting with $\tau = n$. Then, there exists a constant $\kappa > 0$ such that for any step $m \geq 0$,
\[
 \Pro{\Gap(m) \leq \kappa \cdot \frac{\log n}{\log \log n}} \geq 1 - n^{-2}.
\]
\end{thm}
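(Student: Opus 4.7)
The plan is to reduce the \TauDelay setting with $\tau = n$ to the \GAdvComp setting with parameter $g = \Theta(\log n / \log \log n)$, and then invoke \cref{cor:upper_bound_tight} (a slight strengthening of \cref{thm:g_adv_strongest_bound}). The key observation is that in any window of $\tau = n$ consecutive steps, the load of any single bin can increase by at most $\Oh(\log n / \log \log n)$ with high probability, so the outdated load values the adversary is allowed to report will necessarily be close to the current true loads.

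First, I would establish a uniform sampling concentration bound. In any $n$ consecutive steps, the number of times a fixed bin $i$ is among the two samples is stochastically dominated by $\mathrm{Binomial}(2n, 1/n)$, which has mean $2$. A Chernoff-type bound yields $\Pro{\mathrm{Binomial}(2n,1/n) \geq g} \leq n^{-\Omega(c)}$ for $g := c \cdot \log n/\log \log n$ and any sufficiently large constant $c$, since $g \log g = \Theta(c \log n)$. A union bound over all $n$ bins and all $\Oh(n \log^5 n)$ windows inside the interval $[t_0, m]$, where $t_0 := m - n \log^5 n - \Delta_r(g)$ is the starting step of \cref{cor:upper_bound_tight}, then gives that with probability at least $1 - n^{-3}$, no bin is sampled more than $g$ times in any such window. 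Call this event $\mathcal{E}$.

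On $\mathcal{E}$, at every step $t \in [t_0, m]$ and for every bin $i$ we have $x_i^{t-1} - x_i^{t-\tau} \leq g$, so the adversary's report $\tilde{x}_i^{t-1} \in [x_i^{t-\tau}, x_i^{t-1}]$ lies in $[x_i^{t-1}-g, x_i^{t-1}]$. This is a special case of the $g$-\textsc{Adv-Load} adversary, and hence can be simulated by a $(2g)$-\AdvComp instance. Because this reduction holds only conditionally on $\mathcal{E}$, I would make it rigorous by introducing a coupled auxiliary process $\tilde{\mathcal{P}}$ in the spirit of the modified process $\mathcal{Q}_{g, r_0}$ of \cref{sec:g_adv_modified_process}: $\tilde{\mathcal{P}}$ follows the \TauDelay process as long as no bin is oversampled in any sliding window, and otherwise reverts to allocating to the currently least loaded bin. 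Then $\tilde{\mathcal{P}}$ is \emph{deterministically} an instance of $(2g)$-\AdvComp on $[t_0, m]$, and agrees with the original process on the event $\mathcal{E}$, which holds \Whp

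Finally, to apply \cref{cor:upper_bound_tight}, I still need the coarse bound $\max_{i \in [n]} |y_i^{t_0}| \leq c_3 \cdot (2g) \cdot \log(n\cdot 2g)$ at step $t_0$. This follows from an earlier application of \cref{thm:g_adv_warm_up_gap}~$(iii)$ to the same auxiliary process $\tilde{\mathcal{P}}$, which guarantees this bound at every fixed step with probability $1 - (ng)^{-14}$. With this starting condition in place, \cref{cor:upper_bound_tight} yields
\[
\Gap(m) = \Oh\Big( \tfrac{2g}{\log(2g)} \cdot \log \log n \Big) = \Oh\Big( \tfrac{\log n}{\log \log n} \Big),
\]
using $\log g = \Theta(\log \log n)$. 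The main obstacle I expect is not the calculation but packaging the non-deterministic reduction cleanly: the \GAdvComp condition only holds on $\mathcal{E}$, and the concentration of samples requires a union bound over a polynomial number of windows, which is why taking $g$ a constant factor larger than the mean gap of $\Theta(\log n/\log\log n)$ is essential for the Chernoff bound to be strong enough to survive the union bound.
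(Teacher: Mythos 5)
Your high-level approach is in the same spirit as the paper's (\cref{lem:key_batching}): couple the \TauDelay process to a low-parameter \AdvComp instance via a sampling concentration bound and an auxiliary "stopped" process, then feed the result into \cref{cor:upper_bound_tight}. The sampling concentration calculation and the auxiliary-process construction are both fine. However, there is a genuine gap in how you establish the starting condition $\max_{i\in[n]}|y_i^{t_0}| \le c_3 \cdot (2g)\cdot\log(2ng)$ at step $t_0$.

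You want to get this coarse bound from \cref{thm:g_adv_warm_up_gap}~$(iii)$ applied to the auxiliary process $\tilde{\mathcal{P}}$ with parameter $2g=\Theta(\log n/\log\log n)$. For that, $\tilde{\mathcal{P}}$ must be a $(2g)$-\AdvComp instance on the \emph{entire} interval $[0,t_0]$. But then the stopping time that defines $\tilde{\mathcal{P}}$ must be triggered by any oversampled sliding window in $[0,m]$, and agreement of $\tilde{\mathcal{P}}$ with the true \TauDelay process now requires a union bound over the $\Theta(m)$ windows in $[0,m]$, not just the $\Oh(n\log^5 n)$ windows in $[t_0,m]$ where you defined $\mathcal{E}$. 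For $m$ super-polynomial in $n$ (which the theorem must cover, as it holds for all $m\geq 0$) this union bound fails. Conversely, if you only enforce the $(2g)$-\AdvComp condition from step $t_0$ onward (so that $\mathcal{E}$ over $[t_0,m]$ is the right event), then on $[0,t_0)$ the process is merely $(n-1)$-\AdvComp, and \cref{thm:g_adv_warm_up_gap}~$(iii)$ only yields $\max_i|y_i^{t_0}| = \Oh(n\log^2 n)$ --- far too weak for \cref{cor:upper_bound_tight}, whose precondition demands $\Oh(g\log(ng)) = \Oh(\log^2 n / \log\log n)$.

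This is exactly the gap the paper fills with the middle phase of \cref{lem:key_batching}: starting from the crude $(n-1)$-\AdvComp bound of $\Oh(n\log^2 n)$ at $t_0 := m-n^3$, it runs the hyperbolic-cosine drop (\cref{thm:g_adv_warm_up_gap}~$(i)$) with smoothing $\gamma=\Theta(1/g_2)$ for the coupled $g_2$-\AdvComp instance over the polynomial-length window $(t_0, t_1]$ with $t_1 := m - n\log^5 n - \Delta_r$, driving the gap down to $\Oh(g_2\log(ng_2))$ at $t_1$; only then does \cref{cor:upper_bound_tight} kick in. You need an analogous recovery argument between the crude starting bound and the precondition of \cref{cor:upper_bound_tight} --- without it the reduction does not close.
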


This implies the same upper bound for the \BBatch process for $b = n$, since it is an instance of the \TauDelay setting with $\tau = n$. This bound improves the $\Oh(\log n)$ bound in~\cite[Theorem 1]{BCEFN12} and can be easily seen to be asymptotically tight due to the $\Omega\big({\scriptstyle \frac{\log n}{\log \log n}}\big)$ lower bound for \OneChoice for with $n$ balls (\cref{obs:two_choice_batching_lower_bound_small_b}).

We will analyze \TauDelay using a more general approach which also works for other choices of the delay parameter $\tau \leq n \log n$. First note that \TauDelay is an instance of $g_1$-\AdvComp with $g_1:=\tau-1$. This follows since for all steps, a bin could sampled (and be allocated to) at most $\tau-1$ times during the last $\tau-1$ steps. However, for a typical step we expect each bin to be incremented much less frequently during the last $\tau-1$ steps, and thus the process is with high probability an instance of $g_2$-\AdvComp for some $g_2 \ll g_1$. 

To make this more specific, consider an arbitrary instance of $g_1$-\AdvComp and define $D^t$ as the set of pairs of bins (of unequal load) whose comparison is reversed with non-zero probability by the adversary in step $t + 1$,
\[
D^t(\mathfrak{F}^t) := \left\{ 
(i, j) \in [n] \times [n] : y_{i}^t > y_{j}^t \wedge \Pro{A^{t+1}(\mathfrak{F}^t, i, j) = i } > 0 
\right\},
\]
and then define the largest load difference that could be reversed by the process at step $t+1$,
\[
g^t(\mathfrak{F}^t) := \max_{ (i, j) \in D^t(\mathfrak{F}^t) } \left| y_i^t - y_j^t \right|.
\]
This can be seen as the ``effective $g$-bound'' of the process in step $t+1$.
\begin{lem} \label{lem:key_batching}
Consider the \textsc{$g_1$-Adv-Comp} setting for any $g_1 \in [1, n \log n]$, and consider any $g_2 \in [1, \log^2 n]$. If for every step $t \geq 0$, we have
\[
\Pro{g^t \leq g_2} \geq 1 - n^{-6},
\]
then there exists a constant $\tilde{\kappa} > 0$ such that for any step $m \geq 0$,
\[
\Pro{\Gap(m) \leq \tilde{\kappa} \cdot \frac{g_2}{\log g_2} \cdot \log \log n} \geq 1 - n^{-2}.
\]
\end{lem}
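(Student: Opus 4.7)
The strategy is to reduce the analysis to that of \GAdvComp with parameter $g_2$ (rather than $g_1$), by exploiting the hypothesis that at each step the adversary has probability at least $1 - n^{-6}$ of being restricted to load differences at most $g_2$. This reduction is carried out at the level of potential-function drop inequalities, in the same spirit as the proof of Proposition~\ref{pro:prob_upper_bound}, and is then fed into Corollary~\ref{cor:upper_bound_tight} to obtain the tight bound.

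First, I would establish a coarse gap bound of $\Oh(g_2 \log(n g_2))$ that holds with high probability at every step. Let $\Gamma := \Gamma(\gamma)$ with $\gamma := -\log(1 - 1/(8 \cdot 48))/g_2$, and let $\mathcal{G}^t := \{g^t \leq g_2\}$, so that $\Pro{\mathcal{G}^t} \geq 1 - n^{-6}$ by hypothesis. Conditional on $\mathcal{G}^t$, the process in step $t+1$ is an instance of \GAdvComp with parameter $g_2$, so Theorem~\ref{thm:g_adv_warm_up_gap}~$(i)$ yields $\Ex{\Gamma^{t+1}\mid\mathfrak{F}^t,\mathcal{G}^t} \leq (1 - \gamma/(96n))\,\Gamma^t + c_1$. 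Under $\neg\mathcal{G}^t$, the deterministic smoothness bound $\Gamma^{t+1} \leq e^\gamma\,\Gamma^t \leq (1+2\gamma)\,\Gamma^t$ holds. Averaging these two cases exactly as in Proposition~\ref{pro:prob_upper_bound} gives $\Ex{\Gamma^{t+1}\mid\mathfrak{F}^t} \leq (1 - \gamma/(100n))\,\Gamma^t + c_1$. By Lemma~\ref{lem:geometric_arithmetic}~$(ii)$, this yields $\Ex{\Gamma^t} \leq \Oh(n g_2)$ for all $t \geq 0$, and Markov's inequality (with a stronger polynomial level, say $n^{15}$) gives $\max_{i \in [n]}|y_i^t| \leq c_3\,g_2\,\log(n g_2)$ with probability at least $1 - n^{-14}$, where $c_3$ is the constant from Theorem~\ref{thm:g_adv_warm_up_gap}.

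Next, suppose $g_2 \in (1, \log n]$. Set $t_0 := m - n\log^5 n - \Delta_r(g_2)$ as in Corollary~\ref{cor:upper_bound_tight}. Applying the coarse bound at step $t_0$, the precondition $\max_i|y_i^{t_0}| \leq c_3 g_2 \log(n g_2)$ holds with probability at least $1 - n^{-14}$. Since $g_2 \leq \log^2 n$ implies $\Delta_r(g_2) = \Oh(n\,\mathrm{polylog}(n))$, the window $[t_0, m]$ has length $\Oh(n \log^5 n)$; a union bound over this window yields that with probability at least $1 - \Oh(n\log^5 n) \cdot n^{-6}$, we have $g^t \leq g_2$ for every $t \in [t_0, m]$, so the process is an instance of \GAdvComp with parameter $g_2$ throughout this window. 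Conditional on these two events, Corollary~\ref{cor:upper_bound_tight} (with $g := g_2$) yields $\Gap(m) \leq \tilde{\kappa}\cdot \tfrac{g_2}{\log g_2}\cdot\log\log n$ with probability at least $1 - n^{-3}$, and a final union bound gives the claimed $1 - n^{-2}$ probability.

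Finally, for the remaining range $g_2 \in (\log n, \log^2 n]$, we have $\tfrac{g_2}{\log g_2}\log\log n = \Theta(g_2)$, so it suffices to prove a gap bound of $\Oh(g_2)$. This follows from an entirely analogous probabilistic adaptation of Theorem~\ref{thm:g_adv_g_plus_logn_gap}: one runs the $\Lambda$-potential argument of Section~\ref{sec:g_adv_g_plus_logn_bound} with smoothing parameter $\alpha = 1/18$ and offset $c_4 g_2$, applying each drop inequality conditionally on $\mathcal{G}^t$ while absorbing the rare $\neg\mathcal{G}^t$ steps through the deterministic bound $\Lambda^{t+1} \leq e^\alpha\,\Lambda^t$ weighted by $n^{-6}$. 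The main technical obstacle across the whole proof is ensuring that the probabilistic reduction in the first step transfers with a tail exponent strong enough ($n^{-14}$) to survive both the polynomial union bound over $[t_0, m]$ and the conditioning required by Corollary~\ref{cor:upper_bound_tight}; once that is handled, the rest is a straightforward assembly of existing machinery.
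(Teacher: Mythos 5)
Your first step has a genuine gap. The averaging argument from Proposition~\ref{pro:prob_upper_bound} requires $\Pro{\mathcal{G}^t \mid \mathfrak{F}^t} \geq 1-n^{-\Omega(1)}$ for every filtration $\mathfrak{F}^t$: in that proposition, $\mathcal{G}^t$ is the event that the fresh comparison randomness of step $t+1$ turns out correct, so the bound holds conditionally on $\mathfrak{F}^t$, and the decomposition $\Ex{\Gamma^{t+1}\mid\mathfrak{F}^t} = \Ex{\Gamma^{t+1}\mid\mathfrak{F}^t,\mathcal{G}^t}\Pro{\mathcal{G}^t\mid\mathfrak{F}^t} + \Ex{\Gamma^{t+1}\mid\mathfrak{F}^t,\neg\mathcal{G}^t}\Pro{\neg\mathcal{G}^t\mid\mathfrak{F}^t}$ genuinely mixes a drop with a small bad term. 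In contrast, $g^t = g^t(\mathfrak{F}^t)$ is by definition a deterministic function of $\mathfrak{F}^t$, so $\{g^t\leq g_2\}$ is $\mathfrak{F}^t$-measurable and $\Pro{g^t\leq g_2\mid\mathfrak{F}^t}\in\{0,1\}$. The hypothesis $\Pro{g^t\leq g_2}\geq 1-n^{-6}$ is only a marginal statement, and since $\Gamma^t$ and $\mathbf{1}_{\{g^t>g_2\}}$ may be strongly positively correlated (the adversary can use large $g^t$ precisely when $\Gamma^t$ is large), you cannot convert the small marginal probability into a per-step drop inequality $\Ex{\Gamma^{t+1}\mid\mathfrak{F}^t}\leq(1-\gamma/(100n))\Gamma^t+c_1$. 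Hence the coarse $\Oh(g_2\log(n g_2))$ bound at all steps does not follow, and the rest of the argument collapses.

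The paper avoids this by coupling rather than averaging: it introduces an auxiliary process $\tilde{\mathcal{P}}_{t_0}$ that follows $\mathcal{P}$ until the first step $\sigma\geq t_0$ with $g^\sigma>g_2$ and thereafter behaves like $g_2$-\textsc{Bounded}, so $\tilde{\mathcal{P}}_{t_0}$ is a bona-fide $g_2$-\textsc{Adv-Comp} instance for all $s\geq t_0$, satisfying the unconditional drop inequalities. It then union-bounds over a window of length $\leq n^3$ to show $\mathcal{P}$ and $\tilde{\mathcal{P}}_{t_0}$ agree whp. Since this window bound is polynomial, the full range $m\geq 0$ is handled by a three-phase argument: on $[0,t_0]$ with $t_0=m-n^3$ the process is analyzed as $g_1$-\textsc{Adv-Comp} via Theorem~\ref{thm:g_adv_warm_up_gap} to get $\Gap(t_0)=\Oh(n\log^2 n)$; on $(t_0,t_1]$ with $t_1=m-n\log^5 n-\Delta_r$ the $\Gamma$-potential drop for $\tilde{\mathcal{P}}_{t_0}$ over $\Theta(n^3)$ steps brings the gap down to $\Oh(g_2\log(ng_2))$; and only then is Corollary~\ref{cor:upper_bound_tight} invoked. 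The coupling also cleanly resolves the second implicit issue in your proposal, namely applying Corollary~\ref{cor:upper_bound_tight} ``conditional on $g^t\leq g_2$ for every $t$'' --- conditioning alters the distribution, while the coupling lets you apply the corollary unconditionally to $\tilde{\mathcal{P}}_{t_0}$ and then transfer the conclusion.
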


We will now use \cref{lem:key_batching} to prove \cref{thm:batching}.

\begin{proof}[Proof of \cref{thm:batching}]
Any bin can be allocated at most $n-1$ times during an interval of $\tau-1=n-1$ steps, so an adversary with $g_1 = n-1$ who also remembers the entire history of the process, can simulate \TauDelay at any step $t$, by using the allocation information in steps $[t - n, t]$.

To obtain the bound for $g_2$, note that in \TwoChoice, at each step two bins are sampled for each ball. So in $n-1$ steps of \TwoChoice, there are $2(n-1)$ bins sampled using \OneChoice. By the properties of \OneChoice (\cref{lem:one_choice_lightly}), we have that for any consecutive $n-1$ allocations, with probability at least $1 - n^{-6}$ we sample (and allocate to) no bin more than $11 \log n/\log \log n$ times. In such a sequence of bin samples, we can simulate \TauDelay using $g_2$-\AdvComp with $g_2 = 11 \log n/\log \log n$. Hence, for any step $t \geq 0$,
\[
\Pro{g^t \leq g_2} \geq 1 - n^{-6}.
\]
Since the precondition of \cref{lem:key_batching} holds for $g_1 = n-1$ and $g_2 = 11 \log n/\log \log n$, we get that there exists a constant $\kappa > 0$, such that 
\[
\Pro{\Gap(m) \leq \kappa \cdot \frac{\log n}{\log \log n}} \geq 1 - n^{-2}. \qedhere
\]
\end{proof}

The same argument also applies for any $\tau \in [n \cdot e^{-\log^c n}, n \log n]$. For \OneChoice with $2\tau$ balls the gap is \Whp~$\polylog(n)$ (e.g., see \cref{lem:one_choice_n_polylog_n}) and so we can apply \cref{lem:key_batching} with $g_2 = \polylog(n)$, to obtain the gap bound of
\[
\frac{g_2}{\log g_2} \cdot \log \log n = \Theta(g_2).
\]
\begin{cor} \label{cor:delay_polylog_n_upper_bound}
There exists a constant $\kappa > 0$, such that the \TauDelay setting with any $\tau \in [n \cdot e^{-\log^c n}, n \log n]$, where $c > 0$ is any constant, for any step $m \geq 0$, it holds that
\[
\Pro{\Gap(m) \leq \kappa \cdot \frac{\log n}{\log\left(\frac{4n}{\tau} \cdot \log n \right)}} \geq 1 - n^{-2}.
\]
\end{cor}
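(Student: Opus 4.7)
The plan is to mirror the reduction used in the proof of Theorem~\ref{thm:batching}: show that \TauDelay is an instance of \textsc{$g_1$-Adv-Comp} with $g_1 := \tau-1$, then obtain a much sharper effective bound $g^t \leq g_2$ in typical steps via a \OneChoice concentration argument, and finally apply \cref{lem:key_batching}. The worst-case reduction holds because the delay adversary may only report a load $\tilde{x}_i^{t-1}\in[x_i^{t-\tau},x_i^{t-1}]$, so the span of reportable values for any bin $i$ is the number of balls it received in the last $\tau-1$ steps, hence at most $\tau-1$. Since $\tau\leq n\log n$, the hypothesis $g_1\in[1,n\log n]$ of \cref{lem:key_batching} is satisfied.

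To obtain a sharper bound $g^t$ for any fixed step $t\geq 0$, observe that during the $\tau-1$ steps immediately preceding step $t+1$, \TwoChoice draws $2(\tau-1)\leq 2n\log n$ independent uniform samples from the $n$ bins. I would invoke \cref{lem:one_choice_n_polylog_n} to deduce that, with probability at least $1-n^{-6}$, no bin is sampled more than
\[
g_2 \;:=\; C\cdot \frac{\log n}{\log\bigl(\tfrac{4n}{\tau}\log n\bigr)}
\]
times in this window, for a sufficiently large constant $C>0$. Whenever this event holds, any pair $(i,j)$ whose comparison the delay adversary can reverse must satisfy $|x_i^{t-1}-x_j^{t-1}|\leq g_2$, and thus $g^t\leq g_2$. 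The assumption $\tau\in[n\cdot e^{-\log^c n},n\log n]$ with $c\in(0,1)$ ensures that $g_2=\polylog(n)$; in particular $g_2\leq \log^2 n$ for sufficiently large $n$, as required by \cref{lem:key_batching}.

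Applying \cref{lem:key_batching} with this $g_1$ and $g_2$ yields a constant $\tilde\kappa>0$ such that
\[
\Pro{\Gap(m)\leq \tilde\kappa\cdot \frac{g_2}{\log g_2}\cdot \log\log n}\geq 1-n^{-2}.
\]
To match the stated form of the corollary, it remains to verify that $\frac{g_2}{\log g_2}\cdot\log\log n=\Theta(g_2)$. Over the allowed range of $\tau$ we have $g_2=\omega(1)$ and $g_2=O(\log n)$, so $\log g_2=\Theta(\log\log n)$, and the factor $\frac{\log\log n}{\log g_2}$ is $\Theta(1)$. Absorbing this constant into $\kappa$ gives the stated bound.

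The only non-routine ingredient is the typical-case reduction $g^t\leq g_2$: one must argue cleanly that the load range reported by the delay adversary for either compared bin is controlled by its sample count in the preceding window, so \OneChoice concentration on sample counts upgrades the worst-case bound $g_1=\tau-1$ to the much sharper $g_2$. Everything else is a direct invocation of \cref{lem:one_choice_n_polylog_n} and \cref{lem:key_batching}, analogous to the $\tau=n$ case already handled in the proof of \cref{thm:batching}.
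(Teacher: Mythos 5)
Your proof takes essentially the same route as the paper's (very terse) one: reduce \TauDelay to $g_1$-\AdvComp with $g_1 = \tau-1$, use \OneChoice concentration on $2(\tau-1)$ samples (\cref{lem:one_choice_n_polylog_n}) to certify $g^t \le g_2 = \Theta\bigl(\tfrac{\log n}{\log((4n/\tau)\log n)}\bigr)$ with probability at least $1-n^{-6}$, feed both into \cref{lem:key_batching}, and absorb the $\tfrac{\log\log n}{\log g_2}$ factor into the constant. This is exactly the argument sketched in the paper immediately after \cref{thm:batching}, so the approach matches.

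One small imprecision worth fixing in your final step: you justify $\log g_2 = \Theta(\log\log n)$ by saying ``$g_2 = \omega(1)$ and $g_2 = O(\log n)$.'' The hypothesis $g_2 = \omega(1)$ alone only yields $\log g_2 = \omega(1)$, not $\Omega(\log\log n)$. What actually makes the argument go through is that the constraint $\tau \ge n\cdot e^{-\log^c n}$ with $c \in (0,1)$ gives $\log\bigl(\tfrac{4n}{\tau}\log n\bigr) = O(\log^c n)$, hence $g_2 = \Omega\bigl((\log n)^{1-c}\bigr)$, so $\log g_2 \ge (1-c)\log\log n - O(1)$. (Note also that the corollary as stated says ``$c>0$'', but both your proof and the paper's implicitly need $c<1$; the paper's summary table confirms the intended range is $c\in(0,1)$, since for $c\ge 1$ the claimed bound would be constant, which fails even for $\tau=1$.) With that lower bound on $g_2$ spelled out, your conclusion $\tfrac{g_2}{\log g_2}\cdot\log\log n = \Theta(g_2)$ is correct.
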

\begin{rem} \label{rem:delay_polylog_n_lower_bound}
A matching lower bound holds for the \BBatch process for any batch size $b \in [n \cdot e^{-\log^c n}, n \log n]$. This follows by the lower bound for \OneChoice with $b$ balls (e.g., see \cref{lem:one_choice_lower_bound_gap_whp_lightly}) which matches the gap of \BBatch in the first batch (\cref{obs:two_choice_batching_lower_bound_small_b}). 
\end{rem}

Therefore, \cref{cor:delay_polylog_n_upper_bound} and \cref{rem:delay_polylog_n_lower_bound} establish that \Whp~$\Gap(m) = \Theta\big({\scriptstyle \frac{\log n}{\log((4n/b) \cdot \log n)}}\big)$ for the \BBatch process for any $b \in [n \cdot e^{-\log^c n}, n \log n]$. However, the following remark (which also applies to the \TauDelay setting), establishes that there are regions where the \BBatch process has an asymptotically worse gap than \OneChoice with $b$ balls (see also some empirical results in \cref{fig:bbatch_vs_one_choice} in \cref{sec:experiments}):%

\begin{rem} \label{rem:delay_very_very_lightly_loaded}
For any $\tau$ (or $b$) being $n^{1-\eps}$ for any constant $\eps \in (0,1)$, the \OneChoice process has $\Gap(b) = \Oh(1)$ \Whp~(see \cref{lem:very_very_lightly_loaded}). Hence, by \cref{lem:key_batching} with $g_2 = \Oh(1)$, \TauDelay has for any step $m \geq 0$, $\Gap(m) = \Oh(\log \log n)$ \Whp, which is asymptotically tight by \cref{obs:g_adv_loglogn_lower_bound} for $m=n$.
\end{rem}

We now return to proving \cref{lem:key_batching}.

\begin{proof}[Proof of \cref{lem:key_batching}]
Let $\mathcal{P}$ be a process satisfying the preconditions in the statement and we will define the auxiliary process $\tilde{\mathcal{P}}_{t_0}$ for some step $t_0 \geq 0$ (to be specified below). Consider the stopping time $\sigma := \inf\{ s \geq t_0 : g^s > g_2 \}$. Then, the auxiliary process $\tilde{\mathcal{P}}_{t_0}$ is defined so that
\begin{itemize}
  \item in steps $s \in [0, \sigma)$, it makes the same allocations as $\mathcal{P}$, and
  \item in steps $s \in [\sigma, \infty)$, it makes the same allocations as the $g_2$-\textsc{Bounded} process.
\end{itemize}
This way $\tilde{\mathcal{P}}_{t_0}$ is a $g_2$-\AdvComp process for all steps $s \geq t_0$. Let $\tilde{y}$ be the normalized load vector for $\tilde{\mathcal{P}}_{t_0}$, then it follows by the precondition that \Whp~the two processes agree for any interval $[t_0, m]$ with $m - t_0 \leq n^3$, i.e.,
\begin{align} \label{eq:p_and_tilde_p_agree}
\Pro{\bigcap_{s \in [t_0, m]} \left\{ y^s = \tilde{y}^s \right\}}
  \geq \Pro{\bigcap_{s \in [t_0, m]} \left\{ g^s \leq g_2 \right\}}
  \geq 1 - n^{-6} \cdot n^3 = 1 - n^{-3}.
\end{align}

For $m \leq n^3$, the upper bound follows directly by \cref{thm:g_adv_strongest_bound} for $\tilde{\mathcal{P}}_0$ and taking the union bound with \cref{eq:p_and_tilde_p_agree}, i.e., that $\tilde{\mathcal{P}}_0$ agrees with $\mathcal{P}$.

For $m > n^3$, the analysis is slightly more challenging. We need to show that the process recovers from the weak upper bound obtained from the $g_1$-\AdvComp setting. Let $\Gamma := \Gamma(\gamma)$ be as defined in \cref{eq:gamma_def} with $\gamma := - \log(1 - \frac{1}{8 \cdot 48})/g_2$ for the $\tilde{\mathcal{P}}_{t_0}$ process (i.e., the $\tilde{y}$ load vector). Also, let $t_0 := m - n^3$ and $t_1 := m - n \log^5 n - \Delta_r$ where $\Delta_r := \Delta_r(g_2) = \Theta(n g_2 (\log(ng_2))^2)$ is the recovery time defined in \cref{lem:g_adv_recovery}. 
In this analysis, we consider the following three phases (see \cref{fig:g1_and_g2_simulation}):
\begin{itemize}
  \item $[0, t_0]$: The process $\mathcal{P}$ is an instance of the $g_1$-\AdvComp setting with $g_1 = n \log n$. Hence, by \cref{thm:g_adv_warm_up_gap}, it follows that at step $t_0$ \Whp~$\Gap(t_0) = \Oh(n \log^2 n)$.
  \item $(t_0, t_1]$: The process $\mathcal{P}$ \Whp~agrees with $\tilde{\mathcal{P}}_{t_0}$ which is a $g_2$-\AdvComp process. We will use an analysis similar to that in \cref{sec:g_adv_warm_up} to prove that \Whp~$\Gap(t_1) = \Oh(g_2 \log (ng_2))$.
  \item $(t_1, m]$: The process $\mathcal{P}$ \Whp~continues to agree with $\tilde{\mathcal{P}}_{t_0}$ which is a $g_2$-\AdvComp process and so by \cref{cor:upper_bound_tight} this implies that \Whp~$\Gap(m) = \Oh\big(\frac{g_2}{\log g_2} \cdot \log \log n\big)$.
\end{itemize}

\begin{figure}
    \centering

\scalebox{0.75}{
\begin{tikzpicture}[
  IntersectionPoint/.style={circle, draw=black, very thick, fill=black!35!white, inner sep=0.05cm}
]

\definecolor{MyBlue}{HTML}{9DC3E6}
\definecolor{MyYellow}{HTML}{FFE699}
\definecolor{MyGreen}{HTML}{E2F0D9}
\definecolor{CaseTwoGreen}{HTML}{A9D18E}
\definecolor{MyRed}{HTML}{FF9F9F}
\definecolor{MyDarkRed}{HTML}{C00000}
\definecolor{MyOrange}{HTML}{C55A11}
\definecolor{MyLightGreen}{HTML}{C5E0B4}

\def\xEnd{16}
\def\xLast{15.40}
\def\yLast{6}
\def\ThresholdOne{1}
\def\ThresholdTwo{2.5}
\def\yBottom{-0.8}

\def\tA{4.7}
\def\tB{11.2}

\def\yA{5}
\def\yB{1.7}
\def\yC{0.7}

\node[anchor=south west, inner sep=0.15cm, fill=MyBlue, rectangle, minimum width=\tA cm] at (0, \yLast) {$g_1$-\AdvComp Instance};
\node[anchor=south west, inner sep=0.15cm, fill=MyYellow, rectangle, minimum width=\xLast cm - \tA cm] at (\tA, \yLast) {$g_2$-\AdvComp Instance};

\newcommand{\drawLine}[3]{
\draw[dashed, very thick, #3] (#1, \yBottom) -- (#1, \yLast -0.54);
\draw[very thick] (#1, \yBottom) -- (#1, \yBottom -0.2);
\node[anchor=north] at (#1, \yBottom -0.3) {#2};}

\draw[very thick] (0, \yBottom) -- (0, \yBottom -0.2);
\node[anchor=north] at (0, \yBottom -0.3) {$0$};
\drawLine{\tA}{$t_0$}{black!30!white};
\drawLine{\tB}{$t_1$}{black!30!white};
\drawLine{\xLast}{\textcolor{MyDarkRed}{$m$}}{black!30!white};

\node[anchor=east] at (0, 6.55) {$\Gap(t)$};

\node[anchor=south west, inner sep=0.15cm, fill=CaseTwoGreen, rectangle, minimum width=\tA cm] at (0, \yLast -0.54) {Phase 1};
\node[anchor=south west, inner sep=0.15cm, fill=MyLightGreen, rectangle, minimum width=\tB cm - \tA cm] at (\tA, \yLast -0.54) {Phase 2};
\node[anchor=south west, inner sep=0.15cm, fill=CaseTwoGreen, rectangle, minimum width=\xLast cm - \tB cm] at (\tB, \yLast -0.54) {Phase 3};

\draw[dashed, thick] (0,\yA) -- (\xLast, \yA);
\draw[dashed, thick] (0, \yB) -- (\xLast, \yB);
\draw[dashed, thick] (0, \yC) -- (\xLast, \yC);

\node[anchor=east] at (0, \yA) {$n \log^2 n$};
\node[anchor=east] at (0, \yB) {$g_2 \log(ng_2)$};
\node[anchor=east] at (0, \yC) {$\frac{g_2}{\log g_2} \cdot \log \log n$};

\node[anchor=west] at (\xEnd, \yBottom) {$t$};

\newcommand{\drawBrace}[4]{
\draw [
    thick,
    decoration={
        brace,
        raise=0.5cm,
        amplitude=0.2cm
    },
    decorate
] (#2 - 0.3, \yBottom) -- (#1 + 0.3, \yBottom) 
node [anchor=north,yshift=-0.7cm,#4] {#3}; }

\drawBrace{\tA}{\tB}{$\Gamma$ drops in expectation}{pos=0.5};
\drawBrace{\tB}{\xLast}{\cref{cor:upper_bound_tight}}{pos=0.5};

\draw[->, ultra thick] (0,\yBottom) -- (0, \yLast + 1);
\draw[->, ultra thick] (0,\yBottom) -- (\xEnd, \yBottom);

\node[fill=white, rectangle, anchor=west, minimum width=2.5cm, minimum height=0.2cm] at (1, \yBottom) {$\mathbf{\ldots}$};
\draw[ultra thick] (1, \yBottom + 0.2) -- (1, \yBottom - 0.2);
\draw[ultra thick] (3.5, \yBottom + 0.2) -- (3.5, \yBottom - 0.2);

\node[anchor=south west] (plt) at (-0.1, 0) {\includegraphics[width=15.4cm]{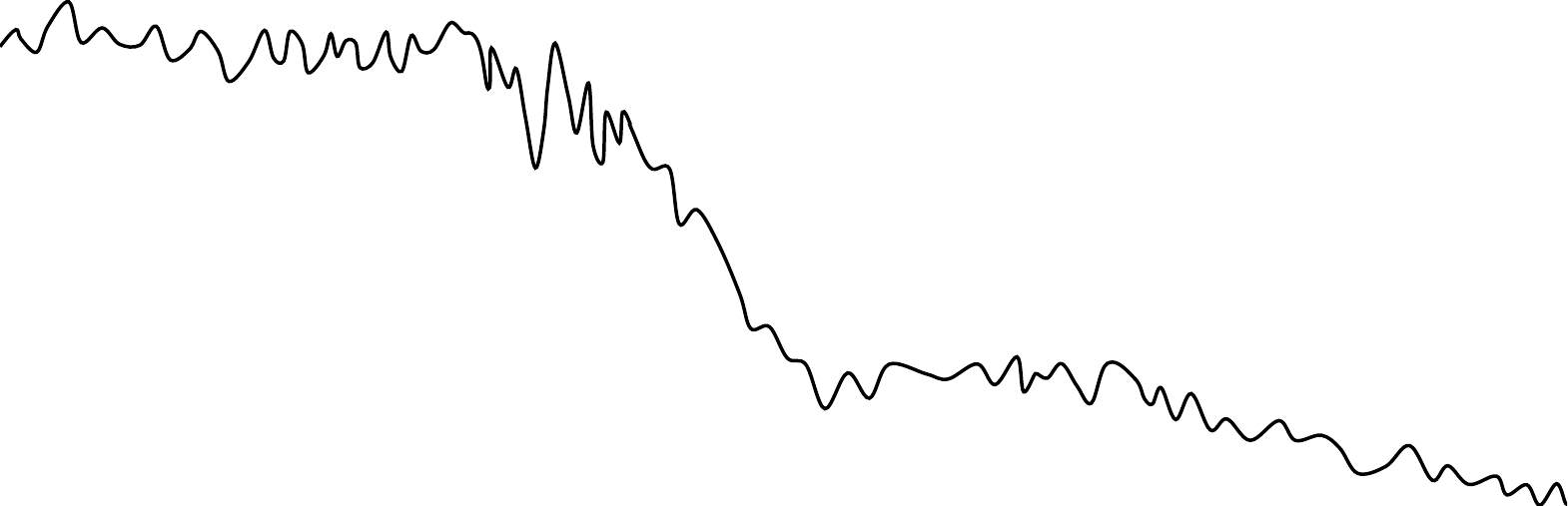}};
\end{tikzpicture}
}

    \caption{The three phases in the proof of \cref{lem:key_batching}.}
    \label{fig:g1_and_g2_simulation}
\end{figure}

\textbf{Phase 1 $[0, t_0]$:} Using \cref{thm:g_adv_warm_up_gap}~$(iii)$ (for $g := n \log n$) and for $c_3 > 0$ being the constant defined in \cref{eq:g_adv_c3_def}, we have that,
\begin{equation} \label{eq:relaxed_bound_at_t0}
\Pro{\max_{i \in [n]} \left| \tilde{y}_i^{t_0} \right| \leq 3 c_3 n \log^2 n} \geq 1 - n^{-14},
\end{equation}
using $c_3 g \log(ng) \leq c_3 n \cdot \log n \cdot \log (n^2 \cdot \log n) \leq 3 c_3 n \log^2 n$.

\textbf{Phase 2 $(t_0, t_1]$:} Let us now turn our attention to the interval $(t_0, t_1]$, where $\tilde{\mathcal{P}}_{t_0}$ is a $g_2$-\AdvComp process. So, by \cref{thm:g_adv_warm_up_gap}~$(i)$ (for $g := g_2$), there exists a constant $c_1 \geq 1$, such that for any step $t \geq t_0$
\[
 \Ex{\left. \Gamma^{t+1} \,\right|\, \mathfrak{F}^t} \leq \Gamma^t \cdot \Big( 1 - \frac{\gamma}{96n} \Big) + c_1.
\]
At step $t_0$, when $\{ \max_{i \in [n]} |\tilde{y}_i^{t_0}| \leq 3c_3 n \log^2 n \}$ holds, we also have that $\Gamma^{t_0} \leq 2n \cdot e^{3\gamma c_3 n \log^2 n}$. Hence, applying \cref{lem:geometric_arithmetic}~$(i)$ (with $a = 1 - \frac{\gamma}{96n}$ and $b = c_1$), for step $t_1$ we have
\begin{align*}
\Ex{ \Gamma^{t_1} \,\left|\, \mathfrak{F}^{t_0}, \max_{i \in [n]} \left| \tilde{y}_i^{t_0}\right| \leq 3 c_3 n \log^2 n \right.}
 & \leq \Ex{ \Gamma^{t_1} \,\left|\, \mathfrak{F}^{t_0}, \Gamma^{t_0} \leq 2n \cdot e^{3\gamma c_3 n \log^2 n} \right.} \\
 & \leq \Gamma^{t_0} \cdot \Big( 1 - \frac{\gamma}{96n} \Big)^{t_1-t_0} + \frac{96c_1}{\gamma} \cdot n \\
 & \stackrel{(a)}{\leq} 2n \cdot e^{3\gamma c_3 n \log^2 n} \cdot e^{-\frac{\gamma}{96n} \cdot \frac{1}{2} n^3 } + \frac{96c_1}{\gamma} \cdot n \\
 & \leq \frac{100c_1}{\gamma} \cdot n.
\end{align*}
using in $(a)$ that $e^u \geq 1 + u$ and $t_1 - t_0 \geq \frac{1}{2} n^3$. By Markov's inequality, we have that,
\[
\Pro{\left. \Gamma^{t_1} \leq \frac{100c_1}{\gamma} \cdot n^{4} \,\, \right\vert \,\, \mathfrak{F}^{t_0}, \max_{i \in [n]} \left| \tilde{y}_i^{t_0}\right| \leq 3c_3 n \log^2 n  } \geq 1 - n^{-3}.
\]
When the event $\left\{ \Gamma^{t_1} \leq \frac{100c_1}{\gamma} \cdot n^{4} \right\}$ holds, it implies that
\begin{align*}
\Gap(t_1) &\leq \frac{1}{\gamma} \cdot \left( \log\left( \frac{100c_1}{\gamma} \right) + 4 \log n \right)  
\stackrel{(a)}{=} \frac{c_3 g_2}{16} 
\cdot \left( \Oh(1) + \log \left( \frac{c_3 g_2}{16} \right) + 4 \log n \right) 
\leq c_3 g_2 \log(ng_2),
\end{align*}
using in (a) that $c_3 := \frac{16}{\gamma g_2}$
(defined in~\cref{eq:g_adv_c3_def}). Therefore, 
\begin{align} \label{eq:relaxed_bound_at_t1}
\Pro{\left. \max_{i \in [n]} \left| \tilde{y}_i^{t_1} \right| \leq c_3 g_2 \log(ng_2) \,\, \right\vert \,\, \mathfrak{F}^{t_0}, \max_{i \in [n]} \left|\tilde{y}_i^{t_0}\right| \leq 3 c_3 n \log^2 n  } \geq 1 - n^{-3}.
\end{align}

\textbf{Phase 3 $(t_1, m]$:} Now, we turn our attention to the steps in $(t_1, m]$, where $\tilde{\mathcal{P}}_{t_0}$ is again a $g_2$-\AdvComp process.
Therefore, applying \cref{cor:upper_bound_tight} (for $t_0 := t_1 = m - n \log^5 n - \Delta_r$ and $g := g_2)$, there exists a constant $\tilde{\kappa} > 0$ such that
\begin{align} \label{eq:relaxed_bound_at_m}
\Pro{\Gap_{\tilde{\mathcal{P}}_{t_0}}(m) \leq \tilde{\kappa} \cdot \frac{g_2}{\log g_2} \cdot \log \log n \,\, \left\vert \,\, \mathfrak{F}^{t_1}, \max_{i \in [n]} \left|\tilde{y}_i^{t_1}\right| \leq c_3g_2 \log(ng_2) \right.} \geq 1 - n^{-3}.
\end{align}
By combining \cref{eq:relaxed_bound_at_t0}, \cref{eq:relaxed_bound_at_t1} and \cref{eq:relaxed_bound_at_m}, we have that%
\[
\Pro{\Gap_{\tilde{\mathcal{P}}_{t_0}}(m) \leq \tilde{\kappa} \cdot \frac{g_2}{\log g_2} \cdot \log \log n } \geq \big(1 - n^{-14}\big) \cdot \big(1 - n^{-3} \big) \cdot \big(1 - n^{-3}\big) \geq 1 - 3n^{-3}.
\]
Finally, by \cref{eq:p_and_tilde_p_agree} we have that \Whp~$\mathcal{P}$ and $\tilde{\mathcal{P}}_{t_0}$ agree in every step in $[t_0, m]$, so by taking the union bound we conclude\[
\Pro{\Gap_{\mathcal{P}}(m) \leq \tilde{\kappa} \cdot \frac{g_2}{\log g_2} \cdot \log \log n } \geq 1 - 3n^{-3} - n^{-3} \geq 1 - n^{-2}. \qedhere
\]
\end{proof}

\section{Lower Bounds}\label{sec:lower_bounds}

In this section, we will state and prove lower bounds on the gap of various processes and settings. Recall that the \GBounded and \GMyopicComp processes are specific instances of the \GAdvComp setting. Our main result is for \GMyopicComp, where we will prove a lower bound of $\Omega(g + \frac{g}{\log g} \cdot \log \log n)$, which matches the upper bounds of Sections~\ref{sec:g_adv_g_plus_logn_bound}-\ref{sec:g_adv_layered_induction} for all $g \geq 0$ (\cref{cor:g_myopic_combined_lower_bound}). \cref{tab:lower_bounds_overview} gives a summary of the results.

 \begin{table}[h]
 \centering
 \begin{tabular}{cccc}
 \textbf{Process} & \textbf{Range} & \textbf{Lower Bound} & \textbf{Reference} \\
 \hline 
 Any \GAdvComp & $0 \leq g $ & $\log_2 \log n + \Omega(1)$ & Obs~\ref{obs:g_adv_loglogn_lower_bound}  \\
 \GMyopicComp & $2 \leq g$ & $\Omega(g)$ & Prop~\ref{pro:g_myopic_g_lower_bound} \\
 \GMyopicComp & $10 \leq g \leq \frac{1}{8} \cdot \frac{\log n}{\log \log n}$ & $\Omega\big(\frac{g}{\log g} \cdot \log \log n \big)$ & Thm~\ref{thm:g_myopic_layered_induction_lower_bound} \\
 \SigmaNoisyLoad & $2 \cdot (\log n)^{-1/3} \leq \sigma$ & $\Omega( \min\{ 1, \sigma \} \cdot (\log n)^{1/3})$ & Prop~\ref{pro:sigma_lower_bound} \\
 \SigmaNoisyLoad & $32 \leq \sigma$ & $\Omega(\min\{\sigma^{4/5}, \sigma^{2/5} \cdot \sqrt{\log n}\})$ & Prop~\ref{pro:sigma_lower_bound} \\
 \BBatch & $b \in [n \cdot e^{\log^{\Omega(1)} n}, n \log n]$ & $\Omega\left(\frac{\log n}{\log(\frac{4n}{b} \cdot \log n)} \right)$ & Obs~\ref{obs:two_choice_batching_lower_bound_small_b}
 \end{tabular}
 \caption{Overview of the lower bounds for different noise settings. All of these hold for a particular value of $m$ with high probability.}
 \label{tab:lower_bounds_overview}
 \end{table}

 We first start with a simple lower bound which follows immediately by majorization with the \TwoChoice process without noise. This lower bound holds regardless of which strategy the adversary uses.
\begin{obs} \label{obs:g_adv_loglogn_lower_bound}
There is a constant $\kappa >0$ such that for any $g \geq 0$ and any instance of the \GAdvComp setting,  \[
\Pro{\Gap(n) \geq \log_{2} \log n - \kappa} \geq 1 - n^{-1}.
\]
\end{obs}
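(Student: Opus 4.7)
The approach is to reduce the claim to the classical $\log_2 \log n - \Theta(1)$ lower bound on $\Gap(n)$ for \TwoChoice without noise, proved in~\cite{ABKU99}, which holds with probability at least $1 - n^{-1}$. The goal is to show that every instance of \GAdvComp dominates \TwoChoice in the majorization order on the sorted load vectors, so that its gap is stochastically at least as large.

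Fix any adversary in the \GAdvComp setting. At every step $t$, let $q^t$ denote its probability allocation vector indexed by bins sorted in non-increasing order of current load, and let $p$ with $p_i = (2i-1)/n^2$ be the analogous allocation vector of \TwoChoice without noise. Directly from the definition of \GAdvComp, $q^t$ is obtained from $p$ by shifting probability mass from lighter bins to heavier bins whose loads differ by at most $g$, hence the partial sums satisfy $\sum_{i=1}^k q_i^t \geq \sum_{i=1}^k p_i$ for every $k \in [n]$, with equality at $k = n$. The plan is then to invoke the majorization-preservation framework for balls-and-bins processes developed in~\cite{PTW15} (see also~\cite{W07}): if the probability allocation vectors of two processes satisfy such a relation at every step, then the sorted load vector of the first majorizes the sorted load vector of the second at every step. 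Applied here, this yields that the gap of \GAdvComp at step $n$ stochastically dominates the gap of \TwoChoice at step $n$, and combining with the lower bound from~\cite{ABKU99} gives the claim.

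The main obstacle is handling the coupling rigorously once the two load vectors diverge, since the sampled indices $i_1^t, i_2^t$ may rank differently in the two load vectors and a naive pointwise coupling on sampled bins fails. This is precisely the situation that the majorization framework of~\cite{PTW15} is designed to address, by phrasing the comparison at the level of the sorted probability allocation vectors rather than on a pointwise coupling. An alternative that avoids invoking a general framework is a direct induction on the sorted load vector, using the standard monotonicity lemma that when a vector $u$ majorizes a vector $v$, incrementing one entry of each by $1$ preserves majorization provided the incremented position in $v$ is no smaller than the incremented position in $u$ (in the sorted order). The verification of the allocation-vector inequality is immediate from the problem definition, so the heavy lifting lies entirely in applying the majorization lemma.
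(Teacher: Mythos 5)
Your proposal is correct and follows essentially the same route as the paper's own proof: observe that at every step the probability allocation vector $q^t$ of any \GAdvComp instance majorizes the vector $p$ of \TwoChoice without noise, apply the majorization-preservation lemma for allocation processes (the paper's \cref{lem:majorisation}, taken from \cite{PTW15}), and conclude via the lower bound of \cite{ABKU99}. The only cosmetic difference is that the paper cites its own restatement of the PTW15 lemma, while you gesture at the general framework and sketch a direct inductive alternative that you do not carry out.
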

\begin{proof}
For the \TwoChoice process without noise, it was shown in \cite[Theorem 3.3]{ABKU99} that there is a constant $\kappa > 0$ such that $\Pro{ \Gap(n) \geq \log_{2} \log n - \kappa} \geq 1-n^{-1}$. 

At any step $t \geq 0$, probability allocation vector $p$ of \TwoChoice without noise is majorized the probability allocation vector $q^t$ of any instance of \TwoChoice in the \GAdvComp setting, as $q^t$ is formed by $p$ and possibly reallocating some probability mass from light to heavy bins. Hence,
the lower bound follows by majorization (see \cref{lem:majorisation}). 
\end{proof}

We proceed by analyzing the \GMyopicComp process by coupling its allocations with that of a \OneChoice process.

\begin{pro}\label{pro:g_myopic_g_lower_bound}
Consider the \GMyopicComp process. Then, $(i)$~for any $g \in [2, 6 \log n]$ and for $m= \frac{1}{2} \cdot ng$, it holds that
\[
 \Pro{ \Gap(m) \geq \frac{1}{35} \cdot g } \geq 1-n^{-2}.
\]
Further, $(ii)$~for any $g \geq 6 \log n$ and for $m = ng^2/(32 \log n)$,
\[
 \Pro{ \Gap(m)\geq \frac{1}{60} \cdot g } \geq 1-2n^{-2}. 
\]
\end{pro}
Recall that for $g = \Omega(\polylog(n))$, our upper bound on the gap is
\[
 \Oh \left(g + \frac{g}{\log g} \cdot \log \log n \right) = \Oh(g).
\]
This means that the lower bound in \cref{pro:g_myopic_g_lower_bound} is matching for those $g$. For smaller values of $g$, a stronger lower bound will be presented in \cref{thm:g_myopic_layered_induction_lower_bound}.
\begin{proof}
\textit{First statement.} Consider \GMyopicComp with $m = \frac{1}{2} \cdot ng$ balls and define the stopping time $\tau:= \inf \{t \geq 0 \colon \max_{i \in [n]} x_i^t \geq g \}$.
Note that $\tau \leq m$ implies there is a bin $j$ with $x_j^{\tau} \geq g$, and hence
\[
 \Gap(m) \geq x_j^m - \frac{m}{n} \geq x_j^{\tau} - \frac{1}{2} g \geq g - \frac{1}{2} g > \frac{1}{35} g.
\]
Let us now assume $\tau > m$. In that case, all bins have an absolute load in $[0,g]$ and are therefore indistinguishable. Hence during steps $1,2,\ldots,m$, the \GMyopicComp process behaves exactly like \OneChoice. 

By \cref{lem:one_choice_cnlogn} (for $c := \frac{1}{2} \cdot \frac{g}{\log n} \geq \frac{1}{\log n}$ since $g \geq 2$) it holds for \OneChoice that,
\[
 \Pro{ \Gap(m) \geq \frac{1}{10\sqrt{2}} \cdot \sqrt{ g \cdot \log n} } \geq 1-n^{-2},
\]
which, as $g \cdot \log n \geq \frac{1}{6} g^2$, implies that
\[
 \Pro{ \Gap(m) \geq \frac{1}{35} \cdot g } \geq 1-n^{-2}.
\]
\textit{Second statement.} %
We will first prove for the \GMyopicComp process that \whp~for any step $1 \leq t \leq m$ and any bin $i \in [n]$, it holds that $|y_i^t| \leq \frac{g}{2}$. To this end, fix any bin $i \in [n]$ and define for any $1 \leq t \leq m$, $Z^t = Z^t(i) := \sum_{j=1}^t (Y_j - \frac{1}{n})$, where the $Y_j$'s are independent Bernoulli variables with parameter $1/n$ each. Clearly, $Z^t$ forms a martingale, and $\Ex{Z^t}=0$ for all $1 \leq t \leq m$. Further,
let $\tau := \inf \{ t \geq 1 \colon |Z^t| > \frac{g}{2} \}$. Then $Z^{t \wedge \tau}$ is also martingale. Further, 
we have
\[
 \Var{ Z^{t+1} \, \mid \, Z^{t} } = \frac{1}{n} \cdot \left(1 - \frac{1}{n} \right) \leq \frac{1}{n} =: \sigma^2.
\]
Also $|Z^{t+1} - Z^{t}| \leq 1=:M$. Hence by a martingale inequality \cref{lem:cl06_thm_6_1},
\begin{align*}
 \Pro{ \left| Z^{m \wedge \tau} \right|   \geq \lambda } \leq 2 \cdot \exp\left(- \frac{\lambda^2}{2 ( \sum_{i=1}^m \sigma^2 + M \lambda/3) } \right),
\end{align*}
with $\lambda = \frac{g}{2}$,
\begin{align*}
 \Pro{ | Z^{m \wedge \tau}|  \geq \frac{g}{2} } &\leq 2 \cdot \exp\left(- \frac{g^2}{8 ( m \cdot \frac{1}{n} + \frac{g}{6}) } \right) 
 = 2 \cdot \exp\left(- \frac{ g^2}{ 8 \cdot (\frac{g^2}{32 \log n} +\frac{g}{6}) } \right) \leq 2 n^{-3},
\end{align*}
where the last inequality holds since 
$
8 \cdot \left(\frac{g^2}{32 \log n} + \frac{g}{6} \right) \leq 8 \cdot \left(\frac{g^2}{32 \log n} + \frac{g^2}{32 \log n} \right)  = \frac{g^2}{2 \log n},
$
using that $g \geq 6 \log n$.

If the event $|Z^{m \wedge \tau} | \leq \frac{g}{2}$ (or equivalently $\tau > m$) occurs, then this implies that bin $i$ deviates from the average load by at most $\frac{g}{2}$ in all steps $1,2,\ldots,m$. By the union bound, this holds with probability $1-2n^{-2}$ for all $n$ bins. Consequently, the \GMyopicComp process behaves exactly like \OneChoice until time $m$ with probability $1-2n^{-2}$. 

By \cref{lem:one_choice_cnlogn} (for $c := \frac{g^2}{32 \log^2 n}$) it holds for \OneChoice that,
\begin{align*}
\Pro{ \Gap(m) \geq \frac{g}{10 \cdot \sqrt{32}}} \geq 1-n^{-2}. 
\end{align*}
The claim follows by taking the union bound and using that $\frac{g}{10 \cdot \sqrt{32}} \geq \frac{g}{60}$.
\end{proof}

\begin{thm}\label{thm:g_myopic_layered_induction_lower_bound}
Consider the \GMyopicComp process for any $g \in [10, \frac{1}{8} \cdot \frac{\log n}{\log \log n}]$. Then, there exists $\ell := \ell(g,n)$ (defined in \cref{eq:ell_definition}), such that for $m=n \cdot \ell$, it holds that
\[
 \Pro{ \Gap(m) \geq \frac{1}{8} \cdot \frac{g}{\log g} \cdot \log \log n } \geq 1-n^{-\omega(1)}.
\]
\end{thm}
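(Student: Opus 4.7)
The plan is a layered induction that is conceptually dual to the super-exponential potential argument used for the upper bound. I fix $k := \bigl\lfloor \frac{\log \log n}{8 \log g}\bigr\rfloor$ phases with boundary times $0 = t_0 < t_1 < \cdots < t_k = m$, and inductively build a nested chain $[n] = H_0 \supseteq H_1 \supseteq \cdots \supseteq H_k$ of ``witness'' bin sets. My invariant is: at step $t_j$, every bin in $H_j$ has normalized load at least $jg$, and $|H_j| \geq n_j$ with $n_j \approx n^{g^{-j}}$ decreasing double-exponentially so that $n_k \geq 2$. This yields a bin of normalized load $\geq kg \geq \frac{1}{8}\cdot\frac{g}{\log g}\cdot\log\log n$ at step $m$, as required.

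The heart of the argument is the inductive step. Given $H_j$ at step $t_j$, I construct $H_{j+1}\subseteq H_j$ of size $\geq n_{j+1}$ such that every bin in $H_{j+1}$ has gained at least $g$ additional balls during the phase. The tool is a coupling with \OneChoice on $|H_j|$ bins: whenever both samples of \GMyopicComp fall in $H_j$ and the load range inside $H_j$ is still at most $g$, the comparison is indistinguishable and the ball lands uniformly in $H_j$. Such ``intra-$H_j$'' events occur at rate $\geq (n_j/n)^2$, and boundary events (one sample in $H_j$, one in $H_{j-1}\setminus H_j$) contribute further balls to $H_j$ at rate $\approx n_j n_{j-1}/n^2$ since they too are indistinguishable at the start of the phase. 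I pick the phase length $t_{j+1}-t_j$ so that $\Theta(g \cdot n_j)$ balls are expected to land in $H_j$; by a Chernoff lower bound at least $\Omega(g \cdot n_j)$ actually do with probability $1 - n^{-\omega(1)}$, and a Poisson-tail computation on the resulting \OneChoice-like distribution shows that at least $n_j \cdot e^{-\Theta(g)} \geq n_j^{1/g}$ bins gain $\geq g$ extra balls, which supplies the required $|H_{j+1}|\geq n_{j+1}$. A union bound over the $k = O(\log \log n)$ phases controls the total failure probability at $n^{-\omega(1)}$.

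The main obstacle is maintaining the layer structure coherently across a phase: loads inside $H_j$ spread, and bins just outside $H_j$ remain within $g$ of it for a while, creating correlated noise in the \OneChoice coupling. I would handle this in three steps: $(i)$ build a small additive buffer into the layer definitions so that non-adjacent layers never interact (i.e.\ bins outside $H_{j-1}$ are strictly more than $g$ below $H_j$ and never send balls there); $(ii)$ stop a phase early if the intra-$H_j$ spread exceeds $g$ and argue via a standard martingale concentration that this stopping time is $\geq t_{j+1}$ w.h.p.; and $(iii)$ use stochastic dominance to argue that boundary interactions with $H_{j-1}\setminus H_j$ only add balls to $H_j$ bins, strengthening (rather than weakening) the lower bound. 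A final bookkeeping step verifies that the phase lengths sum to $n\ell$ for $\ell := \ell(g,n)$ as defined in \cref{eq:ell_definition}; since balls enter deep $H_j$ at rate $\sim n_j n_{j-1}/n^2$, the phase lengths grow roughly like $g n^2/n_{j-1}$, and $\ell$ is chosen precisely to accommodate their telescoping sum.
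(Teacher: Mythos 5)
Your high-level strategy mirrors the paper's: a layered induction over nested witness bin sets, a coupling with \OneChoice whenever both samples fall among bins within load spread $g$, a Poisson-tail computation, and a union bound over $k = \Theta(\log\log n/\log g)$ phases. However, your phase-length calibration is broken. You want phase $j+1$ to deliver $\Theta(g)$ balls per bin of $H_j$; since balls land in $H_j$ at rate $\approx n_j n_{j-1}/n^2$, that forces phase length $\Theta(gn^2/n_{j-1})$, which already exceeds $n$ by the factor $n/n_{j-1}$ once $j \geq 2$. This causes two failures. The average load $t/n$ then advances by $\Theta(gn/n_{j-1}) \gg g$ over the phase, so a witness bin that gains $g$ absolute balls actually has its normalized load drop, and your invariant ``every bin in $H_j$ has normalized load $\geq jg$'' cannot survive the second phase. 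Moreover the total time $\sum_j \Theta(gn^2/n_{j-1})$, dominated by the last phase, is roughly $gn^{2-o(1)}$, nowhere near the required $m = n\ell = \Theta(n\log\log n/\log g)$.

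The paper avoids both problems by fixing every phase to exactly $n$ steps, so the average advances by exactly $1$ per phase and each witness bin gains $g$ \emph{absolute} load, i.e.~$g-1$ \emph{normalized}. This puts the per-bin Poisson parameter at $\lambda \approx |\mathcal{B}_{k-1}|/n \ll 1$ — not $\Theta(g)$ — and the tiny Poisson tail $\approx \lambda^g/g!$ is what drives the double-exponentially fast shrinkage $|\mathcal{B}_k| \approx n\cdot g^{-\Theta(g^k)}$ (rather than your $n_j \approx n^{g^{-j}}$ or $n e^{-\Theta(gj)}$), which in turn is precisely what makes the $\ell$ fixed-length phases total $m = n\ell$. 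Your obstacle-handling apparatus (buffers between layers, early stopping on spread, stochastic dominance) is also unnecessary: the paper simply declares the ball uniformly random whenever both samples fall in $\mathcal{B}_{k-1}$, which is always a valid pessimistic coupling because any myopic deviation (one sample having crossed $kg$ and the gap exceeding $g$) sends the ball to the lower bin and only increases the count of bins that reach $kg$.
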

The proof of this lower bound is similar to the method used by Azar, Broder, Karlin and Upfal~\cite{ABKU99} to prove the $\Omega(\log \log n)$ lower bound for \TwoChoice without noise, in the sense that it follows a layered induction approach, but some additional care is needed. For example, the induction step
size in the load is $g$ and not $1$, and the outcome of a load comparison depends on the load difference. %
\begin{proof}
In the proof we will divide the allocation of  $m=n \cdot \ell$ balls into $\ell$ consecutive phases, each of which lasts for $n$ steps, where
\begin{align} \label{eq:ell_definition}
 \ell := \left\lfloor  \frac{\log (\frac{1}{8} \log n/\log g)}{\log g} \right\rfloor.
\end{align}
First, we verify that $\ell \geq 1$,
\begin{align*}
 \ell 
 = \left\lfloor
 \frac{\log (\frac{1}{8} \log n/\log g)}{\log g} \right\rfloor 
 \stackrel{(a)}{\geq} \left\lfloor \frac{\log (\frac{1}{8} \log n/\log g)}{\log (\frac{1}{8} \log n/\log n)} \right\rfloor \stackrel{(b)}{\geq}  \left\lfloor  \frac{\log (\frac{1}{8} \log n/\log n)}{\log (\frac{1}{8} \log n/\log n)} \right\rfloor =1,
\end{align*}
where $(a)$ used that $g \leq \frac{1}{8} \log n/\log \log n$ and $(b)$ that $g \leq \log n$.

Next we define for any $k=1,2,\ldots,\ell$ the following event:
\[
 \mathcal{E}_k := \biggl\{ \Bigl| \bigl\{ i \in [n] \colon x_i^{k \cdot n} \geq k \cdot g \bigr\} \Bigr|
 \geq n \cdot g^{-\sum_{j=1}^{k} g^{j} } \biggr\}.
\]
The main goal of this proof is to show that $\mathcal{E}_{\ell}$ occurs with high probability. Assuming for the moment that $\mathcal{E}_{\ell}$ indeed occurs, let us verify that the lower bound on the gap follows. 
First, note that $\mathcal{E}_{\ell}$ implies the existence of a bin $i \in [n]$ with $x_i^m \geq \ell \cdot g$ and thus $\Gap(m) \geq x_i^m - \frac{m}{n} \geq \ell \cdot g - \ell = \ell \cdot (g-1)$, since by the choice of $\ell=\left\lfloor  \frac{\log (\frac{1}{8} \log n/\log g)}{\log g} \right\rfloor$,
\begin{align} \label{eq:g_to_4g_ell}
   g^{-\sum_{j = 1}^k g^j} \stackrel{(a)}{\geq} g^{-4g^{\ell}} \geq g^{- \frac{1}{2} \log n/\log g} = n^{-1/2},
\end{align}
where in $(a)$ we used that 
for any $g \geq 2$, 
\begin{align}
\sum_{j=1}^k g^j \leq 2 \sum_{j=0}^{\ell} g^j  = 2 \cdot \frac{g^{\ell+1}-1}{g-1} \leq   2  \cdot \frac{g^{\ell+1}}{\frac{1}{2} g} = 4 g^{\ell}. \label{eq:geo_series}
\end{align}

Thus $n \cdot g^{-\sum_{j=1}^{k} g^{j} } \geq 1$. Secondly, we verify that $\ell \cdot (g - 1) = \Omega\big(\frac{g}{\log g} \cdot \log \log n\big)$,
\begin{align}
 \ell \cdot (g-1) &\stackrel{(a)}{\geq} \frac{1}{4} \cdot \frac{\log (\frac{1}{8} \log n/\log g)}{\log g} \cdot g \notag \\
 & = \frac{1}{4} \cdot \frac{\log(1/8) + \log \log n - \log \log g}{\log g} \cdot g \notag \\
 &\stackrel{(b)}{\geq} \frac{1}{8} \cdot \frac{\log \log n}{\log g} \cdot g, \label{eq:ek_gap_lower_bound}
\end{align}
where $(a)$ holds since $\lfloor u \rfloor \geq u/2$ for $u \geq 1$ (as $\ell \geq 1$) and 
$g-1 \geq g/2$, since $g \geq 2$, and $(b)$ holds since for sufficiently large $n$, $(1/4) \cdot \log \log n \geq -\log(1/8)$ and $(1/4) \cdot \log \log n \geq \log \log g$ as $g \leq \log n$. 

In order to establish that $\mathcal{E}_{\ell}$ occurs \whp, we will proceed by induction and prove that for any $k \geq 1$, with $\epsilon_k := n^{-\omega(1)}$ and $\mathcal{E}_{0} := \Omega$,
\[
 \Pro{ \mathcal{E}_k \, \mid \, \mathcal{E}_{k-1} } \geq 1 - \epsilon_k.
\]

\textbf{Induction Base ($k=1$).} Here we consider the allocation of the first $n$ balls into $n$ bins. We are interested in the number of bins which reach load level $g$ during that phase. Note that as long as the loads of both sampled bins are smaller than $g$, the allocation follows that of \OneChoice; if one of the bins has a load which is already larger than $g$, then the load difference may force the process to place a ball in the lighter of the two bins. In the following, we will (pessimistically) assume that the allocation of \emph{all} $n$ balls follows \OneChoice. 

Instead of the original \OneChoice process which produces a load vector $(x_i^n)_{i \in [n]}$, 
we consider the Poisson Approximation (\cref{lem:poisson}) and analyze the load vector $(\tilde{x}_i^n)_{i \in [n]}$, where the $\tilde{x}_i^n$, $i \in [n]$ are independent Poisson random variables with mean $n/n=1$. Clearly, for any $i \in [n]$,
\begin{align*}
 \Pro{ \tilde{x}_i^n \geq g} &\geq \Pro{ \tilde{x}_i^n = g} = e^{-1} \cdot \frac{1^g}{g!} 
 \geq 2 \cdot g^{-g},
\end{align*}
using in the last inequality Stirling's approximation (e.g.~\cite[Lemma 5.8]{MU17}) and $g \geq 5$,
\[
g! \leq e \cdot g \cdot \left( \frac{g}{e} \right)^g = \frac{e^{-1}}{2} \cdot g^g  \cdot \left( \frac{2e^2 g}{e^g} \right) \leq \frac{e^{-1}}{2} \cdot g^g.
\]

Let $Y:=\left| \left\{ i \in [n] \colon \tilde{x}_i^n \geq g \right\} \right|$. Then $\Ex{Y} \geq 2 n \cdot g^{-g}$. By a standard Chernoff Bound,
\begin{align*}
\Pro{ Y \leq n \cdot g^{-g} } \leq \Pro{ Y \leq \frac{1}{2} \cdot \Ex{Y} } \leq \exp\left( -\frac{1}{8} \cdot \Ex{Y} \right) \leq n^{-\omega(1)},
\end{align*}
where we have used  $g^{-g} \geq g^{-4g^{\ell}} \geq n^{-1/2}$, due to \cref{eq:g_to_4g_ell}. Hence by the Poisson Approximation (\cref{lem:poisson}), $\Pro{ \mathcal{E}_1 } \geq 1 - 2 \cdot n^{-\omega(1)} = 1 - n^{-\omega(1)}$.

\textbf{Induction Step ($k-1 \longrightarrow k$, $k \geq 2$).} For the induction step, we analyze phase $k = 2, \ldots , \ell$ and we will lower bound $\Pro{ \mathcal{E}_{k} \, \mid \, \mathcal{E}_{k-1}}$. Assuming $\mathcal{E}_{k-1}$ occurs, there are at least $n \cdot g^{-\sum_{j=1}^{k-1} g^{j}}$ bins whose load is at least $(k-1) \cdot g$ at the beginning of phase $k$. Let us call such a set of bins $\mathcal{B}_{k-1}$, which can be assumed to satisfy with equality:
\begin{align}
| \mathcal{B}_{k-1} | = n \cdot g^{-\sum_{j=1}^{k-1} g^{j}}.  \label{eq:bk_size}
\end{align}
Next note that whenever we sample two bins $i_1,i_2$ from $\mathcal{B}_{k-1}$, we allocate the ball to a random bin among $\{i_1, i_2\}$ if both bins have not reached load level $k \cdot g$. 
Therefore, in order to lower bound the number of bins in $\mathcal{B}_{k-1}$ which reach load level $k \cdot g$ by the end of phase $k$, we may pessimistically assume that if two bins in $\mathcal{B}_{k-1}$ are sampled, the ball will be \emph{always} placed in a randomly sampled bin. Note that the probability that a ball will be allocated into the set $\mathcal{B}_{k-1}$ is lower bounded by
$
 \bigl( \frac{| \mathcal{B}_{k-1} |}{n} \bigr)^2. 
$
Let $Z$ denote the number of balls allocated to $\mathcal{B}_{k-1}$ in phase $k$. Then $\Ex{Z} \geq n \cdot \left( \frac{| \mathcal{B}_{k-1} |}{n} \right)^2$. %
Using a Chernoff Bound for $\tilde{m} := \frac{2}{3} \cdot n \cdot \left( \frac{| \mathcal{B}_{k-1} |}{n} \right)^2$,
\[
\Pro{Z \leq \tilde{m}} \leq \Pro{Z \leq \frac{2}{3} \cdot \Ex{Z}} \leq \exp\left( - \frac{1}{18} \cdot \Ex{Z}\right) \leq n^{-\omega(1)},
\]
where we used that
\[
 \Ex{Z} \geq n \cdot g^{-2 \sum_{j=1}^{k-1} g^j}
 \stackrel{(a)}{\geq} n \cdot g^{-2 g^{k}}
 \stackrel{(b)}{\geq} n \cdot n^{-1/2},
\]
using in $(a)$ that for any integer $g \geq 2$, 
$\sum_{j=1}^{k-1} g^j \leq \frac{g^{k}-1}{g-1} \leq g^{k}$, and in
$(b)$ that $k \leq \ell$ and the property of $\ell$ in \cref{eq:g_to_4g_ell}.

Conditioning on this event, in phase $k$ we have a \OneChoice process with $\tilde{m}$ balls into $\tilde{n}:=\left| \mathcal{B}_{k-1} \right|$ bins, which w.l.o.g.~will be labeled $1,2,\ldots,\tilde{n}$. Again, we apply the Poisson approximation and define $(\tilde{x}^{\tilde{m}})_{i \in [\tilde{n}]}$ as $\tilde{n}$ independent Poisson random variables with mean $\lambda$ given by
\[
 \lambda := \frac{ \tilde{m}}{\tilde{n}} = \frac{ \frac{2}{3} \cdot n \cdot \left( \frac{| \mathcal{B}_{k-1} |}{n} \right)^2}{|\mathcal{B}_{k-1}|} = \frac{2}{3} \cdot \frac{| \mathcal{B}_{k-1} |}{n} = \frac{2}{3} \cdot g^{-\sum_{j=1}^{k-1} g^j},
\]
using \cref{eq:bk_size}.
With that, it follows for any bin $i \in \mathcal{B}_{k-1}$,
\begin{align*}
 \Pro{ \tilde{x}_i \geq g} &
  \geq \Pro{ \tilde{x}_i = g} 
  = e^{-\lambda} \cdot \frac{\lambda^g}{g!} 
  \stackrel{(a)}{\geq} e^{-1} \cdot \frac{ \bigl( \frac{2}{3} \cdot g^{-\sum_{j=1}^{k-1} g^j } \bigr)^g}{g!} 
  \stackrel{(b)}{\geq} 2 \cdot g^{-\sum_{j=1}^{k} g^j },
\end{align*}
using in $(a)$ that $\lambda \leq 1$ and in $(b)$ Stirling's approximation (e.g.~\cite[Lemma 5.8]{MU17}) and that $g \geq 10$,
\[
g! 
 \leq e \cdot g \cdot \left( \frac{g}{e} \right)^g 
 = \frac{e^{-1}}{2} \cdot \left( \frac{2}{3} \right)^g \cdot g^g \cdot \left( 2e^2 g \cdot \left( \frac{3}{2e}\right)^g \right)
 \leq \frac{e^{-1}}{2} \cdot \left( \frac{2}{3} \right)^g \cdot g^{g}.
\]
Let $Y:=\left| \left\{ i \in [\tilde{n}] \colon \tilde{x}_i \geq g \right\} \right|$.
Then $\Ex{Y} \geq 2 \tilde{n} \cdot g^{-\sum_{j=1}^k g^j}$.
Thus by a Chernoff Bound,
\begin{align*}
\Pro{ Y \leq \tilde{n} \cdot g^{-\sum_{j=1}^k g^j } } \leq \Pro{ Y \leq \frac{1}{2} \cdot \Ex{Y} } \leq \exp\left( -\frac{1}{8} \cdot \Ex{Y} \right) \leq n^{-\omega(1)},
\end{align*}
where we used that
\begin{align*}
 \Ex{Y} \geq 2 \tilde{n} \cdot g^{-\sum_{j=1}^k g^j }  &= 2 n \cdot g^{-\sum_{j=1}^{k-1} g^j} \cdot g^{-\sum_{j=1}^{k} g^j} \geq 2 n \cdot g^{-2 \sum_{j=1}^{k} g^j} \stackrel{(\text{\ref{eq:geo_series}})}{\geq} 2 n \cdot g^{-4 g^k} \stackrel{(a)}{\geq} 
 2 n^{1/2},
\end{align*}
where in $(a)$ we used $k \leq \ell$ and \cref{eq:g_to_4g_ell}. Thus by the union bound and Poisson Approximation, $\Pro{ \mathcal{E}_k \, \mid \, \mathcal{E}_{k-1} } \geq 1 - n^{-\omega(1)} - 2 \cdot n^{-\omega(1)} = 1 - n^{-\omega(1)}$, which completes the induction. 

Finally, by a simple union bound,
\begin{align*}
 \Pro{ \mathcal{E}_{\ell } } &\geq 1 - \sum_{k=1}^{\ell} \Pro{ \neg \mathcal{E}_{k} \, \mid \, \mathcal{E}_{k-1} } \geq 1 - \ell \cdot n^{-\omega(1) } \geq 1 -n^{-\omega(1)}.
\end{align*}
As verified in \cref{eq:ek_gap_lower_bound}, $\mathcal{E}_{\ell}$ implies $\Gap(m) \geq \frac{1}{8} \cdot \frac{g}{\log g} \cdot \log \log n$, and therefore the proof is complete.
\end{proof}

Combining \cref{obs:g_adv_loglogn_lower_bound}, \cref{pro:g_myopic_g_lower_bound} and \cref{thm:g_myopic_layered_induction_lower_bound}, we get:
\begin{cor} \label{cor:g_myopic_combined_lower_bound}
Consider the \GMyopicComp process for any $g \geq 1$. Then, there exists an $m := m(g) \geq 0$, such that
\[
\Pro{\Gap(m) = \Omega\left(g + \frac{g}{\log g} \cdot \log \log n \right)} \geq 1 - n^{-1}.
\]
\end{cor}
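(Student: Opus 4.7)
The plan is to perform a case analysis on $g$ and, in each range, invoke whichever of the three previously established bounds (\cref{obs:g_adv_loglogn_lower_bound}, \cref{pro:g_myopic_g_lower_bound}, \cref{thm:g_myopic_layered_induction_lower_bound}) dominates, choosing $m := m(g)$ to be the specific value required by that bound. Since $\max\{a,b\} \geq (a+b)/2$, turning one of the two summands $g$ or $\frac{g}{\log g}\cdot \log\log n$ into a lower bound automatically delivers $\Omega(g + \frac{g}{\log g}\cdot \log\log n)$, provided we can argue that the other summand is no larger up to a constant factor.

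First, I would dispatch the small-$g$ regimes. For $1 \leq g \leq 9$ the quantities $g$ and $\frac{g}{\log g}$ are both $\Theta(1)$, so the target bound reduces to $\Omega(\log\log n)$; this is delivered by \cref{obs:g_adv_loglogn_lower_bound} at $m=n$ (noting that \GMyopicComp is an instance of the \GAdvComp setting). For $10 \leq g \leq \frac{1}{8}\cdot \frac{\log n}{\log\log n}$, \cref{thm:g_myopic_layered_induction_lower_bound} applied at $m = n\cdot \ell(g,n)$ yields $\Gap(m) \geq \frac{1}{8}\cdot \frac{g}{\log g}\cdot \log\log n$ \whp. In this range $\log g \leq \log\log n$, so $\frac{g}{\log g}\cdot \log\log n \geq g$, and hence the lower bound already matches the full sum $g + \frac{g}{\log g}\cdot \log\log n$ up to a constant factor.

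Finally I would handle large $g$ via \cref{pro:g_myopic_g_lower_bound}. For $\frac{1}{8}\cdot \frac{\log n}{\log\log n} < g \leq 6\log n$, part $(i)$ at $m = ng/2$ gives $\Gap(m) \geq g/35$; for $g > 6\log n$, part $(ii)$ at $m = ng^2/(32\log n)$ gives $\Gap(m) \geq g/60$. In both sub-cases $\log g = \Omega(\log\log n)$, so $\frac{g}{\log g}\cdot \log\log n = \Oh(g)$ and the $g$-summand already dominates the sum up to constants. The entire argument is elementary bookkeeping: the only subtlety is verifying, at the two transition points $g \asymp \log n/\log\log n$ and $g \asymp \log n$, that $\log g$ is comparable to $\log\log n$ so that the two summands are within constant factors of each other, which is exactly where the $\frac{g}{\log g}$ and $g$ bounds meet. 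Each ingredient has failure probability $\leq 2n^{-2}$, which comfortably absorbs into the $1 - n^{-1}$ target. I do not anticipate any genuine obstacle — the work has already been done in the three preceding statements, and the corollary is essentially their unification.
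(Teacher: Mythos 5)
Your proposal is correct and takes the same approach the paper intends: the paper gives no written proof and simply states the corollary as a combination of Observation~\ref{obs:g_adv_loglogn_lower_bound}, Proposition~\ref{pro:g_myopic_g_lower_bound}, and Theorem~\ref{thm:g_myopic_layered_induction_lower_bound}, and your case split ($g \leq 9$, $10 \leq g \leq \frac{1}{8}\log n/\log\log n$, $\frac{1}{8}\log n/\log\log n < g \leq 6\log n$, $g > 6\log n$) with the accompanying checks that the two summands are within constant factors at the transition points is exactly the bookkeeping the authors leave implicit. One pedantic remark: for $g = 1$ the expression $g/\log g$ is ill-defined, and for $g$ slightly above $1$ it can be large; the corollary is only sensible for integer $g$ with the convention implied by the paper's footnote for $g = 1$, but this is a notational subtlety of the statement, not a gap in your argument.
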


We proceed with two lower bounds for the \SigmaNoisyLoad process.
\begin{pro}  \label{pro:sigma_lower_bound}
Consider the \SigmaNoisyLoad process with $\rho(\delta) := 1 - \frac{1}{2} \cdot \exp \bigl( - ( \delta/\sigma)^2 \bigr)$ for some (not necessarily constant) $\sigma > 0$.
Then, $(i)$ for any $\sigma \geq 2 \cdot (\log n)^{-1/3}$,
\[
 \Pro{ \Gap(n) \geq \min\left\{ \frac{1}{8} \cdot (\log n)^{1/3}, \frac{1}{2} \sigma \cdot (\log n)^{1/3} \right\} }
 \geq 1-2n^{-1}.
\]
Further, $(ii)$ for any $\sigma \geq 32$, it holds for $m=\frac{1}{2} \sigma^{4/5} \cdot n$, 
\[
 \Pro{ \Gap(m) \geq \min \left\{ \frac{1}{2} \sigma^{4/5}, \frac{1}{30} \sigma^{2/5}  \cdot \sqrt{\log n}  \right\} } \geq 1-2n^{-2}.
\]
\end{pro}

\begin{proof}
\textit{First statement.} Let $\tau:= \inf\{t \geq 1 \colon \max_{i \in [n]} x_i^t \geq \sigma \cdot (\log n)^{1/3} \}$. If $\tau \leq n$, then
\[
 \Gap(n) \geq x_i^{n} - \frac{n}{n} \geq x_i^{\tau} - 1 \geq \sigma \cdot (\log n)^{1/3} - 1 \geq \frac{1}{2} \sigma \cdot (\log n)^{1/3}.
\]
Consider now the case $\tau > n$. For any step $t \leq \tau$, we will perform a ``sub-sampling'' of the correct comparison in (possibly) two stages as follows. Let $i_1 = i_1^t$ and $i_2 = i_2^t$ be the two sampled bins, and $\delta$ be their load difference. Let $Z_1 \sim \mathrm{Ber}(1- \exp(- (\delta/ \sigma)^2))$ and $Z_2 \sim \mathrm{Ber}(1/2)$ be two independent random variables, which can be thought as the outcome of two biased coin flips that will be used to determine whether the load comparison is correct. If $Z_1 = 1$, then the comparison is correct (regardless of what $Z_2$ is). However, if $Z_1=0$, then the comparison is correct if and only if $Z_2=1$. Overall, the probability of a correct comparison (if $\delta > 0$) is equal to
\begin{align*}
 \Pro{Z_1=1} + \Pro{Z_1=0} \cdot \Pro{Z_2 = 1 \, \mid \, Z_1=0} &= 1- \exp\left(- (\delta / \sigma)^2\right) +  \exp\left(- (\delta / \sigma)^2\right) \cdot \frac{1}{2} \\ &= 1- \frac{1}{2} \exp\left(- (\delta / \sigma)^2\right).
\end{align*}

Further, conditional on $Z_1=0$, the ball will be placed in a random bin among $\{i_1, i_2\}$.
Hence as long as $t \leq \tau$, we can couple the allocation of each ball by \GMyopicComp to an allocation by \OneChoice with probability at least $\exp(-(\delta/\sigma)^2) \geq \exp(- (\log n)^{2/3})$. Using a Chernoff bound, with probability $1-n^{-\omega(1)}$, we we can couple the allocation of at least $n/2 \cdot \exp( -(\log n)^{2/3})$ balls out of the first $n$ balls with that of \OneChoice. Consequently, using \cref{lem:one_choice_lower_bound_max_load_whp_lightly} 
the maximum load (and gap) is at least $\frac{1}{8}(\log n)^{1/3}$ with probability at least $1-n^{-1}$. Combining the two cases we get the claim.

\textit{Second Statement.}
Consider any $\sigma \geq 32$ and define the stopping time $\tau:=\inf\{t \geq 1 \colon \max_{i \in [n]} x_i^t \geq \sigma^{4/5} \}$.
Let $m:=\frac{1}{2} \sigma^{4/5} \cdot n$.
If $\tau \leq m$, then there is a bin $i \in [n]$ with $x_i^{\tau} \geq \sigma^{4/5}$, and
\[
 \Gap(m) \geq x_i^{m} - \frac{m}{n} \geq x_i^{\tau} - \frac{1}{2} \sigma^{4/5} \geq \frac{1}{2} \sigma^{4/5}. 
\]
Otherwise, in each step until $m$, the load difference between any two sampled bins is at most $\sigma^{4/5}$, and therefore each ball is coupled with the allocation of a \OneChoice process with probability at least $
\exp\big( - (\sigma^{4/5}/\sigma)^2 \big) \geq 1 - \frac{1}{\sigma^{2/5}}.
$

Let $X$ be the number of \OneChoice allocations in the first $m$ steps. Using the following standard Chernoff bound,
\[
\Pro{X \geq (1 - \delta) \cdot \Ex{X}} \geq 1 - e^{-\frac{1}{2} \delta^2 \cdot \ex{X}},
\]
with $\delta = \frac{1}{\sigma^{2/5}} \leq 1$ (since $\sigma \geq 1$) and $\Ex{X} \geq \left(1 - \frac{1}{\sigma^{2/5}} \right) \cdot m$, we get that
\[
\Pro{X \geq \frac{1}{2} \left(1 - 2 \cdot \frac{1}{\sigma^{2/5}} \right) \cdot \sigma^{4/5} \cdot n} \geq 1 - e^{-\frac{1}{2} \delta^2 \cdot \ex{X}} \geq 1 - n^{-\omega(1)}.
\]
Therefore, using \cref{lem:one_choice_cnlogn} (for $c := \frac{1}{2} \cdot \left(1 - 2 \cdot \frac{1}{\sigma^{2/5}}\right) \cdot \sigma^{4/5} \cdot \frac{1}{\log n} \geq \frac{1}{\log n}$), we get that
\begin{align*}
\Gap(m)
  & \geq \frac{1}{10} \cdot \sqrt{\frac{1}{2} \left(1 - \frac{2}{\sigma^{2/5}}\right) \cdot \sigma^{4/5} \cdot \log n} + \frac{1}{2} \left(1 - \frac{1}{\sigma^{2/5}}\right) \cdot \sigma^{4/5} 
   - \frac{m}{n} \\
  & = \frac{1}{10} \cdot \sqrt{\frac{1}{2} \left(1 - \frac{2}{\sigma^{2/5}}\right) \cdot \sigma^{4/5} \cdot \log n} + \frac{1}{2} \left(1 - \frac{1}{\sigma^{2/5}}\right) \cdot \sigma^{4/5} 
  - \frac{1}{2} \sigma^{4/5} \\
  & = \frac{1}{10} \cdot \sqrt{\frac{1}{2} \left(1 - \frac{2}{\sigma^{2/5}}\right) \cdot \sigma^{4/5} \cdot \log n} - \frac{1}{2} \sigma^{2/5} \\
  & \stackrel{(a)}{\geq} \frac{1}{20} \sigma^{2/5} \cdot \sqrt{\log n} - \frac{1}{2} \cdot \sigma^{2/5}  \\
  & \geq \frac{1}{30} \sigma^{2/5} \cdot \sqrt{\log n},
\end{align*}
using in $(a)$ that $\sigma \geq 32$. So, combining the two cases we get the claim.
\end{proof}

Finally, we will prove a simple (but tight) lower bound for \BBatch, which already holds in the first batch of $b$ balls, by a \OneChoice argument.

\begin{obs}\label{obs:two_choice_batching_lower_bound_small_b}
There exists a constant $\kappa > 0$, such that for the \BBatch process for any $b \in [n \cdot e^{\log^c n}, n \log n]$ with $c > 0$ any constant, it holds that
\[
\Pro{\Gap(b) \geq \kappa \cdot \frac{\log n}{\log\left( \frac{4n}{b} \cdot \log n \right)}} \geq 1 - n^{-1}.
\]
\end{obs}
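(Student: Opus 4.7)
The plan is to reduce the statement to a standard lower bound on the maximum load of \OneChoice, by observing that the first batch of \BBatch is indistinguishable from \OneChoice. Concretely, at the start of the process every bin has load~$0$, and in \BBatch the reported load of a bin is its load at the beginning of the current batch. Therefore, during the first batch, every pair of sampled bins $(i_1,i_2)$ satisfies $x_{i_1}^{0}=x_{i_2}^{0}=0$, so each comparison is a tie and is broken uniformly at random. Consequently each ball in this first batch is placed into a uniformly random bin among its two samples, which is itself a uniformly random bin in $[n]$. This gives a coupling under which the load vector $x^{b}$ of \BBatch after the first batch has exactly the same distribution as the load vector produced by \OneChoice after~$b$ allocations.

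Given this reduction, it suffices to invoke the standard high-probability lower bound on the maximum load of \OneChoice in the lightly-to-moderately loaded regime: for any $b \in [n \cdot e^{-\log^{c} n}, n \log n]$ and sufficiently large $n$, with probability at least $1-n^{-1}$,
\[
\max_{i \in [n]} x_{i}^{b} - \frac{b}{n} \;\geq\; \kappa \cdot \frac{\log n}{\log\bigl((4n/b)\cdot \log n\bigr)},
\]
for some absolute constant $\kappa>0$ (the result cited as \cref{lem:one_choice_lower_bound_gap_whp_lightly} in the paper's appendix). Since the average load after $b$ balls is exactly $b/n$, this directly yields the desired lower bound on $\Gap(b)$.

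The only thing that requires a short sanity check is that the quoted \OneChoice lower bound really produces the function $\log n / \log((4n/b)\log n)$ across the entire claimed range of $b$: at $b=n$ the denominator is $\log(4\log n)=\Theta(\log\log n)$, recovering the classical $\Theta(\log n/\log\log n)$ maximum-load bound; at $b=n\log n$ the denominator is $\Theta(1)$, recovering the $\Theta(\log n)$ bound that agrees with the concentration of \OneChoice in the heavily loaded regime; and for intermediate $b$ the expression interpolates smoothly. I would verify these two endpoints and a generic intermediate value against the standard Poisson-approximation / Chernoff lower bound in \cite{RS98}-style arguments, but no new technical idea is needed beyond the reduction itself.

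The only potential obstacle is that the appropriate \OneChoice lemma covers the full range $b \in [n \cdot e^{-\log^{c} n}, n \log n]$; for values near the upper end one typically uses a Poisson approximation together with a Chernoff-type lower tail on $\max_{i}\tilde{x}_{i}^{b}$, while near $b\approx n$ one uses the classical load-level argument as in \cite{ABKU99}. Since the paper already cites such a lemma in the appendix and uses it in the companion remark (\cref{rem:delay_polylog_n_lower_bound}), appealing to it directly keeps the proof to a couple of lines.
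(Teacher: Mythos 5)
Your proposal is correct and takes essentially the same route as the paper: both observe that during the first batch every bin reports load $0$, so every comparison is a tie broken uniformly at random and \BBatch coincides with \OneChoice on the first $b$ balls, and both then invoke the \OneChoice gap lower bound (\cref{lem:one_choice_lower_bound_gap_whp_lightly}). Your sanity checks of the endpoints of the range of $b$ are not in the paper's proof but are consistent with it; note also that the range in the observation's statement should read $b \in [n \cdot e^{-\log^c n}, n\log n]$, which you correctly use when invoking the \OneChoice lemma.
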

\begin{proof}
During the allocation of the first batch consisting of $b$ balls, the load information of all bins is never updated, i.e., their load ``estimate'' equals zero. Hence in these first $n$ steps, as ties are broken randomly, \BBatch behaves exactly like \OneChoice. Hence, by \cref{lem:one_choice_lower_bound_gap_whp_lightly} there exists a constant $\kappa > 0$,\[
\Pro{\Gap(b) \geq \kappa \cdot \frac{\log n}{\log\left( \frac{4n}{b} \cdot \log n \right)}} \geq 1 - n^{-1}. \qedhere
\]
\end{proof}

\section{Experimental Results} \label{sec:experiments}

In this section, we empirically analyze the gap for the \GBounded, \GMyopicComp and \SigmaNoisyLoad processes for various values of $g$, $\sigma$, $n$ and $m$ (\cref{fig:noisy_combined} and \cref{tab:g_combined_empirical_gap_distribution}). For sufficiently large $g$, the gap plots for \GBounded and \GMyopicComp are almost linear, as suggested by our upper bound of $\Theta(g)$ for any $g = \log^{\Omega(1)} n$. We also present the empirical gap for \BBatch for various values of batch size $b$ around $n$ and compare with the \OneChoice gap for the case of $b$ balls (\cref{fig:bbatch_vs_one_choice} and \cref{tab:batch_gap_distribution}). As our analysis in \cref{sec:upper_bounds_for_delay_settings} suggests, these gaps are close when $b$ is close to $n$ and for smaller $b$, e.g., when $b = n^{1-\Omega(1)}$, the gaps of the two processes diverge (\cref{rem:delay_very_very_lightly_loaded}).

\begin{figure}[H]
    \centering
    \includegraphics[scale=0.8]{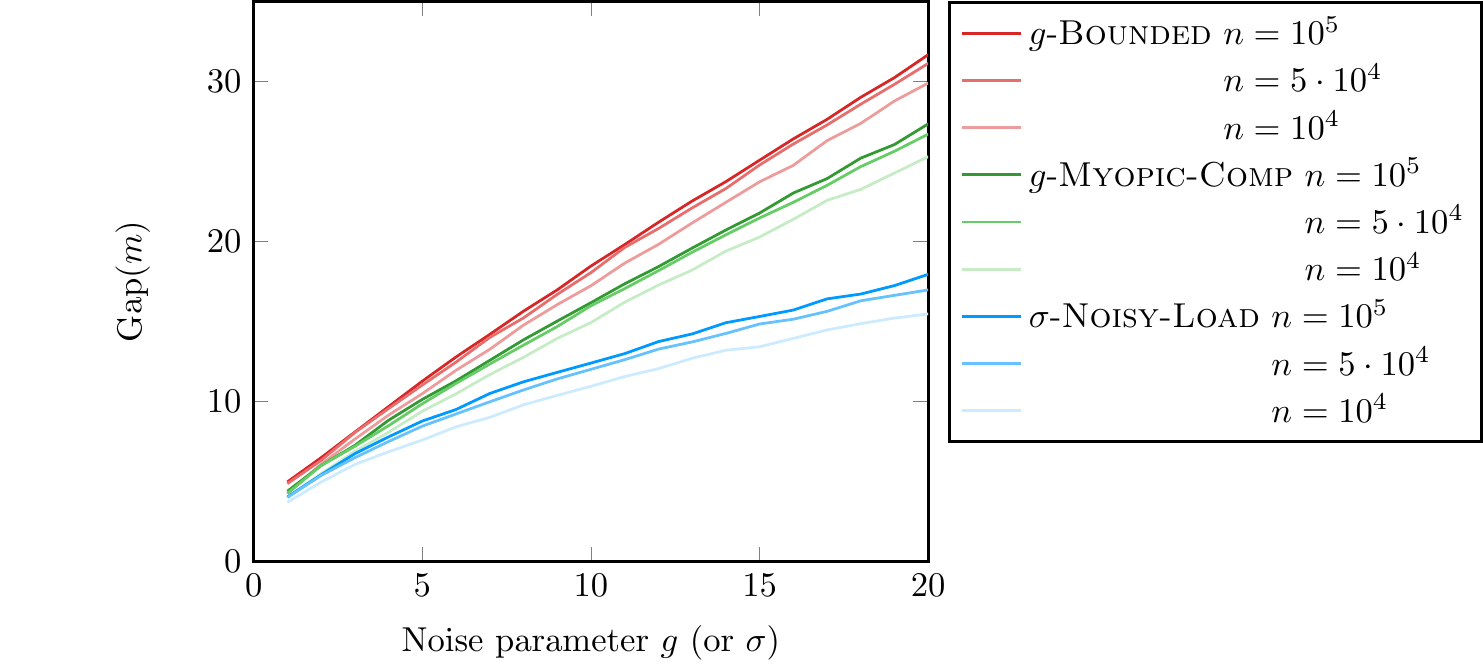}
    \caption{Average gap for the \GBounded and \GMyopicComp with $g \in \{1, \ldots, 20\}$ and \SigmaNoisyLoad with $\sigma \in \{1, \ldots, 20 \}$, for $n \in \{10^4, 5 \cdot 10^4, 10^5 \}$ and $m = 1000 \cdot n$, over $100$ runs.}
    \label{fig:noisy_combined}
\end{figure}

\begin{figure}[H]
    \centering
    \includegraphics[scale=0.8]{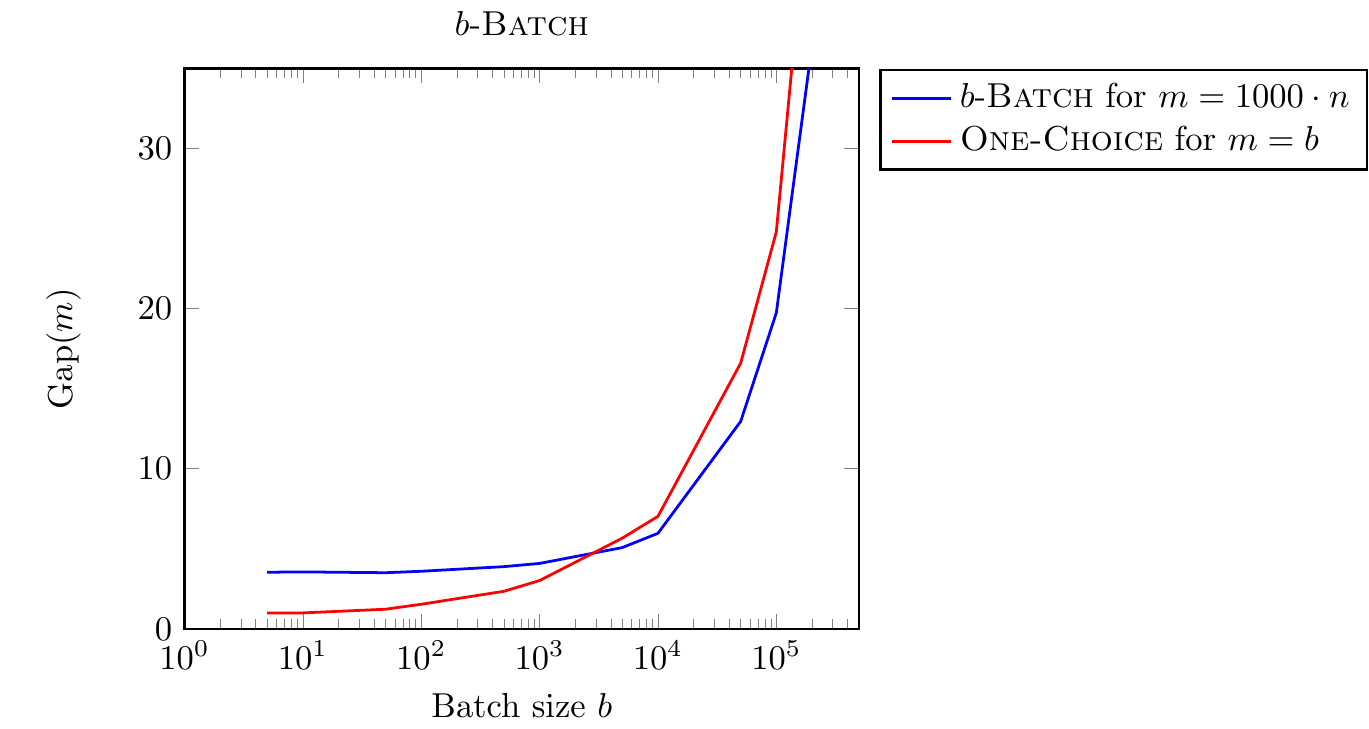}
    \caption{Average gap for the \BBatch process with $b \in \{ 5, 10, 5 \cdot 10, 10^2, \ldots 10^5, 5 \cdot 10^5 \}$, for $n = 10^4$ and $m = 1000 \cdot n$ and for $\OneChoice$ for $b$ balls, over $100$ runs.}
    \label{fig:bbatch_vs_one_choice}
\end{figure}

\colorlet{GA}{black!40!white}
\colorlet{GB}{black!70!white}
\colorlet{GC}{black}
\newcommand{\CA}[1]{\textcolor{GA}{#1}} %
\newcommand{\CB}[1]{\textcolor{GB}{#1}} %
\newcommand{\CC}[1]{\textcolor{GC}{#1}} %
\newcommand{\CI}[2]{\FPeval{\result}{min(30 + 3 * #1, 100)} \colorlet{tmpC}{black!\result!white} {\textcolor{tmpC}{#2}}}
\newcommand{\W}{\phantom{1}}

\begin{table}[H]
    \centering
   \footnotesize{
    \begin{tabular}{|c|c|c|c|c|c|c|c|}
\multicolumn{7}{c}{\bfseries \GBounded} \\
    \hline
$n$ & $g = 0$ & $g = 1$ & $g = 2$ & $g = 4$ & $g = 8$ & $g = 16$ \\ \hline
        $10^4$ &
\makecell{\CI{46}{\textbf{2} : 46\%}\\
\CI{54}{\textbf{3} : 54\%}} &%
\makecell{\CI{74}{\textbf{4} : 74\%}\\
\CI{26}{\textbf{5} : 26\%}}&
\makecell{
\CI{1}{\textbf{5} : \W1\%}\\
\CI{89}{\textbf{6} : 89\%}\\
\CI{10}{\textbf{7} : 10\%}} &
\makecell{
\CI{1}{\textbf{\W8} : \W1\%}\\
\CI{82}{\textbf{\W9} : 82\%}\\
\CI{17}{\textbf{10} : 17\%}} &
\makecell{
\CI{1}{\textbf{13} : \W1\%}\\
\CI{35}{\textbf{14} : 35\%}\\
\CI{51}{\textbf{15} : 51\%}\\
\CI{11}{\textbf{16} : 11\%}\\
\CI{2}{\textbf{17} : \W2\%}
 } &
\makecell{
\CI{4}{\textbf{23} : \W4\%}\\
\CI{37}{\textbf{24} : 37\%}\\
\CI{43}{\textbf{25} : 43\%}\\
\CI{11}{\textbf{26} : 11\%}\\
\CI{5}{\textbf{27} : \W5\%}
 } \\ \hline
        $5 \cdot 10^4$ &
\makecell{\CI{4}{\textbf{2} : \W4\%}\\
\CI{96}{\textbf{3} : 96\%}}&%
\makecell{\CI{13}{\textbf{4} : 13\%}\\
\CI{86}{\textbf{5} : 86\%}\\
\CI{1}{\textbf{6} : \W1\%}}&
\makecell{
\CI{67}{\textbf{6} : 67\%}\\
\CI{33}{\textbf{7} : 33\%}}&
\makecell{
\CI{46}{\textbf{\W9} : 46\%}\\
\CI{51}{\textbf{10} : 51\%}\\
\CI{3}{\textbf{11} : \W3\%}}&
\makecell{
\CI{3}{\textbf{14} : \W3\%}\\
\CI{72}{\textbf{15} : 72\%}\\
\CI{24}{\textbf{16} : 24\%}\\
\CI{1}{\textbf{17} : \W1\%}}&
\makecell{
\CI{25}{\textbf{25} : 25\%}\\
\CI{47}{\textbf{26} : 47\%}\\
\CI{23}{\textbf{27} : 23\%}\\
\CI{4}{\textbf{28} : \W4\%}\\
\CI{1}{\textbf{29} : \W1\%}}\\ \hline
$10^5$ &
\makecell{\CI{100}{\textbf{3} : 100\%}}&%
\makecell{\CI{1}{\textbf{4} : \W1\%}\\
\CI{99}{\textbf{5} : 99\%}}&
\makecell{
\CI{50}{\textbf{6} : 50\%}\\
\CI{50}{\textbf{7} : 50\%}}&
\makecell{
\CI{32}{\textbf{\W9} : 32\%}\\
\CI{67}{\textbf{10} : 67\%}\\
\CI{1}{\textbf{11} : \W1\%}} &
\makecell{
\CI{39}{\textbf{15} : 39\%}\\
\CI{57}{\textbf{16} : 57\%}\\
\CI{4}{\textbf{17} : \W4\%}} &
\makecell{
\CI{9}{\textbf{25} : \W9\%}\\
\CI{50}{\textbf{26} : 50\%}\\
\CI{33}{\textbf{27} : 33\%}\\
\CI{7}{\textbf{28} : \W7\%}\\
\CI{1}{\textbf{29} : \W1\%}}\\ \hline
    \end{tabular}
    }

   \footnotesize{
    \begin{tabular}{|c|c|c|c|c|c|c|c|}
\multicolumn{7}{c}{\bfseries \GMyopicComp} \\
    \hline
$n$ & $g = 0$ & $g = 1$ & $g = 2$ & $g = 4$ & $g = 8$ & $g = 16$ \\ \hline
        $10^4$ &
\makecell{
\CI{46}{\textbf{2} : 46\%} \\
\CI{54}{\textbf{3} : 54\%} } &
\makecell{
\CI{97}{\textbf{4} : 97\%} \\
\CI{3}{\textbf{5} : \W3\%} } &
\makecell{
\CI{49}{\textbf{5} : 49\%} \\
\CI{51}{\textbf{6} : 51\%} } &
\makecell{
\CI{2}{\textbf{7} : \W2\%} \\
\CI{87}{\textbf{8} : 87\%} \\
\CI{11}{\textbf{9} : 11\%} } &
\makecell{
\CI{37}{\textbf{12} : 37\%} \\
\CI{50}{\textbf{13} : 50\%} \\
\CI{12}{\textbf{14} : 12\%} \\
\CI{1}{\textbf{15} : \W1\%} } &
\makecell{
\CI{14}{\textbf{20} : 14\%} \\
\CI{47}{\textbf{21} : 47\%} \\
\CI{29}{\textbf{22} : 29\%} \\
\CI{8}{\textbf{23} : \W8\%} \\
\CI{2}{\textbf{25} : \W2\%} } \\ \hline
        $5 \cdot 10^4$ &
\makecell{
\CI{4}{\textbf{2} : \W4\% } \\
\CI{96}{\textbf{3} : 96\% }} &
\makecell{
\CI{73}{\textbf{4} : 73\%} \\
\CI{27}{\textbf{5} : 27\%} } &
\makecell{
\CI{1}{\textbf{5} : \W1\%} \\
\CI{97}{\textbf{6} : 97\%} \\
\CI{2}{\textbf{7} : \W2\%} } &
\makecell{
\CI{50}{\textbf{8} : 50\%} \\
\CI{50}{\textbf{9} : 50\%} } &
\makecell{
\CI{1}{\textbf{12} : \W1\%} \\
\CI{50}{\textbf{13} : 50\%} \\
\CI{44}{\textbf{14} : 44\%} \\
\CI{5}{\textbf{15} : \W5\%}  } &
\makecell{
\CI{10}{\textbf{21} : 10\%} \\
\CI{44}{\textbf{22} : 44\%} \\
\CI{39}{\textbf{23} : 39\%} \\
\CI{6}{\textbf{24} : \W6\%} \\
\CI{1}{\textbf{26} : \W1\%} } \\ \hline
        $10^5$ &
\makecell{
\CI{100}{\textbf{3} : 100\% }} & 
\makecell{
\CI{59}{\textbf{4} : 59\%} \\
\CI{41}{\textbf{5} : 41\%} } &
\makecell{
\CI{99}{\textbf{6} : 99\%} \\
\CI{1}{\textbf{7} : \W1\%} } &
\makecell{
\CI{19}{\textbf{\W8} : 19\%} \\
\CI{78}{\textbf{\W9} : 78\%} \\
\CI{3}{\textbf{10} : \W3\%} } &
\makecell{
\CI{21}{\textbf{13} : 21\%} \\
\CI{72}{\textbf{14} : 72\%} \\
\CI{7}{\textbf{15} : \W7\%}  } &
\makecell{
\CI{24}{\textbf{22} : 24\%} \\
\CI{51}{\textbf{23} : 51\%} \\
\CI{24}{\textbf{24} : 24\%} \\
\CI{1}{\textbf{26} : \W1\%} } \\ \hline
    \end{tabular}
}

   \footnotesize{
    \begin{tabular}{|c|c|c|c|c|c|c|c|}
    \multicolumn{7}{c}{\bfseries \SigmaNoisyLoad} \\
    \hline
$n$ & $\sigma = 0$ & $\sigma = 1$ & $\sigma = 2$ & $\sigma = 4$ & $\sigma = 8$ & $\sigma = 16$ \\ \hline
        $10^4$ &
\makecell{
\CI{46}{\textbf{2} : 46\%} \\
\CI{54}{\textbf{3} : 54\%}
 } &
\makecell{
\CI{29}{\textbf{3} : 29\%} \\
\CI{71}{\textbf{4} : 71\%} } &
\makecell{
\CI{9}{\textbf{4} : \W9\%} \\
\CI{84}{\textbf{5} : 84\%} \\
\CI{7}{\textbf{6} : \W7\%} } &
\makecell{
\CI{20}{\textbf{6} : 20\%} \\
\CI{73}{\textbf{7} : 73\%} \\
\CI{7}{\textbf{8} : \W7\%} } &
\makecell{
\CI{36}{\textbf{\W9} : 36\%} \\
\CI{50}{\textbf{10} : 50\%} \\
\CI{12}{\textbf{11} : 12\%} \\
\CI{2}{\textbf{12} : \W2\%} } &
\makecell{
\CI{2}{\textbf{12} : \W2\%} \\
\CI{33}{\textbf{13} : 33\%} \\
\CI{42}{\textbf{14} : 42\%} \\
\CI{16}{\textbf{15} : 16\%} \\
\CI{6}{\textbf{16} : \W6\%} \\
\CI{1}{\textbf{18} : \W1\%} } \\ \hline
        $5 \cdot 10^4$ &
\makecell{
\CI{4}{\textbf{2} : \W4\% } \\
\CI{96}{\textbf{3} : 96\% }
} &
\makecell{
\CI{98}{\textbf{4} : 98\%} \\
\CI{2}{\textbf{5} : \W2\%} } &
\makecell{
\CI{61}{\textbf{5} : 61\%} \\
\CI{39}{\textbf{6} : 39\%} } &
\makecell{
\CI{51}{\textbf{\W7} : 51\%} \\
\CI{48}{\textbf{\W8} : 48\%} \\
\CI{1}{\textbf{10} : \W1\%} } &
\makecell{
\CI{1}{\textbf{\W9} : \W1\%} \\
\CI{37}{\textbf{10} : 37\%} \\
\CI{52}{\textbf{11} : 52\%} \\
\CI{8}{\textbf{12} : \W8\%} \\
\CI{2}{\textbf{13} : \W2\%}  } &
\makecell{
\CI{24}{\textbf{14} : 24\%} \\
\CI{45}{\textbf{15} : 45\%} \\
\CI{24}{\textbf{16} : 24\%} \\
\CI{6}{\textbf{17} : \W6\%} \\
\CI{1}{\textbf{18} : \W1\%} } \\ \hline
        $10^5$ &
\makecell{
\CI{100}{\textbf{3} : 100\% }
} & 
\makecell{
\CI{95}{\textbf{4} : 95\%} \\
\CI{5}{\textbf{5} : \W5\%} } &
\makecell{
\CI{58}{\textbf{5} : 58\%} \\
\CI{41}{\textbf{6} : 41\%} \\
\CI{1}{\textbf{7} : \W1\%} } &
\makecell{
\CI{26}{\textbf{\W7} : 26\%} \\
\CI{69}{\textbf{\W8} : 69\%} \\
\CI{4}{\textbf{\W9} : \W4\%} \\
\CI{1}{\textbf{10} : \W1\%} } &
\makecell{
\CI{13}{\textbf{10} : 13\%} \\
\CI{56}{\textbf{11} : 56\%} \\
\CI{26}{\textbf{12} : 26\%} \\
\CI{4}{\textbf{13} : \W4\%} \\
\CI{1}{\textbf{14} : \W1\%}  } &
\makecell{
\CI{1}{\textbf{14} : \W1\%} \\
\CI{49}{\textbf{15} : 49\%} \\
\CI{35}{\textbf{16} : 35\%} \\
\CI{8}{\textbf{17} : \W8\%} \\
\CI{6}{\textbf{18} : \W6\%} \\
\CI{1}{\textbf{19} : \W1\%} } \\ \hline
    \end{tabular}
    }
    \caption{Empirical gap distribution for \GBounded, \GMyopicComp and \SigmaNoisyLoad with  $g, \sigma \in \{ 0, 1, 2, 4, 8, 16 \}$, for $n \in \{ 10^4, 5 \cdot 10^4, 10^5\}$ and $m = 1000 \cdot n$, over $100$ runs.}

    \label{tab:g_combined_empirical_gap_distribution}
\end{table}

\begin{table}[H]
    \centering
   \footnotesize{
\begin{tabular}{|c|c|c|c|c|c|c|}
\multicolumn{6}{c}{\bfseries \BBatch/\OneChoice} \\
\hline
Process & $b = 10$ & $b = 10^2$ & $b = 10^3$ & $b = 10^4$ & $b = 10^5$ \\ \hline
\makecell{$b$-\textsc{Batch}\\$m = 1000 \cdot n$} &
\makecell{
\CI{44}{\textbf{3} : 44\%} \\
\CI{56}{\textbf{4} : 56\%} } &
\makecell{
\CI{40}{\textbf{3} : 40\%} \\
\CI{60}{\textbf{4} : 60\%} } &
\makecell{
\CI{91}{\textbf{4} : 91\%} \\
\CI{9}{\textbf{5} : \W9\%} } &
\makecell{
\CI{29}{\textbf{5} : 29\%} \\
\CI{49}{\textbf{6} : 49\%} \\
\CI{18}{\textbf{7} : 18\%} \\
\CI{4}{\textbf{8} : \W4\%}  } &
\makecell{
\CI{1}{\textbf{16} : \W1\%} \\
\CI{8}{\textbf{17} : \W8\%} \\
\CI{15}{\textbf{18} : 15\%} \\
\CI{28}{\textbf{19} : 28\%} \\
\CI{18}{\textbf{20} : 18\%} \\
\CI{12}{\textbf{21} : 12\%} \\
\CI{14}{\textbf{22} : 14\%} \\
\CI{1}{\textbf{24} : \W1\%} \\
\CI{2}{\textbf{25} : \W2\%} \\
\CI{1}{\textbf{26} : \W1\%} } \\ \hline
\makecell{\textsc{One-Choice}\\$m = b$ balls} &
\makecell{
\CI{100}{\textbf{1} : 100\%}
} &
\makecell{
\CI{47}{\textbf{1} : 47\%} \\
\CI{52}{\textbf{2} : 52\%} \\
\CI{1}{\textbf{3} : \W1\%} } &
\makecell{
\CI{5}{\textbf{2} : \W5\%} \\
\CI{88}{\textbf{3} : 88\%} \\ 
\CI{7}{\textbf{4} : \W7\%} } &
\makecell{
\CI{22}{\textbf{6} : 22\%} \\
\CI{56}{\textbf{7} : 56\%} \\
\CI{19}{\textbf{8} : 19\%} \\
\CI{3}{\textbf{9} : \W3\%} } &
\makecell{
\CI{2}{\textbf{21} : \W2\%} \\
\CI{12}{\textbf{22} : 12\%} \\
\CI{13}{\textbf{23} : 13\%} \\
\CI{21}{\textbf{24} : 21\%} \\
\CI{18}{\textbf{25} : 18\%} \\
\CI{17}{\textbf{26} : 17\%} \\
\CI{4}{\textbf{27} : \W4\%} \\
\CI{8}{\textbf{28} : \W8\%} \\
\CI{4}{\textbf{29} : \W4\%} \\
\CI{1}{\textbf{31} : \W1\%} } \\ \hline
    \end{tabular}
    }
    \caption{Empirical gap distribution for \BBatch with $b \in \{ 5, 10, 5 \cdot 10, 10^2, \ldots 10^5, 5 \cdot 10^5 \}$, for $n = 10^4$ and $m=1000 \cdot n$ and for \OneChoice with $m = b$, over $100$ runs.}

    \label{tab:batch_gap_distribution}
\end{table}

\section{Conclusions}\label{sec:conclusion}

In this work, we proposed a number of different noise settings, and analyzed the gap of the \TwoChoice process. As our main result, we established for all $g \geq 1$ the asymptotic value of $\Gap(\cdot)$ in the adversarial setting \GAdvComp. This shows that $\Gap(\cdot)$ undergoes a delicate phase transition at the point $g=\log^{\Theta(1)} n$, around which the impact of $g$ on the gap switches from superlinear to linear. Further, we presented tight bounds for the delay and batched settings for a range of batch sizes $b$ and delay parameters $\tau$, where load information of the bins may be stale. 
Finally, we also proposed a smoothed probabilistic setting where each load comparison between two bins $i_1$ and $i_2$ is correct with some probability $\rho(\delta)$, where $\delta = |x_{i_1}^t-x_{i_2}^t|$ is their load difference.
Assuming that the load estimates of bins are independent Gaussian perturbations leads to $\rho(\delta)=1-\frac{1}{2} \exp(-(\delta/\sigma)^2)$. For this choice of $\rho$, we proved that the gap is some function that is polynomial in $\sigma$ and logarithmic in $n$. 

There are several directions for future work. One interesting avenue is establishing tight(er) bounds for \SigmaNoisyLoad or other choices of $\rho$, as well as determining tight(er) bounds for the remaining values of $b$ and $\tau$ in the delay settings. Also, it would be interesting to investigate the noisy setting for other balanced allocations processes, such as the \MeanThinning or the $(1+\beta)$ process.

\clearpage

\bibliographystyle{ACM-Reference-Format-CAM}
\bibliography{bibliography}

\appendix

\clearpage

\section{Analysis Tools}

\subsection{Concentration Inequalities}

We state the following well-known concentration inequalities.

\begin{lem}[\textbf{Azuma's Inequality for Super-Martingales {\cite[Problem 6.5]{DubPan}}}] \label{lem:azuma}
Consider a super-martingale $X^0, \ldots, X^N$ satisfying $|X^{i} - X^{i-1}| \leq a_i$ for any $i \in [N]$, then for any $\lambda > 0$,
\[
\Pro{X^N \geq X^0 + \lambda} \leq \exp\left(- \frac{\lambda^2}{2 \cdot \sum_{i=1}^N a_i^2} \right).
\]
\end{lem}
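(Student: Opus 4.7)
The plan is to use the standard Chernoff/MGF approach applied to the increment sequence $D_i := X^i - X^{i-1}$. For any $t > 0$, Markov's inequality gives
\[
\Pro{X^N - X^0 \geq \lambda} \;=\; \Pro{e^{t(X^N - X^0)} \geq e^{t\lambda}} \;\leq\; e^{-t\lambda}\,\Ex{e^{t(X^N - X^0)}},
\]
so the task reduces to bounding the moment generating function of $X^N - X^0 = \sum_{i=1}^N D_i$. I will do this by iteratively conditioning: write
\[
\Ex{e^{t\sum_{i=1}^N D_i}} \;=\; \Ex{e^{t\sum_{i=1}^{N-1} D_i}\,\Ex{e^{tD_N} \,\big|\, \mathfrak{F}_{N-1}}},
\]
and similarly peel off the inner factors one at a time.

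The key step is therefore a pointwise bound of the form
\[
\Ex{e^{tD_i} \,\big|\, \mathfrak{F}_{i-1}} \;\leq\; e^{t^2 a_i^2 / 2}\qquad(t > 0),
\]
which is the standard Hoeffding lemma in the super-martingale setting. To see it, let $\tilde D_i := D_i - \Ex{D_i \mid \mathfrak{F}_{i-1}}$, which is conditionally mean-zero and, since $|D_i| \leq a_i$ and $|\Ex{D_i \mid \mathfrak{F}_{i-1}}| \leq a_i$, lies in an interval of length at most $2a_i$. Hoeffding's lemma for bounded mean-zero variables then yields $\Ex{e^{t\tilde D_i} \mid \mathfrak{F}_{i-1}} \leq e^{t^2(2a_i)^2/8} = e^{t^2 a_i^2/2}$. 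Multiplying back by $e^{t\Ex{D_i \mid \mathfrak{F}_{i-1}}} \leq 1$, which holds because $t > 0$ and the super-martingale property gives $\Ex{D_i \mid \mathfrak{F}_{i-1}} \leq 0$, produces the claimed MGF bound.

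Iterating this from $i = N$ down to $i = 1$ yields $\Ex{e^{t(X^N - X^0)}} \leq \exp\!\bigl(t^2 \sum_{i=1}^N a_i^2 / 2\bigr)$, and substituting back into the Markov bound gives
\[
\Pro{X^N - X^0 \geq \lambda} \;\leq\; \exp\!\left(-t\lambda + \tfrac{t^2}{2}\sum_{i=1}^N a_i^2\right).
\]
Optimizing in $t$ by choosing $t = \lambda/\sum_{i=1}^N a_i^2 > 0$ produces the desired inequality. There is no genuine obstacle: the only delicate point is the sign of the conditional mean, and this is precisely where the super-martingale hypothesis enters, allowing $e^{t\Ex{D_i \mid \mathfrak{F}_{i-1}}} \leq 1$ for $t > 0$, so that the one-sided tail bound for martingales extends verbatim to super-martingales.
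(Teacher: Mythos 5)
Your proof is correct and is the standard Chernoff-plus-Hoeffding-lemma argument; the paper does not prove this lemma itself, citing it to the textbook exercise \cite[Problem 6.5]{DubPan}, where this is exactly the expected derivation. You correctly isolate the only place the super-martingale hypothesis is used, namely that $e^{t\,\Ex{D_i\mid\mathfrak{F}_{i-1}}}\leq 1$ requires both $t>0$ and the nonpositive conditional drift, so the one-sided tail bound carries over from the martingale case.
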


\begin{lem}[\textbf{\cite[Theorems~6.1 \& 6.5]{CL06}}] \label{lem:cl06_thm_6_1}
Consider a martingale $X^0, \ldots, X^N$ with filtration $\mathfrak{F}^0, \ldots, \mathfrak{F}^N$ satisfying $|X^{i} - X^{i-1}| \leq M$ and $\Var{X^i \mid \mathfrak{F}^{i-1}} \leq \sigma_i^2$ for any $i \in [N]$, then for any $\lambda > 0$,
\[
\Pro{\left| X^N - \Ex{X^N}\right| \geq \lambda} \leq 2 \cdot \exp\left(- \frac{\lambda^2}{2 \cdot (\sum_{i=1}^N \sigma_i^2 + M\lambda/3 )} \right).
\]
\end{lem}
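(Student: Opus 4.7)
The plan is to prove this Bernstein-type concentration inequality for martingales by the standard exponential-moment (Chernoff) method, adapted to the martingale setting. Since a martingale has $\Ex{X^N} = \Ex{X^0}$, it suffices (by conditioning on $\mathfrak{F}^0$) to bound $\Pro{|X^N - X^0| \geq \lambda}$. The two-sided bound will follow from the one-sided bound applied to the martingales $(X^i)$ and $(-X^i)$ together with a union bound; this is the source of the factor $2$ in the statement.

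First, I would consider the difference sequence $Y^i := X^i - X^{i-1}$, which satisfies $\Ex{Y^i \mid \mathfrak{F}^{i-1}} = 0$, $|Y^i| \leq M$, and $\Ex{(Y^i)^2 \mid \mathfrak{F}^{i-1}} \leq \sigma_i^2$. The core step is a per-step moment generating function bound: for any $\theta \in (0, 3/M)$,
\[
\Ex{e^{\theta Y^i} \mid \mathfrak{F}^{i-1}} \leq \exp\!\left( \frac{\theta^2 \sigma_i^2 / 2}{1 - \theta M/3} \right).
\]
This is obtained from the elementary analytic inequality $e^x - 1 - x \leq \frac{x^2/2}{1 - x/3}$ valid for $x < 3$, applied pointwise with $x = \theta Y^i$, then taking conditional expectation and using $\Ex{Y^i \mid \mathfrak{F}^{i-1}} = 0$ and $\Ex{(Y^i)^2 \mid \mathfrak{F}^{i-1}} \leq \sigma_i^2$, and finally bounding $1 + u \leq e^u$.

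Second, iterating this bound by repeated conditioning on $\mathfrak{F}^{N-1}, \mathfrak{F}^{N-2}, \ldots, \mathfrak{F}^0$, and writing $V := \sum_{i=1}^N \sigma_i^2$, I obtain
\[
\Ex{\exp\bigl(\theta (X^N - X^0)\bigr)} \leq \exp\!\left( \frac{\theta^2 V / 2}{1 - \theta M / 3} \right).
\]
Markov's inequality then yields, for every admissible $\theta$,
\[
\Pro{X^N - X^0 \geq \lambda} \leq \exp\!\left( -\theta \lambda + \frac{\theta^2 V / 2}{1 - \theta M / 3} \right).
\]
Optimizing by choosing $\theta := \lambda / (V + M\lambda/3)$ (which indeed lies in $(0, 3/M)$) makes the exponent equal to $-\lambda^2 / \bigl( 2(V + M\lambda/3) \bigr)$, giving the one-sided tail bound. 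The two-sided statement follows by applying the same argument to $-X^i$ and taking a union bound.

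The main technical obstacle is the analytic inequality $e^x - 1 - x \leq \tfrac{x^2/2}{1 - x/3}$ and the accompanying verification that the optimizing $\theta$ lies within the range of validity; both require careful calculation but are standard. A more conceptual subtlety is that the MGF bound must be applied \emph{conditionally}, so the iteration must be written in the form $\Ex{\exp(\theta \sum_{i \leq N} Y^i)} = \Ex{\exp(\theta \sum_{i \leq N-1} Y^i) \cdot \Ex{e^{\theta Y^N} \mid \mathfrak{F}^{N-1}}}$, which relies on $X^{N-1}$ being $\mathfrak{F}^{N-1}$-measurable — a triviality here but worth stating, since it is exactly what permits the bound to be a martingale inequality rather than merely a bound for sums of independent random variables.
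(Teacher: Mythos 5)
The paper does not prove this lemma; it imports it directly from Chung and Lu \cite[Theorems~6.1 \& 6.5]{CL06}, so there is no internal argument to compare against. What you give is the standard Bernstein-type Chernoff proof for martingale differences, and the structure — per-step conditional MGF bound, tower-rule iteration, Markov with the optimizer $\theta = \lambda/(V + M\lambda/3)$, and a union bound for the two tails — is correct.

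Two points deserve to be made explicit. First, the reduction from $\Pro{|X^N - \Ex{X^N}| \geq \lambda}$ to $\Pro{|X^N - X^0| \geq \lambda}$ implicitly assumes $X^0$ is deterministic (equivalently, $\mathfrak{F}^0$ is trivial), so that $X^0 = \Ex{X^0} = \Ex{X^N}$. That is the standard convention behind the Chung--Lu theorems and it holds in the paper's application (where $Z^0 = 0$); without it the stated bound is false in general, since $\sum_i \sigma_i^2$ does not control the spread of a random $X^0$. Second, your derivation of the per-step MGF bound names $\Ex{Y^i \mid \mathfrak{F}^{i-1}} = 0$, $\Ex{(Y^i)^2 \mid \mathfrak{F}^{i-1}} \leq \sigma_i^2$, and $1 + u \leq e^u$, but omits the step where $M$ actually enters: after applying $e^x - 1 - x \leq (x^2/2)/(1 - x/3)$ with $x = \theta Y^i$, you must use $Y^i \leq M$ (from the bounded-difference hypothesis) to replace the random denominator $1 - \theta Y^i/3$ by the constant $1 - \theta M/3$, which is what lets the subsequent conditional expectation give $\frac{\theta^2 \sigma_i^2/2}{1 - \theta M/3}$. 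With that step spelled out — and the routine verification that the pointwise inequality holds on all of $(-\infty, 3)$, e.g.\ because the difference of the two sides is convex there with value and derivative zero at the origin — the argument is complete.
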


\subsection{Facts about \OneChoice}

In this part, we collect several (mostly folklore) results for the maximum load and the gap of \OneChoice. In contrast to e.g., \cite{RS98}, we are aiming to obtain bounds on the gap with probability at least $1 - n ^{-6}$ and not just with probability at least $1 - o(1)$, but which are not necessarily tight up to lower order terms. Furthermore, the bounds in \cite{RS98} hold for any $m = \Omega\big(\frac{n}{\polylog(n)} \big)$, but we wish to also cover $m = n \cdot \exp\big(- \log^{\Oh(1)} n\big)$. We use these bounds to analyze the \BBatch and \TauDelay settings for various values of $b$ and $\tau$ in \cref{sec:g_adv_delay_noise_settings}.

We first restate the so-called Poisson Approximation method.
\begin{lem}[\textbf{\cite[Corollary 5.11]{MU17}}]\label{lem:poisson}
Let $(x^T)_{i \in [n]}$ be the load vector after $T$ steps of \OneChoice. Further, let $(\tilde{x}^T)_{i \in [n]}$ be $n$ independent Poisson random variables with parameter $\lambda=T/n$ each. Further, let $\mathcal{E}$ be any event which is determined by $x^T$, and further assume that $\Pro{ \mathcal{E}}$ is either monotonically increasing in $T$ or monotonically decreasing in $T$. Further, let $\tilde{\mathcal{E}}$ be the corresponding event determined by $\tilde{x}^T$. Then,
\[
 \Pro{ \mathcal{E}} \leq 2 \cdot \Pro{ \tilde{\mathcal{E}}}.
\]
\end{lem}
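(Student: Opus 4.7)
\textbf{Proof proposal for \cref{lem:poisson}.}
The plan is to set up the standard coupling between the multinomial \OneChoice load vector and an independent Poisson vector conditioned on its sum, and then exploit the monotonicity assumption together with the fact that a Poisson random variable with positive integer mean lies on either side of its mean with probability at least $1/2$.

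First, I would recall the key distributional identity: if $\tilde{x}_1,\ldots,\tilde{x}_n$ are independent with $\tilde{x}_i \sim \mathrm{Poisson}(T/n)$, then their sum $N := \sum_{i=1}^n \tilde{x}_i$ is $\mathrm{Poisson}(T)$, and, crucially, conditional on $\{N = k\}$, the vector $(\tilde{x}_1,\ldots,\tilde{x}_n)$ is distributed as $\mathrm{Multinomial}(k;1/n,\ldots,1/n)$. But this is exactly the distribution of the load vector $(x_1^{k},\ldots,x_n^{k})$ produced by $k$ steps of \OneChoice. Hence, writing $\mathcal{E}(k)$ for the event $\mathcal{E}$ evaluated on the \OneChoice load vector after $k$ balls (so that $\Pro{\mathcal{E}} = \Pro{\mathcal{E}(T)}$), we obtain the decomposition
\[
\Pro{\tilde{\mathcal{E}}} = \sum_{k \geq 0} \Pro{N = k} \cdot \Pro{\mathcal{E}(k)}.
\]

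Next I would split by the monotonicity hypothesis. Suppose $\Pro{\mathcal{E}(\cdot)}$ is non-decreasing in $T$. Then for every $k \geq T$ we have $\Pro{\mathcal{E}(k)} \geq \Pro{\mathcal{E}(T)} = \Pro{\mathcal{E}}$, so restricting the sum above to $k \geq T$ gives
\[
\Pro{\tilde{\mathcal{E}}} \;\geq\; \Pro{N \geq T} \cdot \Pro{\mathcal{E}}.
\]
The symmetric argument in the non-increasing case instead restricts to $k \leq T$ and yields $\Pro{\tilde{\mathcal{E}}} \geq \Pro{N \leq T} \cdot \Pro{\mathcal{E}}$.

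To finish, in both cases it suffices to show $\Pro{N \geq T} \geq 1/2$ and $\Pro{N \leq T} \geq 1/2$, where $N \sim \mathrm{Poisson}(T)$ with $T \in \mathbb{N}_{\geq 1}$. This is the classical fact that a Poisson with positive integer mean $T$ has median equal to $T$; equivalently, the mass at $T$ is large enough that neither tail falls below one half. I would cite the standard computation (e.g.~via the recursion $\Pro{N=k+1}/\Pro{N=k} = T/(k+1)$, which shows that the mode is at $\{T-1,T\}$ and the distribution is ``almost symmetric'' about $T$), or simply quote the known median characterization. Plugging this bound into the two inequalities above gives $\Pro{\mathcal{E}} \leq 2 \Pro{\tilde{\mathcal{E}}}$ in either monotonicity regime.

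The only genuinely nontrivial ingredient is the Poisson median fact; everything else is a clean one-line coupling. The rest is bookkeeping, so I would keep the exposition short and cite the textbook proof for the median statement rather than re-derive it.
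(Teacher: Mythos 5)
The paper does not prove this lemma — it is cited directly from Mitzenmacher and Upfal's textbook as \cite[Corollary 5.11]{MU17} — so there is no in-paper proof to compare against. Your proposal correctly reconstructs the standard textbook argument: condition the independent Poisson vector on its sum $N$ to recover the multinomial (i.e., \OneChoice) law, use monotonicity to restrict the decomposition to $k \geq T$ (or $k \leq T$), and close with the fact that a Poisson random variable with positive integer mean $T$ satisfies $\Pro{N \geq T} \geq 1/2$ and $\Pro{N \leq T} \geq 1/2$. This is exactly how MU17 derive Corollary 5.11 from their Theorem 5.10 machinery, and every step you give is sound. The one ingredient worth being careful to state precisely is the Poisson median fact: it is the claim that for integer $\lambda$ the median of $\mathrm{Poisson}(\lambda)$ is exactly $\lambda$, a classical result (Teicher 1955), and it does need $\lambda$ to be a nonnegative integer — which holds here since $T$ is a step count. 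In short: correct, same approach as the source the paper cites.
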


\begin{lem} \label{clm:one_choice_mgf}
Consider the \OneChoice process, for any $\alpha > 0$, any bin $i \in [n]$, and any step $m \geq 0$,
\[
\ex{e^{\alpha x_i^m}} \leq e^{\frac{m}{n} \cdot (e^{\alpha} - 1)}.
\]
\end{lem}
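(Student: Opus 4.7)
The plan is direct: observe that under \OneChoice, the load $x_i^m$ of a fixed bin $i$ after $m$ balls is a sum of $m$ independent Bernoulli random variables with parameter $1/n$, one for each allocation step. Equivalently, $x_i^m \sim \mathrm{Bin}(m, 1/n)$, since at each step bin $i$ is chosen independently with probability $1/n$.

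From here I would just compute the moment generating function exactly and apply the standard $1+u\leq e^u$ bound. Writing $X_t$ for the indicator that the $t$-th ball lands in bin $i$, independence gives
\[
\ex{e^{\alpha x_i^m}} = \prod_{t=1}^m \ex{e^{\alpha X_t}} = \left(1 - \frac{1}{n} + \frac{1}{n} e^{\alpha}\right)^m = \left(1 + \frac{e^{\alpha} - 1}{n}\right)^m.
\]
Using $1 + u \leq e^u$ with $u = (e^{\alpha}-1)/n$ (valid for all real $u$, and in particular here where $u > 0$ since $\alpha > 0$) then yields
\[
\ex{e^{\alpha x_i^m}} \leq \exp\!\left( m \cdot \frac{e^{\alpha}-1}{n} \right) = e^{\frac{m}{n} (e^{\alpha}-1)},
\]
which is the desired bound.

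There is no real obstacle: the only things to verify are the Binomial distribution of $x_i^m$ (immediate from the definition of \OneChoice and independence of the $m$ sampling steps) and the inequality $1+u\leq e^u$. No concentration tools or potential-function machinery are required.
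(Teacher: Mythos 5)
Your proof is correct and is essentially the same argument as the paper's: both factor $\ex{e^{\alpha x_i^m}}$ using the independence of the $m$ allocation indicators and then apply $1+u\leq e^u$; you write this as a closed-form product over the $m$ Bernoulli factors, while the paper unrolls the identical computation as an induction on $m$. The two presentations are equivalent in content and length.
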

\begin{proof}
We will proceed inductively to show that $\ex{e^{\alpha x_i^m}} \leq e^{\frac{m}{n} \cdot (e^{\alpha} - 1)}$. The base case follows since $\ex{e^{\alpha x_i^0}} = 1 \leq 1$. For $m \geq 1$, let $Z_{i}^m$ indicate whether the $m$-th ball was allocated to bin $i \in [n]$ and assume that $\ex{e^{\alpha x_i^{m-1}}} \leq e^{\frac{m-1}{n} \cdot (e^{\alpha} - 1)}$ holds, then
\begin{align*}
 \Ex{e^{\alpha x_i^m}} & = \Ex{e^{\alpha (Z_{i}^m +  x_i^{m-1})}} = \Ex{e^{\alpha Z_{i}^m}} \cdot \Ex{e^{\alpha x_i^{m-1}}} \\
 & = \Big(\frac{1}{n} \cdot e^{\alpha} + \Big(1 - \frac{1}{n}\Big) \cdot e^0 \Big) \cdot \Ex{e^{\alpha x_i^{m-1}}}
 = \Big(1 + \frac{1}{n} \cdot (e^{\alpha} - 1)\Big) \cdot \Ex{e^{\alpha x_i^{m-1}}} \\
 & \stackrel{(a)}{\leq} e^{\frac{1}{n} \cdot (e^{\alpha} - 1)} \cdot \Ex{e^{\alpha x_i^{m-1}}} \\
 & \leq e^{\frac{1}{n} \cdot (e^{\alpha} - 1)} \cdot e^{\frac{m-1}{n} \cdot (e^{\alpha} - 1)} = e^{\frac{m}{n} \cdot (e^{\alpha} - 1)}.
\end{align*}
using in $(a)$ that $1 + u \leq e^u$ (for any $u$).
\end{proof}

We now proceed to obtain an upper bound for the maximum load of \OneChoice for any $m \leq 2n \log n$. As we will show in \cref{lem:one_choice_lower_bound_max_load_whp_lightly}, this bound is asymptotically tight.

\begin{lem}[\textbf{cf.~\cite[Lemma 14]{ACMR98}}] \label{lem:one_choice_n_polylog_n}
Consider the \OneChoice process for any $m \leq 2n \log n$. Then,
\[
\Pro{\max_{i \in [n]} x_i^m \leq 11 \cdot \frac{\log n}{\log (\frac{4n}{m} \cdot \log n)}} \geq 1 - n^{-6}.
\]
\end{lem}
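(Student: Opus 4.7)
The plan is to apply a Chernoff-style argument to each bin using the moment generating function bound from \cref{clm:one_choice_mgf}, then take the union bound over all $n$ bins. Concretely, I will set $\mu := m/n \leq 2 \log n$, $L := \log\bigl(\frac{4 \log n}{\mu}\bigr) = \log\bigl(\frac{4n}{m} \log n\bigr)$, and $K := 11 \log n / L$, the target upper bound. Note that the hypothesis $m \leq 2n \log n$ ensures $L \geq \log 2 > 0$, so $K$ is well-defined and positive.

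For a fixed bin $i \in [n]$ and any $\alpha > 0$, Markov's inequality applied to $e^{\alpha x_i^m}$ combined with \cref{clm:one_choice_mgf} gives
\[
\Pro{x_i^m \geq K} \;\leq\; e^{-\alpha K} \cdot \Ex{e^{\alpha x_i^m}} \;\leq\; \exp\bigl(-\alpha K + \mu (e^{\alpha}-1)\bigr).
\]
I would then choose $\alpha := L$, so that $e^{\alpha} = \frac{4 \log n}{\mu}$, which yields $\mu(e^{\alpha}-1) \leq \mu \cdot e^{\alpha} = 4 \log n$ and $\alpha K = L \cdot \frac{11 \log n}{L} = 11 \log n$. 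Substituting,
\[
\Pro{x_i^m \geq K} \;\leq\; \exp(-11 \log n + 4 \log n) \;=\; n^{-7}.
\]
A union bound over the $n$ bins then gives $\Pro{\max_{i \in [n]} x_i^m \geq K} \leq n^{-6}$, which is the claim.

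The proof is essentially a one-step Chernoff calculation, so there is no significant obstacle beyond verifying the arithmetic. The only point that needs mild care is checking that the choice of $\alpha = L$ remains valid across the whole range of $m \in [1, 2n\log n]$: for large $m$ (close to $2n \log n$) the parameter $L$ shrinks to a constant and one must verify $K$ remains at least the typical max load $\Theta(\log n)$, while for small $m$ (e.g., $m \ll n$) we need $L$ to grow without making $\mu(e^{\alpha}-1)$ too large. The choice $\alpha = L$ is precisely calibrated so that the linear-in-$\alpha$ gain $\alpha K$ dominates $\mu e^\alpha$ by the factor $11 - 4 = 7$, giving the desired $n^{-7}$ tail for a single bin regardless of where $m$ sits in the range.
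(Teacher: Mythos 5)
Your proof is correct and follows essentially the same approach as the paper: apply the MGF bound from \cref{clm:one_choice_mgf} with $\alpha = \log\bigl(\frac{4n}{m}\log n\bigr)$, use a Chernoff/Markov step to obtain a per-bin tail of $n^{-7}$, and take a union bound over the $n$ bins. The paper phrases the step slightly differently (Markov on $e^{\alpha x_i^m}$ against the threshold $n^{11}$, then unwind to a load bound), but this is algebraically identical to your exponentiated-Chernoff write-up, and your check that $L \geq \log 2 > 0$ under $m \leq 2n\log n$ covers the only edge case.
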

\begin{proof}
Using \cref{clm:one_choice_mgf}, for the given $m$ and $\alpha = \log (\frac{4n}{m} \cdot \log n) > 0$, we have that for any bin $i \in [n]$,
\[
\ex{e^{\alpha x_i^m}} \leq e^{\frac{m}{n} \cdot e^{\log (\frac{4n}{m} \cdot \log n)}} = n^4.
\]
Hence, by Markov's inequality,
\[
\Pro{e^{\alpha x_i^m} \leq n^{11}} \geq 1 - n^{-7}. 
\]
When this event holds, we have 
\[
x_i^m 
 \leq \frac{1}{\alpha} \cdot \log(n^{11})
 \leq 11 \cdot \frac{\log n}{\log (\frac{4n}{m} \cdot \log n)}.
\]
By taking the union bound over all bins $i \in [n]$, we get the claim.
\end{proof}

For $m = \Oh(n)$, this recovers the well-known $\Oh(\frac{\log n}{\log \log n})$ bound.

\begin{cor}[\textbf{cf.~\cite[Chapter 5]{MU17}}] \label{lem:one_choice_lightly}
Consider the \OneChoice process for $m = 2n$. Then,
\[
\Pro{\max_{i \in [n]} x_i^m \leq 11 \cdot \frac{\log n}{\log \log n}} \geq 1 - n^{-6}.
\]
\end{cor}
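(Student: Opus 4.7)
The plan is to derive this as an immediate consequence of \cref{lem:one_choice_n_polylog_n} by substituting $m = 2n$ and simplifying the denominator of the bound. Since $2n \leq 2n \log n$ for all $n \geq 1$, the preconditions of \cref{lem:one_choice_n_polylog_n} are satisfied, so I would invoke it directly to obtain
\[
\Pro{\max_{i \in [n]} x_i^m \leq 11 \cdot \frac{\log n}{\log\bigl(\tfrac{4n}{m} \cdot \log n\bigr)}} \geq 1 - n^{-6}.
\]
Plugging in $m = 2n$ gives $\tfrac{4n}{m} = 2$, so the denominator becomes $\log(2 \log n) = \log 2 + \log \log n \geq \log \log n$. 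Hence the upper bound weakens to
\[
11 \cdot \frac{\log n}{\log(2 \log n)} \leq 11 \cdot \frac{\log n}{\log \log n},
\]
and the statement of the corollary follows immediately.

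There is essentially no obstacle here, since the real content lies in \cref{lem:one_choice_n_polylog_n}; this corollary is just the specialization $m = 2n$ together with the trivial monotonicity $\log(2 \log n) \geq \log \log n$. The only thing to be mindful of is that the statement should hold for sufficiently large $n$ so that $\log \log n > 0$, which is implicit in the asymptotic framing used throughout the paper.
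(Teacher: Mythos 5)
Your proof is correct and matches the paper's intent exactly: the corollary is stated as a direct specialization of \cref{lem:one_choice_n_polylog_n} at $m = 2n$, and the paper itself offers no further argument. Substituting $m=2n$ to get $\tfrac{4n}{m}=2$ and then using $\log(2\log n) \geq \log\log n$ is the whole content.
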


For $m = \Oh(n^{1-\eps})$ for some constant $\eps \in (0, 1)$, this shows that $\Gap(m) = \Oh(1)$.

\begin{cor} \label{lem:very_very_lightly_loaded}
Consider the \OneChoice process with $m = 2n^{1-\eps}$ for any constant $\eps \in (0, 1)$. Then,
\[
\Pro{\max_{i \in [n]} x_i^m \leq \frac{11}{\eps}} \geq 1 - n^{-6}.
\]
\end{cor}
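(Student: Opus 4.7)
The plan is to derive this corollary as an immediate consequence of Lemma~\ref{lem:one_choice_n_polylog_n}, which already gives the sharp bound $\max_i x_i^m \leq 11 \cdot \frac{\log n}{\log((4n/m)\log n)}$ with probability $\geq 1 - n^{-6}$ for any $m \leq 2n \log n$. So the task reduces to verifying the hypothesis and simplifying the bound for $m = 2n^{1-\eps}$.

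First I would note that $m = 2n^{1-\eps} \leq 2n \leq 2n \log n$ for sufficiently large $n$, so the hypothesis of Lemma~\ref{lem:one_choice_n_polylog_n} is met. Substituting, $\frac{4n}{m} = 2n^{\eps}$, hence
\[
\log\!\left(\frac{4n}{m}\cdot \log n\right) = \log(2 n^{\eps} \log n) \geq \eps \log n,
\]
where the inequality uses that $\log 2 + \log\log n \geq 0$. Therefore
\[
11 \cdot \frac{\log n}{\log((4n/m)\log n)} \leq 11 \cdot \frac{\log n}{\eps \log n} = \frac{11}{\eps},
\]
and the event $\{\max_{i \in [n]} x_i^m \leq 11 \cdot \frac{\log n}{\log((4n/m)\log n)}\}$ from Lemma~\ref{lem:one_choice_n_polylog_n} implies $\{\max_{i \in [n]} x_i^m \leq 11/\eps\}$, giving the desired probability bound.

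There is no real obstacle here: the proof is a one-line calculation plus a monotonicity observation. The only mildly delicate point is keeping the constants clean: since $11/\eps$ need not be an integer, one should either note that the maximum load is an integer (so we may as well round down) or simply observe that the statement is about a real upper bound and no rounding is needed. Either way, the corollary follows without any further probabilistic argument.
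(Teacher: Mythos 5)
Your proof is correct and matches what the paper does (the corollary is simply stated as a consequence of \cref{lem:one_choice_n_polylog_n}, and the intended argument is exactly your substitution $m = 2n^{1-\eps}$, which gives $\frac{4n}{m} = 2n^{\eps}$ and hence $\log((4n/m)\log n) \geq \eps\log n$). The verification of the hypothesis $m \leq 2n\log n$ and the monotonicity observation are both handled appropriately.
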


In the following lemma, we prove that the \OneChoice bound obtained in \cref{lem:one_choice_n_polylog_n} is asymptotically tight.

\begin{lem}[\textbf{cf.~Lemma 14 in \cite{ACMR98}}] \label{lem:one_choice_lower_bound_max_load_whp_lightly}
Consider the \OneChoice process with $m \leq n \log n$. Then, there exists a constant $\kappa > 0$, such that 
\[
\Pro{\max_{i \in [n]} x_i^m \geq \frac{1}{4} \cdot \frac{\log n}{\log(\frac{4n}{m} \cdot \log n)}} \geq 1 - n^{-1}.
\]
\end{lem}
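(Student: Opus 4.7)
Plan: I would follow a standard Poisson approximation argument, which is the natural approach given that this is a lower bound on the maximum of an i.i.d.-looking collection and the cited reference is \cite{ACMR98}. The event $\{\max_i x_i^m < k\}$ is monotonically non-increasing in $m$ (adding balls cannot decrease the maximum load), so \cref{lem:poisson} gives $\Pro{\max_i x_i^m < k} \leq 2 \Pro{\max_i \tilde{x}_i < k}$, where $\tilde{x}_1,\ldots,\tilde{x}_n$ are i.i.d.~Poisson with mean $\lambda := m/n$. By independence and $1-x \leq e^{-x}$,
\[
\Pro{\max_i \tilde{x}_i < k} \leq \big(1 - \Pro{\tilde{x}_1 \geq k}\big)^n \leq \exp\!\big(-n \Pro{\tilde{x}_1 \geq k}\big),
\]
so it suffices to prove $n \cdot \Pro{\tilde{x}_1 \geq k} \geq 2\log n$ for the threshold $k := \tfrac{1}{4}\log n / \log(\tfrac{4n}{m}\log n)$.

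For the Poisson tail, I would use $\Pro{\tilde{x}_1 \geq k} \geq \Pro{\tilde{x}_1 = \lceil k \rceil} = \lambda^{\lceil k\rceil} e^{-\lambda}/\lceil k\rceil!$ and apply Stirling's bound $k! \leq e\sqrt{k}\,(k/e)^k$, yielding $\Pro{\tilde{x}_1 \geq k} \geq \frac{1}{e\sqrt{k}}(e\lambda/k)^k e^{-\lambda}$. Taking logarithms, the target inequality reduces to
\[
\log n + k\log(e\lambda/k) - \lambda \;\geq\; \log\log n + O(\log\log n),
\]
where the additive $O(\log\log n)$ absorbs $\tfrac{1}{2}\log k$ and the constant from Stirling.

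Setting $\alpha := \log(\tfrac{4\log n}{\lambda})$, note $\alpha \geq \log 4$ because $\lambda \leq \log n$ by the hypothesis $m \leq n\log n$, and by construction $k = \log n/(4\alpha)$. A short substitution (using $\lambda = 4\log n/e^{\alpha}$) gives the clean identity $e\lambda/k = 16 e \alpha / e^{\alpha}$, whence
\[
k\log(e\lambda/k) \;=\; \frac{\log n \cdot \log(16 e \alpha)}{4\alpha} \;-\; \frac{\log n}{4},
\]
so the quantity to bound simplifies to $\tfrac{3\log n}{4} + \tfrac{\log n \cdot \log(16 e\alpha)}{4\alpha} - \lambda$. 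I would finish with a short case split on $\lambda$: if $\lambda \leq \log n/2$, the first summand alone already exceeds $\lambda$ by $\log n/4$; if $\lambda > \log n/2$, then $\alpha \in [\log 4, \log 8]$ is bounded, so $\log(16e\alpha)/\alpha$ is a positive constant and the second summand is $\Omega(\log n)$, which more than compensates for $-\lambda \geq -\log n$. Either way the lower bound is $\Omega(\log n)$, which dominates $\log\log n$ for large $n$.

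The edge case $k<1$ (which occurs roughly when $m \ll n^{3/4}\log n$) is handled separately and is essentially trivial: since $\max_i x_i^m$ is integer-valued and at least $1$ whenever $m \geq 1$, the bound holds (and $m=0$ gives $k=0$ vacuously). The one place where the argument is not entirely routine, and the main obstacle I anticipate, is the regime $\lambda$ close to $\log n$: there the $-\lambda$ term alone would wipe out the $\log n$ gain, and one really needs the specific identity $e\lambda/k = 16 e\alpha/e^{\alpha}$ together with the fact that $\alpha$ is a bounded constant in that regime to recover a net $\Omega(\log n)$ bound.
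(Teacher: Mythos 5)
Your proposal is correct and follows essentially the same route as the paper: Poisson approximation via \cref{lem:poisson}, a one-term lower bound on the Poisson tail combined with a Stirling-type bound on the factorial, and then arithmetic on the resulting inequality in which the $\Omega(\log n)$ slack comfortably absorbs the $O(\log\log n)$ terms (and the minor $k$-versus-$\lceil k\rceil$ looseness). The substitution $\alpha := \log(4\log n/\lambda)$ with the identity $e\lambda/k = 16e\alpha/e^{\alpha}$ and the two-case split on $\lambda$ make the final estimate cleaner than the paper's direct expansion, but the underlying argument is the same.
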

\begin{proof}

We will bound the probability of event $\mathcal{E}$, that the maximum load is less than $M = \frac{1}{4} \cdot \frac{\log n}{\log(\frac{4n}{m} \cdot \log n)}$. Clearly, the probability of the event $\mathcal{E}$ is monotonically increasing
in the number of balls (while keeping $M$ fixed). %

Following \cref{lem:poisson}, it suffices to bound the probability of the event $\tilde{\mathcal{E}}$ which is that the maximum value of $n$ independent Poisson random variables with parameter $\lambda=\frac{m}{n}$ is less than $M$. 
We want to show that
\[
\Pro{\tilde{\mathcal{E}}} \leq \left( 1 - \frac{e^{-\frac{m}{n}} \left( \frac{m}{n}\right)^M}{M!} \right)^n \leq \exp{ \left(-n \cdot \frac{e^{-m/n} \left( \frac{m}{n}\right)^M}{M!} \right)} < n^{-1}.
\]
This is equivalent to showing that
\begin{align*}
-n \cdot \frac{e^{-m/n} \left( \frac{m}{n}\right)^M}{M!} < -\log{n} & \Longleftrightarrow 
\log{n} - \frac{m}{n} + M \cdot \log\left( \frac{m}{n} \right) -\log{M!} > \log \log n \\
& \Longleftrightarrow 
\log{n} + M \cdot \log\left( \frac{m}{n} \right) > \frac{m}{n} + \log{M!} + \log \log n.
\end{align*} 
Using $\log M! \leq M \cdot (\log M - 1) + \log M$ (e.g., in \cite[Equation 9.1]{F08}), we deduce that
\begin{align*}
& \frac{m}{n} + M \cdot \left(\log M - 1 - \log \left( \frac{m}{n}\right) \right) + \log M + \log \log n \\
& \qquad < \frac{3}{4} \cdot \log n + \frac{\log n}{4 \cdot \log (\frac{4n \log n}{m})} \cdot \left( \log \log n - \log \left( 4 \cdot \log\left( \frac{4n\log n}{m}\right) \right)  - \log\left( \frac{m}{n} \right) \right) \\
& \qquad = \frac{3}{4} \cdot \log n + \frac{\log n}{4 \cdot \log (\frac{4n \log n}{m})} \cdot \left( \log \left(\frac{4n\log n}{m} \right) - \log \left( 16 \cdot \log\left( \frac{4n\log n}{m}\right) \right)  \right) \\
& \qquad \leq \frac{3}{4} \cdot \log n + \frac{\log n}{4 \cdot \log (\frac{4n \log n}{m})} \cdot \log \left(\frac{4n\log n}{m} \right) \\
& \qquad = \frac{3}{4} \cdot \log n + \frac{1}{4} \cdot \log n = \log n,
\end{align*}
for sufficiently large $n$ and using that $\log\big(\frac{4n \log n}{m}\big) > 0$ as $m \leq n \log n$. Hence, we get the desired lower bound.
\end{proof}

The next standard result was also used in~\cite[Section~4]{PTW15} and is based on~\cite{RS98}. %
\begin{lem}[\textbf{cf.~\cite[Lemma A.2]{LS22RBB}}] \label{lem:one_choice_cnlogn}
Consider the \OneChoice process for $m = c n \log n$ where $c \geq 1/\log n$.
Then,
\[
\Pro{ \Gap(m) \geq \frac{\sqrt{c}}{10} \cdot \log n} \geq 1 - n^{-2}.
\]
\end{lem}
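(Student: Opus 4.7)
The plan is to apply the Poisson approximation (Lemma~\ref{lem:poisson}) to reduce to iid Poisson loads, and then to lower bound the single-bin tail by summing local probabilities over a window of order $\sqrt{\lambda}$ around the target, so that the result amplifies across $n$ independent bins to beat the required $n^{-2}$ failure probability.

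Write $\lambda := m/n = c\log n$, $\delta := \tfrac{\sqrt{c}}{10}\log n$, and $\mu := \lambda + \delta$. The event $\{\max_{i \in [n]} x_i^m < \mu\}$ is monotonically decreasing in the number of balls $m$ (with $\mu$ held fixed), so Lemma~\ref{lem:poisson} reduces it, up to a factor of $2$, to the analogous event for $n$ iid random variables $\tilde{x}_1, \ldots, \tilde{x}_n \sim \mathrm{Poisson}(\lambda)$. By independence, it then suffices to establish the single-bin bound $\Pro{\tilde{x}_1 \geq \mu} \geq n^{-1/20}$, since in that case
\[
 \Pro{\max_{i \in [n]} \tilde{x}_i < \mu} \leq \bigl(1 - n^{-1/20}\bigr)^n \leq \exp\!\bigl(-n^{19/20}\bigr) \ll n^{-2}/4.
\]

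For the single-bin tail I would write $\Pro{\tilde{x}_1 \geq \mu} \geq \sum_{j=\lceil\mu\rceil}^{\lceil\mu\rceil+J} \Pro{\tilde{x}_1 = j}$ with $J := \lceil\sqrt{\lambda}\rceil$, and estimate each local probability via Stirling's inequality $j! \leq e\sqrt{j}\,(j/e)^j$, which gives
\[
 \Pro{\tilde{x}_1 = j} = \frac{e^{-\lambda}\lambda^j}{j!} \geq \frac{1}{e\sqrt{j}} \exp\!\Bigl(-\lambda\, h\!\Bigl(\tfrac{j-\lambda}{\lambda}\Bigr)\Bigr),
\]
where $h(u) := (1+u)\log(1+u)-u$ is the Poisson rate function. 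The key quantitative input is the unified inequality $h(u) \leq u^2$ for all $u \geq 0$, which follows by comparing second derivatives (since $h''(u) = 1/(1+u) \leq 2 = (u^2)''$ and $h(0) = h'(0) = 0$). Since $j - \lambda \leq \delta + J + 1 \leq 2\delta$ for $n$ large, this bound yields
\[
 \lambda\, h\!\Bigl(\tfrac{j-\lambda}{\lambda}\Bigr) \leq \frac{(j-\lambda)^2}{\lambda} \leq \frac{4\delta^2}{\lambda} = \frac{\log n}{25},
\]
so each local term is at least $\tfrac{1}{e\sqrt{j}}\,n^{-1/25}$. Summing the $\Theta(\sqrt{\lambda})$ terms and using $\sqrt{j} \leq 2\sqrt{\mu}$ across the window, one picks up a factor $\Theta(\sqrt{\lambda}/\sqrt{\mu})$, which is bounded below by $\Omega((\log n)^{-1/4})$ uniformly in $c \geq 1/\log n$: trivially when $\delta \leq \lambda$, and via $\sqrt{\lambda/\mu} \geq \sqrt{\lambda/(2\delta)} = \Omega((\log n)^{-1/4})$ otherwise. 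Hence $\Pro{\tilde{x}_1 \geq \mu} \geq n^{-1/25}/\mathrm{polylog}(n) \geq n^{-1/20}$ for $n$ sufficiently large, which completes the argument.

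The main obstacle is that the ratio $\delta/\lambda = 1/(10\sqrt{c})$ is unbounded over the range $c \in [1/\log n, \infty)$, so a naive Taylor expansion of $h$ around $0$ fails for small $c$ and the Chernoff-style bound alone would not suffice for very large $c$ (where the local probabilities individually are too small). The unified bound $h(u) \leq u^2$, combined with the windowed-sum trick, neatly sidesteps both issues. The remaining work is routine bookkeeping, and the generous constant $1/10$ in the gap target provides ample slack for the polylogarithmic prefactors and for the factor of $2$ lost to the Poisson approximation.
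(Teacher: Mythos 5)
The paper cites this lemma from a companion reference without reproducing the proof, so there is no in-text argument to compare against; I evaluate yours on its own. Your proof is correct. The reduction via \cref{lem:poisson} to independent Poisson loads followed by the $(1-p)^n$ calculation is the natural route and matches the style of the in-paper analogue (\cref{lem:one_choice_lower_bound_max_load_whp_lightly}), and your two technical additions close the gap cleanly. First, the uniform rate-function bound $h(u)\le u^2$ on all of $u\ge 0$ (obtained by integrating $h''(u)=1/(1+u)\le 2$ twice from $h(0)=h'(0)=0$) lets a single inequality cover both the moderate-deviation regime $\delta\gg\lambda$ (small $c$) and the Gaussian regime $\delta\ll\lambda$ (large $c$), and it gives exactly $4\delta^2/\lambda=(\log n)/25$ at the top of the window. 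Second, the windowed sum over $\Theta(\sqrt{\lambda})$ consecutive integers is genuinely necessary here: the single point mass $\Pro{\tilde{x}_1=\lceil\mu\rceil}$ carries a $\Theta(1/\sqrt{\lambda})$ prefactor, which on its own fails to beat $n^{-1/20}$ once $\lambda$ is polynomially large in $n$ — and this is precisely why \cref{lem:one_choice_lower_bound_max_load_whp_lightly}, which uses only a single-term estimate, restricts to $m\le n\log n$, whereas here $c$ is unbounded. Your windowed sum paired with the $\sqrt{j}$-sharp Stirling bound $j!\le e\sqrt{j}(j/e)^j$ (which holds for all $j\ge 1$, with equality at $j=1$) recovers the $\Theta(\sqrt{\lambda/\mu})$ factor that keeps the estimate uniform in $c$. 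The remaining bookkeeping — the check that $\sqrt{\lambda}+2\le\delta$ for large $n$, the two-case bound $\sqrt{\lambda/\mu}\ge\min\{1/\sqrt{2},\,\sqrt{5}c^{1/4}\}=\Omega((\log n)^{-1/4})$, and the final comparison $n^{-1/25}\cdot(\log n)^{-1/4}\ge n^{-1/20}$ — all holds for sufficiently large $n$.
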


Combining the previous two lemmas, we also get the asymptotically tight bound on the gap

\begin{lem}[\textbf{cf.~\cite[Lemma 14]{ACMR98}}] \label{lem:one_choice_lower_bound_gap_whp_lightly}
Consider the \OneChoice process with $m \leq n \log n$. Then, there exists a constant $\kappa > 0$, such that 
\[
\Pro{\Gap(m) \geq \kappa \cdot \frac{\log n}{\log(\frac{4n}{m} \cdot \log n)}} \geq 1 - n^{-1}.
\]
\end{lem}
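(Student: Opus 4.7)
The plan is to obtain the bound by case analysis on the size of $m$, combining the two preceding lemmas. Let $L := \frac{\log n}{\log(\frac{4n}{m}\log n)}$, so our target inequality is $\Gap(m) \geq \kappa L$. Writing $c := m/(n\log n)$ (allowed since $m \leq n\log n$), we have $m/n = c\log n$ and $L = \log n / \log(4/c)$. The two available bounds translate to: (A) from \cref{lem:one_choice_lower_bound_max_load_whp_lightly}, $\max_i x_i^m \geq L/4$ whp, so $\Gap(m) \geq L/4 - c\log n$; and (B) from \cref{lem:one_choice_cnlogn} (applicable whenever $c \geq 1/\log n$), $\Gap(m) \geq \frac{\sqrt{c}}{10}\log n$ whp.

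I would set $\kappa = 1/20$ and split at a threshold $c^\star := 1/25$. In the regime $m \leq n\log n/25$ (so $c \leq c^\star$), I invoke (A). Plugging in, the desired inequality $L/4 - c\log n \geq L/20$ is equivalent, after dividing by $\log n/\log(4/c)$, to the elementary inequality $c\log(4/c) \leq 1/5$. Since $c\log(4/c)$ is monotone increasing on $(0, 4/e)$ and at $c = 1/25$ equals roughly $0.04 \cdot \log 100 \approx 0.184 < 0.2$, the inequality holds throughout the range; for $c \leq 1/\log n$ the term $c\log(4/c)$ vanishes, so the same argument covers the very-lightly loaded regime. In the regime $m > n\log n/25$ (so $c > c^\star \geq 1/\log n$, which allows invoking (B)), the desired inequality $\frac{\sqrt{c}}{10}\log n \geq L/20$ reduces to $\sqrt{c}\log(4/c) \geq 1/2$; I would verify that $f(c) := \sqrt{c}\log(4/c)$ on $[1/25, 1]$ attains its minimum at the endpoint $c = 1/25$ (the unique critical point is the maximum at $c = 4/e^2$), and $f(1/25) = (1/5)\log 100 \approx 0.92 \geq 1/2$.

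Finally, in either case the gap bound holds with probability at least $1 - n^{-1}$ (Case 1 inherits it directly from \cref{lem:one_choice_lower_bound_max_load_whp_lightly}; Case 2 loses only an $n^{-2}$-probability term from \cref{lem:one_choice_cnlogn}), establishing the conclusion. There is no real obstacle here beyond bookkeeping; the only point of care is choosing the threshold $c^\star$ so that $c\log(4/c)$ is small enough in Case 1 while $\sqrt{c}\log(4/c)$ is bounded away from zero in Case 2. Both inequalities can be verified numerically at the boundary and then extended by monotonicity of $c \log(4/c)$ (respectively the shape of $\sqrt{c}\log(4/c)$, whose critical point $c = 4/e^2 \approx 0.54$ is easily checked to be a maximum rather than minimum).
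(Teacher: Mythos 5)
Your proof is correct and follows essentially the same approach as the paper's: split on whether $m$ is below or above a constant multiple of $n\log n$, and apply \cref{lem:one_choice_lower_bound_max_load_whp_lightly} (subtracting $m/n$ from the max-load bound) in the first regime and \cref{lem:one_choice_cnlogn} in the second. The only difference is that you make the threshold and constants explicit ($c^\star = 1/25$, $\kappa = 1/20$ vs.\ the paper's unnamed $C$ and $\kappa = 1/8$), and your elementary verifications of $c\log(4/c) \leq 1/5$ and $\sqrt{c}\log(4/c) \geq 1/2$ at the split point are sound.
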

\begin{proof}
For sufficiently small constant $C \in (0, 1)$, for any $m \leq Cn \log n$ we have that 
\[
\frac{1}{4} \cdot \frac{\log n}{\log(\frac{4n}{m} \cdot \log n)} \geq 2 \cdot \frac{m}{n},
\]
and hence the conclusion follows by \cref{lem:one_choice_lower_bound_max_load_whp_lightly} for $\kappa = \frac{1}{8}$. For $m > Cn \log n$, the stated bound follows from \cref{lem:one_choice_cnlogn}.
\end{proof}

\subsection{Auxiliary Probabilistic Claims}

For convenience, we state and prove the following well-known result.

\begin{lem} \label{lem:geometric_arithmetic}
Consider any sequence of random variables $(X^i)_{i \in \mathbb{N}}$ for which there exist $0 < a < 1$ and $b > 0$, such that every $i \geq 1$,
\[
\Ex{X^i \mid X^{i-1}} \leq X^{i-1} \cdot a + b.
\]
Then, $(i)$~for every $i \geq 0$, 
\[
\Ex{X^i \mid X^0}
\leq X^0 \cdot a^i + \frac{b}{1 - a}.
\]
Further, $(ii)$~if $X^0 \leq \frac{b}{1-a}$ holds, then for every $i \geq 0$,
\[
\Ex{X^i} \leq \frac{b}{1 - a}.
\]
\end{lem}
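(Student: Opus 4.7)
Both parts are standard and follow directly by induction on $i$, using the given one-step recursion together with the tower property of conditional expectation. I would prove them in the order stated.

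For part $(i)$, the plan is induction on $i$ with $\Ex{X^0 \mid X^0} = X^0$ as the (trivial) base case. For the inductive step, I would condition on $X^{i-1}$, apply the hypothesis of the lemma to bound $\Ex{X^i \mid X^{i-1}}$ by $a \cdot X^{i-1} + b$, and then take the outer expectation conditional on $X^0$ to obtain the recursion
\[
\Ex{X^i \mid X^0} \leq a \cdot \Ex{X^{i-1} \mid X^0} + b.
\]
Unrolling this recursion produces $X^0 \cdot a^i + b \cdot \sum_{j=0}^{i-1} a^j$, and since $0 < a < 1$ the geometric sum is bounded by $\frac{1}{1-a}$, giving the claimed bound.

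For part $(ii)$, the key observation is that $\frac{b}{1-a}$ is a fixed point of the map $x \mapsto a\,x + b$, since $a \cdot \frac{b}{1-a} + b = \frac{b}{1-a}$. Hence a direct induction on $i$ works: the base case $\Ex{X^0} \leq \frac{b}{1-a}$ is the assumption, and in the inductive step one applies the tower property and the one-step recursion to get $\Ex{X^i} \leq a \cdot \Ex{X^{i-1}} + b \leq a \cdot \frac{b}{1-a} + b = \frac{b}{1-a}$. (Note that one cannot simply quote part $(i)$ with $X^0 = \frac{b}{1-a}$, as that would yield an extra additive $\frac{b \cdot a^i}{1-a}$ term; the direct induction is needed to exploit the fixed-point cancellation.)

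There is no real obstacle here; the only thing to be slightly careful about is distinguishing the two bookkeeping schemes in $(i)$ and $(ii)$, so as not to introduce a superfluous $a^i$ factor in the stronger statement $(ii)$.
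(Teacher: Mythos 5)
Your proposal is correct and matches the paper's proof essentially verbatim: both parts are proved by induction on $i$, part $(i)$ via unrolling the recursion to obtain a geometric sum, and part $(ii)$ via the fixed-point cancellation $a \cdot \frac{b}{1-a} + b = \frac{b}{1-a}$. (One small aside: your parenthetical caution about part $(i)$ applies to its stated form, but the finite-sum version you derive in its proof, $X^0 a^i + b\sum_{j=0}^{i-1}a^j$, would in fact yield $(ii)$ directly when $X^0 \leq \frac{b}{1-a}$, since $\frac{b}{1-a}a^i + b\frac{1-a^i}{1-a} = \frac{b}{1-a}$; the paper nonetheless gives the separate induction, as you do.)
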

\begin{proof}
\textit{First statement.} We will prove by induction that for every $i \in \mathbb{N}$, 
\[
\Ex{X^i \mid X^0} \leq X^0 \cdot a^i + b \cdot \sum_{j = 0}^{i-1} a^j.
\]
For $i = 0$, $\Ex{X^0 \mid X^0} \leq X^0$. Assuming the induction hypothesis holds for some $i \geq 0$, then since $a > 0$,
\begin{align*}
\Ex{X^{i+1} \mid X^0} & = \Ex{\Ex{X^{i+1} \mid X^i}\mid X^0} \leq \Ex{X^{i}\mid X^0} \cdot a + b \\
 & \leq \left(X^0 \cdot a^i + b \cdot \sum_{j = 0}^{i-1} a^j \right) \cdot a + b \\
 & = X^0 \cdot a^{i+1} +b \cdot \sum_{j = 0}^i a^j.
\end{align*}
The claims follows using that $\sum_{j = 0}^i a^j \leq \sum_{j=0}^{\infty} a^j = \frac{1}{1-a}$, for any $a \in (0,1)$.

\textit{Second statement.} We will prove this claim by induction. Then, assuming that $\Ex{X^i} \leq \frac{b}{1-a}$ holds for $i \geq 0$, we have for $i+1$
\begin{align*}
\Ex{X^{i+1}}
  & = \Ex{\Ex{X^{i+1} \mid X^{i}}}  \leq \Ex{X^{i} } \cdot a + b \leq \frac{b}{1-a} \cdot a + b = \frac{b}{1-a}. \qedhere 
\end{align*}
\end{proof}

\subsection{Majorization}

We begin with the following well-known majorization inequality which appears in Theorem 1 \cite[Chapter XII]{MPFbook}. For a proof see e.g., \cite[Lemma A.2]{LSS21}.

\begin{lem}[\textbf{cf.~\cite[Lemma A.2]{LSS21}}]\label{lem:quasilem}Let $(a_k)_{k=1}^n , (b_k)_{k=1}^n $ be non-negative and $(c_k)_{k=1}^n$ be non-negative and non-increasing. If $\sum_{k=1}^i a_k \leq \sum_{k=1}^i b_k$ holds for all $1\leq i\leq n$ then, \begin{equation*} %
\sum_{k=1}^n a_k\cdot c_k \leq \sum_{k=1}^n b_k\cdot c_k.\end{equation*}
\end{lem}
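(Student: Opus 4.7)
The statement is a classical Abel summation (summation by parts) inequality, so the natural approach is to rewrite both sides in terms of the partial sums $A_i := \sum_{k=1}^i a_k$ and $B_i := \sum_{k=1}^i b_k$, and then use the hypothesis $A_i \leq B_i$ together with the non-negativity of the increments of $c$.

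Concretely, I would set $A_0 := B_0 := 0$ so that $a_k = A_k - A_{k-1}$ and $b_k = B_k - B_{k-1}$, and then compute
\[
\sum_{k=1}^n a_k c_k = \sum_{k=1}^n (A_k - A_{k-1}) c_k = A_n c_n + \sum_{k=1}^{n-1} A_k (c_k - c_{k+1}),
\]
by rearranging the sum (shifting the index in the $A_{k-1} c_k$ term). Doing the identical manipulation for $b_k$ gives
\[
\sum_{k=1}^n b_k c_k = B_n c_n + \sum_{k=1}^{n-1} B_k (c_k - c_{k+1}).
\]

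The key observation is that the ``weights'' in these two identities are all non-negative: $c_n \geq 0$ by hypothesis, and $c_k - c_{k+1} \geq 0$ for every $1 \leq k \leq n-1$ since $(c_k)$ is non-increasing. Combined with the assumption $A_k \leq B_k$ for every $1 \leq k \leq n$, term-by-term comparison of the two displayed identities immediately yields
\[
\sum_{k=1}^n a_k c_k \leq \sum_{k=1}^n b_k c_k,
\]
which is the desired conclusion.

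There is no real obstacle here; this is a one-line Abel summation argument, and the only thing to be careful about is the bookkeeping at the endpoint $k=n$ (which is why the $A_n c_n$ term appears separately), and ensuring that the non-negativity of $(a_k)$ and $(b_k)$ is not actually needed beyond what is implicitly used to have $A_k \geq 0$ — in fact the proof only uses $A_k \leq B_k$, $c_n \geq 0$, and monotonicity of $(c_k)$, so the hypotheses are slightly stronger than strictly necessary.
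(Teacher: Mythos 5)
The paper does not include its own proof of this lemma; it defers to \cite[Lemma A.2]{LSS21} and notes the result is classical (it appears as Theorem~1 in \cite[Chapter XII]{MPFbook}). Your Abel-summation argument is correct and is the standard proof of this fact, so it matches the approach that would be used in the cited reference. The bookkeeping is right: with $A_0 = B_0 = 0$ one gets $\sum_{k=1}^n a_k c_k = A_n c_n + \sum_{k=1}^{n-1} A_k (c_k - c_{k+1})$, and the weights $c_n$ and $c_k - c_{k+1}$ are all non-negative since $c$ is non-negative and non-increasing, so $A_k \le B_k$ for every $k$ gives the conclusion term by term. Your remark that the non-negativity of $(a_k)$ and $(b_k)$ is not actually used in this argument is also accurate (in fact even the non-negativity of $c_1, \ldots, c_{n-1}$ is implied by $c_n \ge 0$ and monotonicity).
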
 

Next we state an auxiliary result, which is implicit in~\cite{PTW15}. Theorem 3.1 in \cite{PTW15} proves the required majorization for time-independent probability allocation vectors, but as mentioned in the proof of Theorem 3.2 in~\cite{PTW15}, the same result generalizes to time-dependent probability allocation vectors.
\begin{lem}[\textbf{cf.~\cite[Section 3]{PTW15}}]\label{lem:majorisation}
Consider two allocation processes $\mathcal{Q}$ and $\mathcal{P}$. The allocation process $\mathcal{Q}$ uses at each step a fixed probability allocation vector $q$. The allocation process $\mathcal{P}$ uses a time-dependent probability allocation vector $p^{t}$, which may depend on $\mathfrak{F}^{t}$ but majorizes $q$ at each step $t \geq 0$. Let $y^{t}(\mathcal{Q})$ and $y^{t}(\mathcal{P})$ be the two normalized load vectors, sorted non-increasingly. Then there is a coupling such that for all steps $t \geq 0$, $y^{t}(\mathcal{P})$ majorizes $y^{t}(\mathcal{Q})$, so in particular, $y_1^t(\mathcal{P}) \geq y_1^t(\mathcal{Q})$.
\end{lem}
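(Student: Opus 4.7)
The plan is to construct the coupling inductively on $t$. At $t=0$ both load vectors are identically zero, so $y^0(\mathcal{P}) \succ y^0(\mathcal{Q})$ holds trivially. For the inductive step, assume the coupling constructed up to step $t$ yields $y^t(\mathcal{P}) \succ y^t(\mathcal{Q})$, and fix the realization of $\mathfrak{F}^t$. I will couple the $(t+1)$-th allocation by a single quantile: draw $U$ uniformly in $[0,1]$ and set
\[
I^{\mathcal{P}} := \min\Bigl\{k : U \leq \sum_{j=1}^{k} p_j^t\Bigr\}, \qquad I^{\mathcal{Q}} := \min\Bigl\{k : U \leq \sum_{j=1}^{k} q_j\Bigr\}.
\]
Because $p^t$ majorizes $q$ on this filtration, the partial sums satisfy $\sum_{j=1}^k p_j^t \geq \sum_{j=1}^k q_j$ for every $k$, which forces $I^{\mathcal{P}} \leq I^{\mathcal{Q}}$ deterministically for every value of $U$. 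Then $\mathcal{P}$ places its ball into the bin of sorted rank $I^{\mathcal{P}}$ (using its own sorted order), and $\mathcal{Q}$ places its ball into the bin of sorted rank $I^{\mathcal{Q}}$. Marginally each process behaves correctly, since $I^{\mathcal{P}} \sim p^t$ and $I^{\mathcal{Q}} \sim q$ by construction.

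The substantive step is then to verify that this one-step coupling preserves majorization of the sorted load vectors. This reduces to the following combinatorial fact that I would isolate as an auxiliary lemma: if $a, b \in \mathbb{R}^n$ are sorted non-increasingly with $a \succ b$, and indices satisfy $i \leq j$, then $\mathrm{sort}(a+e_i) \succ \mathrm{sort}(b+e_j)$. Once this is proven, applying it with $a := y^t(\mathcal{P})$, $b := y^t(\mathcal{Q})$, $i := I^{\mathcal{P}}$, $j := I^{\mathcal{Q}}$ completes the inductive step, and the conclusion $y_1^t(\mathcal{P}) \geq y_1^t(\mathcal{Q})$ falls out by taking the $k=1$ case of majorization. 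Note also that the additive shift in going from $x^t$ to $y^t$ (i.e.\ subtracting $t/n$) is the same on both sides and does not affect majorization, so we may argue directly on the sorted raw load vectors.

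The main obstacle will be the auxiliary lemma, which looks intuitive but requires careful bookkeeping because incrementing one coordinate of a sorted vector and re-sorting can ``bubble'' that coordinate to a different position. I would prove it by comparing $S_k^a := \sum_{\ell=1}^{k} \mathrm{sort}(a+e_i)_\ell$ to $S_k^b := \sum_{\ell=1}^{k} \mathrm{sort}(b+e_j)_\ell$ for each $k \in [n]$: the added $+1$ contributes to $S_k^a$ exactly when $k \geq i$ (regardless of where the modified coordinate lands after re-sorting, since the re-sort only permutes entries at indices $\leq i$ after increment), and similarly contributes to $S_k^b$ exactly when $k \geq j$. Moreover $\mathrm{sort}(a+e_i)_{1:k} \geq \mathrm{sort}(a)_{1:k}$ and $\mathrm{sort}(b+e_j)_{1:k} \leq \mathrm{sort}(b)_{1:k} + \mathbf{1}_{\{k \geq j\}}$ in sum. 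Splitting into the cases $k < i$, $i \leq k < j$, and $k \geq j$, the inequality $S_k^a \geq S_k^b$ then either inherits from $a \succ b$ or is strictly reinforced by the $\{i \leq k < j\}$ gap. A clean alternative is to decompose the step $a \mapsto \mathrm{sort}(a+e_i)$ into a chain of elementary ``Robin Hood'' transfers and apply a standard monotonicity of the majorization order under such transfers.
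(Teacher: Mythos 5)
Your high-level plan --- a one-step quantile coupling that forces $I^{\mathcal{P}} \leq I^{\mathcal{Q}}$ deterministically, followed by a deterministic one-step preservation lemma for the majorization order --- is the right one and is essentially the argument of \cite{PTW15}, which the paper itself simply defers to rather than re-proving. The auxiliary lemma you isolate is true for \emph{integer} load vectors, but your sketch of its proof contains two claims that are simply false, and the case analysis you propose does not close.

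The statement that ``the added $+1$ contributes to $S_k^a$ exactly when $k \geq i$'' is wrong. Take $a = (2,1,1,0)$ and $i = 3$: then $\mathrm{sort}(a + e_3) = (2,2,1,0)$, whose top-two sum is $4 = \sum_{\ell\leq 2}a_\ell + 1$ even though $k=2 < i=3$. In general the $+1$ enters the top-$k$ partial sum as soon as $k \geq i'$ where $i' := \min\{\ell : a_\ell = a_i\}$, and $i'$ can be strictly less than $i$ whenever $a_i$ ties with earlier entries (which is the generic situation in balls-and-bins). By the same example your claimed bound $\sum_{\ell\leq k}\mathrm{sort}(b+e_j)_\ell \leq \sum_{\ell\leq k}b_\ell + \mathbf{1}_{\{k\geq j\}}$ is also false (use $b=(2,1,1,0)$, $j=3$, $k=2$). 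So in the case $k<i\leq j$ the $b$-side partial sum can pick up a full $+1$ while the $a$-side picks up nothing, and your stated case split has no mechanism to rescue the inequality there.

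The auxiliary lemma is in fact \emph{false} over the reals, so any correct proof must invoke integrality, which your write-up never does. Concretely, with $a = (3,1,0,0)$, $b = (3,0.4,0.4,0.2)$, $i=3$, $j=4$: one has $a \succ b$ with equal totals, yet $\mathrm{sort}(a+e_3) = (3,1,1,0)$ and $\mathrm{sort}(b+e_4) = (3,1.2,0.4,0.4)$ have second partial sums $4$ and $4.2$, so the majorization fails at $k=2$. The fact that loads are integers (and that the normalization shift $-t/n$ is identical on both sides) is what saves the statement, and it has to appear explicitly in the proof.

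To actually close the $k<i$ case, note that for integer vectors $\sum_{\ell\leq k}\mathrm{sort}(a+e_i)_\ell = \sum_{\ell\leq k}a_\ell + \mathbf{1}_{\{k\geq i \text{ or } a_k=a_i\}}$. The only dangerous sub-case is $k<i$, $b_k=b_j$, and $a_k>a_i$, where you must show the majorization of $a$ over $b$ is \emph{strict} at $k$. If instead $\sum_{\ell\leq k}a_\ell=\sum_{\ell\leq k}b_\ell$, then majorization at $k-1$ forces $a_k\leq b_k$, while $b_{k+1}=\cdots=b_j=b_k$ gives $\sum_{\ell=k+1}^{j}b_\ell=(j-k)b_k$; on the other hand $a_\ell\leq a_k\leq b_k$ for $\ell>k$ with $a_\ell\leq a_i\leq a_k-1\leq b_k-1$ for $\ell\geq i$ (integrality!), so $\sum_{\ell=k+1}^{j}a_\ell<(j-k)b_k$, contradicting majorization at $j$. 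Your ``Robin Hood transfer'' alternative also does not go through as stated: such transfers relate a vector to itself, not two distinct vectors being compared, and $a\mapsto\mathrm{sort}(a+e_i)$ is not a transfer of that kind.
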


\clearpage

\section{Omitted Proofs and Auxiliary Claims}

\subsection{Omitted Proofs from Section~\ref{sec:g_adv_warm_up}}

\label{sec:g_adv_gap_remains_small_proof}

{\renewcommand{\thecor}{\ref{cor:g_adv_gap_remains_small}}
	\begin{cor}[\textbf{Restated, page~\pageref{cor:g_adv_gap_remains_small}}]
\CorGAdvGapRemainsSmall
	\end{cor}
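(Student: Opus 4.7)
The strategy is to use the hyperbolic cosine potential $\Gamma := \Gamma(\gamma)$ of \cref{thm:g_adv_warm_up_gap} as a witness for the maximum absolute normalized load. I would first translate the hypothesis $\max_{i \in [n]} |y_i^{t_0}| \leq g(\log(ng))^2$ into a deterministic upper bound on $\Gamma^{t_0}$, then iterate the one-step drop inequality of \cref{thm:g_adv_warm_up_gap}~$(i)$ up to step $t_1$ in order to bound $\Ex{\left.\Gamma^{t_1}\,\right|\,\mathfrak{F}^{t_0}}$, apply Markov's inequality to get a high-probability upper bound on $\Gamma^{t_1}$, and finally invert the definition of $\Gamma$ to deduce the desired bound on $\max_{i \in [n]} |y_i^{t_1}|$. (The probability in the statement, $1-(ng)^{14}$, is interpreted as $1-(ng)^{-14}$.)

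Concretely, with $\gamma := -\log(1 - \tfrac{1}{8 \cdot 48})/g = \Theta(1/g)$ as in \cref{thm:g_adv_warm_up_gap}, the hypothesis gives deterministically $\Gamma^{t_0} \leq 2n \cdot \exp(\gamma g (\log(ng))^2)$. By \cref{thm:g_adv_warm_up_gap}~$(i)$ we have $\Ex{\left.\Delta\Gamma^{t+1}\,\right|\,y^t} \leq -\tfrac{\gamma}{96n}\Gamma^t + c_1$, so \cref{lem:geometric_arithmetic}~$(i)$ (with $a = 1 - \tfrac{\gamma}{96n}$ and $b = c_1$) yields
\[
\Ex{\left.\Gamma^{t_1}\,\right|\,\mathfrak{F}^{t_0}} \;\leq\; \Gamma^{t_0} + \frac{96 c_1}{\gamma}\, n \;\leq\; 3n \cdot \exp\bigl(\gamma g (\log(ng))^2\bigr),
\]
for sufficiently large $n$, since $1/\gamma = O(g)$ makes the additive term $O(ng)$ and hence dominated by $\Gamma^{t_0}$.

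Applying Markov's inequality at threshold $(ng)^{14}$ then gives, conditional on $\mathfrak{F}^{t_0}$ and on the hypothesis,
\[
\Pro{\Gamma^{t_1} \leq 3n \cdot \exp\bigl(\gamma g (\log(ng))^2\bigr) \cdot (ng)^{14}} \;\geq\; 1 - (ng)^{-14}.
\]
On this event, inverting the definition of $\Gamma$ produces
\[
\max_{i \in [n]} \bigl|y_i^{t_1}\bigr| \;\leq\; g(\log(ng))^2 + \frac{1}{\gamma}\log\bigl(3n(ng)^{14}\bigr) \;=\; g(\log(ng))^2 + O(g \log(ng)),
\]
using $1/\gamma = O(g)$, which for sufficiently large $n$ is at most $2g(\log(ng))^2$, completing the argument.

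There is no real obstacle here: this is essentially the final part of the proof of \cref{thm:g_adv_warm_up_gap}, re-initialised at step $t_0$ using the stronger max-load hypothesis rather than the trivial bound $\Gamma^0 = 2n$. The only bookkeeping to verify is that the $g(\log(ng))^2$-sized slack between $g(\log(ng))^2$ and $2g(\log(ng))^2$ comfortably absorbs both the additive stationary contribution and the $(ng)^{14}$ loss from Markov's inequality, each of which produces only $O(g\log(ng))$ after passing through $\gamma^{-1}\log(\cdot)$.
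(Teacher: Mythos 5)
Your proof is correct and follows essentially the same route as the paper's: bound $\Gamma^{t_0}$ deterministically from the hypothesis, iterate the drop inequality of Theorem~\ref{thm:g_adv_warm_up_gap}~$(i)$ via Lemma~\ref{lem:geometric_arithmetic}~$(i)$, apply Markov's inequality, and invert the definition of $\Gamma$, with the $O(g\log(ng))$ slack absorbed by the gap between $g(\log(ng))^2$ and $2g(\log(ng))^2$. You also correctly identify the typo in the stated probability (it should be $1-(ng)^{-14}$).
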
 }
	\addtocounter{cor}{-1}

\begin{proof}
We will be using the hyperbolic cosine potential $\Gamma := \Gamma(\gamma)$ with smoothing parameter $\gamma := - \log(1 - \frac{1}{8 \cdot 48})/g$ as we did in the proof of \cref{thm:g_adv_warm_up_gap}.

Consider an arbitrary step $t_0$ with $\max_{i \in [n]} \left| y_i^{t_0} \right| \leq g (\log(ng))^2$. Then, it follows that
\[
\Gamma^{t_0} \leq 2n \cdot e^{\gamma g (\log(ng))^2}.
\]
By \cref{thm:g_adv_warm_up_gap}~$(i)$, there exists a constant $c_1 \geq 1$ such that \[
\Ex{\left. \Gamma^{t+1} \,\right|\, \mathfrak{F}^t} \leq \left(1 - \frac{\gamma}{96n} \right) + c_1,
\]
and using \cref{lem:geometric_arithmetic}~$(i)$ (for $a := 1 - \frac{\gamma}{96n}$ and $b := c_1$) at step $t_1 \geq t_0$, we have that
\[
\Ex{ \Gamma^{t_1} \,\left|\, \mathfrak{F}^{t_0}, \Gamma^{t_0} \leq 2n \cdot e^{\gamma g (\log(ng))^2} \right. } \leq 2n \cdot  e^{\gamma g (\log(ng))^2} \cdot \left( 1 - \frac{\gamma}{96n} \right)^{t_1 - t_0} + \frac{96c_1n}{\gamma} \leq 4n \cdot  e^{\gamma g (\log(ng))^2},
\]
recalling that $\gamma = \Theta\big(\frac{1}{g}\big)$. Hence, by Markov's inequality, we have that
\begin{align*}
& \Pro{\left. \Gamma^{t_1} \leq 4n \cdot e^{\gamma g (\log(ng))^2} \cdot (ng)^{14} \,\right|\, \mathfrak{F}^{t_0}, \max_{i \in [n]} \left| y_i^{t_0} \right| \leq g (\log(ng))^2}  \\
& \quad \geq \Pro{\left. \Gamma^{t_1} \leq 4n \cdot e^{\gamma g (\log(ng))^2} \cdot (ng)^{14} \,\right|\, \mathfrak{F}^{t_0}, \Gamma^{t_0} \leq 2n \cdot e^{\gamma g (\log(ng))^2}} \geq 1 - (ng)^{-14}.
\end{align*}
When the event $\big\{ \Gamma^{t_1} \leq 4n \cdot e^{\gamma g (\log(ng))^2} \cdot (ng)^{14} \big\}$ holds, then it also follows that
\[
\max_{i \in [n]} \left| y_i^{t_1} \right|
  \leq \frac{\log \Gamma^{t_1}}{\gamma} 
  \leq g (\log(ng))^2 + \frac{1}{\gamma} \cdot \left( \log (4n) + 14 \log( ng ) \right) \leq 2 g (\log(ng))^2,
\]
for sufficiently large $n$ and using that $\gamma = \Theta\big(\frac{1}{g}\big)$. Hence, we get the conclusion.
\end{proof}

\subsection{Omitted Proofs from Section~\ref{sec:g_adv_g_plus_logn_bound}}

{\renewcommand{\thelem}{\ref{lem:g_adv_general_quadratic}}
	\begin{lem}[\textbf{Restated, page~\pageref{lem:g_adv_general_quadratic}}]
\GAdvGeneralQuadratic
	\end{lem}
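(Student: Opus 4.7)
The plan is to compute $\Delta\Upsilon^{t+1}$ exactly, conditional on which bin receives the ball, and then take expectations. Let $j \in [n]$ be the bin that receives the $(t+1)$-th ball. Since the average load increases by $1/n$, the normalized loads evolve as
\[
y_j^{t+1} = y_j^t + 1 - \tfrac{1}{n}, \qquad y_i^{t+1} = y_i^t - \tfrac{1}{n} \text{ for } i \neq j.
\]
Expanding squares gives
\[
\Delta\Upsilon^{t+1} = \bigl(2y_j^t + 1 - \tfrac{1}{n}\bigr)\bigl(1 - \tfrac{1}{n}\bigr) + \bigl(1 - \tfrac{1}{n}\bigr)^2 \cdot \tfrac{1-n}{\cdot}\ \text{-style terms,}
\]
which I would simplify by collecting: the contribution from the bin that was hit is $2 y_j^t(1-\tfrac{1}{n}) + (1-\tfrac{1}{n})^2$, and the contribution from all other bins is $\sum_{i \neq j}\bigl(-\tfrac{2 y_i^t}{n} + \tfrac{1}{n^2}\bigr)$. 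Using $\sum_{i \neq j} y_i^t = -y_j^t$ (since $\sum_i y_i^t = 0$), the linear terms combine cleanly to yield the deterministic identity
\[
\Delta\Upsilon^{t+1} \;=\; 2 y_j^t + 1 - \tfrac{1}{n}.
\]

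For part $(i)$, I would then take the conditional expectation over the choice of $j$. Because $\Upsilon$ is symmetric in the bin labels, the probability that the bin with the $i$-th largest load receives the ball is exactly $r_i^t$, so
\[
\Ex{\Delta\Upsilon^{t+1} \mid y^t} \;=\; \sum_{i=1}^n r_i^t \bigl(2 y_i^t + 1 - \tfrac{1}{n}\bigr) \;=\; \sum_{i=1}^n 2 r_i^t y_i^t + 1 - \tfrac{1}{n},
\]
using $\sum_i r_i^t = 1$. The stated upper bound follows from $1 - 1/n \leq 1$.

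For part $(ii)$, I would use the same deterministic identity, which holds pointwise rather than only in expectation:
\[
|\Delta\Upsilon^{t+1}| \;\leq\; 2 |y_j^t| + 1 + \tfrac{1}{n} \;\leq\; 2 \max_{i \in [n]} |y_i^t| + 2 \;\leq\; 4 \max_{i \in [n]}|y_i^t| + 2.
\]
(The factor $4$ in the statement is loose; $2$ would suffice, but is enough.) There is no real obstacle here — the only thing to be careful about is matching the time-$t+1$ normalization in the definition of $y^{t+1}$ when expanding the squares; once that cancellation is done, both statements drop out immediately from the single closed-form identity $\Delta\Upsilon^{t+1} = 2 y_j^t + 1 - 1/n$.
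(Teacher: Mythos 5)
Your proof is correct, and the core calculation is the same as the paper's: condition on the chosen bin $j$, expand $(y_j^t + 1 - \frac{1}{n})^2 - (y_j^t)^2$ for the hit bin and $(y_i^t - \frac{1}{n})^2 - (y_i^t)^2$ for the others, and use $\sum_i y_i^t = 0$ to collect the linear terms. What you do slightly differently is carry this all the way to the exact pointwise identity $\Delta\Upsilon^{t+1} = 2y_j^t + 1 - \frac{1}{n}$ (conditional on the chosen bin $j$) rather than aggregating per-bin contributions. For part $(i)$ this is cosmetic: averaging over $j$ with respect to $r^t$ yields the stated formula either way. For part $(ii)$ it genuinely helps: the paper bounds $|\Delta\Upsilon_i^{t+1}|$ separately for the hit bin (by $2\max_i|y_i^t|+1$) and for each of the other $n-1$ bins (by $\frac{2}{n}\max_i|y_i^t|+\frac{1}{n}$), sums via the triangle inequality, and arrives at $4\max_i|y_i^t|+2$; your closed-form identity gives $|\Delta\Upsilon^{t+1}| \leq 2\max_i|y_i^t| + 1 + \frac{1}{n}$ directly, which is strictly sharper and trivially implies the stated bound. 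One small blemish: the first display in your expansion is garbled (the factor $\tfrac{1-n}{\cdot}$ is ill-formed), but you immediately restate the per-bin contributions correctly, so the argument is not affected.
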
 }
	\addtocounter{lem}{-1}

\label{lem:g_adv_general_quadratic_proof}
\begin{proof}
\textit{First statement.} For any bin $i \in [n]$, its expected contribution to $\Upsilon^{t+1}$ is given by,
\begin{align*}
\Ex{\left. \Upsilon_i^{t+1} \,\right|\, y^t} 
 & = \Big(y_i^t + 1 - \frac{1}{n} \Big)^2 \cdot r_i^t + \Big(y_i^t - \frac{1}{n}\Big)^2 \cdot (1 - r_i^t) \\
 & = (y_i^t)^2 + 2 \cdot \Big(1 - \frac{1}{n}\Big) \cdot y_i^t \cdot r_i^t - 2 \cdot \frac{1}{n} \cdot y_i^t \cdot (1 - r_i^t) + \Big(1 - \frac{1}{n}\Big)^2 \cdot r_i^t + \frac{1}{n^2} \cdot (1 - r_i^t) \\
 & = (y_i^t)^2 + 2 \cdot \Big(r_i^t - \frac{1}{n}\Big) \cdot y_i^t + \Big(1 - \frac{1}{n}\Big)^2 \cdot r_i^t + \frac{1}{n^2} \cdot (1 - r_i^t).
\end{align*}
Hence, by aggregating over all bins we get,
\begin{align*}
\Ex{\left. \Upsilon^{t+1} \,\right|\, y^t} 
 & = \sum_{i = 1}^n \left[ (y_i^t)^2 + 2 \cdot \Big(r_i^t - \frac{1}{n}\Big) \cdot y_i^t + \Big(1 - \frac{1}{n}\Big)^2 \cdot r_i^t + \frac{1}{n^2} \cdot (1 - r_i^t) \right] \\
 & = \Upsilon^t + \sum_{i = 1}^n 2 \cdot \Big(r_i^t - \frac{1}{n}\Big) \cdot y_i^t + \Big(1 - \frac{1}{n}\Big)^2 + \frac{1}{n} - \frac{1}{n^2} \\
 & = \Upsilon^t + \sum_{i = 1}^n 2 \cdot \Big(r_i^t - \frac{1}{n}\Big) \cdot y_i^t + 1 - \frac{1}{n} \\
 & = \Upsilon^t + \sum_{i = 1}^n 2 \cdot r_i^t \cdot y_i^t - \sum_{i = 1}^n \frac{2}{n} \cdot y_i^t + 1 - \frac{1}{n} \\
 & \stackrel{(a)}{=} \Upsilon^t + \sum_{i = 1}^n 2 \cdot r_i^t \cdot y_i^t + 1 - \frac{1}{n} \\
 & \leq \Upsilon^t + \sum_{i = 1}^n 2 \cdot r_i^t \cdot y_i^t + 1,
\end{align*}
using in $(a)$ that $\sum_{i = 1}^n y_i^t = 0$.
Therefore, by subtracting $\Upsilon^t$, statement $(i)$ follows.

\textit{Second statement.} Let $M :=  \max_{i \in [n]} |y_i^t|$.
We will upper bound the change $\Delta\Upsilon_i^{t+1}$ for an arbitrary bin $i \in [n]$, by considering the following two cases: %

\textbf{Case 1:} Ball at step $t+1$ is allocated to bin $i$. So,
\[
\left|\Delta\Upsilon_i^{t+1}\right| = \Big|\Big(y_i^t + 1 - \frac{1}{n}\Big)^2 - (y_i^t)^2 \Big| = \Big|2 \cdot \Big(1 - \frac{1}{n} \Big) \cdot y_i^t + \Big(1 - \frac{1}{n}\Big)^2 \Big| \leq 2M + 1.
\]

\textbf{Case 2:} Ball at step $t+1$ is not allocated to bin $i$. So,
\[
\left|\Delta\Upsilon_i^{t+1}\right| = \left|\Big(y_i^t - \frac{1}{n}\Big)^2 - (y_i^t)^2\right| = \left|-\frac{2}{n} \cdot y_i^t + \frac{1}{n^2}\right| \leq \frac{2M}{n} + \frac{1}{n}.
\]
Aggregating over all bins $i \in [n]$ yields
\begin{align*}
\left|\Delta\Upsilon^{t+1}\right| \leq \sum_{i = 1}^n \left|\Delta\Upsilon_i^{t+1}\right| 
&\leq 2M + 1 + (n-1) \cdot \left( \frac{2M}{n} + \frac{1}{n} \right) \leq 4M + 2. \qedhere
\end{align*}
\end{proof}

\label{sec:g_adv_lambda_tilde_is_supermartingale}

{\renewcommand{\thelem}{\ref{lem:g_adv_lambda_tilde_is_supermartingale}}
	\begin{lem}[\textbf{Restated, page~\pageref{lem:g_adv_lambda_tilde_is_supermartingale}}]
\LambdaTildeSuperMartingale
	\end{lem}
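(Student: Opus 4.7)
The plan is a direct case analysis on the step $s$, exploiting how the indicator $\mathbf{1}_{\mathcal{E}_{t_0}^{s}}$ interacts with the multiplicative correction factors $\exp(-\tfrac{3\alpha}{n} B_{t_0}^{s}) \cdot \exp(\tfrac{\alpha\eps}{n} G_{t_0}^{s})$ in the definition of $\tilde{\Lambda}_{t_0}^{s+1}$. Since $\mathcal{E}_{t_0}^{s} \subseteq \mathcal{E}_{t_0}^{s-1}$, if $\mathcal{E}_{t_0}^{s-1}$ fails, then $\tilde{\Lambda}_{t_0}^{s} = \tilde{\Lambda}_{t_0}^{s+1} = 0$ deterministically, so the inequality holds trivially. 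Hence we may assume $\mathcal{E}_{t_0}^{s-1}$ holds. Next, if $\Lambda^{s} \leq cn$, then the event $\mathcal{E}_{t_0}^{s}$ fails and so $\tilde{\Lambda}_{t_0}^{s+1} = 0$, which is again $\leq \tilde{\Lambda}_{t_0}^{s}$. The only interesting case is therefore when $\mathcal{E}_{t_0}^{s-1}$ holds \emph{and} $\Lambda^{s} > cn$, i.e., $\mathcal{E}_{t_0}^{s}$ holds.

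In this case, I would split according to whether step $s$ is good ($\Delta^s \leq Dng$) or bad. For a good step, $G_{t_0}^{s} = G_{t_0}^{s-1} + 1$ and $B_{t_0}^{s} = B_{t_0}^{s-1}$, and \cref{lem:g_adv_good_step_drop}~$(ii)$ (which applies precisely because $\Lambda^s > cn$) gives $\Ex{\Lambda^{s+1} \mid \mathfrak{F}^s} \leq \Lambda^s \cdot (1 - \tfrac{\alpha\eps}{n})$. Factoring this out yields
\[
\Ex{\tilde{\Lambda}_{t_0}^{s+1} \mid \mathfrak{F}^s} \leq \tilde{\Lambda}_{t_0}^{s} \cdot \Big(1 - \tfrac{\alpha\eps}{n}\Big) \cdot e^{\alpha\eps/n} \leq \tilde{\Lambda}_{t_0}^{s},
\]
where the last inequality uses $(1-u)e^{u} \leq 1$ for $u \geq 0$. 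For a bad step, $B_{t_0}^{s} = B_{t_0}^{s-1}+1$ and $G_{t_0}^{s} = G_{t_0}^{s-1}$, and \cref{lem:g_adv_bad_step_increase}~$(ii)$ gives the weaker bound $\Ex{\Lambda^{s+1} \mid \mathfrak{F}^s} \leq \Lambda^s \cdot (1 + \tfrac{3\alpha}{n})$. Factoring this out yields
\[
\Ex{\tilde{\Lambda}_{t_0}^{s+1} \mid \mathfrak{F}^s} \leq \tilde{\Lambda}_{t_0}^{s} \cdot \Big(1 + \tfrac{3\alpha}{n}\Big) \cdot e^{-3\alpha/n} \leq \tilde{\Lambda}_{t_0}^{s},
\]
using $(1+u)e^{-u} \leq 1$ for $u \geq 0$ (equivalently $1+u \leq e^{u}$).

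Combining the four cases gives the desired super-martingale property. The only subtlety I anticipate is making sure that the conditioning in \cref{lem:g_adv_good_step_drop}~$(ii)$ (which requires both $\Delta^s \leq Dng$ and $\Lambda^s > cn$) is applied on an event that is $\mathfrak{F}^s$-measurable, so the partition into good/bad steps and the constraint $\Lambda^s > cn$ can be incorporated into the conditioning on $\mathfrak{F}^s$ without issue. Once that bookkeeping is taken care of, the proof reduces to the two elementary inequalities $(1-u)e^{u} \leq 1$ and $(1+u)e^{-u} \leq 1$ for $u \geq 0$, which is precisely why the calibration $B$ gets the coefficient $3\alpha/n$ (matching the bad-step increase) and $G$ gets $\alpha\eps/n$ (matching the good-step decrease).
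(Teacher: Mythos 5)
Your proof is correct and follows essentially the same route as the paper: both reduce to the two elementary inequalities $(1-u)e^{u}\leq 1$ and $(1+u)e^{-u}\leq 1$ after a good/bad-step dichotomy, with the degenerate cases handled by noting that the indicator forces $\tilde{\Lambda}_{t_0}^{s+1}=0$. The only organizational difference is that you dispatch the indicator cases (whether $\mathcal{E}_{t_0}^{s-1}$ and $\mathcal{E}_{t_0}^{s}$ hold) up front, whereas the paper factors the definition of $\tilde{\Lambda}$ first and folds the $\Lambda^s\le cn$ observation into the good-step case; your measurability worry is unfounded since $\Delta^s$ and $\Lambda^s$ are $\mathfrak{F}^s$-measurable.
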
 }
	\addtocounter{lem}{-1}

\begin{proof}
Recalling the definition of $\tilde{\Lambda}$ in \cref{eq:tilde_lambda},  
\begin{align*}
\lefteqn{ \ex{ \tilde{\Lambda}_{t_0}^{s+1} \,\big|\, \mathfrak{F}^s} } \\
 &=\ex{ \Lambda^{s+1} \cdot \mathbf{1}_{\mathcal{E}_{t_0}^{s}} \,\big|\, \mathfrak{F}^s} \cdot \exp\Big( - \frac{3\alpha}{n} \cdot B_{t_0}^{s} \Big) \cdot \exp\Big( \frac{\alpha \eps}{n} \cdot G_{t_0}^{s} \Big) \\ 
 & = \ex{ \Lambda^{s+1} \cdot \mathbf{1}_{\mathcal{E}_{t_0}^{s}} \,\big|\, \mathfrak{F}^s} \cdot \exp\left(\frac{\alpha \eps}{n} \cdot \mathbf{1}_{\mathcal{G}^s} - \frac{3\alpha}{n} \cdot \mathbf{1}_{\neg \mathcal{G}^s}\right) \cdot \exp\Big( - \frac{3\alpha}{n} \cdot B_{t_0}^{s - 1} \Big) \cdot \exp\Big( \frac{\alpha \eps}{n} \cdot G_{t_0}^{s - 1} \Big).
\end{align*}
Thus, to prove the statement, it suffices to show that \begin{equation}\label{eq:sufficient2}\ex{ \Lambda^{s+1} \cdot \mathbf{1}_{\mathcal{E}_{t_0}^{s}} \,\big|\, \mathfrak{F}^s} \cdot \exp\left(\frac{\alpha \eps}{n} \cdot \mathbf{1}_{\mathcal{G}^s} - \frac{3\alpha}{n} \cdot \mathbf{1}_{\neg \mathcal{G}^s}\right) 
 \leq \Lambda^{s} \cdot \mathbf{1}_{\mathcal{E}_{t_0}^{s-1}}.\end{equation}To show \cref{eq:sufficient2}, we consider two cases based on whether $\mathcal{G}^s$ holds.

\medskip 

\noindent\textbf{Case 1 [$\mathcal{G}^s$ holds]:} 
Recall that when $\mathcal{G}^s$ holds, then $\Delta^{s} \leq Dng$. Further, when $\Lambda^s \leq cn$ holds (for $c > 0$ the constant in \cref{lem:g_adv_good_step_drop}), then $\mathbf{1}_{\mathcal{E}_{t_0}^{s}} = 0$. Thus, using \cref{lem:g_adv_good_step_drop}~$(ii)$, 
\[
\ex{ \Lambda^{s+1} \cdot \mathbf{1}_{\mathcal{E}_{t_0}^{s}} \,\big|\, \mathfrak{F}^s, \mathcal{G}^s} \leq \Lambda^{s} \cdot \mathbf{1}_{\mathcal{E}_{t_0}^{s-1}}  \cdot \left(1 - \frac{\alpha\eps}{n} \right) \leq \Lambda^{s} \cdot \mathbf{1}_{\mathcal{E}_{t_0}^{s-1}} \cdot \exp\left(- \frac{\alpha \eps}{n} \right).
\]
Hence, since in this case $\mathbf{1}_{\mathcal{G}^s}=1$, the left hand side of \cref{eq:sufficient2} is equal to
\begin{align*}
\ex{\Lambda^{s+1} \cdot \mathbf{1}_{\mathcal{E}_{t_0}^{s}} \,\big|\, \mathfrak{F}^s, \mathcal{G}^s} \cdot \exp\left(\frac{ \alpha \eps}{n} \right) &\leq \left(\Lambda^{s} \cdot \mathbf{1}_{\mathcal{E}_{t_0}^{s-1}} \cdot \exp\left(- \frac{\alpha \eps}{n} \right)\right) \cdot \exp\left( \frac{ \alpha \eps}{n} \right)  = \Lambda^{s} \cdot \mathbf{1}_{\mathcal{E}_{t_0}^{s-1}} .
\end{align*}
\noindent\textbf{Case 2 [$\mathcal{G}^s$ does not hold]:} By \cref{lem:g_adv_bad_step_increase}~$(ii)$,  we get \[
\ex{\Lambda^{s+1} \cdot \mathbf{1}_{\mathcal{E}_{t_0}^{s}} \,\big|\, \mathfrak{F}^s, \neg \mathcal{G}^s} 
\leq \Lambda^{s} \cdot \mathbf{1}_{\mathcal{E}_{t_0}^{s-1}} \cdot \left(1 + \frac{3 \alpha}{n}\right) \leq \Lambda^{s} \cdot \mathbf{1}_{\mathcal{E}_{t_0}^{s-1}}\cdot \exp\left(\frac{3 \alpha }{n} \right).
\]
Hence,  since in this case $\mathbf{1}_{\mathcal{G}^s}=0$, the left hand side of \cref{eq:sufficient2} is equal to
\begin{align*}
\ex{\Lambda^{s+1} \cdot \mathbf{1}_{\mathcal{E}_{t_0}^{s}} \,\big|\, \mathfrak{F}^s, \neg \mathcal{G}^s} \cdot \exp\left(- \frac{3 \alpha }{n} \right) &\leq \left( \Lambda^{s} \cdot \mathbf{1}_{\mathcal{E}_{t_0}^{s-1}}\cdot \exp\left(\frac{3 \alpha }{n} \right)\right)   \cdot \exp\left( -\frac{3 \alpha}{n}   \right) = \Lambda^{s} \cdot \mathbf{1}_{\mathcal{E}_{t_0}^{s-1}}.
\end{align*} Since \cref{eq:sufficient2} holds in either case, we deduce that $(\tilde{\Lambda}_{t_0}^s)_{s \geq t_0}$ forms a super-martingale.\end{proof}

{\renewcommand{\thelem}{\ref{lem:g_adv_bounds_on_quadratic}}
	\begin{lem}[\textbf{Restated, page~\pageref{lem:g_adv_bounds_on_quadratic}}]
\GAdvBoundsOnQuadratic
	\end{lem}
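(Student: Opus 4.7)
\textbf{Proof Plan for \cref{lem:g_adv_bounds_on_quadratic}.} The plan is to split the bins into those inside the offset window $[-c_4 g, c_4 g]$ and those outside it, and bound the contribution of each group to $\Upsilon^t$ separately. For any bin $i$ with $|y_i^t| \leq c_4 g$, trivially $(y_i^t)^2 \leq c_4^2 g^2$, contributing at most $c_4^2 n g^2$ in total. For any bin $i$ with $|y_i^t| > c_4 g$, setting $x_i := |y_i^t| - c_4 g > 0$, the bin contributes at least $e^{\alpha x_i}$ to $\Lambda^t$ (from the ``active'' side), while its quadratic contribution satisfies $(y_i^t)^2 = (c_4 g + x_i)^2 \leq 2 c_4^2 g^2 + 2 x_i^2$. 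So it remains to control $\sum_{i \in S} x_i^2$ where $S := \{i \in [n] : |y_i^t| > c_4 g\}$.

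For part $(ii)$, I would just use the per-bin bound: each $i \in S$ satisfies $e^{\alpha x_i} \leq \Lambda^t$ and hence $x_i \leq \tfrac{1}{\alpha} \log \Lambda^t$, so $x_i^2 \leq \tfrac{1}{\alpha^2} (\log \Lambda^t)^2$. Summing over $|S| \leq n$ bins and combining with the ``inside'' contribution yields
\[
\Upsilon^t \leq c_4^2 n g^2 + n \cdot \Big( 2 c_4^2 g^2 + \tfrac{2}{\alpha^2} (\log \Lambda^t)^2 \Big) \leq c_r \cdot n \cdot \big(g^2 + (\log \Lambda^t)^2\big),
\]
for $c_r := 3 c_4^2 + 2/\alpha^2$, which proves $(ii)$.

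For part $(i)$ the per-bin bound is too lossy; using it here would introduce a spurious $\log^2 n$ term, whereas the claim is $\Upsilon^t = O(n g^2)$ whenever $\Lambda^t \leq \hat{c} n$. The key observation is that one can bound $x_i^2$ directly by $e^{\alpha x_i}$: elementary calculus (the function $x \mapsto x^2 e^{-\alpha x}$ is maximized at $x = 2/\alpha$, with maximum value $4/(\alpha^2 e^2)$) gives $x_i^2 \leq \tfrac{4}{\alpha^2 e^2} \cdot e^{\alpha x_i}$ for all $x_i \geq 0$. Summing over $i \in S$ and using $\sum_{i \in S} e^{\alpha x_i} \leq \Lambda^t \leq \hat{c} n$ gives $\sum_{i \in S} x_i^2 \leq \tfrac{4 \hat{c} n}{\alpha^2 e^2}$; combining with $|S| \leq \hat{c} n$ (since each $i \in S$ contributes at least $1$ to $\Lambda^t$) we obtain
\[
\Upsilon^t \leq c_4^2 n g^2 + 2 c_4^2 \hat{c} n g^2 + \tfrac{8 \hat{c} n}{\alpha^2 e^2} \leq c_s \cdot n g^2,
\]
for $c_s := 3 c_4^2 \hat{c} + 8\hat{c}/(\alpha^2 e^2)$, absorbing the additive $n$ term into $n g^2$ using $g \geq 1$.

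The only mildly nontrivial ingredient is the calculus inequality $x^2 \leq \tfrac{4}{\alpha^2 e^2} e^{\alpha x}$; everything else is a clean case split on whether a bin lies inside or outside the offset. No induction or coupling is required, and no random variables enter the argument — the lemma is purely a deterministic comparison between two potential functions.
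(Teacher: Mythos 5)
Your proof is correct and follows essentially the same strategy as the paper's: decompose each bin's quadratic contribution around the offset $c_4 g$, bound the excess $(|y_i^t|-c_4g)^+$ via its contribution to $\Lambda^t$, and aggregate. The only variation is that for part~$(i)$ you invoke the single clean calculus inequality $x^2 \leq \frac{4}{\alpha^2 e^2}\,e^{\alpha x}$ (valid for all $x \geq 0$), whereas the paper uses the two-stage bound $u^2 \leq \max\{\hat{u}^2, e^{\alpha u}\} \leq 2\hat{u}^2 \cdot \Lambda_i^t$ with $\hat{u} := (4/\alpha)\log(4/\alpha)$; both routes give the stated $\Oh(ng^2)$ conclusion.
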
 }
	\addtocounter{lem}{-1}

\label{sec:g_adv_bounds_on_quadratic_proof}

\begin{proof}
\textit{First statement.} We begin by proving some basic inequalities between exponential, quadratic and linear terms. Let $\hat{u} := (4/\alpha) \cdot \log(4/\alpha)$. Note that $e^u \geq u$ (for any $u \geq 0$) and hence for any $u \geq \hat{u}$,
\begin{align*}
e^{\alpha u/2} = e^{\alpha u/4} \cdot e^{\alpha u/4} \geq \frac{\alpha u}{4} \cdot e^{\alpha \hat{u}/4} =  \frac{\alpha u}{4} \cdot \frac{4}{\alpha} = u,
\end{align*}
and $e^{\alpha u} = e^{\alpha u/2} \cdot e^{\alpha u/2} \geq u \cdot u = u^2$. Therefore, for every $u \geq 0$,
\begin{align} \label{eq:combined_quad_exp}
u^2 \leq \max\left\{\hat{u}^2, e^{\alpha u}\right\}.
\end{align}
Recall that for any bin $i \in [n]$, $\Lambda_i^t := e^{\alpha \cdot (y_i^t - c_4g)^+} + e^{\alpha \cdot (-y_i^t - c_4g)^+}$. Hence, 
\begin{align}
 & \left( (y_i^t - c_4g)^+ \right)^2 + \left( (-y_i^t - c_4g)^+ \right)^2  \stackrel{(a)}{\leq} \max \left\{ 2\hat{u}^2, \Lambda_i^{t} \right\} 
 \stackrel{(b)}{\leq} \max \left\{ 2\hat{u}^2 \cdot \Lambda_i^{t}, \Lambda_i^{t} \right\} 
 \stackrel{(c)}{=} 2\hat{u}^2 \cdot \Lambda_i^t, \label{eq:quadratic_potential_term_bound}
\end{align} 
where in $(a)$ we used \cref{eq:combined_quad_exp} first with $u = (y_i^t - c_4g)^+$ and then with $u = (-y_i^t - c_4g)^+$, in $(b)$ that $\Lambda_i^{t} \geq 1$ for any $i \in [n]$ and in $(c)$ that $\hat{u} \geq 1$, since $\alpha \in (0, 1)$.

We now proceed to upper bound the quadratic potential,
\begin{align*}
\Upsilon^t & \leq \sum_{i = 1}^n \Big[ \left( (y_i^t - c_4g)^+ + c_4g\right)^2 + \left( (-y_i^t - c_4g)^+ + c_4g\right)^2 \Big] \\
 & \stackrel{(a)}{\leq} 2 \cdot \sum_{i = 1}^n \Big[ \left( (y_i^t - c_4g)^+\right)^2 + \left( (-y_i^t - c_4g)^+ \right)^2 + 2 \cdot (c_4g)^2 \Big] \\
 & \!\!\stackrel{(\text{\ref{eq:quadratic_potential_term_bound}})}{\leq} 4\hat{u}^2 \cdot \Lambda^t + 4c_4^2 \cdot ng^2  \\
 & \stackrel{(b)}{\leq} 4\hat{c}\hat{u}^2 \cdot n + 4c_4^2 \cdot ng^2  \\
 & \leq (4 \hat{c}\hat{u}^2 + 4c_4^2) \cdot ng^2,
\end{align*}
using in $(a)$ that $(a+b)^2 \leq 2 \cdot (a^2 + b^2)$ (for any $a, b$) and in $(b)$ that $\Lambda^t \leq \hat{c}\cdot n$. Therefore, for the constant $c_s := c_s(\alpha, c_4, \hat{c}) := 4 \hat{c}\hat{u}^2 + 4c_4^2$, we get the first statement.

\textit{Second statement.} For any bin $i \in [n]$ we have,
\[
|y_i^t| \leq c_4g + \frac{1}{\alpha} \log \Lambda^t.
\]
Hence, using that $(a+b)^2 \leq 2 \cdot (a^2 + b^2)$,
\[
(y_i^t)^2 \leq \Big(c_4g + \frac{1}{\alpha} \log \Lambda^t\Big)^2 
\leq 2 \cdot \Big( c_4^2 g^2 + \frac{1}{\alpha^2} \cdot (\log \Lambda^t)^2 \Big)
\leq c_r \cdot \Big(g^2 + (\log \Lambda^t)^2 \Big),
\]
for some constant $c_r := c_r(\alpha, c_4) = \max\left\{ 2c_4^2, \frac{2}{\alpha^2} \right\} \geq 1$. By aggregating the contributions over all bins, we get the second statement.
\end{proof}

{\renewcommand{\thelem}{\ref{lem:g_adv_stabilization}}
	\begin{lem}[\textbf{Restated, page~\pageref{lem:g_adv_stabilization}}]
\GAdvStabilization
	\end{lem}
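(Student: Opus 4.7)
The plan is to follow the same three-phase template used in the proof of Lemma~\ref{lem:g_adv_recovery}: (i) convert the starting hypothesis $\Lambda^{t_0}\leq 2cn$ into a bound on the quadratic potential $\Upsilon^{t_0}$; (ii) apply Lemma~\ref{lem:g_adv_many_good_steps} to guarantee a constant fraction of good steps in $[t_0,t_1]$ with $t_1:=t_0+\Delta_s$; (iii) use the super-martingale $\tilde{\Lambda}_{t_0}^{s}$ (Lemma~\ref{lem:g_adv_lambda_tilde_is_supermartingale}) together with Markov's inequality and derive a contradiction against $\Lambda^{t_1}\geq n$ unless $\mathbf{1}_{\mathcal{E}_{t_0}^{t_1-1}}=0$, i.e., unless some step in $[t_0,t_1-1]$ has $\Lambda^{s}\leq cn$. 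The only structural difference from the recovery case is that the starting bound $\Lambda^{t_0}\leq 2cn$ is linear in $n$ rather than exponential in $g\log(ng)$, which is what buys us the shorter window $\Delta_s=\Theta(n\cdot\max\{\log n,g\})$ instead of $\Delta_r=\Theta(ng\,(\log(ng))^2)$.

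Concretely, first I would observe that $\Lambda^{t_0}\leq 2cn$ deterministically implies $\max_{i\in[n]}|y_i^{t_0}|\leq c_4 g+\tfrac{1}{\alpha}\log(2cn)\leq g(\log(ng))^2$ for sufficiently large $n$, and by Lemma~\ref{lem:g_adv_bounds_on_quadratic}~$(i)$ with $\hat{c}=2c$ it follows that $\Upsilon^{t_0}\leq c_s\cdot ng^2=:T$. Noting that $T\in[ng^2,\, n^2g^3/\hat{c}]$ for the scaling factor $\hat{c}:=\Delta_s\cdot g/T=\tfrac{60}{\alpha\eps r}\cdot\max\{\log n/g,\,1\}\geq 1$, I would then invoke Lemma~\ref{lem:g_adv_many_good_steps} to obtain
\[
\Pro{G_{t_0}^{t_1-1}\geq r\cdot\Delta_s \,\left|\, \mathfrak{F}^{t_0},\, \Lambda^{t_0}\leq 2cn\right.}\geq 1-2\cdot(ng)^{-12}.
\]

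Second, since $(\tilde{\Lambda}_{t_0}^{s})_{s\geq t_0}$ is a super-martingale with $\tilde{\Lambda}_{t_0}^{t_0}=\Lambda^{t_0}\leq 2cn$, Markov's inequality gives $\Lambda^{t_1}\cdot\mathbf{1}_{\mathcal{E}_{t_0}^{t_1-1}}\leq 2cn\cdot(ng)^{12}\cdot\exp\!\bigl(\tfrac{3\alpha}{n}B_{t_0}^{t_1-1}-\tfrac{\alpha\eps}{n}G_{t_0}^{t_1-1}\bigr)$ with probability at least $1-(ng)^{-12}$. Combining with the good-step estimate and using the identity $\tfrac{3\alpha}{n}(1-r)=\tfrac{\alpha\eps r}{2n}$ (which is exactly why $r=\tfrac{6}{6+\eps}$ was chosen), the exponential factor collapses to $\exp(-\tfrac{\alpha\eps r}{2n}\cdot\Delta_s)=\exp(-30 c_s\cdot\max\{\log n,g\})\leq n^{-30}$, so that
\[
\Lambda^{t_1}\cdot\mathbf{1}_{\mathcal{E}_{t_0}^{t_1-1}}\leq 2cn\cdot(ng)^{12}\cdot n^{-30}\leq 1
\]
for sufficiently large $n$. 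Since $\Lambda^{t_1}\geq n$ deterministically, the indicator must vanish, yielding some $s\in[t_0,t_1-1]$ with $\Lambda^{s}\leq cn$. A final union bound over the two high-probability events gives the claimed $1-(ng)^{-11}$.

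There is no real obstacle: the argument is a direct specialization of the recovery proof where the tighter starting point $\Lambda^{t_0}\leq 2cn$ replaces $\Lambda^{t_0}\leq e^{c_3 g\log(ng)}$, so both the Markov prefactor and the horizon $\Delta_s$ shrink accordingly. The only calculation requiring mild care is verifying $\hat{c}\geq 1$ in the application of Lemma~\ref{lem:g_adv_many_good_steps}, which uses $\max\{\log n,g\}\geq g$, and the bookkeeping of constants so that $2cn\cdot(ng)^{12}\cdot\exp(-30 c_s\log n)\leq 1$ given $c_s\geq 1$.
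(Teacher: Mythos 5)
Your plan follows the paper's proof exactly (deduce $\Upsilon^{t_0}\leq c_s ng^2$, invoke \cref{lem:g_adv_many_good_steps}, then combine the super-martingale Markov bound with the good-step count to force $\mathbf{1}_{\mathcal{E}_{t_0}^{t_1-1}}=0$), but there are two slips in the implementation. First, when you apply \cref{lem:g_adv_many_good_steps} with $T := c_s ng^2$, the resulting $\hat{c}=\Delta_s\cdot g/T=\tfrac{60}{\alpha\eps r}\cdot\max\{\log n/g,\,1\}$ is \emph{not a constant} when $g=o(\log n)$, whereas the hypothesis of \cref{lem:g_adv_many_good_steps} requires $\hat{c}$ to be a constant. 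The paper sidesteps this by choosing the slightly larger $T:=c_s ng\cdot\max\{\log n,g\}$, which still dominates $\Upsilon^{t_0}\leq c_s ng^2$ (so the precondition $\Upsilon^{t_0}\leq T$ persists) and makes $\hat{c}=60/(\alpha\eps r)$ a genuine constant.

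Second, the bound $\exp(-30c_s\max\{\log n,g\})\leq n^{-30}$ discards the $g$-dependence, and the ensuing inequality $2cn\cdot(ng)^{12}\cdot n^{-30}\leq 1$ is false when $g$ is polynomially large in $n$: for $g=n^2$ one gets $(ng)^{12}=n^{36}$ and $2cn\cdot n^{36}\cdot n^{-30}=2cn^{7}>1$. The correct step, which the paper uses, is $\exp(-30\max\{\log n,g\})\leq\exp(-15\log(ng))=(ng)^{-15}$ (valid since $\log(ng)\leq 2\max\{\log n,g\}$ for $g\geq 1$), whence $2cn\cdot(ng)^{12}\cdot(ng)^{-15}=2cn\cdot(ng)^{-3}\leq 1$. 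Both issues are local and do not affect the correctness of your overall structure.
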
 }
	\addtocounter{lem}{-1}

\label{sec:g_adv_stabilization_proof}
\begin{proof}
By \cref{lem:g_adv_bounds_on_quadratic}~$(i)$, $\Lambda^{t_0} \leq 2cn$ implies that deterministically $\Upsilon^{t_0} \leq c_s n g^2$ for constant $c_s := c_s(\alpha, c_4, 2c) \geq 1$ and \[
\max_{i \in [n]} \left| y_i^{t_0} \right| 
  \leq c_4g + \frac{1}{\alpha} \cdot \log(2cn)
  \leq g (\log(ng))^2,
\]
for sufficiently large $n$ using that $c_4, \alpha, c > 0$ are constants. Let $t_1 :=t_0 + \Delta_s$. By \cref{lem:g_adv_many_good_steps} with $T := c_s ng \cdot \max\{\log n, g\} \geq c_s ng^2$ (and $o(n^2 g^3)$) and $\hat{c} := \frac{\Delta_s \cdot g}{T} = \frac{60}{\alpha\eps r} \geq 1$ as $\alpha, \eps, r \leq 1$, we have that
\begin{align}
 & \Pro{ G_{t_0}^{t_1-1} \geq r \cdot \Delta_s \;\Big|\; \mathfrak{F}^{t_0}, \Lambda^{t_0} \leq 2cn } \notag \\
 & \quad \geq \Pro{ G_{t_0}^{t_1-1} \geq r \cdot \Delta_s \;\Big|\; \mathfrak{F}^{t_0}, \Upsilon^{t_0} \leq T, \max_{i \in [n]} \left| y_i^{t_0} \right| \leq g (\log(ng))^2} \notag \\ 
 & \quad \geq 1 - 2 \cdot (ng)^{-12}. \label{eq:many_good_quantiles_whp_new}
\end{align}
By \cref{lem:g_adv_lambda_tilde_is_supermartingale}, $(\tilde{\Lambda}_{t_{0}}^{t})_{t \geq t_{0}}$ is a super-martingale, so $\ex{\tilde{\Lambda}_{t_0}^{t_1}\mid \mathfrak{F}^{t_0}, \Lambda^{t_0} \leq 2cn } \leq \tilde{\Lambda}_{t_0}^{t_0} =  \Lambda^{t_0}$. Hence, using Markov's inequality we get $\Pro{\tilde{\Lambda}_{t_0}^{t_1} >\Lambda^{t_0} \cdot (ng)^{12}\mid \mathfrak{F}^{t_0}, \Lambda^{t_0} \leq 2cn } \leq  (ng)^{-12}$. Thus, by the definition of $\tilde{\Lambda}_{t_0}^{t_1}$ in \cref{eq:tilde_lambda}, we have 
\begin{equation} \label{eq:supermartingale_markov_new}
\Pro{\Lambda^{t_1} \cdot \mathbf{1}_{\mathcal{E}_{t_0}^{t_1-1}} \leq \Lambda^{t_0} \cdot (ng)^{12} \cdot \exp\left(  \frac{3 \alpha}{n} \cdot B_{t_0}^{t_1-1}  -\frac{\alpha \eps}{n} \cdot G_{t_0}^{t_1-1} \right) \,\, \Bigg| \,\, \mathfrak{F}^{t_0}, \Lambda^{t_0} \leq 2cn } \geq  1 - (ng)^{-12}.    
\end{equation}
Further, if in addition to the two events $\{\tilde{\Lambda}_{t_0}^{t_1} \leq \Lambda^{t_0} \cdot (ng)^{12}\}$ and $\{\Lambda^{t_0} \leq 2cn \}$, also the event $\{G_{t_0}^{t_1-1} \geq r \cdot \Delta_s \}$ holds, then 
\begin{align*}
\Lambda^{t_1} \cdot \mathbf{1}_{\mathcal{E}_{t_0}^{t_1-1}} 
 & \leq \Lambda^{t_0} \cdot (ng)^{12} \cdot \exp\bigg(  \frac{3 \alpha}{n} \cdot B_{t_0}^{t_1-1}  -\frac{\alpha \eps}{n} \cdot G_{t_0}^{t_1-1} \bigg) \\
 & \leq 2cn \cdot (ng)^{12} \cdot \exp\left( \frac{3 \alpha}{n} \cdot (1 - r) \cdot \Delta_s - \frac{\alpha \eps}{n} \cdot r \cdot \Delta_s \right) \\
 & \stackrel{(a)}{=} %
 2cn \cdot (ng)^{12} \cdot \exp\bigg( - \frac{ \alpha \eps}{n} \cdot \frac{r}{2} \cdot \Delta_s \bigg) \\
 & = 2cn \cdot (ng)^{12} \cdot \exp\bigg( - \frac{\alpha \eps}{n} \cdot \frac{r}{2} \cdot  \frac{60 c_s}{\alpha \eps r} \cdot n \cdot \max\{\log n, g\} \bigg) \\ 
 & \stackrel{(b)}{\leq} 2cn \cdot (ng)^{12} \cdot \exp\left( - 30 \cdot \max\{\log n, g\} \right) \\
 & \leq 2cn \cdot (ng)^{12} \cdot \exp\left( - 15 \cdot \log (ng) \right) \\
 & \leq 1,
\end{align*}
where we used in $(a)$ that $r = \frac{6}{6 + \eps}$ implies $\frac{3\alpha}{n} \cdot (1-r) = \frac{3\alpha}{n} \cdot \frac{\eps}{6 + \eps} = \frac{\alpha\eps}{n} \cdot \frac{r}{2}$ and in $(b)$ that $c_s \geq 1$ and $\alpha \leq 1/2$.
Also $\Lambda^{t_1} \geq 2n$ holds deterministically, so we can deduce from the above inequality that $\mathbf{1}_{\mathcal{E}_{t_0}^{t_1-1}}=0$, that is,
\[\Pro{ \neg \mathcal{E}_{t_0}^{t_1 -1} \; \Bigg|\; \mathfrak{F}^{t_0},  \;\; \; \tilde{\Lambda}_{t_0}^{t_1} \leq \Lambda^{t_0} \cdot (ng)^{12}, \;\;\; \Lambda^{t_0} \leq 2cn,  \;\;\; G_{t_0}^{t_1-1} \geq r \cdot \Delta_s } = 1.\] 
Recalling the definition of $\mathcal{E}_{t_0}^{t_1-1} := \bigcap_{t \in [t_0, t_1-1]} \{ \Lambda^t > cn \}$, and taking the union bound over \cref{eq:many_good_quantiles_whp_new} and \cref{eq:supermartingale_markov_new} yields 
\[
\Pro{ \bigcup_{t \in [t_0, t_0 + \Delta_s]} \{ \Lambda^t \leq cn \} \; \bigg| \; \mathfrak{F}^{t_0}, \Lambda^{t_0} \leq 2cn  }\geq 1 - 2 \cdot (ng)^{-12} - (ng)^{-12}\geq 1 -  (ng)^{-11}. \qedhere 
\]
\end{proof} 

\subsection{Omitted Proofs from Section~\ref{sec:g_adv_upper_bound_for_small_g_outline}}

\begin{clm} \label{clm:tilde_g_justification}
Consider $\alpha_1, \alpha_2 > 0$ as defined in \cref{eq:g_adv_alpha_1_def} and \cref{eq:g_adv_alpha_2_def} respectively. Then, for any $g \geq \frac{\alpha_2}{4\sqrt{\alpha_1}}$ and $k := k(g) \geq 2$ being the unique integer such that $(\alpha_1 \log n)^{1/k} < g \leq (\alpha_1 \log n)^{1/(k-1)}$, it holds that
\[
(\alpha_1 \cdot (\log n))^{1/k} \leq \left(\frac{\alpha_2}{4} \cdot (\log n)\right)^{1/(k-1)}.
\]
\end{clm}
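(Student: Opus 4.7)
The plan is to reduce the target inequality to a purely arithmetic condition on $\alpha_1$, $\log n$, and $k$, and then verify that this condition is implied by the hypotheses on $g$ together with the relation between $\alpha_1$ and $\alpha_2$.

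First, I would raise both sides of the target inequality $(\alpha_1 \log n)^{1/k} \leq (\alpha_2/4 \cdot \log n)^{1/(k-1)}$ to the power $k(k-1)$, yielding the equivalent form $(\alpha_1 \log n)^{k-1} \leq (\alpha_2/4 \cdot \log n)^k$. Dividing both sides by $(\log n)^{k-1}$ gives $\alpha_1^{k-1} \leq (\alpha_2/4)^k \log n$. Using the defining relation $\alpha_2 = \alpha_1/84$ (so $\alpha_2/4 = \alpha_1/336$), this simplifies after dividing by $\alpha_1^{k-1}$ to the single cleaner inequality $\alpha_1 \log n \geq 336^k$. So the claim reduces to verifying this one lower bound on $\alpha_1 \log n$.

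Next, I would combine the two sides of the definition of $k$ with the assumed lower bound on $g$. From $(\alpha_1 \log n)^{1/k} < g$ we have $\alpha_1 \log n < g^k$; from $g \leq (\alpha_1 \log n)^{1/(k-1)}$ we have $g^{k-1} \leq \alpha_1 \log n$; and from $g \geq \frac{\alpha_2}{4\sqrt{\alpha_1}} = \frac{\sqrt{\alpha_1}}{336}$ we have $336\,g \geq \sqrt{\alpha_1}$, equivalently $\alpha_1 \leq 336^2 g^2$. The plan is to chain these together — the upper bound $g^{k-1} \leq \alpha_1 \log n$ gives the key lower bound on $\alpha_1 \log n$, and the assumption $336\,g \geq \sqrt{\alpha_1}$ (raised to an appropriate power) converts this into a comparison against $336^k$. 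Once $\alpha_1 \log n \geq 336^k$ is established, reversing the manipulations of the first paragraph recovers the original inequality.

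The main obstacle is organising the algebra cleanly in the tight regime where $g$ is close to its lower bound, since the explicit bound $g \geq \sqrt{\alpha_1}/336$ is numerically quite weak and we need to exploit that it holds together with a $k$-th-power relation rather than by itself. I expect the cleanest route is to use both sides of the definition of $k$ simultaneously — one to express $\alpha_1 \log n$ from below in terms of $g^{k-1}$, and the other (together with $336\,g \geq \sqrt{\alpha_1}$) to bound $336^k$ from above — and verify that the constants line up. Potentially one has to treat the borderline $k=2$ case separately, since the exponent gap $k-(k-1)=1$ is smallest there and leaves the least slack in the estimate.
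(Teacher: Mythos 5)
Your reduction of the target to $\alpha_1\log n \geq 336^k$ is correct and expresses the arithmetic more directly than the paper's log-space manipulation. The gap is in the second half of your plan: you want to chain $\alpha_1\log n \geq g^{k-1}$ with $g \geq \frac{\alpha_2}{4\sqrt{\alpha_1}} = \frac{\sqrt{\alpha_1}}{336}$, but that lower bound on $g$ is below $1$, so it gives no lower bound on $g^{k-1}$ that could reach $336^k$; raising $336\,g \geq \sqrt{\alpha_1}$ to any power still leaves you with a quantity below $1$. In fact, $\alpha_1\log n \geq 336^k$ together with the other defining inequality $(\alpha_1\log n)^{1/k} < g$ forces $g > 336$, so a derivation from $g \geq \frac{\sqrt{\alpha_1}}{336}$ alone is impossible — for a moderate constant $g$ (with $n$ large enough that $k \geq 2$ exists) the reduced inequality simply fails.

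The paper's own proof, at the step it labels $(b)$, in fact invokes $g \geq R^{-1}$ with $R := \frac{\alpha_2}{4\sqrt{\alpha_1}}$, i.e.\ $g \geq \frac{4\sqrt{\alpha_1}}{\alpha_2} = \frac{336}{\sqrt{\alpha_1}}$, the \emph{reciprocal} of the bound printed in the claim. Under that hypothesis your chain closes: $\alpha_1\log n \geq g^{k-1} \geq \bigl(\tfrac{336}{\sqrt{\alpha_1}}\bigr)^{k-1} = \tfrac{336^{k-1}}{\alpha_1^{(k-1)/2}}$, and the last quantity is $\geq 336^k$ precisely when $\alpha_1 \leq 336^{-2/(k-1)}$. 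The binding case is $k=2$, as you anticipated; it reads $\alpha_1 \leq 336^{-2}$, which holds with enormous margin since $\alpha_1 = \frac{1}{6\kappa}$ and $\kappa$ involves $c_s \geq 4c_4^2 = 4\cdot 730^2$. No separate treatment of $k=2$ is needed. So your plan is the right one, but only under the reciprocal lower bound on $g$, which is what the paper's proof actually relies on.
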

\begin{proof}
Let $R := \frac{\alpha_2}{4\sqrt{\alpha_1}} = \frac{\sqrt{\alpha_1}}{4 \cdot 84} \leq 1$ (using that $\alpha_2 = \frac{\alpha_1}{84}$ and $\alpha_1 \leq 1$). By rearranging the target inequality,
\begin{align*}
\left(\frac{\alpha_2}{4} \cdot (\log n)\right)^{1/(k-1)} \cdot (\alpha_1 \cdot (\log n))^{-1/k} 
 & = \exp\left(\frac{1}{k-1} \cdot \log\left(\frac{\alpha_2}{4} \log n\right)  - \frac{1}{k} \cdot \log(\alpha_1 \log n) \right) \\
 & \stackrel{(a)}{\geq} \exp\left(\frac{1}{k-1} \cdot \left( \log R + \frac{1}{k} \cdot \log \log n\right) \right) \\
 & \stackrel{(b)}{\geq} \exp\left(\frac{1}{k-1} \cdot \left( \log R - \log R \right) \right) = 1.
\end{align*}
using in $(a)$ that $-\frac{1}{k} \log \alpha_1 \geq - \frac{1}{2(k-1)} \log \alpha_1$ and in $(b)$ that $k \leq \frac{1}{-\log R} \cdot \log \log n$ since $g \geq R^{-1}$.
\end{proof}

\subsection{Omitted Proofs from Section~\ref{sec:g_adv_base_case}}

Next, we proceed with a simple smoothness argument for the potential $\V$ defined in \cref{eq:g_adv_v_def}.

{\renewcommand{\thelem}{\ref{lem:g_adv_v_smoothness}}
	\begin{lem}[\textbf{Restated, page~\pageref{lem:g_adv_v_smoothness}}]
\GAdvVSmoothness
	\end{lem}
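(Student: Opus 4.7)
Part $(i)$ is a one-step smoothness estimate that I will prove by bounding each of the $n$ summands of $\V^{t+1}$ by $e^{\alpha_1}$ times the corresponding summand of $\V^t$. At step $t+1$ exactly one ball arrives, so for each bin $i$ either $y_i^{t+1} = y_i^t + 1 - 1/n$ (bin $i$ receives the ball) or $y_i^{t+1} = y_i^t - 1/n$ (it does not). In the first case the overloaded component $e^{\alpha_1(y_i^t - c_4g)^+}$ grows by a factor of at most $e^{\alpha_1(1-1/n)} \leq e^{\alpha_1}$, while the underloaded component $e^{\alpha_1(-y_i^t - c_4g)^+}$ can only decrease. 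In the second case the underloaded component grows by a factor of at most $e^{\alpha_1/n} \leq e^{\alpha_1}$ while the overloaded component only decreases. Thus $\V_i^{t+1} \leq e^{\alpha_1}\V_i^t$ in both cases, and summing over $i\in[n]$ yields the upper bound. The lower bound follows analogously by noting that in either case the shrinkage per component is by a factor of at least $e^{-\alpha_1(1-1/n)}$ or $e^{-\alpha_1/n}$, both of which are $\geq e^{-\alpha_1}$.

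\textbf{Plan for part $(ii)$.} The key observation is that the unnormalized load $x_i^t$ is monotone non-decreasing in $t$, while the average load increases by exactly $1/n$ per step. Hence for any $s_0 \leq t \leq s_1$ and any bin $i$,
\[
y_i^t - y_i^{s_1} \;=\; (x_i^t - x_i^{s_1}) + \frac{s_1 - t}{n} \;\leq\; \frac{s_1 - t}{n} \;\leq\; \frac{T}{n},
\]
and similarly $-y_i^t + y_i^{s_0} \leq (t - s_0)/n \leq T/n$. Applying the first inequality to $(y_i^t - c_4g)^+$ and the second to $(-y_i^t - c_4g)^+$ gives, for every bin $i$,
\[
e^{\alpha_1(y_i^t - c_4g)^+} \;\leq\; e^{\alpha_1 T/n}\cdot e^{\alpha_1(y_i^{s_1} - c_4g)^+},
\qquad
e^{\alpha_1(-y_i^t - c_4g)^+} \;\leq\; e^{\alpha_1 T/n}\cdot e^{\alpha_1(-y_i^{s_0} - c_4g)^+}.
\]

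\textbf{Putting it together.} Summing these two inequalities over all $i \in [n]$, the overloaded part of $\V^t$ is bounded by $e^{\alpha_1 T/n}$ times the overloaded part of $\V^{s_1}$, which is in turn $\leq e^{\alpha_1 T/n}\cdot \V^{s_1} \leq e^{\alpha_1 T/n}\cdot \hat c n$; analogously the underloaded part of $\V^t$ is $\leq e^{\alpha_1 T/n}\cdot \V^{s_0} \leq e^{\alpha_1 T/n}\cdot \hat c n$. Adding these gives $\V^t \leq 2\hat c n \cdot e^{\alpha_1 T/n}$, which is exactly the stated bound. There is no real obstacle here: both parts reduce to tracking a single elementary identity about how $y_i^t$ evolves, and the main thing to be careful about is pairing the overloaded/underloaded components with the future step $s_1$ and the past step $s_0$ respectively, so that the $\alpha_1 T$ bound collapses to $\alpha_1 T/n$ via the monotonicity of $x_i^t$.
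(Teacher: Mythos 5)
Your proof is correct and uses essentially the same argument as the paper: part $(i)$ follows from the per-step bound $|y_i^{t+1}-y_i^t|\leq 1$, and part $(ii)$ from the observation that normalized loads can decrease by at most $T/n$ over $T$ steps, pairing the overload term with the future step $s_1$ and the underload term with the past step $s_0$. Your write-up is slightly more explicit about the case split in $(i)$ and about deriving the $T/n$ bound from the monotonicity of $x_i^t$, but these are the same ideas the paper uses.
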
 }
	\addtocounter{lem}{-1}

\label{sec:g_adv_v_smoothness_proof}
\begin{proof}\textit{First statement.} In each step the normalized load of any bin can change by at most $1$, i.e., $|y_i^{t+1} - y_i^t| \leq 1$ and so $e^{-\alpha_1} \cdot \V_i^t \leq \V_i^{t+1} \leq e^{\alpha_1} \cdot \V_i^t$. By aggregating over all bins, we get the claim.

\textit{Second statement.} For any bin $i \in [n]$, in $T$ steps the normalized load can decrease by at most $T/n$, i.e., $y_i^{s_1} \geq y_i^t - \frac{T}{n}$. So, the overload term is bounded by
\[
e^{\alpha_1 (y_i^t - c_4g)^+} \leq e^{\alpha_1 \frac{T}{n}} \cdot e^{\alpha_1 (y_i^{s_1} - c_4g)^+} \leq e^{\alpha_1 \frac{T}{n}} \cdot V_i^{s_1}.
\]
Similarly, $y_i^{t} \geq y_i^{s_0} - \frac{T}{n}$, and so the underload term is bounded by
\[
e^{\alpha_1 (-y_i^t - c_4g)^+} \leq e^{\alpha_1 \frac{T}{n}} \cdot e^{\alpha_1 (-y_i^{s_0} - c_4g)^+} \leq e^{\alpha_1 \frac{T}{n}} \cdot V_i^{s_0}.
\]
Hence, by aggregating over all bins and using the preconditions $\V^{s_0} \leq \hat{c} n$ and $\V^{s_1} \leq \hat{c} n$,\[
\V^t = \sum_{i = 1}^n \left[ e^{\alpha_1 (y_i^t - c_4g)^+} + e^{\alpha_1 (-y_i^t - c_4g)^+} \right] \leq e^{\alpha_1 \frac{T}{n}} \cdot \sum_{i = 1}^n \left( \V_i^{s_1} + \V_i^{s_0} \right) = e^{\alpha_1 \frac{T}{n}} \cdot \left( \V^{s_1} + \V^{s_0}\right)
\leq e^{\alpha_1 \frac{T}{n}} \cdot 2\hat{c}n. \qedhere
\]
\end{proof}

The following lemma shows that by choosing a large enough offset $c_5 > 0$ in the potential $\Psi_0 := \Psi_0(\alpha_1, c_5g)$ (defined in \cref{eq:g_adv_psi_0_def}), when $\V^t = e^{\Oh(g)} \cdot cn$, then $\Psi_0^t = \Oh(n)$.

{\renewcommand{\thelem}{\ref{lem:g_adv_v_psi0_relation}}
	\begin{lem}[\textbf{Restated, page~\pageref{lem:g_adv_v_psi0_relation}}]
\GAdvVPsiRelation
	\end{lem}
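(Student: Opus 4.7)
The plan is to give a direct calculation that compares the contributions to $\V^t$ and $\Psi_0^t$ bin-by-bin, exploiting the fact that the offset $c_5 g$ of $\Psi_0$ is larger than the offset $c_4 g$ of $\V$ by at least $\hat{c} g$.

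First, I would observe that by the choice $c_5 := 2 \cdot \max\{c_4, \hat{c}\}$, we have $c_5 - c_4 \geq \hat{c}$: if $c_4 \leq \hat{c}$, then $c_5 - c_4 = 2\hat{c} - c_4 \geq \hat{c}$; if $c_4 > \hat{c}$, then $c_5 - c_4 = c_4 \geq \hat{c}$. Hence $(c_5 - c_4) g \geq \hat{c} g$.

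Next I would split the sum defining $\Psi_0^t$ into bins with $y_i^t < c_5 g$ and bins with $y_i^t \geq c_5 g$. For the former, their contribution is $\Psi_{0,i}^t = 1$, so collectively they contribute at most $n$. For any bin $i$ with $y_i^t \geq c_5 g \geq c_4 g$, we have $\V_i^t \geq e^{\alpha_1 (y_i^t - c_4 g)} = e^{\alpha_1 (y_i^t - c_5 g)} \cdot e^{\alpha_1 (c_5 - c_4) g} \geq \Psi_{0,i}^t \cdot e^{\alpha_1 \hat{c} g}$, so $\Psi_{0,i}^t \leq \V_i^t \cdot e^{-\alpha_1 \hat{c} g}$.

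Summing the second part over all bins with $y_i^t \geq c_5 g$ and using that each $\V_i^t$ is non-negative, the total contribution is at most $e^{-\alpha_1 \hat{c} g} \cdot \V^t$, which by the hypothesis $\V^t \leq e^{\alpha_1 \hat{c} g} \cdot 2 e^{2\alpha_1} cn$ is at most $2 e^{2\alpha_1} cn$. Combining the two parts gives $\Psi_0^t \leq n + 2 e^{2\alpha_1} cn = (2 e^{2\alpha_1} c + 1) \cdot n = C n$, as desired.

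There is no real obstacle here; the lemma is a straightforward bookkeeping step whose whole purpose is to convert a bound on a hyperbolic cosine potential with small offset into a bound on a one-sided exponential with a slightly larger offset. The only thing to double-check is the inequality $c_5 - c_4 \geq \hat{c}$ in both cases of the max, which justifies why the factor $e^{\alpha_1 \hat{c} g}$ in the precondition exactly cancels with the gap between the offsets.
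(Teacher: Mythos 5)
Your proof is correct and follows essentially the same approach as the paper: split $\Psi_0^t$ into bins above and below the offset $c_5 g$, bound the low-load bins by $n$, and bound the high-load bins via the hyperbolic cosine potential $\V^t$ with the $e^{\alpha_1 \hat c g}$ factor cancelling. The paper routes the comparison through $c_5/2 \geq c_4$ and $c_5/2 \geq \hat c$ rather than your $c_5 - c_4 \geq \hat c$, but both are equivalent consequences of $c_5 := 2 \max\{c_4, \hat c\}$ and give the same bound.
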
 }
	\addtocounter{lem}{-1}

\begin{proof}
We start by upper bounding $\Psi_0^t$,
\begin{align*}
\Psi_0^t & =\sum_{i = 1}^n e^{\alpha_1 (y_i^t - c_5g)^+} \\
&= \sum_{i \in [n] \colon y_i^t \geq c_5g} e^{\alpha_1 (y_i^t - c_5g)} + \sum_{i \in [n] \colon y_i^t < c_5g} e^{0} \\
&\leq e^{-\alpha_1 (c_5/2) g}  \sum_{i \in [n] \colon y_i^t \geq c_5g} e^{\alpha_1 (y_i^t - (c_5/2) g)} + n \\
&\stackrel{(a)}{\leq} e^{-\alpha_1 (c_5/2) g}  \sum_{i \in [n] \colon y_i^t \geq c_4g} e^{\alpha_1 (y_i^t - c_4 g)} + n \\
&\leq e^{-\alpha_1 (c_5/2) g}  \sum_{i = 1}^n e^{\alpha_1 (y_i^t - c_4 g)^+} + n \\
& = e^{-\alpha_1 (c_5/2) g}  \cdot \V^t + n,
\end{align*}
where in $(a)$ we used that $c_5/2 \geq c_4$.
Now, using that $\V^t \leq e^{\alpha_1 \cdot \hat{c} \cdot g} \cdot 2e^{2\alpha_1} cn$ and $c_5 \geq 2 \cdot \hat{c}$, we conclude 
\[
\Psi_0^t \leq e^{-\alpha_1 (c_5/2) g} \cdot \V^t + n \leq 2e^{2\alpha_1} \cdot cn + n = Cn. \qedhere
\]
\end{proof}

{\renewcommand{\thelem}{\ref{lem:g_adv_good_v_every_ng_steps}}
	\begin{lem}[\textbf{Restated, page~\pageref{lem:g_adv_good_v_every_ng_steps}}]
 \GAdvGoodVEveryNgRounds
	\end{lem}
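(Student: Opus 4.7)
The plan is to mimic the argument of \cref{lem:g_adv_good_gap_after_good_lambda}, but using the strong stabilization result (\cref{lem:g_adv_strong_stabilization}) for the $\V$ potential instead of the plain stabilization for $\Lambda$. The key point is that once $\V$ drops to $e^{\alpha_1} cn$, it cannot cross up to more than $e^{2\alpha_1} cn$ without first passing through the window $(e^{\alpha_1} cn, e^{2\alpha_1} cn]$, by the one-step smoothness $\V^{s+1} \leq e^{\alpha_1}\V^s$ of \cref{lem:g_adv_v_smoothness}~$(i)$. Therefore, defining interlaced stopping times
\[
\tau_i := \inf\{\tau > s_{i-1} : \V^{\tau} \in (e^{\alpha_1}cn, e^{2\alpha_1}cn]\}, \quad s_i := \inf\{s > \tau_i : \V^s \leq e^{\alpha_1}cn\},
\]
with $s_0 := t_0$ and starting condition $\V^{t_0} \leq cn \leq e^{\alpha_1}cn$, we have $t_0 = s_0 < \tau_1 < s_1 < \tau_2 < \cdots$ and the event
\[
\mathcal{M}_{t_0}^{t_1} := \bigcap_{i \geq 1,\, \tau_i \leq t_1} \{s_i - \tau_i \leq \tilde{\Delta}_s\}
\]
implies that for every $t \in [t_0, t_1]$ there exists some $s \in [t, t + \tilde{\Delta}_s]$ with $\V^s \leq e^{\alpha_1} cn \leq e^{2\alpha_1} cn$, which is the conclusion we want.

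The core probabilistic step is a union bound over the stopping times $\tau_i$. \cref{lem:g_adv_strong_stabilization} says that for any step $\tau \geq r_0$ with $\V^{\tau} \in (e^{\alpha_1}cn, e^{2\alpha_1}cn]$,
\[
\Pro{s_i - \tau_i > \tilde{\Delta}_s \,\big|\, \mathcal{Z}^{r_0}, \mathfrak{F}^{\tau}, \tau_i = \tau, e^{\alpha_1}cn < \V^{\tau} \leq e^{2\alpha_1}cn} \leq n^{-11}.
\]
Since $\tau_i \geq t_0 + i$, and there are at most $t_1 - t_0 \leq 2n\log^5 n$ indices $i$ for which $\tau_i \leq t_1$, summing over all possible values of $\tau \in [t_0, t_1]$ and $i$ (which requires a slightly careful union bound conditioning on the filtration at each candidate stopping time) gives
\[
\Pro{\neg \mathcal{M}_{t_0}^{t_1} \,\big|\, \mathcal{Z}^{r_0}, \mathfrak{F}^{t_0}, \V^{t_0} \leq cn} \leq (t_1-t_0) \cdot n^{-11} \leq 2n\log^5 n \cdot n^{-11} \leq n^{-9},
\]
for sufficiently large $n$, which is the claimed bound.

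The main technical subtlety is ensuring that \cref{lem:g_adv_strong_stabilization} can be applied at each $\tau_i$ with conditioning only on $(\mathcal{Z}^{r_0}, \mathfrak{F}^{\tau_i})$. The lemma is stated with a conditioning on $(\mathcal{Z}^{r_0}, \mathfrak{F}^{t_0}, e^{\alpha_1}cn < \V^{t_0} \leq e^{2\alpha_1}cn)$ for arbitrary $t_0 \geq r_0$, so by summing over possible values $\tau$ that $\tau_i$ may take (using the identity $\Pro{A} = \sum_{\tau} \Pro{A \mid \tau_i=\tau} \Pro{\tau_i=\tau}$) and noting that $\{\tau_i = \tau\}$ is $\mathfrak{F}^{\tau}$-measurable together with the event $\{e^{\alpha_1}cn < \V^{\tau} \leq e^{2\alpha_1}cn\}$, this conditioning is legitimate, exactly as in the proof of \cref{lem:g_adv_good_gap_after_good_lambda}. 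Apart from that, the proof is essentially a clean rewrite of the $\Lambda$-based argument of \cref{lem:g_adv_good_gap_after_good_lambda} with the stronger window of length $\tilde{\Delta}_s = \Theta(ng)$ replacing $\Delta_s = \Theta(n \cdot \max\{\log n, g\})$.
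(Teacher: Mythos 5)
Your proof matches the paper's proof essentially line by line: the same interlaced stopping times $\tau_i$ and $s_i$, the same one-step smoothness observation ($\V^{s+1}\le e^{\alpha_1}\V^s$) to guarantee the window $(e^{\alpha_1}cn, e^{2\alpha_1}cn]$ cannot be skipped, the same application of \cref{lem:g_adv_strong_stabilization} at each $\tau_i$ via conditioning on $\{\tau_i=\tau\}$, and the same union bound over $t_1-t_0\le 2n\log^5n$ indices. The only minor deviation is cosmetic: your phrasing of the event implication gives the slightly stronger conclusion $\V^s\le e^{\alpha_1}cn$ for the located step $s$, which of course implies the stated $e^{2\alpha_1}cn$ bound.
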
 }
	\addtocounter{lem}{-1}

\label{sec:g_adv_good_v_every_ng_steps_proof}
\begin{proof}
Analogously to the proof of \cref{lem:g_adv_good_gap_after_good_lambda}, we begin by defining the event
\[ \tilde{\mathcal{M}}_{t_0}^{t_1} = \left\{\text{for all }t\in [t_0, t_1]\text{ there exists } s\in [t, t + \tilde{\Delta}_s  ] \text{ such that }\V^s \leq e^{2\alpha_1} cn\right\},\]	
that is, if $\tilde{\mathcal{M}}_{t_0}^{t_1}$ holds then we have that $\V^s \leq e^{2\alpha_1} cn$ at least once every $\tilde{\Delta}_s$ steps and so the claim follows. 

Note that if for some step $j_1$ we have that $\V^{j_1} \leq e^{\alpha_1} cn$ and for some $j_2 \geq j_1$ that $\V^{j_2} > e^{2\alpha_1} cn$, then there must exist $j \in (j_1, j_2)$ such that $\V^j \in (e^{\alpha_1} cn, e^{2\alpha_1} cn]$, since for every $t \geq 0$ it holds that $\V^{t+1} \leq e^{\alpha_1} \cdot \V^t$ (\cref{lem:g_adv_v_smoothness}~$(i)$).
Let $t_0 < \tau_1 <\tau_2<\cdots $ and $t_0 =: s_0<s_1<\cdots $ be two interlaced sequences defined recursively for $i\geq 1$ by \[
\tau_i := \inf\{\tau > s_{i-1}: \V^{\tau} \in (e^{\alpha_1}cn, e^{2\alpha_1}cn] \}\qquad\text{and} \qquad s_i := \inf\{s > \tau_i : \V^s \leq e^{\alpha_1} cn\}. 
\] 
  Thus we have
  \[
  t_0 = s_0 < \tau_1 < s_1 < \tau_2 <s_2 < \cdots, 
  \]
  and since $\tau_i>\tau_{i-1}$ we have $\tau_{t_1 - t_0}\geq t_1 - t_0$. Therefore, if the event $\cap_{i=1}^{t_1 - t_0}\{s_i-\tau_i\leq \tilde{\Delta}_s \} $ holds, then also $ \tilde{\mathcal{M}}_{t_0}^{t_1}$ holds.

Recall that by the strong stabilization (\cref{lem:g_adv_strong_stabilization}) we have for any $i=1,2,\ldots, t_1 - t_0$ and any $\tau = t_0 + 1, \ldots, t_1$, \[
 \Pro{ \left.\bigcup_{t \in [\tau_i,\tau_i + \tilde{\Delta}_s]} \left\{\V^{t} \leq e^{\alpha_1} cn \right\} ~\right|~ \mathcal{Z}^{r_0}, \mathfrak{F}^{\tau} , \; e^{\alpha_1}cn < \V^{\tau} \leq e^{2\alpha_1}cn, \tau_i = \tau } \geq  1 - n^{-11},
  \] and by negating and the definition of $s_i$,
  \[
  \Pro{s_i-\tau_i> \tilde{\Delta}_s \,\, \left| \,\, \mathcal{Z}^{r_0}, \mathfrak{F}^{\tau}, e^{\alpha_1}cn < \V^{\tau} \leq e^{2\alpha_1}cn, \tau_i = \tau \right.} \leq n^{-11}.
  \]
  Since the above bound holds for any $i \geq 1$ and $\mathfrak{F}^{\tau}$, with $\tau_i=\tau$, it follows by the union bound over all $i=1,2,\ldots, t_1 - t_0$, as $t_1 - t_0 \leq 2 n \log^5 n$,
  \[
  \Pro{\left. \neg \tilde{\mathcal{M}}_{t_0}^{t_1} \,\,\right|\,\, \mathcal{Z}^{r_0}, \mathfrak{F}^{t_0}, \V^{t_0} \leq cn }\leq (t_1 - t_0) \cdot n^{-11} \leq n^{-9}. \qedhere \]
\end{proof}

\section{Index of Potential Functions and Constants for Upper Bounds}

In this section we summarize the definitions of the various potential functions (\cref{tab:potential_defs}) and constants (\cref{tab:constant_defs}), used in Sections~\ref{sec:g_adv_warm_up}-\ref{sec:g_adv_layered_induction}.

\begin{table}[H]
    \centering
    \resizebox{\textwidth}{!}{
    \begin{tabular}{|c|c|c|c|}
        \hline
        \textbf{Sym} & \textbf{Potential Function} & \textbf{Reference} & \textbf{Page} \\ \hline
        $\Gamma$ & Hyperbolic cosine potential with $\gamma = \Theta(1/g)$ & \cref{eq:gamma_def} & \pageref{eq:gamma_def} \\ \hline
        $\Lambda$ & Hyperbolic cosine potential with $\alpha = \frac{1}{18}$ and offset $c_4g$ & \cref{eq:lambda_def} & \pageref{eq:lambda_def} \\ \hline
        $\Delta$ & Absolute value potential & \cref{eq:abs_def} & \pageref{eq:abs_def} \\ \hline
        $\Upsilon$ & Quadratic potential & \cref{eq:quad_def} & \pageref{eq:quad_def} \\ \hline
        $\tilde{\Lambda}$ & Adjusted potential for $\Lambda$ & \cref{eq:tilde_lambda} & \pageref{eq:tilde_lambda} \\ \hline
        $\V$ & Same as $\Lambda$ with constant smoothing parameter $\alpha_1 \leq \alpha$ & \cref{eq:g_adv_v_def} & \pageref{eq:g_adv_v_def} \\ \hline
        $\tV$ & Adjusted potential for $\V$ & \cref{eq:g_adv_tilde_v_def} & \pageref{eq:g_adv_tilde_v_def} \\ \hline
        $\Phi_j$ & Super-exponential with $\alpha_2 (\log n) \cdot \max\{g^{j-k}, 1\}$ and offset $z_j$ & \cref{eq:g_adv_phi_0_def,eq:g_adv_phi_j_def} & \pageref{eq:g_adv_phi_0_def} \\ \hline
        $\Psi_j$ & Super-exponential with $\alpha_1 (\log n) \cdot \max\{g^{j-k}, 1\}$ and offset $z_j$ & \cref{eq:g_adv_psi_0_def,eq:g_adv_psi_j_def} & \pageref{eq:g_adv_psi_0_def} \\ \hline
    \end{tabular}
    }
    \caption{Symbols, equation and page references for the various potential functions used (excluding some killed potentials used only in single lemmas).}
\label{tab:potential_defs}
\end{table}

\begin{table}[H]
    \centering
    \resizebox{\textwidth}{!}{
    \begin{tabular}{|c|c|c|c|}
    \hline
    \textbf{Sym} & \textbf{Definition} & \textbf{Usage} & \textbf{Reference} \\ \hline
    $\alpha$ & $\gamma := -\log(1 - \frac{1}{8 \cdot 48})/g$ & Smoothing parameter for $\Gamma$ & \cref{thm:g_adv_warm_up_gap} \\ \hline
    $c_1$ & $c_1 := c' + 4$, where $c'$ is from \cite{PTW15} & $\ex{\Delta\Gamma^{t+1} \mid \mathfrak{F}^t} \leq - \frac{\gamma}{96n} \cdot \Gamma^t + c_1$ & \cref{thm:g_adv_warm_up_gap} \\ \hline
    $c_2$ & $c_2 := \frac{96c_1}{-\log(1 - 1/(8 \cdot 48))}$ & $\ex{\Gamma^t} \leq c_2 ng$ & \cref{thm:g_adv_warm_up_gap} \\ \hline
    $c_3$ & $c_3 := \frac{16}{-\log(1 - 1/(8 \cdot 48))}$ & $\max_{i \in [n]} |y_i^t| \leq c_3 g \log(ng)$ & \cref{eq:g_adv_c3_def} \\ \hline
    $D$ & $D := 365$ & Defines when $\Delta$ is small ($\Delta^t \leq Dng$) & \cref{lem:g_adv_many_good_steps} \\ \hline
    $\alpha$ & $\alpha := \frac{1}{18}$ & Smoothing parameter for $\Lambda$ & \cref{eq:lambda_def} \\ \hline
    $c_4$ & $c_4 := 2D = 730$ & Offset $c_4g$ for $\Lambda$ & \cref{eq:lambda_def} \\ \hline
    $r$ & $r := \frac{6}{6 + \eps}$ & Fraction of good steps & \cref{lem:g_adv_many_good_steps} \\ \hline
    $\eps$ & $\eps := \frac{1}{12}$ & Appears in drop inequalities for $\Lambda$ and $\V$ & \cref{lem:g_adv_good_step_drop} \\ \hline
    $c$ & $c := 12 \cdot 18$ & Defines when $\Lambda$ and $\V$ are large & \cref{lem:g_adv_good_step_drop} \\ \hline
    $c_s$ & $c_s := c_s(\alpha, c_4, 2c) \geq 1$ in \cref{lem:g_adv_bounds_on_quadratic} & $\Upsilon^t \leq c_s \cdot ng^2$ when $\Lambda^t \leq 2cn$ & \cref{lem:g_adv_stabilization} \\ \hline
    $\Delta_s$ & $\Delta_s := \frac{60 c_s}{\alpha \eps r} \cdot n \cdot \max\{\log n, g\}$ & $\Lambda^t \leq cn$ every $\Delta_s$ steps & \cref{lem:g_adv_stabilization} \\ \hline    
    $\kappa$ & $\kappa := \frac{2}{\alpha} + c_4 + \frac{\Delta_s}{n \cdot \max\{\log n, g\}}$ & $\Gap(m) \leq \kappa \cdot (g + \log n)$  & \cref{lem:g_adv_good_gap_after_good_lambda} \\ \hline
    $c_r$ & $c_r := c_r(\alpha, c_4) \geq 1$ in \cref{lem:g_adv_bounds_on_quadratic} & $\Upsilon^t \leq 2c_r \cdot n \cdot (c_3 g \log(ng))^2$ when $\Lambda^t \leq e^{c_3 g \log(ng)}$ & \cref{lem:g_adv_recovery} \\ \hline
    $\Delta_r$ & $\Delta_r := \frac{60 c_3^2 c_r}{\alpha\eps r} \cdot n g \cdot (\log (ng))^2$ & For any $t_0$, $\Lambda^t \leq cn$ for some $t \in [t_0, t_0 + \Delta_r]$ & \cref{lem:g_adv_recovery} \\ \hline
    $\alpha_1$ & $\alpha_1 := \frac{1}{6\kappa}$ & Part of smoothing parameters in $V$ and $\Psi_j$ & \cref{eq:g_adv_alpha_1_def} \\ \hline
    $\tilde{c}_s$ & $\tilde{c}_s := \tilde{c}_s(\alpha_1, c_4, e^{2\alpha_1}c) \geq 1$ in \cref{lem:g_adv_bounds_on_quadratic} & $\Upsilon^t \leq \tilde{c}_s \cdot ng^2$ when $\V^t \leq e^{2\alpha_1}cn$ & \cref{lem:g_adv_strong_stabilization} \\ \hline
    $\tilde{\Delta}_s$ & $\tilde{\Delta}_s := \frac{20 \cdot \tilde{c}_s \cdot \log(2c e^{2\alpha_1})}{\alpha_1 \eps r} \cdot ng$ & $\V^t \leq e^{\alpha_1}n$ every $\tilde{\Delta}_s$ steps & \cref{eq:g_adv_tilde_delta_s_def} \\ \hline
    $c_6$ & $c_6 := \frac{r}{9 \cdot 20 \cdot \tilde{c}_s \cdot \log(2c e^{2\alpha_1})}$ & Strong stabilization works for $g \leq c_6 \log n$ & \cref{eq:g_adv_c6_def} \\ \hline
    $c_5$ & $c_5 := 2 \cdot \max\{c_4, \frac{\tilde{\Delta}_s}{ng} \}$ & Offset $c_5g$ for $\Psi_0$ and $\Phi_0$ & \cref{eq:g_adv_c5_def} \\ \hline
    $C$ & $C := 2e^{2\alpha_1} \cdot c + 1$ & Defines when $\Phi_j$ and $\Psi_j$ are large & \cref{lem:g_adv_v_psi0_relation} \\ \hline
    $\alpha_2$ & $\alpha_2 := \frac{1}{84 \cdot 6\kappa}$ & Part of smoothing parameter in $\Phi_j$ & \cref{eq:g_adv_alpha_2_def} \\ \hline
    $z_j$ & $z_j := c_5 \cdot g + \big\lceil\frac{4}{\alpha_2}\big\rceil \cdot j \cdot g$ & Offsets for $\Phi_j$ and $\Psi_j$ & \cref{eq:g_adv_offsets} \\ \hline
    $k$ & $k \in \mathbb{N}$ s.t.~$(\alpha_1 \log n)^{1/k} \leq g \leq (\frac{\alpha_2}{4} \log n)^{1/(k-1)}$ & Number of layered induction steps & \cref{sec:g_adv_layered_induction} \\ \hline
    \end{tabular}
    }
    \caption{Definition of constants and other variables.}
    \label{tab:constant_defs}
\end{table}

\end{document}